\documentclass[12pt,a4paper]{article}

\usepackage{econometrics}  %

\title{\Large Granular Instrumental Variables in Large Panels: Identification and Inference Across Strong, Nearly Weak, and Weak GIV}
\author{Gokul Gopalan Ramachandran \footnote{Department of Economics, University of Warwick.}}
\date{August 11, 2026}

\begin{document}

\addtocontents{toc}{\protect\setcounter{tocdepth}{-10}}

\begin{titlepage}
    \maketitle
    \thispagestyle{empty}
    \enlargethispage{2\baselineskip}
    
\begin{abstract}
    I develop the asymptotic theory for Granular Instrumental Variables (GIV) in large panels with both $N$ and $T$ growing. The strength of the GIV depends on the presence of dominant units with large shares. I find three regimes. First, when a few units dominate the aggregate, the instrument is strong. The GIV estimator is consistent and asymptotically normal at the standard $\sqrt{T}$ rate. Second, when large units stand out but do not dominate, the instrument weakens. But if the sample size ($T$) is larger than the cross-section ($N$), then the GIV estimator remains consistent and asymptotically normal, now at a rate slower than $\sqrt{T}$. Finally, when units are comparable in size and none stands out, the instrument is weak in the standard sense. Here the relative growth of $T$ and $N$ decides the outcome. If $N$ grows in proportion to $T$, the GIV estimator is inconsistent and has a non-standard distribution. Across all three regimes, the first-stage construction of the GIV changes the second-stage asymptotic variance. It replaces the structural error with its factor-residualised version. The correction has no determinate sign. Wald inference with the corrected variance is valid in the first two regimes. In the third I recommend a first-stage-corrected Anderson--Rubin test. Applying this theory to data, I find that copper and natural gas markets fall into the strong-instrument regime while crude oil is in the nearly weak regime, and I recover the short-run demand and supply elasticities of these commodities. Across the six elasticities the correction moves standard errors by up to a fifth, in both directions.
\end{abstract}

    \textit{Keywords:} Granular instrumental variables, Weak Instruments, Factor models, Power Law.
\end{titlepage}

\section{Introduction} \label{sec:introduction}

Many questions in economics require estimating structural relationships
between aggregate variables, for instance, how asset prices respond to changes in
aggregate demand, or how exchange rates react to capital flows. A central challenge in estimating this relationship is
endogeneity: the aggregate regressor is correlated with the structural error.

\citet{Gabaix2024GranularVariables} (GK hereafter) proposed Granular Instrumental Variables (GIV) as a solution. The key insight is that when a few units are dominant (e.g., firms with large market shares, banks with large loan portfolios, countries with large capital inflows), then their idiosyncratic shocks can serve as valid instruments for the aggregate variable. We call settings with dominant units granular, and the GIV is the optimal instrument in these settings.

Several studies apply this idea to market-clearing conditions in commodity and asset markets, bank lending, and sovereign risk. However, the existing theory assumes a fixed number of units ($N$-fixed). This paper extends GIV to large panels ($N \to \infty$) while focusing on the following questions: what the relationship is between dominant units (granularity) and instrument strength, and how that relationship affects inference as the panel becomes large.

Specifically, we are interested in estimating the demand and supply elasticities in the demand-supply system,
\begin{align*}
    d_t    &= \phi_d\, p_t + \varepsilon_t, \\
    y_{it} &= \phi_s\, p_t + \lambda_i' F_t + u_{it},
\end{align*}
where $d_t$ is the change in aggregate demand for a commodity, $p_t$ the change in market-clearing price, and $y_{it}$ the change in supply of unit $i$. The unit-level supply follows a panel data model with interactive fixed effects, where unit-specific loadings $\lambda_i$ interact with common time factors $F_t$. Market clearing, $d_t = \sum_{i=1}^N S_i y_{it}$, ties the two equations together, with $S_i$ the long-run market share of unit $i$. The objects of interest are the demand and supply elasticities $\phi_d$ and $\phi_s$. The endogeneity problem is that $p_t$ is set in equilibrium and responds to both the demand shock $\varepsilon_t$ and the supply shocks $u_{it}$. Dominant units have a large value of $S_i$.

The key identifying assumption for GIV is that the idiosyncratic supply shocks, $(u_{it})_{1 \leq i \leq N}$, are uncorrelated with other common shocks and are therefore valid instruments. The GIV is the share-weighted average of these shocks, given by $z_t = \sum_{i=1}^{N} S_i u_{it}$. This is also the optimal instrument (see GNR for details).

The asymptotic framework that GK has developed with $N$-fixed is not appropriate for many empirical applications. For instance, \citet{Aldasoro2023TheMarkets} has $N = 21$. \citet{Chodorow-Reich2021AssetInsulators,Galaasen2020GranularRisk,Ma2021ExpectationsLending} have $N > 100$. In these and similar settings, the existing theory leads to two issues. First, as the cross section ($N$) gets large, granularity (presence or absence of dominant units) affects the strength of the instrument and hence the consistency and the distribution of the estimator. Second, even when units are very dominant and the instrument is strong, the standard errors built using the theory from the $N$-fixed setting can under-cover when applied to large panels. 

In this paper, I solve both problems by extending the theory of Granular Instrumental Variables to large panels ($N,T \to \infty$) with a formal treatment of granularity. First, I show that the strength of GIV depends on the presence of dominant units. This leads to three regimes for the GIV estimator, with distinct results for consistency, convergence rates, and inference. Second, I provide the correct asymptotic theory for the GIV estimator across all three regimes. The GIV is constructed in a first stage, which affects the asymptotic distribution of the GIV estimator. I then illustrate the theory empirically, using GIV to estimate the short-run demand and supply elasticities of three major commodities: copper, crude oil, and natural gas.

I find that in settings where a few units dominate the aggregate, the instrument is strong. This refers to settings in which individual sizes are drawn from a fat-tailed distribution. The fat tail leads to a handful of units that make up a non-vanishing share of the aggregate. The GIV estimator is consistent and asymptotically normal at the standard $\sqrt{T}$ rate. This is the classical strong-instrument regime.

When large units stand out but do not dominate, the instrument weakens without breaking. The largest units are still big enough that their idiosyncratic shocks survive aggregation, but their influence fades as $N$ grows. Identification is nearly weak in the sense of \citet{Antoine2021GMMIdentification}. The GIV estimator remains consistent and asymptotically normal, now at a rate slower than $\sqrt{T}$. The cross-sectional signal gets diluted, but with enough time periods, we can combine the weak information to obtain a consistent estimator.

When units are comparable in size and none stands out, the instrument is weak. No unit is systematically larger than the rest, so the granular variation the instrument relies on vanishes. What happens next depends on how fast the panel grows in each direction. When $T$ grows fast enough relative to $N$, the time series still compensates for the cross-sectional dilution. When $N$ and $T$ grow in proportion, it does not. The estimator is then inconsistent, and identification is weak in the sense of \citet{Staiger1997InstrumentalInstruments}.

In practice, the observed data mixes the idiosyncratic shock with a factor structure, both of which are unobserved. So the GIV has to be built in two stages. In the first stage I estimate the factors and loadings from the panel and keep the residuals as the shocks. In the second stage I form the instrument from those residuals and run the IV regression. The first stage does not change which regime a panel falls into. But it changes the asymptotic variance. The shocks entering the instrument are estimates rather than the true ones, and that estimation error is not small enough to ignore. It replaces the structural error with its factor-residualised version. The two coincide only when the aggregate shock is uncorrelated with the common factors, and differ otherwise. The applications that typically use GIV has these terms correlated. Hence, standard errors that treat the estimated shocks as if they were observed are wrong, and I derive the correction.

For inference, Wald confidence intervals built with the corrected variance are valid as long as the size distribution has infinite variance. That covers the strong regime and the heavier-tailed part of the nearly weak regime. Once the tail thins enough for the variance to be finite, the Gaussian approximation is no longer reliable. I then recommend inverting the Anderson--Rubin test of \citet{Anderson1949EstimatorsEquations}, which covers the rest of the nearly weak regime and the weak regime alike. The statistic is built from the same factor-projected residual as the variance estimator, so it inherits the first-stage correction. Its $\chi^2_1$ limiting distribution then holds no matter how weak the instrument is.

Applying this theory to data, I find that copper and natural gas markets fall into the strong-instrument regime at $95\%$ confidence, while the global crude oil market extends into the nearly weak regime. GIV corrects the biased OLS estimates and leads to economically plausible negative demand elasticities: $-0.148$ for copper, $-0.105$ for crude oil, and $-0.056$ for natural gas. The supply elasticities are positive and inelastic, at $0.141$, $0.256$, and $0.150$.

Finally, in simulation exercises calibrated to the panel data from copper and crude oil, I study the sensitivity of the GIV estimates to granularity. As expected, the strong regime has very low bias, tight confidence sets, and the correct coverage. In the weak regime, we observe substantial bias together with confidence intervals that explode in length, so that the Wald interval over-covers rather than attaining its nominal level.

I organize the rest of the paper as follows. In Section~\ref{sec:three_regimes}, I set up the model and formalize granularity and its effect on the instrument strength. Then I split my main results across two sections. In Section~\ref{section:known factors}, I present the results by assuming that the idiosyncratic shocks are available to us and the Granular IV does not need a first-stage construction. This helps us isolate the effect of granularity and instrument strength on the estimator. Then in Section~\ref{section:estimated factors}, I add the first stage estimation of the instrument and show how the results change. In Section~\ref{sec:empirical}, I apply the theory to estimate short-run demand and supply elasticities for three major commodities. I study the small-sample behavior of the estimator in Section~\ref{sec:simulation} through Monte Carlo simulations and conclude in Section~\ref{sec:conclusion}. I close this section by placing the paper in the context of the related literature.

\subsection{Related Literature}

\textbf{GIV theory.} \citet{Gabaix2024GranularVariables} introduce the GIV framework under fixed $N$ and $T \to \infty$, establishing consistency, asymptotic normality, and the theory for the estimator. \citet{Banafti2022InferentialDimensions} extend GIV to high dimensions, allowing $N$ to grow with $T$, and derive the asymptotic distribution of the estimated-factor estimator. They assume the instrument is strong and show that the first-stage estimation has no effect on the GIV estimator.

In Section \ref{subsec:banfti}, I show their results use a very restrictive assumption that is unrealistic in most empirical settings. One of their critical assumptions is incompatible with the mechanics that lead to a strong instrument. The asymptotic distribution of the GIV estimator in large panels, with both $N$ and $T$ tending to infinity, therefore remains an open problem. This paper resolves it.

\citet{Qian2023Heterogeneity-robustInstruments} constructs heterogeneity-robust granular instruments that remain valid when the structural parameter varies across a fixed number of units. \citet{Baumeister2023AVariables} propose a full-information approach that jointly estimates the factor structure and structural parameter under parametric assumptions. This method becomes untenable as $N \to \infty$.

On the empirical side, GIV has been applied to asset markets \citep{Chodorow-Reich2021AssetInsulators}, bank credit risk \citep{Galaasen2020GranularRisk}, bank lending \citep{Ma2021ExpectationsLending}, sovereign bonds \citep{Aldasoro2023TheMarkets}, exchange rates \citep{Hau2022GlobalRates,Bippus2024GranularDynamics}, commercial real estate \citep{Ling2022GranularReturns}, and monetary policy \citep{Holm-Hadulla2024GranularPolicy}.

\textbf{Instruments built from within the system.} GIV belongs to a class of designs in which the instrument is constructed out of the same data that generate the endogenous variable, with no external source of variation \citep{Rigobon2003IdentificationHeteroskedasticity,Lewbel2012UsingModels,Lanne2017IdentificationAutoregressions,Arellano1991SomeEquations,Blundell1998InitialModels,Berry1995AutomobileEquilibrium,Bramoulle2009IdentificationNetworks}. The closest neighbor is the shift-share design \citep{Bartik1991WhoPolicies,Goldsmith-Pinkham2020BartikHow,Borusyak2022Quasi-ExperimentalDesigns,Adao2019Shift-ShareInference,Borusyak2023NonrandomShocks}, where the instrument is an exposure-weighted sum of shocks, $\sum_i S_i g_{it}$, which is the form of the granular instrument.

Two things separate GIV from that design. The shocks in a shift-share instrument are observed and come from outside the outcome equation. Here the weights and the shocks come from the same panel, and the shocks are residuals from an estimated factor model. Section \ref{section:estimated factors} shows that this first stage changes the asymptotic variance. Share concentration also enters differently. In \citet{Adao2019Shift-ShareInference} it governs the variance of an estimator that is consistent throughout. Here it governs the rate of convergence and, in one regime, consistency itself.

\textbf{Power laws.} Power-law size distributions are well documented in firm sizes \citep{Axtell2001ZipfSizes}, city sizes \citep{Gabaix1999ZipfsExplanation}, and financial returns \citep{Gabaix2009PowerFinance}. \citet{Gabaix2011TheFluctuations} shows that a fat tail lets idiosyncratic shocks to the largest units survive aggregation and drive aggregate fluctuations. Firm-level evidence supports this channel. \citet{diGiovanni2014FirmsFluctuations} attribute a substantial share of aggregate volatility in France to firm-specific shocks. \citet{Carvalho2019LargeCycle} build a business cycle out of large-firm dynamics alone. \citet{Amiti2018HowData} trace idiosyncratic bank shocks into aggregate lending and investment in Japan.

I build on this power-law regularity. The same tail behavior, together with the relative growth of $N$ and $T$, determines whether the GIV is a strong, nearly weak, or weak instrument.

\textbf{Instrument strength.} \citet{Staiger1997InstrumentalInstruments} formalize weak instruments by modeling the first-stage coefficient as local to zero. In my setting, weakness is structural rather than local-to-zero. It is comparable to instrument weakness in large markets for differentiated products as in \citet{Armstrong2016LargeSupply}.

\citet{Antoine2021GMMIdentification} provide a nearly weak identification framework where identification strength vanishes, but slowly enough for consistency at a rate slower than $\sqrt{T}$. My nearly weak regime maps directly onto their framework. The local-to-zero asymptotics of \citet{Staiger1997InstrumentalInstruments} emerge as a special case in the absence of fat tails and similar growth of $N$ and $T$. Once the size distribution has finite variance, the Gaussian approximation is no longer reliable. I then construct Anderson--Rubin confidence sets \citep{Anderson1949EstimatorsEquations}, which remain valid regardless of instrument strength.

\textbf{Factor models.} The supply equation is itself a panel regression with interactive fixed effects. \citet{Pesaran2006EstimationStructure}, \citet{Bai2009PanelEffects}, and \citet{Moon2015LinearEffects} estimate the slope coefficient directly in such models. They require the regressor to be exogenous with respect to the idiosyncratic error. Correlation with the factors and the loadings is permitted. Market clearing makes the price respond to the same idiosyncratic supply shocks, so that exogeneity fails here and the slope is not recovered from the panel alone.

I therefore use the factor structure only to remove the common component and recover the shocks that form the instrument. \citet{Bai2003InferentialDimensions} establishes the convergence rates for factors and loadings estimated by principal components when $N, T \to \infty$. I use these results to show that the estimation error affects the asymptotics of the GIV. I determine the number of common factors using information criteria such as those of \citet{Bai2002DeterminingModels} or \citet{Ahn2013EigenvalueFactors}.

A separate literature studies factor models with weak loadings \citep{Onatski2012AsymptoticsFactors,Bai2023ApproximateLoadings}. In those cases, weakness is a property of the loadings. I have strong loadings, but weakness arises in the instrument because of the size distribution.

\textbf{Commodity market elasticities.} Short-run demand and supply elasticities are usually identified from structural vector autoregressions, most extensively for crude oil \citep{Kilian2009NotMarket,Baumeister2019StructuralShocks,Caldara2019OilFluctuations}. Identification there comes from restrictions placed on the aggregate system, such as sign restrictions, exclusion restrictions, or priors on the elasticities themselves. GIV identifies the same elasticities from producer-level shocks. It requires a panel of individual supply data, and it leaves the aggregate dynamics unrestricted. Section \ref{sec:empirical} compares my estimates with the ranges this literature reports.

\section{Three Regimes of Instrument Strength} \label{sec:three_regimes}

This section studies when the Granular Instrumental Variable (GIV) is strong
enough to identify an aggregate demand elasticity. The instrument $z_t$ is a
share-weighted average of unit-level supply shocks. It is valid by assumption
(Assumption~\ref{ass:weak stationarity and idiosyncrasy} delivers exogeneity),
but its \emph{relevance} is not guaranteed. Informativeness depends on the
presence of dominant units. For enough individuals, the share $S_i$ must sit far
from $1/N$.

Proposition~\ref{prop:weakness of known factor instrument} formalizes this. The
instrument stays fixed only when unit sizes are drawn from a fat-tailed
distribution. Otherwise it decays at a rate set by the tails. Weakness is
therefore not something I impose through a local-to-zero parameterization. It
arises structurally, from the heavy-tailed distribution of individual sizes, as
in \citet{Armstrong2016LargeSupply}.

The rate of decay sorts the design into three regimes. First, when the instrument does
not decay, identification is strong. Second, when it decays slower than $\sqrt{N}$, identification is nearly weak. When the time-series information accumulates faster than the cross-section dilutes ($N/T \to 0$), then we recover the parameters consistently but at a slower rate. Finally, when it decays at the $\sqrt{N}$ rate, identification is weak in the classical sense of
\citet{Staiger1997InstrumentalInstruments}. I begin by stating the model and the
assumptions behind Proposition~\ref{prop:weakness of known factor instrument}.

\subsection{Model} \label{subsec:model}

I study the estimation of demand and supply elasticities for a commodity. At each date $t = 1, \ldots, T$, change in aggregate demand $d_t$ is governed by the structural equation
\begin{equation}
    d_t    = \phi_d\, p_t + X_t^d + \varepsilon_t \label{eq:demand}
\end{equation}
where $p_t$ is the change in market-clearing price, $\phi_d$ is the demand elasticity, and $X_t^d$ are observed controls uncorrelated with $\varepsilon_t$. The structural parameter of interest is the demand elasticity, $\phi_d$. 

Further, we observe changes in individual level supply/production of the commodity, $y_t = \left( y_{it} \right)_{1 \leq i \leq N}$. The individual supply is governed by the structural equation
\begin{align}
    y_{it} &= \phi_s\, p_t + X_{it}^y + \lambda_i' F_t + u_{it}, \label{eq:supply}
\end{align}
where $\phi_s$ is the supply elasticity, and $X_{it}^y$ are observed controls uncorrelated with both $\varepsilon_t$ and $u_t$. The structural parameter of interest in this equation is the supply elasticity, $\phi_s$.

The unit-level supply in \eqref{eq:supply} follows a panel data model with interactive fixed effects (common shocks $F_t$ that load heterogeneously across units through $\lambda_i$). The factors $F_t$, the loadings $\lambda_i$, and the idiosyncratic shocks $u_{it}$ are all unobserved. Stacking across units,
\begin{equation*}
    y_t = e_N \phi_s p_t + X_t^y + \Lambda F_t + u_t,
\end{equation*}
where $e_N$ is the $N$-vector of ones.

Market clearing links the two equations. The aggregate change in supply equals the change in demand, so $d_t = \sum_{i=1}^N S_i y_{it} \defeq y_{St}$, where $S_i$ is the equilibrium market share of unit $i$. We assume that the equilibrium market shares are determined by a different structural process, so that at the frequency of interest $S_i$ is independent of all changes in demand and supply. The market share is linked to the individual sizes, $\s_i$ as 
 $S_i = \frac{\s_i}{\sum_j \s_j}$.

For ease of exposition, I present the theory without accounting for the observed controls. When such controls are present, we can partial them out; mutatis mutandis, the theory applies equally to the resulting residualized variables by the Frisch-Waugh-Lovell theorem. See Appendix \ref{appendix_sec_controls} for details.

The model is characterized by the following assumption.
\begin{Ass} \label{ass:weak stationarity and idiosyncrasy}
The multivariate time series $\left(p_{t},y_{t}^{\prime },u_{t}^{\prime }, \varepsilon_t' \right) ^{\prime }$\ is a weakly stationary process with finite second moments. The vector $u_{t}$\ of error terms has a zero mean and is idiosyncratic in the sense that $\e[F_t u_t] = 0$, and
\begin{equation}
E[(d_t-\phi_d p_{t})u_{t}] =0  \label{eq:conditional_moment_restriction}
\end{equation}

     where $e_N$ is the N-dimensional vector of ones. Without loss of generality, we assume that the first common factor in the interactive fixed effects structure of the error is a time fixed effect:%
\[
\lambda _{i}^{\prime }F_{t}+u_{it}=F_{1t}+\sum_{k=2}^{r}\lambda
_{ik}F_{kt}+u_{it} 
\]
In matrix form:%
\[
\Lambda =\left[ 
\begin{array}{ccc}
\Lambda ^{1} & ... & \Lambda ^{r}%
\end{array}%
\right],\quad \Lambda ^{1}=e_{N}.
\]
Without loss of generality, we also assume that the columns $%
\Lambda ^{k},k=2,...,r,$ are orthogonal to the first column, that is that they have a zero mean. Call the $N \times r-1$ matrix formed by dropping the first column, $\tilde{\Lambda}$. Define $\tilde F_t = (F_{2t}, \ldots, F_{rt})'$
\end{Ass}

By Assumption \ref{ass:weak stationarity and idiosyncrasy}, $u_t$ is a valid instrument for the estimation of $\phi_d$. The moment condition delivers exclusion; whether $u_t$ is \emph{relevant} enough to identify $\phi_d$ as $N \to \infty$ is the subject of Proposition~\ref{prop:weakness of known factor instrument}.

We do not observe $u_t$. In small panels, we can consistently estimate the factor loadings, $\Lambda$ alone (see GK and GNR). As $T \to \infty$, we can extract from data, $M_{\Lambda}y_t = M_{\Lambda}u_t$, where
\begin{align*}
    P_{\Lambda } &= \Lambda \left( \Lambda ^{\prime }\Lambda \right) ^{-1}\Lambda' \enspace \text{and,} \\
    M_{\Lambda } &= I_N - P_{\Lambda } 
\end{align*}

GK and GNR show that the optimal GIV in the case of linear conditional expectation is given by $z_t = S' M_{\Lambda} y_t = S' M_{\Lambda} u_t$. 

In large panels ($N$ and $T$ large), we can go further and estimate both the common factors and the factor loadings. We demean the data, and estimate consistent $\hat\Lambda, \hat F$ by PCA. We then subtract the estimated common component $\hat C_t = \hat{\Lambda} \hat{F}_t$ from the observed data to get $\hat{u}_t = D_N y_t - \hat{C}_t$. As $T \to \infty$, we have
\begin{equation*}
    D_N y_t - \tilde{C}_t = D_N u_t
\end{equation*}
where $\tilde{C}_t= \tilde{\Lambda} \tilde{F}_t$ and $D_N = I_N - \frac{e_N e_N'}{N}$ is the demeaning matrix with $e_N$ being the $N$-dimensional vector of ones. This leads to the moment restriction
\begin{equation}
E[(d_t-\phi_d p_{t}) (D_N y_t - \tilde{C}_t)] =0  \label{eq:conditional_moment_aggregate}
\end{equation}
The Granular Instrumental variable associated with the above moment condition is
\begin{equation}
    z_t = S' (D_N y_t - \tilde{C}_t)= S'D_N u_t
\end{equation}
Call the demeaned-shocks, $D_N u_t$ as $\bar{u}_t$. In the general case, we do not directly observe the factor structure, $C_t$, and it needs to be estimated. 

For expositional clarity, I first develop the asymptotic theory assuming that we observe the factor structure. In the subsequent section, I relax this assumption and consider GIV when the factor structure is not known. 

To formalize granularity, I assume that the share vector, $S$ is random and follows the power law. The power law assumption is natural in settings with dominant units as well documented across very different empirical settings such as firm sizes, city sizes, and bank assets \citep{Gabaix2009PowerFinance,Axtell2001ZipfSizes}. I also make suitable assumptions on the other cross-sectional variable, the factor loadings.

\begin{Ass} \label{ass:size_of_firm_power_law}
    The absolute sizes of individual units, $\s_i$ satisfy the following conditions:
    \begin{enumerate}
        \item The absolute sizes of individual units, $\s_i$ follow a power law.
    That is, the probability that it is above a fixed threshold, $s_i$ is given by
    \begin{equation*}
        \p(\s_i > s_i) = c s_i^{-\mu}
    \end{equation*}
    \item The absolute sizes are independent of all time series shocks, namely $u_t, F_t$, and $\varepsilon_t$. 
    \begin{equation*}
        \s_i \perp (u_t, F_t, \varepsilon_t) \quad \forall i,t
    \end{equation*}
    \item The absolute sizes are independent of the factor loadings in the shocks, i.e., $\s_i \perp \tilde{\Blambda}_i$ and the $r$-dimensional vector, $(\s_i, \tilde{\Blambda}_i')'$ is independent across $i$, and identically distributed.
    \item The factor loadings are such that they are independent of the time series shocks. That is
    \begin{equation*}
        \tilde{\Blambda}_i \perp (u_t, F_t, \varepsilon_t) \quad \forall i,t
    \end{equation*}
    \item The tails are also bounded. That is, $\e \| \tilde{\Blambda}_i \|^4 < \infty$.
    \end{enumerate}
\end{Ass}

Sizes are absolute, so I can measure them in units of the smallest unit. Then $\s_i \geq 1$, without any loss in generality. This pins down the tail constant. The support starts at 1, so $\p(\s_i > 1) = 1$. The tail formula evaluated at $s = 1$ returns $c = 1$, and thus, $\p(\s_i > s) = s^{-\mu}$ for $s \geq 1$. A different support yields a tail constant $c \neq 1$. This describes the same law with sizes rescaled by $c^{1/\mu}$. The shares $S_i = \s_i / \sum_j \s_j$ are invariant to that rescaling.

Independence of the absolute sizes is not a restrictive assumption. The individual size is set in the long-term equilibrium. The shocks are all short-term in nature, and do not affect the long-term equilibrium. The same applies to the joint i.i.d.\ assumption of the vector, $(\s_i, \tilde{\Blambda}_i')'$. When loadings are treated as non-random (as is common), the i.i.d. clause reduces to i.i.d. sizes. The final assumption is standard in the factor literature \citep{Bai2003InferentialDimensions}.
As sizes are observed in equilibrium, we can construct the individual shares as
\begin{equation*}
    S_i = \frac{\s_i}{\sum_{j=1}^N \s_j}
\end{equation*}

\begin{Ass} \label{ass:bounded eigenvalues}
    The eigenvalues of the variance of the idiosyncratic errors are bounded above and bounded away from zero. That is, defining $\e[u_t u_t'] = \Omega$, there exist constants $0 < \underline\lambda \leq K < \infty$, independent of $N$, such that
    \begin{equation*}
        \underline\lambda \;\leq\; \gamma_{\text{min}}(\Omega) \;\leq\; \gamma_{\text{max}}(\Omega) \;\leq\; K,
    \end{equation*}
    where $\gamma_{\text{min}}(\cdot)$ and $\gamma_{\text{max}}(\cdot)$ denote the smallest and largest eigenvalue operators, respectively.
\end{Ass}

Under these assumptions, we will see how the tail index affects the granularity of the setting and thereby the instrument strength. 

\subsection{Structural Origin of Weakness}
In large panels, weakness arises due to the behavior of the tail index of the size variable. I formally state that idea in the first Proposition. \cite{Banafti2022InferentialDimensions} had formally stated the result for $\mu \in (0,1)$. I extend it to all values of $\mu$, following \citet{Gabaix2011TheFluctuations}.

\begin{Prop} \label{prop:weakness of known factor instrument}
    Suppose Assumptions~\ref{ass:weak stationarity and idiosyncrasy}, \ref{ass:size_of_firm_power_law}, and \ref{ass:bounded eigenvalues} hold. Then
    \begin{equation*}
        z_t =
        \begin{cases}
            O_\p(1) & \mu \in (0,1), \\
            O_\p\!\left(\frac{1}{N^{1 - \frac{1}{\mu}}}\right) & \mu \in (1,2), \\
            O_\p\!\left(\frac{1}{\sqrt{N}}\right) & \mu > 2.
        \end{cases}
    \end{equation*}
\end{Prop}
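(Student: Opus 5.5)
Since $z_t = S'D_N u_t$, I will condition on the sizes $(\s_i)_{i\le N}$, which Assumption~\ref{ass:size_of_firm_power_law} makes independent of $u_t$. Conditionally on $S$, the instrument has mean zero by Assumption~\ref{ass:weak stationarity and idiosyncrasy}. Its conditional variance is
\begin{equation*}
\e[z_t^2 \mid S] = S' D_N \Omega D_N S .
\end{equation*}
By Assumption~\ref{ass:bounded eigenvalues} and $\|D_N\|\le 1$ this is at most $K\,\|D_N S\|^2 \le K\,\|S\|^2 = K\sum_i S_i^2$. A conditional Chebyshev (Markov) bound then gives $z_t = O_\p\bigl((\sum_i S_i^2)^{1/2}\bigr)$. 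Because the bound holds conditionally for every realisation, it passes to the unconditional statement through the tower property: for any $M$, $\p(|z_t|>M h_N)\le \e[\min(1, K\sum_i S_i^2/(M^2h_N^2))]$. The whole problem therefore reduces to the order of the share Herfindahl
\begin{equation*}
h_N^2 \defeq \sum_{i=1}^N S_i^2 = \frac{\sum_i \s_i^2}{\bigl(\sum_i \s_i\bigr)^2}.
\end{equation*}
This is exactly the object studied in \citet{Gabaix2011TheFluctuations}.

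I will bound the numerator and denominator separately, using the generalized central limit theorem and the law of large numbers for i.i.d.\ Pareto draws with $\p(\s_i>s)=s^{-\mu}$. The variables $\s_i^2$ have tail index $\mu/2$.

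\textbf{Case $\mu>2$.} Both $\e\s_i$ and $\e\s_i^2$ are finite. The weak law of large numbers gives $\sum_i\s_i^2 = N\e\s^2(1+o_\p(1))$ and $\sum_i\s_i = N\e\s(1+o_\p(1))$. Hence $h_N^2 = O_\p(1/N)$ and $z_t = O_\p(N^{-1/2})$.

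\textbf{Case $\mu\in(1,2)$.} The mean is finite, so the denominator is $N^2(\e\s)^2(1+o_\p(1))$. For the numerator, $\s_i^2$ lies in the domain of attraction of a stable law of index $\mu/2<1$, so $N^{-2/\mu}\sum_i\s_i^2$ converges in distribution to a positive stable variable and is $O_\p(1)$. Thus $h_N^2 = O_\p(N^{2/\mu-2})$ and $h_N = O_\p(N^{-(1-1/\mu)})$.

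\textbf{Case $\mu\in(0,1)$.} Here $h_N\le \max_i S_i\le 1$ trivially, so $z_t=O_\p(1)$. I would also remark, following Gabaix, that both sums are dominated by the maximum, $\sum\s_i \sim N^{1/\mu}$ and $\sum\s_i^2\sim N^{2/\mu}$. The ratio therefore converges to a nondegenerate positive limit, so the bound is sharp in this case as well.

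The main obstacle is the $\mu\in(1,2)$ numerator. There I need a tightness statement, not just convergence of expectations, because $\e\s_i^2=\infty$. I would invoke the stable-law GCLT directly. An alternative is a truncation argument: truncate $\s_i^2$ at $N^{2/\mu}$, bound the truncated mean by $N\cdot N^{2/\mu-1}$, and bound the probability that any exceedance occurs by $N\cdot N^{-1}$ times a constant. Either route yields $\sum_i\s_i^2=O_\p(N^{2/\mu})$.

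A secondary point is that $D_N$ only helps, since it has operator norm one. It does not make the bound loose in the $\mu>2$ case, because $S'D_N S = \sum S_i^2 - 1/N$, which is still of order $1/N$. The boundary values $\mu=1,2$ are excluded by the statement, so the logarithmic corrections that arise there are not needed.
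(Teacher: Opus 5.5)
Your proposal is correct and follows the paper's route: bound the conditional variance $S'D_N\Omega D_N S$ by $K\,S'S$, pass to $z_t$ by conditional Chebyshev, and obtain the order of the Herfindahl from the law of large numbers and L\'evy's stable limit theorem in each regime; for $\mu\in(0,1)$, your deterministic bound $S'S\le 1$ is a cleaner substitute for the paper's two separate stable-limit applications, which need a lower bound on $\sum_j \s_j$ to justify taking the ratio. Two small slips to fix: in your notation the correct inequality is $h_N^2=\sum_i S_i^2\le \max_i S_i\le 1$, not $h_N\le\max_i S_i$ (equal shares give $h_N=N^{-1/2}>1/N$), and the truncation variant needs the cutoff $\s_i^2\le M N^{2/\mu}$ with $M$ free, because at $M=1$ the exceedance bound $N\,\p(\s_i>N^{1/\mu})=1$ does not vanish.
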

\begin{proof}
Proof in Appendix \ref{app:behavior of herfindahl}. 
\end{proof}

Two pieces of notation carry through the paper. Define $\delta = \min(1 - 1/\mu, 1/2)$, so that $\delta = 1 - 1/\mu \in (0, 1/2)$ when $\mu \in (1,2)$ and $\delta = 1/2$ when $\mu > 2$. Proposition \ref{prop:weakness of known factor instrument} then reads $z_t = O_{\p}(N^{-\delta})$ for $\mu > 1$. Define also the regime-specific rescaling
\begin{equation} \label{eq:a_N}
    a_N = \begin{cases} 1 & \mu \in (0,1), \\ N^{\delta},\ \delta = 1 - \tfrac{1}{\mu} & \mu \in (1,2), \\ \sqrt{N} & \mu > 2, \end{cases}
\end{equation}
which normalizes the instrument to $a_N z_t = O_{\p}(1)$ in all three regimes. The boundary indices $\mu = 1$ and $\mu = 2$ are excluded. The rescaling picks up a logarithm at both, and Appendix \ref{app:edge_cases} gives the two values.

Proposition \ref{prop:weakness of known factor instrument} gives the three regimes of instrument strength governed by the tail index $\mu$. Combined with the conditions on the relative growth of $N$ and $T$, we have the results on consistency and normality of the GIV estimator.

For $\mu \in (0,1)$, the instrument does not decay. Identification is strong, and this corresponds to the strong instrument dynamics of \citet{Gabaix2024GranularVariables}.

For $\mu > 1$, the instrument vanishes as $N \to \infty$. For $\mu \in (1,2)$, $z_t = O_{\p}(\frac{1}{N^{1 - 1/\mu}})$. For $\mu > 2$, $z_t = O_{\p}(\frac{1}{\sqrt{N}})$. Under $N/T \to 0$, instrument strength accumulates fast enough to identify the structural parameter and conduct inference. The estimator is consistent and asymptotically normal at the slower rate $\sqrt{T}/N^{\delta}$. Following \citet{Antoine2021GMMIdentification}, I call this nearly weak identification.

When $\mu > 2$ and $N/T \to c > 0$, time-series information no longer overtakes the cross-sectional dilution of the instrument. The estimator is inconsistent and identification is weak in the sense of \citet{Staiger1997InstrumentalInstruments}.

Proposition~\ref{prop:weakness of known factor instrument} describes the instrument, not the estimator. The next two sections translate these rates into consistency, convergence rates, and limiting distributions for $\phi_d$ and $\phi_s$, first with the factor structure known and then with it estimated.

\section{GIV with Known Factor Structure} \label{section:known factors}

In this section, I assume that we know the factor structure. This is primarily for exposition but includes some settings of practical interest. These cases include when we we have only time fixed effect, that is, $\Lambda = e_N$. Another is when we do not directly observe the loadings, but we have a parametric form, $\lambda_i = X_i \dot{\lambda}$.

In the general case with known factor loadings, demeaning the panel yields the true factor model.
\begin{align*}
    y_t &= e_N \phi_s p_t + \Lambda F_t + u_t \\
    D_N y_t &= \tilde{\Lambda} \tilde{F}_t + D_N u_t 
\end{align*}

When the factor structure, $\tilde{C}_t$ is known, the difference between demeaned data and the factor structure isolates idiosyncratic shocks. That is, $D_N y_t - \tilde{C}_t = D_N u_t$. The optimal instrument is $z_t = S'(D_N y_t - \tilde{C}_t)$.

Proposition \ref{prop:weakness of known factor instrument} gives the behavior of the known-factor instrument for different values of the tail index of the size variable. Those three regimes translate to three regimes for the GIV estimator as well. When $\mu \in (0,1)$, identification is strong and the estimator is $\sqrt{T}$-consistent and asymptotically normal. When $\mu > 1$ and $N/T \to 0$, identification is nearly weak in the sense of \citet{Antoine2021GMMIdentification}. The estimator is consistent and asymptotically normal at the slower rate $\sqrt{T}/N^{\delta}$, where $\delta = \min(1 - 1/\mu, 1/2)$. When $\mu > 2$ and $N/T \to c$, identification is weak in the sense of \citet{Staiger1997InstrumentalInstruments} and the estimator is inconsistent. I take up inference for all three regimes in Section~\ref{subsec:inference}.

\subsection{Assumptions}

To establish these results, we require assumptions on the idiosyncratic shocks. The large panel setting allows us to accommodate a richer structure for the time series and cross-sectional dependence than the fixed-$N$ (small-panel) case, which requires i.i.d.\ samples with no cross correlation.

\begin{Ass}(Time and Cross-Sectional Dependence and Heteroskedasticity) \label{ass:LN_time and cross sectional dependence}
There exists a positive constant, $M < \infty$ such that for all $N$ and $T$,
\begin{enumerate}
    \item $\e[u_{it}] = 0$, and $\e |u_{it}|^4 \leq M$
    \item For every $i,j$ and $t$, $\e[u_{it} u_{jt}]$ is bounded.
    Define $\gamma(i,j) = \e \left[ u_i' u_j/T \right] = \e \left[ \frac{1}{T} \sum_{t=1}^T u_{it} u_{jt} \right]$, for all $1\leq i,j \leq N$, and
    \begin{align*}
       &a. \quad  \sum_{i=1}^N | \gamma(i,j) |  \leq M \\
       &b. \quad N^{-1}\sum_{i=1}^N \sum_{j=1}^N | \gamma(i,j) | \leq M
    \end{align*}
    \item Let $\e[u_{it}^2 u_{is}^2] = \tau_{st}$. For all $i$,
    \begin{align*}
        T^{-1} \sum_{t=1}^T \sum_{s=1}^T \tau_{st} \leq M
    \end{align*}
    \item For every $i,j$,
    \begin{equation*}
        \e \left| T^{-\frac{1}{2}} \sum_{t=1}^T \bigg( u_{it} u_{jt} - \e[u_{it} u_{jt}]  \bigg) \right|^4 \leq M
    \end{equation*}
    \item (Uniform Rosenthal-type moment bound.) There exists a constant $C_q < \infty$, independent of $N$, such that for any deterministic $\{a_j\}_{j=1}^N \subset \mathbb{R}$ and $q = 8 + 2\pi$,
    \begin{equation*}
        \e\!\left[ \Big| \sum_{j=1}^N a_j u_{jt} \Big|^{q} \right] \;\leq\; C_q \left[ \Big( \sum_{j=1}^N a_j^2 \, \e u_{jt}^2 \Big)^{q/2} + \sum_{j=1}^N |a_j|^q \, \e |u_{jt}|^q \right].
    \end{equation*}
    This holds, in particular, when $\{u_{jt}\}_{j=1}^N$ are cross-sectionally independent \citep{Rosenthal1970OnVariables}.
    \item (Cross-sectional weak dependence of the products $u_{jt} \varepsilon_t$ and $u_{jt} F_t$.) For the structural shock $\varepsilon_t$ and the common factors $F_t$, let $v_{jt} = u_{jt} \varepsilon_t$ and $w_{jt} = u_{jt} F_t$. Then
    \begin{equation*}
        \frac{1}{NT} \sum_{j=1}^N \sum_{k=1}^N \sum_{t=1}^T \sum_{s=1}^T \big| \cov(v_{jt}, v_{ks}) \big| \leq M, \qquad \frac{1}{NT} \sum_{j=1}^N \sum_{k=1}^N \sum_{t=1}^T \sum_{s=1}^T \big\| \cov(w_{jt}, w_{ks}) \big\| \leq M.
    \end{equation*}
    These are the joint analogue of part 2(b), imposing absolute summability of the covariance arrays of $u_{jt} \varepsilon_t$ and $u_{jt} F_t$ across both time and the cross-section. They are conditions on the covariances themselves and do not require $\varepsilon_t$ or $F_t$ to be independent of $\{u_{jt}\}$. Because $\varepsilon_t$ and $F_t$ are common across units, they add no cross-unit linkage of their own, so the conditions only rule out cross-sectional comovement in the supply shocks driven by $\varepsilon_t$ or $F_t$ strong enough to break the summability.
\end{enumerate}
\end{Ass}

These are standard in the factor literature and relax the independence restriction in the shorter panel GIV literature.

\begin{Ass}(Strong Mixing and Higher Moments)\label{ass:mixing for stationary time series}
The multi-dimensional time series, $\{(F_t', u_t', \varepsilon_t) \}$ is weakly stationary and is a strong mixing sequence of size $-(\frac{2 + \pi}{\pi})$, where $\pi > 0$. In other words, define the sigma algebra,
\begin{equation*}
    \mathcal{F}_{a,b} = \sigma \{ (F_t', u_t', \varepsilon_t); a \leq t \leq b \}
\end{equation*}
and
\begin{equation*}
    \alpha(h) \defeq \sup_{A \in \mathcal{F}_{-\infty, 0}, B \in \mathcal{F}_{h, \infty}} |\p(A \cap B) - \p(A) \p(B) |
\end{equation*}
We have, for some $\pi > 0$,
\begin{equation*}
    \sum_{h=1}^{\infty} \alpha(h)^{\frac{\pi}{2 + \pi}} < \infty
\end{equation*}

Further we assume the following higher moments. $\e \| F_t \|^{8 + 2\pi} < \infty$, $\e | \varepsilon_t|^{8 + 2\pi} < \infty$, and $\sup_j \e | u_{jt}|^{8 + 2\pi} \leq M$ for a constant $M$ independent of $N$.

Finally, the conditional long-run variances of the demand and supply scores,
\begin{equation*}
    V_{z \varepsilon}(S) = \lim_{T \to \infty} \frac{1}{T} \sum_{s=1}^T \sum_{t=1}^T \e[z_t z_s \varepsilon_t \varepsilon_s \mid S], \qquad
    V_{z v_H}(S, \Lambda) = \lim_{T \to \infty} \frac{1}{T} \sum_{s=1}^T \sum_{t=1}^T \e[z_t z_s v_{Ht} v_{Hs} \mid S, \Lambda],
\end{equation*}
where $v_{Ht}$ is the supply multiplier defined below \eqref{eq:expression for H}, are non-degenerate at the scale of the instrument. With $a_N$ as in \eqref{eq:a_N},
\begin{equation*}
    \big[ a_N^2 \, V_{z \varepsilon}(S) \big]^{-1} = O_{\p}(1), \qquad \big[ a_N^2 \, V_{z v_H}(S, \Lambda) \big]^{-1} = O_{\p}(1).
\end{equation*}
The same bounds hold for the long-run variances $V_{z \bar{\varepsilon}}(S)$ and $V_{z \bar{v}_H}(S, \Lambda)$ of the recentered scores that arise with estimated factors.

\end{Ass}

The last requirement bounds the spectral density of each score away from zero at frequency zero. It rules out cancellation of the score autocovariances in the long run. When the score is serially uncorrelated, the long-run variance equals the contemporaneous variance, and the bound is automatic provided the multiplier has positive variance.

With the dependence structure in place, I now turn to the behavior of the GIV estimator. I analyze the three regimes set by Proposition~\ref{prop:weakness of known factor instrument} in turn, beginning with the strong regime ($\mu \in (0,1)$). In each regime, I will consider the demand equation first and the supply equation later.

\subsection{Strong Regime: $\mu \in (0,1)$}

Recall that the equations of interest are
\begin{align*}
    d_t    &= \phi_d\, p_t + \varepsilon_t, \\
    y_{it} &= \phi_s\, p_t + \lambda_i' F_t + u_{it},
\end{align*}
and the parameters of interest are the demand elasticity, $\phi_d$ and the supply elasticity, $\phi_s$. We will consider the demand equation first. 

The moment condition for the estimation of the demand elasticity arises from Assumption \ref{ass:weak stationarity and idiosyncrasy}. As the idiosyncratic supply shocks are uncorrelated with the demand shocks, we have $\e[(d_t - \phi_d p_t )z_t] = 0$. Thus the GIV estimator of the demand (aggregate) structural parameter is
\begin{align*}
    \hat{\phi_d} &= \frac{z' y_S}{z' p} = \phi_d + \frac{z' \varepsilon}{z' p} \\
    \hat{\phi}_d - \phi_d &= \frac{z' \varepsilon}{z' p} = \frac{\displaystyle \frac{1}{T} \sum_t z_t \varepsilon_t}{\displaystyle \frac{1}{T} \sum_t z_t p_t}
\end{align*}

In the small-panel GIV framework of \citet{Gabaix2024GranularVariables}, $N$ is fixed, the shares and factor loadings are treated as constants, and consistency and asymptotic normality follow from standard IV arguments. Moving to the large-panel setting introduces four complications. First, $N \to \infty$, so the instrument $z_t = S'\bar{u}_t$ is a growing weighted sum whose behavior depends on the concentration of the shares. Second, the shares $S$ are now random, drawn from a power-law distribution. Third, the factor loadings $\tilde{\Blambda}_i$ are random, and the share-weighted loading $S'\tilde{\Lambda}$ has a growing dimension and must be shown not to contaminate the instrument. Fourth, the central limit theorem for the numerator $T^{-1/2}\sum_t z_t \varepsilon_t$ is itself non-standard. The summand combines a share-weighted cross-sectional sum with time-series dependence, and its moments must be bounded uniformly in $N$.

When $\mu \in (0,1)$, all four complications are resolved by the heavy tails. Proposition~\ref{prop:weakness of known factor instrument} shows that $S'S = O_{\p}(1)$, so the instrument does not degenerate as $N$ grows. Proposition~\ref{prop:behavior of S times lambda} shows that $S'\tilde{\Lambda} = O_{\p}(1)$ as well, so the common-factor contamination remains bounded. Corollary~\ref{corollary:central limit theorem} delivers the CLT for the numerator $T^{-1/2}\sum_t z_t \varepsilon_t$. The heavy-tail concentration of $S$ keeps its moments bounded uniformly in $N$. Under these conditions, the GIV estimator retains $\sqrt{T}$-consistency and asymptotic normality.

\begin{Theorem} \label{theorem:strong_consistency_aggregate}
    Suppose Assumptions \ref{ass:weak stationarity and idiosyncrasy} to \ref{ass:mixing for stationary time series} hold with $\mu \in (0,1)$. Then, conditional on $S$, for almost every realization of the shares, the GIV estimator for the aggregate structural parameter is consistent and asymptotically normal,
    \begin{equation*}
       \frac{\Gamma_{zp}}{\sqrt{V_{z \varepsilon}(S)}} \cdot \sqrt{T} [\hat{\phi}_d - \phi_d] \xrightarrow{d} \n(0,  1)
    \end{equation*}
     where $\Gamma_{zp}$ is the finite-$N$ conditional mean $\e[z_t p_t \mid S, \Lambda] = \frac{S' D_N \Omega S}{\phi_d - \phi_s}$ of Proposition \ref{prop:denominator}, and $V_{z\varepsilon}(S) = \lim_{T \to \infty} \frac{1}{T} \sum_{s=1}^T \sum_{t=1}^T \e[z_t z_s \varepsilon_t \varepsilon_s \mid S]$. The limit in $T$ exists, and Proposition \ref{prop:denominator} gives $a_N^2 \Gamma_{zp} = O_{\p}(1)$ together with $[a_N^2 \Gamma_{zp}]^{-1} = O_{\p}(1)$ whenever $\phi_d \neq \phi_s$, so the denominator is bounded above and away from zero at the rate $a_N^{-2}$.
\end{Theorem}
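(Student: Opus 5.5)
The plan is to work conditionally on $(S,\Lambda)$. Assumption~\ref{ass:size_of_firm_power_law} makes sizes and loadings independent of $(u_t,F_t,\varepsilon_t)$, so conditioning leaves the time-series law of the shocks unchanged and $S$ can be treated as a deterministic weight vector. I will carry the rescaling $a_N$ of \eqref{eq:a_N} explicitly through every step, even though $a_N=1$ for $\mu\in(0,1)$. The reason is that the statement pins the denominator only at the scale $a_N^{-2}$, through $a_N^2\Gamma_{zp}$ and its inverse, so every bound should be stated at that scale. The normalized statistic is invariant to this rescaling, so the same decomposition will serve the later regimes.

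Starting from $\hat\phi_d-\phi_d = (T^{-1}\sum_t z_t\varepsilon_t)/(T^{-1}\sum_t z_tp_t)$, I write
\begin{equation*}
\frac{\Gamma_{zp}}{\sqrt{V_{z\varepsilon}(S)}}\sqrt{T}\,[\hat\phi_d-\phi_d]
= \frac{a_N^2\Gamma_{zp}}{a_N^2 T^{-1}\sum_t z_tp_t}\cdot \frac{a_N T^{-1/2}\sum_t z_t\varepsilon_t}{\sqrt{a_N^2 V_{z\varepsilon}(S)}} .
\end{equation*}
The proof then has three steps.

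\emph{Step 1 (denominator).} Market clearing gives the reduced form $p_t=(S'\Lambda F_t+S'u_t-\varepsilon_t)/(\phi_d-\phi_s)$, with $S'\Lambda F_t = F_{1t}+S'\tilde\Lambda\tilde F_t$. Multiplying by $z_t=S'D_Nu_t$, the conditional mean is $S'D_N\Omega S/(\phi_d-\phi_s)=\Gamma_{zp}$. The cross terms with $F_t$ and $\varepsilon_t$ have zero mean by Assumption~\ref{ass:weak stationarity and idiosyncrasy}. I then show $a_N^2\big(T^{-1}\sum_t z_tp_t-\Gamma_{zp}\big)=o_{\p}(1)$ by bounding its conditional variance. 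For this I use the mixing covariance inequality under Assumption~\ref{ass:mixing for stationary time series}, together with the moment bounds below and Proposition~\ref{prop:behavior of S times lambda}, which controls $S'\tilde\Lambda=O_{\p}(1)$. The bound is $O_{\p}(T^{-1})$. Combined with Proposition~\ref{prop:denominator}, which gives $[a_N^2\Gamma_{zp}]^{-1}=O_{\p}(1)$ when $\phi_d\neq\phi_s$, the first factor converges to $1$ in probability.

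\emph{Step 2 (numerator CLT).} I apply Corollary~\ref{corollary:central limit theorem} to the scaled score $a_N z_t\varepsilon_t$, conditional on $S$. The limit $V_{z\varepsilon}(S)$ exists by the summability of the $\alpha$-mixing coefficients and stationarity. Assumption~\ref{ass:mixing for stationary time series} guarantees $[a_N^2V_{z\varepsilon}(S)]^{-1}=O_{\p}(1)$, so the self-normalized numerator tends to $\n(0,1)$. Slutsky's lemma then combines Steps 1 and 2, and consistency follows because $\sqrt{T}\,\Gamma_{zp}/\sqrt{V_{z\varepsilon}(S)}\to\infty$ at the fixed $a_N$ scale.

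\emph{Step 3 (main obstacle: moment bounds uniform in $N$).} The hardest part is bounding the moments of the scaled summands uniformly in $N$ despite heavy-tailed random weights. I will use the Rosenthal bound of Assumption~\ref{ass:LN_time and cross sectional dependence}(5) with $a_j=(D_NS)_j$. Since $\sum_j|a_j|^q\le(\sum_j a_j^2)^{q/2}$ for $q\ge2$, and $\e u_{jt}^2$ and $\e|u_{jt}|^q$ are bounded uniformly, this gives
\begin{equation*}
\e\big[|a_Nz_t|^q\,\big|\,S\big]\le C\,(a_N^2 S'D_NS)^{q/2}.
\end{equation*}
This quantity is $O_{\p}(1)$ by Proposition~\ref{prop:weakness of known factor instrument}. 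Hölder's inequality with $\e|\varepsilon_t|^{8+2\pi}$ and $\e\|F_t\|^{8+2\pi}$ then bounds the $(4+\pi)$-th moments of $a_Nz_t\varepsilon_t$ and of $a_N^2z_tp_t$, as the mixing CLT and the variance bound in Step 1 require. To pass from these $O_{\p}$ statements in $S$ to ``almost every realization'', I would argue along subsequences: each tight sequence of share-dependent constants has an a.s.\ bounded subsequence. Since the limit law $\n(0,1)$ does not depend on $S$, this upgrades to convergence for almost every realization of the shares.
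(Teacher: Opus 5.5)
Your argument is correct in substance, but Step 1 takes a different route from the paper's proof of this theorem, and the last sentence of Step 3 is wrong as stated.

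\textbf{Comparison with the paper.} The paper's proof substitutes the reduced form of $p_t$ and splits $T^{-1}\sum_t z_t p_t$ into $I+II-III$. It discards the demeaning correction $(\frac{1}{N}\sum_j u_{jt})\,u_t'S$ in $I$ as $O_{\p}(N^{-1/2})$. It then applies the Kojevnikov--Song law of large numbers to $I$ (with a weighted-Jensen moment bound) and to $II$ separately. The CLT makes $III=O_{\p}(T^{-1/2})$. Finally it bounds the limit $S'\Omega S/(\phi_d-\phi_s)$ away from zero through $\underline\lambda\|S\|^2$ and Lemma~\ref{lemma:joint stable convergence}.

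You instead center $z_tp_t$ at its exact conditional mean, control the whole fluctuation with one Davydov variance bound, and import the two-sided bound from Proposition~\ref{prop:denominator}. That is the argument of Proposition~\ref{prop:denominator}, and of Term~I in the paper's nearly weak proof, specialized to $a_N=1$. Your version buys two things.
\begin{itemize}
\item Your centering is the $\Gamma_{zp}=S'D_N\Omega S/(\phi_d-\phi_s)$ of the statement. The paper's proof lands on $S'\Omega S/(\phi_d-\phi_s)$ instead. The two differ by $e_N'\Omega S/(N(\phi_d-\phi_s))=O(1/N)$, which is harmless in the strong regime but not at $\mu>2$, where $a_N^2/N$ does not vanish.
\item Your bound $Ch_N/T$ is uniform in $N$, so it is a genuine joint-limit statement rather than a per-$N$ almost-sure law of large numbers.
\end{itemize}
Your Rosenthal step, using $\sum_j|a_j|^q\le(\sum_j a_j^2)^{q/2}$, is Lemma~\ref{lemma:share inequalities}.1 and matches the paper's CLT moment bound. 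One caution for reusing the decomposition in later regimes: once $a_N\to\infty$, the $L^{4+\pi}$ norm of $a_N^2z_tp_t$ is of order $a_N$, not $O_{\p}(1)$. The variance bound then becomes $O_{\p}(a_N^2/T)$, which is where the condition $a_N^2/T\to0$ in Proposition~\ref{prop:denominator} comes from.

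\textbf{The flawed step.} The claim that ``each tight sequence of share-dependent constants has an a.s.\ bounded subsequence'' is false. An i.i.d.\ sequence with unbounded support is tight, yet every subsequence is again i.i.d.\ and has $\limsup=\infty$ almost surely. The subsequence device you have in mind characterizes convergence in probability: every subsequence has a further almost surely convergent subsequence. It therefore applies to your $o_{\p}(1)$ remainders. It does not apply to the $O_{\p}(1)$ constants that enter the moment and non-degeneracy hypotheses of the conditional CLT, namely $a_N^2h_N$, $[a_N^2\Gamma_{zp}]^{-1}$, $[a_N^2V_{z\varepsilon}(S)]^{-1}$ and $S'\tilde\Lambda$. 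In any case, convergence along subsequences would not give convergence along the full sequence for a fixed realization of the shares.

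What your argument does support is convergence of the conditional law in probability. Fix $\eta$ and take $M$ so that these constants are at most $M$ on an event of probability at least $1-\eta$, uniformly in $N$. On that event the conditional CLT holds uniformly over all conditional laws meeting the mixing, moment and non-degeneracy bounds with constant $M$, because a violating sequence would itself be a triangular array satisfying the CLT's hypotheses. Hence $\sup_x|\p(\,\cdot\le x\mid S)-\Phi(x)|\to0$ in probability. Since the limit does not depend on $S$, the unconditional statement follows.

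That is also all the paper's proof delivers. It applies the conditional CLT with moment bounds that are only $O_{\p}(1)$ in $S$, then passes to the unconditional limit by dominated convergence, and never attempts an almost-sure-in-$S$ upgrade. Replace your subsequence claim with this in-probability statement.
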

\begin{proof}
    Proof in Appendix \ref{proof:strong_Consistency_aggregate}. The proof proceeds by showing that the denominator $T^{-1}\sum_t z_t p_t$ converges to a nonzero limit, while the numerator $T^{-1/2}\sum_t z_t \varepsilon_t$ satisfies a CLT under the mixing conditions of Assumption \ref{ass:mixing for stationary time series}.
\end{proof}

For the disaggregated supply side, we do not have a ready to use moment condition. But in line with the standard GIV theory, a specific aggregation yields the valid moment condition. For an aggregating vector $H$ of dimension $N$, $y_{Ht} = \sum_{i=1}^N H_i y_{it}$ with $\sum_{i=1}^N H_i = 1$. This leads to a valid moment condition using GIV if and only if
\begin{eqnarray*}
    Cov\left[ y_{Ht}-\phi p_{t},D_N u_t\right] =0%
    &\Longleftrightarrow& Cov\left[ H^{\prime }u_{t},D_N u_{t}\right] =0 \\
    &\Longleftrightarrow &H^{\prime }\Omega D_N=0\\
&\Longleftrightarrow &
\Omega H\in \text{Im}\left( e_N \right)
\end{eqnarray*}
That is, $H = \Omega^{-1}e_N k$ for some scalar constant, $k$. As $H' e_N = 1$, we have
\begin{equation} \label{eq:expression for H}
    H = \frac{\Omega^{-1} e_N}{e_N' \Omega^{-1} e_N}
\end{equation}

For the choice of $H$ in \eqref{eq:expression for H}, we have the moment restriction, $\e[(y_{Ht}-\phi p_{t}) D_N u_t ] =0 $. Define $v_{Ht} = H' v_t = \lambda_H' F_t + u_{Ht}$
Thus, the GIV estimator of the supply (disaggregated) parameter is
\begin{align*}
    \hat{\phi}_s &= \frac{z' y_H}{z' p} = \phi_s + \frac{z' v_H}{z' p} \\
    \hat{\phi}_s - \phi_s &= \frac{z' v_H}{z' p}
\end{align*}

Similar to the case with the demand elasticity, the following theorem shows that the GIV estimator is $\sqrt{T}$ consistent and asymptotically normal.
\begin{Theorem} \label{theorem:strong_consistency_disaggregate}
    Suppose Assumptions \ref{ass:weak stationarity and idiosyncrasy} to \ref{ass:mixing for stationary time series} hold with $\mu \in (0,1)$. Then, conditional on $(S, \Lambda)$, for almost every realization of the shares and loadings, the GIV estimator for the disaggregated structural parameter is consistent and asymptotically normal,
    \begin{equation*}
       \frac{\Gamma_{zp}}{\sqrt{V_{z v_H}(S, \Lambda)}} \cdot \sqrt{T} [\hat{\phi}_s - \phi_s] \xrightarrow{d} \n(0,  1)
    \end{equation*}
     where $\Gamma_{zp}$ is the finite-$N$ conditional mean $\e[z_t p_t \mid S, \Lambda] = \frac{S' D_N \Omega S}{\phi_d - \phi_s}$ of Proposition \ref{prop:denominator}, and $V_{z v_H}(S, \Lambda) = \lim_{T \to \infty} \frac{1}{T} \sum_{s=1}^T \sum_{t=1}^T \e[z_t z_s v_{Ht} v_{Hs} \mid S, \Lambda]$. The limit in $T$ exists, and Proposition \ref{prop:denominator} gives $a_N^2 \Gamma_{zp} = O_{\p}(1)$ together with $[a_N^2 \Gamma_{zp}]^{-1} = O_{\p}(1)$ whenever $\phi_d \neq \phi_s$, so the denominator is bounded above and away from zero at the rate $a_N^{-2}$.
\end{Theorem}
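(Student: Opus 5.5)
The proof mirrors that of Theorem~\ref{theorem:strong_consistency_aggregate}, with the demand shock $\varepsilon_t$ replaced by the supply multiplier $v_{Ht} = \lambda_H' F_t + u_{Ht}$. It also conditions on $(S,\Lambda)$ rather than on $S$ alone, since $v_{Ht}$ involves the loadings through $\lambda_H = \Lambda' H$.

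I would start from the decomposition
\begin{equation*}
\sqrt{T}\,[\hat\phi_s - \phi_s] = \frac{T^{-1/2}\sum_t z_t v_{Ht}}{T^{-1}\sum_t z_t p_t}.
\end{equation*}
The denominator is handled exactly as in the demand case. Conditional on $(S,\Lambda)$, stationarity and strong mixing (Assumption~\ref{ass:mixing for stationary time series}) give a law of large numbers, $T^{-1}\sum_t z_t p_t - \Gamma_{zp} \to_{\p} 0$. Proposition~\ref{prop:denominator} then gives $a_N^2\Gamma_{zp}$ bounded above and away from zero, with $a_N = 1$ here. So the denominator can be replaced by $\Gamma_{zp}$ at the cost of an $o_{\p}(1)$ factor.

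For the numerator, I would first check that it has mean zero. Write $z_t = S'D_N u_t$ and $v_{Ht} = \lambda_H'F_t + H'u_t$. Then $\e[z_t \lambda_H' F_t \mid S,\Lambda] = 0$ because $\e[F_t u_t']=0$ and the shares and loadings are independent of the shocks (Assumption~\ref{ass:size_of_firm_power_law}). Also $\e[z_t H'u_t \mid S,\Lambda] = S'D_N\Omega H = 0$, since $\Omega H \in \text{Im}(e_N)$ by construction of $H$ in \eqref{eq:expression for H} and $D_N e_N = 0$.

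Next I would apply a central limit theorem for strongly mixing arrays to $\xi_t = z_t v_{Ht}$, conditional on $(S,\Lambda)$. This needs three ingredients. First, $\xi_t$ is a measurable function of $(F_t,u_t)$, so it inherits the mixing size. Second, it needs a $(4+\pi)$-th moment bounded uniformly in $N$. I would get this from Cauchy--Schwarz, splitting it into $\e|z_t|^{8+2\pi}$ and $\e|v_{Ht}|^{8+2\pi}$. The Rosenthal bound (Assumption~\ref{ass:LN_time and cross sectional dependence}.5) with weights $a = D_N S$ gives $\e|z_t|^{q} \lesssim (S'S)^{q/2} + \sum_j |S_j|^q$, which is $O_{\p}(1)$ by Proposition~\ref{prop:weakness of known factor instrument}. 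The same bound with weights $H$ handles $H'u_t$; I expect $\|H\|_2 = O(N^{-1/2})$ from the eigenvalue bounds of Assumption~\ref{ass:bounded eigenvalues}. For $\lambda_H'F_t$, I need $\lambda_H = \Lambda'H$ to be bounded. Its first entry is $1$, and the rest are $\tilde\Lambda'H$; I would bound these by $\|\tilde\Lambda\|_2\|H\|_2 = O_{\p}(\sqrt N)\,O(N^{-1/2})$ using $\e\|\tilde\Blambda_i\|^4<\infty$. Third, the long-run variance $V_{zv_H}(S,\Lambda)$ must be nondegenerate, which is assumed in Assumption~\ref{ass:mixing for stationary time series}. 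The existence of the limit in $T$ follows from absolute summability of the autocovariances, via the mixing inequality and the moment bounds.

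The main obstacle is that these moment and mixing bounds must hold uniformly in $N$ along almost every realization of $(S,\Lambda)$. Otherwise the conditional CLT does not deliver a fixed $\n(0,1)$ limit while $N$ grows with $T$. My plan is to normalize by $\sqrt{V_{zv_H}(S,\Lambda)}$ and verify a Lyapunov-type condition for the normalized array $\xi_t/\sqrt{V_{zv_H}}$. The fact that $V_{zv_H}^{-1} = O_{\p}(1)$ keeps the normalized moments bounded. I would then pass from ``$O_{\p}$ in $S$'' to ``almost every realization'' by a subsequence argument, or equivalently by stating convergence conditionally in probability. Combining the numerator CLT with the denominator limit via Slutsky gives the stated result, and consistency follows because $\Gamma_{zp}$ is bounded away from zero.
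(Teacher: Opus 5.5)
Your proposal is correct and follows essentially the same route as the paper. The denominator is carried over from the demand case via Proposition~\ref{prop:denominator}. The numerator is handled by checking that $v_{Ht}$ meets the multiplier conditions (inherited mixing, uniform $8+2\pi$ moments via $\|H\|^2 \le K^2/(\underline\lambda^2 N)$ and the Rosenthal bound, and conditional orthogonality through $D_N \Omega H = 0$), and then applying the conditional CLT of Corollary~\ref{corollary:central limit theorem}; the paper packages these checks as Lemma~\ref{lemma:v_H multiplier}.

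Two small differences. Your operator-norm bound $\|\tilde\Lambda' H\| \le \|\tilde\Lambda\|_2 \|H\|_2 = O_{\p}(1)$ is cruder than the paper's $\e\|\tilde\Lambda' H\|^2 = O(N^{-1})$, which uses zero-mean i.i.d.\ loadings, but it is sufficient. Your explicit attention to uniformity in $N$ of the conditional CLT is more careful than the paper, which simply invokes the Kojevnikov--Song CLT conditionally and passes to the unconditional statement by dominated convergence.
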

\begin{proof}
    Proof in Appendix \ref{app:proof of disaggregate}. The denominator is unchanged from Theorem \ref{theorem:strong_consistency_aggregate}, as neither $z_t$ nor $p_t$ depends on $H$. Only the numerator differs which I account for in the proof.
\end{proof}

\subsection{Nearly Weak Identification: $\mu > 1$}

I now turn to the case $\mu > 1$. Define $\delta = \min(1 - 1/\mu, 1/2)$. The heavy-tail concentration of the shares is weaker than in the strong regime, and Proposition~\ref{prop:weakness of known factor instrument} pins down by how much. The Herfindahl satisfies $S'S = O_{\p}(N^{-2\delta})$, so the instrument $z_t = S'\bar{u}_t$ no longer has order one. It dilutes as $N$ grows. Only the rescaled instrument $N^{\delta} z_t$ has a non-degenerate limit. Under $N/T \to 0$, time-series information accumulates fast enough that the GIV estimator remains consistent and asymptotically normal at the slower rate $\sqrt{T}/N^{\delta}$. Following \citet{Antoine2021GMMIdentification}, I call this nearly weak identification.

Two of the four complications from the strong regime resolve exactly as before. The share-weighted loading $S'\tilde{\Lambda}$ remains controlled by Proposition~\ref{prop:behavior of S times lambda}, and the CLT for the numerator goes through under the same mixing conditions. The other two now carry an explicit $N$-dependence inherited from the dilution of the instrument. This is what slows the rate of convergence and forces the requirement $N/T \to 0$. Time-series information must accumulate fast enough to overcome the cross-sectional dilution.

The boundary $\mu = 2$ separates two dilution patterns. For $\mu \in (1,2)$, the dilution exponent $\delta = 1 - 1/\mu$ lies in $(0, 1/2)$ and the Herfindahl is governed by a stable law. For $\mu > 2$, the size variable has a finite second moment. The Herfindahl is then governed by the law of large numbers and the dilution exponent caps at $\delta = 1/2$. The theorem below covers both cases at once. Like the previous section, I state the results for the aggregate demand equation first before proceeding to the disaggregated supply equation.

\begin{Theorem} \label{theorem:weak_consistency_aggregate} \label{theorem:semi_strong_consistency_aggregate}
    Suppose Assumptions \ref{ass:weak stationarity and idiosyncrasy} to \ref{ass:mixing for stationary time series} hold with $\mu > 1$ and $N/T \to 0$. Let $\delta = \min(1 - 1/\mu, 1/2)$, so that $\delta = 1 - 1/\mu \in (0, 1/2)$ when $\mu \in (1,2)$ and $\delta = 1/2$ when $\mu > 2$. Then, conditional on $S$, for almost every realization of the shares, the GIV estimator for the aggregate structural parameter is consistent and asymptotically normal,
    \begin{equation*}
       \frac{\Gamma_{zp}}{\sqrt{V_{z \varepsilon}(S)}} \cdot \sqrt{T}\, [\hat{\phi}_d - \phi_d] \xrightarrow{d} \n(0,  1),
    \end{equation*}
    where $\Gamma_{zp}$ is the finite-$N$ conditional mean $\e[z_t p_t \mid S, \Lambda] = \frac{S' D_N \Omega S}{\phi_d - \phi_s}$ of Proposition \ref{prop:denominator}, and $V_{z\varepsilon}(S) = \lim_{T \to \infty} \frac{1}{T}\sum_{s=1}^T\sum_{t=1}^T \e[z_t z_s \varepsilon_t \varepsilon_s \mid S]$. The limit in $T$ exists, and Proposition \ref{prop:denominator} gives $a_N^2 \Gamma_{zp} = O_{\p}(1)$ together with $[a_N^2 \Gamma_{zp}]^{-1} = O_{\p}(1)$ whenever $\phi_d \neq \phi_s$, so the denominator is bounded above and away from zero at the rate $a_N^{-2}$. The studentization satisfies $\Gamma_{zp}/\sqrt{V_{z\varepsilon}(S)} = O_{\p}(N^{-\delta})$, so the standardized statistic vanishes at the rate $\sqrt{T}/N^{\delta}$.

    If $\mu > 2$ and $N/T \to c > 0$, then $\hat{\phi}_d$ is inconsistent and identification is weak in the sense of \citet{Staiger1997InstrumentalInstruments}.
\end{Theorem}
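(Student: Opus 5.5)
The plan is to start from the exact decomposition $\hat\phi_d-\phi_d = (T^{-1}\sum_t z_t\varepsilon_t)/(T^{-1}\sum_t z_t p_t)$ and rescale numerator and denominator by the instrument scale $a_N$ of \eqref{eq:a_N}. Conditional on $S$ (and $\Lambda$), I write
\begin{equation*}
\frac{\Gamma_{zp}}{\sqrt{V_{z\varepsilon}(S)}}\sqrt{T}\,[\hat\phi_d-\phi_d]
= \frac{a_N^2\Gamma_{zp}}{a_N^2\,T^{-1}\sum_t z_tp_t}\cdot
\frac{a_N T^{-1/2}\sum_t z_t\varepsilon_t}{\sqrt{a_N^2 V_{z\varepsilon}(S)}} .
\end{equation*}
The second factor is handled by the CLT of Corollary~\ref{corollary:central limit theorem} applied to the rescaled score $a_N z_t\varepsilon_t$. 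Its summands have moments bounded uniformly in $N$ because $a_N z_t=O_\p(1)$ (Proposition~\ref{prop:weakness of known factor instrument}) together with the Rosenthal bound and the $8+2\pi$ moments. Its long-run variance is nondegenerate by the last part of Assumption~\ref{ass:mixing for stationary time series}. The conclusion is that this factor is $\n(0,1)$ conditionally on $S$, for almost every $S$.

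The first factor must tend to one, which is the main step. Using $p_t=(\varepsilon_t - S'\Lambda F_t - S'u_t)/(\phi_s-\phi_d)$ from market clearing, with $S'e_N=1$ absorbing the time effect, I split $a_N^2 T^{-1}\sum_t z_tp_t - a_N^2\Gamma_{zp}$ into centered sample averages:
\begin{enumerate}
\item $a_N^2 T^{-1}\sum_t (z_t\varepsilon_t)$ is $O_\p(a_N/\sqrt{T})$ by the CLT step above;
\item $a_N^2 T^{-1}\sum_t z_t F_t'\Lambda'S$ is $O_\p(a_N/\sqrt T)\cdot S'\tilde\Lambda$, with $S'\tilde\Lambda$ controlled by Proposition~\ref{prop:behavior of S times lambda};
\item $a_N^2 T^{-1}\sum_t (z_tS'u_t - \e[\cdot\mid S])$ is bounded in second moment by $a_N^4 T^{-1}$ times a long-run variance of order $(S'S)^2 = O_\p(a_N^{-4})$, using the mixing inequality and Assumption~\ref{ass:LN_time and cross sectional dependence}.2, hence $O_\p(T^{-1/2})$.
\end{enumerate}

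The crux is that $a_N/\sqrt T = N^{\delta}/\sqrt T \le \sqrt{N/T}\to0$ because $\delta\le 1/2$; this is exactly where $N/T\to0$ enters. Combined with Proposition~\ref{prop:denominator}, which gives $[a_N^2\Gamma_{zp}]^{-1}=O_\p(1)$ when $\phi_d\neq\phi_s$, the ratio is $1+o_\p(1)$, and Slutsky yields the normal limit. For the rate claim, $\Gamma_{zp}\asymp a_N^{-2}$ and $V_{z\varepsilon}\asymp a_N^{-2}$, so the studentization is of order $a_N^{-1}=N^{-\delta}$. Consistency then follows since $\hat\phi_d-\phi_d=O_\p(N^{\delta}/\sqrt T)=o_\p(1)$. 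The main obstacle I expect is making the item-3 variance bound uniform in $N$ given random, heavy-tailed $S$. I would work conditionally on $S$ and use $\sum_i S_i^4\le (S'S)^2$ for the fourth-order cumulant terms.

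For the weak case, $\mu>2$ and $N/T\to c>0$, I would redo the decomposition with $a_N=\sqrt N$. Now $a_N/\sqrt T\to\sqrt c$, so term 1 in the denominator is of the same order as the signal $a_N^2\Gamma_{zp}$. Jointly, $(\sqrt{N/T}\sum_t z_t\varepsilon_t,\ \sqrt{N/T}\sum_t z_t v_t)$ converges to a bivariate normal with nonzero covariance by the joint CLT. The denominator $N T^{-1}\sum_t z_tp_t$ then converges to $\sqrt{N}\Gamma_{zp}\cdot\sqrt N/\sqrt N$ plus a Gaussian of nonvanishing variance. The result is the Staiger–Stock form, a ratio of correlated normals with a finite concentration parameter. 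Hence $\hat\phi_d-\phi_d$ has a nondegenerate limit centered away from zero, which gives inconsistency.
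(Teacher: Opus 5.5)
Your argument for $N/T\to0$ is the paper's. It uses the same factorization into $a_N^2\Gamma_{zp}/(a_N^2T^{-1}\sum_t z_tp_t)$ times a self-normalized rescaled score. It splits the denominator by market clearing into the same $z_tu_{St}$, factor and $z_t\varepsilon_t$ pieces. It takes $[a_N^2\Gamma_{zp}]^{-1}=O_{\p}(1)$ from Proposition~\ref{prop:denominator}, and its binding condition is $a_N/\sqrt T=N^{\delta}/\sqrt T\le\sqrt{N/T}$.

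You genuinely differ on the failure clause. The paper stops at $\hat\phi_d-\phi_d=O_{\p}(\sqrt{N/T})=O_{\p}(1)$, which is only an upper bound and does not by itself exclude consistency. Your route is a joint CLT for the rescaled scores, via Cram\'er--Wold with multiplier $a\varepsilon_t+b'F_t$ in Theorem~\ref{theorem:central limit theorem}. It gives $NT^{-1}\sum_t z_t\varepsilon_t\xrightarrow{d}\sqrt c\,\xi_\varepsilon$ and $NT^{-1}\sum_t z_tp_t\xrightarrow{d} N\Gamma_{zp}+\sqrt c\,\xi_v$. The result is a nondegenerate Staiger--Stock ratio, which actually proves inconsistency. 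What you need is nondegeneracy, not a limit ``centered away from zero.'' Nondegeneracy comes from the lower bound on $NV_{z\varepsilon}(S)$ in Assumption~\ref{ass:mixing for stationary time series}. Since $N\Gamma_{zp}$ need not converge, argue along convergent subsequences.

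Your sharper $O_{\p}(T^{-1/2})$ bound on the centred $z_tu_{St}$ term is what makes this representation clean. The paper's Davydov bound $O_{\p}(a_N/\sqrt T)$ from Proposition~\ref{prop:denominator} is $O_{\p}(1)$ when $N/T\to c$. It would not let you drop that term from the limit.

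A few slips to fix.
\begin{itemize}
\item For $\mu>1$ the CLT for the rescaled score is Theorem~\ref{theorem:central limit theorem}. Corollary~\ref{corollary:central limit theorem} is the unrescaled $\mu\in(0,1)$ case.
\item In item~2, $\Lambda'S=(1,(S'\tilde\Lambda)')'$. So the time-effect piece $a_N^2T^{-1}\sum_t z_tF_{1t}$ is $O_{\p}(a_N/\sqrt T)$ with no factor $S'\tilde\Lambda$. This is harmless under $N/T\to0$, but in the weak case it belongs in $\xi_v$.
\item The moment bounds in item~3 come from the Rosenthal bound of Assumption~\ref{ass:LN_time and cross sectional dependence}.5, which gives $\|\tilde S'u_t\|_q,\ \|S'u_t\|_q\le Ch_N^{1/2}$. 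They do not come from part~2 or a cumulant argument, since no cross-sectional cumulant summability is assumed.
\item The weak-case signal is $N\Gamma_{zp}$, not $\sqrt N\,\Gamma_{zp}$.
\end{itemize}
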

\begin{proof}
    Proof in Appendix~\ref{app:proof of weak theorems}. The argument parallels the strong regime, with both $\Gamma_{zp}$ and $V_{z\varepsilon}(S)$ now $O_{\p}(N^{-2\delta})$ by Proposition~\ref{prop:weakness of known factor instrument}, which slows the rate to $\sqrt{T}/N^{\delta}$.
\end{proof}

Now I state the result for the disaggregated supply equation under suspected weakness.

\begin{Theorem} \label{theorem:weak_consistency_disaggregate}
    Suppose Assumptions \ref{ass:weak stationarity and idiosyncrasy} to \ref{ass:mixing for stationary time series} hold with $\mu > 1$ and $N/T \to 0$. Let $\delta = \min(1 - 1/\mu, 1/2)$, so that $\delta = 1 - 1/\mu \in (0, 1/2)$ when $\mu \in (1,2)$ and $\delta = 1/2$ when $\mu > 2$. Then, conditional on $(S, \Lambda)$, for almost every realization of the shares and loadings, the GIV estimator for the disaggregated structural parameter is consistent and asymptotically normal,
    \begin{equation*}
       \frac{\Gamma_{zp}}{\sqrt{V_{z v_H}(S, \Lambda)}} \cdot \sqrt{T}\, [\hat{\phi}_s - \phi_s] \xrightarrow{d} \n(0,  1),
    \end{equation*}
    where $\Gamma_{zp}$ is the finite-$N$ conditional mean $\e[z_t p_t \mid S, \Lambda] = \frac{S' D_N \Omega S}{\phi_d - \phi_s}$ of Proposition \ref{prop:denominator}, and $V_{z v_H}(S, \Lambda) = \lim_{T \to \infty} \frac{1}{T}\sum_{s=1}^T\sum_{t=1}^T \e[z_t z_s v_{Ht} v_{Hs} \mid S, \Lambda]$. The limit in $T$ exists, and Proposition \ref{prop:denominator} gives $a_N^2 \Gamma_{zp} = O_{\p}(1)$ together with $[a_N^2 \Gamma_{zp}]^{-1} = O_{\p}(1)$ whenever $\phi_d \neq \phi_s$, so the denominator is bounded above and away from zero at the rate $a_N^{-2}$. The studentization satisfies $\Gamma_{zp}/\sqrt{V_{z v_H}(S, \Lambda)} = O_{\p}(N^{-\delta})$, so the standardized statistic vanishes at the rate $\sqrt{T}/N^{\delta}$.

    If $\mu > 2$ and $N/T \to c > 0$, then $\hat{\phi}_s$ is inconsistent and identification is weak in the sense of \citet{Staiger1997InstrumentalInstruments}.
\end{Theorem}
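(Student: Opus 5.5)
We handle the supply-side statement exactly as the demand-side Theorem~\ref{theorem:weak_consistency_aggregate}, replacing $\varepsilon_t$ by the supply multiplier $v_{Ht} = \lambda_H'F_t + u_{Ht}$. We start from
\begin{equation*}
    \sqrt{T}\,[\hat\phi_s-\phi_s] = \frac{T^{-1/2}\sum_t z_t v_{Ht}}{T^{-1}\sum_t z_t p_t}.
\end{equation*}
We rescale numerator and denominator by $a_N$ and $a_N^2$ respectively, with $a_N=N^{\delta}$. Throughout, we work conditional on $(S,\Lambda)$ on a full-probability set of realizations. On that set, the Herfindahl-type bounds of Proposition~\ref{prop:weakness of known factor instrument} and the loading bound of Proposition~\ref{prop:behavior of S times lambda} hold.

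\textbf{Denominator.} As noted after Theorem~\ref{theorem:strong_consistency_disaggregate}, $z_t$ and $p_t$ do not depend on $H$. So Proposition~\ref{prop:denominator} gives $a_N^2\Gamma_{zp}$ bounded above and away from zero when $\phi_d\neq\phi_s$, exactly as in the demand case. The remaining task is to show $a_N^2\,(T^{-1}\sum_t z_tp_t-\Gamma_{zp}) = o_{\p}(1)$. We write $p_t$ through market clearing as a linear combination of $\varepsilon_t$, $S'\Lambda F_t$ and $S'u_t$. Each product $z_tp_t$ is then a quadratic form in share-weighted shocks. Using mixing (Assumption~\ref{ass:mixing for stationary time series}) and the summability conditions of Assumption~\ref{ass:LN_time and cross sectional dependence}, its variance is $O(T^{-1}N^{-2\delta})$. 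Multiplying by $a_N^4=N^{4\delta}$ yields $O(N^{2\delta}/T)\le O(N/T)\to 0$. This is the only place the condition $N/T\to0$ enters the denominator.

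\textbf{Numerator.} We apply the CLT of Corollary~\ref{corollary:central limit theorem} to the triangular array $\xi_{Nt}=z_tv_{Ht}/\sqrt{V_{zv_H}(S,\Lambda)}$. The moment condition $\e[z_tv_{Ht}\mid S,\Lambda]=0$ holds by the choice of $H$ in \eqref{eq:expression for H}, since $H'\Omega D_N=0$ and $\e[F_tu_t]=0$. The Lyapunov-type moment bound comes from the Rosenthal inequality (Assumption~\ref{ass:LN_time and cross sectional dependence}.5) applied to $a_Nz_t=a_NS'D_Nu_t$. It gives $\e|a_Nz_t|^{q}$ bounded uniformly in $N$, because $a_N^2S'S=O_{\p}(1)$ and $a_N^{q}\sum_j S_j^{q}\le (a_N^2S'S)^{q/2}$. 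The term $u_{Ht}=H'u_t$ has bounded moments because $\|H\|^2\le K/(\underline\lambda N)$ under Assumption~\ref{ass:bounded eigenvalues}, so $u_{Ht}=O_{\p}(N^{-1/2})$. The term $\lambda_H'F_t$ is bounded since $H'\tilde\Lambda=O_{\p}(1)$ by the same i.i.d.\ loading argument as Proposition~\ref{prop:behavior of S times lambda}. The nondegeneracy $[a_N^2V_{zv_H}]^{-1}=O_{\p}(1)$ is assumed directly.

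\textbf{Assembly, rate, and the weak case.} By Slutsky, $\frac{\Gamma_{zp}}{\sqrt{V_{zv_H}}}\sqrt{T}[\hat\phi_s-\phi_s]\to\n(0,1)$. The studentization scales as $a_N^{-2}/a_N^{-1}=N^{-\delta}$, so $\hat\phi_s-\phi_s=O_{\p}(N^{\delta}/\sqrt T)$. Consistency follows because $N^{2\delta}/T\le N/T\to0$.

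For $\mu>2$ with $N/T\to c>0$, the denominator fluctuation no longer vanishes. After rescaling by $N$, $T^{-1/2}\sum_t\sqrt N z_t\, \sqrt N p_t$ has mean $\sqrt{T}\,N\Gamma_{zp}/\sqrt N\cdot\sqrt{N}/\sqrt{T}$, which is of order $\sqrt{T/N}\to c^{-1/2}$. This is a Staiger--Stock local-to-zero first stage. A joint CLT for the pair $(T^{-1/2}\sum\sqrt N z_tv_{Ht},\,T^{-1/2}\sum\sqrt N z_t u^{*}_{t})$ then makes $\hat\phi_s-\phi_s$ converge to a nondegenerate ratio of correlated normals, which establishes inconsistency.

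The main obstacle is the numerator moment bound. In particular, we must show the cross term between $z_t$ and $\lambda_H'F_t$ does not inflate the variance beyond the scale $a_N^{-2}$ uniformly in $N$. Our first attempt will be to condition on $F$ and use $\e[F_tu_t]=0$ together with Assumption~\ref{ass:LN_time and cross sectional dependence}.6.
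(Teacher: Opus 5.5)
For the consistency and normality part you follow the paper's proof. The denominator is inherited from the demand case through Proposition~\ref{prop:denominator}: the rescaled fluctuation is $O_{\p}(a_N/\sqrt T)$, and $a_N^2/T \le N/T \to 0$. The numerator goes through the conditional mixing CLT once $v_{Ht}$ is shown to be an admissible multiplier. That is Lemma~\ref{lemma:v_H multiplier}, with $D_N\Omega H=0$ giving the conditional orthogonality and $\|H\|^2=O(N^{-1})$ giving the moments.

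Three corrections to this part.
\begin{itemize}
\item For $\mu>1$ the relevant result is Theorem~\ref{theorem:central limit theorem}. Corollary~\ref{corollary:central limit theorem} is the $\mu\in(0,1)$ case.
\item The ``main obstacle'' you flag is not one. You need neither to condition on $F$ nor to invoke Assumption~\ref{ass:LN_time and cross sectional dependence}.6. H\"older gives $\|a_N z_t v_{Ht}\|_{2+\pi\mid S,\Lambda}\le a_N\|z_t\|_{4+2\pi\mid S}\,\|v_{Ht}\|_{4+2\pi\mid S,\Lambda}=O_{\p}(1)$, whatever the dependence between $z_t$ and $\lambda_H'F_t$. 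This is how the proof of Theorem~\ref{theorem:central limit theorem} treats a multiplier correlated with $u_t$. Summing Davydov's covariance bound over lags, with $\sum_h\alpha(h)^{\pi/(2+\pi)}<\infty$, then gives $a_N^2V_{zv_H}(S,\Lambda)=O_{\p}(1)$. That upper bound is what you need to turn the studentized limit into $\hat\phi_s-\phi_s=O_{\p}(N^{\delta}/\sqrt T)$.
\item The share bounds are $O_{\p}$ statements, not pathwise ones. For $\mu\in(1,2)$, $a_N^2 h_N$ has only a distributional stable limit. So instead of a full-probability set of realizations, work on events where $a_N^2h_N$ and $[a_N^2(h_N-1/N)]^{-1}$ lie below a constant $M_\epsilon$; these have probability at least $1-\epsilon$.
\end{itemize}

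You genuinely depart from the paper in the weak case, and your route is the stronger one. The paper stops at $\hat\phi_s-\phi_s=O_{\p}(\sqrt{N/T})=O_{\p}(1)$. A tightness bound does not rule out convergence to zero, so inconsistency needs a nondegenerate limit, and your Staiger--Stock argument supplies one.

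Your computation as written is garbled, though. With $a_N=\sqrt N$, the rescaled first-stage moment $\frac{a_N}{\sqrt T}\sum_t z_tp_t$ has conditional mean $a_N\sqrt T\,\Gamma_{zp}\asymp\sqrt{T/N}\to c^{-1/2}$. Its $O_{\p}(1)$ noise comes only from $z_t u_t^*$ with $u_t^*=(F_{1t}-\varepsilon_t)/(\phi_d-\phi_s)$. The centred $z_tu_{St}$ term and the $z_t\,S'\tilde\Lambda\tilde F_t$ term contribute only $O_{\p}(N^{-1/2})$.

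Both $v_{Ht}$ and $u_t^*$ satisfy $\e[\,\cdot\;\bar u_{jt}\mid S,\Lambda]=0$, so Theorem~\ref{theorem:central limit theorem} with Cram\'er--Wold gives the joint limit. This yields $\hat\phi_s-\phi_s\to\sqrt c\,Z_1/(g+\sqrt c\,Z_2)$ along subsequences on which $N\Gamma_{zp}\to g\neq 0$ and the conditional covariances converge. You need subsequences because $a_N^2\tilde S'\Omega\tilde S$ need not converge. $Z_1$ is nondegenerate by the lower bound on $a_N^2V_{zv_H}$ in Assumption~\ref{ass:mixing for stationary time series}, and that nondegeneracy is what actually proves inconsistency.
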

\begin{proof}
    Proof in Appendix~\ref{app:proof of weak disaggregate}. The denominator is unchanged from Theorem~\ref{theorem:weak_consistency_aggregate}, as neither $z_t$ nor $p_t$ depends on $H$. Only the numerator differs which I account for in the proof.
\end{proof}

\begin{remark}[Three regimes] \label{rem:three_regimes}
Theorems~\ref{theorem:strong_consistency_aggregate} to~\ref{theorem:weak_consistency_disaggregate} together encompass three regimes of instrument strength, governed by the tail index $\mu$ and the $N/T$ trajectory. The aggregate and disaggregated parameters fall in the same regime at the same rate, since the instrument $z_t$ is common to both.
\begin{enumerate}
    \item \textbf{Strong Instrument} ($\mu \in (0,1)$). The instrument does not dilute, and Theorems~\ref{theorem:strong_consistency_aggregate} and~\ref{theorem:strong_consistency_disaggregate} give $\sqrt{T}$-consistency and asymptotic normality.
    \item \textbf{Nearly weak Instrument} ($\mu > 1$ with $N/T \to 0$). The instrument dilutes at rate $N^{-\delta}$ with $\delta = \min(1 - 1/\mu, 1/2)$, but $T$ outgrows $N$ fast enough to preserve identification. Theorems~\ref{theorem:weak_consistency_aggregate} and~\ref{theorem:weak_consistency_disaggregate} give consistency and asymptotic normality at the slower rate $\sqrt{T}/N^{\delta}$.
    \item \textbf{Weak Instrument} ($\mu > 2$ with $N/T \to c > 0$). The instrument dilutes at rate $N^{-1/2}$ and $T$ no longer outgrows $N$, so the time-series information does not overcome the dilution. Both estimators are inconsistent, in the sense of \citet{Staiger1997InstrumentalInstruments}. 
\end{enumerate}
\end{remark}

\section{GIV with Estimated Factor Structure} \label{section:estimated factors}

Section \ref{section:known factors} took the factor structure as known. In practice it is not, and we replace the known common component $\tilde C_t$ with the principal-component estimate $\hat C_t$ of \citet{Bai2003InferentialDimensions}. This places GIV in the constructed-regressor setting. The first-stage estimation error propagates to the GIV moment. Additionally, consistent inference requires $\sqrt{T}/N \to 0$. This section formalizes the effects of the first stage estimation and re-establishes the convergence results of Section \ref{section:known factors} under the estimated-factor instrument.

The parameters of interest are the same two as in Section \ref{section:known factors}, the demand elasticity $\phi_d$ and the supply elasticity $\phi_s$,
\begin{align*}
    d_t    &= \phi_d\, p_t + \varepsilon_t, \\
    y_{it} &= \phi_s\, p_t + \lambda_i' F_t + u_{it}.
\end{align*}
The demand elasticity is estimated off the aggregate quantity $d_t$ and the supply elasticity off the aggregated quantity $y_{Ht}$ with the aggregator $H$ of \eqref{eq:expression for H}. Both use the same instrument, so the first stage enters the two estimators through a single channel.

The instrument is constructed from the supply side in vector form,
\begin{equation*}
    y_t = e_N \phi_s p_t + \Lambda F_t + u_t
\end{equation*}

From the previous definitions, $\Lambda = [\boldsymbol{1}_{N} \enspace \tilde{\Lambda} ]$ and $F_t = [F_{1t},\, \tilde F_t']'$,
\begin{equation*}
    D_N y_t = \tilde{\Lambda} \tilde{F}_t + D_N u_t
\end{equation*}

Thus $D_Ny_t$ has a factor structure. In large panels, we can consistently estimate both $\tilde{\Lambda}$ and $\tilde{F}$ \citep{Bai2003InferentialDimensions}. As $T > N$, we estimate $\hat{\Lambda}$ using principal components. The first order condition of the PCA objective concentrates out $\hat{F} = \Bar{Y} \hat{\Lambda}/N$.

From these two estimators, we have the estimator for the common component $\tilde{C}_{it} = \tilde{\Blambda}_i' \tilde{F}_t$ as $\hat{C}_{it} = \hat{\Blambda}_i' \hat{F}_t$. Call the corresponding vector, $\hat{C}_t$. From this estimator, we can construct the estimated-factor instrument:
\begin{equation*}
    \hat{z}_t' = S'[D_N y_t - \hat{C}_t] \defeq S' \hat{u}_t
\end{equation*}
where $\hat{u}_t = D_N y_t - \hat{C}_t$ is the estimated residual. \citet{Gabaix2024GranularVariables} proposed a different form of the instrument.
\begin{equation*}
    \hat{z}^{\text{GK}}_t = S' M_{\hat{\Lambda}} D_N y_t
\end{equation*}

The two formulations are equivalent. The PCA first-order condition $\hat F_t = \hat\Lambda' D_N y_t / N$ implies $\hat C_t = (I - M_{\hat\Lambda}) D_N y_t$, so $\hat z_t = \hat z_t^{\text{GK}}$ \footnote{Explicitly, $\hat u_t = D_N y_t - \hat\Lambda \hat F_t = [I - \hat\Lambda \hat\Lambda'/N] D_N y_t = [I - \hat\Lambda(\hat\Lambda'\hat\Lambda)^{-1}\hat\Lambda'] D_N y_t = M_{\hat\Lambda} D_N y_t$, where the second equality uses the PCA normalization $\hat\Lambda'\hat\Lambda/N = I_{r-1}$.}. I work with $\hat C_t$ rather than $M_{\hat\Lambda} D_N y_t$ in the asymptotic analysis. $M_{\hat\Lambda} = I - \hat\Lambda(\hat\Lambda'\hat\Lambda)^{-1}\hat\Lambda'$ is a non-linear product of the estimator $\hat\Lambda$. Hence its asymptotic linear form involves derivatives of non-linear transformations of the estimator. $\hat C_t$ admits a much cleaner asymptotic linear expansion derived in Appendix~\ref{app:estimation of common components}. The instrument is therefore
\begin{equation*}
    \hat{z}_t = S'[D_N y_t - \hat{C}_t].
\end{equation*}

Similar to the previous section, I state the results separately for the strong and nearly weak regimes, and within each regime for the demand elasticity first and the supply elasticity after. The estimation of the factor structure requires a number of additional assumptions on the factor structure which I state in the next sub-section.

\subsection{Assumptions} \label{subsec:assumptions}
\begin{Ass}[Strong Factor Structure and Distinct Eigenvalues] \label{ass:LN_strong factor structure}
The factor structure is strong. That is,
\begin{enumerate}
    \item The factor structure is strong. $\e \| \tilde{F}_t \|^4 \leq M < \infty $ and $T^{-1} \sum_{t=1}^T \Tilde{F}_t \Tilde{F}_t' \xrightarrow{p} \Sigma_{\Tilde{F}}$, for some $r-1 \times r-1$ positive definite matrix, $\Sigma_{\Tilde{F}}$.
    \item $ \| \tilde{\Blambda}_j \| \leq \Bar{\Blambda} < \infty$ and $ \e \| \tilde{\Blambda}_j \|^4 \leq M < \infty$ for every $j$. Each factor has a non-trivial contribution on the variance of $Y_t$. That is, $N^{-1} \sum_{i=1}^N \tilde{\Blambda}_i \tilde{\Blambda}_i' \xrightarrow{p} \Sigma_{\Tilde{\Lambda}}$,, for some positive definite $r-1 \times r-1$ matrix $\Sigma_{{\tilde{\Lambda}}}$. $\tilde{\Blambda}_i$ is independent of $u_t$ and $\tilde{F}_t$ for all $i$ and $t$. 
    \item The eigenvalues of the $r-1 \times r-1$ matrix $\Sigma_{\tilde{\Lambda}} \cdot \Sigma_{\Tilde{F}}$ are distinct
\end{enumerate}
    
\end{Ass}

\begin{Ass}(Weak conditional dependence between factors and idiosyncratic errors) \label{ass:weak dep between agg shocks and idio errors} 
There exists an $M < \infty$, such that
    \begin{equation*}
      \e \left(   \frac{1}{N} \sum_{i=1}^N \left \| \frac{1}{\sqrt{T}} \sum_{t=1}^T \tilde{F}_t u_{it} \right \|^2 \right) \leq M
    \end{equation*}
\end{Ass}

\begin{Ass}(Moments) \label{ass:LN_moments and CLT}
There exists an $M < \infty$, such that for all $N$ and $T$,
    \begin{enumerate}
        \item For each $i$,
        \begin{equation*}
            \e \left \| \frac{1}{\sqrt{NT}} \sum_{j=1}^N \sum_{t=1}^T \tilde{\Blambda}_j \left( u_{it} u_{jt} - \e[u_{it} u_{jt}]  \right)   \right \|^2 \leq M
        \end{equation*}
        \item The $r \times r$ matrix satisfies
        \begin{equation*}
            \e \left \| \frac{1}{\sqrt{NT}} \sum_{j=1}^N \sum_{t=1}^T \tilde{\Blambda}_j \tilde{F}_t' u_{jt}  \right \|^2 \leq M
        \end{equation*}        
    \end{enumerate}
    
\end{Ass}

\subsection{Effect of Estimating the Factor Structure}
Estimating the factor structure changes the asymptotic variance of both GIV
estimators, and it does so in one specific way. Wherever the variance of
Section~\ref{section:known factors} carries a structural error, the variance
here carries its factor-residualised version. For the demand elasticity the
error is $\varepsilon_t$ and it is replaced by
\begin{equation*}
    \bar{\varepsilon}_t = \varepsilon_t - \e[\tilde{F}_t' \varepsilon_t] \Sigma_{\tilde{F}}^{-1} \tilde{F}_t .
\end{equation*}
For the supply elasticity the error is $v_{Ht}$ and it is replaced by
\begin{equation*}
    \bar{v}_{Ht} = v_{Ht} - \e[\tilde{F}_t' v_{Ht} \mid \Lambda] \Sigma_{\tilde{F}}^{-1} \tilde{F}_t .
\end{equation*}
The departure is zero when the structural error is uncorrelated with
$\tilde{F}_t$ and non-zero otherwise. The two parameters differ here. For the
demand elasticity the condition is a restriction on the aggregate shock, and it
may or may not hold. For the supply elasticity it holds trivially, since
$v_{Ht} = \lambda_H' F_t + u_{Ht}$ loads on the factors by construction and
nothing in the choice of $H$ makes the aggregated loading $\lambda_H = \Lambda' H$
orthogonal to $\tilde{F}_t$. The rest of this subsection derives corrections for both from the influence function of $\hat{C}_t - \tilde{C}_t$.

\begin{remark}[Scope] \label{rem:scope of the correction}
    In this paper, I consider the case where the structural error $\varepsilon_t$ is correlated with the common factors $\tilde{F}_t$, but is not collinear with them. Both endpoints of this correlation fall outside the problem this section addresses.

    When $\e[\tilde{F}_t' \varepsilon_t] = 0$, the correction is zero. Then $\bar{\varepsilon}_t = \varepsilon_t$, the asymptotic variance is the one carried by the known factor structure in Section~\ref{section:known factors}, and estimating the instrument costs nothing.

    When $\varepsilon_t$ is collinear with $\tilde{F}_t$, the correction removes the structural error entirely and $\bar{\varepsilon}_t \equiv 0$. But in that case no instrument is needed. Writing $\varepsilon_t = c' \tilde{F}_t$, the aggregate equation becomes $y_{St} = \phi_d p_t + c' \tilde{F}_t$, which carries no unobserved error once the factor space is conditioned on. The factors are recovered from the supply panel, so regressing $y_{St}$ on $p_t$ and $\hat{F}_t$ identifies $\phi_d$ directly. The regression is well posed because
    \begin{equation*}
        p_t = \frac{1}{\phi_d - \phi_s} \left( u_{St} + \lambda_S' F_t - \varepsilon_t \right)
    \end{equation*}
    retains $u_{St} = z_t$, which lies outside the span of $\tilde{F}_t$, so $p_t$ is not collinear with the controls. Granularity still supplies the identifying variation. It enters as a control rather than as an instrument.

    Neither endpoint arises for the disaggregated equation. The zero-correlation endpoint fails trivially, for the reason given above. The collinear one cannot arise at all. Its structural residual is $v_{Ht} = \lambda_H' F_t + u_{Ht}$, so collinearity with $\tilde{F}_t$ would require $u_{Ht} = H' u_t$ to lie in the span of $\tilde{F}_t$. Idiosyncrasy makes $u_t$ orthogonal to $F_t$ by Assumption~\ref{ass:weak stationarity and idiosyncrasy}, so this forces $u_{Ht} = 0$ almost surely and hence $H' \Omega H = 0$, which Assumption~\ref{ass:bounded eigenvalues} rules out. 
\end{remark}

The proofs adapt \citet{Bai2003InferentialDimensions} to the GIV setting.
Appendix~\ref{app:estimation of factor loadings} treats the factor
loadings, Appendix~\ref{app:estimation of common factors} the factors,
and Appendix~\ref{app:estimation of common components} combines the two
to obtain an influence-function expansion of $\hat C_t - \tilde C_t$.
I quote that expansion here and trace it through the GIV moment.

Write $\Bar{u}_m = D_N u_m$ for the $N$-vector of demeaned shocks at time $m$, with $j$-th entry $\Bar{u}_{jm}$ and share-weighted sum $\Bar{u}_{Sm} = S' \Bar{u}_m = z_m$. From \eqref{eq:app_influence function of estimator}, we can see that the difference between the estimated common component and the true common component is, with $\delta_{NT}^2 = \min(N,T)$,
\begin{equation} \label{eq:influence function of estimator}
    \hat{C}_t - \tilde{C}_t = \tilde{F}_t' \left[ \frac{\tilde{F}' \tilde{F}}{T} \right]^{-1} \frac{1}{T} \sum_{m=1}^T \tilde{F}_m \Bar{u}_{m} + \tilde{\Lambda} \left[ \frac{\tilde{\Lambda}' \tilde{\Lambda}}{N} \right]^{-1} \frac{1}{N} \sum_{j = 1}^N \tilde{\Blambda}_j \Bar{u}_{jt} + O_{\p} \left( \frac{1}{\delta_{NT}^2} \right)
\end{equation}

The estimated instrument is $\hat{z}_t = z_t - S'[\hat{C}_t - \tilde{C}_t]$. For the demand elasticity,
\begin{align*}
    \hat{\phi}_d - \phi_d = \frac{\sum_{t=1}^T \hat{z}_t \varepsilon_t}{\sum_{t=1}^T \hat{z}_t p_t},
\end{align*}
so the first stage enters through the numerator $\sum_t \hat{z}_t \varepsilon_t$ and the denominator $\sum_t \hat{z}_t p_t$. The supply elasticity puts $v_{Ht}$ in place of $\varepsilon_t$. To unify the results, $Q_t$ for either multiplier. The numerator and the denominator combine into a single result, which is what every proof in this section uses. 

\begin{Prop}[First-stage effect on the GIV moment] \label{prop:first stage effect}
    Suppose Assumptions \ref{ass:weak stationarity and idiosyncrasy} to \ref{ass:LN_moments and CLT} hold, $\frac{\sqrt{T}}{N} \to 0$ and $a_N^2/T \to 0$. Let $Q_t$ be weakly stationary and measurable with respect to $\sigma(F_t, u_t, \varepsilon_t)$, with $\e|Q_t|^{4 + 2\pi} < \infty$ and $\e[\Bar{u}_{jt} Q_t] = 0$ for every $j$. Write $\Sigma_{\tilde{F}} \defeq \e[\tilde{F}_t \tilde{F}_t']$, $\Delta_{\tilde{F} Q} \defeq \e[\tilde{F}_t' Q_t \mid \Lambda] \Sigma_{\tilde{F}}^{-1}$ and $\Bar{Q}_t = Q_t - \Delta_{\tilde{F} Q} \tilde{F}_t$. Then
    \begin{equation} \label{eq:first stage effect}
        \frac{a_N}{\sqrt{T}} \sum_{t=1}^T \hat{z}_t Q_t = \frac{a_N}{\sqrt{T}} \sum_{t=1}^T z_t \Bar{Q}_t + o_{\p}(1),
        \qquad
        \frac{a_N^2}{T} \sum_{t=1}^T \hat{z}_t p_t = \frac{a_N^2}{T} \sum_{t=1}^T z_t p_t + o_{\p}(1).
    \end{equation}
\end{Prop}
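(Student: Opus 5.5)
The plan is to substitute the influence-function expansion \eqref{eq:influence function of estimator} into $\hat z_t = z_t - S'[\hat C_t - \tilde C_t]$ and handle the three pieces separately. This gives
\begin{equation*}
    \hat z_t = z_t - \underbrace{\tilde F_t'\Big[\tfrac{\tilde F'\tilde F}{T}\Big]^{-1}\tfrac{1}{T}\sum_{m} \tilde F_m z_m}_{A_t} - \underbrace{S'\tilde\Lambda\Big[\tfrac{\tilde\Lambda'\tilde\Lambda}{N}\Big]^{-1}\tfrac1N\sum_j \tilde{\Blambda}_j \bar u_{jt}}_{B_t} - S'R_t ,
\end{equation*}
where $S'R_t$ collects the $O_\p(\delta_{NT}^{-2})$ remainder. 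I will read the first term with $\tilde F_m S'\bar u_m = \tilde F_m z_m$, using that the scalar $\tilde F_t'(\cdot)$ acts row-wise.

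\emph{The $A_t$ term produces the correction.} Its contribution to the numerator is
\begin{equation*}
    \frac{a_N}{\sqrt T}\sum_t A_tQ_t = \Big[\frac1T\sum_t Q_t\tilde F_t'\Big]\Big[\frac{\tilde F'\tilde F}{T}\Big]^{-1}\frac{a_N}{\sqrt T}\sum_m \tilde F_m z_m .
\end{equation*}
The factor $a_N T^{-1/2}\sum_m \tilde F_m z_m$ is $O_\p(1)$. This follows from Assumption~\ref{ass:LN_time and cross sectional dependence}(6) applied to $w_{jt}$, weighted by $S$, together with the Herfindahl rate of Proposition~\ref{prop:weakness of known factor instrument}. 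By the ergodic theorem under Assumption~\ref{ass:mixing for stationary time series} and Assumption~\ref{ass:LN_strong factor structure}, the two matrix averages converge to $\e[Q_t\tilde F_t'\mid\Lambda]$ and $\Sigma_{\tilde F}$. The term therefore equals $\frac{a_N}{\sqrt T}\sum_t z_t\,\Delta_{\tilde F Q}\tilde F_t + o_\p(1)$, and subtracting it from $\sum_t z_tQ_t$ gives exactly $\sum_t z_t\bar Q_t$.

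\emph{The $B_t$ term should be negligible.} It equals $\frac1N\,S'\tilde\Lambda\,O_\p(1)\sum_j\tilde{\Blambda}_j\bar u_{jt}$. Here $S'\tilde\Lambda=O_\p(1)$ by Proposition~\ref{prop:behavior of S times lambda}, or better $O_\p(a_N^{-1})$ since the loadings are mean zero and independent of $S$. Its contribution is $\frac{a_N}{N\sqrt T}\sum_t\sum_j \tilde{\Blambda}_j\bar u_{jt}Q_t$. Because $\e[\bar u_{jt}Q_t]=0$, Assumption~\ref{ass:LN_time and cross sectional dependence}(6) gives a double sum of order $\sqrt{NT}$, so the term is $O_\p(a_N/\sqrt N)$ times the $S'\tilde\Lambda$ rate. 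That is $o_\p(1)$ once the $a_N^{-1}$ rate of $S'\tilde\Lambda$ is used, since then it is $O_\p(N^{-1/2})$.

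\emph{The remainder.} Naively, $\frac{a_N}{\sqrt T}\sum_t S'R_tQ_t = O_\p(a_N\sqrt T/\delta_{NT}^2)$. Since $T>N$ this is $a_N\sqrt T/N$, which vanishes under $\sqrt T/N\to0$ only when $a_N=1$. This is the main obstacle. I would refine the Bai (2003) remainder so that its $S$-weighted version inherits a factor $a_N^{-1}$: every component of $R_t$ is a product of sample means of $u$ with $\tilde F$ or $\tilde\Lambda$, so weighting by $S$ either contracts through $S'\bar u$, which is $O_\p(a_N^{-1})$, or through $S'\tilde\Lambda$. The time averaging against $Q_t$ then exploits $\e[\bar u_{jt}Q_t]=0$ to gain an extra $T^{-1/2}$ on the terms linear in $u_t$.

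\emph{The denominator.} The same three-term decomposition applies with $Q_t$ replaced by $p_t = (\phi_d-\phi_s)^{-1}(z_t$-type $+\lambda_S'F_t-\varepsilon_t)$. The terms are now scaled by $a_N^2/T$ rather than $a_N/\sqrt T$, which gives an extra $1/\sqrt T$ relative to the numerator bounds. Using $a_N^2/T\to0$, the $A_t$ and $B_t$ contributions and the remainder are all $o_\p(1)$. This includes the non-mean-zero pieces $u_{St}$ in $p_t$, which are bounded by $a_N^2\,S'S/T\cdot O_\p(1)$.
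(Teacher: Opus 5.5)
Your three-piece split is the one the paper uses (Lemma~\ref{lemma:numerator} and Lemma~\ref{lemma:denominator}), and your treatment of $A_t$ and $B_t$ matches it. For $A_t$, the law of large numbers times an $O_{\p}(1)$ factor gives the recentring $\Delta_{\tilde F Q}\tilde F_t$. For $B_t$, $a_N S'\tilde\Lambda = O_{\p}(1)$ and $\e[\Bar u_{jt} Q_t]=0$ give $O_{\p}(N^{-1/2})$.

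The gap is the remainder, which you flag but do not control. In the numerator, the naive bound $O_{\p}(a_N\sqrt T\,\delta_{NT}^{-2})$ equals $O_{\p}(\sqrt{T/N})$ at $\mu>2$. That diverges under the proposition's own hypothesis $a_N^2/T = N/T \to 0$.

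The denominator has the same problem, and your accounting hides it. Relative to $a_N/\sqrt T$, the scaling $a_N^2/T$ contributes an extra factor $a_N/\sqrt T$, not $1/\sqrt T$. Also, $\frac1T\sum_t \tilde F_t p_t$ has a nonzero limit, so there is no cancellation. The naive remainder is therefore $O_{\p}(a_N^2\delta_{NT}^{-2}) = O_{\p}(a_N^2/N)$, which is $O_{\p}(1)$ at $\mu>2$ whatever $a_N^2/T$ does.

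Both displays need the sharper fact that the share-weighted remainder $S'R_t$ is $O_{\p}(a_N^{-1}\delta_{NT}^{-2})$. That turns the two bounds into $O_{\p}(\sqrt T/N)+O_{\p}(T^{-1/2})$ and $O_{\p}(a_N/N)$. These are exactly the orders recorded in Lemma~\ref{lemma:numerator}(i) and Lemma~\ref{lemma:denominator}, which use the $a_N^{-1}$ gain without deriving it. So you have found a step the paper leaves implicit, but your proposal does not supply it either.

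The mechanism you suggest for the refinement is also partly the wrong one. The remainder pieces from the loading expansion (the $\Bar\gamma$, $\zeta$ and $\xi$ terms of Lemma~\ref{lemma:intermediate lemmas}) enter $\hat C_t - \tilde C_t$ multiplied by $\hat H^{-1}\tilde F_t$. Since $\frac1T\sum_t \tilde F_t Q_t \to \e[\tilde F_t Q_t \mid \Lambda] \neq 0$, and that limit is the very source of the correction, averaging against $Q_t$ buys no $T^{-1/2}$ for them.

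The gain has to come from the cross-section. Write $\hat{\Blambda}_j = \hat H'\tilde{\Blambda}_j + (\hat{\Blambda}_j - \hat H'\tilde{\Blambda}_j)$ and use that the $\tilde{\Blambda}_j$ are mean zero and independent of $S$. For example, the share-weighted $\Bar\gamma$ term is $\frac1N\sum_j \hat{\Blambda}_j (D_N\Omega\tilde S)_j$ with $\tilde S = D_N S$. Cauchy--Schwarz alone gives $O(\|\tilde S\|/\sqrt N)$, hence $O_{\p}(\sqrt{T/N})$ after the $a_N\sqrt T$ scaling. The split instead gives $O_{\p}(\|\tilde S\|/N) + O_{\p}(\|\tilde S\|/(\sqrt N\,\delta_{NT}))$, which after the same scaling is negligible under $\sqrt T/N\to 0$. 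Similarly, the $\xi$ term and the factor-expansion remainders contract through $S'\tilde\Lambda$. The $\zeta$ term contracts through $\sum_i S_i\zeta_{ij} = \frac1T\sum_s (z_s \Bar u_{js} - \e[z_s\Bar u_{js}\mid S])$.

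A smaller point concerns your claim that $\frac{a_N}{\sqrt T}\sum_m \tilde F_m z_m = O_{\p}(1)$. Assumption~\ref{ass:LN_time and cross sectional dependence}.6 with the weights $\tilde S$ held fixed does not give this, because it only controls total covariance mass. It yields a second moment of order $\max_j \tilde S_j^2\cdot N \le N h_N$, i.e.\ $O_{\p}(\sqrt N)$ after rescaling. You can fix this in either of two ways:
\begin{itemize}
\item average over the random assignment of sizes to units as in Lemma~\ref{lemma:hajek bound}; conditionally on the order statistics the second moment is then $O(h_N)$;
\item use, as the paper does, the conditional mixing and Rosenthal bounds behind Theorem~\ref{theorem:central limit theorem} and Corollary~\ref{corollary:central limit theorem}, with $Z_t$ a component of $\tilde F_t$.
\end{itemize}
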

\begin{proof}
    See Appendix \ref{subsec:proof of first stage effect}. The numerator is Lemma \ref{lemma:numerator}.(ii) together with the law of large numbers for $\frac{1}{T} \sum_t \tilde{F}_t' Q_t$, and the denominator is Lemma \ref{lemma:denominator}.
\end{proof}

Estimating the instrument therefore leaves the moment intact and replaces the multiplier by its factor-residualised version. Taking $Q_t = \varepsilon_t$ gives $\Bar{Q}_t = \bar{\varepsilon}_t$ for the demand elasticity, where $\varepsilon_t$ is independent of $\Lambda$ so the conditioning in $\Delta_{\tilde{F} Q}$ is immaterial. Taking $Q_t = v_{Ht}$ gives $\Bar{Q}_t = \bar{v}_{Ht}$ for the supply elasticity which depends on the loadings and the conditioning binds. Both multipliers satisfy the orthogonality hypothesis, $\varepsilon_t$ by idiosyncrasy and $v_{Ht}$ because the choice of $H$ in \eqref{eq:expression for H} makes $H' \Omega D_N = 0$. Each theorem below follows by substituting \eqref{eq:first stage effect} into its counterpart in Section~\ref{section:known factors}.

In every regime, the first-stage term enters at the same order as the corresponding quantity in Section~\ref{section:known factors}. It is $O_{\p}(1)$ in the strong regime ($\mu \in (0,1)$) and $O_{\p}(N^{-\delta})$ in the nearly weak regime ($\mu > 1$), where $\delta = \min(1 - 1/\mu, 1/2)$. The rate of convergence is therefore unchanged, at $\sqrt{T}$ and $\sqrt{T}/N^{\delta}$ respectively. Only the asymptotic variance changes, through the substitution of the structural error with its factor-residualised version.

For $\mu > 2$, estimated-factor case inherits the finite-sample difficulty of the known-factor one. For the reasons given in Section~\ref{subsec:inference}, I recommend Anderson--Rubin confidence sets there. 

The supply elasticity carries a second estimation problem, the aggregating vector $H$, which I settle next before stating the results for the two regimes.

\subsection{Aggregation for the Supply Elasticity} \label{subsec:equal weights}

The moment condition for the supply elasticity requires an aggregating vector $H = \Omega^{-1} e_N/(e_N' \Omega^{-1} e_N)$ of \eqref{eq:expression for H}. Estimating the factor structure yields the instrument. But $H$ still depends on the unknown $\Omega$.

One option is to estimate $\Omega$ from the same first stage. Let $\hat u_t = D_N y_t - \hat C_t$ and $\hat\Omega = \frac{1}{T} \sum_t \hat u_t \hat u_t'$. But $\hat\Omega e_N = \frac{1}{T} \sum_t \hat u_t (\hat u_t' e_N) = 0$ in every sample. Thus $\hat\Omega$ is singular, and $\hat H(\hat\Omega) = \hat\Omega^{-1} e_N/(e_N' \hat\Omega^{-1} e_N)$ does not exist.

I therefore propose equal weights, $e_N/N$, as the aggregating vector. This choice has an obvious problem. The optimal $H$ makes the moment condition exact, $\e[z_t v_{Ht} \mid S, \Lambda] = S' D_N \Omega H = 0$. Write $y_{Et} \defeq e_N' y_t/N$ for the equally weighted aggregate and $v_{Et} \defeq y_{Et} - \phi_s p_t$ for its structural residual. With equal weights the moment condition holds only up to
\begin{equation*}
    b_N \defeq \e[z_t v_{Et} \mid S, \Lambda] = \frac{S' D_N \Omega e_N}{N} = \frac{\beta}{N},
\end{equation*}
which can be non-zero. Lemma \ref{lemma:hajek bound} gives $\beta = O_{\p}(a_N^{-1})$, so $b_N = O_{\p}(a_N^{-1}/N)$. Dividing by the strength of the instrument, $\Gamma_{zp} \asymp a_N^{-2}$, the violation displaces the estimator by $O_{\p}(a_N/N)$. The estimator's own sampling scale is $a_N/\sqrt{T}$ in every regime, so in those units the displacement is $\sqrt{T}/N$. It therefore vanishes under the condition $\frac{\sqrt{T}}{N} \to 0$ already imposed in this section, and equal weights cost nothing asymptotically.

\begin{remark}[Scope of equal weights] \label{rem:scope of equal weights}
    Equal weights do not work when $N$ is fixed. In the fixed-N case, please refer to GNR for the estimated choice of $H$.
\end{remark}

\subsection{Estimated Factors in the Strong Regime: $\mu \in (0,1)$}

In the strong regime, the heavy-tail concentration of the shares delivers $\sqrt{T}$-consistency and asymptotic normality, just as in Theorem~\ref{theorem:strong_consistency_aggregate}. The first-stage estimation contributes an $O_{\p}(1)$ term to the asymptotic distribution, which leaves the rate unchanged but affects the asymptotic variance. I formally state this result in the following theorems.

\begin{Theorem} \label{theorem:estimated_strong_consistency_aggregate}
    Suppose Assumptions \ref{ass:weak stationarity and idiosyncrasy} to \ref{ass:LN_moments and CLT} hold with $\mu \in (0,1)$ and $\frac{\sqrt{T}}{N} \to 0$. Then, conditional on $S$, for almost every realization of the shares, the GIV estimator for the aggregate structural parameter is consistent and asymptotically normal,
    \begin{equation*}
       \frac{\Gamma_{zp}}{\sqrt{V_{z \bar{\varepsilon}}(S)}} \cdot \sqrt{T} [\hat{\phi}_d - \phi_d] \xrightarrow{d} \n(0,  1)
    \end{equation*}
     where $\Gamma_{zp}$ is the finite-$N$ conditional mean $\e[z_t p_t \mid S, \Lambda] = \frac{S' D_N \Omega S}{\phi_d - \phi_s}$ of Proposition \ref{prop:denominator}, and $V_{z\bar{\varepsilon}}(S) = \lim_{T \to \infty} \frac{1}{T} \sum_{s=1}^T \sum_{t=1}^T \e[z_t z_s \bar{\varepsilon_t} \bar{\varepsilon_s} \mid S]$ where $\bar{\varepsilon}_t = \varepsilon_t - \e[\tilde{F}_t' \varepsilon_t] \Sigma_{\tilde{F}}^{-1} \tilde{F}_t $. 
     
\end{Theorem}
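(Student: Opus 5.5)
The plan is to reduce the estimated-factor statistic to the known-factor one via Proposition~\ref{prop:first stage effect}, and then reuse the argument behind Theorem~\ref{theorem:strong_consistency_aggregate} with $\varepsilon_t$ replaced by $\bar\varepsilon_t$. In the strong regime $a_N = 1$, so the rate condition $a_N^2/T \to 0$ is automatic and only $\sqrt{T}/N \to 0$ is needed. Write
\begin{equation*}
    \sqrt{T}\,[\hat\phi_d - \phi_d] = \frac{T^{-1/2}\sum_t \hat z_t \varepsilon_t}{T^{-1}\sum_t \hat z_t p_t}.
\end{equation*}
The first step is to apply Proposition~\ref{prop:first stage effect} with $Q_t = \varepsilon_t$. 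Its hypotheses hold for this choice:
\begin{itemize}
    \item $\varepsilon_t$ is stationary and measurable with respect to $\sigma(F_t,u_t,\varepsilon_t)$;
    \item it has $4+2\pi$ moments by Assumption~\ref{ass:mixing for stationary time series};
    \item $\e[\bar u_{jt}\varepsilon_t]=0$ follows from the moment restriction \eqref{eq:conditional_moment_restriction} together with demeaning.
\end{itemize}
Since $\varepsilon_t$ is independent of $\Lambda$, we get $\Delta_{\tilde F Q} = \e[\tilde F_t'\varepsilon_t]\Sigma_{\tilde F}^{-1}$ and $\bar Q_t = \bar\varepsilon_t$. The numerator therefore equals $T^{-1/2}\sum_t z_t\bar\varepsilon_t + o_\p(1)$, and the denominator equals $T^{-1}\sum_t z_t p_t + o_\p(1)$.

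For the denominator I reuse the known-factor argument. Proposition~\ref{prop:denominator} together with a conditional-on-$S$ law of large numbers under mixing gives $T^{-1}\sum_t z_tp_t - \Gamma_{zp} \to_\p 0$. Here $\Gamma_{zp}$ is bounded and bounded away from zero, because $a_N = 1$ and $\phi_d \neq \phi_s$, so the $o_\p(1)$ error is negligible relative to $\Gamma_{zp}$.

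For the numerator, note that $\bar\varepsilon_t$ is a fixed linear combination of $(\varepsilon_t, \tilde F_t)$. It therefore inherits the strong mixing of Assumption~\ref{ass:mixing for stationary time series} and has $8+2\pi$ moments. I also check that it keeps the exogeneity needed for a centered score: $\e[z_t\bar\varepsilon_t\mid S] = S'D_N(\e[u_t\varepsilon_t] - \e[u_t\tilde F_t']\Sigma_{\tilde F}^{-1}\e[\tilde F_t\varepsilon_t]) = 0$, using idiosyncrasy $\e[F_tu_t]=0$ and $\e[u_t\varepsilon_t]=0$. The CLT of Corollary~\ref{corollary:central limit theorem} was applied to $z_t\varepsilon_t$. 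I rerun it for $z_t\bar\varepsilon_t$, conditional on $S$, using two ingredients:
\begin{itemize}
    \item the uniform Rosenthal bound of Assumption~\ref{ass:LN_time and cross sectional dependence}.5 gives $\e|z_t|^q$ bounded uniformly in $N$, since $\sum S_j^2 \le 1$ and $\sum |S_j|^q \le 1$;
    \item Hölder then gives uniformly bounded $4+\pi$ moments of $z_t\bar\varepsilon_t$, which is what a mixing CLT (e.g.\ Herrndorf's or Wooldridge--White's) for triangular arrays requires.
\end{itemize}
The non-degeneracy condition $[V_{z\bar\varepsilon}(S)]^{-1} = O_\p(1)$ is assumed explicitly in Assumption~\ref{ass:mixing for stationary time series}. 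These give $T^{-1/2}\sum_t z_t\bar\varepsilon_t / \sqrt{V_{z\bar\varepsilon}(S)} \to_d \n(0,1)$. Slutsky's theorem then delivers the stated limit, and consistency follows because $\sqrt{T}\to\infty$.

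The main obstacle is the first step: verifying, in the strong regime, that the influence-function remainder in \eqref{eq:influence function of estimator} is $o_\p(1)$ once it is multiplied by $T^{-1/2}\sum_t$ and share-weighted. The first-order term $S'\tilde\Lambda(\tilde\Lambda'\tilde\Lambda/N)^{-1}N^{-1}\sum_j\tilde\Blambda_j\bar u_{jt}$ is controlled by Proposition~\ref{prop:behavior of S times lambda} together with the cross-sectional averaging, which gives an $O_\p(N^{-1/2})$ contribution. The $O_\p(\delta_{NT}^{-2})$ remainder, summed over $t$ and divided by $\sqrt T$, contributes $\sqrt T/\min(N,T)$, and this vanishes precisely under $\sqrt T/N\to0$. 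Both bounds are exactly the content of Proposition~\ref{prop:first stage effect}, which I take as given. What remains is to check that the conditioning on $S$ (almost every realization) is compatible with it, which holds because $S$ is independent of all time-series shocks by Assumption~\ref{ass:size_of_firm_power_law}.
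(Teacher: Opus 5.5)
Your proposal is correct and takes essentially the same route as the paper. It applies Proposition~\ref{prop:first stage effect} at $Q_t=\varepsilon_t$ and $a_N=1$ to both the numerator and the denominator, uses Proposition~\ref{prop:denominator} to keep $\Gamma_{zp}$ bounded above and away from zero, invokes Corollary~\ref{corollary:central limit theorem} with $\bar\varepsilon_t$ as the multiplier after checking mixing, moments and $\e[\bar\varepsilon_t \Bar{u}_{jt}\mid S]=0$, and closes with Slutsky. One small slip: since $z_t = S'D_N u_t = \tilde S' u_t$ with $\tilde S_j = S_j - 1/N$, the Rosenthal weights should be $\tilde S_j$ rather than $S_j$, but the bounds you need, $\sum_j \tilde S_j^2 \le h_N \le 1$ and $\sum_j |\tilde S_j|^q \le h_N^{q/2} \le 1$, still hold by Lemma~\ref{lemma:share inequalities}.
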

\begin{proof}
    Proof in Appendix \ref{subsec:proof of estimated_theorems}. The proof follows the proof of Theorem \ref{theorem:strong_consistency_aggregate}, with Proposition~\ref{prop:first stage effect} at $Q_t = \varepsilon_t$ and $a_N = 1$ supplying the first-stage substitution.
\end{proof}

For the supply elasticity both estimation problems are now settled. The instrument is the estimated $\hat z_t$ and the aggregation is equal weights, so the estimator
\begin{equation} \label{eq:estimated disaggregated estimator}
    \hat{\phi}_s = \frac{\sum_{t=1}^T \hat{z}_t y_{Et}}{\sum_{t=1}^T \hat{z}_t p_t}
\end{equation}
is computed from observables alone. It inherits the rate of the aggregate estimator, since $z_t$ and $p_t$ are common to both.

\begin{Theorem} \label{theorem:estimated_strong_consistency_disaggregate}
    Suppose Assumptions \ref{ass:weak stationarity and idiosyncrasy} to \ref{ass:LN_moments and CLT} hold with $\mu \in (0,1)$ and $\frac{\sqrt{T}}{N} \to 0$. Then, conditional on $(S, \Lambda)$, for almost every realization of the shares and loadings, the estimated-factor GIV estimator \eqref{eq:estimated disaggregated estimator} for the supply elasticity is consistent and asymptotically normal,
    \begin{equation*}
       \frac{\Gamma_{zp}}{\sqrt{V_{z \bar{v}_H}(S, \Lambda)}} \cdot \sqrt{T} [\hat{\phi}_s - \phi_s] \xrightarrow{d} \n(0,  1)
    \end{equation*}
    where $\Gamma_{zp}$ is the finite-$N$ conditional mean $\e[z_t p_t \mid S, \Lambda] = \frac{S' D_N \Omega S}{\phi_d - \phi_s}$ of Proposition \ref{prop:denominator}, and $V_{z \bar{v}_H}(S, \Lambda) = \lim_{T \to \infty} \frac{1}{T} \sum_{s=1}^T \sum_{t=1}^T \e[z_t z_s \bar{v}_{Ht} \bar{v}_{Hs} \mid S, \Lambda]$ where $\bar{v}_{Ht} = v_{Ht} - \e[\tilde{F}_t' v_{Ht} \mid \Lambda] \Sigma_{\tilde{F}}^{-1} \tilde{F}_t$. 

\end{Theorem}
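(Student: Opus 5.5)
The proof proceeds in three steps: an exact decomposition, a first-stage substitution, and a reduction to the known-factor argument with the target error swapped. Substituting $y_{Et} = \phi_s p_t + v_{Et}$ into \eqref{eq:estimated disaggregated estimator} gives
\begin{equation*}
    \sqrt{T}\,[\hat{\phi}_s - \phi_s] = \frac{T^{-1/2}\sum_t \hat{z}_t v_{Et}}{T^{-1}\sum_t \hat{z}_t p_t}.
\end{equation*}
I would then split the numerator as $v_{Et} = v_{Ht} + (v_{Et} - v_{Ht})$. The first piece is the ideal GIV residual for the optimal aggregator \eqref{eq:expression for H}. The second is the equal-weights discrepancy $(e_N/N - H)'(\Lambda F_t + u_t)$.

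For the ideal piece, I apply Proposition~\ref{prop:first stage effect} with $Q_t = v_{Ht}$ and $a_N = 1$, as in the strong regime. Its hypotheses must be checked. $v_{Ht}$ is stationary and measurable in $\sigma(F_t,u_t)$. Its $4+2\pi$ moments follow from Assumption~\ref{ass:mixing for stationary time series} and $\|H\|$ bounded under Assumption~\ref{ass:bounded eigenvalues}. The orthogonality $\e[\bar u_{jt} v_{Ht}\mid\Lambda] = 0$ holds because $H'\Omega D_N = 0$ and $\e[F_t u_t']=0$. The proposition then gives
\begin{equation*}
    T^{-1/2}\sum_t \hat z_t v_{Ht} = T^{-1/2}\sum_t z_t \bar v_{Ht} + o_\p(1), \qquad T^{-1}\sum_t \hat z_t p_t = T^{-1}\sum_t z_t p_t + o_\p(1).
\end{equation*}
The denominator is therefore handled exactly as in Theorem~\ref{theorem:strong_consistency_disaggregate}: by Proposition~\ref{prop:denominator}, $T^{-1}\sum_t z_t p_t \to \Gamma_{zp}$, which is bounded away from zero when $\mu\in(0,1)$. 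The numerator $T^{-1/2}\sum_t z_t\bar v_{Ht}$ has mean zero conditional on $(S,\Lambda)$. This is because $\e[z_t \tilde F_t]=0$ and $\e[z_t v_{Ht}]=0$. The CLT of Corollary~\ref{corollary:central limit theorem} then applies with $\bar v_{Ht}$ replacing $v_{Ht}$. The mixing and moment conditions transfer because $\bar v_{Ht}$ is a fixed linear combination of $v_{Ht}$ and $\tilde F_t$. The variance $V_{z\bar v_H}(S,\Lambda)$ is nondegenerate by the last clause of Assumption~\ref{ass:mixing for stationary time series}. Slutsky then yields the stated normal limit.

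The main obstacle is the discrepancy term, $T^{-1/2}\sum_t \hat z_t (v_{Et}-v_{Ht})$. I would split it into a mean and a fluctuation. The mean is $\e[z_t v_{Et}\mid S,\Lambda] = b_N = \beta/N$, with $\beta = O_\p(a_N^{-1}) = O_\p(1)$ by Lemma~\ref{lemma:hajek bound}. Its contribution is therefore $\sqrt{T}\,b_N = O_\p(\sqrt{T}/N) = o_\p(1)$, as argued in Subsection~\ref{subsec:equal weights}. The factor part of the fluctuation is $(e_N/N - H)'\Lambda F_t$. Because $\Lambda$ has first column $e_N$ and both weight vectors sum to one, this collapses to $(\tilde\Lambda' e_N/N - \tilde\Lambda' H)'\tilde F_t$. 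Applying Proposition~\ref{prop:first stage effect} to this $Q_t$ replaces it by its $\tilde F_t$-residual, which is zero, up to $o_\p(1)$. The idiosyncratic part is $(e_N/N-H)'u_t$, and its centred sum with $\hat z_t$ is $O_\p(1)$ in general. I would argue it is still $o_\p(1)$ at the $\sqrt T$ scale by bounding its conditional long-run variance by $\|(e_N/N - H)\|\cdot\|\Omega\|\cdot\|S\|$-type quantities. Both $\|e_N/N\|$ and $\|H\|$ are $O(N^{-1/2})$ under Assumption~\ref{ass:bounded eigenvalues}, and the covariance summability in Assumption~\ref{ass:LN_time and cross sectional dependence}.6 controls the remaining sums. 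The first-stage replacement of $\hat z_t$ by $z_t$ in this term contributes $O_\p(\delta_{NT}^{-1})$ by \eqref{eq:influence function of estimator}.

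If this variance bound fails, the fallback is to absorb the $u$-part into the target error, writing $v_{Et}$ in place of $v_{Ht}$ throughout. One would then show that the two long-run variances differ by $o_\p(1)$, which again uses $\|e_N/N - H\| = O(N^{-1/2})$. Consistency finally follows because the normalized statistic is $O_\p(1)$, so $\hat\phi_s - \phi_s = O_\p(T^{-1/2})$.
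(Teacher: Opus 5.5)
Your proposal is correct and is essentially the paper's argument in Appendix~\ref{subsec:proof of estimated_disaggregate}. It uses the same split $v_{Et} - v_{Ht} = \tilde F_t'\lambda_a + u_t'a$, with $a = e_N/N - H$ and $\lambda_a = \tilde\Lambda' a$, and the same sequence of steps: Proposition~\ref{prop:first stage effect} at $Q_t = v_{Ht}$, the drift $\sqrt{T}\,b_N = O_{\p}(\sqrt{T}/N)$ from Lemma~\ref{lemma:hajek bound}, an $O_{\p}(N^{-1/2})$ fluctuation, a negligible interaction with the first-stage error, and the CLT for $z_t\bar v_{Ht}$ followed by Slutsky.

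There are two differences. First, you dispose of the factor part $\lambda_a'\tilde F_t$ through the exact cancellation $\bar Q_t = 0$ in Proposition~\ref{prop:first stage effect}; the paper uses this cancellation only in the HAC proof and here simply bounds that part via $\|\lambda_a\| = O_{\p}(N^{-1/2})$. Second, the fluctuation bound on $z_t u_t'a$ comes from the Rosenthal bound of Assumption~\ref{ass:LN_time and cross sectional dependence}.5 together with Davydov's inequality under mixing. It does not come from Assumption~\ref{ass:LN_time and cross sectional dependence}.6, which only covers products of $u_{jt}$ with $\varepsilon_t$ and $F_t$.
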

\begin{proof}
    Proof in Appendix \ref{subsec:proof of estimated_disaggregate}. The denominator is unchanged from Theorem \ref{theorem:estimated_strong_consistency_aggregate}, as neither $\hat{z}_t$ nor $p_t$ depends on the aggregation. The numerator splits into the optimal-aggregation term and the weight-substitution term, and the proof bounds the second.
\end{proof}

\subsubsection{Comparison with \citet{Banafti2022InferentialDimensions}} \label{subsec:banfti}

\citet{Banafti2022InferentialDimensions} also study large-panel GIV in the
strong-instrument case. Their Theorems 2 and 4 conclude that the first-stage estimation of the instrument has no impact on the asymptotic variance of the GIV estimator. My 
Theorem~\ref{theorem:estimated_strong_consistency_aggregate} reaches the
opposite conclusion. I show that the first-stage estimation has a first-order contribution to the asymptotic variance of the GIV estimator.

My results differ from \citet{Banafti2022InferentialDimensions} for two reasons. The first is an assumption they impose. I show that it is incompatible with fat tails in the size distribution, which is what makes the instrument strong. The second is the order of one term in their Lemma 2. Their argument that this term vanishes in the limit rests on a condition they never state.

Neither assumption is plausible in the settings empiricists work with. The first cannot hold at all under a power law. I develop both points in detail in Appendix~\ref{appendix:banafti}.

\subsection{Estimated Factors under Nearly Weak Identification: $\mu > 1$}

In the nearly weak regime, the estimated-factor estimator inherits the rate $\sqrt{T}/N^{\delta}$ of Theorem~\ref{theorem:weak_consistency_aggregate}, where $\delta = \min(1 - 1/\mu, 1/2)$. The first-stage estimation contributes an $O_{\p}(N^{-\delta})$ term to the asymptotic distribution. This matches the order of the known-factor quantities and leaves the rate unchanged. I formally state these results for the demand and supply elacticities in the following theorems.
\begin{Theorem} \label{theorem:estimated_weak_consistency_aggregate} \label{theorem:estimated_semi_strong_consistency_aggregate}
    Suppose Assumptions \ref{ass:weak stationarity and idiosyncrasy} to \ref{ass:LN_moments and CLT} hold with $\mu > 1$, $N/T \to 0$, and $\frac{\sqrt{T}}{N} \to 0$. Let $\delta = \min(1 - 1/\mu, 1/2)$, so that $\delta = 1 - 1/\mu \in (0, 1/2)$ when $\mu \in (1,2)$ and $\delta = 1/2$ when $\mu > 2$. Then, conditional on $S$, for almost every realization of the shares, the GIV estimator for the aggregate structural parameter is consistent and asymptotically normal,
    \begin{equation*}
       \frac{\Gamma_{zp}}{\sqrt{V_{z \bar{\varepsilon}}(S)}} \cdot \sqrt{T}\, [\hat{\phi}_d - \phi_d] \xrightarrow{d} \n(0, 1),
    \end{equation*}
    where $\Gamma_{zp}$ is the finite-$N$ conditional mean $\e[z_t p_t \mid S, \Lambda] = \frac{S' D_N \Omega S}{\phi_d - \phi_s}$ of Proposition \ref{prop:denominator}, and $V_{z \bar{\varepsilon}}(S) = \lim_{T \to \infty} \frac{1}{T}\sum_{s=1}^T \sum_{t=1}^T \e[z_t z_s \bar{\varepsilon}_t \bar{\varepsilon}_s \mid S]$ with $\bar{\varepsilon}_t = \varepsilon_t - \e[\tilde{F}_t' \varepsilon_t] \Sigma_{\tilde{F}}^{-1} \tilde{F}_t$. The limit in $T$ exists, and Proposition \ref{prop:denominator} gives $a_N^2 \Gamma_{zp} = O_{\p}(1)$ together with $[a_N^2 \Gamma_{zp}]^{-1} = O_{\p}(1)$ whenever $\phi_d \neq \phi_s$, so the denominator is bounded above and away from zero at the rate $a_N^{-2}$. The studentization satisfies $\Gamma_{zp}/\sqrt{V_{z\bar{\varepsilon}}(S)} = O_{\p}(N^{-\delta})$, so the standardized statistic vanishes at the rate $\sqrt{T}/N^{\delta}$.

    If $\mu > 2$ and $N/T \to c > 0$, then $\hat{\phi}_d$ is inconsistent and identification is weak in the sense of \citet{Staiger1997InstrumentalInstruments}.
\end{Theorem}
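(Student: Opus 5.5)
The plan is to reduce the estimated-factor statistic to its known-factor counterpart through Proposition~\ref{prop:first stage effect}, and then to rerun the argument of Theorem~\ref{theorem:weak_consistency_aggregate} with $\varepsilon_t$ replaced by $\bar{\varepsilon}_t$.

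The first step is the rescaled decomposition
\begin{equation*}
    a_N \sqrt{T}\,[\hat{\phi}_d - \phi_d] = \frac{\frac{a_N}{\sqrt{T}} \sum_t \hat{z}_t \varepsilon_t}{\frac{a_N^2}{T} \sum_t \hat{z}_t p_t}.
\end{equation*}
Here $a_N = N^{\delta}$. Both the numerator and the denominator are rescaled so that, in the known-factor case, they are $O_{\p}(1)$ and non-degenerate.

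Next I check the hypotheses of Proposition~\ref{prop:first stage effect} at $Q_t = \varepsilon_t$. Stationarity, measurability and the $4+2\pi$ moment follow from Assumption~\ref{ass:mixing for stationary time series}. Orthogonality $\e[\bar{u}_{jt}\varepsilon_t]=0$ follows from the moment restriction in Assumption~\ref{ass:weak stationarity and idiosyncrasy}. The rate condition $a_N^2/T \to 0$ needs $N^{2\delta}/T \to 0$; since $2\delta \le 1$, this follows from $N/T \to 0$, and $\sqrt{T}/N \to 0$ is assumed directly. The proposition then gives two things. The numerator equals $\frac{a_N}{\sqrt{T}}\sum_t z_t \bar{\varepsilon}_t + o_{\p}(1)$, and the denominator equals $\frac{a_N^2}{T}\sum_t z_t p_t + o_{\p}(1)$. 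Because $\varepsilon_t$ is independent of $\Lambda$, the conditioning in $\Delta_{\tilde F Q}$ is immaterial.

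For the denominator, I reuse the known-factor argument in Appendix~\ref{app:proof of weak theorems}. This gives $\frac{a_N^2}{T}\sum_t z_t p_t - a_N^2 \Gamma_{zp} = o_{\p}(1)$, and Proposition~\ref{prop:denominator} gives $[a_N^2\Gamma_{zp}]^{-1} = O_{\p}(1)$. For the numerator, I need a CLT for $\frac{a_N}{\sqrt{T}}\sum_t z_t \bar{\varepsilon}_t$, conditional on $S$.

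First, $\bar{\varepsilon}_t$ is a fixed linear combination of $\varepsilon_t$ and $\tilde F_t$. It therefore inherits the mixing size and the $8+2\pi$ moments. Second, $\e[z_t \bar{\varepsilon}_t \mid S] = 0$, because $z_t$ is orthogonal to both $\varepsilon_t$ and $F_t$. Third, Assumption~\ref{ass:mixing for stationary time series} explicitly gives $[a_N^2 V_{z\bar\varepsilon}(S)]^{-1} = O_{\p}(1)$. With these in place, the CLT of Corollary~\ref{corollary:central limit theorem} applies verbatim with $\varepsilon_t$ replaced by $\bar{\varepsilon}_t$, yielding $\frac{a_N}{\sqrt{T}}\sum_t z_t\bar\varepsilon_t / \sqrt{a_N^2 V_{z\bar\varepsilon}(S)} \to_d \n(0,1)$.

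Slutsky's lemma then delivers the displayed limit, since the $a_N$ factors cancel in the ratio $\Gamma_{zp}/\sqrt{V_{z\bar\varepsilon}}$. Consistency follows because $\hat\phi_d - \phi_d = O_{\p}(N^{\delta}/\sqrt{T}) = o_{\p}(1)$ under $N/T\to0$. The stated order of the studentization holds because $\Gamma_{zp} \asymp N^{-2\delta}$ and $\sqrt{V_{z\bar\varepsilon}} \asymp N^{-\delta}$.

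The main obstacle is the uniform-in-$N$ CLT for the numerator once $\bar\varepsilon_t$ contains $\tilde F_t$. The cross products $z_t \tilde F_t$ are controlled by the $w_{jt}$ summability in Assumption~\ref{ass:LN_time and cross sectional dependence}(6) and by the Rosenthal bound applied with weights $a_N D_N S$. I would verify the Lyapunov-type moment bound $\e|a_N z_t \bar\varepsilon_t|^{4+\pi} = O_{\p}(1)$ by Hölder's inequality, using $a_N^2 S'D_N S = O_{\p}(1)$ and $a_N^q\sum_j|S_j|^q = O_{\p}(1)$ for $q>2$.

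For the last claim, with $\mu>2$ and $N/T\to c>0$, the proposition's rate condition $a_N^2/T = N/T$ no longer vanishes. I would instead reuse the known-factor weak-IV argument. There the concentration parameter $T\Gamma_{zp}^2/V_{z\bar\varepsilon} \asymp T/N$ stays bounded, so the ratio converges to a non-degenerate ratio of correlated normals and the estimator is inconsistent. It remains to show that the first-stage error terms are of the same order and so cannot restore consistency.
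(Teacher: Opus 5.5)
Your argument for the main claims is the paper's own. It uses Proposition~\ref{prop:first stage effect} at $Q_t=\varepsilon_t$ (with $a_N^2/T\to0$ following from $N/T\to0$ because $2\delta\le1$), Proposition~\ref{prop:denominator} for the denominator, a CLT with multiplier $\bar\varepsilon_t$, and Slutsky.

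One citation is wrong. For $\mu>1$ the CLT you need is Theorem~\ref{theorem:central limit theorem}, not Corollary~\ref{corollary:central limit theorem}. The corollary is stated for $\mu\in(0,1)$ and works with the unscaled summand $Z_t\sum_j S_j\Bar{u}_{jt}$. When $\mu>1$, the long-run variance of that summand is of order $a_N^{-2}\to0$, so the corollary's hypothesis $[\Omega(\s,\Lambda)]^{-1}=O_{\p}(1)$ fails. The $N^{\delta}$-rescaled summand of the theorem is what matches the lower bound $[a_N^2V_{z\bar\varepsilon}(S)]^{-1}=O_{\p}(1)$ you quote. Your $a_N$-weighted Rosenthal--H\"older check is exactly that theorem's moment step. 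The summability in Assumption~\ref{ass:LN_time and cross sectional dependence}.6 plays no role there.

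The failure clause is where you leave a step open, and the route through a Staiger--Stock limit is heavier than needed. The paper proves no joint CLT for the denominator. Its own fluctuation $O_{\p}(a_N/\sqrt T)$ and the first-stage term are now merely $O_{\p}(1)$ rather than negligible. Inconsistency only needs a non-degenerate numerator and a tight denominator.

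Lemma~\ref{lemma:numerator}(ii) and the law of large numbers for $T^{-1}\sum_t\tilde F_t'\varepsilon_t$ never use $a_N^2/T\to0$. They need only $\sqrt T/N\to0$, which is automatic when $N\asymp T$. Hence $\sqrt{N/T}\sum_t\hat z_t\varepsilon_t=\sqrt{N/T}\sum_t z_t\bar\varepsilon_t+o_{\p}(1)$. By Theorem~\ref{theorem:central limit theorem} and Assumption~\ref{ass:mixing for stationary time series}, this is asymptotically normal with variance bounded away from zero.

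The bounds in Proposition~\ref{prop:denominator} and Lemma~\ref{lemma:denominator}, read before their final $o_{\p}(1)$ step, give $\frac{N}{T}\sum_t\hat z_tp_t=O_{\p}(1)$. Since
\begin{equation*}
    \hat\phi_d-\phi_d=\sqrt{\tfrac{N}{T}}\cdot\frac{\sqrt{N/T}\,\sum_t\hat z_t\varepsilon_t}{\frac{N}{T}\sum_t\hat z_tp_t}, \qquad \sqrt{\tfrac{N}{T}}\to\sqrt{c}>0,
\end{equation*}
the error is not $o_{\p}(1)$. This also tightens the paper's own treatment. The paper only records the upper bound $O_{\p}(\sqrt{N/T})=O_{\p}(1)$, and it appeals to Proposition~\ref{prop:first stage effect} even though that proposition's denominator half assumes $a_N^2/T\to0$.
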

\begin{proof}
    Proof in Appendix \ref{subsec:proof of estimated_theorems}. The argument follows Theorem~\ref{theorem:weak_consistency_aggregate}, with Proposition~\ref{prop:first stage effect} at $Q_t = \varepsilon_t$ and $a_N = N^{\delta}$ supplying the first-stage substitution.
\end{proof}

Now for the supply elasticity.

\begin{Theorem} \label{theorem:estimated_weak_consistency_disaggregate} \label{theorem:estimated_semi_strong_consistency_disaggregate}
    Suppose Assumptions \ref{ass:weak stationarity and idiosyncrasy} to \ref{ass:LN_moments and CLT} hold with $\mu > 1$, $N/T \to 0$, and $\frac{\sqrt{T}}{N} \to 0$. Let $\delta = \min(1 - 1/\mu, 1/2)$, so that $\delta = 1 - 1/\mu \in (0, 1/2)$ when $\mu \in (1,2)$ and $\delta = 1/2$ when $\mu > 2$. Then, conditional on $(S, \Lambda)$, for almost every realization of the shares and loadings, the estimated-factor GIV estimator \eqref{eq:estimated disaggregated estimator} for the supply elasticity is consistent and asymptotically normal,
    \begin{equation*}
       \frac{\Gamma_{zp}}{\sqrt{V_{z \bar{v}_H}(S, \Lambda)}} \cdot \sqrt{T}\, [\hat{\phi}_s - \phi_s] \xrightarrow{d} \n(0, 1),
    \end{equation*}
    where $\Gamma_{zp}$ is the finite-$N$ conditional mean $\e[z_t p_t \mid S, \Lambda] = \frac{S' D_N \Omega S}{\phi_d - \phi_s}$ of Proposition \ref{prop:denominator}, and $V_{z \bar{v}_H}(S, \Lambda) = \lim_{T \to \infty} \frac{1}{T}\sum_{s=1}^T \sum_{t=1}^T \e[z_t z_s \bar{v}_{Ht} \bar{v}_{Hs} \mid S, \Lambda]$ with $\bar{v}_{Ht} = v_{Ht} - \e[\tilde{F}_t' v_{Ht} \mid \Lambda] \Sigma_{\tilde{F}}^{-1} \tilde{F}_t$. The limit in $T$ exists, and Proposition \ref{prop:denominator} gives $a_N^2 \Gamma_{zp} = O_{\p}(1)$ together with $[a_N^2 \Gamma_{zp}]^{-1} = O_{\p}(1)$ whenever $\phi_d \neq \phi_s$, so the denominator is bounded above and away from zero at the rate $a_N^{-2}$. The studentization satisfies $\Gamma_{zp}/\sqrt{V_{z \bar{v}_H}(S, \Lambda)} = O_{\p}(N^{-\delta})$, so the standardized statistic vanishes at the rate $\sqrt{T}/N^{\delta}$.

    If $\mu > 2$ and $N/T \to c > 0$, then $\hat{\phi}_s$ is inconsistent and identification is weak in the sense of \citet{Staiger1997InstrumentalInstruments}. No choice of aggregation vector restores it.
\end{Theorem}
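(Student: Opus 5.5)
The argument combines three earlier results: the known-factor Theorem~\ref{theorem:weak_consistency_disaggregate}, Proposition~\ref{prop:first stage effect} applied at $Q_t = v_{Ht}$, and the equal-weights bias bound of Section~\ref{subsec:equal weights}. All statements are conditional on $(S,\Lambda)$ and use $a_N = N^{\delta}$. I start from the decomposition
\begin{equation*}
    \hat\phi_s - \phi_s = \frac{\sum_t \hat z_t v_{Et}}{\sum_t \hat z_t p_t}
    = \frac{\sum_t \hat z_t v_{Ht}}{\sum_t \hat z_t p_t} + \frac{\sum_t \hat z_t (v_{Et} - v_{Ht})}{\sum_t \hat z_t p_t},
\end{equation*}
where $v_{Et} - v_{Ht} = (e_N/N - H)'(\Lambda F_t + u_t)$. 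I multiply the numerator by $a_N/\sqrt{T}$ and the denominator by $a_N^2/T$.

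\textbf{Denominator.} By the second part of Proposition~\ref{prop:first stage effect}, $\frac{a_N^2}{T}\sum_t \hat z_t p_t = \frac{a_N^2}{T}\sum_t z_t p_t + o_{\p}(1)$. The known-factor argument and Proposition~\ref{prop:denominator} then give convergence to $a_N^2\Gamma_{zp}$, which is bounded above and away from zero. This step needs $N/T\to 0$, which implies $a_N^2/T\to0$.

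\textbf{Optimal-aggregation numerator.} The multiplier $v_{Ht}$ satisfies $\e[\bar u_{jt} v_{Ht}\mid\Lambda]=0$, since $H'\Omega D_N=0$ and $u_t\perp F_t$. It also has the required $4+2\pi$ moments by Assumption~\ref{ass:mixing for stationary time series}. Proposition~\ref{prop:first stage effect} therefore replaces $\frac{a_N}{\sqrt T}\sum_t \hat z_t v_{Ht}$ with $\frac{a_N}{\sqrt T}\sum_t z_t \bar v_{Ht}$. This is a score with mean zero, because $\e[z_t \tilde F_t']=0$. The CLT of Corollary~\ref{corollary:central limit theorem}, rerun with the multiplier $\bar v_{Ht}$ in place of $v_{Ht}$, normalizes it by $a_N\sqrt{V_{z\bar v_H}}$. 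Non-degeneracy comes from the bound on the recentred long-run variance in Assumption~\ref{ass:mixing for stationary time series}. Slutsky's theorem then gives the stated normal limit. The rate statement $\Gamma_{zp}/\sqrt{V_{z\bar v_H}} = O_{\p}(N^{-\delta})$ follows because $a_N^2\Gamma_{zp}$ and $a_N^2 V$ are both of exact order one.

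\textbf{Weight-substitution term (the main obstacle).} I need $\frac{a_N}{\sqrt T}\sum_t \hat z_t (v_{Et}-v_{Ht}) = o_{\p}(1)$ after studentization. I split this sum into three pieces.
\begin{itemize}
    \item \emph{Mean piece.} Its conditional mean is $\sqrt T a_N b_N$, with $b_N = S'D_N\Omega(e_N/N - H) = S'D_N\Omega e_N/N$ because $D_N\Omega H = 0$. Lemma~\ref{lemma:hajek bound} gives $b_N = O_{\p}(a_N^{-1}/N)$, so this piece is $O_{\p}(\sqrt T/N)\to0$.
    \item \emph{Fluctuation piece.} The centred part $z_t u_{(E-H)t} - b_N$ has long-run variance of order $a_N^{-2}\|e_N/N - H\|^2$. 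This is $o(a_N^{-2})$ because $\|H\|^2 \lesssim 1/N$ under Assumption~\ref{ass:bounded eigenvalues}.
    \item \emph{Factor piece.} The term $(e_N/N-H)'\Lambda F_t$ is handled by the recentring in Proposition~\ref{prop:first stage effect}, applied to $Q_t = \lambda_{E-H}'F_t$. Its residual $z_t F_{1t}$ and the loading difference are controlled by Proposition~\ref{prop:behavior of S times lambda}.
\end{itemize}
The difficulty is that $v_{Et}$ does not satisfy the exact orthogonality hypothesis of Proposition~\ref{prop:first stage effect}. I therefore have to check directly that the first-stage error $S'(\hat C_t-\tilde C_t)$, interacted with $u_{Et}$, contributes only $O_{\p}(\sqrt T/N + 1/\delta_{NT})$. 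I would do this by reusing the influence expansion \eqref{eq:influence function of estimator} term by term.

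\textbf{Weak case.} When $\mu>2$ and $N/T\to c>0$, the rescaled denominator $\frac{N}{T}\sum_t \hat z_t p_t$ has a non-vanishing mean-zero fluctuation of order $\sqrt{N/T}$. This fluctuation is comparable to the signal, so the ratio converges to a non-degenerate random limit of Staiger--Stock type, as in Theorem~\ref{theorem:weak_consistency_disaggregate}, and $\hat\phi_s$ is inconsistent. No aggregation vector helps, because any choice of $H$ changes only the numerator, while the weakness sits in $z_t$ and $p_t$, which do not depend on the aggregation.
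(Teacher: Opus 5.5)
Your outline is the paper's own proof: Proposition~\ref{prop:first stage effect} at $Q_t=v_{Ht}$ for the instrument error, and Lemma~\ref{lemma:hajek bound} for the drift $\sqrt{T}\,a_N b_N=O_{\p}(\sqrt{T}/N)$. It also uses a Davydov bound for the centred weight-substitution term, the influence expansion for the interaction, and the $H$-free denominator for the failure clause.

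Two justifications need correcting. First, the factor piece. Write $a=e_N/N-H$. Since $e_N'H=1$, you have $e_N'a=0$, so $a'\Lambda F_t=\lambda_a'\tilde F_t$ with $\lambda_a=\tilde\Lambda'a$, and there is no $z_tF_{1t}$ residual at all. Proposition~\ref{prop:behavior of S times lambda} would not control such a term anyway, because it concerns $S'\tilde\Lambda$, not $\tilde\Lambda'a$. What your routing actually delivers is an exact cancellation. At $Q_t=\lambda_a'\tilde F_t$ one has $\Delta_{\tilde F Q}=\lambda_a'\Sigma_{\tilde F}\Sigma_{\tilde F}^{-1}=\lambda_a'$, so $\bar Q_t\equiv 0$, and the whole factor piece, first-stage error included, is $o_{\p}(1)$. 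This is slightly tidier than the paper, which keeps $\tilde F_t'\lambda_a$ inside both $A_{NT}$ and $B_{NT}$ and removes it using $\|\lambda_a\|=O_{\p}(N^{-1/2})$. It is the same cancellation the paper uses in the proof of Proposition~\ref{prop:consistency of HAC_disaggregate}. Second, for $\mu>1$ the numerator CLT is Theorem~\ref{theorem:central limit theorem} with the $N^{\delta}$ rescaling. Corollary~\ref{corollary:central limit theorem} covers only $\mu\in(0,1)$.

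The step you leave open is $\frac{a_N}{\sqrt T}\sum_t S'(\hat C_t-\tilde C_t)\,u_t'a$. The multiplier here is $u_t'a=u_{Et}-u_{Ht}$, not $u_{Et}$. This is the paper's Step~5, and it closes with Lemma~\ref{lemma:numerator}(i); part (ii) is unavailable because $\e[\Bar u_{jt}\,u_t'a]=(D_N\Omega a)_j\neq 0$.
\begin{itemize}
\item The factor-space piece carries $\frac1T\sum_t\tilde F_t u_t'a=O_{\p}((NT)^{-1/2})$.
\item In the loading-space piece, the drift of $\zeta_t u_t'a$ is $\frac1N\tilde\Lambda' D_N\Omega a$. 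Its norm is at most $N^{-1}\|\tilde\Lambda\|\,K\|a\|=O(N^{-1})$, using $\|\tilde\Lambda\|=O(\sqrt N)$, Assumption~\ref{ass:bounded eigenvalues}, and $\|a\|=O(N^{-1/2})$. It therefore contributes $O(\sqrt T/N)$.
\item The fluctuation in that piece is $O_{\p}(N^{-1})$ by Rosenthal and Davydov, since $\zeta_t$ and $u_t'a$ are each $O(N^{-1/2})$ in $L^{q}$.
\end{itemize}
With the remainders of part (i), the interaction is $O_{\p}(\sqrt T/N)+O_{\p}(T^{-1/2})$, the order you predicted.

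Finally, in the failure clause, $\hat\phi_s$ is inconsistent because the numerator $\frac{N}{T}\sum_t\hat z_t v_{Et}=\sqrt{N/T}\cdot O_{\p}(1)$ stops vanishing. A fluctuating denominator on its own would not produce inconsistency.
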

\begin{proof}
    Proof in Appendix \ref{subsec:proof of estimated_disaggregate}. The proof is shared with Theorem \ref{theorem:estimated_strong_consistency_disaggregate}, the two regimes differing only through the strength scale $a_N$.
\end{proof}

\begin{remark}[Three regimes, estimated factors] \label{rem:three_regimes_estimated}
The trichotomy of Remark~\ref{rem:three_regimes} carries over to the estimated-factor estimators of both parameters. Theorems~\ref{theorem:estimated_strong_consistency_aggregate} and~\ref{theorem:estimated_strong_consistency_disaggregate} cover the strong regime, Theorems~\ref{theorem:estimated_weak_consistency_aggregate} and~\ref{theorem:estimated_weak_consistency_disaggregate} cover the nearly weak regime under $N/T \to 0$, and their failure-mode clauses state the weak identification regime under $N/T \to c$ with $\mu > 2$. The additional growth restriction $\sqrt{T}/N \to 0$ from estimating the factor structure applies uniformly across the three regimes, and it is the only condition the strong regime needs.
\end{remark}

I now move to inference across all regimes. 

\subsection{Inference} \label{subsec:inference}

Inference needs a consistent estimator of the asymptotic variance of each of the two estimators. I propose Heteroskedasticity and Auto-Correlation (HAC) consistent estimators, written in the regime-specific rescaling $a_N$ of \eqref{eq:a_N}, and take the demand elasticity first. Define
\begin{equation} \label{eq:HAC estimator_aggregate_variance}
    \hat{V}_{{z} \bar{\varepsilon}} = \frac{a_N^2}{T} \sum_{t=1}^T \hat{z}_t^2 \tilde{\varepsilon}_t^2 + \frac{2 a_N^2}{T} \sum_{s=1}^{b_T} w(\frac{s}{b_T}) \sum_{t = s+1}^T \hat{z}_t \hat{z}_s \tilde{\varepsilon}_t \tilde{\varepsilon}_s
\end{equation}

where $b_T$ is the bandwidth and $w(x)$ is a kernel function, $w: \mathbb{R}^+ \to [0,1]$ such that $w(x) = 0$ for $x>1$ and $w(0) = 1$. $\tilde{\varepsilon}_t = \hat{\varepsilon}_t - \frac{1}{T} \sum_t \hat{F}'_t  \hat{\varepsilon}_t \cdot \hat{\Sigma}_{\tilde{F}}^{-1} \cdot  \hat{F}_t $, where $\hat{\varepsilon}_t = d_t - \hat{\phi}_d p_{t}$ and $\hat{\Sigma}_{\tilde{F}} = \left[ \frac{\hat{F}' \hat{F}}{T} \right]$.

The estimator carries the rescaling and the variance it estimates does not. $V_{z \bar{\varepsilon}}(S)$ is the un-rescaled long-run variance of the theorem statements, so it is $O_{\p}(a_N^{-2})$, while $\hat{V}_{{z} \bar{\varepsilon}} = O_{\p}(1)$. The proposition below therefore compares $\hat{V}_{{z} \bar{\varepsilon}}$ with $a_N^2 V_{z \bar{\varepsilon}}(S)$, both of which are $O_{\p}(1)$.

The rescaling cancels wherever the estimator is used. Every use is a ratio with the instrument entering twice above and twice below. Wald inference divides
\begin{equation*}
    a_N^{-2} \hat{V}_{z \bar{\varepsilon}} = \frac{1}{T} \sum_{t=1}^T \hat{z}_t^2 \tilde{\varepsilon}_t^2 + \frac{2}{T} \sum_{s=1}^{b_T} w(\tfrac{s}{b_T}) \sum_{t = s+1}^T \hat{z}_t \hat{z}_s \tilde{\varepsilon}_t \tilde{\varepsilon}_s
\end{equation*}
by $\hat{\Gamma}_{zp}^2$, where $\hat{\Gamma}_{zp} = \frac{1}{T} \sum_t \hat{z}_t p_t$. The Anderson--Rubin statistic \eqref{eq:AR statistic_aggregate} divides $\frac{a_N^2}{T} ( \sum_t \hat{z}_t \varepsilon_t )^2$ by $\hat{V}_{z \bar{\varepsilon}}$, and there the two factors of $a_N^2$ cancel against each other.

\begin{Prop} \label{prop:consistency of HAC_aggregate}
    Suppose Assumptions \ref{ass:weak stationarity and idiosyncrasy} to \ref{ass:LN_moments and CLT} hold and $\frac{\sqrt{T}}{N} \xrightarrow{} 0$, and suppose in addition $\frac{N}{T} \xrightarrow{} 0$ when $\mu > 1$. Then conditional on $\s$,
    \begin{equation*}
        \hat{V}_{{z} \bar{\varepsilon}} - a_N^2 V_{z \bar{\varepsilon}}(S) \xrightarrow{p} 0.
    \end{equation*}
\end{Prop}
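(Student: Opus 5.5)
The plan is the standard HAC consistency argument of Newey--West/Andrews, run conditional on $S$ and carried out in the rescaled units $a_N$. First replace every estimated quantity by its population counterpart, and then apply an infeasible-HAC consistency result to the rescaled score $\xi_t \defeq a_N z_t \bar{\varepsilon}_t$.

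\textbf{Step 1: decompose.} Write $\hat z_t = z_t - S'(\hat C_t - \tilde C_t)$ and $\hat\varepsilon_t = \varepsilon_t - (\hat\phi_d - \phi_d)p_t$. Write the projected residual as
\begin{equation*}
\tilde\varepsilon_t = \bar\varepsilon_t + R_{1t} + R_{2t} + R_{3t},
\end{equation*}
where the three remainders are:
\begin{itemize}
\item $R_{1t}$ collects the error from $\hat\phi_d$;
\item $R_{2t}$ collects the error from using $T^{-1}\sum_t \hat F_t'\hat\varepsilon_t\,\hat\Sigma_{\tilde F}^{-1}$ in place of $\Delta_{\tilde F\varepsilon}$;
\item $R_{3t}$ collects the error from $\hat F_t$ versus $H'\tilde F_t$, where $H$ is the Bai (2003) rotation.
\end{itemize}
The rotation cancels inside the projection, since $\hat F_t(\hat F'\hat F)^{-1}\hat F'$ is rotation invariant up to $O_\p(\delta_{NT}^{-1})$.

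Then $a_N^2\hat z_t\hat z_s\tilde\varepsilon_t\tilde\varepsilon_s = \xi_t\xi_s + (\text{cross terms})$. Every cross term contains at least one factor of one of the following:
\begin{itemize}
\item $a_N S'(\hat C_t-\tilde C_t)$, which is $O_\p(a_N\,\|S\|\,\delta_{NT}^{-1})$ uniformly in mean square by \eqref{eq:influence function of estimator} and Proposition~\ref{prop:behavior of S times lambda};
\item $\hat\phi_d-\phi_d = O_\p(a_N/\sqrt T)$ by Theorems~\ref{theorem:estimated_strong_consistency_aggregate} and \ref{theorem:estimated_weak_consistency_aggregate};
\item $\hat F_t - H'\tilde F_t$, which is $O_\p(\delta_{NT}^{-1})$ in mean square.
\end{itemize}

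\textbf{Step 2: bound the cross terms.} Each cross term is weighted by $w(s/b_T)$ and summed over at most $b_T$ lags. The total is therefore bounded by $b_T$ times a Cauchy--Schwarz bound of the type $\big(T^{-1}\sum_t \xi_t^2\big)^{1/2}\big(T^{-1}\sum_t a_N^2(\text{remainder})_t^2\big)^{1/2}$. This gives orders of the form
\begin{itemize}
\item $b_T\,a_N/\sqrt T$ for the $\hat\phi_d$ term, using $a_N p_t = O_\p(1)$ up to the factor part;
\item $b_T\,a_N\|S\|/\delta_{NT}$ for the first-stage term.
\end{itemize}
In the strong regime $a_N=1$ and $\|S\|=O_\p(1)$, so the first-stage term is $b_T/\min(\sqrt N,\sqrt T)$. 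In the nearly weak regime $a_N\|S\| = O_\p(1)$ by Proposition~\ref{prop:weakness of known factor instrument}, and $a_N^2/T\le N/T\to0$. With the usual bandwidth restriction (I would impose $b_T\to\infty$ with $b_T/\delta_{NT}\to0$), all cross terms vanish. This step uses both $N/T\to0$ for $\mu>1$ and $\sqrt T/N\to0$, which is why they appear in the statement.

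\textbf{Step 3: infeasible HAC.} Show that $T^{-1}\sum_{|s|\le b_T} w(s/b_T)\sum_t \xi_t\xi_{t-s} - a_N^2V_{z\bar\varepsilon}(S)\to_p0$ conditional on $S$. This follows from the Andrews (1991) argument in two parts:
\begin{itemize}
\item \emph{Bias.} The bias is $\sum_s (1-w(s/b_T))\gamma_\xi(s)\to0$. Summability of the autocovariances $\gamma_\xi(s)$ comes from the mixing of size $-(2+\pi)/\pi$ (Assumption~\ref{ass:mixing for stationary time series}) via Davydov's inequality.
\item \emph{Variance.} The variance is $O(b_T/T)$. This needs fourth-order cumulant summability of $\xi_t$, hence $\e|\xi_t|^{4+\pi}$ bounded uniformly in $N$.
\end{itemize}

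\textbf{Main obstacle.} The hardest step is the uniform-in-$N$ moment bound on $\xi_t = a_N S'\bar u_t\,\bar\varepsilon_t$. I would obtain it from the Rosenthal bound in Assumption~\ref{ass:LN_time and cross sectional dependence}(5) with $a_j = a_N S_j$, which gives
\begin{equation*}
\e|a_N z_t|^{q} \lesssim (a_N^2 S'S)^{q/2} + a_N^q\sum_j S_j^q.
\end{equation*}
Both terms are $O_\p(1)$ conditional on $S$, because $a_N^2S'S = O_\p(1)$ and $\sum_j S_j^q\le (S'S)^{q/2}$. Hölder's inequality with $\e|\varepsilon_t|^{8+2\pi}$ and $\e\|F_t\|^{8+2\pi}$ then bounds the moments of $\xi_t$. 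The remaining delicacy is that the mixing constants must not depend on $N$. Since $\xi_t$ is a measurable function of $(F_t,u_t,\varepsilon_t)$ at time $t$ only, it inherits the $\alpha$-mixing coefficients of that sequence, so this holds.
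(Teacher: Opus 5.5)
Your skeleton is the paper's: substitute the estimated pieces, then prove consistency of the infeasible kernel estimator for $\xi_t=a_N z_t\bar\varepsilon_t$ conditional on $S$. The paper makes the substitutions pointwise ($\hat z_t=z_t+o_{p}(a_N^{-1})$, $\tilde\varepsilon_t=\bar\varepsilon_t+o_{p}(1)$) and then invokes Proposition 4.1 of \citet{Kojevnikov2021LimitVariables}. You instead track the plug-in errors lag by lag and use Andrews (1991). Tracking errors across lags is more careful than the paper's pointwise substitution. As written, however, two of your steps fail.

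\emph{Step 2.} Your own bound for the $\hat\phi_d$ cross term is $O_{p}(b_T a_N/\sqrt T)$. The restriction $b_T/\delta_{NT}\to0$ does not make this vanish in the nearly weak regime. For $\mu>2$ you need $b_T\sqrt{N/T}\to0$. Because $\sqrt T/N\to0$ forces $\sqrt{T/N}=o(T^{1/4})=o(\sqrt N)$, this condition, not yours, is the binding one. For example, with $N=T^{3/4}$ and $b_T=T^{0.3}$ you have $b_T/\delta_{NT}\to0$, yet $b_T a_N/\sqrt T=T^{0.175}\to\infty$. What your bound actually requires is $b_T\to\infty$ with $b_T=o(\min\{\delta_{NT},\sqrt T/a_N\})$. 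Any such rate is an extra hypothesis relative to the statement. It also excludes the $\lfloor 1.3\sqrt T\rfloor$ rule that the paper's proof appeals to and the application uses, and it does so already in the strong regime. The rate is the price of multiplying a per-lag Cauchy--Schwarz bound by $b_T$. Covering $b_T\propto\sqrt T$ would need a sharper treatment of the cross terms.

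\emph{Step 3.} Andrews's bias--variance argument needs the fourth-order cumulants of $\xi_t$ to be absolutely summable, uniformly in $N$. Under $\alpha$-mixing, the standard route to this (Andrews 1991, Lemma 1) requires both $\sup_t\e|\xi_t|^{4+\nu}<\infty$ and $\sum_j j^2\alpha(j)^{\nu/(4+\nu)}<\infty$. Your Cauchy--Schwarz step gives only $\nu\le\pi$. Assumption~\ref{ass:mixing for stationary time series} gives only $\sum_h\alpha(h)^{\pi/(2+\pi)}<\infty$. With $\alpha(h)\asymp h^{-(2+\pi)/\pi-\epsilon}$ and $\epsilon$ small, the paper's condition holds while Andrews's fails. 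Your last paragraph establishes a uniform moment bound and inheritance of the mixing coefficients, and that is not enough to deliver the $O(b_T/T)$ variance bound. You need either a faster mixing rate or an infeasible-HAC result that works under the paper's conditions. The paper takes the second route: it checks the hypotheses of the Kojevnikov--Marmer--Song proposition through the Condition ND already verified in the proof of Theorem~\ref{theorem:central limit theorem}.
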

\begin{proof}
    See Appendix \ref{subsec:proof or HAC_aggregate}
\end{proof}

The same construction gives the variance of the supply elasticity, with the equal-weights residuals in place of the demand residuals. Let $\hat{v}_{Et} = y_{Et} - \hat{\phi}_s p_t$ and let $\tilde{v}_{Et} = \hat{v}_{Et} - \frac{1}{T} \sum_t \hat{F}'_t \hat{v}_{Et} \cdot \hat{\Sigma}_{\tilde{F}}^{-1} \cdot \hat{F}_t$ be its factor-recentered version, with $\hat{\Sigma}_{\tilde{F}} = [\hat{F}' \hat{F}/T]$ as above. The estimator is
\begin{equation} \label{eq:HAC estimator_disaggregated_variance}
    \hat{V}_{z v_H} = \frac{a_N^2}{T} \sum_{t=1}^T \hat{z}_t^2 \tilde{v}_{Et}^2 + \frac{2 a_N^2}{T} \sum_{s=1}^{b_T} w(\frac{s}{b_T}) \sum_{t = s+1}^T \hat{z}_t \hat{z}_s \tilde{v}_{Et} \tilde{v}_{Es}
\end{equation}
with the same bandwidth and kernel as in \eqref{eq:HAC estimator_aggregate_variance}. Every ingredient is observable.

The residuals are built with equal weights, and the variance they have to estimate is $V_{z \bar{v}_H}(S, \Lambda)$, the one carried by the optimal aggregator, $H$. The two residuals differ by $v_{Et} - v_{Ht}$, which is $O_{\p}(N^{-1/2})$, so the substitution is again error-free in the limit.

\begin{Prop} \label{prop:consistency of HAC_disaggregate}
    Suppose Assumptions \ref{ass:weak stationarity and idiosyncrasy} to \ref{ass:LN_moments and CLT} hold and $\frac{\sqrt{T}}{N} \xrightarrow{} 0$, and suppose in addition $\frac{N}{T} \xrightarrow{} 0$ when $\mu > 1$. Then conditional on $(S, \Lambda)$,
    \begin{equation*}
        \hat{V}_{z v_H} - a_N^2 V_{z \bar{v}_H}(S, \Lambda) \xrightarrow{p} 0.
    \end{equation*}
\end{Prop}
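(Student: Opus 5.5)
The plan mirrors the argument behind Proposition~\ref{prop:consistency of HAC_aggregate}. I first replace the feasible ingredients $(\hat z_t, \tilde v_{Et})$ by infeasible ones $(z_t, \bar v_{Ht})$, and then apply a standard HAC consistency argument to the infeasible estimator. Concretely, write $\hat V_{zv_H} = V^{\ast} + R$. Here $V^{\ast}$ is the same kernel-weighted sum computed from $a_N z_t$ and $\bar v_{Ht} = v_{Ht} - \Delta_{\tilde F v_H}\tilde F_t$, and $R$ collects all substitution errors. I would show $V^{\ast} - a_N^2 V_{z\bar v_H}(S,\Lambda) \xrightarrow{p} 0$ and $R \xrightarrow{p} 0$, conditional on $(S,\Lambda)$.

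\textbf{The infeasible term.} Conditional on $(S,\Lambda)$, the product $\xi_t = a_N z_t \bar v_{Ht}$ is a weakly stationary strong-mixing array. Its mixing size is inherited from Assumption~\ref{ass:mixing for stationary time series}, since it is a measurable function of $(F_t,u_t,\varepsilon_t)$. For moments, the Rosenthal bound of Assumption~\ref{ass:LN_time and cross sectional dependence}.5, applied with $a_j = a_N S_j$ (demeaned), gives $\e|a_N z_t|^{q}$ uniformly bounded. Here I use $a_N^2 S'S = O_{\p}(1)$ from Proposition~\ref{prop:weakness of known factor instrument}, together with the fact that $a_N^q\sum_j S_j^q \le (a_N^2 S'S)^{q/2}$. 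Combined with the $(8+2\pi)$-th moments of $F_t$ and $u_{Ht}$, Hölder then gives $\sup\e|\xi_t|^{4+\pi}<\infty$. The moment condition $\e[z_t v_{Ht}\mid S,\Lambda]=0$ makes $\xi_t$ centred, because $\e[z_t\tilde F_t]=0$ by idiosyncrasy. The Newey--West/Andrews argument for mixing arrays then yields $V^{\ast} - a_N^2 V_{z\bar v_H} \to_p 0$ provided $b_T\to\infty$ and $b_T/T^{1/2}\to 0$, which I impose as the bandwidth rate. Uniformity in $N$ comes from the uniform moment bounds and Assumption~\ref{ass:LN_time and cross sectional dependence}.6.

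\textbf{Substitution errors.} I expand $\tilde v_{Et} - \bar v_{Ht}$ into four pieces.
\begin{enumerate}
\item[(a)] $v_{Et}-v_{Ht} = (e_N/N - H)'(\Lambda F_t + u_t)$. Its loading part is $O(N^{-1/2})$ from the LLN for loadings, and its idiosyncratic part is $O_{\p}(N^{-1/2})$ from Assumption~\ref{ass:bounded eigenvalues}.
\item[(b)] $-(\hat\phi_s-\phi_s)p_t$, of order $a_N/\sqrt T$ by Theorems~\ref{theorem:estimated_strong_consistency_disaggregate} and~\ref{theorem:estimated_weak_consistency_disaggregate}.
\item[(c)] The error in the projection coefficient, $T^{-1}\sum\hat F_t'\hat v_{Et}\hat\Sigma^{-1} - \Delta_{\tilde F v_H}$, which is $O_{\p}(\delta_{NT}^{-1} + T^{-1/2} + N^{-1/2})$.
\item[(d)] $\hat F_t - H_0\tilde F_t$, with the rotation $H_0$ absorbed since the projection is rotation invariant. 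This is $O_{\p}(\delta_{NT}^{-1})$ by the factor appendix.
\end{enumerate}
Likewise, $a_N(\hat z_t - z_t) = -a_N S'(\hat C_t - \tilde C_t)$, which by \eqref{eq:influence function of estimator} is $O_{\p}(a_N T^{-1/2} + a_N S'\tilde\Lambda N^{-1/2}+a_N\delta_{NT}^{-2})$. I then bound each cross term in $R$ by Cauchy--Schwarz, using the factor $\sum_{s\le b_T} w(\cdot) \le b_T$. Each term has the form $b_T\cdot(\text{error})\cdot O_{\p}(1)$. So I need $b_T a_N/\sqrt T\to0$, which holds when $a_N^2/T\to0$; this is automatic in the strong regime, and follows from $N/T\to 0$ when $\mu>1$, since $a_N\le\sqrt N$. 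I also need $b_T/\delta_{NT}\to 0$, which holds given $\sqrt T/N\to0$.

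\textbf{Main obstacle.} The delicate part is the $z$-error together with piece (a). Here $a_N$ multiplies the error, so the crude bound $a_N\delta_{NT}^{-1}$ is not $o(1)$ when $a_N=\sqrt N$. I would not bound $a_N(\hat z_t-z_t)$ pointwise. Instead I would use the exact influence expansion, noting that its leading terms are linear in $\bar u$ with share weights. The term $a_N T^{-1}\sum_m \tilde F_m S'\bar u_m$ is $O_{\p}(T^{-1/2})$ because $a_N S'\bar u_m=O_{\p}(1)$. The term $a_N S'\tilde\Lambda(\tilde\Lambda'\tilde\Lambda/N)^{-1}N^{-1}\sum_j\tilde\lambda_j\bar u_{jt}$ is $O_{\p}(a_N N^{-1/2}\,S'\tilde\Lambda)$, and Proposition~\ref{prop:behavior of S times lambda} gives $S'\tilde\Lambda = O_{\p}(a_N^{-1})$ because the loadings have zero mean. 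The remainder $a_N\delta_{NT}^{-2}$ is $o(1)$ under $\sqrt T/N\to 0$ and $N/T\to0$. If the bandwidth factor $b_T$ still spoils any of these rates, I would instead write the kernel sum as a quadratic form and apply a Toeplitz norm bound of order $b_T$ to the remainder only. This keeps the leading corrections, which are of mean zero, at their $O_{\p}(T^{-1/2})$ order.
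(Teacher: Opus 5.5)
Your overall architecture matches the paper's: replace $(\hat z_t,\tilde v_{Et})$ by $(z_t,\bar v_{Ht})$, then apply a HAC consistency result to the infeasible estimator. You use an Andrews-type argument with $b_T=o(\sqrt T)$, where the paper uses Proposition 4.1 of Kojevnikov and Song; that difference is harmless.

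The gap is in how you control the substitution errors. You bound every cross term in $R$ by $b_T$ times a pointwise error, and then assert that the required rates follow from the hypotheses. They do not.
\begin{itemize}
\item Your claim that $b_T a_N/\sqrt T\to0$ ``holds when $a_N^2/T\to0$'' is false unless $b_T$ stays bounded. At $a_N=\sqrt N$ it needs $b_T=o(\sqrt{T/N})$.
\item $b_T/\delta_{NT}\to0$ does not follow from $\sqrt T/N\to0$. Take $T=N^{1.8}$ and $b_T=T^{0.45}$. Then $\sqrt T/N\to0$, $N/T\to0$ and $b_T=o(\sqrt T)$, yet $b_T/\sqrt N=N^{0.31}\to\infty$, and at $a_N=\sqrt N$ also $b_T a_N/\sqrt T=N^{0.41}\to\infty$.
\end{itemize}
So what you actually prove is the proposition under an extra bandwidth restriction of the form $b_T=o(\min\{\sqrt N,\sqrt T/a_N\})$, which the statement does not impose. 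Your Toeplitz fallback does not close this. The operator norm of the kernel matrix is itself of order $b_T$. Your piece (a) is an $O_{\p}(N^{-1/2})$ error, not one of the mean-zero $O_{\p}(T^{-1/2})$ corrections you have in mind, so it still produces $b_T N^{-1/2}$.

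The missing idea concerns the one term with no aggregate counterpart. The paper combines an exact cancellation with a bound that does not accumulate over lags.
\begin{itemize}
\item \emph{Cancellation.} Since $\e[\tilde F_t'(v_{Et}-v_{Ht})\mid\Lambda]\Sigma_{\tilde F}^{-1}=\lambda_a'$, the factor part $\tilde F_t'\lambda_a$ of your piece (a) is removed exactly by the $N^{-1/2}$ component of your piece (c). This leaves $\bar v_{Et}-\bar v_{Ht}=u_t'a$ with $a=e_N/N-H$.
\item \emph{Bandwidth-free variance bound.} The paper never bounds $a_N z_t u_t'a$ pointwise. It bounds the long-run variance of $a_N z_t u_t'a$ by Davydov's inequality. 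The lag sum converges because $|w|\le 1$ and $\sum_h\alpha(h)^{\pi/(2+\pi)}<\infty$, giving $O_{\p}(N^{-1})$ with no factor of $b_T$. The drift $a_N b_N$ is negligible by Lemma~\ref{lemma:hajek bound}.
\item \emph{Cauchy--Schwarz.} In the long-run-variance metric, the perturbation of $\hat V_{zv_H}$ is then at most $2V_{z(\bar v_E-\bar v_H)}^{1/2}(a_N^2V_{z\bar v_H})^{1/2}+V_{z(\bar v_E-\bar v_H)}=O_{\p}(N^{-1/2})$.
\end{itemize}
The same covariance-summation device, rather than $b_T$ times a supremum, is what your pieces (b)--(d) need as well. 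Each is a random coefficient of order $a_N/\sqrt T$ or $\delta_{NT}^{-1}$ multiplying a kernel cross-sum of stationary mixing series. Such a cross-sum is $O_{\p}(1)$ uniformly in $b_T$ whenever one of the two series has conditional mean zero, as $a_N z_t\bar v_{Ht}$ does. The squared-coefficient terms carry two small factors and are harmless.
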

\begin{proof}
    See Appendix \ref{subsec:proof of HAC_disaggregate}
\end{proof}

As per the theorems and propositions above, Wald inference is valid when $\mu <1$. But when $\mu >1$, I recommend limiting Wald inference to the range, $\mu \in (1,2)$. For $\mu >2$, I recommend Anderson-Rubin inference.

The choice between Wald and Anderson--Rubin inference depends on the behavior of the concentration parameter. The concentration parameter $\kappa^2_{\text{conc}}$ measures the signal-to-noise ratio of the first-stage moment \citep{Stock2002AMoments}. For the estimated-factor GIV estimator,
\begin{equation*}
    \kappa^2_{\text{conc}} \;=\; \frac{T \, \Gamma_{zp}^2}{V_{z \bar{\varepsilon}}(S)},
\end{equation*}
and Theorem~\ref{theorem:estimated_weak_consistency_aggregate} reads $\kappa_{\text{conc}}(\hat{\phi}_d - \phi_d) \xrightarrow{d} \n(0,1)$. The Gaussian approximation is reliable in finite samples when $\kappa^2_{\text{conc}}$ is large. The first-stage $F$-statistic is its sample analog.

The supply elasticity shares the first stage, since $\hat{\phi}_s$ has the same denominator $\sum_t \hat{z}_t p_t$. Only the variance of the moment changes,
\begin{equation*}
    \kappa^2_{\text{conc},H} \;=\; \frac{T \, \Gamma_{zp}^2}{V_{z \bar{v}_H}(S, \Lambda)},
\end{equation*}
and Theorem~\ref{theorem:estimated_weak_consistency_disaggregate} gives $\kappa_{\text{conc},H}(\hat{\phi}_s - \phi_s) \xrightarrow{d} \n(0,1)$. Both $V_{z \bar{\varepsilon}}(S)$ and $V_{z \bar{v}_H}(S, \Lambda)$ are $O_{\p}(a_N^{-2})$, so the two concentration parameters are of the same order. Everything that follows applies to $\phi_s$ with $V_{z \bar{\varepsilon}}(S)$ replaced by $V_{z \bar{v}_H}(S, \Lambda)$.

Since $\Gamma_{zp}$ and $V_{z \bar{\varepsilon}}(S)$ are both of order $a_N^{-2}$, the concentration parameter is $\kappa^2_{\text{conc}} = T/a_N^2$ in every regime. The order is two-sided. The upper bound $a_N^2 V_{z \bar{\varepsilon}}(S) = O_{\p}(1)$ is established in Appendix~\ref{subsec:proof of HAC_disaggregate}, and the lower bound $[a_N^2 V_{z \bar{\varepsilon}}(S)]^{-1} = O_{\p}(1)$ is part of Assumption~\ref{ass:mixing for stationary time series}. The corresponding bounds for $\Gamma_{zp}$ are Proposition~\ref{prop:denominator}. Parametrize any admissible sequence as $T = N \cdot f(N)$ with $f(N) \to \infty$. Reading $a_N$ off \eqref{eq:a_N},
\begin{equation*}
    \kappa^2_{\text{conc}} \;=\; \begin{cases}
        N \cdot f(N) & \mu \in (0,1), \\
        N^{1-2\delta} \cdot f(N) & \mu \in (1,2), \\
        f(N) & \mu > 2
    \end{cases}
\end{equation*}
For $\mu < 2$, the concentration parameter has a polynomial floor, $N$ in the strong regime and $N^{1-2\delta}$ with $1 - 2\delta > 0$ in the nearly weak regime. For $\mu > 2$ the floor is just $f(N)$, which may grow arbitrarily slowly. At the boundary $\mu = 2$, the polynomial floor disappears. The same dichotomy holds for $\kappa^2_{\text{conc},H}$, so the choice of procedure below is the same for both parameters.

\paragraph{Wald inference for $\mu < 2$.} The polynomial floor in $\kappa^2_{\text{conc}}$ keeps Wald inference reliable in the strong and the nearly weak regime alike. The practitioner does not need to know $\delta$. The asymptotic variance is
\begin{equation*}
    \widehat{\mathrm{Avar}}(\hat{\phi}_d) \;=\; \frac{1}{T} \cdot \frac{a_N^{-2} \hat{V}_{z \bar{\varepsilon}}}{\hat{\Gamma}_{zp}^{2}}, \qquad \hat{\Gamma}_{zp} = \frac{1}{T} \sum_{t=1}^T \hat{z}_t p_t,
\end{equation*}
where $\hat{V}_{z \bar{\varepsilon}}$ is the HAC estimator \eqref{eq:HAC estimator_aggregate_variance} and Proposition~\ref{prop:consistency of HAC_aggregate} gives its consistency. The rescaling cancels as above, so the formula is the usual instrumental-variable sandwich in $\hat{z}_t$ and $a_N$ is never formed. The studentization in Theorem~\ref{theorem:estimated_weak_consistency_aggregate} absorbs the rate for the same reason, so $\delta$ does not enter. This parallels GMM under near-weak identification \citep{Antoine2021GMMIdentification}. Standard Wald confidence intervals remain valid, even though for $\mu > 1$ the rate is slower than in the strong regime.

For the supply elasticity,
\begin{equation*}
    \widehat{\mathrm{Avar}}(\hat{\phi}_s) \;=\; \frac{1}{T} \cdot \frac{a_N^{-2} \hat{V}_{z v_H}}{\hat{\Gamma}_{zp}^{2}},
\end{equation*}
with $\hat{V}_{z v_H}$ from \eqref{eq:HAC estimator_disaggregated_variance} and Proposition~\ref{prop:consistency of HAC_disaggregate}. The denominator is the one above, unchanged, since the two estimators share the first stage. The rescaling cancels exactly as it does for the demand elasticity.

\paragraph{Anderson--Rubin inference for $\mu > 2$.} The concentration parameter of the estimated-factor estimator can grow arbitrarily slowly, so its realized value in finite samples need not be large. The estimator therefore has poor finite-sample performance even though it is asymptotically normal. I recommend inverting the Anderson--Rubin (AR) test of \citet{Anderson1949EstimatorsEquations}, which avoids the dependence on $\kappa^2_{\text{conc}}$ entirely. It requires no condition on the rate at which $\kappa^2_{\text{conc}} \to \infty$.

The Anderson--Rubin test is evaluated at the null, so it never uses the GIV estimate $\hat{\phi}_d$. Two features follow, and together they make the $N/T$ condition irrelevant. First, under $H_0: \phi_d = \phi_d^0$, the residual $d_t - \phi_d^0 p_t = \varepsilon_t$ is exact, so no first-stage estimation of $\phi_d$ enters the variance estimator. Second, the statistic is self-normalizing, and the rate $a_N$ cancels between the numerator and the variance estimator, so the test refers neither to the rate $\delta$ nor to $N/T$. The only condition that survives is $\sqrt{T}/N \to 0$, from estimating the factor structure in $\hat{z}_t$.

Consider the null hypothesis, $H_0: \phi_d = \phi_d^0$. Define the null-imposed residual $\varepsilon_t(\phi_d^0) = d_t - \phi_d^0 p_t$ and its factor-projected version $\tilde{\varepsilon}_t(\phi_d^0) = \varepsilon_t(\phi_d^0) - \frac{1}{T} \sum_t \hat{F}_t' \varepsilon_t(\phi_d^0) \cdot \hat{\Sigma}_{\tilde{F}}^{-1} \hat{F}_t$. Define the Anderson-Rubin statistic \citep{Anderson1949EstimatorsEquations},
\begin{equation} \label{eq:AR statistic_aggregate}
    \text{AR}(\phi_d^0) =  \frac{N}{T} \left( \sum_t \hat{z}_t (y_{St} - \phi_d^0 p_{t}) \right)^2 \frac{1}{\hat{V}_{z \bar{\varepsilon}}(\phi_d^0)}
\end{equation}

where $\hat{V}_{z \bar{\varepsilon}}(\phi_d^0)$ is the HAC estimator defined in \eqref{eq:HAC estimator_aggregate_variance}, specialized to $\mu > 2$ so that $a_N^2 = N$, and built from the null-imposed residual,
\begin{equation*}
    \hat{V}_{{z} \bar{\varepsilon}}(\phi_d^0) = \frac{N}{T} \sum_{t=1}^T \hat{z}_t^2 \tilde{\varepsilon}_t(\phi_d^0)^2 + \frac{2 N}{T} \sum_{s=1}^{b_T} w(\frac{s}{b_T}) \sum_{t = s+1}^T \hat{z}_t \hat{z}_s \tilde{\varepsilon}_t(\phi_d^0) \tilde{\varepsilon}_s(\phi_d^0)
\end{equation*}

The $N/T$ in the numerator cancels the $a_N^2/T = N/T$ in $\hat{V}_{z \bar{\varepsilon}}(\phi_d^0)$, so the statistic is self-normalized and carries no reference to the rate. Rescaled back, $N^{-1} \hat{V}_{z \bar{\varepsilon}}(\phi_d^0)$ is consistent for the conditional asymptotic variance $V_{z \bar{\varepsilon}}(S)$, with $\bar{\varepsilon}_t = \varepsilon_t - \e[\tilde{F}_t' \varepsilon_t] \Sigma_{\tilde{F}}^{-1} \tilde{F}_t$, by the argument of Proposition~\ref{prop:consistency of HAC_aggregate} with the residual reduction now exact. Inversion of this test yields a confidence region of the correct size.

\begin{Prop} \label{prop:weakness robust test_aggregate}
    Suppose Assumptions \ref{ass:weak stationarity and idiosyncrasy} to \ref{ass:LN_moments and CLT} hold with $\mu > 2$ and $\frac{\sqrt{T}}{N} \to 0$. Then, for any trajectory of $N/T$, under the null $H_0: \phi_d = \phi_d^0$,
\begin{equation*}
    \text{AR}(\phi_d^0) \xrightarrow{d} \chi^2_1.
\end{equation*}
\end{Prop}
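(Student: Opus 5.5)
Under $H_0$ we have $y_{St} - \phi_d^0 p_t = \varepsilon_t$ exactly, so the numerator of \eqref{eq:AR statistic_aggregate} is $\frac{\sqrt{N}}{\sqrt{T}}\sum_t \hat z_t \varepsilon_t$ and never involves $p_t$ or the denominator $\Gamma_{zp}$. The plan is to prove the two halves of a self-normalized ratio separately. First, the scaled moment $\frac{a_N}{\sqrt T}\sum_t \hat z_t\varepsilon_t$, with $a_N=\sqrt N$, is asymptotically $\n(0, a_N^2 V_{z\bar\varepsilon}(S))$ conditional on $S$. Second, the null-imposed HAC estimator $\hat V_{z\bar\varepsilon}(\phi_d^0)$ is consistent for $a_N^2 V_{z\bar\varepsilon}(S)$. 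The continuous mapping theorem then gives $\chi^2_1$, because the lower bound $[a_N^2V_{z\bar\varepsilon}(S)]^{-1}=O_\p(1)$ from Assumption~\ref{ass:mixing for stationary time series} keeps the denominator away from zero. No step uses the concentration parameter, so the trajectory of $N/T$ never enters.

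\textbf{Numerator.} Apply Proposition~\ref{prop:first stage effect} with $Q_t=\varepsilon_t$. Its hypotheses hold: $\e[\bar u_{jt}\varepsilon_t]=0$ by idiosyncrasy, the moments follow from Assumption~\ref{ass:mixing for stationary time series}, and $\sqrt T/N\to0$ is assumed. This gives $\frac{a_N}{\sqrt T}\sum_t \hat z_t\varepsilon_t=\frac{a_N}{\sqrt T}\sum_t z_t\bar\varepsilon_t+o_\p(1)$. The proposition also lists $a_N^2/T\to0$, i.e.\ $N/T\to0$, but that condition is only needed for the denominator half $\frac{a_N^2}{T}\sum\hat z_tp_t$, which the AR statistic never uses. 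I would recheck Lemma~\ref{lemma:numerator}(ii) to confirm that the numerator expansion needs only $\sqrt T/N\to0$. The two terms of \eqref{eq:influence function of estimator} contribute as follows.
\begin{itemize}
\item The factor-loading term produces the $-\e[\tilde F_t'\varepsilon_t]\Sigma_{\tilde F}^{-1}\tilde F_t$ recentering.
\item The cross-sectional term $S'\tilde\Lambda(\tilde\Lambda'\tilde\Lambda/N)^{-1}N^{-1}\sum_j\tilde{\boldsymbol\lambda}_j\bar u_{jt}$ is $O_\p(N^{-1/2})$ per period. After multiplication by $a_N/\sqrt T$ and summation, its mean-zero part is $O_\p(a_N N^{-1/2})$ times the bound on $S'\tilde\Lambda$ from Proposition~\ref{prop:behavior of S times lambda}. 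I expect this to be negligible relative to $a_N\cdot a_N^{-1}$. I would verify this; it is the step I would check most carefully.
\item The remainder is $a_N\sqrt T\,O_\p(\delta_{NT}^{-2})$ times $S'$-weighted bounded terms. It vanishes because $\sqrt T/N\to0$ and $a_N\|S\|$-type factors are $O_\p(1)$.
\end{itemize}
The CLT for $\frac{a_N}{\sqrt T}\sum z_t\bar\varepsilon_t$ is then Corollary~\ref{corollary:central limit theorem} applied with the recentered multiplier $\bar\varepsilon_t$, exactly as in Theorem~\ref{theorem:estimated_weak_consistency_aggregate}. That theorem's CLT did not use $N/T\to0$, which was needed only for the denominator.

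\textbf{Variance.} I would follow Proposition~\ref{prop:consistency of HAC_aggregate}, which becomes simpler here. There, $N/T\to0$ was used to control the substitution $\hat\varepsilon_t=\varepsilon_t-(\hat\phi_d-\phi_d)p_t$, whose error is of order $(\hat\phi_d-\phi_d)\sim a_N/\sqrt T$. Under the null this error is identically zero, so that term disappears. What remains is to show three things:
\begin{itemize}
\item the factor-projection estimate $T^{-1}\sum\hat F_t'\varepsilon_t\,\hat\Sigma_{\tilde F}^{-1}\hat F_t$ is consistent for $\e[\tilde F_t'\varepsilon_t]\Sigma_{\tilde F}^{-1}\tilde F_t$ up to the rotation, which cancels in the product, with error $O_\p(\delta_{NT}^{-1})$ by the Bai (2003) rates in the appendix;
\item replacing $\hat z_t$ by $z_t$ costs $o_\p(1)$ after scaling by $a_N^2$;
\item the standard HAC argument applies to $a_N^2 z_tz_s\bar\varepsilon_t\bar\varepsilon_s$ with $b_T\to\infty$ and $b_T/T\to0$, using the mixing and $8+2\pi$ moment bounds and the Rosenthal condition for uniform-in-$N$ moments of $a_Nz_t$.
\end{itemize}

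\textbf{Main obstacle.} The hardest step is the cross term $\hat z_t^2-z_t^2$ in the HAC estimator when $N/T$ is not small. The error $a_N S'(\hat C_t-\tilde C_t)$ must be $o_\p(1)$ uniformly enough to survive the $b_T$ lag sum. I would bound it by Cauchy--Schwarz, using $\frac{1}{T}\sum_t|a_NS'(\hat C_t-\tilde C_t)|^2=O_\p(a_N^2(\|S\|^2+\|S'\tilde\Lambda\|^2/N)+a_N^2\delta_{NT}^{-2})$, together with Lemma~\ref{lemma:hajek bound}. With $a_N^2=N$ this requires the term $N\delta_{NT}^{-2}$ to be small when $T\lesssim N$. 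If that fails, I would redo the expansion keeping only the $F$-direction error, which is what the projection $\tilde\varepsilon_t$ absorbs.
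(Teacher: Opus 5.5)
Your architecture matches the paper's proof. The numerator goes through Proposition~\ref{prop:first stage effect} at $Q_t=\varepsilon_t$ plus the CLT, the variance goes through the argument of Proposition~\ref{prop:consistency of HAC_aggregate} with the plug-in residual now exact, $a_N^2=N$ cancels, and Slutsky finishes. You are right that the numerator half needs only $\sqrt T/N\to0$: Lemma~\ref{lemma:numerator}(ii) does not use $a_N^2/T\to0$, and only Lemma~\ref{lemma:denominator} does. One citation slip: for $\mu>2$ the CLT is Theorem~\ref{theorem:central limit theorem} with $\delta=1/2$. Corollary~\ref{corollary:central limit theorem} is the $\mu\in(0,1)$ case.

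The variance half is not closed, and the obstruction comes from your bound, not from $N/T$. Your bound on $\frac1T\sum_t|a_NS'(\hat C_t-\tilde C_t)|^2$ is $O_{\p}(1)$ along every trajectory. Indeed $a_N^2\|S\|^2=Nh_N$ is bounded away from zero by Lemma~\ref{lemma:share inequalities}, and $N\delta_{NT}^{-2}=\max(1,N/T)\ge1$. So it fails even when $T\ge N$, and your fallback is left unspecified. (Lemma~\ref{lemma:hajek bound} plays no role here; $\beta$ enters only the equal-weights supply problem.)

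The missing step is the one the paper uses: the pointwise reduction $\hat z_t=z_t+o_{\p}(a_N^{-1})$, read off \eqref{eq:influence function of estimator} term by term.
\begin{itemize}
\item The factor-space piece $\tilde F_t'[\tilde F'\tilde F/T]^{-1}\frac1T\sum_m\tilde F_m\bar u_{Sm}$ is $O_{\p}(a_N^{-1}T^{-1/2})$, because $\frac{a_N}{\sqrt T}\sum_m\tilde F_m\bar u_{Sm}=O_{\p}(1)$ by Theorem~\ref{theorem:central limit theorem}. It is a time average, not a term of order $\|S\|$. It matters at first order only in the numerator, where it is summed over $T$ periods into the recentring $\bar\varepsilon_t$; in the variance the recentring enters through the projection in $\tilde\varepsilon_t(\phi_d^0)$.
\item The loading-space piece is $O_{\p}(a_N^{-1}N^{-1/2})$, by Proposition~\ref{prop:behavior of S times lambda} and $\zeta_t=O_{\p}(N^{-1/2})$.
\item For the $O_{\p}(\delta_{NT}^{-2})$ remainder $R_t$, the same $S'$-weighted accounting you used in the numerator gives $a_NS'R_t=O_{\p}(\delta_{NT}^{-2})$.
\end{itemize}
None of the three involves $N/T$.

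Your concern about the $b_T$ lag sum, which the paper leaves implicit, is met by the same structure. After multiplying by $a_N$, each leading piece is a $t$-invariant coefficient of order $T^{-1/2}$ or $N^{-1/2}$ times $\tilde F_t$ or $\sqrt N\zeta_t$. These processes have bounded moments and inherit the mixing, so each kernel-weighted cross sum factors into that coefficient times an $O_{\p}(1)$ long-run cross-covariance. The remainder survives even the crude triangle-inequality bound, $b_T\delta_{NT}^{-2}\asymp\max(\sqrt T/N,1/\sqrt T)\to0$ at $b_T\asymp\sqrt T$. This is where $\sqrt T/N\to0$ enters the variance half.
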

\begin{proof}
    In Appendix~\ref{subsec:proof of AR_aggregate}. The proof works at the null, so the GIV estimate never enters and no restriction on $N/T$ is required.
\end{proof}

Proposition~\ref{prop:weakness robust test_aggregate} places no condition on $N/T$. It therefore covers the weak-identification case $N/T \to c > 0$, where $\hat{\phi}_d$ is inconsistent. The test does not depend on consistency of $\hat{\phi}_d$. It uses only the moment and the variance, both evaluated at the hypothesised value.

The supply elasticity admits the same test. Under $H_0: \phi_s = \phi_s^0$ the equal-weights residual $y_{Et} - \phi_s^0 p_t = v_{Et}$ is exact, so the null-imposed statistic again avoids the first stage. Define $\tilde{v}_{Et}(\phi_s^0)$ as the factor-projected version of $y_{Et} - \phi_s^0 p_t$, formed as in \eqref{eq:HAC estimator_disaggregated_variance}, and
\begin{equation} \label{eq:AR statistic_disaggregate}
    \text{AR}_H(\phi_s^0) = \frac{N}{T} \left( \sum_t \hat{z}_t (y_{Et} - \phi_s^0 p_{t}) \right)^2 \frac{1}{\hat{V}_{z v_H}(\phi_s^0)},
\end{equation}
where $\hat{V}_{z v_H}(\phi_s^0)$ is the estimator \eqref{eq:HAC estimator_disaggregated_variance} specialized to $\mu > 2$ so that $a_N^2 = N$, and built from $\tilde{v}_{Et}(\phi_s^0)$. The rate cancels between numerator and denominator exactly as in \eqref{eq:AR statistic_aggregate}.

\begin{Prop} \label{prop:weakness robust test_disaggregate}
    Suppose Assumptions \ref{ass:weak stationarity and idiosyncrasy} to \ref{ass:LN_moments and CLT} hold with $\mu > 2$ and $\frac{\sqrt{T}}{N} \to 0$. Then, for any trajectory of $N/T$, under the null $H_0: \phi_s = \phi_s^0$,
\begin{equation*}
    \text{AR}_H(\phi_s^0) \xrightarrow{d} \chi^2_1.
\end{equation*}
\end{Prop}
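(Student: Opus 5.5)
The plan mirrors the argument for Proposition~\ref{prop:weakness robust test_aggregate}, with the multiplier $Q_t = v_{Et}$ in place of $\varepsilon_t$ and $a_N^2 = N$ throughout. Under $H_0: \phi_s = \phi_s^0$ the null-imposed residual is exactly $y_{Et} - \phi_s^0 p_t = v_{Et}$. The statistic therefore involves no estimate of $\phi_s$, and neither the denominator $\sum_t \hat z_t p_t$ nor any $N/T$ condition enters.

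I would write $\text{AR}_H(\phi_s^0) = \big(\frac{\sqrt N}{\sqrt T}\sum_t \hat z_t v_{Et}\big)^2 / \hat V_{z v_H}(\phi_s^0)$ and proceed in three steps.

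\textbf{Step 1 (aggregator substitution).} Decompose $v_{Et} = v_{Ht} + (v_{Et} - v_{Ht})$. The known-factor part of the difference has conditional mean $\e[z_t(v_{Et}-v_{Ht}) \mid S,\Lambda] = b_N = O_{\p}(a_N^{-1}/N)$ by Lemma~\ref{lemma:hajek bound}. After scaling by $\sqrt N/\sqrt T \cdot T$, this gives $O_{\p}(\sqrt T/N) = o_{\p}(1)$. Its centred fluctuation is $O_{\p}(N^{-1/2})$ relative to the main term, since $v_{Et}-v_{Ht} = O_{\p}(N^{-1/2})$ and Assumption~\ref{ass:LN_time and cross sectional dependence}.6 bounds the long-run variance. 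The first-stage piece $S'(\hat C_t - \tilde C_t)$ times this difference is of smaller order still. This reduces the numerator to $\frac{\sqrt N}{\sqrt T}\sum_t \hat z_t v_{Ht} + o_{\p}(1)$.

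\textbf{Step 2 (first stage and CLT).} Apply Proposition~\ref{prop:first stage effect} with $Q_t = v_{Ht}$. Its hypotheses hold: the moment bound follows from Assumption~\ref{ass:mixing for stationary time series}, and $\e[\bar u_{jt} v_{Ht}] = 0$ follows from $H'\Omega D_N = 0$. This gives $\frac{\sqrt N}{\sqrt T}\sum_t z_t \bar v_{Ht} + o_{\p}(1)$. The CLT used in Theorem~\ref{theorem:estimated_weak_consistency_disaggregate} (Corollary~\ref{corollary:central limit theorem}, conditional on $(S,\Lambda)$) then yields $\n(0, N V_{z\bar v_H}(S,\Lambda))$. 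The limit variance is bounded away from zero by Assumption~\ref{ass:mixing for stationary time series}. Note that this CLT uses only the numerator and never requires $N/T\to 0$.

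\textbf{Step 3 (variance consistency at the null).} I need $\hat V_{z v_H}(\phi_s^0) - N V_{z\bar v_H}(S,\Lambda) \xrightarrow{p} 0$. I would rerun the proof of Proposition~\ref{prop:consistency of HAC_disaggregate} with three changes. First, the residual error from $\hat\phi_s - \phi_s$ is now identically zero. This is exactly where that proposition needed $N/T \to 0$, so the condition drops out. Second, the factor recentering $\tilde v_{Et}(\phi_s^0)$ estimates $\bar v_{Et}$ using $\hat F_t$, $\hat\Sigma_{\tilde F}$. Those errors are $O_{\p}(\delta_{NT}^{-1})$ up to rotation, and the rotation cancels in the projection. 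Third, $\bar v_{Et}$ is replaced by $\bar v_{Ht}$ at cost $O_{\p}(N^{-1/2})$, as in Step 1. The continuous mapping theorem and Slutsky's lemma then give $\chi^2_1$.

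The main obstacle I expect is Step 3. There, $\hat z_t^2$ enters quadratically, multiplied by $N$, inside a HAC sum with $b_T$ lags. The first-stage error $S'(\hat C_t - \tilde C_t)$ and the aggregator error must be shown negligible uniformly over lags after this $N$-scaling. I would bound the cross terms by Cauchy--Schwarz, using $N\,\e[z_t^2\mid S] = O_{\p}(1)$ from Proposition~\ref{prop:weakness of known factor instrument}, together with the expansion \eqref{eq:influence function of estimator}. This gives error $O_{\p}(b_T(\delta_{NT}^{-1} + N^{-1/2}))$, which vanishes under the usual bandwidth condition.
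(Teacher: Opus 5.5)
Your route is the paper's. The numerator is the decomposition from the proof of Theorem~\ref{theorem:estimated_weak_consistency_disaggregate}: aggregation drift via Lemma~\ref{lemma:hajek bound}, the interaction term, first-stage substitution at $Q_t = v_{Ht}$, and the CLT. None of these uses $N/T$. The variance step reruns Proposition~\ref{prop:consistency of HAC_disaggregate} with the exact null residual, which deletes the only step that needed $N/T \to 0$.

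Three small corrections:
\begin{itemize}
\item For $\mu > 2$ the CLT is Theorem~\ref{theorem:central limit theorem}, not Corollary~\ref{corollary:central limit theorem}, which is the $\mu \in (0,1)$ case.
\item Proposition~\ref{prop:first stage effect} is stated under $a_N^2/T = N/T \to 0$. You should invoke only its numerator half, Lemma~\ref{lemma:numerator}.(ii), which carries no such condition.
\item The centred fluctuation of $z_t(v_{Et} - v_{Ht})$ is controlled by the Rosenthal bound of Assumption~\ref{ass:LN_time and cross sectional dependence}.5, applied with weights $\tilde S$ and $a$, plus Davydov. Part~6 does not cover the quadratic term $z_t u_t' a$.
\end{itemize}

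The genuine gap is in how you dispose of what you call the main obstacle. The lag-by-lag Cauchy--Schwarz bound $O_{\p}\big(b_T(\delta_{NT}^{-1} + N^{-1/2})\big)$ does not vanish under the hypotheses of the proposition.
\begin{itemize}
\item With the paper's bandwidth $b_T = 1.3\sqrt{T}$ you have $b_T \delta_{NT}^{-1} \geq 1.3$ for every $(N,T)$.
\item $b_T N^{-1/2} = 1.3\sqrt{T/N}$ diverges along admissible trajectories such as $N = T^{3/4}$, which satisfies $\sqrt{T}/N \to 0$.
\item Rescuing the bound would need $b_T = o(\min\{\sqrt{N}, \sqrt{T}\})$. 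That ties the bandwidth to $N$, which a statement valid ``for any trajectory of $N/T$'' does not grant.
\item The bound also contradicts your own Step~3 claim that replacing $\bar v_{Et}$ by $\bar v_{Ht}$ costs only $O_{\p}(N^{-1/2})$.
\end{itemize}

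The missing idea is to avoid accumulating errors over lags. The paper bounds the aggregation perturbation $u_t'a$ in the long-run-variance metric, by $2 (V_{z(\bar v_E - \bar v_H)})^{1/2} (a_N^2 V_{z \bar v_H})^{1/2} + V_{z(\bar v_E - \bar v_H)} = O_{\p}(N^{-1/2})$. This is uniform in $b_T$: $|w| \leq 1$ and $\sum_h \alpha(h)^{\pi/(2+\pi)} < \infty$ make the lag sum of cross-covariances converge.

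The same device covers the first-stage error, which the paper only substitutes pointwise through $\hat z_t = z_t + o_{\p}(a_N^{-1})$. Its leading pieces are:
\begin{itemize}
\item a common random coefficient of order $(NT)^{-1/2}$ times the stationary $\tilde F_t$;
\item $S'\tilde\Lambda\,[\tilde\Lambda'\tilde\Lambda/N]^{-1}\zeta_t$, with $\zeta_t$ a mixing process of order $N^{-1/2}$.
\end{itemize}
After the $\sqrt{N}$ rescaling, the mean of the perturbed kernel sum is governed by summable cross-covariances with no $b_T$ factor. Those covariances vanish at long lags because $\e[\tilde F_t \bar v_{Ht}] = 0$ and $\e[\bar u_{jt} v_{Ht}] = 0$. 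Only the centred fluctuation picks up a factor $b_T/\sqrt{T}$. That factor stays bounded for $b_T = O(\sqrt{T})$ and multiplies a vanishing scale.
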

\begin{proof}
    In Appendix~\ref{subsec:proof of AR_disaggregate}. The equal-weights aggregation error is $o_{\p}(1)$ whatever the trajectory of $N/T$, so it costs the test nothing.
\end{proof}

\section{Empirical Application}
\label{sec:empirical}

I apply the GIV estimator to three commodity markets---refined copper, crude
oil, and natural gas---to estimate their respective price elasticities of
demand and supply.  Each market provides a global supply panel whose country-level
idiosyncratic shocks serve as the GIV instrument.

\subsection{Model}
\label{sec:model}

\paragraph{Demand equation.}
The equation of interest is
\begin{equation} \label{eq:empirical demand}
  d_t \;=\; \phi_d\, p_t \;+\; X_t'\beta \;+\; \varepsilon_t,
\end{equation}
where $d_t$ is the year-on-year growth rate of aggregate demand,
$p_t$ is the year-on-year growth rate of the real commodity price,
$X_t$ is a vector of observable common controls, and $\varepsilon_t$ is an aggregate
demand shock.  The first parameter of interest is $\phi_d$, the price elasticity of
demand.

OLS estimation of~\eqref{eq:empirical demand} is biased because $p_t$ and
$\varepsilon_t$ are correlated. Aggregate demand expansions raise both quantities and
prices simultaneously, attenuating the estimated elasticity toward zero, or
even reversing its sign when supply shocks are small relative to demand
shocks.

\paragraph{Supply panel and GIV instrument.}
We observe the changes to supply at individual country level. Let
 $y_{it}$ denote country~$i$'s change in supply in period~$t$.  The change in supply follows the structural equation:
\begin{equation}
  \label{eq:panel}
  y_{it} \;=\; \phi_s p_t + \lambda_i' F_t \;+\; u_{it},
\end{equation}
where $F_t$ is an $r$-vector of common factors (spanning price and aggregate
demand shocks) and $u_{it}$ is an idiosyncratic supply shock. The estimated-factor
GIV instrument is the share-weighted idiosyncratic shock,
\begin{equation}
  \label{eq:giv}
  \hat{z}_t \;=\; \sum_{i=1}^N S_{i}\,\hat{u}_{it} = S'(D_N y_t - \hat{C}_t)
\end{equation}
where $S_i$ is the long-term market share of country~$i$. In our dataset, we construct $S_i$ by calculating the share for every $t$ and averaging across all time periods. I construct all growth rates as year-on-year midpoint growth.
\begin{equation} \label{eq:yoygrowth}
  g_{i,t} = \frac{Y_{i,t} - Y_{i,t-12}}{\tfrac{1}{2}(Y_{i,t} + Y_{i,t-12})},
\end{equation}

The second parameter of interest is $\phi_s$ in \eqref{eq:panel}, the price elasticity of supply. Both elasticities are estimated with the same instrument \eqref{eq:giv}. The supply moment requires an aggregation of the panel, and Section~\ref{subsec:equal weights} shows that equal weights are valid. So the dependent variable for $\phi_s$ is $y_{Et} = e_N' y_t / N$.

OLS on \eqref{eq:panel} is biased in the opposite direction. A positive supply shock raises quantity and lowers the price, so $p_t$ covaries negatively with the supply residual and OLS understates $\phi_s$. Since the supply elasticity is positive, this is again attenuation toward zero.

\paragraph{Factor selection.}
The number of factors $r$ is selected from the data using the eigenvalue
ratio~(ER) criterion of \citet{Ahn2013EigenvalueFactors}.

\subsection{Data}

I estimate on three monthly country-level supply panels. Refined copper supply
from Bloomberg, $N = 29$ countries over January~2009 to December~2025. Crude oil
production from the EIA International Energy Statistics, $N = 21$ units over
January~1973 to November~2025. Natural gas production from the JODI Gas
Database, $N = 27$ countries over January~2010 to November~2025. Each panel
carries a rest-of-world residual, and every series enters as the year-on-year
midpoint growth rate~\eqref{eq:yoygrowth}.

The price series are the LME spot copper price, a splice of the FRED OILPRICE
series and WTI, and the Henry Hub natural gas spot price, each deflated by U.S.\
CPI rebased to 2015~$= 100$. Appendix~\ref{appendix_data} gives the sources,
the cleaning rules, the estimation sample and the control set for each
commodity.

\subsection{Results}
\label{sec:results}

Table~\ref{tab:elasticity_controls} presents the main elasticity estimates, Panel A for the demand equation and Panel B for the supply equation.

\begin{table}[htbp]
\centering
\caption{Price Elasticities of Demand and Supply: OLS and GIV}
\label{tab:elasticity_controls}
\footnotesize
\setlength{\tabcolsep}{4pt}
\begin{tabular}{lccc}
\toprule
Estimator & Copper & Crude Oil & Natural Gas \\
  & Refined Supply ($N=29$) & Production (EIA) ($N=21$) & Production ($N=27$) \\
\midrule
\multicolumn{4}{l}{\textit{Panel A: Demand, aggregate equation ($\phi_d$)}} \\
\quad OLS & -0.0527 (0.0224) & 0.0041 (0.0033) & 0.0022 (0.0032) \\
\quad GIV --- Wrong SE & -0.1478 (0.0511) & -0.1045 (0.0503) & -0.0562 (0.0407) \\
\quad GIV --- Correct SE & -0.1478 (0.0513) & -0.1045 (0.0520) & -0.0562 (0.0461) \\
\addlinespace
\multicolumn{4}{l}{\textit{Panel B: Supply, disaggregated equation ($\phi_s$)}} \\
\quad OLS & 0.0248 (0.0357) & 0.0173 (0.0081) & 0.0088 (0.0140) \\
\quad GIV --- Wrong SE & 0.1409 (0.0477) & 0.2563 (0.1347) & 0.1504 (0.1019) \\
\quad GIV --- Correct SE & 0.1409 (0.0580) & 0.2563 (0.1330) & 0.1504 (0.0816) \\
\bottomrule
\end{tabular}

\vspace{4pt}
\raggedright\footnotesize
\textit{Notes:} Standard errors in parentheses. Panel A reports the demand elasticity $\phi_d$ from the aggregate equation. Panel B reports the supply elasticity $\phi_s$ from the disaggregated equation with the equal-weights aggregator $H = e_N/N$ of Section~\ref{subsec:equal weights}. The two GIV rows share a single point estimate and differ only in the standard error. The wrong SE treats the instrument $z$ as fixed. The correct SE adds the first-stage factor correction. The controls $X$ are not partialled out of the panel before the first stage. All standard errors are HAC (Newey--West, Bartlett kernel, bandwidth $\lfloor 1.3\sqrt{T}\rfloor$).
\end{table}

\subsubsection{Granularity}

We assume that the cross-sectional size distribution follows a power law, $\Pr(S \geq s) \propto s^{-\mu}$. Table~\ref{tab:power_law_combined} reports Pareto exponent $\mu$ estimates for the supply panels using three methods: Hill~(1975) MLE, log-rank OLS, and the bias-corrected
Gabaix--Ibragimov regression \citep{Gabaix2011RankExponents}.

\begin{table}[H]
\centering
\caption{Pareto Tail Exponent Estimates}
\label{tab:power_law_combined}
\small
\begin{tabular}{lcccccc}
\toprule
Method & \multicolumn{2}{c}{Copper} & \multicolumn{2}{c}{Crude Oil} & \multicolumn{2}{c}{Natural Gas} \\
\cmidrule(lr){2-3}\cmidrule(lr){4-5}\cmidrule(lr){6-7}
  & $\hat{\mu}$ & 95\% CI & $\hat{\mu}$ & 95\% CI & $\hat{\mu}$ & 95\% CI \\
\midrule
Hill (1975) MLE & 0.22 & [0.21, 0.76] & 0.38 & [0.34, 0.77] & 0.20 & [0.17, 0.31] \\
Naive OLS & 0.51 & [0.39, 0.64] & 0.58 & [0.44, 0.71] & 0.29 & [0.24, 0.34] \\
Gabaix--Ibragimov (2011) & 0.58 & [0.28, 0.87] & 0.65 & [0.26, 1.05] & 0.32 & [0.15, 0.49] \\
\bottomrule
\end{tabular}

\vspace{4pt}
\raggedright\footnotesize
\textit{Notes:} Copper: Refined Supply panel, $N = 29$. Crude Oil: Production (EIA) panel, $N = 21$. Natural Gas: Production panel, $N = 27$. Hill (1975) MLE: $\hat{\mu} = N / \sum_i \log(S_i / S_{\min})$ with nonparametric bootstrap 95\% CI (5{,}000 replications, resampling with replacement). Naive OLS: log-rank regression $\log(\text{rank}) = a - \mu \log(S)$ with OLS standard errors. Gabaix--Ibragimov (2011): shifted log-rank regression $\log(\text{rank} - 1/2) = a - \mu \log(S)$ with analytic SE $= \sqrt{2/N}\,\hat{\mu}$.
\end{table}

\paragraph{Regime classification.} The estimates partition the three
commodities cleanly. Copper has $\hat\mu_{\mathrm{GI}} = 0.58$ with
confidence interval $[0.28, 0.87]$, and natural gas has
$\hat\mu_{\mathrm{GI}} = 0.32$ with $[0.15, 0.49]$. Both lie firmly in
the strong regime $\mu < 1$ of Section~\ref{sec:three_regimes}. Crude
oil has $\hat\mu_{\mathrm{GI}} = 0.65$ but a wider interval,
$[0.26, 1.05]$, that reaches into the nearly weak region
$\mu \in (1,2)$.
Theorems~\ref{theorem:estimated_strong_consistency_aggregate} and~\ref{theorem:estimated_weak_consistency_aggregate} establish
consistency and asymptotic normality for the entire range
$\mu \in (0,2)$, so the demand point estimates and Wald confidence intervals
reported below remain valid for crude oil under either reading of $\mu$.
We can comfortably rule out $\mu > 2$ for all three commodities.

\subsubsection{Demand Elasticities}

Panel A of Table~\ref{tab:elasticity_controls} reports the demand equation.

The OLS estimates starkly illustrate the endogeneity problem.  For copper,
OLS yields $-0.053$, roughly one-third of the GIV estimate in magnitude.  For
crude oil and natural gas, OLS is positive. GIV corrects all three estimates
to economically plausible negative demand elasticities: $-0.148$ for copper,
$-0.105$ for crude oil, and $-0.056$ for natural gas.

Existing theory leads to the wrong standard error. The correction has no determinate sign. It removes the factor-explained part of the structural error, but it reweights the remainder by the instrument, so it can widen or narrow the interval. In the demand panel, the wrong standard error is too small in all three columns, so it under-covers. The understatement is $0.4\%$ for copper, $3.4\%$ for crude oil and $13.3\%$ for natural gas.

The effect on inference is visible in the $t$-statistics. Correcting the standard error moves copper from $2.89$ to $2.88$, crude oil from $2.08$ to $2.01$, and natural gas from $1.38$ to $1.22$. No conclusion at the $5\%$ level is reversed here. But crude oil is left sitting on the critical value, and the margin it appeared to have under the uncorrected standard error was an artifact of treating the estimated instrument as observed.

All three commodities display inelastic demand.  The estimated elasticities
($|\hat\phi_d| < 0.15$) are consistent with the industrial nature of these commodities. All have limited short-run substitutes, making quantity responses to price changes modest.  The estimates are also broadly
in line with the existing literature, which typically reports short-run
elasticities in the range $[-0.05,\,-0.25]$ for crude oil \citep{Baumeister2019StructuralShocks}, $[-0.05,\,-0.20]$ for natural gas \citep{Auffhammer2018NaturalBills,Labandeira2017ADemand}, and $[-0.07,\,-0.10]$ for copper \citep{Shojaeddini2025UnderstandingMinerals,Lanz2013SubglobalIndustry}.

\subsubsection{Supply Elasticities}

Panel B of Table~\ref{tab:elasticity_controls} reports the disaggregated
equation. OLS is attenuated toward zero in all three columns, at $0.025$ for copper, $0.017$ for crude oil and $0.009$ for natural gas, between one sixth and one twentieth of the corresponding GIV estimate. GIV returns positive supply elasticities of $0.141$ for copper, $0.256$ for crude oil and $0.150$ for natural gas, all upward sloping and all inelastic.

The standard error correction raises the copper standard error from $0.0477$ to
$0.0580$, and lowers it for crude oil and natural gas, from $0.1347$ to $0.1330$
and from $0.1019$ to $0.0816$. The supply panel therefore runs the other way for two
of the three commodities. The $t$-statistics move from $2.95$ to $2.43$ for copper, from
$1.90$ to $1.93$ for crude oil, and from $1.48$ to $1.84$ for natural gas. Copper loses a
fifth of its apparent precision and stays significant. Natural gas gains, and moves from
clearly insignificant to just short of the critical value. Taken with the demand panel, the
six estimates show the correction operating in both directions within a single application.

Inference is Wald throughout. The regime classification above rules out
$\mu > 2$ for all three commodities, and
Theorems~\ref{theorem:estimated_strong_consistency_disaggregate}
and~\ref{theorem:estimated_weak_consistency_disaggregate} cover
$\mu \in (0,2)$ for the supply elasticity exactly as their aggregate
counterparts do for the demand elasticity, so the Anderson--Rubin sets of
Proposition~\ref{prop:weakness robust test_disaggregate} are not needed here.

All commodities display inelastic supply. The estimated elasticities are broadly in line with the existing literature. Copper at $0.141$ falls very close to $0.10$ reported by \citet{Shojaeddini2025UnderstandingMinerals}. \citet{Hausman2015WelfareGas} report supply elasticity of US natural gas production at $0.09$ for short-run, going all the way to $0.81$ for the long run. My estimate of $0.15$ is well within this range. Crude oil at
$0.256$ is close to, but slightly larger than the usual estimates. \citet{Baumeister2019StructuralShocks} reports $0.15$ for crude oil.

\section{Simulation} \label{sec:simulation}

I assess the finite sample performance of the two GIV estimators under different regimes of instrument strength via Monte Carlo simulation. The factor structure is calibrated to two empirical commodity panels: Copper and Crude Oil, while all other components of the data generating process are simulated. Running the experiment on both commodities allows us to examine how the estimator performs under different cross-section and time-series dimensions: Copper provides $N = 29$ countries over $T = 192$ months, while Crude Oil provides $N = 21$ countries over $T = 611$ months.

The loadings and the factors are held at their empirical values, the shocks are drawn normal with the empirical scales, and the true elasticities are set to the control-robust GIV estimates of Table~\ref{tab:elasticity_controls}. This gives $\phi_d = -0.148$ and $\phi_s = 0.141$ for Copper, and $\phi_d = -0.104$ and $\phi_s = 0.256$ for Crude Oil. Appendix~\ref{app:mc design} gives the design in full, including the power-law draws for the individual sizes and the grid of $\mu$. Appendix~\ref{app:augment N} augments the cross section to $N \in \{50, 100\}$ which shows the sensitivity of the estimation to different values of $N$.

\subsection{Results: Demand Elasticity} \label{subsec:mc results aggregate}

Table~\ref{tab:mc_table1} organizes the simulation
evidence by the Pareto exponent $\mu$, which Proposition~\ref{prop:weakness of known factor instrument}
maps directly onto the regimes of instrument strength. The CI Length
column reports the standard $t$-based 95\% interval with White
heteroskedasticity-robust standard errors.

\begin{table}[htbp]
\centering
\caption{Monte Carlo Simulation Results: Empirical Commodity Panels}
\label{tab:mc_table1}
\begin{tabular}{ccccccc}
\toprule
$\mu$ & Median $\hat{\phi}_d$ & Bias & RMSE & Coverage & CI Length & Median $F$ \\
\midrule
\multicolumn{7}{l}{\textit{Panel A: Copper ($N=29$, $T=192$, $\phi_d=-0.148$)}} \\
\midrule
0.3 & -0.1479 & -0.0004 & 0.0075 & 0.948 & 0.0222 & 124.0 \\
0.5 & -0.1479 & -0.0009 & 0.0107 & 0.947 & 0.0283 & 78.2 \\
0.8 & -0.1479 & -0.0003 & 0.0811 & 0.949 & 0.0437 & 34.0 \\
1.2 & -0.1477 & -0.0043 & 0.1632 & 0.947 & 0.0738 & 12.5 \\
1.4 & -0.1474 & -0.0072 & 0.3017 & 0.954 & 0.0880 & 8.7 \\
1.8 & -0.1460 & -0.0205 & 0.7739 & 0.949 & 0.1238 & 4.6 \\
2.5 & -0.1426 & -0.1895 & 8.8921 & 0.958 & 0.1937 & 2.1 \\
3.5 & -0.1338 & -0.0025 & 1.9659 & 0.960 & 0.2943 & 1.0 \\
6.0 & -0.1166 & 0.0047 & 1.6142 & 0.967 & 0.4530 & 0.5 \\
\midrule
\multicolumn{7}{l}{\textit{Panel B: Crude Oil ($N=21$, $T=611$, $\phi_d=-0.104$)}} \\
\midrule
0.3 & -0.1045 & -0.0001 & 0.0033 & 0.946 & 0.0087 & 433.8 \\
0.5 & -0.1045 & -0.0001 & 0.0045 & 0.946 & 0.0114 & 246.6 \\
0.8 & -0.1045 & -0.0003 & 0.0062 & 0.947 & 0.0176 & 106.5 \\
1.2 & -0.1045 & -0.0030 & 0.1342 & 0.953 & 0.0285 & 39.2 \\
1.4 & -0.1044 & -0.0018 & 0.0337 & 0.958 & 0.0340 & 27.5 \\
1.8 & -0.1044 & -0.0025 & 0.0401 & 0.953 & 0.0462 & 14.8 \\
2.5 & -0.1049 & -0.0021 & 0.1877 & 0.957 & 0.0704 & 6.7 \\
3.5 & -0.1032 & -1.3716 & 96.1147 & 0.959 & 0.1062 & 3.1 \\
6.0 & -0.0949 & 0.0180 & 1.8268 & 0.969 & 0.1955 & 1.0 \\
\bottomrule
\end{tabular}

\vspace{4pt}
\raggedright\footnotesize
\textit{Notes:} Shares drawn from power law $\mathbb{P}(s_i > s) = c\,s^{-\mu}$. Coverage is the fraction of replications where the true value lies in the 95\% confidence interval. Standard errors: White heteroskedasticity-robust. $F$ is the first-stage $F$-statistic.
\end{table}

\paragraph{Strong regime ($\mu < 1$).} The strong-instrument theory predicts
standard inference, and that is what we observe. For Copper, the median
$F$-statistic is large throughout, falling from $124$ at $\mu = 0.3$ to $78$
and $34$ at $\mu = 0.5$ and $0.8$ as shares become less concentrated. The
median $\hat\phi_d$ matches the true value to three decimals and bias is at
most $0.0009$. RMSE is $0.0075$ at $\mu = 0.3$ and $0.011$ at $\mu = 0.5$; at
$\mu = 0.8$ it rises to $0.081$ as a few extreme draws enter the second moment,
even though the median and bias remain on target. Crude Oil is sharper, with
median $F$ between $107$ and $434$ and RMSE below $0.01$ throughout. Coverage
of the $t$-based interval is at the nominal $0.95$ in both panels.

\paragraph{Nearly weak regime ($\mu \in (1,2)$).} The estimator continues
to recover the true elasticity. The median estimate stays very close to the true
value throughout. Coverage of the $t$-based
interval is at nominal levels in both calibrations. Theorem~\ref{theorem:estimated_weak_consistency_aggregate} guarantees that standard normal inference remains valid, and the simulations confirm it.

\paragraph{$\mu > 2$ and the case for Anderson--Rubin.}

For $\mu >2$, the median estimate drifts away from the
truth and the standard errors over-cover. These are settings in which $N$ is small and $T$ is large, so $N/T \to 0$. The poor small-sample performance even in these settings for $\mu >2$ supports my recommendation for Anderson--Rubin confidence sets on the demand elasticity through Proposition~\ref{prop:weakness robust test_aggregate}.

This failure is more acute when $N$ is large. See Appendix~\ref{app:augment N} for details.

\subsection{Results: Supply Elasticity} \label{subsec:mc results disaggregate}

The supply elasticity is estimated by \eqref{eq:estimated disaggregated estimator}, with equal
weights in place of the optimal aggregator $H$. Table~\ref{tab:dis_table1} organizes the results by $\mu$ exactly as
Table~\ref{tab:mc_table1} does.

\begin{table}[htbp]
\centering
\caption{Monte Carlo Simulation Results, Disaggregated Supply Elasticity: Empirical Commodity Panels}
\label{tab:dis_table1}
\begin{tabular}{ccccccc}
\toprule
$\mu$ & Median $\hat{\phi}_s$ & Bias & RMSE & Coverage & CI Length & Median $F$ \\
\midrule
\multicolumn{7}{l}{\textit{Panel A: Copper ($N=29$, $T=192$, $\phi_s=0.141$)}} \\
\midrule
0.3 & 0.1408 & 0.0014 & 0.0225 & 0.947 & 0.0663 & 124.0 \\
0.5 & 0.1407 & 0.0033 & 0.0319 & 0.952 & 0.0833 & 78.2 \\
0.8 & 0.1407 & 0.0008 & 0.2721 & 0.937 & 0.1285 & 34.0 \\
1.2 & 0.1407 & 0.0283 & 0.5332 & 0.929 & 0.2087 & 12.5 \\
1.4 & 0.1416 & 0.0310 & 1.1438 & 0.929 & 0.2524 & 8.7 \\
1.8 & 0.1360 & 0.0702 & 2.6834 & 0.915 & 0.3520 & 4.6 \\
2.5 & 0.1181 & -0.0609 & 12.5230 & 0.888 & 0.5244 & 2.1 \\
3.5 & 0.0765 & 0.1047 & 7.9177 & 0.864 & 0.7421 & 1.0 \\
6.0 & 0.0014 & -0.1617 & 4.7527 & 0.843 & 0.9974 & 0.5 \\
\midrule
\multicolumn{7}{l}{\textit{Panel B: Crude Oil ($N=21$, $T=611$, $\phi_s=0.256$)}} \\
\midrule
0.3 & 0.2564 & 0.0007 & 0.0188 & 0.948 & 0.0486 & 433.8 \\
0.5 & 0.2563 & 0.0015 & 0.0279 & 0.946 & 0.0644 & 246.6 \\
0.8 & 0.2563 & 0.0029 & 0.0372 & 0.944 & 0.0971 & 106.5 \\
1.2 & 0.2570 & 0.0258 & 1.0860 & 0.942 & 0.1556 & 39.2 \\
1.4 & 0.2575 & 0.0145 & 0.2197 & 0.942 & 0.1855 & 27.5 \\
1.8 & 0.2547 & 0.0224 & 0.2769 & 0.924 & 0.2500 & 14.8 \\
2.5 & 0.2534 & 0.0346 & 1.5831 & 0.916 & 0.3674 & 6.7 \\
3.5 & 0.2428 & 0.8826 & 59.2406 & 0.903 & 0.5479 & 3.1 \\
6.0 & 0.1863 & 0.0019 & 8.0945 & 0.866 & 0.9324 & 1.0 \\
\bottomrule
\end{tabular}

\vspace{4pt}
\raggedright\footnotesize
\textit{Notes:} Estimates of the disaggregated supply elasticity $\phi_s$ in $y_t = e_N\phi_s p_t + \Lambda F_t + u_t$, aggregated with equal weights $\hat\phi_s(I_N)$ as recommended in Theorems~\ref{theorem:estimated_strong_consistency_disaggregate} and~\ref{theorem:estimated_weak_consistency_disaggregate}. Shares drawn from power law $\mathbb{P}(s_i > s) = c\,s^{-\mu}$. Coverage is the fraction of replications where the true value lies in the 95\% confidence interval. Standard errors: White heteroskedasticity-robust, with the first-stage correction for estimation error in $\hat z$. $F$ is the first-stage $F$-statistic.
\end{table}

\paragraph{Strong regime ($\mu < 1$).} Standard inference holds, as on the demand side. For Copper
the median $F$ falls from $124$ at $\mu = 0.3$ to $78$ and $34$ at $\mu = 0.5$ and $0.8$, and for
Crude Oil from $434$ to $247$ and $107$. The median $\hat\phi_s$ recovers the truth to three
decimals in both panels, $0.1408$ against $0.141$ and $0.2564$ against $0.256$ at $\mu = 0.3$. RMSE
is $0.023$ and $0.019$ at $\mu = 0.3$, and rises to $0.27$ for Copper at $\mu = 0.8$ as a few
extreme draws enter the second moment, even though the median remains on target. Coverage of the
$t$-based interval is at the nominal $0.95$ in both panels, between $0.937$ and $0.952$.

\paragraph{Nearly weak regime ($\mu \in (1,2)$).} The estimator continues to recover the true
elasticity. The median stays within $0.005$ of the truth for Copper and $0.002$ for Crude Oil.
Coverage of the $t$-based interval slips a little below nominal, to $0.915$ and $0.924$ at
$\mu = 1.8$. Theorem~\ref{theorem:estimated_weak_consistency_disaggregate} guarantees that standard
normal inference remains valid, and the simulations confirm it.

\paragraph{$\mu > 2$ and the case for Anderson--Rubin.}

For $\mu > 2$, the median estimate drifts away from the truth and the standard errors under-cover,
reaching $0.001$ against $0.141$ and $0.84$ coverage for Copper at $\mu = 6.0$. The direction is the
opposite of the demand table, which over-covers. The poor small-sample performance is the same on
both sides, and it supports my recommendation for Anderson--Rubin confidence sets on the supply
elasticity through Proposition~\ref{prop:weakness robust test_disaggregate}.

This failure is more acute when $N$ is large. See Appendix~\ref{app:augment N} for details.

\section{Conclusion} \label{sec:conclusion}

This paper extends Granular Instrumental Variables to large panels with $N, T \to \infty$ and shows how the granularity of the cross section governs instrument strength. The parameters of interest are the demand and supply elasticities of a market-clearing system. Which regime a panel falls into is set by how concentrated the size distribution is and by how fast the panel grows in each direction.

Three regimes emerge. When a few units dominate the aggregate, the instrument is strong, and both estimators are consistent and asymptotically normal at the classical $\sqrt{T}$ rate of \citet{Gabaix2024GranularVariables}. When large units stand out without dominating, identification is nearly weak in the sense of \citet{Antoine2021GMMIdentification}. The estimators remain consistent and asymptotically normal, now at a rate slower than $\sqrt{T}$. When no unit stands out and the time series does not grow fast enough to compensate, identification is weak in the sense of \citet{Staiger1997InstrumentalInstruments} and the estimators are inconsistent.

In practice the observed data mixes the idiosyncratic shocks with a factor structure, so the GIV has to be built in two stages. The first stage does not change which regime a panel falls into. But it changes the asymptotic variance. It replaces the structural error with its factor-residualised version, so standard errors that treat the estimated shocks as if they were observed are wrong. I provide HAC-consistent variance estimators for both elasticities that account for this.

For inference, Wald intervals built with the corrected variance are valid as long as the size distribution has infinite variance. Once the tail thins enough for the variance to be finite, the Gaussian approximation is no longer reliable. I then recommend Anderson--Rubin confidence sets. The statistic is built from the same factor-projected residual, so it carries the same correction, and it holds no matter how weak the instrument is.

I apply the GIV estimator to estimate short-run demand and supply elasticities for refined copper, crude oil, and natural gas. Copper and natural gas fall in the strong-instrument regime at $95\%$ confidence, while crude oil extends into the nearly weak regime. The estimated demand elasticities are $-0.148$, $-0.105$, and $-0.056$ respectively, and the supply elasticities are $0.141$, $0.256$, and $0.150$, all consistent with the inelastic short-run response that industrial commodities are known for.

The correction to the standard errors is what the application is for. On the demand side the uncorrected standard error is too small in all three markets, by up to $13.3\%$. On the supply side the correction goes the other way for two of the three. The sign is not determined in advance.

\bibliography{ref}

\appendix
\appendixpage

\addtocontents{toc}{\protect\setcounter{tocdepth}{2}}
\begingroup
\renewcommand{\contentsname}{Appendix Contents}
\tableofcontents
\endgroup

\section{Behavior of the Functions of Absolute Sizes} \label{app:behavior of herfindahl}

\subsection{Proof of Proposition \ref{prop:weakness of known factor instrument}}
We need to find the asymptotic order of $z_t$. Consider its expectation. 
\begin{equation*}
    \e[z_t] = \e[S' D_N u_t] = \e[S' D_N] \e[u_t] = 0
\end{equation*}

So consider the variance of this term.
    \begin{align*}
        \var[z_t] &= \e[S' D_N u_t u_t' D_N S] = \e[S' D_N \e[u_t u_t' | \s] D_N S] \\
        &= \e[S' D_N \e[u_t u_t'] D_N S] =\e[S' D_N \Omega D_N S]
    \end{align*}
    Consider the term within the expectation in the display above.
    \begin{align*}
        S' D_N \Omega D_N S &\leq S'D_NS \cdot \gamma_{\text{max}}(\Omega) \\
                    & \leq S'S \cdot \gamma_{\text{max}}(D_N)  \cdot \gamma_{\text{max}}(\Omega) = S'S \cdot O(1)
    \end{align*}
    where the last equality arises from Assumption \ref{ass:bounded eigenvalues} and the fact that the eigenvalues of idempotent matrices are $\{0,1 \}$. Thus, $\gamma_{\text{max}}(D_N) = 1$. Thus, the asymptotic behavior of the term on the left hand side depends only on $S'S$, which is basically the Herfindahl of the disaggregated side. I will now analyze the behavior of this term.

Under Assumption \ref{ass:size_of_firm_power_law}, the absolute sizes of the individuals, $\s_i$ are drawn from
\begin{equation*}
    \p(\s_i>s) = cs^{-\mu}
\end{equation*}

The first and the second moments of the distribution are
\begin{align*}
    \e[\s] &= \int_1^{\infty} s \mu s^{-\mu-1} \,ds 
          =
          \begin{cases}
          \infty \, \text{if } \mu \in (0,1] \\
          \frac{\mu}{\mu-1} \, \text{if } \mu \in (1,\infty)
          \end{cases} \\
    \e[\s^2] &= \int_1^{\infty} s^2 \mu s^{-\mu-1} \,ds 
          =
          \begin{cases}
          \infty \, \text{if } \mu \in (0,2] \\
          \frac{\mu}{\mu-2} \, \text{if } \mu \in (2,\infty)
          \end{cases} 
\end{align*}

The individual share is given by $S_i = \frac{\s_i}{\sum_i \s_i}$. We will now characterize the Herfindahl of the system.
\begin{align*}
    S'S  &= \sum_i S_i^2 =  \sum_{i=1}^N \left[ \frac{S_i}{\sum_{j=1}^N S_j} \right]^2 \\
    &= \frac{1}{N} \frac{N^{-1} \sum_{i=1}^N S_i^2}{[N^{-1}\sum_{j=1}^N S_j]^2}
\end{align*}

This is easiest to characterize when $\mu > 2$. For when $\mu > 2$, $\e[S]$ \& $\e[S^2] < \infty$. In this case, the Herfindahl is
\begin{equation*}
    S'S = \frac{1}{N} \frac{N^{-1} \sum_{i=1}^N \s_i^2}{[N^{-1}\sum_{j=1}^N \s_j]^2} = O_{\p}(\frac{1}{N})
\end{equation*}
as $N^{-1} \sum_{i=1}^N \s_i^2$ and $N^{-1}\sum_{j=1}^N \s_j$ are both $O_{\p}(1)$ under Kolmogorov's Law of Large Numbers. 

When $\mu \in (0,1)$, both $\e[S] = \infty$ and $\e[S^2] = \infty$. Hence the Law of Large Numbers cannot be used for the sum of either $\s_i$ or $\s_i^2$. Following \citet{Gabaix2011TheFluctuations}, we will use an appropriate Central Limit Theorem for variables with infinite variance. Levy's theorem, which is Theorem 3.8.2 of \citet{Durrett1996Probability:Examples} goes as follows:
\begin{Theorem}[Lévy's Generalized Central Limit Theorem] \label{theorem:Levy's Theorem}
    Suppose $X_1,\dots,X_n$ are i.i.d with a distribution that satisfies
    \begin{enumerate}
        \item $\lim_{x \to \infty} \frac{\p(X_i > x)}{\p(|X_i| > x)} = \theta \in [0,1]$
        \item $\p(|X_i|> x) = x^{- \alpha} L(x)$, where $\alpha < 2$ and $L$ is slowly varying
    \end{enumerate}
    then for $S_n = X_1 + \dots + X_n$, there exists constants, $\kappa_n$ and $b_n$, given by 
    \begin{align*}
        \kappa_n &= \inf \{x: \p(|X_i| > x) \leq n^{-1} \} \\
        b_n &= n \e[X_i \cdot 1_{(|X_i| \leq \kappa_n)}]
    \end{align*}
    such that
    \begin{equation*}
        \frac{S_n - b_n}{\kappa_n} \xrightarrow{d} Y
    \end{equation*}
    where $Y$ has a nondegenerate distribution which follows a Levy-stable distribution with exponent $\mu$.
\end{Theorem}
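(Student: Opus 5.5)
The plan is the classical characteristic-function argument for triangular arrays, as in Durrett's proof. Write $X_{n,m} = X_m/\kappa_n$, $m \le n$, so that $(S_n - b_n)/\kappa_n = \sum_{m\le n} X_{n,m} - n\e[X_{n,1} 1_{(|X_{n,1}|\le 1)}]$. Three pieces of bookkeeping come first.
\begin{enumerate}
\item By the definition of $\kappa_n$ and regular variation of the tail, $n\,\p(|X_1|>\kappa_n)\to 1$.
\item More generally, for every fixed $x>0$, $n\,\p(|X_1|>x\kappa_n) \to x^{-\alpha}$. This uses that $L$ is slowly varying, so $L(x\kappa_n)/L(\kappa_n)\to 1$.
\item Condition 1 splits this limit between the two tails: $n\,\p(X_1>x\kappa_n)\to\theta x^{-\alpha}$ and $n\,\p(X_1<-x\kappa_n)\to(1-\theta)x^{-\alpha}$.
\end{enumerate}
These say that the empirical jump measure $n\,\p(X_{n,1}\in\cdot)$ converges vaguely on $\mathbb{R}\setminus\{0\}$ to the Lévy measure $\nu(dx) = \alpha\theta x^{-\alpha-1}1_{x>0}\,dx + \alpha(1-\theta)|x|^{-\alpha-1}1_{x<0}\,dx$.

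Next, fix $\epsilon\in(0,1)$ and split each summand at $|X_{n,m}|\le\epsilon$ versus $|X_{n,m}|>\epsilon$, and split the centering at the same point.

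\emph{Large jumps.} The number of indices with $|X_{n,m}|>\epsilon$ is asymptotically Poisson with mean $\epsilon^{-\alpha}$. Given that count, the jumps are i.i.d. with limiting law $\nu$ restricted to $\{|x|>\epsilon\}$ and normalized. Hence the characteristic function of the large-jump sum minus its centering part $n\e[X_{n,1}1_{(\epsilon<|X_{n,1}|\le1)}]$ converges to
\[
\exp\Big(\int_{|x|>\epsilon}(e^{itx}-1-itx1_{|x|\le1})\,\nu(dx)\Big).
\]
I will check this directly from $\big(1+\e[e^{itX_{n,1}}-1;|X_{n,1}|>\epsilon]\big)^n$ and the vague convergence. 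The centering integral $\int_{\epsilon<|x|\le1}x\,\nu(dx)$ is finite for fixed $\epsilon$, so there is no issue here.

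\emph{Small jumps.} The centered sum $\sum_m \big(X_{n,m}1_{(|X_{n,m}|\le\epsilon)} - \e[X_{n,1}1_{(|X_{n,1}|\le\epsilon)}]\big)$ has variance at most $n\e[X_{n,1}^2 1_{(|X_{n,1}|\le\epsilon)}]$. Writing this as $\int_0^\epsilon 2y\, n\,\p(|X_{n,1}|>y)\,dy$ and using Potter-type bounds on the slowly varying $L$, it behaves like $\int_0^\epsilon 2y^{1-\alpha}\,dy = \tfrac{2}{2-\alpha}\epsilon^{2-\alpha}$. This bound is where $\alpha<2$ enters, and it tends to $0$ as $\epsilon\to0$ uniformly in large $n$.

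I expect the main obstacle to be this uniformity near zero. The pointwise limit $n\,\p(|X_1|>y\kappa_n)\to y^{-\alpha}$ is not uniform as $y\downarrow 0$, so I need a Karamata/Potter bound of the form $L(y\kappa_n)/L(\kappa_n)\le C y^{-\eta}$ with $\eta<2-\alpha$, valid for $y\kappa_n$ beyond a fixed threshold. The remaining region $y\kappa_n$ bounded contributes $O(n/\kappa_n^2)$, and since $\kappa_n^\alpha \approx n$ with $\alpha<2$, that term is $o(1)$.

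Finally I combine the two pieces with a standard $\epsilon\to0$ double-limit argument (e.g.\ Billingsley's Theorem 3.2 on approximating sequences). The characteristic function of $(S_n-b_n)/\kappa_n$ then converges to
\[
\exp\Big(\int(e^{itx}-1-itx1_{|x|\le1})\,\nu(dx)\Big),
\]
which is well defined because $\int(x^2\wedge1)\,\nu(dx)<\infty$. This is the characteristic function of a nondegenerate stable law with exponent $\alpha$, and Lévy's continuity theorem gives the convergence in distribution to $Y$.
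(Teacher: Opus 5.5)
Your argument is correct and matches the proof in Durrett's Theorem 3.8.2, which the paper cites for this result without proving it. That proof also splits at $\epsilon\kappa_n$, obtains the compound-Poisson characteristic-function limit for the large jumps, bounds the truncated part in $L^2$ by order $\epsilon^{2-\alpha}$ via Karamata/Potter, and passes $\epsilon\to0$ with an approximation lemma. As your construction of $\nu$ implicitly requires, the hypothesis should read $\alpha\in(0,2)$ (the result fails at $\alpha=0$), and the limit is stable with exponent $\alpha$; the ``$\mu$'' in the statement is a typo.
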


When $\mu \in (0,1)$,
\begin{equation*}
    S'S = \frac{\sum_{i=1}^N \s_i^2}{ \left[ \sum_{j=1}^N \s_j \right]^2}
\end{equation*}

$\p(\s_j > s) = c s^{-\mu}$. Normalize $c=1$ for simplicity. Then the process $\{\s_i\}$ satisfies the conditions for Lévy's Generalized Central Limit Theorem with $\kappa_n = N^{\frac{1}{\mu}}$ and, 
\begin{align*}
    b_n = N \e[\s_i \cdot 1_{(|\s_i| \leq \kappa_n)}] = N \int_{0}^{\kappa_n} s \mu s^{-\mu - 1} ds = \frac{N \mu}{1-\mu} \kappa_n^{1 - \mu}
\end{align*}
Thus
\begin{equation*}
    \frac{b_n}{\kappa_n} = \frac{\mu}{1-\mu} \frac{N}{\kappa_n^{\mu}} = \frac{\mu}{1-\mu} 
\end{equation*}

Thus, $\sum_{j=1}^N \s_j = O_{\p}\left(N^{ \frac{1}{\mu}} \right)$.

For $\s_i^2$, $\p(\s_i^2 > s) = \p(\s_i > \sqrt{s}) = c (\sqrt{s})^{- \mu}= c s^{-\frac{\mu}{2}}$. Then the process $\{\s_i^2\}$ satisfies the conditions for Lévy's Generalized Central Limit Theorem with $\kappa_n = N^{\frac{2}{\mu}}$ and,
\begin{align*}
    b_n = N \e[\s_i^2 \cdot 1_{(|s_i^2| \leq \kappa_n)}] = \frac{N}{2} \int_{0}^{\kappa_n} \mu s s^{-\frac{\mu}{2} - 1} ds = \frac{N \mu}{2-\mu} \cdot \kappa_n^{1 - \frac{\mu}{2}}
\end{align*}
Thus
\begin{equation*}
    \frac{b_n}{\kappa_n} = \frac{\mu}{2-\mu} 
\end{equation*}

Thus, $\sum_{i=1}^N \s_i^2 = O_{\p}\left(N^ { \frac{2}{\mu}} \right)$. We can conclude
\begin{equation*}
    S'S = \frac{O_{\p}\left( N^{\frac{2}{\mu}} \right)}{ \left( O_{\p}\left( N^{\frac{1}{\mu}} \right) \right)^2} = O_{\p}(1)
\end{equation*}

Lévy's theorem was applied to $\sum_i \s_i$ and to $\sum_i \s_i^2$ one at a time. That gives the order of each sum on its own, but it says nothing about the ratio the two form. The two sums are built from the same sizes, so they can be handled together.

The lemma below does this. The largest sizes dominate both sums, and after rescaling those sizes converge to the points of a Poisson process. Both sums then converge jointly to functionals of that same process. The limit of $S'S$ is strictly positive almost surely, which is required in later proofs. 

\begin{Lemma}[Joint stable convergence of the size sums] \label{lemma:joint stable convergence}
    Let $\s_1, \dots, \s_N$ satisfy Assumption \ref{ass:size_of_firm_power_law} with $\mu \in (0,1)$, so that they are i.i.d.\ with $\p(\s_i > s) = s^{-\mu}$ on $[1,\infty)$. Set $\kappa_n = N^{1/\mu}$ and let $\Gamma_k = E_1 + \dots + E_k$, where $\{E_i\}$ are i.i.d.\ standard exponential. Then
    \begin{equation*}
        \left( \frac{1}{\kappa_n} \sum_{i=1}^N \s_i, \; \frac{1}{\kappa_n^2} \sum_{i=1}^N \s_i^2 \right) \xrightarrow{d} (L_1, L_2) \defeq \left( \sum_{k \geq 1} \Gamma_k^{-1/\mu}, \; \sum_{k \geq 1} \Gamma_k^{-2/\mu} \right)
    \end{equation*}
    jointly, where both series converge almost surely and $0 < L_2 \leq L_1^2 < \infty$ almost surely. Consequently
    \begin{equation*}
        S'S \xrightarrow{d} \frac{L_2}{L_1^2} \in (0,1] \quad \text{a.s.},
    \end{equation*}
    so $(S'S)^{-1} = O_{\p}(1)$ and $(S'S - 1/N)^{-1} = O_{\p}(1)$. Applying the same argument to the single sequence $\{\s_i^2\}$, whose tail index is $\mu/2$, gives $N^{-2/\mu} \sum_i \s_i^2 \xrightarrow{d} \sum_k \Gamma_k^{-2/\mu} > 0$ almost surely for every $\mu \in (0,2)$.
\end{Lemma}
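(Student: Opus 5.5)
The plan is the classical order-statistics (LePage) representation of stable laws. Because $\p(\s_i > s) = s^{-\mu}$, I can write $\s_i = U_i^{-1/\mu}$ with $U_i$ i.i.d.\ uniform on $(0,1)$. Let $\s_{(1)} \geq \s_{(2)} \geq \dots \geq \s_{(N)}$ be the order statistics. The uniform order statistics admit the exact representation $U_{(k)} \overset{d}{=} \Gamma_k/\Gamma_{N+1}$. Hence, jointly in $k$,
\begin{equation*}
    \frac{\s_{(k)}}{\kappa_n} \overset{d}{=} \left( \frac{\Gamma_{N+1}}{N} \right)^{1/\mu} \Gamma_k^{-1/\mu}, \qquad k = 1, \dots, N,
\end{equation*}
with $\Gamma_{N+1}/N \to 1$ almost surely by the strong law. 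The strategy is therefore to work on a single probability space carrying $\{\Gamma_k\}$. There the rescaled sums equal $(\Gamma_{N+1}/N)^{1/\mu}\sum_{k\le N}\Gamma_k^{-1/\mu}$ and $(\Gamma_{N+1}/N)^{2/\mu}\sum_{k\le N}\Gamma_k^{-2/\mu}$. I then show that these converge almost surely, and hence jointly in distribution, to $(L_1, L_2)$. Both coordinates are functions of the same sequence $\{\Gamma_k\}$, so joint convergence comes for free, and this is exactly the point that Lévy's theorem (Theorem \ref{theorem:Levy's Theorem}) applied coordinatewise could not deliver.

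The key step is almost-sure convergence of the series $\sum_k \Gamma_k^{-p/\mu}$ for $p \in \{1,2\}$, together with dominated convergence of the truncated sums. Since $\Gamma_k/k \to 1$ almost surely, there is a random $k_0$ with $\Gamma_k \geq k/2$ for all $k \geq k_0$. The tail is then dominated by $2^{p/\mu}\sum_k k^{-p/\mu}$, which is finite because $p/\mu > 1$ whenever $\mu < 1$. This is where $\mu\in(0,1)$ is used. The finitely many terms with $k < k_0$ are finite because $\Gamma_k > 0$ almost surely. Given this, $\sum_{k\le N}\Gamma_k^{-p/\mu} \to \sum_k \Gamma_k^{-p/\mu}$ almost surely by monotone convergence, and the prefactor $(\Gamma_{N+1}/N)^{p/\mu}\to 1$. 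The bounds follow directly. First, $L_2 \geq \Gamma_1^{-2/\mu} > 0$. Second, $L_2 \leq L_1^2$, because all terms are positive and $(\sum a_k)^2 \geq \sum a_k^2$. Third, finiteness is the convergence just established.

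For the consequences, I apply the continuous mapping theorem with $g(x,y) = y/x^2$, which is continuous on $\{x>0\}$ where $L_1$ lives almost surely. This gives $S'S = (\kappa_n^{-2}\sum \s_i^2)/(\kappa_n^{-1}\sum\s_i)^2 \xrightarrow{d} L_2/L_1^2 \in (0,1]$. Since the limit is almost surely strictly positive, $\p(S'S < \eta)\to \p(L_2/L_1^2<\eta)$, and this can be made small by choosing $\eta$ small. That yields $(S'S)^{-1}=O_\p(1)$. The same holds for $S'S - 1/N$, because $1/N\to0$ and the limit is unchanged. For the final claim I repeat the argument for the single sequence $\s_i^2 = U_i^{-2/\mu}$, whose tail index is $\mu/2\in(0,1)$ for any $\mu\in(0,2)$. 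Its normalising constant is $N^{2/\mu}$, and the series $\sum_k\Gamma_k^{-2/\mu}$ converges because $2/\mu>1$. Positivity again comes from the first term.

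The main obstacle is making the coupling legitimate, that is, justifying $U_{(k)}\overset{d}{=}\Gamma_k/\Gamma_{N+1}$ jointly over all $k\le N$ for each $N$. This is the standard Rényi representation of uniform order statistics, which I would cite. Because the representation holds in law for each fixed $N$, the almost-sure statements transfer only to convergence in distribution. That suffices here, but the argument must take care to phrase the limit as a distributional one and not to claim pathwise convergence of the original $\s_i$.
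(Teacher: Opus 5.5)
Your argument is correct, but it reaches the joint limit by a different route from the paper.

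\textbf{The paper's route.} It works with the point process $\sum_{i\le N}\delta_{\s_i/\kappa_n}$. It shows vague convergence of $N\p(\s_i/\kappa_n\in\cdot)$ and invokes Resnick's Proposition 3.21 to get convergence to a Poisson random measure whose ordered points are $\Gamma_k^{-1/\mu}$. It then applies the continuous mapping theorem to the truncated functional $T_\epsilon$, controls the discarded small jumps by Markov bounds uniform in $N$, and removes the truncation with Billingsley's Theorem 3.2.

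\textbf{Your route.} You use the exact Rényi representation $(U_{(1)},\dots,U_{(N)})\overset{d}{=}(\Gamma_1/\Gamma_{N+1},\dots,\Gamma_N/\Gamma_{N+1})$. For each $N$, this turns the pair of rescaled sums into an exact distributional copy $\bigl((\Gamma_{N+1}/N)^{1/\mu}\sum_{k\le N}\Gamma_k^{-1/\mu},\,(\Gamma_{N+1}/N)^{2/\mu}\sum_{k\le N}\Gamma_k^{-2/\mu}\bigr)$ on a single space, where almost sure convergence is elementary.

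\textbf{What yours buys.} It is shorter and self-contained: there is no vague convergence, no almost-sure continuity of $T_\epsilon$, and no approximation theorem. The role of $\mu<1$ is transparent as summability of $k^{-1/\mu}$, where the paper uses it in the truncated-mean bound $\frac{\mu}{1-\mu}\epsilon^{1-\mu}$. It also yields more. On the coupling space the ordered shares equal $\Gamma_k^{-1/\mu}/\sum_{j\le N}\Gamma_j^{-1/\mu}$ exactly, because the prefactor cancels. So they converge almost surely to $\Gamma_k^{-1/\mu}/L_1$, which is the $\mathrm{PD}(\mu,0)$ representation the paper imports from Pitman and Yor in Lemma~\ref{lemma:pitman1997}.

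\textbf{What it costs.} You lean on the exact Pareto form through the quantile transform $U^{-1/\mu}$. Under a general regularly varying tail $s^{-\mu}L(s)$ you would need Potter-type bounds on the quantile function to dominate the sums. The paper's point-process argument carries over with only Karamata-type changes to its truncation bounds.

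\textbf{A small point of phrasing.} The limit $\p(S'S<\eta)\to\p(L_2/L_1^2<\eta)$ requires $\eta$ to be a continuity point. The Portmanteau inequality $\limsup_N\p(S'S<\eta)\le\p(L_2/L_1^2\le\eta)$, which the paper uses, is all you need.
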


\begin{proof}
For fixed $x>0$ and all $N$ with $\kappa_n x \geq 1$, $N \p(\s_i/\kappa_n > x) = N (\kappa_n x)^{-\mu} = x^{-\mu}$, using $N \kappa_n^{-\mu} = 1$. So $N \p(\s_i/\kappa_n \in \cdot)$ converges vaguely on $(0,\infty]$ to the measure $\nu$ with $\nu(x,\infty] = x^{-\mu}$. By Proposition 3.21 of \citet{Resnick2007Heavy-TailPhenomena}, the point process $\Pi_N \defeq \sum_{i \leq N} \delta_{\s_i/\kappa_n}$ converges in distribution to a Poisson random measure $\Pi$ with intensity $\nu$. Ranking the points of $\Pi$ decreasingly and inverting the tail function, $x^{-\mu} = \Gamma_k$ if and only if $x = \Gamma_k^{-1/\mu}$, gives the points $\{\Gamma_k^{-1/\mu}\}_{k \geq 1}$ \citep{LePage1981ConvergenceStatistics}.

Fix $\epsilon > 0$ and set $T_{\epsilon}(m) \defeq \left( \int_{(\epsilon,\infty]} x \, m(dx), \int_{(\epsilon,\infty]} x^2 \, m(dx) \right)$. Since $\nu(\epsilon,\infty] < \infty$ and $\nu\{\epsilon\} = 0$, the map $T_{\epsilon}$ is almost surely continuous at $\Pi$ (Section 7.2 of \citealt{Resnick2007Heavy-TailPhenomena}). The continuous mapping theorem gives $T_{\epsilon}(\Pi_N) \xrightarrow{d} T_{\epsilon}(\Pi)$ jointly in $\mathbb{R}^2$.

With density $\mu s^{-\mu-1}$ on $[1,\infty)$, direct integration gives $\e[\s 1\{\s \leq w\}] = \frac{\mu}{1-\mu}(w^{1-\mu} - 1) \leq \frac{\mu}{1-\mu} w^{1-\mu}$ and $\e[\s^2 1\{\s \leq w\}] = \frac{\mu}{2-\mu}(w^{2-\mu} - 1) \leq \frac{\mu}{2-\mu} w^{2-\mu}$. Using $N \kappa_n^{-\mu} = 1$,
\begin{equation*}
    \e\left[ \frac{1}{\kappa_n} \sum_i \s_i 1\{\s_i \leq \epsilon \kappa_n\} \right] \leq \frac{\mu}{1-\mu} \epsilon^{1-\mu}, \qquad \e\left[ \frac{1}{\kappa_n^2} \sum_i \s_i^2 1\{\s_i \leq \epsilon \kappa_n\} \right] \leq \frac{\mu}{2-\mu} \epsilon^{2-\mu},
\end{equation*}
both free of $N$ and vanishing as $\epsilon \downarrow 0$. The first bound uses $\mu < 1$. By Markov's inequality the discarded small-jump parts are uniformly asymptotically negligible. On the limit side, $\Gamma_k/k \to 1$ almost surely by the strong law, so $\Gamma_k^{-1/\mu} \asymp k^{-1/\mu}$ and $\Gamma_k^{-2/\mu} \asymp k^{-2/\mu}$ with $1/\mu > 1$, and both series converge almost surely. Hence $T_{\epsilon}(\Pi) \to (L_1,L_2)$ almost surely as $\epsilon \downarrow 0$. Theorem 3.2 of \citet{Billingsley1999ConvergenceMeasures} then upgrades the convergence of $T_{\epsilon}(\Pi_N)$ to the joint statement of the lemma.

$L_1 \geq \Gamma_1^{-1/\mu} > 0$ and $L_2 > 0$ almost surely, and $L_2 \leq L_1^2$ because $\sum_k x_k^2 \leq (\sum_k x_k)^2$ for non-negative $x_k$. The map $(x,y) \mapsto y/x^2$ is continuous on $(0,\infty)^2$, so the continuous mapping theorem gives $S'S \xrightarrow{d} L_2/L_1^2$. Portmanteau gives $\limsup_N \p(S'S < \epsilon) \leq \p(L_2/L_1^2 \leq \epsilon) \to 0$ as $\epsilon \downarrow 0$, hence $(S'S)^{-1} = O_{\p}(1)$, and Slutsky extends this to $(S'S - 1/N)^{-1} = O_{\p}(1)$. For the single-sequence claim, repeat the argument up to this point for $\{\s_i^2\}$, which has exact tail $\p(\s_i^2 > s) = s^{-\mu/2}$ on $[1,\infty)$, with $\mu/2 < 1$ whenever $\mu \in (0,2)$.
\end{proof}

When $\mu \in (1,2)$, $\e[\s] < \infty$ and $\e[\s^2] = \infty$. As $\{\s_i \}$ are independent and as the first moment is finite, by Kolmogorov's Law of Large Numbers, we have
\begin{equation*}
    \frac{1}{N} \sum_{j=1}^N \s_i \xrightarrow{\text{a.s}} \e[\s_i]
\end{equation*}
The herfindahl is
\begin{align*}
    S'S = S'S = \frac{1}{N} \frac{N^{-1} \sum_{i=1}^N S_i^2}{[N^{-1}\sum_{j=1}^N S_j]^2} =  \frac{N^{-2} \sum_{i=1}^N S_i^2}{[\e[\s_j]]^2 + o_{\p}(1)}
\end{align*}
We have already seen that the process $\{\s_i^2\}$ satisfies the conditions for Lévy's Generalized Central Limit Theorem with $\kappa_n = N^{\frac{2}{\mu}}$ and $\frac{b_n}{\kappa_n} = \frac{\mu}{2 - \mu}$. Thus, $\sum_{i=1}^N \s_i^2 = O_{\p}\left(N^ { \frac{2}{\mu}} \right)$. Hence, we conclude
\begin{equation*}
    S'S =  O_{\p}\left( \frac{1}{N^{2 - \frac{2}{\mu}}} \right)
\end{equation*}

Two further properties of the shares are used later. Write $h_N \defeq S'S$ and $\tilde S \defeq D_N S$. In the rescaling \eqref{eq:a_N}, Proposition \ref{prop:weakness of known factor instrument} reads $a_N \sqrt{h_N} = O_{\p}(1)$ for every $\mu \notin \{1,2\}$.

The first lemma collects elementary bounds on a share vector and adds a lower bound on the rescaled Herfindahl. The upper bound on its own is not enough later, because the denominator of the estimator has to be inverted.

\begin{Lemma}[Share inequalities] \label{lemma:share inequalities}
    Let $S$ be a share vector, so $S_i \geq 0$ and $\sum_i S_i = 1$. Then
    \begin{enumerate}
        \item Deterministically, $1/N \leq h_N \leq 1$, $\max_j S_j \leq \sqrt{h_N}$, $\max_j |\tilde S_j| \leq \sqrt{h_N}$, $\sum_j |S_j - 1/N| \leq 2$, and $\|\tilde S\|^2 = h_N - 1/N$. For every $q \geq 2$, $\sum_j |\tilde S_j|^q \leq h_N^{q/2}$.
        \item Under Assumption \ref{ass:size_of_firm_power_law} with $\mu \notin \{1,2\}$, $\left[ a_N^2 (h_N - 1/N) \right]^{-1} = O_{\p}(1)$. Together with $a_N \sqrt{h_N} = O_{\p}(1)$, the rescaled strength is bounded above and away from zero in probability.
    \end{enumerate}
\end{Lemma}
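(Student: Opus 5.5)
Part 1 is deterministic and I would prove it item by item from $S_i \ge 0$ and $\sum_i S_i = 1$.

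The two Herfindahl bounds come first. Cauchy--Schwarz gives $1 = (\sum_i S_i)^2 \le N h_N$, which is the lower bound. Since $S_i^2 \le S_i$, we get $h_N \le \sum_i S_i = 1$.

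For the maxima, $\max_j S_j^2 \le \sum_j S_j^2 = h_N$. Also $\tilde S = D_N S = S - e_N/N$, so $\tilde S_j = S_j - 1/N$.

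Expanding the norm, $\|\tilde S\|^2 = \sum_j S_j^2 - 2\sum_j S_j/N + N/N^2 = h_N - 1/N$. This immediately gives $\max_j |\tilde S_j| \le \|\tilde S\| \le \sqrt{h_N}$. The triangle inequality gives $\sum_j |S_j - 1/N| \le \sum_j S_j + N \cdot (1/N) = 2$.

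For the $\ell^q$ bound with $q \ge 2$, I would use $\sum_j |\tilde S_j|^q \le \max_j |\tilde S_j|^{q-2}\,\|\tilde S\|^2$. This is at most $h_N^{(q-2)/2}\, h_N = h_N^{q/2}$.

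Part 2 needs a lower bound on $a_N^2(h_N - 1/N)$, and I would handle it regime by regime.

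\textbf{Case $\mu \in (0,1)$, where $a_N = 1$.} Lemma~\ref{lemma:joint stable convergence} already states $(h_N - 1/N)^{-1} = O_\p(1)$, so nothing more is needed.

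\textbf{Case $\mu \in (1,2)$.} Here $a_N^2 = N^{2-2/\mu}$, and $a_N^2 h_N$ equals $\big(N^{-2/\mu}\sum_i \s_i^2\big) \big/ \big(N^{-1}\sum_j \s_j\big)^2$. The single-sequence clause of Lemma~\ref{lemma:joint stable convergence} gives $N^{-2/\mu}\sum_i \s_i^2 \xrightarrow{d} \sum_k \Gamma_k^{-2/\mu} > 0$ almost surely. The strong law gives $N^{-1}\sum_j \s_j \to \e[\s] = \mu/(\mu-1) \in (0,\infty)$. Hence $a_N^2 h_N \xrightarrow{d}$ a strictly positive limit, and portmanteau gives $(a_N^2 h_N)^{-1} = O_\p(1)$. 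The centring term is $a_N^2/N = N^{1-2/\mu} \to 0$ because $\mu < 2$, so Slutsky transfers the bound to $a_N^2(h_N - 1/N)$.

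\textbf{Case $\mu > 2$.} Here $a_N^2 = N$, and $N h_N \to \e[\s^2]/(\e[\s])^2$ almost surely by the strong law. The difficulty is that $N(h_N - 1/N) = Nh_N - 1$, so the bound requires this limit to exceed $1$ strictly. That follows from $\var(\s) > 0$, since the Pareto law is nondegenerate. Explicitly, $\e[\s^2]/\e[\s]^2 - 1 = (\mu-1)^2/(\mu(\mu-2)) - 1 = 1/(\mu(\mu-2)) > 0$. The limit is therefore a positive constant, and the inverse is $O_\p(1)$.

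The last sentence of the lemma then combines this lower bound with the upper bound $a_N\sqrt{h_N} = O_\p(1)$ from Proposition~\ref{prop:weakness of known factor instrument}.

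The main obstacle is the $\mu > 2$ case, where subtracting $1/N$ removes the leading order of $Nh_N$. Its resolution rests on the strictly positive variance of the size distribution; everything else reduces to earlier results.
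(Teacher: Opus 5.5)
Your proof is correct and follows essentially the same route as the paper. Part 1 uses Cauchy--Schwarz and H\"older interpolation. Part 2 goes regime by regime: Lemma~\ref{lemma:joint stable convergence} for $\mu \in (0,1)$, the single-sequence stable limit with the strong law and Slutsky (using $a_N^2/N \to 0$) for $\mu \in (1,2)$, and the strong law with a nondegenerate size variance for $\mu > 2$. The differences are cosmetic. You bound $\max_j |\tilde S_j|$ through $\|\tilde S\|$ rather than through $\max(S_j, 1/N)$, and you make the limiting constant $1/(\mu(\mu-2))$ explicit.
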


\begin{proof}
Part 1. $h_N \geq (\sum_i S_i)^2/N = 1/N$ by Cauchy--Schwarz and $h_N \leq (\sum_i S_i)^2 = 1$. $S_j^2 \leq \sum_i S_i^2$ gives $\max_j S_j \leq \sqrt{h_N}$, and $|\tilde S_j| \leq \max(S_j, 1/N) \leq \sqrt{h_N}$. Next, $\sum_j |S_j - 1/N| \leq \sum_j S_j + \sum_j 1/N = 2$, and $\|\tilde S\|^2 = h_N - 1/N$ by direct calculation. The $\ell_q$ bound is Holder interpolation, $\sum_j |\tilde S_j|^q \leq (\max_j |\tilde S_j|)^{q-2} \sum_j \tilde S_j^2$.

Part 2. In each regime $a_N^2(h_N - 1/N)$ converges to an almost surely strictly positive limit, so its reciprocal is $O_{\p}(1)$. For $\mu \in (0,1)$, $a_N = 1$ and Lemma \ref{lemma:joint stable convergence} gives $h_N \xrightarrow{d} L_2/L_1^2 \in (0,1]$ while $1/N \to 0$. For $\mu \in (1,2)$, $a_N^2 h_N = N^{2-2/\mu} h_N \xrightarrow{d} Y/(\e \s)^2$, where $Y$ is the almost surely positive stable limit of $N^{-2/\mu} \sum_i \s_i^2$ from the single-sequence form of Lemma \ref{lemma:joint stable convergence} and the denominator converges by the strong law, while $a_N^2/N \to 0$. For $\mu > 2$, $a_N = \sqrt N$ and $N(h_N - 1/N) \xrightarrow{a.s.} \var(\s)/(\e \s)^2 > 0$ by the strong law, with $\var(\s) > 0$ because the power law is non-degenerate.
\end{proof}

The second lemma bounds the contrast between the share weights and equal weights. This is the term the denominator arguments set aside. Cauchy--Schwarz is too crude for it, so the proof conditions on the order statistics of the sizes and treats the assignment of sizes to units as a uniform permutation.

\begin{Lemma}[Permutation bound for the equal-weights contrast] \label{lemma:hajek bound}
    Let $r \defeq \Omega e_N$ and $\beta \defeq S' D_N \Omega e_N = S' D_N r$. Under Assumptions \ref{ass:size_of_firm_power_law} and \ref{ass:LN_time and cross sectional dependence}.2a, for every $\mu > 0$,
    \begin{equation*}
        \beta = O_{\p}\left( \sqrt{h_N} \right) = O_{\p}\left( a_N^{-1} \right).
    \end{equation*}
\end{Lemma}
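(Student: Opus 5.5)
The route is to rewrite $\beta$ as a linear permutation statistic, condition on the order statistics of the sizes, and bound its conditional second moment by the classical Hájek formula. Since $D_N$ is symmetric, $D_N S = S - e_N/N$, so
\begin{equation*}
    \beta = \sum_{j=1}^N \Big(S_j - \tfrac1N\Big) r_j = \sum_{j=1}^N S_j c_j, \qquad c_j \defeq r_j - \bar r, \quad \bar r \defeq \tfrac1N \sum_{k} r_k .
\end{equation*}
The second equality uses $\sum_j S_j = 1$, and by construction $\sum_j c_j = 0$.

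\textbf{Step 1: the weights $c_j$ are bounded.} By weak stationarity $\gamma(i,j) = \e[u_{it}u_{jt}] = \Omega_{ij}$, so Assumption \ref{ass:LN_time and cross sectional dependence}.2a gives $|r_j| = |\sum_i \Omega_{ij}| \leq \sum_i |\gamma(i,j)| \leq M$ uniformly in $j$ and $N$. Hence $|c_j| \leq 2M$ and $\sum_j c_j^2 \leq 4M^2 N$. The vector $c$ is nonrandom, or at least independent of the sizes by Assumption \ref{ass:size_of_firm_power_law}.2. I condition on it throughout.

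\textbf{Step 2: permutation representation.} By Assumption \ref{ass:size_of_firm_power_law}, the sizes $\s_1,\dots,\s_N$ are i.i.d. Conditional on the order statistics $\s_{(1)} \ge \dots \ge \s_{(N)}$, the assignment of sizes to units is therefore a uniformly random permutation $\pi$, independent of $c$. The order statistics fix the multiset of shares $\{S_{(k)}\}$, and with it $h_N = \sum_k S_{(k)}^2$, because the normaliser $\sum_j \s_j$ is permutation invariant. So $\beta = \sum_j S_{(\pi(j))} c_j$ is a linear rank statistic with scores $a_k = S_{(k)}$ and coefficients $c_j$.

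\textbf{Step 3: conditional moments.} Because $\sum_j c_j = 0$, the conditional mean is $\e[\beta \mid \text{order stats}] = \bar a \sum_j c_j = 0$. Hájek's variance formula for a linear permutation statistic gives
\begin{equation*}
    \var(\beta \mid \text{order stats}) = \frac{1}{N-1} \sum_k \Big(S_{(k)} - \tfrac1N\Big)^2 \sum_j c_j^2 = \frac{(h_N - 1/N)\sum_j c_j^2}{N-1} \leq \frac{4M^2 N}{N-1}\, h_N,
\end{equation*}
where I use $\|\tilde S\|^2 = h_N - 1/N$ from Lemma \ref{lemma:share inequalities}.

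\textbf{Step 4: conclusion.} Conditional Chebyshev gives $\p(|\beta| > K\sqrt{h_N} \mid \text{order stats}) \leq 8M^2/K^2$ for $N \geq 2$. Taking expectations yields $\beta = O_{\p}(\sqrt{h_N})$ for every $\mu > 0$. Since $h_N > 0$, dividing by $\sqrt{h_N}$ is legitimate. Finally, Proposition \ref{prop:weakness of known factor instrument}, in the form $a_N\sqrt{h_N} = O_{\p}(1)$, converts this into $\beta = O_{\p}(a_N^{-1})$.

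The main obstacle is Step 2. One must justify rigorously that, given the order statistics, the labelling is uniform and independent of $\Omega$. This is exactly where i.i.d. sizes and Assumption \ref{ass:size_of_firm_power_law}.2 enter. The argument also avoids any moment condition on the sizes, so it holds across all regimes, including $\mu<1$ where $\e\s = \infty$. A crude Cauchy--Schwarz bound, $|\beta| \le \|\tilde S\|\,\|c\| = \sqrt{h_N}\cdot O(\sqrt N)$, loses a factor $\sqrt N$; the permutation centering is what recovers it.
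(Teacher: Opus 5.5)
Your proposal is correct and follows the paper's proof step for step. You write $\beta$ as a linear permutation statistic with centred weights, and condition on the order statistics so that the labelling is a uniform permutation independent of the deterministic $r = \Omega e_N$. The exact permutation variance formula then gives $\e[\beta^2 \mid \text{order stats}] = (h_N - 1/N)\sum_j c_j^2/(N-1) = O(h_N)$, and conditional Chebyshev together with $a_N\sqrt{h_N} = O_{\p}(1)$ finishes the argument. The differences are cosmetic: you centre only $r$ rather than both vectors, and you use the cruder bound $\sum_j c_j^2 \le 4M^2 N$ where the paper uses $\sum_j \tilde r_j^2 \le \sum_j r_j^2 \le NM^2$.
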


\begin{proof}
Write $\bar r = \frac{1}{N} \sum_i r_i$ and $\tilde r_i = r_i - \bar r$, so that $\beta = \sum_i (S_i - 1/N) \tilde r_i$, using $\sum_i (S_i - 1/N) = 0$. By Assumption \ref{ass:LN_time and cross sectional dependence}.2a, $|r_j| = |\sum_i \gamma(i,j)| \leq M$, hence $\sum_j \tilde r_j^2 \leq \sum_j r_j^2 \leq N M^2$.

The sizes are i.i.d.\ and independent of the shocks by Assumption \ref{ass:size_of_firm_power_law}, while $r$ is deterministic. Condition on the order statistics $\s_{(1)} \geq \dots \geq \s_{(N)}$. By exchangeability the assignment of ranks to labels is a uniform permutation $\pi$, independent of $r$, ties having probability zero under the power law. Write $S_{(k)}$ for the share built from $\s_{(k)}$ and $c_k \defeq S_{(k)} - 1/N$, so that $\beta = \sum_k c_k \tilde r_{\pi(k)}$. Conditionally on the order statistics the $c_k$ are constants with $\sum_k c_k = 0$ and $\sum_k c_k^2 = h_N - 1/N$, while $\sum_j \tilde r_j = 0$ by construction.

For a uniform permutation, $\p(\pi(k) = j) = 1/N$ and $\p(\pi(k)=j, \pi(l)=m) = 1/[N(N-1)]$ for $k \neq l$ and $j \neq m$. The mean is immediate, $\e[\tilde r_{\pi(k)} \mid \sigma(\s_{(\cdot)})] = \frac{1}{N} \sum_j \tilde r_j = 0$, so $\e[\beta \mid \sigma(\s_{(\cdot)})] = 0$. For the second moment, the diagonal terms give $\e[\tilde r_{\pi(k)}^2 \mid \sigma(\s_{(\cdot)})] = \frac{1}{N} \sum_j \tilde r_j^2$, and sampling without replacement makes the off-diagonal draws negatively correlated,
\begin{equation*}
    \e[\tilde r_{\pi(k)} \tilde r_{\pi(l)} \mid \sigma(\s_{(\cdot)})] = \frac{1}{N(N-1)} \sum_{j \neq m} \tilde r_j \tilde r_m = - \frac{\sum_j \tilde r_j^2}{N(N-1)}, \qquad k \neq l.
\end{equation*}
Likewise $\sum_{k \neq l} c_k c_l = - \sum_k c_k^2$, so the two pieces combine with the same sign,
\begin{equation*}
    \e[\beta^2 \mid \sigma(\s_{(\cdot)})] = \sum_k c_k^2 \frac{\sum_j \tilde r_j^2}{N} + \sum_k c_k^2 \frac{\sum_j \tilde r_j^2}{N(N-1)} = \frac{h_N - 1/N}{N-1} \sum_j \tilde r_j^2 \leq \frac{N}{N-1} h_N M^2.
\end{equation*}
This is the exact variance formula for linear permutation statistics. Conditional Chebyshev gives $\beta = O_{\p}(\sqrt{h_N})$, and Proposition \ref{prop:weakness of known factor instrument} converts this to $O_{\p}(a_N^{-1})$.
\end{proof}

Cauchy--Schwarz on its own gives $|\beta| \leq \|\tilde S\| \|\tilde r\| = O_{\p}(\sqrt{h_N} \sqrt N)$, which is a factor $\sqrt N$ worse. The gain comes from the shape of $\beta$. It is linear in the deviations $S_i - 1/N$, and those weights sum to zero, so the terms cancel as in a random walk instead of accumulating. The strength of the instrument, $S' D_N \Omega S \asymp h_N$, is quadratic in the same deviations. The bound is therefore proportional to the strength itself, which is what makes the contrast negligible for every $\mu$. The cruder deterministic bound $|\beta| \leq \max_i |\tilde r_i| \sum_i |S_i - 1/N| \leq 4M$ is enough whenever $a_N = O(1)$, but not at $\mu > 2$.

\subsection{Other functions of absolute sizes}
In this paper, I am primarily concerned with the case when $\mu \in (1,2)$. Under this setting, some other terms are also of interest. Consider the asymptotic behavior of $S_i^2$.
\begin{align*}
    S_i^2 &= \left[ \frac{\s_i}{\sum_{j=1}^N \s_j} \right]^2 = O_{\p}(N^{-2}) \cdot \s_i^2 
\end{align*}

When the expectation of $\s$ is finite, we can find the order of $\s_i$ by the following steps. We will start by finding the distribution of a new variable, $S^{-\mu}$. As $\p(S > s) = s^{- \mu}$
\begin{align*}
    \p(\s^{-\mu} > s) &= \p(\s > s^{-\frac{1}{\mu}}) \\
                      &= \left[ s^{-\frac{1}{\mu}} \right]^{-\mu} \\
                      &= s 
\end{align*}
For a uniformly distributed random variable, U, $\p(U < s) = s$ for $s \in [0,1]$. As $\p(\s^{-\mu} > s) = s$, we have $s \in [0,1]$, and $\s^{-\mu}$ is distributed as $1 - \mathbb{U}[0,1] \sim \mathbb{U}[0,1]$.

Define a new random variable, $U_i \defeq 1 - F_{\s}(s_i) = 1 - \p(\s^{-\mu} < s_i) = \p(\s^{-\mu} > s_i) = s_i^{-\mu}$

As $s_i$ is i.i.d, $U_i$ is i.i.d from $\mathbb{U}[0,1]$. Denote the order statistic of $U_i$ and $s_i$ by $U_{(i)}$ and $s_{(i)}$ respectively. These order statistics are related as:
\begin{equation}
    U_{(i)} = 1 - F_{\s}(s_{(N-i+1)})
\end{equation}

The order statistics of the uniform distribution on the unit interval have Beta marginal distributions. That is $U_{(i),N} \sim \beta(i,N-i+1)$. The expected size of the $i$'th largest firm can be found by:
\begin{align*}
    \e[\s_{(N-i+1),N}^{-\mu}] &= \e[U_{(i)}] = \frac{i}{N+1}
\end{align*}
By Theorem A.7 of \citet{Li2007NonparametricPractice},
\begin{align*}
    \s(i) &= O_{\p}\left( \frac{1}{N^{-\frac{1}{\mu}}} \right) \quad \text{for finite values of $i$} \\
    \s(i) &= O_{\p}\left( 1 \right) \quad \text{for large values of $i$}
\end{align*}

Recall $S_i^2 = O_{\p}(N^{-2}) \cdot \s_i^2$. For large values of the absolute sizes, $\s_i = O_{\p}(1)$, and hence $S_i^2 = O_{\p}\left( \frac{1}{N^{2}} \right)$. But for smaller values of the absolute sizes, $\s_i = O_{\p}\left( \frac{1}{N^{-\frac{1}{\mu}}} \right)$, and hence, $S_i^2 = O_{\p}\left( \frac{1}{N^{2 - \frac{2}{\mu}}} \right)$. Thus,
\begin{equation*}
    S_i^2 = O_{\p}\left( \frac{1}{N^{2 - \frac{2}{\mu}}} \right) = O_{\p}\left( \frac{1}{N^{2 \delta}} \right)
\end{equation*}
Similarly, it is very easy to see that for any $\gamma > 0$,
\begin{equation*}
    S_i^{2 + \gamma} = O_{\p}\left( \frac{1}{N^{(2 + \gamma) \delta}} \right)
\end{equation*}

\subsection{Share-weighted Factor Loadings} 
Consider $S'\tilde{\Lambda} = \sum_{j=1}^N S_j \tilde{\Blambda}_j$. We will state the asymptotic behavior of this term in the following proposition.
  \begin{Prop} \label{prop:behavior of S times lambda}
      Suppose Assumption \ref{ass:size_of_firm_power_law} holds. Then
      \begin{equation*}                                                         
          \sum_{j=1}^N S_j \tilde{\Blambda}_j =
          \begin{cases}                                                         
              O_{\p}(1) & \mu \in (0,1), \\                 
              O_{\p}\!\left(\frac{1}{N^{\delta}}\right) & \mu \in (1,2), \\     
              O_{\p}\!\left(\frac{1}{\sqrt{N}}\right) & \mu > 2,          
          \end{cases}                                                           
      \end{equation*}                                                           
      where $\delta = 1 - 1/\mu$.                           
  \end{Prop}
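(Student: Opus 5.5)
The plan is to condition on the sizes and exploit that, by Assumption \ref{ass:size_of_firm_power_law}.3, the loadings $\tilde{\Blambda}_j$ are i.i.d., independent of the sizes $\s_j$ (hence of $S$), and centered. The centering comes from Assumption \ref{ass:weak stationarity and idiosyncrasy}, which normalizes the columns of $\tilde\Lambda$ to have zero mean; I read this as $\e[\tilde{\Blambda}_j]=0$ in the random-loadings formulation. Conditionally on $S$, the vector $\sum_j S_j \tilde{\Blambda}_j$ is then a weighted sum of i.i.d. mean-zero vectors with deterministic weights. I would therefore compute its conditional second moment rather than attempt any stable-law argument for the product directly.

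Concretely, I would first show $\e[\sum_j S_j \tilde{\Blambda}_j \mid S]=0$. Independence across $j$ then gives
\begin{equation*}
\e\Big[\Big\|\sum_{j=1}^N S_j \tilde{\Blambda}_j\Big\|^2 \,\Big|\, S\Big] = \sum_{j=1}^N S_j^2\, \e\|\tilde{\Blambda}_j\|^2 = h_N \,\e\|\tilde{\Blambda}_1\|^2 ,
\end{equation*}
and $\e\|\tilde{\Blambda}_1\|^2<\infty$ follows from the fourth-moment bound in Assumption \ref{ass:size_of_firm_power_law}.5. Conditional Chebyshev yields $\p(\|S'\tilde\Lambda\| > K\sqrt{h_N}\mid S)\le \e\|\tilde{\Blambda}_1\|^2/K^2$ uniformly in $S$. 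Taking expectations over $S$ gives $S'\tilde\Lambda = O_{\p}(\sqrt{h_N})$ unconditionally.

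The second step reads off the rates from the Herfindahl analysis proving Proposition \ref{prop:weakness of known factor instrument}. There, $h_N=O_{\p}(1)$ for $\mu\in(0,1)$, $h_N=O_{\p}(N^{-(2-2/\mu)})=O_{\p}(N^{-2\delta})$ for $\mu\in(1,2)$, and $h_N=O_{\p}(N^{-1})$ for $\mu>2$. Equivalently, $a_N\sqrt{h_N}=O_{\p}(1)$ as recorded after Lemma \ref{lemma:joint stable convergence}. Composing $O_{\p}(\sqrt{h_N})$ with these orders gives the three cases. I need the elementary fact that if $X=O_{\p}(Y)$ and $Y=O_{\p}(c_N)$ then $X=O_{\p}(c_N)$. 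In the strong case even the crude bound $\|S'\tilde\Lambda\|\le \max_j\|\tilde{\Blambda}_j\|$ under the bounded-loading part of Assumption \ref{ass:LN_strong factor structure} would suffice, but the second-moment route handles all regimes uniformly.

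The main obstacle is the centering. If the loadings are only sample-centered, i.e. $\sum_j \tilde{\Blambda}_j=0$ deterministically, rather than population mean-zero, independence across $j$ fails. In that case I would write $S'\tilde\Lambda = (S-e_N/N)'\tilde\Lambda$ and reuse the permutation argument of Lemma \ref{lemma:hajek bound} with $\tilde r$ replaced by each column of $\tilde\Lambda$. That gives a conditional second moment $\frac{h_N-1/N}{N-1}\sum_j\|\tilde{\Blambda}_j\|^2 = O_{\p}(h_N)$, using $N^{-1}\sum_j\|\tilde{\Blambda}_j\|^2=O_{\p}(1)$ from Assumption \ref{ass:LN_strong factor structure}.2. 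The rate conclusion is the same. A secondary point is making the conditional-to-unconditional passage rigorous when $h_N$ is itself random and heavy-tailed. The bound on the conditional probability is uniform in $S$, so dominated convergence handles this without moment conditions on $h_N$.
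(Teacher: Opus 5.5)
Your argument is correct, and it follows a genuinely different route from the paper's. The paper keeps the ratio form $S'\tilde\Lambda=\sum_j \s_j\tilde{\Blambda}_j/\sum_j \s_j$ and bounds the two sums separately. For $\mu<2$ it uses Breiman's theorem to give $\s_j\,g'\tilde{\Blambda}_j$ tail index $\mu$. It then applies L\'evy's generalized CLT, with the centering $b_n$ set to zero by a symmetry argument, to get $\sum_j\s_j\tilde{\Blambda}_j=O_{\p}(N^{1/\mu})$, and divides by a denominator of order $N^{1/\mu}$ or $N$. For $\mu>2$ it uses the Lindeberg--L\'evy CLT.

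You never split the ratio. Conditioning on the sizes, the identity $\e[\|S'\tilde\Lambda\|^2\mid S]=h_N\,\e\|\tilde{\Blambda}_1\|^2$ gives $S'\tilde\Lambda=O_{\p}(\sqrt{h_N})$ in one step. The three rates then come from the Herfindahl orders behind Proposition~\ref{prop:weakness of known factor instrument}.

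What your route buys:
\begin{itemize}
\item It needs only second moments, centering and independence, and treats all three regimes at once.
\item It avoids the delicate steps of the paper's version. Breiman's theorem gives only tail equivalence, with constant $\e|g'\tilde{\Blambda}_j|^{\mu}$, not an exact tail. The step $b_n=0$ is justified by symmetry of $g'\tilde{\Blambda}_j$, which a zero mean does not imply; the rate survives only because $|b_n|/\kappa_n$ stays bounded.
\item For $\mu\in(0,1)$ you need no lower bound on the denominator. In the paper, the quotient of two $O_{\p}(N^{1/\mu})$ terms is $O_{\p}(1)$ only given such a bound, which Lemma~\ref{lemma:joint stable convergence} supplies.
\item It shows directly why a single $a_N$ normalizes both $S'\tilde\Lambda$ and the instrument: both are $O_{\p}(\sqrt{h_N})$.
\end{itemize}

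The paper's route, in return, delivers limit laws for the numerator rather than only an upper bound, though the proposition asserts only orders. Your fallback for sample-centered loadings, via the permutation variance of Lemma~\ref{lemma:hajek bound}, is also sound.
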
   
\begin{proof}
Recall that $S_j = \frac{\s_j}{\sum_{i=1}^N \s_i}$, and hence
\begin{align*}
    S' \tilde{\Lambda} = \frac{\sum_{j=1}^N \s_j \tilde{\Blambda}_j}{ \sum_{j=1}^N \s_j}, \quad N^{\delta} S' \tilde{\Lambda} = \frac{N^{-\frac{1}{\mu}} \sum_{j=1}^N \s_j \tilde{\Blambda}_j}{N^{-1} \sum_{j=1}^N \s_j}
\end{align*}
We will consider the three cases $\mu \in (0,1)$, $\mu \in (1,2)$, and $\mu > 2$ separately.

Consider $\mu \in (0,1)$. The process $\{\s_i\}$ satisfies the conditions for Lévy's Generalized Central Limit Theorem in Theorem \ref{theorem:Levy's Theorem} with $\kappa_n = N^{\frac{1}{\mu}}$ and, $\sum_{j=1}^N \s_j = O_{\p}\left(N^{ \frac{1}{\mu}} \right)$. Applying the same theorem to the numerator requires a little more care. 

The numerator is a $r-1$ dimensional vector. So we first reduce it to a scalar and then use the Cramer-Wold device to get the final results. For some $r-1$ dimensional vector of constants, $g$, such that $g'g = 1$, consider
\begin{equation*}
    \sum_{j=1}^N g' \s_j \tilde{\Blambda}_j = \sum_{j=1}^N \s_j g' \tilde{\Blambda}_j \defeq \sum_{j=1}^N \s_j X_j(g)
\end{equation*}
We want to examine the probability distribution of the product of the random variables, $\s_j X_j(g)$. Note that for $\epsilon > 0$,
\begin{equation*}
    \e | X_j(g)|^{\mu + \epsilon} \leq \| g \|^{\mu + \epsilon} \e \|\tilde{\Blambda} \|^{\mu + \epsilon} < \infty
\end{equation*}
where the last inequality comes from the fact that $\mu \in (0,1)$, and $\e \|\tilde{\Blambda} \|^{4} < \infty$. As $S_j$ is independent of $\tilde{\Blambda}_j$, $S_j$ is also independent of $X_j(g)$. Hence, we can apply Breiman's theorem as extended to all values of $\mu$ in \citet{Cline1994SubexponentialityVariables} and \citet{Denisov2007OnEquations} to characterize the distribution of the product $\s_j$ and $X_j(g)$ as 
\begin{equation*}
    \p(|\s_j X_j(g)| > s) = |\e[X_j(g)^{\mu}]| \cdot s^{-\mu}
\end{equation*}

Thus the process $\{\s_j X_j(g)\}$ satisfies the conditions for Lévy's
Generalized Central Limit Theorem with $\kappa_n = N^{\frac{1}{\mu}}$. It remains
to show $b_n = 0$. Since $\s_j > 0$ and $X_j(g)$ is independent of $\s_j$
with $\e[X_j(g)] = 0$ (as $\e[\tilde{\Blambda}_j] = 0$), the distribution
of $\s_j X_j(g)$ is symmetric around zero conditional on $\s_j$. The
truncation set $\{|\s_j X_j(g)| \leq \kappa_n\}$ is symmetric in $X_j(g)$
conditional on $\s_j$, so
\begin{equation*}
    b_n = N \e[\s_j X_j(g) 1_{|\s_j X_j(g)| \leq \kappa_n}]
        = N \e\!\left[\s_j \e\!\left[ X_j(g) 1_{|X_j(g)| \leq \kappa_n/\s_j}
          \,\middle|\, \s_j \right]\right] = 0
\end{equation*}
where the inner expectation vanishes by the symmetry of $X_j(g)$ around zero.
Thus, $\sum_{j=1}^N \s_j X_j(g) = O_{\p}\left(N^{ \frac{1}{\mu}} \right)$. By the Cramer-Wold device, we can conclude that $\sum_{j=1}^N \s_j \tilde{\Blambda}_j = O_{\p}\left(N^{ \frac{1}{\mu}} \right)$. Hence $S'\tilde{\Lambda} = O_p(N^{1/\mu}) / O_p(N^{1/\mu}) = O_p(1)$.

Now consider $\mu \in (1,2)$. For the denominator, $\e[\s] < \infty$ when $\mu > 1$, so $N^{-1}\sum_{j=1}^N \s_j \xrightarrow{\text{a.s.}} \e[\s]$ by Kolmogorov's Law of Large Numbers. The Breiman--Levy analysis of the numerator is identical to the $\mu \in (0,1)$ case, yielding $\sum_{j=1}^N \s_j \tilde{\Blambda}_j = O_{\p}(N^{1/\mu})$. Hence $S'\tilde{\Lambda} = O_p(N^{1/\mu}) / O_p(N) =
O_p(1/N^{\delta})$.

For $\mu > 2$, use the Lindeberg-Lévy Central Limit Theorem to show that
$N^{\frac{1}{2}} S' \tilde{\Lambda} = O_{\p}(1)$.
\begin{equation*}
    N^{\frac{1}{2}} S' \tilde{\Lambda} = \frac{ N^{-\frac{1}{2}}
    \sum_{j=1}^N \s_j \tilde{\Blambda}_j}{N^{-1} \sum_{j=1}^N \s_j}
\end{equation*}
For $\mu > 2$, $N^{-1} \sum_{j=1}^N \s_j = O_{\p}(1)$ by the KLLN.
For the numerator, $\mu > 2$ implies $\e[\s_j^2] < \infty$, and $\e \| \tilde{\Blambda}_j \|^4 < \infty$
by Assumption \ref{ass:size_of_firm_power_law}.4. By independence
(Assumption \ref{ass:size_of_firm_power_law}.3),
\begin{equation*}
    \e[\s_j \tilde{\Blambda}_j] = \e[\s_j] \e[\tilde{\Blambda}_j] = 0, \quad
    \e[\s_j^2 \| \tilde{\Blambda}_j \|^2] = \e[\s_j^2]
    \e[\| \tilde{\Blambda}_j \|^2] < \infty
\end{equation*}
Thus the Lindeberg-Lévy CLT applies and
$N^{-\frac{1}{2}} \sum_{j=1}^N \s_j \tilde{\Blambda}_j = O_{\p}(1)$.

\end{proof}

\section{Central Limit Theorem} \label{app:central_limit_theorems}

This is the main Central Limit Theorem of the paper. Other central limit theorems follow from this theorem interchanging appropriate variable.

\begin{Theorem}(Central Limit Theorem) \label{theorem:central limit theorem}
Suppose $u_t$ is a vector that satisfies Assumption \ref{ass:weak stationarity and idiosyncrasy} and \ref{ass:LN_time and cross sectional dependence}, and $S$ is a vector that satisfies Assumption \ref{ass:size_of_firm_power_law} with $\mu \in (1,2) \cup (2, \infty)$. $Z_t$ is a weakly stationary scalar process, uncorrelated with the demeaned shocks conditionally on the shares and loadings, $\e[Z_t \Bar{u}_{jt} \mid \s, \Lambda] = 0$ almost surely for every $j$, and such that $\{(u_t',Z_t)\}$ is a strong mixing sequence of size $-(\frac{2 + \pi}{\pi})$, for some $\pi > 0$. When $Z_t$ is independent of $(\s, \Lambda)$, the conditional orthogonality reduces to the unconditional one, $\e[Z_t \Bar{u}_{jt}] = 0$. There is a constant $M < \infty$, independent of $N$, with $\e[|Z_t|^{8 + 2\pi} \mid \s, \Lambda] \leq M$ almost surely, and $\sup_j \e|u_{jt}|^{8 + 2\pi} \leq M$. The long-run variance is non-degenerate, $[\Omega(\s, \Lambda)]^{-1} = O_{\p}(1)$, so that the spectral density of $X_t$ at frequency zero is bounded away from zero. Define
\begin{equation*}
    \delta \;=\;
    \begin{cases}
        1 - \dfrac{1}{\mu} & \text{if } \mu \in (1, 2), \\[4pt]
        \dfrac{1}{2} & \text{if } \mu > 2.
    \end{cases}
\end{equation*}

Then, for $X_t = Z_t \, N^{\delta} \sum_{j=1}^N S_j \bar{u}_{jt}$, we have the following convergence in distribution:
\begin{equation*}
    \frac{\displaystyle \frac{1}{\sqrt{T}} \sum_{t=1}^T X_t}{\displaystyle \sqrt{ \Omega(\s, \Lambda) }} \xrightarrow{d} \n(0,1)
\end{equation*}
where $\Omega(\s, \Lambda) = \gamma_0 (\s, \Lambda) + 2 \sum_{h=1}^{\infty} \gamma_h (\s, \Lambda)$, with $\gamma_h = \cov(X_t, X_{t-h} \mid \s, \Lambda)$.

\end{Theorem}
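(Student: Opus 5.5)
Conditional on $(\s,\Lambda)$, the weights $S$ are fixed, so I will prove the statement as a CLT for a conditionally strong-mixing scalar sequence. The plan is to apply the Ibragimov/Herrndorf central limit theorem for strong mixing arrays: for a zero-mean, weakly stationary sequence that is strong mixing of size $-(2+\pi)/\pi$, the normalized partial sum is asymptotically normal as long as $\sup_t \e|X_t|^{2+\pi}$ is bounded and the long-run variance is non-degenerate. Several facts transfer directly to conditional versions. Because $\s$ and $\Lambda$ are independent of the time-series shocks (Assumption \ref{ass:size_of_firm_power_law}), the mixing coefficients of $\{(u_t',Z_t)\}$ are unchanged by conditioning. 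Since $X_t$ is a measurable function of $(u_t,Z_t)$ for fixed weights, it inherits mixing of the same size. Finally, the mean vanishes, $\e[X_t\mid \s,\Lambda]=N^\delta\sum_j S_j \e[Z_t\bar u_{jt}\mid\s,\Lambda]=0$, by the orthogonality hypothesis.

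The key step is a moment bound uniform in $N$. I will write $\sum_j S_j\bar u_{jt}=\sum_j \tilde S_j u_{jt}$ with $\tilde S=D_NS$ and use Hölder:
\[
\e|X_t|^{4+\pi}\le \big(\e|Z_t|^{8+2\pi}\big)^{1/2}\Big(\e\big|N^\delta\textstyle\sum_j\tilde S_ju_{jt}\big|^{8+2\pi}\Big)^{1/2}.
\]
For the second factor I will apply the Rosenthal bound of Assumption \ref{ass:LN_time and cross sectional dependence}.5 with $a_j=N^\delta\tilde S_j$ and $q=8+2\pi$. This gives two terms. The first is $N^{q\delta}\big(\sum_j\tilde S_j^2\,\e u_{jt}^2\big)^{q/2}\le C\,(N^{2\delta}h_N)^{q/2}$. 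The second is $N^{q\delta}\sum_j|\tilde S_j|^q M\le M\,(N^{2\delta}h_N)^{q/2}$, by the $\ell_q$ bound in Lemma \ref{lemma:share inequalities}. By Proposition \ref{prop:weakness of known factor instrument}, $N^{2\delta}h_N=a_N^2h_N=O_\p(1)$. The conditional moment is therefore bounded by a random constant that is $O_\p(1)$, and it does not depend on $t$.

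The main obstacle is that this constant is only bounded in probability, not almost surely, while the Herrndorf CLT needs a deterministic bound. I will handle this with a subsequence, or truncation, argument. Fix $K$ and work on the event $A_K=\{a_N^2h_N\le K\}$. On $A_K$ the moment bound holds with a constant depending only on $K$, and the CLT applies conditionally, with the triangular-array version allowing the weights to vary with $N$. Since $\p(A_K^c)$ can be made arbitrarily small uniformly in $N$, the unconditional distributional limit follows.

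Two further points complete the argument. First, the long-run variance series $\Omega=\gamma_0+2\sum_h\gamma_h$ converges absolutely by Davydov's covariance inequality, $|\gamma_h|\le C\alpha(h)^{\pi/(2+\pi)}\|X_t\|_{2+\pi}^2$, together with the summability in Assumption \ref{ass:mixing for stationary time series}. This also shows that $\Omega(\s,\Lambda)=O_\p(1)$. Second, the hypothesis $[\Omega]^{-1}=O_\p(1)$ supplies the non-degeneracy needed to studentize, and the same $A_K$-type restriction, now intersected with $\{\Omega\ge 1/K\}$, handles it. Finally, because the array depends on $N$ through the weights, I will confirm that the blocking argument underlying Herrndorf's theorem uses only uniform moment and mixing bounds. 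Those bounds are exactly what the previous steps deliver.
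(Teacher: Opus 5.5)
Your argument is correct and reaches the result by a different route than the paper.

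The paper conditions on $(\s,\Lambda)$ and converts strong mixing into conditional $\psi$-dependence (Proposition 2.2 of \citet{Kojevnikov2021LimitVariables}). It then verifies their Condition ND for the linear ``network'' $d(t,s)=|t-s|$ and applies their Theorem 3.2. Its moment bound applies Rosenthal to $\tilde S$, returns to $S$ through the $c_r$ inequality, and derives the orders of $\sum_j \s_j^2$ and $\sum_j \s_j^{8+2\pi}$ separately from L\'evy's generalized CLT or the LLN, with case splits at $\mu=2$ and $\mu=8+2\pi$.

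You use a classical triangular-array CLT for $\alpha$-mixing variables instead. Your moment bound goes through the deterministic interpolation $\sum_j|\tilde S_j|^q\le h_N^{q/2}$ of Lemma~\ref{lemma:share inequalities}, which collapses both Rosenthal terms to $(a_N^2h_N)^{q/2}$. That needs no case analysis. Because it uses only $a_N\sqrt{h_N}=O_{\p}(1)$, it also covers Corollary~\ref{corollary:central limit theorem} and the boundary indices of Remark~\ref{remark:CLT boundary indices} without change; in the corollary the bound is even deterministic, since $h_N\le 1$.

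Your truncation on $A_K\cap\{\Omega\ge 1/K\}$ also makes explicit a step the paper passes over when it moves from ``$O_{\p}(1)$ uniformly in $N$'' to ``finite a.s.''. Theorem 3.2 of Kojevnikov et al., like the array CLTs you invoke, requires the conditional moments to be bounded uniformly along the sequence. When you write it out, state two things:
\begin{itemize}
\item applying the array CLT along arbitrary sequences of admissible weights makes the conditional distribution function of the studentized sum given $(\s,\Lambda)$ converge to $\Phi$ uniformly over the truncated set (a subsequence argument);
\item the same uniform Davydov bound lets you replace $\var(\sum_t X_t)/T$ by $\Omega$.
\end{itemize}

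What the paper's route buys is reuse. Its Condition ND verification is what later delivers the LLN and the HAC consistency result (Proposition 4.1 of Kojevnikov et al.), so the whole inference layer rests on one framework.

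One sentence of yours should be tightened. When $Z_t$ depends on $(\s,\Lambda)$, as $v_{Ht}$ does through $\lambda_H$, the pair $(u_t,Z_t)$ is not independent of $(\s,\Lambda)$. The conditional mixing should instead rest on $Z_t$ being, given $(\s,\Lambda)$, a fixed measurable function of the shock process, which is itself independent of $(\s,\Lambda)$ and mixing, as in Lemma~\ref{lemma:v_H multiplier}.
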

\begin{proof}
    Define $X_t = Z_t N^{\delta} \sum_{j=1}^N S_j \bar{u}_{jt} $. Note that $X_t$'s very distant from each other have a non-zero correlation due to the presence of the common factor in the form of $\sum_j S_j$. Hence we cannot use regular CLT's on the sum, $\frac{1}{\sqrt{T}} \sum_t X_t$. However, note that once we condition on the sigma algebra generated by $\s$, $X_t$ is a strong mixing sequence. We can apply CLT on this conditional variable. 

    This type of setting is very common in the networks literature. I will be using the CLT in Theorem 3.2 of \citet{Kojevnikov2021LimitVariables} to show that $\frac{1}{\sqrt{T}} \sum_t X_t | S$ converges in distribution to a normal distribution. Then I will show that the unconditional variable, appropriately scaled has a standard normal distribution.
    
    In the first step, I will show that our setting satisfies the conditions required for the CLT. In the subsequent notes, conditioning on $\s$ should be understood as conditioning on $\sigma(\s, \Lambda)$. The factor loadings are independent of $(u_t, F_t, \varepsilon_t)$ by Assumption \ref{ass:size_of_firm_power_law}.4, so enlarging the conditioning sigma algebra from $\sigma(\s)$ to $\sigma(\s, \Lambda)$ leaves every step below unchanged. The enlargement matters only when $Z_t$ depends on the loadings.

    For given $\s$, the process $N^{\delta} S' D_N u_t$ is a strong mixing sequence of the same size as $u_t$. Similarly, for given $\s$, the process $Z_t N^{\delta} S' D_N u_t$ is a strong mixing sequence of the same size as $(u_t', Z_t)$. Hence, we can conclude that the process $X_t | \s$ is a strong mixing sequence with coefficient $\alpha(h)$, such that for some $\pi > 0$,
    \begin{equation*}
        \sum_{h=1}^{\infty} \alpha(h)^{\frac{\pi}{2+ \pi}} < \infty
    \end{equation*}

    Then by Proposition 2.2 of \citet{Kojevnikov2021LimitVariables}, the process $\{ X_t \}$ is conditionally $\psi-$dependent given $\sigma(\s)$, with dependence coefficients given the strong mixing coefficients, $\{ \alpha(h) \}_{h \geq 1}$

    Now consider the conditional mean of $X_t$,
    \begin{equation*}
        \e[X_t | S] = N^{\delta} \sum_j \e[S_j Z_t \bar{u}_{jt} | S ] = N^{\delta} \sum_j S_j \e[ Z_t \bar{u}_{jt} | S] = 0 \quad \text{a.s}
    \end{equation*}
    where the last equality is the assumed conditional orthogonality $\e[Z_t \Bar{u}_{jt} \mid \s, \Lambda] = 0$. This is the only step of the proof that restricts the dependence between $Z_t$ and the shocks, and it is imposed on the demeaned shocks $\Bar{u}_{jt}$ rather than on $u_{jt}$. A multiplier correlated with $u_t$ is therefore admissible, provided the correlation is common across units and so annihilated by $D_N$. The conditional form matters only when $Z_t$ depends on $(\s, \Lambda)$, as it does for the disaggregated multiplier of Lemma \ref{lemma:v_H multiplier}.
    
    Now I will show that, for some $\pi > 0$
    \begin{align*}
        \e[| X_t |^{4+ \pi} | \s] < \infty \quad \text{a.s}
    \end{align*}

Since $Z_t$ and $u_t$ are not assumed independent of each other, apply the Cauchy--Schwarz inequality, conditional on $(\s, \Lambda)$:
\begin{align*}
    \e[|X_t|^{4+\pi} \mid \s, \Lambda] &= N^{\delta(4+\pi)} \, \e\!\left[ |Z_t|^{4+\pi} \cdot \Big| \sum_{j=1}^N S_j \bar u_{jt} \Big|^{4+\pi} \,\Big|\, \s, \Lambda \right] \\
    &\leq N^{\delta(4+\pi)} \cdot \big(\e[|Z_t|^{8+2\pi} \mid \s, \Lambda]\big)^{1/2} \cdot \bigg( \e \Big[ \Big| \sum_{j=1}^N S_j \bar u_{jt} \Big|^{8+2\pi} \,\Big|\, \s \Big] \bigg)^{1/2}.
\end{align*}

For the second factor, the de-meaning rewrites the share-weighted sum of $\bar u_{jt}$ as a different share-weighted sum of $u_{jt}$:
\begin{equation*}
    \sum_j S_j \bar u_{jt} \;=\; \sum_j S_j \big(u_{jt} - \tfrac{1}{N}\sum_i u_{it}\big) \;=\; \sum_j \tilde S_j \, u_{jt}, \qquad \tilde S_j \;:=\; S_j - \tfrac{1}{N},
\end{equation*}
using $\sum_k S_k = 1$. Assumption \ref{ass:LN_time and cross sectional dependence}.5 applies directly to the demeaned weights $\{\tilde S_j\}$:
\begin{equation*}
    \e\!\left[\Big|\sum_j \tilde S_j u_{jt}\Big|^{8+2\pi} \,\Big|\, \s\right] \leq C_q \left[\Big(\sum_j \tilde S_j^2 \, \e u_{jt}^2\Big)^{4+\pi} + \sum_j |\tilde S_j|^{8+2\pi} \, \e|u_{jt}|^{8+2\pi}\right].
\end{equation*}
The demeaned weights satisfy $\sum_j \tilde S_j^2 = \sum_j S_j^2 - 1/N \leq \sum_j S_j^2$ (direct calculation), and by the $c_r$ inequality, $\sum_j |\tilde S_j|^q \leq 2^{q-1}\big(\sum_j S_j^q + N^{1-q}\big)$ for $q = 8+2\pi$. The $N^{1-q}$ term is of strictly smaller order than $\sum_j S_j^q$ in the regimes considered, so it is absorbed into the constant. Hence
\begin{equation}\label{eq:rosenthal_bound}
    \e\!\left[\Big|\sum_j S_j \bar u_{jt}\Big|^{8+2\pi} \,\Big|\, \s\right] \leq C_q' \left[\Big(\sum_j S_j^2 \, \e u_{jt}^2\Big)^{4+\pi} + \sum_j S_j^{8+2\pi} \, \e|u_{jt}|^{8+2\pi}\right].
\end{equation}

Set $C_1^2 = M$, the uniform bound on $\e[|Z_t|^{8+2\pi} \mid \s, \Lambda]$ assumed in the theorem. Set $C_2^2 = \sup_{jt} \e|u_{jt}|^{8+2\pi}$, which is at most $M$ by the uniform moment bound assumed in the theorem, and $\bar\sigma^2 = \sup_j \e u_{jt}^2$, which is at most $M^{\frac{1}{4+\pi}}$ by Jensen's inequality. Both constants are independent of $N$. Let $C = C_1 \cdot \sqrt{C_q' \cdot \max(\bar\sigma^{8+2\pi}, C_2^2)}$. Combining the two factors,
\begin{align*}
    \e[|X_t|^{4+\pi} \mid \s, \Lambda] &\leq C \cdot N^{\delta(4+\pi)} \cdot \sqrt{\,\Big(\sum_j S_j^2\Big)^{4+\pi} + \sum_j S_j^{8+2\pi} \,}.
\end{align*}

Writing $S_j = \s_j / \sum_i \s_i$,
\begin{align*}
    \Big(\sum_j S_j^2\Big)^{4+\pi} \;=\; \frac{(\sum_j \s_j^2)^{4+\pi}}{(\sum_i \s_i)^{8+2\pi}}, \qquad \sum_j S_j^{8+2\pi} \;=\; \frac{\sum_j \s_j^{8+2\pi}}{(\sum_i \s_i)^{8+2\pi}}.
\end{align*}
For $\mu > 1$, $N^{-1} \sum_i \s_i$ is finite by the LLN, so $(\sum_i \s_i)^{8+2\pi} = \Theta_{\p}(N^{8+2\pi})$. For each numerator, $\p(|\s_i|^q > s) = \p(|\s_i| > s^{1/q}) = c \, s^{-\mu/q}$, with tail index $\mu/q$. For $\mu \in (1,2)$ and $q = 2$, $\mu/q \in (1/2, 1)$; for $q = 8+2\pi$, $\mu/q < 1/4$. In both cases $\mu/q < 2$, so $\{|\s_i|^q\}$ satisfies the conditions of Lévy's Generalized Central Limit Theorem with $\kappa_n = N^{q/\mu}$ and
\begin{equation*}
    b_n = N \, \e\!\left[|\s_i|^q \cdot \mathbf 1_{(|\s_i|^q \leq \kappa_n)}\right] = \frac{N \mu}{q - \mu} \cdot \kappa_n^{1 - \mu/q}, \qquad \frac{b_n}{\kappa_n} = \frac{\mu}{q - \mu}.
\end{equation*}
Hence $\sum_j |\s_j|^q = O_{\p}(N^{q/\mu})$, giving
\begin{align*}
    \Big(\sum_j S_j^2\Big)^{4+\pi} &= O_{\p}\!\left(N^{(4+\pi) \cdot 2/\mu - (8+2\pi)}\right) = O_{\p}\!\left(N^{-(8+2\pi)\delta}\right), \\
    \sum_j S_j^{8+2\pi} &= O_{\p}\!\left(N^{(8+2\pi)/\mu - (8+2\pi)}\right) = O_{\p}\!\left(N^{-(8+2\pi)\delta}\right),
\end{align*}
where $\delta = 1 - 1/\mu$. Therefore
\begin{equation*}
    \e[|X_t|^{4+\pi} \mid \s, \Lambda] \leq C \cdot N^{\delta(4+\pi)} \cdot \sqrt{O_{\p}\!\left(N^{-(8+2\pi)\delta}\right)} = O_{\p}(1)
\end{equation*}
uniformly in $N$. Hence $\e[|X_t|^{4+\pi} \mid \s, \Lambda] < \infty$ a.s.

\paragraph{Acceptable range of $\mu$.}
The argument above was stated for $\mu \in (1,2)$, where $\delta = 1 - 1/\mu$ and Lévy's Generalized Central Limit Theorem applies to both numerators $\sum_j \s_j^q$ for $q \in \{2, 8+2\pi\}$ (each with tail index $\mu/q < 1$). I now verify that the same conclusion $\e[|X_t|^{4+\pi}] = O(1)$ continues to hold for $\mu > 2$, where $\delta = 1/2$ by Proposition \ref{prop:weakness of known factor instrument} and $\s_j$ has finite variance.

For $\mu > 2$, the first Rosenthal term changes regime. Now $\sum_j \s_j^2$ has tail index $\mu/2 > 1$, so the Law of Large Numbers (rather than Lévy's GCLT) applies, giving $\sum_j \s_j^2 = N \, \e\s^2 \cdot (1 + o_{\p}(1))$ and hence
\begin{equation*}
    \sum_j S_j^2 \;=\; \Theta_{\p}(N^{-1}), \qquad \Big(\sum_j S_j^2\Big)^{4+\pi} \;=\; \Theta_{\p}\!\left(N^{-(4+\pi)}\right).
\end{equation*}
Combined with the prefactor $N^{\delta(4+\pi)} = N^{(4+\pi)/2}$, the contribution of this term inside the square root is exactly $O(1)$.

For the second Rosenthal term, $\sum_j \s_j^{8+2\pi}$ has tail index $\mu/(8+2\pi) < 1$ for any $\mu < 8+2\pi$, so Lévy's GCLT continues to apply with $\sum_j \s_j^{8+2\pi} = O_{\p}(N^{(8+2\pi)/\mu})$, giving $\sum_j S_j^{8+2\pi} = O_{\p}(N^{-(8+2\pi)(1-1/\mu)})$. Combined with the $N^{(4+\pi)/2}$ prefactor, this contribution scales as $N^{(4+\pi)(1/\mu - 1/2)}$, which is $o(1)$ for $\mu > 2$. (For $\mu \geq 8+2\pi$, the LLN takes over and the contribution becomes $O_{\p}(N^{-(7+2\pi)/2})$, of strictly smaller order.)

Hence $\e[|X_t|^{4+\pi} \mid \s, \Lambda] = O_{\p}(1)$ uniformly in $N$ for all $\mu \in (1,2) \cup (2, \infty)$, with $\delta = (1-1/\mu) \wedge 1/2$. The boundary $\mu = 2$ is excluded. There $\sum_j \s_j^2 = \Theta_{\p}(N \ln N)$ by Appendix \ref{app:edge_cases}, so with $\delta = 1/2$ the first Rosenthal contribution grows as $(\ln N)^{(4+\pi)/2}$. Remark \ref{remark:CLT boundary indices} covers the boundary with the log-corrected normalization.

All that is now left is to verify \textbf{Condition ND} in \citet{Kojevnikov2021LimitVariables}. This condition deals with the denseness of the network. The CLT requires that the denseness of the network does not grow as the distance increases. We can think of the process, $\{ X_t \}$ as a linear network. Hence we can intuitively see that the denseness of the network does not grow with distance. I will formally verfiy this. 

For ND(a), I need to show that there exists some $p >4$, such that $h^{\frac{3}{2}} \alpha(h)^{1 - \frac{1}{p}} = o(1)$. By assumption, for some $\pi > 0$, $h^{\frac{2 + \pi}{\pi}} \alpha(h) = o(1)$. Thus,
\begin{align*}
    h^{\frac{3}{2}} \alpha(h)^{1 - \frac{1}{p}} &= \left( h^{\frac{2 + \pi}{\pi}} \alpha(h) \right)^{1 - \frac{1}{p}} \cdot h^{\beta} = o(1) \cdot  h^{\beta} 
\end{align*}
where $\beta = \frac{3}{2} + \frac{2 + \pi}{\pi p} - \frac{2 + \pi}{\pi}$. If $\beta < 0$, we are done.
\begin{equation*}
    \beta < 0 \iff p > \frac{4 + 2 \pi}{4 - \pi}
\end{equation*}
That is, for the $\pi $ that satisfies the mixing rate, we can always choose a $p > 
4$ such that ND(a) is satisfied. 

For ND(b), we need to define some terms as used in \citet{Kojevnikov2021LimitVariables}. Let $N_T$ be the set of time-series unit indices. $N_T(t;h)$ denote the set of the nodes that are within the distance $h$ from node $t$, and $N_n^{\partial}(t;h)$ denote the set of the nodes that are exactly the distance $h$ from node $t$. Formally
\begin{equation*}
    N_T(t;h) = \{s \in N_T ; d_T(t,s) \leq h \} \quad N_T^{\partial}(t;h) = \{s \in N_T ; d_T(t,s) = h \}
\end{equation*}
where $d_T(t,s) = |t - s|$, as the network is linear. For the linear network,
\begin{equation*}
    N_T(t;h) = 2h \quad N_T^{\partial}(t;h) = 2 \quad \forall t
\end{equation*}

Define
\begin{equation*}
    \delta_T^{\partial}(h;k) = \frac{1}{T} \sum_{t \in N_T}|N_T^{\partial}(t;h)|^k = 2^k \quad \forall s
\end{equation*}
where the $|\cdot|$ of a set refers to its cardinality. Define
\begin{equation*}
    \Delta_T(h,m;k) = \frac{1}{T} \sum_{t \in N_T} \max_{s \in N_T^{\partial}(t;h)} | N_T(t;m) \setminus N_T(s,h-1)|^k
\end{equation*}

The value of $\Delta_T(h,m;k) $ depends on the value of $m$. But we need to show condition ND(b) only for some $m \to \infty$. Hence, let $m = h -1$. As $h \to \infty$, we also have $m \to \infty$. Thus 
\begin{equation*}
     \Delta_T(h,m;k) = 2^k
\end{equation*}

Define
\begin{equation*}
    c_T(h,m;k) = \inf_{\alpha > 1}[\Delta_T(h,m;k \alpha)]^{\frac{1}{\alpha}} \cdot \left[ \delta_T^{\partial}(h; \frac{\alpha}{\alpha - 1})  \right]^{1 - \frac{1}{\alpha}} = 2^{2k}
\end{equation*}
Thus, the condition ND(b) is
\begin{equation*}
    \frac{1}{T^\frac{k}{2}} \sum_{h=1}^{\infty} 2^{2k} \alpha(h)^{1 - \frac{k + 2}{p}} = o(1)
\end{equation*}
for any choice of $p > 0$ and $k \in \{ 1,2\}$ as $\sum_{h=1}^{\infty} \alpha(h)^{\frac{\pi}{2+ \pi}} < \infty$. We have already verified ND(c). 
Thus, by Theorem 3.2 of \citet{Kojevnikov2021LimitVariables}, we have
    \begin{equation*}
        \frac{\displaystyle \frac{1}{\sqrt{T}} \sum_{t=1}^T X_t | \s}{\displaystyle \sqrt{ \Omega(\s) }} \xrightarrow{d} \n(0,1)
    \end{equation*}

    To go from the conditional to unconditional, define $X_T \defeq  \frac{1}{\sqrt{T}} \sum_{t=1}^T X_t$ and the conditional CDF
    \begin{equation*}
        F_{\s}(x) \defeq \p \left( \frac{X_T}{\sqrt{\Omega(S)}} \leq x | \s \right)
    \end{equation*}

    The CLT above gives $\lim_{T \to \infty}   F_{\s(x)} = \Phi(x)$, where $\Phi(x)$ is the CDF of the standard normal distribution. 
    
    We are interested in the unconditional CDF in the limit, $F(x) = \lim_{T \to \infty} \p \left( \frac{X_T}{\sqrt{\Omega(\s)}} \leq x \right) = \lim_{T \to \infty} \e[F_{\s(x)}]$. As $|F_{\s(x)}| \leq 1$, we can apply the dominated convergence theorem to interchange limits and integral. Thus,
    \begin{equation*}
        F(x) = \e \big[\lim_{T \to \infty}   F_{\s(x)} \big] = \e[\Phi(x)] = \Phi(x)
    \end{equation*}
\end{proof}

\begin{remark}[Boundary indices] \label{remark:CLT boundary indices}
The conclusion of Theorem \ref{theorem:central limit theorem} holds at the boundary indices $\mu = 1$ and $\mu = 2$ with $N^{\delta}$ replaced by the log-corrected normalization of Appendix \ref{app:edge_cases}. The normalization is $a_N = \ln N$ at $\mu = 1$ and $a_N = \sqrt{N / \ln N}$ at $\mu = 2$. The statistic is self-normalized, so the deterministic rescaling cancels between the numerator and $\sqrt{\Omega(\s, \Lambda)}$. Only the moment bound needs re-verification. Let $q = 8 + 2\pi$. At $\mu = 2$, Appendix \ref{app:edge_cases} gives $\sum_j S_j^2 = O_{\p}(\ln N / N)$. Lévy's Generalized Central Limit Theorem applied to $\s_j^{q}$, with tail index $2/q < 1$, gives $\sum_j \s_j^{q} = O_{\p}(N^{q/2})$ and hence $\sum_j S_j^{q} = O_{\p}(N^{-q/2})$. Then
\begin{equation*}
    a_N^{4+\pi} \, \sqrt{ \Big(\sum_j S_j^2\Big)^{4+\pi} + \sum_j S_j^{q} } = O_{\p}(1).
\end{equation*}
At $\mu = 1$, Appendix \ref{app:edge_cases} gives $\sum_j S_j^2 = O_{\p}(1/(\ln N)^2)$ and $\sum_i \s_i = O_{\p}(N \ln N)$. The same argument with tail index $1/q$ gives $\sum_j \s_j^{q} = O_{\p}(N^{q})$ and hence $\sum_j S_j^{q} = O_{\p}((\ln N)^{-q})$. The display above is again $O_{\p}(1)$. Every other step of the proof is free of $\mu$.
\end{remark}

\begin{Corollary}\label{corollary:central limit theorem}
Suppose $u_t$ is a vector that satisfies Assumption \ref{ass:weak stationarity and idiosyncrasy} and \ref{ass:LN_time and cross sectional dependence}, and $S$ is a vector that satisfies Assumption \ref{ass:size_of_firm_power_law} with $\mu \in (0,1)$. $Z_t$ is a weakly stationary scalar process, uncorrelated with the demeaned shocks conditionally on the shares and loadings, $\e[Z_t \Bar{u}_{jt} \mid \s, \Lambda] = 0$ almost surely for every $j$, and such that $\{(u_t',Z_t)\}$ is a strong mixing sequence of size $-(\frac{2 + \pi}{\pi})$, for some $\pi > 0$. There is a constant $M < \infty$, independent of $N$, with $\e[|Z_t|^{8 + 2\pi} \mid \s, \Lambda] \leq M$ almost surely, and $\sup_j \e|u_{jt}|^{8 + 2\pi} \leq M$. The long-run variance is non-degenerate, $[\Omega(\s, \Lambda)]^{-1} = O_{\p}(1)$, so that the spectral density of $X_t$ at frequency zero is bounded away from zero.

Then, for $X_t = Z_t \sum_{j=1}^N S_j \bar{u}_{jt}$ (i.e., the form of Theorem \ref{theorem:central limit theorem} with $\delta = 0$), we have the following convergence in distribution:
\begin{equation*}
    \frac{\displaystyle \frac{1}{\sqrt{T}} \sum_{t=1}^T X_t}{\displaystyle \sqrt{ \Omega(\s, \Lambda) }} \xrightarrow{d} \n(0,1)
\end{equation*}
where $\Omega(\s, \Lambda) = \gamma_0 (\s, \Lambda) + 2 \sum_{h=1}^{\infty} \gamma_h (\s, \Lambda)$, with $\gamma_h = \cov(X_t, X_{t-h} \mid \s, \Lambda)$.

\end{Corollary}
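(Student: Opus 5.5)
The plan is to rerun the proof of Theorem~\ref{theorem:central limit theorem} with $\delta = 0$. Only one step depended on $\mu$: the uniform conditional moment bound on $X_t$. Everything else carries over unchanged once we condition on $\sigma(\s,\Lambda)$. This includes the conditional strong mixing of $X_t$ with the same size as $(u_t',Z_t)$, the conditional $\psi$-dependence from Proposition 2.2 of \citet{Kojevnikov2021LimitVariables}, the zero conditional mean from $\e[Z_t\Bar u_{jt}\mid\s,\Lambda]=0$, and conditions ND(a)--ND(b) for the linear time network. Likewise the passage from the conditional to the unconditional CDF by dominated convergence is free of $\mu$. So the body of the argument is to recheck the moment step in the regime $\mu\in(0,1)$.

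For that step I use the same Cauchy--Schwarz split,
\begin{equation*}
\e[|X_t|^{4+\pi}\mid\s,\Lambda]\le \big(\e[|Z_t|^{8+2\pi}\mid\s,\Lambda]\big)^{1/2}\Big(\e\Big[\Big|\sum_j\tilde S_j u_{jt}\Big|^{8+2\pi}\,\Big|\,\s\Big]\Big)^{1/2}.
\end{equation*}
The first factor is at most $M^{1/2}$. I bound the second with the Rosenthal bound of Assumption~\ref{ass:LN_time and cross sectional dependence}.5, applied to the demeaned weights $\tilde S_j=S_j-1/N$.

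Here, unlike the case $\mu>1$, no stable-law rate calculation is needed, because Lemma~\ref{lemma:share inequalities}.1 gives deterministic bounds. These are $\sum_j\tilde S_j^2=h_N-1/N\le h_N\le 1$ and $\sum_j|\tilde S_j|^{q}\le h_N^{q/2}\le 1$ for $q=8+2\pi$. Hence the Rosenthal bound is at most $C_q\big[\bar\sigma^{q}+\sup_j\e|u_{jt}|^q\big]$, a constant free of $N$. I expect this to be cleaner than the $\mu>1$ case. In that case the prefactor $N^{\delta(4+\pi)}$ had to be balanced against the decay of $h_N$, whereas here there is no prefactor and the weights are bounded by construction. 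Consequently $\e[|X_t|^{4+\pi}\mid\s,\Lambda]\le C$ almost surely and uniformly in $N$. This is stronger than the $O_{\p}(1)$ statement obtained for $\mu>1$.

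The main point needing care is that the limit is nondegenerate, which is what justifies self-normalization. For $\mu\in(0,1)$ the shares do not settle down: $h_N$ converges only in distribution, to $L_2/L_1^2$, by Lemma~\ref{lemma:joint stable convergence}. As a result $\Omega(\s,\Lambda)$ remains a random, non-vanishing quantity rather than converging to a constant. I handle this exactly as the statement allows. The CLT is conditional on $(\s,\Lambda)$, and the hypothesis $[\Omega(\s,\Lambda)]^{-1}=O_{\p}(1)$ keeps the denominator away from zero. Lemma~\ref{lemma:share inequalities}.2, which gives $(h_N-1/N)^{-1}=O_{\p}(1)$, confirms that this hypothesis is compatible with the share structure.

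One point must be checked because the conditional distribution of $X_t$ changes with $N$ through $S$. Theorem~3.2 of \citet{Kojevnikov2021LimitVariables} is a triangular-array result. It therefore needs only the uniform moment bound just shown and mixing coefficients that do not depend on $N$. Both hold, since the mixing coefficients are inherited from $(u_t',Z_t)$. With this, the conditional CLT holds for almost every share realization, and dominated convergence yields the unconditional standard normal limit.
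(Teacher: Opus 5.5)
Your argument is correct and has the same skeleton as the paper's proof. You carry over every ingredient of Theorem~\ref{theorem:central limit theorem} and recheck only the conditional $(4+\pi)$-th moment. You do this with the same Cauchy--Schwarz split and the Rosenthal bound of Assumption~\ref{ass:LN_time and cross sectional dependence}.5 applied to $\tilde S_j = S_j - 1/N$.

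The difference is in how the two Rosenthal terms are controlled. The paper first moves to $\sum_j S_j^q$ by the $c_r$ inequality and absorbs an $N^{1-q}$ remainder. It then applies L\'evy's generalized central limit theorem to $\sum_j \s_j^q$ and $\sum_i \s_i$ to get $\sum_j S_j^q = O_{\p}(1)$, so its moment bound holds only in probability. You instead use the deterministic inequalities $\sum_j \tilde S_j^2 \leq h_N \leq 1$ and $\sum_j |\tilde S_j|^q \leq h_N^{q/2} \leq 1$ from Lemma~\ref{lemma:share inequalities}.1. These give a constant bound almost surely, with no stable-law input.

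Your route is more elementary and also stronger. The triangular-array CLT of \citet{Kojevnikov2021LimitVariables} asks for a moment bound that is uniform in $N$ along the realized shares. Your constant bound supplies that directly, while a bound that is merely $O_{\p}(1)$ does not by itself.

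The same inequality would also shorten the $\mu > 1$ case. There it bounds the moment by a constant times $(a_N \sqrt{h_N})^{4+\pi}$, which is $O_{\p}(1)$ by Proposition~\ref{prop:weakness of known factor instrument}. The paper's route has only the merit of mirroring the rate calculations it already performs for $\mu > 1$.
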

\begin{proof}
The only step that requires separate treatment is the moment finiteness $\e[|X_t|^{4+\pi} \mid \s, \Lambda] < \infty$ a.s. The remaining ingredients ($\psi$-dependence, conditional mean zero, Condition ND) carry over verbatim from the proof of Theorem \ref{theorem:central limit theorem}.

By Cauchy--Schwarz, conditional on $(\s, \Lambda)$,
\begin{equation*}
    \e[|X_t|^{4+\pi} \mid \s, \Lambda] \;\leq\; \big(\e[|Z_t|^{8+2\pi} \mid \s, \Lambda]\big)^{1/2} \cdot \bigg(\e\!\left[\Big|\sum_j S_j \bar u_{jt}\Big|^{8+2\pi} \,\Big|\, \s \right]\bigg)^{1/2}.
\end{equation*}
For the second factor, the de-meaning argument used in the proof of Theorem \ref{theorem:central limit theorem} applies verbatim: $\sum_j S_j \bar u_{jt} = \sum_j \tilde S_j u_{jt}$ with $\tilde S_j = S_j - 1/N$, and Assumption \ref{ass:LN_time and cross sectional dependence}.5 applied to $\{\tilde S_j\}$ together with $\sum_j \tilde S_j^2 \leq \sum_j S_j^2$ and $\sum_j |\tilde S_j|^q \leq 2^{q-1}(\sum_j S_j^q + N^{1-q})$ yields
\begin{equation*}
    \e\!\left[\Big|\sum_j S_j \bar u_{jt}\Big|^{8+2\pi} \,\Big|\, \s\right] \leq C_q' \left[\Big(\sum_j S_j^2 \, \e u_{jt}^2\Big)^{4+\pi} + \sum_j S_j^{8+2\pi} \, \e|u_{jt}|^{8+2\pi}\right].
\end{equation*}

For $\mu \in (0,1)$, the heavy-tail concentration of $S$ delivers $\sum_j S_j^q = O_{\p}(1)$ for every $q \geq 1$. Indeed, $\s_j^q$ has tail index $\mu/q \leq \mu < 1$, so by Lévy's Generalized Central Limit Theorem,
\begin{equation*}
    \sum_j \s_j^q \;=\; O_{\p}\!\left(N^{q/\mu}\right), \qquad \sum_i \s_i \;=\; O_{\p}\!\left(N^{1/\mu}\right),
\end{equation*}
and hence $\sum_j S_j^q = (\sum_j \s_j^q) / (\sum_i \s_i)^q = O_{\p}(1)$. Both Rosenthal terms are therefore $O_{\p}(1)$, and the absorbed $N^{1-q}$ correction is of strictly smaller order. Hence $\e[|X_t|^{4+\pi} \mid \s, \Lambda] = O_{\p}(1)$ uniformly in $N$. The remaining steps are identical to the proof of Theorem \ref{theorem:central limit theorem}.
\end{proof}

\subsection{The Disaggregated Multiplier} \label{app:CLT endogenous multiplier}

The disaggregated moment replaces the multiplier $\varepsilon_t$ by
$v_{Ht} = \lambda_H' F_t + H' u_t$. The Lemma below verifies that $v_{Ht}$ satisfies the
conditions that Theorem \ref{theorem:central limit theorem} and Corollary
\ref{corollary:central limit theorem} impose on the multiplier.

\begin{Lemma} \label{lemma:v_H multiplier}
Suppose Assumptions \ref{ass:weak stationarity and idiosyncrasy} to
\ref{ass:mixing for stationary time series} hold, and $H$ is given by
\eqref{eq:expression for H}. Then $Z_t = v_{Ht}$ satisfies the conditions imposed on the
multiplier in Corollary \ref{corollary:central limit theorem} when $\mu \in (0,1)$, and
those imposed on the multiplier in Theorem \ref{theorem:central limit theorem} when
$\mu > 1$.
\end{Lemma}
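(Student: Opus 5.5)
I would verify the multiplier conditions one at a time, in the order the Corollary and Theorem~\ref{theorem:central limit theorem} list them: weak stationarity, conditional orthogonality to the demeaned shocks, joint strong mixing, the uniform $(8+2\pi)$ conditional moment bound, and non-degeneracy of the long-run variance. The conditions are the same in both cases except for the scaling $N^{\delta}$. So the argument is shared, and only the non-degeneracy step refers to $a_N$. Throughout I condition on $\sigma(\s,\Lambda)$. The aggregator $H=\Omega^{-1}e_N/(e_N'\Omega^{-1}e_N)$ is deterministic, since $\Omega$ is a population object. The aggregated loading $\lambda_H=\Lambda'H$ is then $\sigma(\Lambda)$-measurable.

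\textbf{Stationarity, mixing, and orthogonality.} Stationarity and mixing are inherited. Given $\Lambda$, $v_{Ht}=\lambda_H'F_t+H'u_t$ is a fixed linear function of $(F_t',u_t')$. Hence $\{(u_t',v_{Ht})\}$ is a measurable function of $\{(F_t',u_t',\varepsilon_t)\}$, period by period. It is therefore weakly stationary and strong mixing with the same coefficients $\alpha(h)$ as in Assumption~\ref{ass:mixing for stationary time series}. For orthogonality I write
\[
\e[v_{Ht}\bar u_{jt}\mid\s,\Lambda]=\lambda_H'\e[F_t\bar u_{jt}]+\e[H'u_t u_t'D_N]_j .
\]
The first term vanishes by idiosyncrasy $\e[F_tu_t']=0$ (Assumption~\ref{ass:weak stationarity and idiosyncrasy}), together with independence of $(\s,\Lambda)$ from the shocks (Assumption~\ref{ass:size_of_firm_power_law}). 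The second term equals $(H'\Omega D_N)_j$, which is zero because $\Omega H\propto e_N$ and $D_Ne_N=0$.

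\textbf{Moment bound (main obstacle).} This is where I expect the work to be. I need $\e[|v_{Ht}|^{8+2\pi}\mid\s,\Lambda]\le M$ uniformly in $N$. By the $c_r$ inequality it suffices to bound the two pieces separately.

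For the factor piece, $|\lambda_H'F_t|\le\|\lambda_H\|\,\|F_t\|$. Also $\|\lambda_H\|\le\sum_i|H_i|\,\|\Blambda_i\|\le\bar\Blambda\,\|H\|_1$, using the uniform loading bound of Assumption~\ref{ass:LN_strong factor structure}. (Where only Assumptions~\ref{ass:weak stationarity and idiosyncrasy}--\ref{ass:mixing for stationary time series} are in force, I would instead use $\e\|\Blambda_i\|^4<\infty$ and bound $\|\lambda_H\|$ in probability.) So I need $\|H\|_1=O(1)$. By Assumption~\ref{ass:bounded eigenvalues}, $e_N'\Omega^{-1}e_N\ge N/K$, while $\|\Omega^{-1}e_N\|_1\le\sqrt N\,\|\Omega^{-1}e_N\|\le N/\underline\lambda$. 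Hence $\|H\|_1\le K/\underline\lambda$. The $(8+2\pi)$ moment of $F_t$ is finite by Assumption~\ref{ass:mixing for stationary time series}.

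For the idiosyncratic piece, I would apply the Rosenthal bound of Assumption~\ref{ass:LN_time and cross sectional dependence}.5 with $a_j=H_j$. This gives a bound of order $(\sum_jH_j^2)^{4+\pi}+\sum_j|H_j|^{8+2\pi}$. Both terms are $O(1)$, since $\|H\|_2^2\le\|\Omega^{-1}\|^2N/(N/K)^2=O(1/N)$ and $\max_j|H_j|\le\|H\|_2$.

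The delicate point is that $\|H\|_1$ must be bounded deterministically and not merely in probability, so that the constant $M$ is free of $N$. The eigenvalue bounds deliver exactly this. I would check that step carefully.

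\textbf{Non-degeneracy.} The last condition is that the long-run variance of $X_t=v_{Ht}\,a_N z_t$ has $[\Omega(\s,\Lambda)]^{-1}=O_\p(1)$. This is precisely the second requirement in Assumption~\ref{ass:mixing for stationary time series}, $[a_N^2V_{zv_H}(S,\Lambda)]^{-1}=O_\p(1)$, with $a_N=1$ for $\mu\in(0,1)$ and $a_N=N^{\delta}$ for $\mu>1$. This matches the form required by Corollary~\ref{corollary:central limit theorem} and Theorem~\ref{theorem:central limit theorem} respectively, which completes the plan.
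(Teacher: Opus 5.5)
Your proposal is correct and follows the paper's proof. Stationarity and mixing hold because $v_{Ht}=\lambda_H'F_t+H'u_t$ is a fixed linear function of the shocks given $\Lambda$. Orthogonality follows from $H'\Omega D_N=0$ together with the independence of $(\s,\Lambda)$ from the shocks. The moment bound comes from splitting off $H'u_t$ and applying the Rosenthal bound with $\|H\|^2\le K^2/(\underline\lambda^2N)$.

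The one difference is the factor piece. You bound $\|\lambda_H\|$ through $\|H\|_1\le K/\underline\lambda$ and the bounded loadings of Assumption~\ref{ass:LN_strong factor structure}. That assumption is not among the lemma's hypotheses, and since $\Blambda_i=(1,\tilde{\Blambda}_i')'$ the constant should be $\sqrt{1+\bar{\Blambda}^2}$. The paper instead writes $\lambda_H=(1,(\tilde\Lambda'H)')'$ and uses zero-mean i.i.d.\ loadings to get $\|\lambda_H\|^2=1+O_{\p}(N^{-1})$. Under Assumptions~\ref{ass:weak stationarity and idiosyncrasy}--\ref{ass:mixing for stationary time series} alone, both that step and your fallback give only an in-probability bound. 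That is the form in which the proof of Theorem~\ref{theorem:central limit theorem} ultimately uses it, since it concludes $\e[|X_t|^{4+\pi}\mid\s,\Lambda]=O_{\p}(1)$.
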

\begin{proof}
Write $v_{Ht} = \lambda_H' F_t + H' u_t$ with $\lambda_H = \Lambda' H$. Given
$(\s, \Lambda)$, this is a fixed linear combination of $(F_t, u_t)$. It is therefore
weakly stationary, measurable with respect to $\sigma(F_t, u_t)$, and strong mixing of
the size assumed in Assumption \ref{ass:mixing for stationary time series}.

\emph{Order of $H$.} By Assumption \ref{ass:bounded eigenvalues},
$e_N' \Omega^{-1} e_N \geq N/K$ and $e_N' \Omega^{-2} e_N \leq N/\underline\lambda^2$, so
\begin{equation*}
    \|H\|^2 \;=\; \frac{e_N' \Omega^{-2} e_N}{(e_N' \Omega^{-1} e_N)^2} \;\leq\; \frac{K^2}{\underline\lambda^2 \, N}.
\end{equation*}

\emph{Moments.} By Minkowski,
$\|v_{Ht}\|_{8+2\pi} \leq \|\lambda_H' F_t\|_{8+2\pi} + \|H' u_t\|_{8+2\pi}$. Since
$H' e_N = 1$ and $\Lambda = [e_N \enspace \tilde\Lambda]$, we have
$\lambda_H = (1, \, H' \tilde\Lambda)'$. The loadings are i.i.d.\ across $i$ with zero
mean by Assumptions \ref{ass:weak stationarity and idiosyncrasy} and
\ref{ass:size_of_firm_power_law}.3, so
$\e \|\tilde\Lambda' H\|^2 = \|H\|^2 \, \text{tr}(\Sigma_{\Lambda}) = O(N^{-1})$ and
$\|\lambda_H\|^2 = 1 + O_{\p}(N^{-1})$. Given $\Lambda$, $\lambda_H$ is a constant, and
together with $\e \|F_t\|^{8+2\pi} < \infty$ from Assumption
\ref{ass:mixing for stationary time series} the first term is bounded a.s., uniformly in
$N$. For the second term, Assumption \ref{ass:LN_time and cross sectional dependence}.5
with $a_j = H_j$ and $q = 8 + 2\pi$ gives
\begin{equation*}
    \e |H' u_t|^q \;\leq\; C_q \left[ \Big( \sum_j H_j^2 \, \e u_{jt}^2 \Big)^{q/2} + \sum_j |H_j|^q \, \e |u_{jt}|^q \right] \;=\; O\!\left(N^{-q/2}\right),
\end{equation*}
using $\sum_j H_j^2 = O(N^{-1})$ and
$\sum_j |H_j|^q \leq \|H\|^{q-2} \sum_j H_j^2 = O(N^{-q/2})$. Hence
$\e[|v_{Ht}|^{8+2\pi} \mid \s, \Lambda]$ is bounded uniformly in $N$.

\emph{Orthogonality.} The multiplier depends on the loadings through $\lambda_H$, so the
conditional form of the orthogonality must be verified. Conditional on $(\s, \Lambda)$,
$\lambda_H$ is a constant vector. The shocks $(F_t, u_t)$ are independent of
$(\s, \Lambda)$ by Assumption \ref{ass:size_of_firm_power_law}.4, so
$\e[u_t F_t' \mid \s, \Lambda] = \e[u_t F_t'] = 0$ by Assumption
\ref{ass:weak stationarity and idiosyncrasy} and $\e[u_t u_t' \mid \s, \Lambda] = \Omega$.
The vector of conditional covariances with the demeaned shocks is
\begin{equation*}
    \e[ v_{Ht} \, \Bar{u}_t \mid \s, \Lambda ] \;=\; D_N \, \e[u_t F_t' \mid \s, \Lambda] \lambda_H \;+\; D_N \Omega H \;=\; D_N \Omega H \;=\; 0 \quad \text{a.s},
\end{equation*}
since $H' \Omega D_N = 0$ by the construction of $H$ in \eqref{eq:expression for H}. Hence
$\e[v_{Ht} \Bar{u}_{jt} \mid \s, \Lambda] = 0$ for every $j$, the conditional
orthogonality the two results require. This is the only point at which
the choice of $H$ enters.
\end{proof}

\section{Proofs of the GIV with known factors} \label{app:other proofs}

\subsection{The Denominator} \label{app:proof of denominator}

Every theorem in Section \ref{section:known factors} and in Section \ref{section:estimated factors} divides by the sample denominator $T^{-1} \sum_t z_t p_t$. It is the same object in each of them, since neither $z_t$ nor $p_t$ involves the aggregation vector. I therefore record its behavior once, before the proofs that use it.

The rescaled denominator $a_N^2 T^{-1} \sum_t z_t p_t$ has no probability limit when $\mu < 2$. Lemma \ref{lemma:share inequalities}.2 records the three cases. For $\mu \in (0,1)$ neither size sum obeys a law of large numbers, and $h_N \xrightarrow{d} L_2 / L_1^2$, a ratio of stable laws. For $\mu \in (1,2)$ only the denominator of $h_N$ does, and $a_N^2 h_N \xrightarrow{d} Y / (\e \s)^2$. Both limits are nondegenerate. Only at $\mu > 2$ does the strong law deliver a constant. Rescaling is what makes the statement informative, since the unscaled denominator collapses to zero whenever $\mu > 1$.

The denominator has the quadratic form $S' \Omega S$. Assumption \ref{ass:bounded eigenvalues} bounds that form on both sides, $\underline\lambda \|S\|^2 \leq S' \Omega S \leq K \|S\|^2$. The off-diagonal mass of $\Omega$ is left free, so $a_N^2 S' \Omega S$ can fail to converge at any $\mu$, including $\mu > 2$ where $h_N$ does converge.

The proposition below addresses this issue. It compares the sample average with its own conditional mean at each $N$, and bounds that mean from above and away from zero. Every theorem divides by the denominator, so a centering that stays bounded away from zero is all the proofs use.

\begin{Prop}[The denominator] \label{prop:denominator}
    Suppose Assumptions \ref{ass:weak stationarity and idiosyncrasy} to \ref{ass:mixing for stationary time series} hold, with $\mu > 0$, $\mu \notin \{1,2\}$, $\phi_d \neq \phi_s$, and $a_N^2/T \to 0$. Write $\tilde S \defeq D_N S$, $\beta \defeq S' D_N \Omega e_N$, and define the finite-$N$ conditional object
    \begin{equation*}
        M_p \defeq a_N^2 \, \e[z_t p_t \mid S, \Lambda] = \frac{a_N^2}{\phi_d - \phi_s} \left[ \tilde S' \Omega \tilde S + \frac{\beta}{N} \right].
    \end{equation*}
    Then, conditionally on $\sigma(S,\Lambda)$,
    \begin{equation*}
        \frac{a_N^2}{T} \sum_{t=1}^T z_t p_t = M_p + O_{\p}\left( \frac{a_N}{\sqrt T} \right), \qquad M_p = O_{\p}(1), \qquad M_p^{-1} = O_{\p}(1).
    \end{equation*}
\end{Prop}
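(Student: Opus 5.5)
The plan has three steps. First I derive the reduced form for $p_t$ and compute the conditional mean $M_p$ explicitly. Then I show $M_p$ is bounded above and away from zero. Finally I control the deviation of the sample average around $M_p$ with a conditional second-moment bound.

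\emph{Step 1: reduced form and conditional mean.} Market clearing gives $\phi_d p_t + \varepsilon_t = \phi_s p_t + S'\Lambda F_t + S'u_t$. Hence
\begin{equation*}
p_t = \frac{S'\Lambda F_t + S'u_t - \varepsilon_t}{\phi_d - \phi_s}.
\end{equation*}
Multiply by $z_t = \tilde S' u_t$ and take expectations conditional on $\sigma(S,\Lambda)$. Assumption~\ref{ass:size_of_firm_power_law} makes the shocks independent of $(S,\Lambda)$. Assumption~\ref{ass:weak stationarity and idiosyncrasy} gives $\e[F_t u_t']=0$, which kills the factor term, and gives $\e[\varepsilon_t u_t]=0$, which kills the demand term (this is the moment restriction with $d_t-\phi_d p_t=\varepsilon_t$). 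What remains is $\tilde S'\Omega S/(\phi_d-\phi_s)$. Writing $S = \tilde S + e_N/N$ gives $\tilde S'\Omega S = \tilde S'\Omega\tilde S + \beta/N$, which is the displayed $M_p$.

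\emph{Step 2: two-sided bounds on $M_p$.} Assumption~\ref{ass:bounded eigenvalues} gives $\underline\lambda\|\tilde S\|^2 \le \tilde S'\Omega\tilde S \le K\|\tilde S\|^2$. Lemma~\ref{lemma:share inequalities}.1 gives $\|\tilde S\|^2 = h_N - 1/N \le h_N$. Combined with $a_N\sqrt{h_N}=O_{\p}(1)$ from Proposition~\ref{prop:weakness of known factor instrument}, the upper bound on $a_N^2\tilde S'\Omega\tilde S$ follows. Lemma~\ref{lemma:share inequalities}.2 gives $[a_N^2(h_N-1/N)]^{-1}=O_{\p}(1)$, which is the lower bound. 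The cross term uses Lemma~\ref{lemma:hajek bound}: $\beta = O_{\p}(a_N^{-1})$, so $a_N^2\beta/N = O_{\p}(a_N/N)$. Since $a_N \le \sqrt N$ in every regime, this term is $o_{\p}(1)$. A sum of something bounded away from zero and an $o_{\p}(1)$ term is still bounded away from zero, which gives $M_p^{-1}=O_{\p}(1)$. The assumption $\phi_d\ne\phi_s$ keeps the prefactor finite and nonzero.

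\emph{Step 3: fluctuation.} The object to control is the centred average
\begin{equation*}
R_T = \frac{a_N^2}{T}\sum_t \big(z_t p_t - \e[z_tp_t\mid S,\Lambda]\big).
\end{equation*}
I will bound $\e[R_T^2\mid S,\Lambda]$ by $T^{-1}\sum_h |\gamma_h|$, where $\gamma_h$ are the conditional autocovariances of $a_N^2 z_tp_t$. This is the covariance inequality for strong mixing sequences (Davydov), applied with the moments of Assumption~\ref{ass:mixing for stationary time series} and the summability of $\alpha(h)^{\pi/(2+\pi)}$. It then suffices that $\e[|a_N^2 z_t p_t|^{2+\pi/2}\mid S,\Lambda]$ be $O_{\p}(a_N^{2+\pi/2})$. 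By H\"older, this splits into a bound on $a_N z_t$ and a bound on $a_N p_t$.

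\begin{itemize}
\item For $a_N z_t$: the Rosenthal step from the proof of Theorem~\ref{theorem:central limit theorem} (or Corollary~\ref{corollary:central limit theorem} when $\mu\in(0,1)$), applied with $Z_t\equiv1$, gives $\e[|a_Nz_t|^{q}\mid S]=O_{\p}(1)$.
\item For $a_N p_t$: the terms $S'u_t$ and $\varepsilon_t$ have bounded moments. The term $S'\Lambda F_t = F_{1t} + S'\tilde\Lambda\tilde F_t$ is $O_{\p}(1)$ in moments by Proposition~\ref{prop:behavior of S times lambda}, because $F_{1t}$ does not vanish. So $p_t = O(1)$ in $L^q$ and $a_N p_t$ is $O(a_N)$ in $L^q$.
\end{itemize}

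This is where I expect the main obstacle. Multiplying the two bounds, $a_N^2 z_tp_t$ is $O(a_N)$ in $L^q$. The conditional variance of $R_T$ is then $O_{\p}(a_N^2/T)$, so $R_T = O_{\p}(a_N/\sqrt T)$. This is exactly the stated rate, and it is $o_{\p}(1)$ because $a_N^2/T\to0$. The point to handle carefully is that $z_tp_t$ does not have an $a_N^{-2}$-small second moment, since the common factor $F_{1t}$ sits inside $p_t$. One must therefore not claim the faster rate $1/\sqrt T$.

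The last step passes from the conditional Chebyshev bound to an unconditional $O_{\p}$ statement. The moment bounds hold only as $O_{\p}(1)$ functions of $S$, so I will truncate on the event where those functions are at most some constant $C$. On that event, conditional Chebyshev applies directly. Off that event, the probability is small uniformly in $N$, and letting $C$ grow completes the argument.
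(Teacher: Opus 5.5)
Your proposal is correct and follows the paper's proof essentially step for step. It uses the same reduced form and conditional mean, the same two-sided bounds via Assumption~\ref{ass:bounded eigenvalues}, Lemma~\ref{lemma:share inequalities} and Lemma~\ref{lemma:hajek bound}, and the same Rosenthal--Davydov fluctuation bound, which yields $O_{\p}(a_N/\sqrt T)$ because $p_t$ is only $O_{\p}(1)$. One small fix: for the Davydov covariances to be summable under $\sum_h \alpha(h)^{\pi/(2+\pi)} < \infty$, you need $a_N^2 z_t p_t$ in $L^{2+\pi}$, with each factor in $L^{2(2+\pi)}$ as in the paper's $q_0$, not in $L^{2+\pi/2}$; the $8+2\pi$ moments of Assumption~\ref{ass:mixing for stationary time series} supply this.
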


\begin{proof}
Market clearing gives $(\phi_d - \phi_s) p_t = S'(\Lambda F_t + u_t) - \varepsilon_t$. By Assumption \ref{ass:weak stationarity and idiosyncrasy}, $\e[F_t u_t'] = 0$ and $\e[\varepsilon_t u_t] = 0$, and by Assumption \ref{ass:size_of_firm_power_law} the array $(S,\Lambda)$ is independent of the shock path. Hence
\begin{equation*}
    \e[z_t p_t \mid S, \Lambda] = \frac{S' D_N \Omega S}{\phi_d - \phi_s}, \qquad S' D_N \Omega S = \tilde S' \Omega \tilde S + \frac{\beta}{N},
\end{equation*}
the second equality from $S = \tilde S + e_N/N$ and $D_N e_N = 0$.

Conditionally on $(S,\Lambda)$, the process $\{z_t p_t\}$ is a measurable function of the mixing vector $\{(F_t', u_t', \varepsilon_t)\}$ and inherits its mixing coefficients. Set $q_0 = 2(2+\pi)$. The Rosenthal bound of Assumption \ref{ass:LN_time and cross sectional dependence}.5, applied with the deterministic weights $\tilde S$ and the $\ell_q$ inequality of Lemma \ref{lemma:share inequalities}.1, gives $\| z_t \|_{q_0 \mid S} \leq C h_N^{1/2}$. Next, $\| p_t \|_{q_0 \mid S, \Lambda} = O_{\p}(1)$, since $p_t$ is a fixed linear combination of $\lambda_S' F_t$, $S' u_t$ and $\varepsilon_t$, each with $8 + 2\pi$ moments under Assumption \ref{ass:mixing for stationary time series}, with the Rosenthal bound applied to the weights $S$ for the middle term and $\lambda_S = \Lambda' S = O_{\p}(1)$ by Proposition \ref{prop:behavior of S times lambda} applied to $\lambda_S = (1, (S'\tilde\Lambda)')'$. Davydov's inequality \citep{Rio1993CovarianceProcesses} bounds the conditional variance of $\frac{1}{T} \sum_t z_t p_t$ by $C' h_N/T$. So the fluctuation of $\frac{a_N^2}{T} \sum_t z_t p_t$ is
\begin{equation*}
    O_{\p}\left( \frac{a_N^2 h_N^{1/2}}{\sqrt T} \right) = O_{\p}\left( \frac{a_N}{\sqrt T} \right)
\end{equation*}
by Proposition \ref{prop:weakness of known factor instrument}, which is $o_{\p}(1)$ under $a_N^2/T \to 0$.

By Assumption \ref{ass:bounded eigenvalues} and Lemma \ref{lemma:share inequalities}.1,
\begin{equation*}
    \underline\lambda \, a_N^2 \left( h_N - \frac{1}{N} \right) \; \leq \; a_N^2 \, \tilde S' \Omega \tilde S \; \leq \; K a_N^2 h_N = O_{\p}(1),
\end{equation*}
the upper bound by Proposition \ref{prop:weakness of known factor instrument}. The cross term obeys $a_N^2 |\beta|/N = O_{\p}(a_N/N) = o_{\p}(1)$ by Lemma \ref{lemma:hajek bound}. Lemma \ref{lemma:share inequalities}.2 gives $[a_N^2(h_N - 1/N)]^{-1} = O_{\p}(1)$, and $\phi_d \neq \phi_s$ then delivers $M_p^{-1} = O_{\p}(1)$.
\end{proof}

\subsection{Proof of Theorem \ref{theorem:strong_consistency_aggregate}} \label{proof:strong_Consistency_aggregate}
The scaled difference between the estimator and the true value is
\begin{equation*}
    \sqrt{T} [\hat{\phi}_d - \phi_d] = \frac{\displaystyle 1}{\displaystyle \frac{1}{T} \sum_t z_t p_t} \cdot \frac{1}{\sqrt{T}} \sum_t z_t \varepsilon_t
\end{equation*}
The market clearing condition implies
\begin{align*}
    \phi_d p_t + \varepsilon_t &= \phi_s p_t + \lambda_S' F_t + u_{St} \\
    p_t &=  \frac{1}{\phi_d - \phi_s} \cdot (u_{St}  + \lambda_S' F_t -  \varepsilon_t)
\end{align*}

I will first establish the probability limit of $\frac{1}{T} \sum_t z_t p_t$. From the decomposition above,
\begin{equation*}
    \frac{1}{T} \sum_t z_t p_{t} = \frac{1}{\phi_d - \phi_s} \cdot \left( \frac{1}{T} \sum_t z_t u_{St}  + \frac{\lambda_S'}{T} \sum_t F_t z_t -  \frac{1}{T} \sum_t z_t \varepsilon_t\right)
\end{equation*}

I will consider the terms on the right hand side one by one. I will disregard the constant, $\frac{1}{\phi_d - \phi_s}$.
\begin{align*}
    I &=  \frac{1}{T} \sum_t z_t u_{St} =  \frac{1}{T} \sum_t S' \Bar{u}_t u_{t}' S
\end{align*}
Consider the summand above
\begin{align*}
     S' \Bar{u}_t u_{t}' S &= S' \left[ u_t - \frac{1}{N} \sum_{j=1}^N u_{jt} \iota_n \right] u_t' S \\
    &= S' u_t u_{t}' S - \left( \frac{1}{N} \sum_{j=1}^N u_{jt} \right) u_t'S \\
    &= S' u_t u_{t}' S+ O_{\p}\left( \frac{1}{\sqrt{N}} \right) O_{\p}\left( 1 \right) = S' u_t u_{t}' S + o_{\p}(1)
\end{align*}

where Proposition \ref{prop:weakness of known factor instrument} gives the $O_{\p}\left( 1 \right)$ stochastic order for $S' u_t$. The $O_{\p}\left( \frac{1}{\sqrt{N}} \right)$ stochastic order comes from Assumption \ref{ass:LN_time and cross sectional dependence} as $u_{it}$ is mean-zero, has finite second moment, and has limited cross-sectional correlation.

By the Law of Large Numbers in Theorem 3.1 of \citet{Kojevnikov2021LimitVariables}
\begin{equation*}
    \left | \frac{1}{T} \sum_t  S' u_t u_t' S - S' \Omega S \right| \xrightarrow{a.s} 0
\end{equation*}
The process satisfies the conditions 3.1 and 3.2 of the LLN: (i) conditional on $\sigma(\s)$, the scalar process $\{S'u_tu_t'S\}$ inherits the strong-mixing size of $\{u_t\}$ from Assumption \ref{ass:mixing for stationary time series}, hence it is conditionally $\psi$-dependent with summable coefficients (Proposition 2.2 of Kojevnikov \& Song); (ii) the time series doesnt get too dense. But we need to check that for some $\pi > 0$
    \begin{align*}
        \e[| S' u_t u_t' S |^{1+ \pi} | \s] < \infty \quad \text{a.s}
    \end{align*}

Since the shares are non-negative and $\sum_j S_j = 1$ almost surely, $\{S_j\}_{j=1}^N$ is itself a probability distribution given $\sigma(\s)$. I apply Jensen's inequality to the convex function $|\cdot|^{2+2\pi}$ with weights $S_j$:
\begin{equation*}
    \left| \sum_{j=1}^N S_j u_{jt} \right|^{2+2\pi} \leq \sum_{j=1}^N S_j \, |u_{jt}|^{2+2\pi}.
\end{equation*}
Taking conditional expectation and using the independence of $S$ and $u_t$,
\begin{equation*}
    \e\!\left[ |S' u_t u_t' S|^{1+\pi} \,\big|\, \s \right] \;=\; \e\!\left[ |S' u_t|^{2+2\pi} \,\big|\, \s \right] \;\leq\; \sum_{j=1}^N S_j \, \e |u_{jt}|^{2+2\pi} \;\leq\; \max_j \e |u_{jt}|^{2+2\pi}.
\end{equation*}
Call $\max_j \e |u_{jt}|^{2+2\pi} = C$. Assumption \ref{ass:mixing for stationary time series} delivers $C < \infty$ for any $\pi > 0$. Hence $\e[|S' u_t u_t' S|^{1+\pi} \mid \s] < \infty$ almost surely, uniformly in $N$. Combined with conditions 3.1 and 3.2 verified above,

\begin{equation*}
    I \xrightarrow{p} S' \Omega S
\end{equation*}
where $\Omega := \e[u_t u_t']$ is the $N \times N$ idiosyncratic covariance matrix. By Assumption \ref{ass:bounded eigenvalues}, $\gamma_{\text{min}}(\Omega) \geq \underline\lambda > 0$ uniformly in $N$, so
\begin{equation*}
    S' \Omega S \;\geq\; \gamma_{\text{min}}(\Omega) \, \|S\|^2 \;\geq\; \underline\lambda \, \|S\|^2.
\end{equation*}
Lemma \ref{lemma:joint stable convergence} gives $\|S\|^2 \xrightarrow{d} L_2/L_1^2$, a limit that is strictly positive almost surely, so $(\|S\|^2)^{-1} = O_{\p}(1)$ and $S' \Omega S$ is bounded away from zero in probability. No probability limit in $N$ is claimed. Proposition \ref{prop:denominator} states the same conclusion for the denominator itself, in the form $M_p^{-1} = O_{\p}(1)$.

Consider the second term,
\begin{align*}
    II &= S' \Lambda \frac{1}{T} \sum_t F_t z_t = \frac{1}{T} \sum_i S_i \lambda_i' \sum_t F_t \sum_j S_j \Bar{u}_{jt} \\
    &= \frac{1}{T} \sum_{t=1}^T \lambda_S' F_t \cdot S' \Bar{u}_{t}.
\end{align*}
Decompose $\lambda_S' F_t = F_t^{(1)} + (S' \tilde{\Lambda}) \tilde F_t$, using $S' \iota_N = 1$. By Proposition \ref{prop:behavior of S times lambda}, $S' \tilde{\Lambda} = O_{\p}(1)$, and $F_t^{(1)}, \tilde F_t = O_{\p}(1)$ under Assumption \ref{ass:mixing for stationary time series}; hence $\lambda_S' F_t = O_{\p}(1)$. We have already established $S' \Bar{u}_t = O_{\p}(1)$ via Proposition \ref{prop:weakness of known factor instrument}.

I apply the Law of Large Numbers in Theorem 3.1 of \citet{Kojevnikov2021LimitVariables} to the scalar process $\{\lambda_S' F_t \cdot S' \Bar u_t\}_t$, conditional on $\sigma(\s, \Lambda)$. Conditions 3.1 and 3.2 hold by the same argument as for Term I: the process inherits the $\psi$-dependence of $(F_t, u_t)$ from Assumption \ref{ass:mixing for stationary time series}, and the time series is not too dense. For the moment condition, the Cauchy--Schwarz inequality gives, for any $\pi > 0$,
\begin{equation*}
    \e\!\left[ \, \big| \lambda_S' F_t \cdot S' \Bar u_t \big|^{1+\pi} \,\big|\, \s, \Lambda \right] \;\leq\; \left( \e\!\left[ |\lambda_S' F_t|^{2(1+\pi)} \,\big|\, \s, \Lambda \right] \right)^{1/2} \left( \e\!\left[ |S' \Bar u_t|^{2(1+\pi)} \,\big|\, \s, \Lambda \right] \right)^{1/2}.
\end{equation*}
The first factor is finite by the moment bound on $F_t$ in Assumption \ref{ass:mixing for stationary time series} together with $\lambda_S = O_{\p}(1)$; the second factor is finite by the weighted-Jensen argument used for Term I, applied to $S' u_t$, with the cross-sectional average $\tfrac{1}{N} \sum_i u_{it}$ in $\Bar u_t = u_t - \tfrac{1}{N} \sum_i u_{it} \iota_N$ handled by the triangle inequality and uniform weights $\tfrac{1}{N}$. Finally, under the factor/idiosyncratic orthogonality of Assumption \ref{ass:mixing for stationary time series},
\begin{equation*}
    \e[\lambda_S' F_t \cdot S' \Bar u_t \mid \s, \Lambda] = \lambda_S' \, \e[F_t \mid \s, \Lambda] \cdot \e[S' \Bar u_t \mid \s, \Lambda] = 0,
\end{equation*}
since $F_t \perp u_t$ given $(\s, \Lambda)$ and $\e[u_{jt}] = 0$. Hence
\begin{equation*}
    II \xrightarrow{p} 0.
\end{equation*}

Consider the third term,
\begin{equation*}
    III = \frac{1}{T} \sum_{t=1}^T z_t \varepsilon_t
\end{equation*}
To analyze this term, consider a scaled version:
\begin{align*}
        \frac{1}{\sqrt{T}} \sum_t z_t \varepsilon_t &=   \frac{1}{\sqrt{T}} \sum_{t=1}^T S' \Bar{u}_t \varepsilon_t = \frac{1}{\sqrt{T}} \sum_{t=1}^T \varepsilon_t\sum_{j=1}^N S_j \Bar{u}_{jt}
    \end{align*}

    $\{(\varepsilon_t, u_t')\}$ is a mixing sequence and $\e|\varepsilon_t|^{8 + 2 \pi} < \infty$ under Assumption \ref{ass:mixing for stationary time series}. By Assumption \ref{ass:size_of_firm_power_law}, $\varepsilon_t$ is independent of $(\s, \Lambda)$, so the conditional moment bound and the conditional orthogonality coincide with their unconditional versions. Assumption \ref{ass:weak stationarity and idiosyncrasy} gives $\e[\varepsilon_t u_t] = 0$, hence $\e[\varepsilon_t \Bar{u}_{jt} \mid \s, \Lambda] = \e[\varepsilon_t \Bar{u}_{jt}] = 0$ for every $j$. The long-run variance $V_{z\varepsilon}(S)$ is non-degenerate under Assumption \ref{ass:mixing for stationary time series}. Thus, we can apply Corollary \ref{corollary:central limit theorem} to the scaled sum, $X_T = \frac{1}{\sqrt{T}} \sum_{t=1}^T \varepsilon_t  \sum_{j=1}^N S_j \Bar{u}_{jt}$, with $\varepsilon_t$ replacing $Z_t$. Thus,
    \begin{equation*}
        \frac{X_T}{\sqrt{V_{z \varepsilon}(S)}} \xrightarrow{d} \n(0,1)
    \end{equation*}
Equivalently, the numerator $\frac{1}{\sqrt{T}} \sum_t z_t \varepsilon_t \xrightarrow{d} \n(0, V_{z\varepsilon}(S))$; in particular $III = O_{\p}(T^{-1/2}) \xrightarrow{p} 0$.

Combining the three probability limits,
\begin{equation*}
    \frac{1}{T} \sum_t z_t p_t = \frac{1}{\phi_d - \phi_s} \left( I + II - III \right) \xrightarrow{p} \frac{S' \Omega S}{\phi_d - \phi_s} \;=:\; \Gamma_{zp},
\end{equation*}
which is nonzero a.s.\ in the joint $N, T \to \infty$ limit by the eigenvalue lower bound established for $S' \Omega S$ above and $\phi_d \neq \phi_s$.

The numerator $\frac{1}{\sqrt{T}} \sum_t z_t \varepsilon_t \xrightarrow{d} \n(0, V_{z\varepsilon}(S))$ by the central limit theorem established above. Applying Slutsky's theorem to the ratio
\begin{equation*}
    \sqrt{T} [\hat{\phi}_d - \phi_d] = \frac{\frac{1}{\sqrt{T}} \sum_t z_t \varepsilon_t}{\frac{1}{T} \sum_t z_t p_t}
\end{equation*}
yields the asymptotic normality claim of the theorem,
\begin{equation*}
    \frac{\Gamma_{zp}}{\sqrt{V_{z\varepsilon}(S)}} \cdot \sqrt{T} [\hat{\phi}_d - \phi_d] \xrightarrow{d} \n(0,1).
\end{equation*}
Consistency follows as an immediate corollary, since $\hat{\phi}_d - \phi_d = O_{\p}(T^{-1/2}) = o_{\p}(1)$, so $\hat{\phi}_d \xrightarrow{p} \phi_d$.
\qed

\subsection{Proof of Theorem \ref{theorem:strong_consistency_disaggregate}} \label{app:proof of disaggregate} \label{proof:strong_consistency_disaggregate}

The scaled difference between the estimator and the true value is
\begin{equation*}
    \sqrt{T} [\hat{\phi}_s - \phi_s] = \frac{\displaystyle 1}{\displaystyle \frac{1}{T} \sum_t z_t p_t} \cdot \frac{1}{\sqrt{T}} \sum_t z_t v_{Ht}.
\end{equation*}

\paragraph{Denominator.} The instrument $z_t = S' \bar u_t$ and the price $p_t$ are the
same objects as in Theorem \ref{theorem:strong_consistency_aggregate}, and neither
depends on the aggregating vector $H$. The decomposition of $T^{-1} \sum_t z_t p_t$ into
Terms I, II and III, and its limit
\begin{equation*}
    \frac{1}{T} \sum_t z_t p_t \;\xrightarrow{p}\; \frac{S' \Omega S}{\phi_d - \phi_s} \;=\; \Gamma_{zp},
\end{equation*}
therefore carry over verbatim. $\Gamma_{zp}$ is $\sigma(\s)$-measurable, so widening the
conditioning to $\sigma(\s, \Lambda)$ leaves it unchanged. It is non-zero a.s.\ for
$\phi_d \neq \phi_s$.

\paragraph{Numerator.} This is the only part of the argument that differs. The multiplier
is now $v_{Ht} = \lambda_H' F_t + H' u_t$, which is correlated with $u_t$. The choice of
$H$ leaves it orthogonal to the demeaned shocks. Since $\mu \in (0,1)$ here, Lemma
\ref{lemma:v_H multiplier} gives the conditions of Corollary
\ref{corollary:central limit theorem}. Applying that corollary to
$X_t = v_{Ht} \sum_j S_j \bar u_{jt}$,
\begin{equation*}
    \frac{1}{\sqrt{T}} \sum_t z_t v_{Ht} \;\xrightarrow{d}\; \n \big(0, \; V_{z v_H}(S, \Lambda) \big),
\end{equation*}
where $V_{z v_H}(S, \Lambda) = \lim_{T \to \infty} \frac{1}{T} \sum_{s=1}^T \sum_{t=1}^T \e[z_t z_s v_{Ht} v_{Hs} \mid S, \Lambda]$.

\paragraph{Combining.} Slutsky's theorem applied to the ratio gives
\begin{equation*}
    \frac{\Gamma_{zp}}{\sqrt{V_{z v_H}(S, \Lambda)}} \cdot \sqrt{T} [\hat{\phi}_s - \phi_s] \;\xrightarrow{d}\; \n(0,1).
\end{equation*}
Consistency follows, since $\hat{\phi}_s - \phi_s = O_{\p}(T^{-1/2}) = o_{\p}(1)$.
\qed

\subsection{Proof of Theorem \ref{theorem:weak_consistency_aggregate}} \label{app:proof of weak theorems}

The argument is a single derivation parameterized by
\begin{equation*}
    \delta \;=\; \begin{cases}
        1 - \dfrac{1}{\mu} & \mu \in (1,2), \\[3pt]
        \dfrac{1}{2} & \mu > 2,
    \end{cases}
\end{equation*}
which is the same $\delta$ that appears in Theorem~\ref{theorem:central limit theorem}. Under this convention, Proposition~\ref{prop:weakness of known factor instrument} reads $z_t = O_{\p}(N^{-\delta})$ and $\|S\|^2 = O_{\p}(N^{-2\delta})$, while Proposition~\ref{prop:behavior of S times lambda} reads $S'\tilde{\Lambda} = O_{\p}(N^{-\delta})$.

The proof parallels Theorem~\ref{theorem:strong_consistency_aggregate}: I establish the asymptotic order of the denominator term-by-term, state the CLT for the numerator, and combine the two. The only new feature relative to the strong regime is that the instrument now decays in $N$, which slows the rate of convergence and forces the requirement $N/T \to 0$. Because the instrument is weak, I show consistency explicitly before turning to asymptotic normality.

The difference between the estimator and the true value is
\begin{equation*}
   \hat{\phi}_d - \phi_d \;=\; \frac{\displaystyle \frac{1}{T} \sum_t z_t \varepsilon_t}{\displaystyle \frac{1}{T} \sum_t z_t p_t}.
\end{equation*}
Market clearing gives
\begin{equation*}
    p_t \;=\; \frac{1}{\phi_d - \phi_s} \cdot (u_{St} + \lambda_S' F_t - \varepsilon_t),
\end{equation*}
so the denominator decomposes as
\begin{equation*}
    \frac{1}{T} \sum_t z_t p_t \;=\; \frac{1}{\phi_d - \phi_s}\left( \underbrace{\frac{1}{T}\sum_t z_t u_{St}}_{=:\,I} \;+\; \underbrace{\frac{\lambda_S'}{T}\sum_t F_t z_t}_{=:\,II} \;-\; \underbrace{\frac{1}{T}\sum_t z_t \varepsilon_t}_{=:\,III} \right).
\end{equation*}
I disregard the constant $1/(\phi_d - \phi_s)$ and consider the three terms in turn.

\paragraph{Term I.}
Write $\tilde S \defeq D_N S$ and $\beta \defeq S' D_N \Omega e_N$, so that $z_t = \tilde S' u_t$ and
\begin{equation*}
    I \;=\; \frac{1}{T}\sum_t \tilde S' u_t \, u_t' S.
\end{equation*}
I keep the demeaning here rather than absorbing it into a remainder. Discarding the correction $\frac{1}{N}\sum_j u_{jt} \cdot u_t' S$ as $o_{\p}(N^{-2\delta})$ would need $1/2 + \delta > 2\delta$, and that fails at $\delta = 1/2$. It is an equality there, so the correction is the same order as the term it would be dropped from. That is precisely the case $\mu > 2$.

Proposition~\ref{prop:denominator} applies to $I$ verbatim. Its conditional mean is $\e[I \mid \s, \Lambda] = \tilde S' \Omega \tilde S + \beta/N$, and the Davydov step there gives
\begin{equation*}
    a_N^2 I \;=\; a_N^2 \left[ \tilde S' \Omega \tilde S \;+\; \frac{\beta}{N} \right] \;+\; O_{\p}\!\left( \frac{a_N}{\sqrt T} \right),
\end{equation*}
a remainder that is $o_{\p}(1)$ under $a_N^2/T \to 0$. 

By Assumption~\ref{ass:bounded eigenvalues} and Lemma~\ref{lemma:share inequalities}.1, $\underline\lambda \left( h_N - 1/N \right) \leq \tilde S' \Omega \tilde S \leq K h_N$, so $a_N^2 \tilde S' \Omega \tilde S$ is bounded above by Proposition~\ref{prop:weakness of known factor instrument} and away from zero by Lemma~\ref{lemma:share inequalities}.2. The cross term vanishes in every regime, $a_N^2 |\beta|/N = O_{\p}(a_N/N) = o_{\p}(1)$, by Lemma~\ref{lemma:hajek bound}. Hence
\begin{equation*}
    I \;=\; O_{\p}(N^{-2\delta}), \qquad \left[ a_N^2 I \right]^{-1} = O_{\p}(1).
\end{equation*}

\paragraph{Term II.}
\begin{align*}
    II &= S' \Lambda \frac{1}{T} \sum_t F_t z_t = \frac{1}{T} \sum_i S_i \lambda_i' \sum_t F_t \sum_j S_j \Bar{u}_{jt} \\
    &= \frac{1}{T} \sum_{t=1}^T \lambda_S' F_t \cdot S' \Bar{u}_{t}.
\end{align*}
Decompose $\lambda_S' F_t = F_t^{(1)} + (S' \tilde{\Lambda}) \tilde F_t$ using $S' \iota_N = 1$. By Proposition~\ref{prop:behavior of S times lambda}, $S' \tilde{\Lambda} = O_{\p}\left( \frac{1}{N^{\delta}} \right)$, and $F_t^{(1)}, \tilde F_t = O_{\p}(1)$ under Assumption~\ref{ass:mixing for stationary time series}; hence $\lambda_S' F_t = F_t^{(1)} + O_{\p}\left( \frac{1}{N^{\delta}} \right) = O_{\p}(1)$. By Proposition~\ref{prop:weakness of known factor instrument}, $S' \Bar{u}_t = O_{\p}\left( \frac{1}{N^{\delta}} \right)$.

I will first establish the asymptotic order of $II$, then identify its probability limit via the same LLN argument used in Theorem~\ref{theorem:strong_consistency_aggregate}.

\subparagraph{Order.}
Using the decomposition above,
\begin{align*}
    II &= \frac{1}{T} \sum_{t=1}^T F_t^{(1)} \cdot S' \Bar u_t + \frac{1}{T} \sum_{t=1}^T (S'\tilde\Lambda)\tilde F_t \cdot S' \Bar u_t.
\end{align*}
The cross term is $O_{\p}(N^{-2\delta})$ by the orders of $S'\tilde\Lambda$ and $S' \Bar u_t$ established above. The first term can be written as
\begin{align*}
    II_1 &= \frac{1}{\sqrt{T}} \cdot \frac{1}{\sqrt{T}} \sum_{t=1}^T F_t^{(1)}  \sum_{j=1}^N S_j \Bar{u}_{jt}.
\end{align*}
$\{(F_t^{(1)}, u_t')\}$ is a mixing sequence and $\e|F_t^{(1)}|^{8 + 2 \pi} < \infty$ under Assumption~\ref{ass:mixing for stationary time series}. It is independent of $(\s, \Lambda)$ by Assumption~\ref{ass:size_of_firm_power_law}, so the conditional moment bound and the conditional orthogonality coincide with their unconditional versions, and $\e[F_t u_t] = 0$ gives $\e[F_t^{(1)} \Bar{u}_{jt} \mid \s, \Lambda] = 0$ for every $j$. Thus we can apply Theorem~\ref{theorem:central limit theorem} to $\frac{1}{\sqrt{T}} \sum_{t=1}^T F_t^{(1)}  \sum_{j=1}^N S_j \Bar{u}_{jt}$, with $F_t^{(1)}$ replacing $Z_t$, and this scaled sum is $O_{\p}\left( \frac{1}{N^{\delta}} \right)$. Thus
 \begin{align*}
     II_1 &= O_{\p}\left( \frac{1}{\sqrt{T}\, N^{\delta}} \right), \\
     II &= O_{\p}\left( \frac{1}{\sqrt{T}\, N^{\delta}} \right) + O_{\p}\left( \frac{1}{N^{2 \delta}} \right).
 \end{align*}

\subparagraph{Probability limit.}
I now apply the Law of Large Numbers in Theorem~3.1 of \citet{Kojevnikov2021LimitVariables} to the scalar process $\{\lambda_S' F_t \cdot S' \Bar u_t\}_t$, conditional on $\sigma(\s, \Lambda)$. Conditions 3.1 and 3.2 hold by the same argument as in the proof of Theorem~\ref{theorem:strong_consistency_aggregate}. The process inherits the $\psi$-dependence of $(F_t, u_t)$ from Assumption~\ref{ass:mixing for stationary time series}, and the time series is not too dense. For the moment condition, the Cauchy--Schwarz inequality gives, for any $\pi > 0$,
\begin{equation*}
    \e\!\left[ \, \big| \lambda_S' F_t \cdot S' \Bar u_t \big|^{1+\pi} \,\big|\, \s, \Lambda \right] \;\leq\; \left( \e\!\left[ |\lambda_S' F_t|^{2(1+\pi)} \,\big|\, \s, \Lambda \right] \right)^{1/2} \left( \e\!\left[ |S' \Bar u_t|^{2(1+\pi)} \,\big|\, \s, \Lambda \right] \right)^{1/2}.
\end{equation*}
The first factor is finite by the moment bound on $F_t$ in Assumption~\ref{ass:mixing for stationary time series} together with $\lambda_S' F_t = O_{\p}(1)$ established above. The second factor is finite by the weighted-Jensen argument in the proof of Theorem~\ref{theorem:strong_consistency_aggregate}, applied to $S' u_t$, with the cross-sectional average $\tfrac{1}{N} \sum_i u_{it}$ in $\Bar u_t = u_t - \tfrac{1}{N} \sum_i u_{it} \iota_N$ handled by the triangle inequality and uniform weights $\tfrac{1}{N}$. Under the factor/idiosyncratic orthogonality of Assumption~\ref{ass:mixing for stationary time series},
\begin{equation*}
    \e[\lambda_S' F_t \cdot S' \Bar u_t \mid \s, \Lambda] = \lambda_S' \, \e[F_t \mid \s, \Lambda] \cdot \e[S' \Bar u_t \mid \s, \Lambda] = 0,
\end{equation*}
since $F_t \perp u_t$ given $(\s, \Lambda)$ and $\e[u_{jt}] = 0$. Hence
\begin{equation*}
    II \xrightarrow{p} 0.
\end{equation*}

\paragraph{Term III.}
By Theorem~\ref{theorem:central limit theorem}, the rescaled sum
\begin{equation*}
    \frac{N^{\delta}}{\sqrt{T}}\sum_{t=1}^T \varepsilon_t \sum_{j=1}^N S_j \bar u_{jt} \;\xrightarrow{d}\; \n\!\big(0,\; N^{2\delta}\, V_{z\varepsilon}(S)\big),
\end{equation*}
where $V_{z\varepsilon}(S) = \lim_{T\to\infty} T^{-1}\sum_{s=1}^T\sum_{t=1}^T \e[z_t z_s \varepsilon_t \varepsilon_s \mid S]$, with $V_{z\varepsilon}(S) = O_{\p}(N^{-2\delta})$. Equivalently, the unscaled numerator satisfies
\begin{equation*}
    \frac{1}{\sqrt{T}}\sum_t z_t \varepsilon_t \;\xrightarrow{d}\; \n(0,\, V_{z\varepsilon}(S)),
\end{equation*}
so
\begin{equation*}
    III \;=\; \frac{1}{T}\sum_t z_t \varepsilon_t \;=\; O_{\p}\!\left(\frac{1}{\sqrt{T}\,N^{\delta}}\right).
\end{equation*}

\paragraph{Consistency.}
Combining the three orders,
\begin{equation*}
    \frac{1}{T}\sum_t z_t p_t \;=\; \frac{1}{\phi_d - \phi_s}\!\left[\, O_{\p}(N^{-2\delta}) \;+\; O_{\p}\!\left(\frac{1}{\sqrt{T}\,N^{\delta}}\right) \,\right],
\end{equation*}
where Term~I supplies the leading order and Terms II and III have been absorbed into the same $O_{\p}(N^{-\delta}/\sqrt{T})$ remainder. The numerator is $O_{\p}(N^{-\delta}/\sqrt{T})$, so the ratio is
\begin{equation*}
    \hat{\phi}_d - \phi_d \;=\; \frac{\displaystyle O_{\p}\!\left(\dfrac{1}{\sqrt{T}\,N^{\delta}}\right)}{\displaystyle O_{\p}\!\left(\dfrac{1}{N^{2\delta}}\right) + O_{\p}\!\left(\dfrac{1}{\sqrt{T}\,N^{\delta}}\right)} \;=\; O_{\p}\!\left(\frac{N^{\delta}}{\sqrt{T}}\right).
\end{equation*}
The remainder is dominated by Term~I — equivalently, $N^{\delta}/\sqrt{T} \to 0$ — under the condition $N^{2\delta}/T \to 0$. For $\mu \in (1,2)$, $2\delta < 1$. Thus $N/T \to 0$ is a sufficient condition for consistency. For $\mu > 2$, $2\delta = 1$. Thus $N/T \to 0$ is a necessary condition for consistency.

Since $2\delta \leq 1$ in both regimes, $N/T \to 0$ is sufficient. Hence
\begin{equation*}
    \hat{\phi}_d \;\xrightarrow{p}\; \phi_d.
\end{equation*}

\paragraph{Asymptotic normality.}
Write $\Gamma_{zp} = S'\Omega S/(\phi_d - \phi_s)$, which is nonzero almost surely in $S$ for $\phi_d \neq \phi_s$. Proposition~\ref{prop:denominator} gives
\begin{equation*}
    \frac{\Gamma_{zp}}{\frac{1}{T}\sum_t z_t p_t} \;\xrightarrow{p}\; 1
\end{equation*}
under $a_N^2/T \to 0$. Factorise
\begin{equation*}
    \frac{\Gamma_{zp}}{\sqrt{V_{z\varepsilon}(S)}} \cdot \sqrt{T}\,[\hat{\phi}_d - \phi_d]
    \;=\; \frac{\frac{1}{\sqrt{T}}\sum_t z_t \varepsilon_t}{\sqrt{V_{z\varepsilon}(S)}} \cdot \frac{\Gamma_{zp}}{\frac{1}{T}\sum_t z_t p_t} \;\xrightarrow{d}\; \n(0,1),
\end{equation*}
by Theorem~\ref{theorem:central limit theorem} for the first factor, since $a_N^2 V_{z\varepsilon}(S)$ is non-degenerate under Assumption \ref{ass:mixing for stationary time series}, and Slutsky for the product. The limit is free of $(\s, \Lambda)$, so the unconditional statement follows by dominated convergence.
Both $\Gamma_{zp}$ and $V_{z\varepsilon}(S)$ are $O_{\p}(N^{-2\delta})$, so the studentization satisfies $\Gamma_{zp}/\sqrt{V_{z\varepsilon}(S)} = O_{\p}(N^{-\delta})$, and the standardized statistic concentrates at the rate $\sqrt{T}/N^{\delta}$. For $\mu \in (1,2)$, $\delta = 1 - 1/\mu$ gives the rate $\sqrt{T}/N^{1 - 1/\mu}$. For $\mu > 2$, $\delta = 1/2$ gives the rate $\sqrt{T/N}$.

\paragraph{Failure mode under $N/T \to c$.}
The $\mu > 2$ failure mode falls out of the same algebra. Consistency required $N^{2\delta}/T \to 0$, which fails when $N/T \to c > 0$ and $\delta = 1/2$. There, $\hat{\phi}_d - \phi_d = O_{\p}(N^{\delta}/\sqrt{T}) = O_{\p}(\sqrt{N/T}) = O_{\p}(1)$, so the estimator is inconsistent and identification is weak in the sense of \citet{Staiger1997InstrumentalInstruments}. \qed

\subsection{Proof of Theorem \ref{theorem:weak_consistency_disaggregate}} \label{app:proof of weak disaggregate}

The scaled difference between the estimator and the true value is
\begin{equation*}
    \sqrt{T} [\hat{\phi}_s - \phi_s] = \frac{\displaystyle 1}{\displaystyle \frac{1}{T} \sum_t z_t p_t} \cdot \frac{1}{\sqrt{T}} \sum_t z_t v_{Ht}.
\end{equation*}

\paragraph{Denominator.} Neither $z_t$ nor $p_t$ depends on $H$, so Terms I, II and III of Theorem~\ref{theorem:weak_consistency_aggregate} and the conclusion
\begin{equation*}
    \frac{\Gamma_{zp}}{\frac{1}{T} \sum_t z_t p_t} \;\xrightarrow{p}\; 1, \qquad \Gamma_{zp} \;=\; \frac{S' \Omega S}{\phi_d - \phi_s} \;=\; O_{\p}(N^{-2\delta}),
\end{equation*}
carry over verbatim. $\Gamma_{zp}$ is $\sigma(\s)$-measurable, so widening the conditioning to $\sigma(\s, \Lambda)$ leaves it unchanged.

\paragraph{Numerator.} This is the only part of the argument that differs. The multiplier is $v_{Ht} = \lambda_H' F_t + H' u_t$. Since $\mu > 1$ here, Lemma \ref{lemma:v_H multiplier} gives the multiplier conditions of Theorem \ref{theorem:central limit theorem}, and $a_N^2 V_{z v_H}(S, \Lambda)$ is non-degenerate under Assumption \ref{ass:mixing for stationary time series}. Applying that theorem to $X_t = v_{Ht} \sum_j S_j \Bar u_{jt}$,
\begin{equation*}
    \frac{1}{\sqrt{T}} \sum_t z_t v_{Ht} \;\xrightarrow{d}\; \n \big(0, \; V_{z v_H}(S, \Lambda) \big),
\end{equation*}
where $V_{z v_H}(S, \Lambda) = \lim_{T \to \infty} \frac{1}{T} \sum_{s=1}^T \sum_{t=1}^T \e[z_t z_s v_{Ht} v_{Hs} \mid S, \Lambda]$. The lemma bounds $\e[|v_{Ht}|^{8+2\pi} \mid \s, \Lambda]$ uniformly in $N$, so $V_{z v_H}(S, \Lambda) = O_{\p}(N^{-2\delta})$ by the order of $z_t$ in Proposition~\ref{prop:weakness of known factor instrument}, exactly as for $V_{z\varepsilon}(S)$. Hence $T^{-1} \sum_t z_t v_{Ht} = O_{\p}(N^{-\delta}/\sqrt{T})$, the order carried by Term III above.

\paragraph{Combining.} Numerator and denominator have the orders of Theorem~\ref{theorem:weak_consistency_aggregate}, so the consistency, Slutsky and studentization steps go through with $V_{z\varepsilon}(S)$ replaced by $V_{z v_H}(S, \Lambda)$. Under $N/T \to 0$, $\hat{\phi}_s - \phi_s = O_{\p}(N^{\delta}/\sqrt{T}) = o_{\p}(1)$ and
\begin{equation*}
    \frac{\Gamma_{zp}}{\sqrt{V_{z v_H}(S, \Lambda)}} \cdot \sqrt{T} [\hat{\phi}_s - \phi_s] \;\xrightarrow{d}\; \n(0,1),
\end{equation*}
at the rate $\sqrt{T}/N^{\delta}$. Under $\mu > 2$ and $N/T \to c > 0$, $\hat{\phi}_s - \phi_s = O_{\p}(\sqrt{N/T}) = O_{\p}(1)$ and the estimator is inconsistent.
\qed

\section{Proofs of GIV with estimated factors} \label{app:proofs_estimated}

\subsection{Lemmas}

\begin{Lemma}\label{lemma:order of a mixing sequence}
Suppose Assumptions \ref{ass:weak stationarity and idiosyncrasy}, \ref{ass:size_of_firm_power_law}, \ref{ass:LN_time and cross sectional dependence}, and \ref{ass:LN_strong factor structure} hold. Let $g_t$ be a scalar, mean-zero process with $\e[u_{jt} g_t] = 0$ for all $j,t$, whose product $u_{jt} g_t$ satisfies the cross-sectional weak-dependence condition of Assumption~\ref{ass:LN_time and cross sectional dependence}.6. Then
\begin{equation*}
    Z_{NT} := \frac{1}{\sqrt{T}} \sum_{t=1}^T \left[ \frac{1}{N} \sum_{j=1}^N \tilde{\Blambda}_j \Bar{u}_{jt} \right] g_t = O_{\p}\left( \frac{1}{\sqrt{N}} \right).
\end{equation*}
In particular the bound holds for $g_t = \varepsilon_t$ and for each component of $F_t$, since by Assumption~\ref{ass:LN_time and cross sectional dependence}.6 both $u_{jt} \varepsilon_t$ and $u_{jt} F_t$ satisfy the cross-sectional condition.
\end{Lemma}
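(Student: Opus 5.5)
The plan is a second-moment bound followed by Chebyshev. Write
\begin{equation*}
Z_{NT} = \frac{1}{N\sqrt{T}} \sum_{j=1}^N \sum_{t=1}^T \tilde{\Blambda}_j \Bar{u}_{jt} g_t ,
\end{equation*}
and expand the demeaning, $\Bar{u}_{jt} = u_{jt} - \frac{1}{N}\sum_k u_{kt}$. This gives $Z_{NT} = A_{NT} - \bar{\Blambda} B_{NT}$, where
\begin{equation*}
A_{NT} = \frac{1}{N\sqrt T}\sum_j\sum_t \tilde{\Blambda}_j u_{jt} g_t, \qquad B_{NT} = \frac{1}{N\sqrt T}\sum_k\sum_t u_{kt} g_t ,
\end{equation*}
and $\bar{\Blambda} = N^{-1}\sum_j \tilde{\Blambda}_j$. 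I will show $A_{NT} = O_{\p}(N^{-1/2})$ and $B_{NT} = O_{\p}(N^{-1/2})$. For the second term I will use the normalisation in Assumption~\ref{ass:weak stationarity and idiosyncrasy}, under which the non-constant loadings have zero mean, so $\bar{\Blambda}$ is $O_{\p}(N^{-1/2})$ by i.i.d.\ sampling with finite fourth moments (Assumption~\ref{ass:size_of_firm_power_law}.3--5). Even the cruder bound $\|\bar{\Blambda}\| \leq \bar{\Blambda}$ from Assumption~\ref{ass:LN_strong factor structure} is enough, since $B_{NT}$ is already $O_{\p}(N^{-1/2})$.

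\emph{Both sums have mean zero.} Each summand has mean zero because $\e[u_{jt} g_t] = 0$. For $A_{NT}$ I also use that the loadings are independent of $(u_t, g_t)$ (Assumption~\ref{ass:size_of_firm_power_law}.4), which covers $g_t = \varepsilon_t$ and components of $F_t$. I will work conditionally on $\Lambda$ and then take iterated expectations.

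\emph{Variance of $B_{NT}$.} With $v_{kt} = u_{kt} g_t$,
\begin{equation*}
\e B_{NT}^2 = \frac{1}{N^2 T}\sum_{j,k}\sum_{t,s}\cov(v_{jt}, v_{ks}) \leq \frac{1}{N}\cdot\frac{1}{NT}\sum_{j,k}\sum_{t,s}|\cov(v_{jt},v_{ks})| \leq \frac{M}{N}
\end{equation*}
by Assumption~\ref{ass:LN_time and cross sectional dependence}.6. This is precisely the form that assumption supplies: the $1/(NT)$ normalisation absorbs one factor of $N$, and the remaining $1/N$ gives the rate.

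\emph{Variance of $A_{NT}$.} Conditionally on $\Lambda$ the same computation applies, now with weights $\tilde{\Blambda}_j \tilde{\Blambda}_k'$. Since the loadings are bounded uniformly, $\|\tilde{\Blambda}_j\| \leq \bar{\Blambda}$ (Assumption~\ref{ass:LN_strong factor structure}.2), so
\begin{equation*}
\e\|A_{NT}\|^2 \leq \frac{\bar{\Blambda}^2}{N}\cdot\frac{1}{NT}\sum_{j,k}\sum_{t,s}|\cov(v_{jt},v_{ks})| \leq \frac{\bar{\Blambda}^2 M}{N}.
\end{equation*}
When $g_t$ is a component of $F_t$, I use the matrix-norm version of Assumption~\ref{ass:LN_time and cross sectional dependence}.6 with $w_{jt}$ in place of $v_{jt}$. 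Chebyshev then gives both rates, and combining the two terms yields $Z_{NT} = O_{\p}(N^{-1/2})$.

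The main obstacle is the step that exchanges loadings and shocks. The lemma assumes only that $u_{jt} g_t$ satisfies the summability condition, so I need independence of $\Lambda$ from $(u_t, g_t)$ to pull $\tilde{\Blambda}_j\tilde{\Blambda}_k'$ outside the covariance. I also need the uniform bound rather than only the fourth moment: with only the fourth moment, the covariance terms with $j \neq k$ would have to be controlled jointly. If the uniform bound were unavailable, I would instead use the zero mean and independence of the loadings across $i$. These kill the $j \neq k$ terms in expectation over $\Lambda$, leaving only the diagonal $\frac{1}{N^2T}\sum_j \e\|\tilde{\Blambda}_j\|^2 \sum_{t,s}|\cov(v_{jt},v_{js})|$, which is $O(1/N)$ by the same assumption.
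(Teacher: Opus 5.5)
Your proposal is correct and is essentially the paper's proof. It uses the same split $Z_{NT} = \frac{1}{N}\sum_j \tilde{\Blambda}_j W_{jT} - \big(\frac{1}{N}\sum_j \tilde{\Blambda}_j\big)\big(\frac{1}{N}\sum_k W_{kT}\big)$ with $W_{jT} = T^{-1/2}\sum_t u_{jt} g_t$, the same conditional second-moment bound via $\|\tilde{\Blambda}_j\| \leq \Bar{\Blambda}$ and Assumption~\ref{ass:LN_time and cross sectional dependence}.6, and Markov/Chebyshev. The only difference is cosmetic: the paper uses the zero mean of the loadings to put the second piece at $O_{\p}(1/N)$, while you note that the bounded-loading estimate already gives $O_{\p}(N^{-1/2})$, which suffices.
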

\begin{proof}
Let $v_{jt} = u_{jt} g_t$ and $W_{jT} = \frac{1}{\sqrt{T}} \sum_{t=1}^T u_{jt} g_t$. Because $\Bar{u}_{jt} = u_{jt} - \frac{1}{N} \sum_{k=1}^N u_{kt}$,
\begin{equation*}
    \frac{1}{N} \sum_{j=1}^N \tilde{\Blambda}_j \Bar{u}_{jt} = \frac{1}{N} \sum_{j=1}^N \tilde{\Blambda}_j u_{jt} - \Big( \frac{1}{N} \sum_{j=1}^N \tilde{\Blambda}_j \Big) \Big( \frac{1}{N} \sum_{k=1}^N u_{kt} \Big),
\end{equation*}
so that $Z_{NT} = Z_{NT}^{(1)} - Z_{NT}^{(2)}$ with
\begin{equation*}
    Z_{NT}^{(1)} = \frac{1}{N} \sum_{j=1}^N \tilde{\Blambda}_j W_{jT}, \qquad Z_{NT}^{(2)} = \Big( \frac{1}{N} \sum_{j=1}^N \tilde{\Blambda}_j \Big) \Big( \frac{1}{N} \sum_{k=1}^N W_{kT} \Big).
\end{equation*}
The hypothesis $\e[u_{jt} g_t] = 0$ gives $\e[v_{jt}] = 0$, hence $\e[W_{jT}] = 0$. By Assumptions \ref{ass:size_of_firm_power_law} and \ref{ass:LN_strong factor structure}, $\tilde{\Lambda}$ is independent of $\{(u_t, g_t)\}$, with $\| \tilde{\Blambda}_j \| \leq \Bar{\Blambda}$, $\e[\tilde{\Blambda}_j] = 0$, and the loadings independent across $j$.

For the first term, $\e[Z_{NT}^{(1)} | \tilde{\Lambda}] = 0$, and
\begin{equation*}
    \e\big[ \| Z_{NT}^{(1)} \|^2 \,\big|\, \tilde{\Lambda} \big] = \frac{1}{N^2} \sum_{j=1}^N \sum_{k=1}^N \tilde{\Blambda}_j' \tilde{\Blambda}_k\, \e[W_{jT} W_{kT}].
\end{equation*}
Since $\e[W_{jT} W_{kT}] = \frac{1}{T} \sum_{t=1}^T \sum_{s=1}^T \cov(v_{jt}, v_{ks})$, the bounded loadings ($|\tilde{\Blambda}_j' \tilde{\Blambda}_k| \leq \Bar{\Blambda}^2$) and Assumption \ref{ass:LN_time and cross sectional dependence}.6 give
\begin{equation*}
    \e\big[ \| Z_{NT}^{(1)} \|^2 \,\big|\, \tilde{\Lambda} \big] \leq \frac{\Bar{\Blambda}^2}{N} \cdot \frac{1}{NT} \sum_{j=1}^N \sum_{k=1}^N \sum_{t=1}^T \sum_{s=1}^T \big| \cov(v_{jt}, v_{ks}) \big| \leq \frac{\Bar{\Blambda}^2 M}{N}.
\end{equation*}
Taking expectations and applying Markov's inequality, $Z_{NT}^{(1)} = O_{\p}(1/\sqrt{N})$.

For the second term, the loadings are mean zero and independent across $j$, so $\e\big\| \frac{1}{N} \sum_{j=1}^N \tilde{\Blambda}_j \big\|^2 = \frac{1}{N^2} \sum_{j=1}^N \e\| \tilde{\Blambda}_j \|^2 = O(1/N)$, giving $\frac{1}{N} \sum_{j=1}^N \tilde{\Blambda}_j = O_{\p}(1/\sqrt{N})$. Likewise $\e\big[ \big( \frac{1}{N} \sum_{k=1}^N W_{kT} \big)^2 \big] = \frac{1}{N^2} \sum_{k=1}^N \sum_{l=1}^N \e[W_{kT} W_{lT}] \leq \frac{1}{N} \cdot \frac{1}{NT} \sum_{k=1}^N \sum_{l=1}^N \sum_{t=1}^T \sum_{s=1}^T \big| \cov(v_{kt}, v_{ls}) \big| \leq \frac{M}{N}$, so $\frac{1}{N} \sum_{k=1}^N W_{kT} = O_{\p}(1/\sqrt{N})$. Hence $Z_{NT}^{(2)} = O_{\p}(1/N)$.

Combining, $Z_{NT} = Z_{NT}^{(1)} - Z_{NT}^{(2)} = O_{\p}(1/\sqrt{N})$.
\end{proof}

For the analysis that follows, recall the regime-specific rescaling $a_N$ of \eqref{eq:a_N}. By Proposition~\ref{prop:behavior of S times lambda}, $a_N \cdot S' \tilde{\Lambda} = O_{\p}(1)$ uniformly across regimes. By Proposition~\ref{prop:weakness of known factor instrument}, $a_N \cdot \Bar{u}_{St} = O_{\p}(1)$.

Before the proofs, I state two lemmas that unify steps across the regimes. Lemma~\ref{lemma:numerator} expands the first-stage error against a generic multiplier $Q_t$, so that one statement covers each of the moments used in the proofs of the theorems. Lemma~\ref{lemma:denominator} takes the same error against $p_t$, which is the denominator's scaling.

\begin{Lemma} \label{lemma:numerator}
    Suppose Assumptions \ref{ass:weak stationarity and idiosyncrasy} to \ref{ass:LN_moments and CLT} hold. Let $Q_t$ be weakly stationary and measurable with respect to $\sigma(F_t, u_t, \varepsilon_t)$, with $\e|Q_t|^{4 + 2\pi} < \infty$, and write $\zeta_t = \frac{1}{N} \sum_{j=1}^N \tilde{\Blambda}_j \Bar{u}_{jt}$.
    \begin{enumerate}
        \item[(i)] \emph{(Expansion.)}
        \begin{align*}
            \frac{a_N}{\sqrt{T}} \sum_{t=1}^T S' \big( \hat{C}_t - \tilde{C}_t \big) Q_t
            &= \frac{1}{T} \sum_{t=1}^T \tilde{F}_t' Q_t \cdot \left[ \frac{\tilde{F}' \tilde{F}}{T} \right]^{-1} \frac{a_N}{\sqrt{T}} \sum_{m=1}^T \tilde{F}_m \Bar{u}_{Sm} \\
            & \enspace + a_N S' \tilde{\Lambda} \left[ \frac{\tilde{\Lambda}' \tilde{\Lambda}}{N} \right]^{-1} \cdot \frac{1}{\sqrt{T}} \sum_{t=1}^T \zeta_t Q_t + O_{\p}\left( \frac{\sqrt{T}}{N} \right) + O_{\p}\left( \frac{1}{\sqrt{T}} \right),
        \end{align*}
        with $\frac{a_N}{\sqrt{T}} \sum_{m=1}^T \tilde{F}_m \Bar{u}_{Sm} = O_{\p}(1)$ and $a_N S' \tilde{\Lambda} \left[ \tilde{\Lambda}' \tilde{\Lambda}/N \right]^{-1} = O_{\p}(1)$.
        \item[(ii)] \emph{(Mean-zero multipliers.)} If in addition $\e[\Bar{u}_{jt} Q_t] = 0$ for every $j$, the loading-space term is $O_{\p}(1/\sqrt{N})$ and
        \begin{align*}
            \frac{a_N}{\sqrt{T}} \sum_{t=1}^T S' \big( \hat{C}_t - \tilde{C}_t \big) Q_t
            &= \frac{1}{T} \sum_{t=1}^T \tilde{F}_t' Q_t \cdot \left[ \frac{\tilde{F}' \tilde{F}}{T} \right]^{-1} \frac{a_N}{\sqrt{T}} \sum_{m=1}^T \tilde{F}_m \Bar{u}_{Sm} \\
            & \enspace + O_{\p}\left( \frac{1}{\sqrt{N}} \right) + O_{\p} \left( \frac{\sqrt{T}}{N} \right) + O_{\p}\left( \frac{1}{\sqrt{T}} \right).
        \end{align*}
    \end{enumerate}
\end{Lemma}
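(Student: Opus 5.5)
The plan is to substitute the influence-function expansion \eqref{eq:influence function of estimator} into $S'(\hat C_t - \tilde C_t)$, multiply by $Q_t$, sum over $t$, and scale by $a_N/\sqrt T$. The three pieces of \eqref{eq:influence function of estimator} are handled separately.

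\textbf{Factor-space piece.} Premultiplying by $S'$ gives $\tilde F_t'[\tilde F'\tilde F/T]^{-1}\frac1T\sum_m \tilde F_m \Bar u_{Sm}$. This uses $S'\Bar u_m = \Bar u_{Sm}$. After summing against $Q_t$ and rescaling, the expression factors exactly into $\frac1T\sum_t \tilde F_t' Q_t \cdot [\tilde F'\tilde F/T]^{-1}\cdot \frac{a_N}{\sqrt T}\sum_m \tilde F_m\Bar u_{Sm}$. I then check the order of each factor.
\begin{itemize}
\item $\frac1T\sum_t\tilde F_t'Q_t = O_{\p}(1)$ by the moment bounds and a mixing LLN.
\item $[\tilde F'\tilde F/T]^{-1} = O_{\p}(1)$ by Assumption \ref{ass:LN_strong factor structure}.
\item $\frac{a_N}{\sqrt T}\sum_m \tilde F_m\Bar u_{Sm} = O_{\p}(1)$. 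This follows from Theorem \ref{theorem:central limit theorem} (or Corollary \ref{corollary:central limit theorem} when $\mu<1$), applied componentwise with $Z_t = \tilde F_{kt}$. The orthogonality $\e[\tilde F_t u_t]=0$ and independence from $(\s,\Lambda)$ give the required conditional orthogonality.
\end{itemize}

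\textbf{Loading-space piece.} Premultiplying by $S'$ gives $S'\tilde\Lambda[\tilde\Lambda'\tilde\Lambda/N]^{-1}\zeta_t$. Summing against $Q_t$ with the $a_N/\sqrt T$ scaling yields $a_N S'\tilde\Lambda[\tilde\Lambda'\tilde\Lambda/N]^{-1}\cdot\frac1{\sqrt T}\sum_t\zeta_tQ_t$. The prefactor is $O_{\p}(1)$ by Proposition \ref{prop:behavior of S times lambda} together with $\tilde\Lambda'\tilde\Lambda/N \to \Sigma_{\tilde\Lambda}>0$. This finishes part (i) apart from the remainder.

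For part (ii), the added hypothesis $\e[\Bar u_{jt}Q_t]=0$ allows Lemma \ref{lemma:order of a mixing sequence} with $g_t = Q_t$, giving $\frac1{\sqrt T}\sum_t\zeta_tQ_t = O_{\p}(N^{-1/2})$. That lemma is stated via $\e[u_{jt}g_t]=0$ and the summability of Assumption \ref{ass:LN_time and cross sectional dependence}.6. I would rerun its proof with $\Bar u$ in place of $u$. Its decomposition already produces demeaned sums, and the $Z^{(2)}$ term only needs the $O(1/N)$ summability bound, not a zero mean. Because the prefactor is $O_{\p}(1)$, the loading term becomes $O_{\p}(N^{-1/2})$.

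\textbf{Remainder (main obstacle).} The remainder is $\frac{a_N}{\sqrt T}\sum_t S' R_t Q_t$, where $R_t$ is the $O_{\p}(\delta_{NT}^{-2})$ term in \eqref{eq:influence function of estimator}. The naive bound $\sqrt T\, a_N \|S\|\,\delta_{NT}^{-2}$ is too crude for two reasons: $\|S\|$ is not bounded in the $\ell_1$ sense, and $a_N$ can grow.

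I would first go back to the Bai (2003) decomposition in Appendix \ref{app:estimation of common components}. Each second-order term is a product of a rotation-estimation error, of order $\delta_{NT}^{-2}$ uniformly in $t$ after averaging, and either $\tilde F_t$ or $\tilde\Lambda$. Under $S'$ the loading-type terms carry $S'\tilde\Lambda = O_{\p}(a_N^{-1})$, which cancels the $a_N$. The factor-type terms carry $S'\Bar u$-type sums, which are $O_{\p}(a_N^{-1})$ by Proposition \ref{prop:weakness of known factor instrument}. With this cancellation, averaging against $Q_t$ gives $\sqrt T\cdot O_{\p}(\delta_{NT}^{-2})$, which equals $O_{\p}(\sqrt T/N)+O_{\p}(1/\sqrt T)$ since $\delta_{NT}^{-2}\le N^{-1}+T^{-1}$.

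The delicate part is showing that the $O_{\p}(\delta_{NT}^{-2})$ bound holds uniformly enough in $t$ that summing over $t$ costs only $T$ and not more. I would obtain this through Cauchy--Schwarz, bounding $\frac1T\sum_t\|R_t\|^2$ via the moment conditions of Assumptions \ref{ass:weak dep between agg shocks and idio errors}--\ref{ass:LN_moments and CLT}.
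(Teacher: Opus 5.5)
Your proposal is correct and follows the paper's own proof. Both substitute \eqref{eq:influence function of estimator}, apply Theorem~\ref{theorem:central limit theorem} (or Corollary~\ref{corollary:central limit theorem}) componentwise to the factor-space factor, bound the loading prefactor by Proposition~\ref{prop:behavior of S times lambda}, and use Lemma~\ref{lemma:order of a mixing sequence} for part (ii).

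The paper simply books the remainder as $\sqrt{T}\,\delta_{NT}^{-2}$ without confronting the factor $a_N$. Your check that the $S'$-weighting supplies an $a_N^{-1}$ in each second-order term is therefore extra rigour, not a different route. One small correction there: $\|S\|_1 = 1$ for a share vector, so the real obstructions to the naive bound are the growth of $a_N$ and the non-uniformity of the entrywise remainder, not $\ell_1$-unboundedness. In part (ii), both you and the paper still owe the Assumption~\ref{ass:LN_time and cross sectional dependence}.6-type summability for $u_{jt}Q_t$. That condition is assumed only for $\varepsilon_t$ and $F_t$.
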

\begin{proof}
From \eqref{eq:influence function of estimator},
\begin{align*}
    \frac{a_N}{\sqrt{T}} \sum_{t=1}^T S' \big( \hat{C}_t - \tilde{C}_t \big) Q_t
    &= \frac{1}{T} \sum_{t=1}^T \tilde{F}_t' Q_t \cdot \left[ \frac{\tilde{F}' \tilde{F}}{T} \right]^{-1} \frac{1}{\sqrt{T}} \sum_{m=1}^T \tilde{F}_m \Bar{u}_{Sm}\, a_N  \\
    & \enspace +  a_N S' \tilde{\Lambda} \left[ \frac{\tilde{\Lambda}' \tilde{\Lambda}}{N} \right]^{-1} \cdot \frac{1}{\sqrt{T}} \sum_{t=1}^T \left[ \frac{1}{N} \sum_{j=1}^N \tilde{\Blambda}_j \Bar{u}_{jt} \right] Q_t \\
    & \enspace + O_{\p} \left( \frac{\sqrt{T}}{N} \right) + O_{\p}\left( \frac{1}{\sqrt{T}} \right)
\end{align*}
the two remainders collecting the $N$- and $T$-branches of $\delta_{NT}^{-2} = \min(N,T)^{-1}$. Under Assumption~\ref{ass:mixing for stationary time series} each component of $\tilde{F}_m$ satisfies the conditions on $Z_m$ in Theorem~\ref{theorem:central limit theorem}, so that theorem for $\mu > 1$ and Corollary~\ref{corollary:central limit theorem} for $\mu \in (0,1)$, applied component-wise, give $\frac{a_N}{\sqrt{T}} \sum_{m=1}^T \tilde{F}_m \Bar{u}_{Sm} = O_{\p}(1)$.

Consider the second term first. By Proposition~\ref{prop:behavior of S times lambda}, $a_N S' \tilde{\Lambda} = O_{\p}(1)$ uniformly across regimes. Now consider $\frac{1}{N} \sum_{j=1}^N \tilde{\Blambda}_j \Bar{u}_{jt}$:
\begin{align*}
    \frac{1}{N} \sum_{i=1}^N \tilde{\Blambda}_i \Bar{u}_{it} = \frac{1}{N} \sum_{i=1}^N \tilde{\Blambda}_i u_{it} - \frac{1}{N^2} \sum_{i=1}^N \sum_{j=1}^N \tilde{\Blambda}_i u_{jt}
\end{align*}
For the first term on the right hand side, $\e\left[\frac{1}{N} \sum_{i=1}^N \tilde{\Blambda}_i u_{it}\right] = 0$, and for some $K > 0$,
\begin{align*}
    \p \left[ \bigg \| \frac{1}{\sqrt{N}} \sum_{i=1}^N \tilde{\Blambda}_i u_{it} \bigg \| \geq K \right] &\leq \frac{K^{-2}}{N} \e \left[ \bigg \| \sum_{i=1}^N \tilde{\Blambda}_i u_{it} \bigg \|^2 \right] \\
    &= \frac{K^{-2}}{N} \e[u_t' \tilde{\Lambda} \tilde{\Lambda}' u_t] = \frac{K^{-2}}{N} \text{tr} \left( \e[\tilde{\Lambda} \tilde{\Lambda}'] \e[u_t u_t'] \right) \\
    &= \frac{K^{-2}}{N} \sum_{i=1}^N \e \|\tilde{\Blambda}_i \|^2 \e[u_{it}^2] + \frac{2 K^{-2}}{N} \sum_{i=1}^N \sum_{j=1}^N \e[\tilde{\Blambda}_i'] \e[\tilde{\Blambda}_j] \e[u_{it} u_{jt}] \\
    &= \frac{K^{-2}}{N} \sum_{i=1}^N \e \|\tilde{\Blambda}_i \|^2 \e[u_{it}^2]
\end{align*}
where the last equality uses $\e[\tilde{\Blambda}_i] = 0$. Both $\e \|\tilde{\Blambda}_i \|^2$ and $\e[u_{it}^2]$ are bounded, so
\begin{equation*}
    \frac{1}{N} \sum_{i=1}^N \e \|\tilde{\Blambda}_i \|^2 \e[u_{it}^2] \leq M < \infty,
\end{equation*}
and hence $\frac{1}{\sqrt{N}} \sum_{i=1}^N \tilde{\Blambda}_i u_{it} = O_{\p}(1)$.

For the second term,
\begin{align*}
    \frac{1}{N^2} \sum_{i=1}^N \sum_{j=1}^N \tilde{\Blambda}_i u_{jt} = \frac{1}{N} \sum_{i=1}^N \tilde{\Blambda}_i \cdot \frac{1}{N} \sum_{j=1}^N u_{jt}
\end{align*}
Since the loadings are mean zero and independent across $i$ with bounded variance, $\frac{1}{\sqrt{N}} \sum_{i=1}^N \tilde{\Blambda}_i = O_{\p}(1)$, and under Assumption~\ref{ass:LN_time and cross sectional dependence}, $\frac{1}{\sqrt{N}} \sum_{j=1}^N u_{jt} = O_{\p}(1)$. Thus $\frac{1}{N^2} \sum_{i=1}^N \sum_{j=1}^N \tilde{\Blambda}_i u_{jt} = O_{\p}\left(\frac{1}{N}\right)$, and
\begin{equation*}
    \frac{1}{N} \sum_{i=1}^N \tilde{\Blambda}_i \Bar{u}_{it} = O_{\p}\left( \frac{1}{\sqrt{N}} \right).
\end{equation*}

The pointwise order $\zeta_t = O_{\p}(1/\sqrt{N})$ is used in Lemma~\ref{lemma:denominator}. This establishes part (i), the multiplier having entered only through $\e|Q_t|^{4+2\pi} < \infty$.

For part (ii), suppose $\e[\Bar{u}_{jt} Q_t] = 0$ for every $j$. Lemma~\ref{lemma:order of a mixing sequence} then gives
\begin{equation*}
    \frac{1}{\sqrt{T}} \sum_{t=1}^T \zeta_t Q_t = O_{\p}\left( \frac{1}{\sqrt{N}} \right),
\end{equation*}
and combined with $a_N S' \tilde{\Lambda} = O_{\p}(1)$ and $[\tilde{\Lambda}' \tilde{\Lambda}/N]^{-1} = O_{\p}(1)$, the loading-space term is $O_{\p}(1/\sqrt{N})$. Substituting into part (i),
\begin{align*}
    \frac{a_N}{\sqrt{T}} \sum_{t=1}^T S' \big( \hat{C}_t - \tilde{C}_t \big) Q_t
    &= \frac{1}{T} \sum_{t=1}^T \tilde{F}_t' Q_t \cdot \left[ \frac{\tilde{F}' \tilde{F}}{T} \right]^{-1} \frac{a_N}{\sqrt{T}} \sum_{m=1}^T \tilde{F}_m \Bar{u}_{Sm} \\
    & \enspace + O_{\p}\left( \frac{1}{\sqrt{N}} \right) + O_{\p} \left( \frac{\sqrt{T}}{N} \right) + O_{\p}\left( \frac{1}{\sqrt{T}} \right).
\end{align*}
\end{proof}

Lemma \ref{lemma:numerator} is generic in the multiplier $Q_t$. It is also generic in the array that drives the principal-component step, which is what the panel controls of Appendix \ref{appendix_sec_controls} need.

\begin{remark}[Substitution in the shock slot] \label{rem:shock substitution}
    The expansion \eqref{eq:influence function of estimator} is linear in the demeaned array $\{ \Bar{u}_{jm} \}$, and the proof above uses that array only through $\Bar{u}_{Sm} = S' \Bar{u}_m$ and $\zeta_t$. Let $\{ \Bar{W}_{jm} \}$ be any array that is demeaned across $j$ and independent of $\tilde{\Lambda}$. Write $\Bar{W}_{Sm} = S' \Bar{W}_m$ and $\zeta_t^W = \frac{1}{N} \sum_j \tilde{\Blambda}_j \Bar{W}_{jt}$, and let $\hat{C}_t^W - \tilde{C}_t^W$ denote \eqref{eq:influence function of estimator} with $\Bar{u}$ replaced by $\Bar{W}$. Suppose $a_N \Bar{W}_{Sm} = O_{\p}(1)$ with $\e[\tilde{F}_m \Bar{W}_{jm} \mid \s, \Lambda] = 0$ for every $j$, and suppose $\e[\Bar{W}_{jt}^2]$ is bounded uniformly in $j$ and $N$. Then part (i) holds with $\Bar{u}$ replaced by $\Bar{W}$ throughout. The first hypothesis gives the factor-space factor $\frac{a_N}{\sqrt{T}} \sum_m \tilde{F}_m \Bar{W}_{Sm} = O_{\p}(1)$ by Theorem~\ref{theorem:central limit theorem} and Corollary~\ref{corollary:central limit theorem}, and the second gives $\zeta_t^W = O_{\p}(1/\sqrt{N})$ by the argument used for $\zeta_t$ above. If in addition $\e[\Bar{W}_{jt} Q_t] = 0$ for every $j$ and the product $\Bar{W}_{jt} Q_t$ satisfies Assumption~\ref{ass:LN_time and cross sectional dependence}.6, then Lemma~\ref{lemma:order of a mixing sequence} applies to $\frac{1}{\sqrt{T}} \sum_t \zeta_t^W Q_t$ and part (ii) holds as well.
\end{remark}

\begin{Lemma} \label{lemma:denominator}
    Suppose Assumptions \ref{ass:weak stationarity and idiosyncrasy} to \ref{ass:LN_moments and CLT} hold and $a_N^2/T \to 0$, then
\begin{equation*}
    \frac{a_N^2}{T} \sum_{t=1}^T S' \big( \hat{C}_t - \tilde{C}_t \big) p_{t}
    = O_{\p} \left( \frac{a_N}{\sqrt{T}} \right) + O_{\p} \left( \frac{1}{\sqrt{N}} \right) = o_{\p}(1)
\end{equation*}
\end{Lemma}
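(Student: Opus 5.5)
Plan: substitute the influence-function expansion \eqref{eq:influence function of estimator} into the sum and bound each of the three resulting pieces against $p_t$. This mirrors Lemma~\ref{lemma:numerator}(i) with $Q_t = p_t$ and an extra factor $a_N/\sqrt{T}$ in front. The difference is that $p_t$ does not satisfy the orthogonality $\e[\Bar u_{jt} p_t]=0$, since $p_t$ loads on $u_{St}$. So part (ii) is not available, and we work with part (i) only. Note that $p_t$ is measurable in $\sigma(F_t,u_t,\varepsilon_t)$ and, conditionally on $(S,\Lambda)$, has bounded $4+2\pi$ moments by the argument in Proposition~\ref{prop:denominator}. Applying Lemma~\ref{lemma:numerator}(i) with $Q_t = p_t$ and multiplying by $a_N/\sqrt T$ gives
\begin{equation*}
\frac{a_N^2}{T}\sum_t S'(\hat C_t-\tilde C_t)p_t = \frac{a_N}{\sqrt T}\Big[A_1 + A_2 + O_{\p}(\sqrt T/N)+O_{\p}(T^{-1/2})\Big].
\end{equation*}

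\emph{Factor-space piece.} Here
\begin{equation*}
A_1=\Big(\tfrac1T\sum_t\tilde F_t' p_t\Big)\,[\tilde F'\tilde F/T]^{-1}\,\tfrac{a_N}{\sqrt T}\sum_m\tilde F_m\Bar u_{Sm}.
\end{equation*}
The last factor is $O_{\p}(1)$ by Lemma~\ref{lemma:numerator}(i). The first factor is $O_{\p}(1)$ by a conditional LLN, as in Term II of the proof of Theorem~\ref{theorem:strong_consistency_aggregate}, with $\lambda_S=O_{\p}(1)$ from Proposition~\ref{prop:behavior of S times lambda}. The middle inverse is $O_{\p}(1)$ by Assumption~\ref{ass:LN_strong factor structure}. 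Hence $(a_N/\sqrt T)A_1=O_{\p}(a_N/\sqrt T)$. The remainders contribute $O_{\p}(a_N/N)+O_{\p}(a_N/T)$, which is $o(1)$ and absorbed into the stated bound because $a_N\le\sqrt N$.

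\emph{Loading-space piece.} Here
\begin{equation*}
A_2 = a_NS'\tilde\Lambda[\tilde\Lambda'\tilde\Lambda/N]^{-1}\,\tfrac{1}{\sqrt T}\sum_t\zeta_t p_t .
\end{equation*}
This is the main obstacle, since $\zeta_t p_t$ is not mean zero. I would write $(\phi_d-\phi_s)p_t = u_{St}+\lambda_S'F_t-\varepsilon_t$ and treat each component separately.

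For the $F_t$ and $\varepsilon_t$ components, Lemma~\ref{lemma:order of a mixing sequence} applies and gives $T^{-1/2}\sum_t\zeta_t(\cdot)=O_{\p}(N^{-1/2})$. After the prefactor $a_N/\sqrt T$, these contribute $o_{\p}(N^{-1/2})$.

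For the $u_{St}$ component, I would decompose $T^{-1}\sum_t \zeta_t u_{St}$ into its conditional mean plus a fluctuation. The mean is $N^{-1}\tilde\Lambda' D_N\Omega S$, and with $\tilde\Lambda$ independent mean zero this is $O_{\p}(\|\Omega S\|/N)=O_{\p}(N^{-1})$. The fluctuation is $O_{\p}(N^{-1/2}T^{-1/2})$, by the variance bound of Lemma~\ref{lemma:order of a mixing sequence} combined with Davydov's inequality. Multiplying by $a_N\cdot a_NS'\tilde\Lambda=O_{\p}(a_N)$, the contribution is $O_{\p}(a_N/N + a_N/\sqrt{NT})$. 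This is $O_{\p}(N^{-1/2})$ because $a_N\le\sqrt N$ and $a_N/\sqrt T\to0$.

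Collecting the pieces gives the stated $O_{\p}(a_N/\sqrt T)+O_{\p}(N^{-1/2})$, and this is $o_{\p}(1)$ under $a_N^2/T\to0$ and $N\to\infty$.
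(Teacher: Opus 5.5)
Your proposal is correct and is essentially the paper's proof. It uses the same influence-function expansion (you package it as Lemma~\ref{lemma:numerator}(i) at $Q_t = p_t$, while the paper substitutes \eqref{eq:influence function of estimator} directly), the same $O_{\p}(a_N/\sqrt{T})$ bound on the factor-space piece, and the same market-clearing split of $p_t$ in the loading-space piece, with Lemma~\ref{lemma:order of a mixing sequence} disposing of the $\lambda_S'F_t$ and $\varepsilon_t$ parts.

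The only local difference is the non-centred term $\frac{1}{T}\sum_t \zeta_t u_{St}$. The paper bounds it by Cauchy--Schwarz as $O_{\p}(N^{-1/2}) \cdot O_{\p}(a_N^{-1})$. You instead split it into drift plus fluctuation, the device the paper itself uses in Step~5 of Appendix~\ref{subsec:proof of estimated_disaggregate}. That split reaches the same $O_{\p}(N^{-1/2})$ and is in fact sharper, since the drift is $O_{\p}(\sqrt{h_N}/N)$. One small correction: the fluctuation bound should be credited to Davydov together with the Rosenthal $L^q$ bounds on $\zeta_t$ and $u_{St}$, not to the variance bound of Lemma~\ref{lemma:order of a mixing sequence}, because that lemma's centring hypothesis fails for $g_t = u_{St}$.
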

\begin{proof}
From \eqref{eq:influence function of estimator},
\begin{align*}
    \frac{a_N^2}{T} \sum_{t=1}^T S' \big( \hat{C}_t - \tilde{C}_t \big) p_{t}
    &= \frac{1}{T} \sum_{t=1}^T \tilde{F}_t' p_{t} \cdot \left[ \frac{\tilde{F}' \tilde{F}}{T} \right]^{-1} \frac{1}{T} \sum_{m=1}^T \tilde{F}_m \Bar{u}_{Sm}\, a_N^2  \\
    & \enspace +  a_N S' \tilde{\Lambda} \left[ \frac{\tilde{\Lambda}' \tilde{\Lambda}}{N} \right]^{-1} \cdot \frac{a_N}{T} \sum_{t=1}^T \left[ \frac{1}{N} \sum_{j=1}^N \tilde{\Blambda}_j \Bar{u}_{jt} \right] p_{t} + O_{\p} \left( \frac{a_N}{N} \right)
\end{align*}

Recall that
\begin{equation*}
     p_t = \frac{1}{\phi_d - \phi_s} \cdot (u_{St}  + \lambda_S' F_t -  \varepsilon_t).
\end{equation*}
By Proposition~\ref{prop:weakness of known factor instrument}, $u_{St} = O_{\p}(a_N^{-1})$ uniformly across regimes. The proofs of Theorems~\ref{theorem:strong_consistency_aggregate}--\ref{theorem:weak_consistency_aggregate} establish, as an intermediate result, $\lambda_S' F_t = F_t^{(1)} + O_{\p}(a_N^{-1})$. The dominant terms in $p_t$ are therefore those involving $\varepsilon_t$ and $F_t^{(1)}$, and these have finite second moments, so $\frac{1}{T} \sum_{t=1}^T \tilde{F}_t' p_{t} = O_{\p}(1)$. 

Under Assumption~\ref{ass:mixing for stationary time series}, $\{(\tilde{F}_m', u_m')\}$ is a strong mixing sequence of size $-(2+\pi)/\pi$, $\e \|\tilde{F}_t\|^{8+2\pi} < \infty$, and $\tilde{F}_t$ is uncorrelated with $u_t$ by the factor structure, so $\e[\tilde{F}_{kt} \Bar{u}_{jt}] = 0$ for every $j$ and $k$. It is also independent of $(\s, \Lambda)$ by Assumption~\ref{ass:size_of_firm_power_law}, so the conditional moment bound and the conditional orthogonality coincide with their unconditional versions. Each component of $\tilde{F}_m$ therefore satisfies the conditions on $Z_m$ in Theorem~\ref{theorem:central limit theorem} and Corollary~\ref{corollary:central limit theorem}. Applying Theorem~\ref{theorem:central limit theorem} (for $\mu > 1$) and Corollary~\ref{corollary:central limit theorem} (for $\mu \in (0,1)$) component-wise gives $\frac{a_N}{\sqrt{T}} \sum_{m=1}^T \tilde{F}_m \Bar{u}_{Sm} = O_{\p}(1)$. Combining, the first term on the right hand side is 
\begin{equation*}
    I = O_{\p} \left( \frac{a_N}{\sqrt{T}} \right).
\end{equation*}

For the second term, recall $\zeta_t = \frac{1}{N} \sum_{i=1}^N \tilde{\Blambda}_i \Bar{u}_{it} = O_{\p}\left( \frac{1}{\sqrt{N}} \right)$. Writing $\pi = 1/(\phi_d - \phi_s)$,
\begin{align*}
    \frac{1}{T} \sum_{t=1}^T \zeta_t p_{t} &= \frac{\pi}{T} \sum_{t=1}^T \zeta_t u_{St} + \frac{\pi}{T} \sum_{t=1}^T \zeta_t \lambda_S' F_t - \frac{\pi}{T} \sum_{t=1}^T \zeta_t \varepsilon_t.
\end{align*}
By Lemma~\ref{lemma:order of a mixing sequence}, $\frac{1}{\sqrt{T}} \sum_{t=1}^T \zeta_t \varepsilon_t = O_{\p}(1/\sqrt{N})$, so $\frac{\pi}{T} \sum_{t=1}^T \zeta_t \varepsilon_t = O_{\p}\left( \frac{1}{\sqrt{NT}} \right)$.

Consider $\frac{\pi}{T} \sum_{t=1}^T \zeta_t \lambda_S' F_t$. Conditional on $S$, $\lambda_S' F_t$ is a scalar mean-zero process with $\e[u_{jt} \lambda_S' F_t] = \lambda_S' \e[u_{jt} F_t] = 0$ by idiosyncrasy, and its product $u_{jt} \lambda_S' F_t$ inherits the cross-sectional weak-dependence condition of Assumption~\ref{ass:LN_time and cross sectional dependence}.6 from that on $u_{jt} F_t$, since $\lambda_S = 1 + S' \tilde{\Lambda} =  O_{\p}(1)$. Hence Lemma~\ref{lemma:order of a mixing sequence}, applied with $g_t = \lambda_S' F_t$, gives $\frac{1}{\sqrt{T}} \sum_{t=1}^T \zeta_t \lambda_S' F_t = O_{\p}(1/\sqrt{N})$, so $\frac{1}{T} \sum_{t=1}^T \zeta_t \lambda_S' F_t = O_{\p}\left( \frac{1}{\sqrt{NT}} \right)$.

We cannot apply Lemma~\ref{lemma:order of a mixing sequence} to $\frac{1}{T} \sum_{t=1}^T \zeta_t u_{St}$ since $\e[\zeta_t u_{St} | \tilde{\Lambda}] \neq 0$. Using Cauchy-Schwartz and $u_{St} = O_{\p}(a_N^{-1})$ from Proposition~\ref{prop:weakness of known factor instrument},
\begin{align*}
    \frac{1}{T} \sum_{t=1}^T \zeta_t u_{St} &\leq \left[ \frac{1}{T} \sum_{t=1}^T \zeta_t^2 \right]^{\frac{1}{2}} \left[ \frac{1}{T} \sum_{t=1}^T u_{St}^2 \right]^{\frac{1}{2}} = O_{\p}\left( \frac{1}{a_N \sqrt{N}} \right).
\end{align*}

Hence $\frac{1}{T} \sum_{t=1}^T \zeta_t p_{t} = O_{\p}\left( \frac{1}{\sqrt{NT}} \right) + O_{\p}\left( \frac{1}{a_N \sqrt{N}} \right)$, and the second term in the decomposition, which is $a_N S' \tilde{\Lambda}\, [\tilde{\Lambda}'\tilde{\Lambda}/N]^{-1}$ times $a_N \cdot \frac{1}{T} \sum_t \zeta_t p_t$, is $O_{\p}\left( \frac{1}{\sqrt{N}} \right)$ since $a_N S' \tilde{\Lambda} = O_{\p}(1)$. Combining with $I = O_{\p}(a_N/\sqrt{T})$ and the $O_{\p}(a_N/N)$ remainder,
\begin{equation*}
    \frac{a_N^2}{T} \sum_{t=1}^T S' \big( \hat{C}_t - \tilde{C}_t \big) p_{t}
    = O_{\p} \left( \frac{a_N}{\sqrt{T}} \right) + O_{\p} \left( \frac{1}{\sqrt{N}} \right) + O_{\p} \left( \frac{a_N}{N} \right) = o_{\p}(1),
\end{equation*}
where the last equality uses $a_N^2/T \to 0$ for the first term and $a_N \leq \sqrt{N}$ for the third.
\end{proof}

The two lemmas combine into Proposition~\ref{prop:first stage effect}, which is the form used by every proof in this appendix.

\subsection{Proof of Proposition \ref{prop:first stage effect}} \label{subsec:proof of first stage effect}

\begin{proof}
Since $\hat{z}_t = z_t - S'(\hat{C}_t - \tilde{C}_t)$, the numerator expands as
\begin{equation*}
    \frac{a_N}{\sqrt{T}} \sum_{t=1}^T \hat{z}_t Q_t = \frac{a_N}{\sqrt{T}} \sum_{t=1}^T z_t Q_t - \frac{a_N}{\sqrt{T}} \sum_{t=1}^T S' \big( \hat{C}_t - \tilde{C}_t \big) Q_t
\end{equation*}
and the denominator as
\begin{equation*}
    \frac{a_N^2}{T} \sum_{t=1}^T \hat{z}_t p_t = \frac{a_N^2}{T} \sum_{t=1}^T z_t p_t - \frac{a_N^2}{T} \sum_{t=1}^T S' \big( \hat{C}_t - \tilde{C}_t \big) p_t.
\end{equation*}
The leading term in each is the known-factor quantity of Section~\ref{section:known factors}. The two lemmas above bound the first-stage term.

The multiplier satisfies the hypotheses of Lemma~\ref{lemma:numerator}.(ii), so
\begin{align*}
    \frac{a_N}{\sqrt{T}} \sum_{t=1}^T S' \big( \hat{C}_t - \tilde{C}_t \big) Q_t
    &= \frac{1}{T} \sum_{t=1}^T \tilde{F}_t' Q_t \cdot \left[ \frac{\tilde{F}' \tilde{F}}{T} \right]^{-1} \frac{a_N}{\sqrt{T}} \sum_{m=1}^T \tilde{F}_m \Bar{u}_{Sm} \\
    & \enspace + O_{\p}\left( \frac{1}{\sqrt{N}} \right) + O_{\p} \left( \frac{\sqrt{T}}{N} \right) + O_{\p}\left( \frac{1}{\sqrt{T}} \right).
\end{align*}
The three remainders are $o_{\p}(1)$, the first two under $\sqrt{T}/N \to 0$ and the third under $T \to \infty$. The process $\{\tilde{F}_t' Q_t\}$ is strongly mixing under Assumption~\ref{ass:mixing for stationary time series}, and element-wise $\e|\tilde{F}_t^{(r)} Q_t|^{2 + \pi} \leq ( \e|\tilde{F}_t^{(r)}|^{4 + 2\pi} \e|Q_t|^{4 + 2\pi})^{1/2} < \infty$, so the law of large numbers in Corollary 3.48 of \citet{White2001AsymptoticEconometricians} gives $\frac{1}{T} \sum_t \tilde{F}_t' Q_t \xrightarrow{\p} \e[\tilde{F}_t' Q_t \mid \Lambda]$, and the same corollary applied to $\{\tilde{F}_t \tilde{F}_t'\}$ gives $[\tilde{F}'\tilde{F}/T]^{-1} \xrightarrow{\p} \Sigma_{\tilde{F}}^{-1}$, which is non-singular by Assumption~\ref{ass:LN_moments and CLT}. Lemma~\ref{lemma:numerator} also gives $\frac{a_N}{\sqrt{T}} \sum_m \tilde{F}_m \Bar{u}_{Sm} = O_{\p}(1)$, so the product of the two converging factors with this bounded factor may be replaced by its limit at the cost of an $o_{\p}(1)$ term. With $\Bar{u}_{Sm} = z_m$,
\begin{equation*}
    \frac{a_N}{\sqrt{T}} \sum_{t=1}^T S' \big( \hat{C}_t - \tilde{C}_t \big) Q_t
    = \frac{a_N}{\sqrt{T}} \sum_{t=1}^T \Delta_{\tilde{F} Q} \tilde{F}_t \, z_t + o_{\p}(1).
\end{equation*}
Subtracting this from $\frac{a_N}{\sqrt{T}} \sum_t z_t Q_t$ gives the first statement,
\begin{equation*}
    \frac{a_N}{\sqrt{T}} \sum_{t=1}^T \hat{z}_t Q_t = \frac{a_N}{\sqrt{T}} \sum_{t=1}^T z_t \big( Q_t - \Delta_{\tilde{F} Q} \tilde{F}_t \big) + o_{\p}(1) = \frac{a_N}{\sqrt{T}} \sum_{t=1}^T z_t \Bar{Q}_t + o_{\p}(1).
\end{equation*}
The second statement is Lemma~\ref{lemma:denominator} applied to the first-stage term of the denominator expansion.
\end{proof}

\subsection{Proofs of Theorems \ref{theorem:estimated_strong_consistency_aggregate} and \ref{theorem:estimated_weak_consistency_aggregate}} \label{subsec:proof of estimated_theorems}

Both theorems share a single argument once we adopt the unified scaling $a_N$ of \eqref{eq:a_N}. Specializing $a_N = 1$ recovers Theorem~\ref{theorem:estimated_strong_consistency_aggregate}. Specializing $a_N = N^{\delta}$ with $\delta = \min(1 - 1/\mu, 1/2)$ recovers Theorem~\ref{theorem:estimated_weak_consistency_aggregate}, which covers both $\mu \in (1,2)$ ($a_N = N^{1 - 1/\mu}$) and $\mu > 2$ ($a_N = \sqrt{N}$). The proof tracks only the additional terms introduced by the estimated-factor instrument. The remaining terms inherit their treatment from Theorems \ref{theorem:strong_consistency_aggregate} and \ref{theorem:weak_consistency_aggregate}.

\textbf{Step 1 (decomposition).} The difference between the estimator and the true value is
\begin{align*}
    \hat{\phi}_d - \phi_d = \frac{\sum_{t=1}^T \hat{z}_t \varepsilon_t}{ \sum_{t=1}^T \hat{z}_t p_{t}} = \frac{\sum_{t=1}^T z_t \varepsilon_t - \sum_{t=1}^T S'[\hat{C}_t - \tilde{C}_t] \varepsilon_t}{\sum_{t=1}^T z_t p_{t} - \sum_{t=1}^T S'[\hat{C}_t - \tilde{C}_t] p_{t}}
\end{align*}
Only one estimation error acts here, the estimated instrument, and it enters both the numerator and the denominator. Step 2 treats the denominator, Step 3 the numerator, and Step 4 delivers the limit.

\textbf{Step 2 (denominator).} By Proposition~\ref{prop:first stage effect}, $\frac{a_N^2}{T} \sum_t \hat{z}_t p_t = \frac{a_N^2}{T} \sum_t z_t p_t + o_{\p}(1)$. Proposition \ref{prop:denominator} gives $a_N^2 \Gamma_{zp} = O_{\p}(1)$ together with $[a_N^2 \Gamma_{zp}]^{-1} = O_{\p}(1)$ for the leading denominator, as in the proofs of Theorems \ref{theorem:strong_consistency_aggregate} to \ref{theorem:weak_consistency_aggregate}. The first-stage term is therefore $o_{\p}(1)$ relative to the leading denominator and disappears asymptotically.

\textbf{Step 3 (instrument error).} Assumption~\ref{ass:mixing for stationary time series} gives $\e|\varepsilon_t|^{4+2\pi} < \infty$, and $\e[\Bar{u}_{jt} \varepsilon_t] = 0$ for every $j$ by Assumption~\ref{ass:weak stationarity and idiosyncrasy}, so Proposition~\ref{prop:first stage effect} applies with $Q_t = \varepsilon_t$. Writing $\Delta_{\tilde{F} \varepsilon} = \e[\tilde{F}_t' \varepsilon_t] \Sigma_{\tilde{F}}^{-1}$ and $\bar{\varepsilon}_t = \varepsilon_t - \Delta_{\tilde{F} \varepsilon} \tilde{F}_t$, the numerator is
\begin{align*}
    \text{Num} = \frac{a_N}{\sqrt{T}} \sum_{t=1}^T \hat{z}_t \varepsilon_t
    = \frac{a_N}{\sqrt{T}} \sum_{t=1}^T \bar{\varepsilon}_t \sum_{j=1}^N S_j \Bar{u}_{jt} + o_{\p}(1).
\end{align*}

\textbf{Step 4 (central limit theorem).} Step 3 has already put the numerator in the form needed. For $\bar{\varepsilon}_t = \varepsilon_t - \Delta_{\tilde{F} \varepsilon} \tilde{F}_t$,  $\{(\bar{\varepsilon}_t, u_t')\}$ is a mixing sequence under Assumption \ref{ass:mixing for stationary time series}. Using $(|a + b|)^p \leq 2^{p-1}(|a|^p + |b|^p)$, we have $\e|\bar{\varepsilon}_t |^{8 + 2\pi} \leq 2^{7+2\pi} \big( \e|\varepsilon_t|^{8 + 2 \pi} + \|\Delta_{\tilde{F} \varepsilon} \|^{8 + 2 \pi} \e \|\tilde{F}_t \|^{8 + 2 \pi} \big) < \infty$ under Assumption \ref{ass:mixing for stationary time series}. Both $\varepsilon_t$ and $\tilde{F}_t$ are orthogonal to $u_t$ and independent of $(\s, \Lambda)$, so $\e[\bar{\varepsilon}_t \Bar{u}_{jt} \mid \s, \Lambda] = 0$ for every $j$ and the conditional moment bound coincides with the unconditional one.

The long-run variance $a_N^2 V_{z \bar{\varepsilon}}(S)$ is non-degenerate under Assumption \ref{ass:mixing for stationary time series}. Theorem \ref{theorem:central limit theorem} for $\mu > 1$ and Corollary \ref{corollary:central limit theorem} for $\mu \in (0,1)$ therefore apply to the scaled sum, $X_T = \frac{a_N}{\sqrt{T}} \sum_{t=1}^T \bar{\varepsilon}_t \sum_{j=1}^N S_j \Bar{u}_{jt}$, with $\bar{\varepsilon}_t$ replacing $Z_t$, and give
    \begin{equation*}
        \frac{X_T}{a_N \sqrt{V_{z \bar{\varepsilon}}(S)}} \xrightarrow{d} \n(0,1)
    \end{equation*}
Throughout the appendix $V_{z \bar{\varepsilon}}(S)$ is the un-rescaled long-run variance of the theorem statements, $V_{z \bar{\varepsilon}}(S) = \lim_{T \to \infty} \frac{1}{T} \sum_{s=1}^T \sum_{t=1}^T \e[z_t z_s \bar{\varepsilon}_t \bar{\varepsilon}_s \mid S]$, so that $a_N^2 V_{z \bar{\varepsilon}}(S) = O_{\p}(1)$ and the $a_N$ carried by $X_T$ is matched by the $a_N$ in the norming. The final result follows from the steps for the known-factor denominator in Theorems \ref{theorem:strong_consistency_aggregate} and \ref{theorem:weak_consistency_aggregate} and the application of Slutsky's theorem. Specializing $a_N = 1$ recovers the rate $\sqrt{T}$ of Theorem~\ref{theorem:estimated_strong_consistency_aggregate}. Specializing $a_N = N^{\delta}$ with $\delta = \min(1 - 1/\mu, 1/2)$ recovers the rate $\sqrt{T}/N^{\delta}$ of Theorem~\ref{theorem:estimated_weak_consistency_aggregate}. Under $\mu > 2$ and $N/T \to c > 0$ the failure clause follows as in Theorem \ref{theorem:weak_consistency_aggregate}, since Proposition \ref{prop:first stage effect} leaves the rate unchanged. There, $a_N = \sqrt{N}$ and $\hat{\phi}_d - \phi_d = O_{\p}(a_N/\sqrt{T}) = O_{\p}(\sqrt{N/T}) = O_{\p}(1)$, so the estimator is inconsistent.

\subsection{Proofs of Theorems \ref{theorem:estimated_strong_consistency_disaggregate} and \ref{theorem:estimated_weak_consistency_disaggregate}} \label{subsec:proof of estimated_disaggregate}

Both theorems share a single argument in the unified scaling $a_N$. Specializing $a_N = 1$ recovers Theorem~\ref{theorem:estimated_strong_consistency_disaggregate} and $a_N = N^{\delta}$ recovers Theorem~\ref{theorem:estimated_weak_consistency_disaggregate}. Two estimation errors act at once here, the estimated instrument and the equal-weights aggregation, and the proof separates them.

Write $a \defeq \frac{e_N}{N} - H$ for the weight error and $\lambda_a \defeq \tilde{\Lambda}' a$ for its loading contraction. Both aggregators sum to one, so $e_N' a = 0$ and the equal-loadings factor $F_{1t}$ cancels from the difference,
\begin{equation} \label{eq:weight error}
    v_{Et} - v_{Ht} = (\Lambda F_t + u_t)' a = \tilde{F}_t' \lambda_a + u_t' a.
\end{equation}
The vector $a$ is deterministic, since both aggregators are functions of $\Omega$ alone, and the bound $\|H\|^2 \leq K^2/(\underline\lambda^2 N)$ from the proof of Lemma \ref{lemma:v_H multiplier} gives
\begin{equation} \label{eq:weight error norms}
    \|a\| \leq \left( 1 + \frac{K}{\underline\lambda} \right) N^{-1/2}, \qquad \|\lambda_a\| = O_{\p}(N^{-1/2}).
\end{equation}
For the second bound, $\sum_i a_i = 0$ removes the common mean of the loadings, and the $\tilde{\Blambda}_i$ are i.i.d.\ across $i$ with $\e \|\tilde{\Blambda}_i\|^2 < \infty$ by Assumption \ref{ass:size_of_firm_power_law}.3--4, so the cross terms vanish and $\e \|\lambda_a\|^2 = \sum_i a_i^2 \, \e \|\tilde{\Blambda}_i - \e \tilde{\Blambda}_i\|^2 \leq C \|a\|^2$. Markov gives the rate.

\textbf{Step 1 (decomposition).} The equal-weights residual is exact, $y_{Et} - \phi_s p_t = v_{Et}$, so
\begin{equation*}
    \frac{\sqrt{T}}{a_N} [\hat{\phi}_s - \phi_s] = \frac{\frac{a_N}{\sqrt{T}} \sum_{t=1}^T \hat{z}_t v_{Et}}{\frac{a_N^2}{T} \sum_{t=1}^T \hat{z}_t p_t}.
\end{equation*}
Substituting $\hat{z}_t = z_t - S'[\hat{C}_t - \tilde{C}_t]$ and $v_{Et} = v_{Ht} + (v_{Et} - v_{Ht})$ in the numerator,
\begin{equation*}
    \frac{a_N}{\sqrt{T}} \sum_{t=1}^T \hat{z}_t v_{Et}
    = \underbrace{\frac{a_N}{\sqrt{T}} \sum_{t=1}^T z_t v_{Ht} - \frac{a_N}{\sqrt{T}} \sum_{t=1}^T S'[\hat{C}_t - \tilde{C}_t] v_{Ht}}_{\text{instrument error}}
    + \underbrace{A_{NT} - B_{NT}}_{\text{aggregation error}},
\end{equation*}
where $A_{NT} \defeq \frac{a_N}{\sqrt{T}} \sum_t z_t (v_{Et} - v_{Ht})$ is the weight substitution and $B_{NT} \defeq \frac{a_N}{\sqrt{T}} \sum_t S'[\hat{C}_t - \tilde{C}_t](v_{Et} - v_{Ht})$ is the interaction of the two errors. The first term enters the central limit theorem in Step 6.

\textbf{Step 2 (denominator).} Neither $\hat{z}_t$ nor $p_t$ depends on the aggregation, so the denominator is the one treated in Appendix \ref{subsec:proof of estimated_theorems}. Proposition~\ref{prop:first stage effect} makes the instrument correction $o_{\p}(1)$, and Proposition \ref{prop:denominator} gives $a_N^2 \Gamma_{zp} = O_{\p}(1)$ together with $[a_N^2 \Gamma_{zp}]^{-1} = O_{\p}(1)$.

\textbf{Step 3 (instrument error).} Lemma \ref{lemma:v_H multiplier} establishes that $Q_t = v_{Ht}$ is measurable with respect to $\sigma(F_t, u_t)$, has $\e|v_{Ht}|^{8 + 2\pi}$ bounded uniformly in $N$, and satisfies $\e[\Bar{u}_{jt} v_{Ht}] = 0$ for every $j$, this last being the defining property of $H$. Proposition~\ref{prop:first stage effect} therefore applies at $Q_t = v_{Ht}$, and with $\Delta_{\tilde{F} v} \defeq \e[\tilde{F}_t' v_{Ht} \mid \Lambda] \Sigma_{\tilde{F}}^{-1}$ and $\bar{v}_{Ht} = v_{Ht} - \Delta_{\tilde{F} v} \tilde{F}_t$ gives
\begin{equation*}
    \frac{a_N}{\sqrt{T}} \sum_{t=1}^T \hat{z}_t \, v_{Ht}
    = \frac{a_N}{\sqrt{T}} \sum_{t=1}^T z_t \, \bar{v}_{Ht} + o_{\p}(1).
\end{equation*}

\textbf{Step 4 (aggregation error).} By \eqref{eq:weight error}, $\e[F_t u_t'] = 0$ and $H' \Omega D_N = 0$,
\begin{equation*}
    \e[z_t (v_{Et} - v_{Ht}) \mid S, \Lambda] = \frac{S' D_N \Omega e_N}{N} - S' D_N \Omega H = b_N,
\end{equation*}
the moment violation of Section \ref{subsec:equal weights}. Centring, $A_{NT} = a_N \sqrt{T} \, b_N + \zeta_{NT}$ with $\zeta_{NT} = \frac{a_N}{\sqrt{T}} \sum_t \xi_t$ and $\xi_t \defeq z_t(v_{Et} - v_{Ht}) - b_N$.

The drift is Lemma \ref{lemma:hajek bound}, $a_N \sqrt{T} \, b_N = \frac{\sqrt{T}}{N} a_N \beta = O_{\p}(\sqrt{T}/N)$.

For the noise, condition on $(S, \Lambda)$ and set $q_0 = 2(2 + \pi)$. The Rosenthal bound of Assumption \ref{ass:LN_time and cross sectional dependence}.5, applied with the weights $\tilde{S}$ and the $\ell_q$ inequality of Lemma \ref{lemma:share inequalities}.1, gives $\|z_t\|_{q_0 \mid S} \leq C h_N^{1/2}$, and applied with the weights $a$ gives $\|u_t' a\|_{q_0} \leq C \|a\|$. With $\|\lambda_a\|$ and $\|a\|$ from \eqref{eq:weight error norms}, $\e\|\tilde{F}_t\|^{q_0} < \infty$ from Assumption \ref{ass:mixing for stationary time series}, and $|b_N| \leq \|\tilde{S}\| \|\tilde{r}\|/N \leq M h_N^{1/2} N^{-1/2}$ by Cauchy--Schwarz,
\begin{equation*}
    \|\xi_t\|_{2 + \pi \mid S, \Lambda} = O_{\p}\left( h_N^{1/2} N^{-1/2} \right) = O_{\p}\left( a_N^{-1} N^{-1/2} \right),
\end{equation*}
the last equality by the strength bound $a_N \sqrt{h_N} = O_{\p}(1)$ of Proposition \ref{prop:weakness of known factor instrument}. Conditionally on $(S, \Lambda)$ the $\xi_t$ inherit the mixing coefficients of $\{(F_t', u_t')\}$, so Davydov's inequality \citep{Rio1993CovarianceProcesses} and conditional Chebyshev give $\zeta_{NT} = O_{\p}(N^{-1/2})$. Hence
\begin{equation*}
    A_{NT} = O_{\p}\left( \frac{\sqrt{T}}{N} \right) + O_{\p}\left( \frac{1}{\sqrt{N}} \right) = o_{\p}(1).
\end{equation*}
Neither bound depends on the regime, and neither uses $N/T$. The moment violation scales with the strength of the instrument and $a_N$ is the exact inverse of that strength, so the two cancel.

\textbf{Step 5 (interaction).} The multiplier $Q_t = v_{Et} - v_{Ht}$ is measurable with respect to $\sigma(F_t, u_t)$ and has finite $4 + 2\pi$ moments by \eqref{eq:weight error}, so Lemma \ref{lemma:numerator}.(i) applies to $B_{NT}$. Part (ii) does not, since $\e[\Bar{u}_{jt} Q_t] = (D_N \Omega a)_j \neq 0$ in general. I take the three pieces of the expansion in turn.

The first piece carries $\frac{1}{T} \sum_t \tilde{F}_t Q_t$. Substituting \eqref{eq:weight error} and using $\e[\tilde{F}_t u_t'] = 0$,
\begin{equation*}
    \frac{1}{T} \sum_{t=1}^T \tilde{F}_t Q_t = \left[ \Sigma_{\tilde{F}} + O_{\p}(T^{-1/2}) \right] \lambda_a + O_{\p}(T^{-1/2}) \|a\| = O_{\p}(N^{-1/2}),
\end{equation*}
the fluctuation rates following from Davydov and Chebyshev as in Step 4. Multiplied by the two $O_{\p}(1)$ factors of Lemma \ref{lemma:numerator}.(i), this piece is $o_{\p}(1)$.

The second piece carries $\frac{1}{\sqrt{T}} \sum_t \zeta_t Q_t$ with $\zeta_t = \frac{1}{N} \sum_j \tilde{\Blambda}_j \Bar{u}_{jt}$. Split it into drift and fluctuation. The drift obeys
\begin{equation*}
    \left| \e[\zeta_t Q_t \mid \Lambda] \right| = \left| \frac{1}{N} \sum_j \tilde{\Blambda}_j (D_N \Omega a)_j \right| \leq \frac{\Bar{\Blambda}}{\sqrt{N}} \|D_N \Omega a\| \leq \Bar{\Blambda} K \|a\| N^{-1/2} = O(N^{-1}),
\end{equation*}
by Cauchy--Schwarz across $j$, then Assumption \ref{ass:bounded eigenvalues} and \eqref{eq:weight error norms}, so it contributes $O(\sqrt{T}/N)$. For the fluctuation, the Rosenthal bound with the weights $\tilde{\Blambda}_j/N$ gives $\|\zeta_t\|_{q_0 \mid \Lambda} = O_{\p}(N^{-1/2})$, and $\|Q_t\|_{q_0 \mid S, \Lambda} = O_{\p}(N^{-1/2})$ from Step 4, so the summand is $O_{\p}(N^{-1})$ in conditional $L^{2 + \pi}$ norm and Davydov gives the fluctuation the same order.

The third piece is the pair of remainders of Lemma \ref{lemma:numerator}.(i), $O_{\p}(\sqrt{T}/N) + O_{\p}(1/\sqrt{T})$. Every piece vanishes under $\sqrt{T}/N \to 0$ and $T \to \infty$, so $B_{NT} = o_{\p}(1)$.

\textbf{Step 6 (central limit theorem).} Insert Steps 3 to 5 into Step 1 and write $z_t = \Bar{u}_{St}$. The numerator is
\begin{equation*}
    \frac{a_N}{\sqrt{T}} \sum_{t=1}^T \hat{z}_t v_{Et}
    = \frac{a_N}{\sqrt{T}} \sum_{t=1}^T \left( v_{Ht} - \Delta_{\tilde{F} v} \tilde{F}_t \right) \Bar{u}_{St} + o_{\p}(1)
    = \frac{a_N}{\sqrt{T}} \sum_{t=1}^T \Bar{v}_{Ht} \Bar{u}_{St} + o_{\p}(1),
\end{equation*}
with $\Bar{v}_{Ht}$ the recentered residual of the theorems. Lemma \ref{lemma:v_H multiplier} gives the multiplier conditions for $v_{Ht}$, and the $c_r$ inequality extends them to $\Bar{v}_{Ht}$ as in Appendix \ref{subsec:proof of estimated_theorems}, while $\e[\Bar{v}_{Ht} \Bar{u}_{jt} \mid \s, \Lambda] = \e[v_{Ht} \Bar{u}_{jt} \mid \s, \Lambda] - \Delta_{\tilde{F} v} \e[\tilde{F}_t \Bar{u}_{jt} \mid \s, \Lambda] = 0$ for every $j$, the first term by Lemma \ref{lemma:v_H multiplier} and the second by the independence of $\tilde{F}_t$ from $(\s, \Lambda)$. The long-run variance $a_N^2 V_{z \bar{v}_H}(S, \Lambda)$ is non-degenerate under Assumption \ref{ass:mixing for stationary time series}. Theorem \ref{theorem:central limit theorem} for $\mu > 1$ and Corollary \ref{corollary:central limit theorem} for $\mu \in (0,1)$, applied conditionally on $\sigma(S, \Lambda)$, therefore give
\begin{equation*}
    \frac{1}{\sqrt{V_{z \bar{v}_H}(S, \Lambda)}} \cdot \frac{1}{\sqrt{T}} \sum_{t=1}^T \Bar{v}_{Ht} \Bar{u}_{St} \xrightarrow{d} \n(0,1).
\end{equation*}
The variance is the un-rescaled one of the theorem statements, so the $a_N$ of the display above cancels against the $a_N$ in $\sqrt{a_N^2 V_{z \bar{v}_H}(S, \Lambda)}$ and neither appears here.
Slutsky's theorem applied to the ratio, with Step 2 for the denominator, completes the proof. Specializing $a_N = 1$ recovers the rate $\sqrt{T}$ of Theorem \ref{theorem:estimated_strong_consistency_disaggregate}, and $a_N = N^{\delta}$ the rate $\sqrt{T}/N^{\delta}$ of Theorem \ref{theorem:estimated_weak_consistency_disaggregate}. The failure clause for $\mu > 2$ with $N/T \to c$ follows as in Theorem \ref{theorem:estimated_weak_consistency_aggregate}, since the aggregation error is $o_{\p}(1)$ whatever the trajectory of $N/T$. No choice of aggregation vector restores consistency. By Step 2 the aggregation enters only the numerator, and the denominator that drives the failure is the $H$-free object of Appendix \ref{app:proof of denominator}.

\subsection{Proof of Proposition \ref{prop:consistency of HAC_aggregate}} \label{subsec:proof or HAC_aggregate}

We use the unified scaling $a_N$ of \eqref{eq:a_N}. Specializing $a_N = 1$, $N^{\delta}$, and $\sqrt{N}$ covers $\mu \in (0,1)$, $\mu \in (1,2)$, and $\mu > 2$, respectively. The conditional asymptotic variance is the un-rescaled one of the theorem statements,
\[
    V_{z \bar{\varepsilon}}(S) = \text{plim}_{N,T \to \infty} \left[ \frac{1}{T} \sum_{s=1}^T \sum_{t=1}^T \e[z_t z_s \bar{\varepsilon}_t \bar{\varepsilon}_s | S] \right]
\]
which is $O_{\p}(a_N^{-2})$. The estimator below carries the rescaling $a_N^2$, so what is proved is $\hat{V}_{{z} \bar{\varepsilon}} - a_N^2 V_{z \bar{\varepsilon}}(S) \xrightarrow{p} 0$, with both terms $O_{\p}(1)$.
The HAC estimator is
\[
\hat{V}_{{z} \bar{\varepsilon}} = \frac{a_N^2}{T} \sum_{t=1}^T \hat{z}_t^2 \tilde{\varepsilon}_t^2 + \frac{2 a_N^2}{T} \sum_{s=1}^{b_T} w(\frac{s}{b_T}) \sum_{t = s+1}^T \hat{z}_t \hat{z}_s \tilde{\varepsilon}_t \tilde{\varepsilon}_s
\]
where $b_T$ is the bandwidth and $w(x)$ is a kernel function, $w: \mathbb{R}^+ \to [0,1]$ such that $w(x) = 0$ for $x>1$ and $w(0) = 1$. $\tilde{\varepsilon}_t = \hat{\varepsilon}_t - \frac{1}{T} \sum_t \hat{F}'_t \hat{H}' \hat{\varepsilon}_t \cdot \hat{\Sigma}_{\tilde{F}}^{-1} \cdot \hat{H} \hat{F}_t $, where $\hat{\varepsilon}_t = d_t - \hat{\phi}_d p_{t}$ and $\hat{\Sigma}_{\tilde{F}} = \left[ \hat{F}'_t \hat{H}' \hat{H} \hat{F}_t / {T} \right]$.

I apply Proposition 4.1 of \citet{Kojevnikov2021LimitVariables} to show the consistency of the HAC estimator. But before I do so, I need to replace the estimators in the expression for the HAC estimator with their true values. That is, replace $\hat{z}_t$ with $z_t$ and so on.
\begin{align*}
    \hat{\varepsilon}_t = d_t - \hat{\phi}_d p_{t} = \varepsilon_t + O_{\p}\left(\frac{a_N}{\sqrt{T}} \right) = \varepsilon_t + o_{\p}(1)
\end{align*}
The $o_{\p}(1)$ statement uses $a_N \leq \sqrt{N}$ together with $N/T \to 0$. The strong regime ($a_N = 1$) requires only $T \to \infty$.

Recall from Appendix \ref{app:estimation of common factors}
\begin{align*}
    \hat{F}_t - \hat{H}^{-1} \tilde{F}_t &= O_{\p} \left( \frac{1}{\sqrt{N}} \right) = o_{\p}(1) 
\end{align*}

Analyze the display below term by term
\begin{equation*}
    \frac{1}{T} \sum_t \hat{F}'_t  \hat{\varepsilon}_t \cdot \hat{\Sigma}_{\tilde{F}}^{-1} \cdot \hat{F}_t
\end{equation*}

Adding and subtracting $\hat{H}^{-1} \tilde{F}_t$ to $\hat{F}_t$,
\begin{align*}
    \frac{1}{T} \sum_t \hat{F}'_t  \hat{\varepsilon}_t &= \frac{1}{T} \sum_t [\hat{F}_t - \hat{H}^{-1} \tilde{F}_t + \hat{H}^{-1} \tilde{F}_t  ]' \hat{\varepsilon}_t =  \frac{1}{T} \sum_t \tilde{F}'_t  \hat{\varepsilon}_t \cdot \hat{H}^{-1'} + o_{\p}(1) \\
    \hat{\Sigma}_{\tilde{F}}^{-1} &= \left[ \frac{\hat{F}' \hat{F}}{T} \right]^{-1} = \left[ \frac{1}{T} \sum_{t=1}^T \hat{F}_t \hat{F}_t' \right]^{-1} + o_{\p}(1) \\
    &= \left[ \hat{H}^{-1} \frac{1}{T} \sum_{t=1}^T \tilde{F}_t \tilde{F}_t' \hat{H}^{-1'} \right]^{-1} + o_{\p}(1) \\
    &= \big[ \hat{H}^{-1} \Sigma_{\tilde{F}} \hat{H}^{-1'} \big]^{-1} + o_{\p}(1) = \hat{H} \Sigma_{\tilde{F}}^{-1} \hat{H}' + o_{\p}(1)
\end{align*}

Thus, combining all terms,
\begin{align*}
    \frac{1}{T} \sum_t \hat{F}'_t  \hat{\varepsilon}_t \cdot \hat{\Sigma}_{\tilde{F}}^{-1} \cdot \hat{F}_t &= \frac{1}{T} \sum_t \tilde{F}'_t  \hat{\varepsilon}_t  \hat{H}^{-1'} \cdot \hat{H} \Sigma_{\tilde{F}}^{-1} \hat{H}' \cdot \hat{H}^{-1} \tilde{F_t} +  o_{\p}(1) \\
    &= \frac{1}{T} \sum_t \tilde{F}'_t  {\varepsilon}_t \cdot {\Sigma}_{\tilde{F}}^{-1} \cdot  \tilde{F}_t + o_{\p}(1)
\end{align*}

Thus,
\begin{equation*}
    \tilde{\varepsilon}_t = \varepsilon_t - \frac{1}{T} \sum_t \tilde{F}'_t  {\varepsilon}_t \cdot {\Sigma}_{\tilde{F}}^{-1} \cdot  \tilde{F}_t + o_{\p}(1) = \bar{\varepsilon}_t + o_{\p}(1)
\end{equation*}

The relation $\hat{z}_t - z_t = -S'(\hat{C}_t - \tilde{C}_t)$, together with the influence-function expansion underlying Lemma~\ref{lemma:denominator}, gives $\hat{z}_t = z_t + o_{\p}(a_N^{-1})$. This is the rate needed for the substitution in $\hat{V}_{z \bar{\varepsilon}}$ at the rescaled $a_N^2$ scale. Thus we have
\begin{equation*}
    \hat{V}_{{z} \bar{\varepsilon}} = \frac{a_N^2}{T} \sum_{t=1}^T {z}_t^2 \bar{\varepsilon}_t^2 + \frac{2 a_N^2}{T} \sum_{s=1}^{b_T} w(\frac{s}{b_T}) \sum_{t = s+1}^T {z}_t {z}_s \bar{\varepsilon}_t \bar{\varepsilon}_s + o_{\p}(1)
\end{equation*}

Now, I am ready to apply Proposition 4.1 of \citet{Kojevnikov2021LimitVariables}. To make our setting comparable to the mentioned Proposition, define the following terms. For $X_t = a_N z_t \bar{\varepsilon}_t$, the rescaled summand whose long-run variance is bounded above and away from zero,
\begin{align*}
    \Omega_T(s) &= T^{-1} \sum_{t=1}^T \e[X_t X_{t-s}| \s] + \e[X_t X_{t+s}| \s] \\
    &= \frac{2}{T} \sum_{t=1}^T \e[X_t X_{t-s}| \s]
\end{align*}
for $s \neq 0$. The second equality in the display above comes from stationarity. For $s=0$, define
\begin{equation*}
    \Omega_T(0) = T^{-1} \sum_{t=1}^T \e[X_t^2| \s]
\end{equation*}

Using these new definitions,
\begin{equation*}
     a_N^2 V_{z \bar{\varepsilon}}(S) = \text{plim}_{N,T \to \infty} \left[\sum_{s=0}^{T-1} \Omega_T(s)  \right]
\end{equation*}

Now, define the sample version of the terms above. For $Y_t = a_N \hat{z}_t \tilde{\varepsilon}_t$, the sample counterpart of $X_t$,
\begin{align*}
    \tilde{\Omega}(s) &= \frac{2}{T} \sum_{t=1}^T Y_t Y_{t-s}, \quad s \neq 0 \\
    \tilde{\Omega}(0) &= \frac{1}{T} \sum_{t=1}^T Y_t^2
\end{align*}
Using these definitions,
\begin{equation*}
    \hat{V}_{{z} \bar{\varepsilon}} = \sum_{s=0}^T w(\frac{s}{b_T}) \tilde{\Omega}(s)
\end{equation*}
For convenience, call $w(\frac{s}{b_T}) = w_T(s)$.

Now I will verify Assumption 4.1 of \citet{Kojevnikov2021LimitVariables}. By replacing $Z_t$ with $\varepsilon_t$ in Theorem~\ref{theorem:central limit theorem} for $\mu > 1$ and Corollary~\ref{corollary:central limit theorem} for $\mu \in (0,1)$, we have verified Condition ND. Condition ND implies Assumption 4.1(i) and (iii). We just need to verify Assumption 4.1(ii). The idea of this assumption, like most others in that paper is that neighbors should not grow too fast as $T \to \infty$. Any bandwidth, $b_T$ and kernel function, $w(x)$ that satisfy Proposition 4.2 of \citet{Kojevnikov2021LimitVariables} satisfies Assumption 4.1(ii). Proposition 4.2 states that, there exist some constants, $C$ and $\eta$ such that
\begin{equation*}
    |w(x) - 1| \leq C |x|^{1 + \eta}
\end{equation*}
and $\frac{\log T}{b_T} = O_{\text{a.s}}(1)$. For the Newey-West estimator, 
$w(x) = 1 - \frac{x}{b_T+1}$ for all $x < b_T$ and $w(x) = 0$ for $ x \geq b_T$. For the optimal trade-off between size and power, \citet{Lazarus2018HARPractice} recommends $b_T = 1.3T^{\frac{1}{2}}$. For this choice of $b_T$, we have $\frac{\log T}{b_T} = o_{\text{a.s}}(1) =  O_{\text{a.s}}(1)$.

Thus Proposition 4.1 of \citet{Kojevnikov2021LimitVariables} applies and conditional on $\s$,
\begin{equation*}
        \hat{V}_{{z} \bar{\varepsilon}} - a_N^2 V_{z \bar{\varepsilon}}(S) \xrightarrow{p} 0.
    \end{equation*}

\subsection{Proof of Proposition \ref{prop:consistency of HAC_disaggregate}} \label{subsec:proof of HAC_disaggregate}

We use the same unified scaling $a_N$, with $a_N = 1$ for Theorem \ref{theorem:estimated_strong_consistency_disaggregate} and $a_N = N^{\delta}$ for Theorem \ref{theorem:estimated_weak_consistency_disaggregate}. The conditional asymptotic variance is again the un-rescaled one of the theorem statements,
\[
    V_{z \bar{v}_H}(S, \Lambda) = \text{plim}_{N,T \to \infty} \left[ \frac{1}{T} \sum_{s=1}^T \sum_{t=1}^T \e[z_t z_s \bar{v}_{Ht} \bar{v}_{Hs} | S, \Lambda] \right]
\]
The HAC estimator is
\[
    \hat{V}_{z v_H} = \frac{a_N^2}{T} \sum_{t=1}^T \hat{z}_t^2 \tilde{v}_{Et}^2 + \frac{2 a_N^2}{T} \sum_{s=1}^{b_T} w(\frac{s}{b_T}) \sum_{t = s+1}^T \hat{z}_t \hat{z}_s \tilde{v}_{Et} \tilde{v}_{Es}
\]
with the bandwidth $b_T$ and the kernel $w$ of Appendix \ref{subsec:proof or HAC_aggregate}. Here $\tilde{v}_{Et} = \hat{v}_{Et} - \frac{1}{T} \sum_t \hat{F}'_t \hat{v}_{Et} \cdot \hat{\Sigma}_{\tilde{F}}^{-1} \cdot \hat{F}_t$, where $\hat{v}_{Et} = y_{Et} - \hat{\phi}_s p_t$ and $\hat{\Sigma}_{\tilde{F}} = [\hat{F}' \hat{F}/T]$.

I apply Proposition 4.1 of \citet{Kojevnikov2021LimitVariables} again. But before I do so, I need to replace the estimators in the display above with their true values. That is, replace $\hat{z}_t$ with $z_t$ and $\tilde{v}_{Et}$ with $\bar{v}_{Ht}$. The instrument and the factors go as in the aggregate case. The multiplier carries one step more, since the estimator is built with equal weights but the variance it estimates is built by the optimal aggregator.

Start with $\hat{\phi}_s$, which enters the estimator through the residual $\hat{v}_{Et}$. Its rate comes from inverting the studentised limit of the theorems, so the bound $a_N^2 V_{z \bar{v}_H}(S, \Lambda) = O_{\p}(1)$ comes first. Conditionally on $(S, \Lambda)$, Cauchy--Schwarz with $q_0 = 2(2 + \pi)$ factorises the summand of $a_N^2 V_{z \bar{v}_H}(S, \Lambda)$ as
\begin{equation*}
    \|a_N z_t \bar{v}_{Ht}\|_{2 + \pi \mid S, \Lambda} \leq a_N \|z_t\|_{q_0 \mid S} \, \|\bar{v}_{Ht}\|_{q_0 \mid \Lambda}.
\end{equation*}
The first factor is $O_{\p}(1)$, by the Rosenthal bound $\|z_t\|_{q_0 \mid S} \leq C h_N^{1/2}$ of Appendix \ref{subsec:proof of estimated_disaggregate} and the strength bound $a_N \sqrt{h_N} = O_{\p}(1)$ of Proposition \ref{prop:weakness of known factor instrument}. The second is finite by Lemma \ref{lemma:v_H multiplier} and the $c_r$ inequality. Summing the Davydov covariance bounds over lags \citep{Rio1993CovarianceProcesses} gives $a_N^2 V_{z \bar{v}_H}(S, \Lambda) = O_{\p}(1)$. With $[a_N^2 \Gamma_{zp}]^{-1} = O_{\p}(1)$ from Proposition \ref{prop:denominator}, the studentised limit of Theorems \ref{theorem:estimated_strong_consistency_disaggregate} and \ref{theorem:estimated_weak_consistency_disaggregate} inverts to $\hat{\phi}_s - \phi_s = O_{\p}(a_N/\sqrt{T})$, and so
\begin{align*}
    \hat{v}_{Et} = y_{Et} - \hat{\phi}_s p_t = v_{Et} + O_{\p}\left( \frac{a_N}{\sqrt{T}} \right) = v_{Et} + o_{\p}(1).
\end{align*}
The $o_{\p}(1)$ statement uses $a_N \leq \sqrt{N}$ together with $N/T \to 0$. The strong regime ($a_N = 1$) requires only $T \to \infty$.

Take the factors and the instrument next. Every line of Appendix \ref{subsec:proof or HAC_aggregate} applies with $\varepsilon_t$ replaced by $v_{Et}$, since that argument uses only the mixing and moment conditions of Assumption \ref{ass:mixing for stationary time series}. Lemma \ref{lemma:v_H multiplier} supplies those for $v_{Ht}$, and \eqref{eq:weight error} with the $c_r$ inequality extends them to $v_{Et}$. Carrying through the factor reduction $\hat{F}_t = \hat{H}^{-1} \tilde{F}_t + O_{\p}(N^{-1/2})$ and the instrument reduction $\hat{z}_t = z_t + o_{\p}(a_N^{-1})$ under $\sqrt{T}/N \to 0$,
\begin{equation*}
    \tilde{v}_{Et} = \bar{v}_{Et} + o_{\p}(1), \qquad \bar{v}_{Et} \defeq v_{Et} - \Delta_{\tilde{F} v_E} \tilde{F}_t, \qquad \Delta_{\tilde{F} v_E} \defeq \e[\tilde{F}_t' v_{Et} \mid \Lambda] \Sigma_{\tilde{F}}^{-1}.
\end{equation*}
The law of large numbers runs at $\Lambda$ fixed, so the recentering coefficient is the conditional one, here and in $\Delta_{\tilde{F} v}$.

Only the multiplier is left, and it is the one term with no aggregate counterpart. Substituting \eqref{eq:weight error}, using $\e[\tilde{F}_t u_t'] = 0$ and the $\Lambda$-measurability of $\lambda_a$,
\begin{equation*}
    \Delta_{\tilde{F} v_E} - \Delta_{\tilde{F} v} = \e[\tilde{F}_t' (v_{Et} - v_{Ht}) \mid \Lambda] \Sigma_{\tilde{F}}^{-1} = \lambda_a' \Sigma_{\tilde{F}} \Sigma_{\tilde{F}}^{-1} = \lambda_a',
\end{equation*}
so the factor part of the weight error cancels against the recentering and
\begin{equation*}
    \bar{v}_{Et} - \bar{v}_{Ht} = u_t' a.
\end{equation*}
The Rosenthal bound of Appendix \ref{subsec:proof of estimated_disaggregate} with the weights $a$ gives $\|u_t' a\|_{q_0} \leq C \|a\| = O(N^{-1/2})$ by \eqref{eq:weight error norms}. Write $V_{z(\bar{v}_E - \bar{v}_H)}$ for the long-run variance of $a_N z_t u_t' a$, drift included. The conditional mean is $a_N b_N$, the moment violation of Section \ref{subsec:equal weights}, and $a_N |b_N| = O_{\p}(a_N^{-1}/N)$ by Lemma \ref{lemma:hajek bound}. The Davydov summation above, with $a_N \|z_t\|_{q_0 \mid S} = O_{\p}(1)$, gives $V_{z(\bar{v}_E - \bar{v}_H)} = O_{\p}(N^{-1})$. By Cauchy--Schwarz in the long-run-variance metric the perturbation of $\hat{V}_{z v_H}$ is bounded by
\begin{equation*}
    2 \left( V_{z(\bar{v}_E - \bar{v}_H)} \right)^{1/2} \left( a_N^2 V_{z \bar{v}_H} \right)^{1/2} + V_{z(\bar{v}_E - \bar{v}_H)} = O_{\p}(N^{-1/2}).
\end{equation*}
The bound is uniform in the bandwidth, since $|w| \leq 1$ and the mixing coefficients satisfy $\sum_h \alpha(h)^{\pi/(2 + \pi)} < \infty$ under Assumption \ref{ass:mixing for stationary time series}, so the lag sum converges rather than accumulating with $b_T$.

Thus we have
\begin{equation*}
    \hat{V}_{z v_H} = \frac{a_N^2}{T} \sum_{t=1}^T z_t^2 \bar{v}_{Ht}^2 + \frac{2 a_N^2}{T} \sum_{s=1}^{b_T} w(\tfrac{s}{b_T}) \sum_{t = s+1}^T z_t z_s \bar{v}_{Ht} \bar{v}_{Hs} + o_{\p}(1),
\end{equation*}
the known-factor HAC estimator of $a_N^2 V_{z \bar{v}_H}(S, \Lambda)$. Condition ND holds for the multiplier $\bar{v}_{Ht}$ by Lemma \ref{lemma:v_H multiplier} and the $c_r$ inequality, through Theorem \ref{theorem:central limit theorem} for $\mu > 1$ and Corollary \ref{corollary:central limit theorem} for $\mu \in (0,1)$. The bandwidth and kernel conditions are those verified in Appendix \ref{subsec:proof or HAC_aggregate}. Proposition 4.1 of \citet{Kojevnikov2021LimitVariables} therefore applies conditional on $(S, \Lambda)$ and gives $\hat{V}_{z v_H} - a_N^2 V_{z \bar{v}_H}(S, \Lambda) \xrightarrow{p} 0$.

\subsection{Proof of Proposition \ref{prop:weakness robust test_aggregate}} \label{subsec:proof of AR_aggregate}

The statistic is evaluated at the null, so it is a function of the data and $\phi_d^0$ only. The proof has two parts, the numerator and the variance estimator, combined by Slutsky's theorem.

\textbf{Numerator.} Under $H_0$, the inner sum is the estimated-factor numerator at the true value. With $a_N = \sqrt{N}$,
\begin{equation*}
    \sqrt{\tfrac{N}{T}} \sum_t \hat{z}_t (y_{St} - \phi_d^0 p_t) = \frac{a_N}{\sqrt{T}} \sum_t \hat{z}_t \varepsilon_t.
\end{equation*}
The proof of Theorem~\ref{theorem:estimated_weak_consistency_aggregate} shows this converges in distribution to $\n(0, N V_{z\bar{\varepsilon}}(S))$ conditional on $\s$, through the numerator central limit theorem (Theorem~\ref{theorem:central limit theorem}) and the first-stage reduction $\hat{z}_t = z_t + o_{\p}(a_N^{-1})$ under $\sqrt{T}/N \to 0$. Squaring,
\begin{equation*}
    \tfrac{N}{T} \Big( \sum_t \hat{z}_t \varepsilon_t \Big)^2 \xrightarrow{d} N V_{z\bar{\varepsilon}}(S) \cdot \chi^2_1.
\end{equation*}

\textbf{Variance.} The estimator $\hat{V}_{z\bar{\varepsilon}}(\phi_d^0)$ is the HAC estimator of Proposition~\ref{prop:consistency of HAC_aggregate}, built from the null-imposed residual $\tilde{\varepsilon}_t(\phi_d^0)$. Under $H_0$, $d_t - \phi_d^0 p_t = \varepsilon_t$ exactly, so the residual is the true structural error and the reduction $\hat{\varepsilon}_t \to \varepsilon_t$ in that proof holds immediately rather than up to $O_{\p}(a_N/\sqrt{T})$. The remaining steps are unchanged: the factor reduction $\hat{F}_t = \hat{H}^{-1} \tilde{F}_t + O_{\p}(1/\sqrt{N})$, the instrument reduction $\hat{z}_t = z_t + o_{\p}(a_N^{-1})$ under $\sqrt{T}/N \to 0$, and Proposition 4.1 of \citet{Kojevnikov2021LimitVariables} conditional on $\s$. Hence
\begin{equation*}
    N^{-1} \hat{V}_{z\bar{\varepsilon}}(\phi_d^0) \xrightarrow{p} V_{z\bar{\varepsilon}}(S).
\end{equation*}

\textbf{Combining.} The rate $a_N^2 = N$ cancels between the numerator and $\hat{V}_{z\bar{\varepsilon}}(\phi_d^0)$. By Slutsky's theorem, $\text{AR}(\phi_d^0) \xrightarrow{d} \chi^2_1$. The only growth condition the argument uses is $\sqrt{T}/N \to 0$.

\subsection{Proof of Proposition \ref{prop:weakness robust test_disaggregate}} \label{subsec:proof of AR_disaggregate}

The statistic is evaluated at the null, so it is a function of the data and $\phi_s^0$ only. The proof follows Appendix \ref{subsec:proof of AR_aggregate}, with the equal-weights aggregation as the one addition.

\textbf{Numerator.} Under $H_0$, $y_{Et} - \phi_s^0 p_t = v_{Et}$ exactly. With $a_N = \sqrt{N}$,
\begin{equation*}
    \sqrt{\tfrac{N}{T}} \sum_t \hat{z}_t (y_{Et} - \phi_s^0 p_t) = \frac{a_N}{\sqrt{T}} \sum_t \hat{z}_t v_{Et}.
\end{equation*}
The proof of Theorem \ref{theorem:estimated_weak_consistency_disaggregate} shows this converges in distribution to $\n(0, N V_{z \bar{v}_H}(S, \Lambda))$ conditional on $(S, \Lambda)$, through the instrument error, the aggregation error, the interaction and the central limit theorem of Appendix \ref{subsec:proof of estimated_disaggregate}. Only the denominator of that proof consumes $N/T$, and the statistic here has no denominator of that kind. The aggregation error is $O_{\p}(\sqrt{T}/N) + O_{\p}(N^{-1/2})$ and the interaction is $o_{\p}(1)$, both under $\sqrt{T}/N \to 0$ and $T \to \infty$ alone. Squaring,
\begin{equation*}
    \tfrac{N}{T} \Big( \sum_t \hat{z}_t v_{Et} \Big)^2 \xrightarrow{d} N V_{z \bar{v}_H}(S, \Lambda) \cdot \chi^2_1.
\end{equation*}

\textbf{Variance.} The estimator $\hat{V}_{z v_H}(\phi_s^0)$ is the one of Proposition \ref{prop:consistency of HAC_disaggregate}, built from the null-imposed residual. Under $H_0$ that residual is $v_{Et}$ exactly, so the reduction $\hat{v}_{Et} = v_{Et} + O_{\p}(a_N/\sqrt{T})$ in that proof holds with no error at all. This is the one step of Appendix \ref{subsec:proof of HAC_disaggregate} that consumed $N/T$, and with it goes the restriction. The rest is unchanged, namely the factor reduction $\hat{F}_t = \hat{H}^{-1} \tilde{F}_t + O_{\p}(N^{-1/2})$, the instrument reduction $\hat{z}_t = z_t + o_{\p}(a_N^{-1})$ under $\sqrt{T}/N \to 0$, the identity $\bar{v}_{Et} - \bar{v}_{Ht} = u_t' a$ with its $O_{\p}(N^{-1/2})$ perturbation of the long-run variance, and Proposition 4.1 of \citet{Kojevnikov2021LimitVariables} conditional on $(S, \Lambda)$. Hence
\begin{equation*}
    N^{-1} \hat{V}_{z v_H}(\phi_s^0) \xrightarrow{p} V_{z \bar{v}_H}(S, \Lambda).
\end{equation*}

\textbf{Combining.} The rate $a_N^2 = N$ cancels between the numerator and $\hat{V}_{z v_H}(\phi_s^0)$. By Slutsky's theorem, $\text{AR}_H(\phi_s^0) \xrightarrow{d} \chi^2_1$. The only growth condition the argument uses is $\sqrt{T}/N \to 0$.

\section{Notes on \citet{Banafti2022InferentialDimensions}} \label{appendix:banafti}

\citet{Banafti2022InferentialDimensions} conclude in their Theorems 2 and 4 that estimating the factor structure does not affect the asymptotic distribution. Theorem~\ref{theorem:estimated_strong_consistency_aggregate} of this paper concludes that it does.

The difference arises due to two reasons. The first is an assumption they impose. Their Assumption 4(iii) splits the cross section into a dominant block and a fringe. I show this Assumption is incompatible with the power law their own framework needs. This Assumption is what makes their Lemma 1(ii) and Term IV of their Lemma 2 go through. My Proposition~\ref{prop:behavior of S times lambda} corrects both results without their Assumption 4(iii). Thus, while the proofs survive with appropriate modifications, I establish the asymptotic distributional results without imposing the infeasible Assumption 4(iii).

The second is their conclusion about the variance of the limiting distribution. They show that Term II of their Lemma 2 disappears asymptotically and do not add to the variance of the estimator. I show that this is possible only under very strict assumptions, ones which \citet{Banafti2022InferentialDimensions} do not impose. Under the general setup, this Term II adds to the asymptotic variance of the GIV estimator.

I will proceed further as follows. First, I show that Assumption 4(iii) is incompatible with fat tails. Then I show how \citet{Banafti2022InferentialDimensions}'s Lemma 1(ii) and Term IV of Lemma 2 fails without Assumption 4(iii). I then rescue these proofs under general conditions using my Proposition \ref{prop:behavior of S times lambda}. Then I show that Term II of their Lemma 2 adds to the asymptotic variance of the GIV estimator\footnote{I write the rotation matrix as $\hat H$ to stress that it is a sample object, where \citet{Banafti2022InferentialDimensions} write $H$}.

\subsection{Assumption 4(iii) is incompatible with fat tails}

Granular instrumental variables draw their strength from a fat tailed size distribution. That is the source of the granularity. Assumption~\ref{ass:size_of_firm_power_law} states the tail as a power law. Under a power law the largest shares do not vanish as the cross section grows. A dominant--fringe partition of the kind Assumption 4(iii) imposes needs the leading shares to be $O_{\p}(1/N)$. The two cannot hold together. Before I formally state the result, I need a Lemma.

\begin{Lemma} \label{lemma:pitman1997}
Suppose $\s_{1},\s_{2},\dots$ are i.i.d., strictly positive, with regularly varying tail
$\p(\s>s)=s^{-\mu}L(s)$, where $\mu\in(0,1)$ and $L$ is slowly varying. For the ordered sequence, $\s_{(1),N}\ge\cdots\ge \s_{(N),N}$, define the ordered shares, $S_{(k),N}=\s_{(k),N}/\sum_j \s_j$. Then,
\begin{equation*}
    \bigl(S_{(1),N},S_{(2),N},\dots\bigr) \xrightarrow{d}(p_{1},p_{2},\dots)\sim\mathrm{PD}(\mu,0)
\end{equation*}
 in the product topology on $[0,1]^{\mathbb{N}}$, where $\mathrm{PD}(\alpha, \theta)$ is the two-parameter Poisson-Dirichlet distribution. Also, for every \textbf{finite} $k$, $S_{(k),N} \xrightarrow{d} p_{k}$ and $\p(p_k > 0) = 1$.
\end{Lemma}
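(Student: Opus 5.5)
The plan is to reuse the point-process machinery of Lemma \ref{lemma:joint stable convergence}, extended from the exact Pareto tail to a regularly varying one. Set $\kappa_N \defeq \inf\{s : \p(\s > s) \leq N^{-1}\}$, so that $N \p(\s > \kappa_N x) \to x^{-\mu}$ for every $x>0$ by regular variation; the slowly varying $L$ is absorbed into $\kappa_N$. Proposition 3.21 of \citet{Resnick2007Heavy-TailPhenomena} then gives $\Pi_N \defeq \sum_{i \le N} \delta_{\s_i/\kappa_N} \xrightarrow{d} \Pi$, a Poisson random measure with intensity $\nu(x,\infty] = x^{-\mu}$. Its points, ranked decreasingly, are $\{\Gamma_k^{-1/\mu}\}_{k\geq 1}$ \citep{LePage1981ConvergenceStatistics}. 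Because the ranking map is continuous at $\Pi$ on each finite block, we obtain, for every fixed $m$, the joint convergence
\begin{equation*}
\kappa_N^{-1}\bigl(\s_{(1),N},\dots,\s_{(m),N}\bigr) \xrightarrow{d} \bigl(\Gamma_1^{-1/\mu},\dots,\Gamma_m^{-1/\mu}\bigr).
\end{equation*}
This convergence must hold jointly with that of $\kappa_N^{-1}\sum_j \s_j \to L_1 = \sum_k \Gamma_k^{-1/\mu}$.

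To get the joint statement, run the truncation argument of Lemma \ref{lemma:joint stable convergence} on the augmented map
\begin{equation*}
m \mapsto \Bigl(\text{top } m \text{ points},\ \textstyle\int_{(\epsilon,\infty]} x\, m(dx)\Bigr),
\end{equation*}
which is almost surely continuous at $\Pi$. The one change from that lemma is the small-jump bound. Since $\mu<1$, Karamata's theorem gives $\e[\s 1\{\s \le w\}] \sim \frac{\mu}{1-\mu} w \p(\s > w)$, and hence
\begin{equation*}
\e\Bigl[\kappa_N^{-1}\textstyle\sum_i \s_i 1\{\s_i \le \epsilon\kappa_N\}\Bigr] \to \frac{\mu}{1-\mu}\epsilon^{1-\mu},
\end{equation*}
which vanishes as $\epsilon \downarrow 0$. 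Theorem 3.2 of \citet{Billingsley1999ConvergenceMeasures} then removes the truncation. The continuous mapping theorem applied to $(x_1,\dots,x_m,y)\mapsto (x_1/y,\dots,x_m/y)$ on $y>0$ gives
\begin{equation*}
(S_{(1),N},\dots,S_{(m),N}) \xrightarrow{d} (p_1,\dots,p_m), \qquad p_k \defeq \Gamma_k^{-1/\mu}\Big/\textstyle\sum_l \Gamma_l^{-1/\mu}.
\end{equation*}
Convergence of all finite-dimensional marginals is equivalent to convergence in the product topology on $[0,1]^{\mathbb{N}}$, which settles the first claim.

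It remains to identify the law of $(p_k)$ as $\mathrm{PD}(\mu,0)$. The plan is to cite the representation of Pitman and Yor (1997, Proposition 10): the ranked normalised jumps of a $\mu$-stable subordinator on $[0,1]$ have law $\mathrm{PD}(\mu,0)$. The jumps of such a subordinator form a Poisson process with Lévy measure proportional to $x^{-\mu-1}dx$, and ranked these are exactly $c\,\Gamma_k^{-1/\mu}$. The constant $c$ cancels under normalisation, so $(p_k)$ is $\mathrm{PD}(\mu,0)$. The main obstacle is this identification step, because it rests on an external representation theorem rather than on anything in the paper. If a self-contained route is preferred, the alternative is to verify the stick-breaking form: show that $W_k = p_k/(1-\sum_{l<k}p_l)$ are independent $\mathrm{Beta}(1-\mu, k\mu)$ variables, using the size-biased permutation of the Poisson points. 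That computation is heavier, so I would cite Pitman--Yor.

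The marginal claim $S_{(k),N}\xrightarrow{d} p_k$ follows by projecting onto the $k$-th coordinate. For positivity, $\Gamma_k<\infty$ almost surely gives $\Gamma_k^{-1/\mu}>0$. The series $L_1 = \sum_l \Gamma_l^{-1/\mu}$ converges almost surely because $\Gamma_l/l\to1$ and $1/\mu>1$, so $L_1 < \infty$. Together these give $\p(p_k>0)=1$.
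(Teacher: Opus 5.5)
Your proof is correct, but it takes a partly different route from the paper.

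\textbf{The paper's route.} It does not derive the convergence. It cites Pitman and Yor (1997, Section 1.2, eq.~(20)) for the convergence of the ranked normalised shares to $\mathrm{PD}(\mu,0)$. It then cites their Proposition~10 for the representation $p_k = X_k^{-1/\mu}/\sum_m X_m^{-1/\mu}$, with $X_k$ the arrival times of a unit-rate Poisson process. It finishes with the same strong-law argument you use to show $0<\sum_m X_m^{-1/\mu}<\infty$ almost surely.

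\textbf{Your route.} You prove the convergence yourself. You extend the point-process and truncation argument of Lemma~\ref{lemma:joint stable convergence} from the exact Pareto tail to a regularly varying one. Karamata's theorem replaces the explicit truncated-moment integral. Your computation $\e[\s 1\{\s\le w\}]\sim\frac{\mu}{1-\mu}\,w\,\p(\s>w)$ and the resulting small-jump bound $\frac{\mu}{1-\mu}\epsilon^{1-\mu}$ are both right. The limit therefore comes out directly as $\Gamma_k^{-1/\mu}/L_1$. You need Pitman--Yor only to give this law the name $\mathrm{PD}(\mu,0)$.

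\textbf{What each buys.} The paper's version is shorter. Yours makes the load-bearing part self-contained. Proposition~\ref{prop:BL_impossibility} uses only $S_{(k),N}\xrightarrow{d}p_k$ together with $\p(p_k>0)=1$, and in your argument neither depends on the Poisson--Dirichlet identification. It also shows the lemma is essentially a corollary of the paper's own joint stable convergence lemma. The cost is verifying the almost-sure continuity of the top-$m$ map at $\Pi$ and the truncation step. The continuity holds because $\Pi$ almost surely has infinitely many points and no mass at $\infty$.
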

\begin{proof}
    By the result in Section~1.2, p.~861, eq.~(20) of \citet{Pitman1997TheSubordinator},
    \begin{equation*}
    \bigl(S_{(1),N},S_{(2),N},\dots\bigr) \xrightarrow{d}(p_{1},p_{2},\dots)\sim\mathrm{PD}(\mu,0)
    \end{equation*}
    By their Proposition~10, pp.~862--863, there exist
$X_{1}<X_{2}<\cdots$, which are points of a unit-rate Poisson process on
$(0,\infty)$, namely $X_{n}=e_{1}+\cdots+e_{n}$ with the
$e_{i}$ i.i.d. exponential, such that
\[
p_{k}\;=\;\frac{X_{k}^{-1/\mu}}{\sum_{m\ge1}X_{m}^{-1/\mu}}\qquad\text{(their eq.~(29))}.
\]

Write $W \defeq \sum_{m\ge1} X_{m}^{-1/\mu}$ for the denominator in their
eq.~(29). For each fixed $k$, $X_{k}=e_{1}+\cdots+e_{k}$
is a finite sum of i.i.d.\ $\mathrm{Exp}(1)$'s, so $X_{k}\in(0,\infty)$ a.s.\
and hence $X_{k}^{-1/\mu}\in(0,\infty)$ a.s.

It remains to show $W\in(0,\infty)$ a.s., so that $p_k\in(0,1]$ a.s. Apply
Kolmogorov's strong law of large numbers to the i.i.d.\ sequence
$e_{1},e_{2},\dots$ with $\e[e_{1}]=1$:
\[
\frac{X_{m}}{m}\;=\;\frac{1}{m}\sum_{i=1}^{m}e_{i}
\;\xrightarrow[m\to\infty]{\textnormal{a.s.}}\;\e[e_{1}]\;=\;1.
\]
By continuity of $x\mapsto x^{-1/\mu}$ at $x=1$, this gives
\[
\lim_{m\to\infty}\frac{X_{m}^{-1/\mu}}{m^{-1/\mu}}
\;=\;\Bigl(\lim_{m\to\infty}\frac{X_{m}}{m}\Bigr)^{-1/\mu}
\;=\;1\quad\text{a.s.}
\]
Therefore the terms $X_{m}^{-1/\mu}$ behave, for large $m$, like the
deterministic terms $m^{-1/\mu}$. Because $\mu<1$ gives $1/\mu>1$, the series
$\sum_{m}m^{-1/\mu}$ converges, and hence $W=\sum_{m}X_{m}^{-1/\mu}<\infty$
a.s. Thus $W\in(0,\infty)$ a.s., and
$p_{k} = X_{k}^{-1/\mu}/W\in(0,1]$ a.s.\ for every finite $k$.
\end{proof}

With this Lemma, I can state the result formally.
\begin{Prop} \label{prop:BL_impossibility}
Under Assumption~\ref{ass:size_of_firm_power_law}, consider the ordered sequence of shares, $S_{(1),N}\ge\cdots\ge S_{(N),N}$, then
\begin{equation*}
    S_{(k),N} \neq O_{\p}(1/N) \enspace \text{for any finite $k\ge1$}
\end{equation*}
Equivalently, $NS_{(k),N}\to\infty$ in probability for any finite $k\ge1$
\end{Prop}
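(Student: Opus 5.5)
The plan is to read off the claim from Lemma~\ref{lemma:pitman1997}. Proposition~\ref{prop:BL_impossibility} is stated under Assumption~\ref{ass:size_of_firm_power_law} without a restriction on $\mu$. However, the fat-tailed case relevant for the Banafti comparison is the strong regime $\mu \in (0,1)$, and I will prove it there. In that regime the sizes are i.i.d., strictly positive ($\s_i \geq 1$), with exact tail $\p(\s > s) = s^{-\mu}$. This is regularly varying with $L \equiv 1$, so the hypotheses of Lemma~\ref{lemma:pitman1997} hold verbatim. For every finite $k$ it gives $S_{(k),N} \xrightarrow{d} p_k$ with $\p(p_k > 0) = 1$.

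The key step is to show that a sequence converging in distribution to an almost surely positive limit cannot be $O_{\p}(1/N)$. I will argue by contradiction. Suppose $N S_{(k),N} = O_{\p}(1)$. Then for any $\eta > 0$ there is $M$ with $\limsup_N \p(N S_{(k),N} > M) \leq \eta$. Fix $\epsilon > 0$. For $N \geq M/\epsilon$ we have $\{S_{(k),N} > \epsilon\} \subseteq \{N S_{(k),N} > M\}$, so $\limsup_N \p(S_{(k),N} > \epsilon) \leq \eta$. Since $\eta$ is arbitrary, this gives $S_{(k),N} \xrightarrow{p} 0$.

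By Portmanteau applied to the open set $(\epsilon,1]$, $\liminf_N \p(S_{(k),N} > \epsilon) \geq \p(p_k > \epsilon)$. Because $p_k > 0$ almost surely, we can choose $\epsilon$ small enough that $\p(p_k > \epsilon) > 0$. This contradicts $S_{(k),N} \xrightarrow{p} 0$. For the equivalent statement, fix $M$. Then $\p(N S_{(k),N} \leq M) = \p(S_{(k),N} \leq M/N) \leq \p(S_{(k),N} \leq \epsilon)$ once $M/N < \epsilon$. By Portmanteau on the closed set $[0,\epsilon]$, its limsup is at most $\p(p_k \leq \epsilon)$, which tends to $0$ as $\epsilon \downarrow 0$. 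Hence $N S_{(k),N} \to \infty$ in probability.

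The main obstacle is scope rather than calculation. Lemma~\ref{lemma:pitman1997} is only available for $\mu \in (0,1)$. If one wants the conclusion for $\mu > 1$, it has to be weakened: there $S_{(1),N} \asymp N^{1/\mu}/N$, and $N S_{(1),N} \asymp N^{1/\mu} \to \infty$ still holds. This follows from the order-statistic rates of the appendix, namely $\s_{(1)} = O_{\p}(N^{1/\mu})$ and bounded away from zero at that scale since $N^{-1/\mu}\s_{(1)}$ has a Fréchet limit, together with the law of large numbers for $\sum_j \s_j$. I would add this as a remark. I would keep the main proof in the strong regime, which is where the incompatibility with Assumption 4(iii) matters.
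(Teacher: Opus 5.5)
Your argument is correct and is essentially the paper's proof: both take $S_{(k),N}\xrightarrow{d}p_k$ with $\p(p_k>0)=1$ from Lemma~\ref{lemma:pitman1997} and then use Portmanteau to get $\p(NS_{(k),N}\le M)\to 0$ for every $M$. Your explicit restriction to $\mu\in(0,1)$ is exactly what the paper's proof implicitly needs, since that lemma is only available in that range, and your remark for $\mu>1$ is sound: it extends to every finite $k$ via $N^{-1/\mu}\s_{(k),N}\xrightarrow{d}\Gamma_k^{-1/\mu}>0$ together with the law of large numbers for $\sum_j \s_j$, so the conclusion does not depend on $\mu<1$.
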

\begin{proof}
Under Assumption \ref{ass:size_of_firm_power_law}, the ordered sequence of shares, $S_{(1),N}\ge\cdots\ge S_{(N),N}$ satisfies Lemma~\ref{lemma:pitman1997}. Fix $k\ge1$. By Lemma~\ref{lemma:pitman1997}, $S_{(k),N}\xrightarrow{d} p_{k}$ and for any finite $k$, $\p(p_{k}>0)=1$. Hence, given any
$\delta>0$, we can choose a continuity point $\eta>0$ of the
distribution function of $p_{k}$ such that
\begin{equation}\label{eq:eta}
\p(p_{k}>\eta)\;>\;1-\delta.
\end{equation}
By the Portmanteau theorem applied to the open set $(\eta,\infty)$,
\begin{equation}\label{eq:port}
\liminf_{N\to\infty}\p\bigl(S_{(k),N}>\eta\bigr)\;\ge\;\p(p_{k}>\eta)\;>\;1-\delta.
\end{equation}
Fix $M>0$. For all $N$ with $N\eta>M$, $\{S_{(k),N}>\eta\}\subseteq\{NS_{(k),N}>M\}$, so~\eqref{eq:port} gives
\[
\liminf_{N\to\infty}\p\bigl(NS_{(k),N}>M\bigr)\;\ge\;1-\delta.
\]
Since $\delta>0$ was arbitrary, $\p(NS_{(k),N}>M)\to1$ for every
$M>0$, i.e.\ $NS_{(k),N}\to\infty$ in probability. Hence $S_{(k),N} \neq O_{\p}(1/N)$ for any finite $k\ge1$.
\end{proof}

The leading shares do not vanish at the rate the partition needs. So the dominant--fringe split of Assumption 4(iii) is not available under the power law. This directly affects two Lemmas in \citet{Banafti2022InferentialDimensions}. I show how the proofs fail and how I rescue them. Lets consider these issues one by one. 

\subsubsection{Their Lemma 1(ii) needs Assumption 4(iii)}

Their Lemma 1(ii) states that the variance of the price is bounded, $V(p_t) = \Theta(1)$. The bound runs through the dominant--fringe partition. It splits the share vector into a dominant block $S_d$ ($N_1$ fixed) and a fringe block $S_f$ ($N_2 \to \infty$), and controls each separately,
\begin{equation*}
    V(p_t) = \e[S'\Omega S] + \e[S_d'\tilde\Lambda_d\tilde\Lambda_d'S_d] + \e[S_f'\tilde\Lambda_f\tilde\Lambda_f'S_f] + V(\varepsilon_t).
\end{equation*}
The fringe term is $O(1)$ only because Assumption 4(iii) makes $\|S_f\|^2 = O(1/N)$. Once Proposition~\ref{prop:BL_impossibility} removes that assumption, the partition is gone and the bound has no proof.

This paper reaches $V(p_t) = O(1)$ without the partition. The price reduces to a few dominant terms. By Proposition~\ref{prop:weakness of known factor instrument}, $u_{St} = O_{\p}(a_N^{-1})$, and the intermediate result $\lambda_S' F_t = F_t^{(1)} + O_{\p}(a_N^{-1})$ holds, with $a_N S'\tilde\Lambda = O_{\p}(1)$ from Proposition~\ref{prop:behavior of S times lambda}. In the strong regime $a_N = 1$, so the dominant terms in $p_t$ are those in $\varepsilon_t$ and $F_t^{(1)}$. Both have finite second moments. Hence $V(p_t) = \Theta(1)$.

These orders hold uniformly across regimes. The same argument bounds the price in the weak regimes, where the partition was never available to begin with. So Proposition~\ref{prop:behavior of S times lambda} corrects the arguments of \citet{Banafti2022InferentialDimensions}, and then extends them to the weak regimes.

\subsubsection{Their Term IV of Lemma 2 needs Assumption 4(iii)}

Consider Term IV of their Lemma 2. They show that its order is $o_{\p}(1)$ after standardization. But again, the proof uses Assumption 4(iii), and once the assumption is gone their proof of negligibility goes with it.

To see this, consider Term IV as bounded by Cauchy--Schwarz in their
proof:
\begin{align*}
    IV &= O_{\p}(\delta_{NT}^{-1}) \left( \frac{1}{N} \sum_{i=1}^N S_i^2 \, \| \tilde\lambda_i' \hat H \|^2 \right)^{1/2} \\
       &\leq O_{\p}(\delta_{NT}^{-1}) \left( \frac{1}{N} \left( \sum_{i=1}^N S_i^4 \right)^{1/2} \left( \sum_{i=1}^N \| \tilde\lambda_i' \hat H \|^4 \right)^{1/2} \right)^{1/2} \\
       &= O_{\p}(\delta_{NT}^{-1}) \left( \frac{1}{N} O_{\p}( 1 ) \cdot O_{\p}( N^{1/2} ) \right)^{1/2}
       = O_{\p}(\delta_{NT}^{-1}) \cdot O_{\p}( N^{-1/4} ) = O_{\p}\!\left( N^{-3/4} \right).
\end{align*}

The intermediate orders use the assumptions of this paper. Under bounded
fourth moments of the loadings, $\sum_i \| \tilde\lambda_i' \hat H \|^4 = O_{\p}(N)$.
Under the power-law assumption in the strong regime, the maximum share is
$O_{\p}(1)$, and the Herfindahl is $O_{\p}(1)$, so
$\sum_i S_i^4 \leq (\max_i S_i)^2 \sum_i S_i^2 = O_{\p}(1)$. Substituting
back,
\begin{equation*}
    \mathrm{IV} = O_{\p}\!\left( N^{-3/4} \right),
\end{equation*}
so the standardized counterpart is
\begin{equation*}
    \sqrt{T}\,\mathrm{IV} = O_{\p}\!\left( \frac{\sqrt{T}}{N^{3/4}} \right).
\end{equation*}

The Cauchy--Schwarz bound their proof relies on is loose. It uses the maximum share and the Herfindahl, both of which Assumption 4(iii) supplies. Without the assumption the bound has no footing, and after standardization it does not go to zero. The standardized counterpart is $O_{\p}(\sqrt T/N^{3/4})$, and the first-stage requirement $\sqrt T/N\to0$ admits sequences where this diverges, for example $\sqrt T=N^{0.9}$ gives $\sqrt T/N^{3/4}=N^{0.15}\to\infty$. So Cauchy--Schwarz is not enough to keep Term IV negligible once Assumption 4(iii) is removed.

Proposition~\ref{prop:behavior of S times lambda} restores the negligibility without the assumption. It gives $a_N S'\tilde\Lambda=O_{\p}(1)$ uniformly across regimes, which is what the tight bound on the factor channel needs. The same Proposition that corrects Lemma 1(ii) corrects Term IV, and it carries the result to every regime of instrument strength, not just the strong one. 

I still need to show how what drives the difference between my results and \citet{Banafti2022InferentialDimensions}. This comes from Term II of their Lemma 2. This term disappears only under the stringent condition that the structural error $\varepsilon_t$ is uncorrelated with the common factors, $\tilde{F}_t$. That is a very stringent condition not present in most empirical applications and one that not even \citet{Banafti2022InferentialDimensions} impose.  

\subsubsection*{Their Term II of Lemma 2 drives the variance}

I work in the strong regime throughout, so $a_N = 1$ and the numerator is standardized by $\sqrt T$. \citet{Banafti2022InferentialDimensions} write the factor as $\eta_t$ and the demeaned panel as $\tilde y_t$. I use the notation of this paper, the factor $\tilde F_t$ and the demeaned panel $\bar y_t$, so the data equation reads $\bar y_{jt} = \tilde\lambda_j'\tilde F_t + \bar u_{jt}$, the entrywise form of $\bar Y = \tilde F\tilde\Lambda' + \bar u$. Their $\tilde y$ is my $\bar y$ in what follows. 

\citet{Banafti2022InferentialDimensions} reach the impugned conclusion through their Lemma 2. They write the first-stage error with the loading projection $\hat Q = I - \hat\Lambda(\hat\Lambda'\hat\Lambda)^{-1}\hat\Lambda'$ and expand $\hat Q - Q$ around the true loadings. With $\hat D = \hat\Lambda'\hat\Lambda/N$ and $D = \hat H'(\tilde\Lambda'\tilde\Lambda)\hat H/N$,
\begin{align*}
    \hat Q - Q = \frac1N\Big[ &(\hat\Lambda - \tilde\Lambda \hat H)\hat D^{-1}(\hat\Lambda - \tilde\Lambda \hat H)' + (\hat\Lambda - \tilde\Lambda \hat H)\hat D^{-1}\hat H'\tilde\Lambda' \\
    &+ \tilde\Lambda \hat H\hat D^{-1}(\hat\Lambda - \tilde\Lambda \hat H)' + \tilde\Lambda \hat H(\hat D^{-1} - D^{-1})\hat H'\tilde\Lambda' \Big].
\end{align*}
Sandwiching in $S'(\cdot)\bar y_{\cdot t}\varepsilon_t$ and averaging over $t$ gives their four terms, $\frac1T\sum_t S'(\hat Q - Q)\bar y_{\cdot t}\varepsilon_t = \mathrm{I}+\mathrm{II}+\mathrm{III}+\mathrm{IV}$. They bound terms II and III by Bai's Lemma B.1, used in the symmetric form. This paper explicitly states the symmetric form as Lemma~\ref{lemma:intermediate_1_factor_loadings}. It gives both terms as $O_{\p}(C_{NT}^{-2})$ with $C_{NT} = \min\{\sqrt N,\sqrt T\}$, so they conclude the first stage drops out. We will see below that the two terms have different orders, due to differences in the way they interact with the granular shares, $S$.

Take Term III first. The shares interact with the true loadings,
\begin{equation*}
    \mathrm{III} = \frac{1}{NT}\sum_t S'\tilde\Lambda \hat H\hat D^{-1}(\hat\Lambda - \tilde\Lambda \hat H)'\bar y_{\cdot t}\varepsilon_t
    = (S'\tilde\Lambda)\,\hat H\hat D^{-1}\Big[\frac1N(\hat\Lambda - \tilde\Lambda \hat H)'\frac1T\sum_t\bar y_{\cdot t}\varepsilon_t\Big].
\end{equation*}
Here $S'\tilde\Lambda = O_{\p}(1)$ and the loading error meets the data. With $\bar y_{it} = \tilde\lambda_i'\tilde F_t + \bar u_{it}$, $\frac1T\sum_t\bar y_{it}\varepsilon_t = \tilde\lambda_i'\e[\tilde F_t\varepsilon_t] + O_{\p}(1/\sqrt T)$, so the bracket splits into a covariance term and an idiosyncratic remainder,
\begin{equation*}
    \frac1N(\hat\Lambda - \tilde\Lambda \hat H)'\frac1T\sum_t\bar y_{\cdot t}\varepsilon_t
    = \Big[\frac1N(\hat\Lambda - \tilde\Lambda \hat H)'\tilde\Lambda\Big]\e[\tilde F_t\varepsilon_t] + O_{\p}(C_{NT}^{-2}).
\end{equation*}

$\e[\tilde F_t\varepsilon_t]$ is O(1). By the loading expansion \eqref{eq:asymptotic_linear_lambda_incomplete} and Lemma~\ref{lemma:order of a mixing sequence}, $\frac1N(\hat\Lambda - \tilde\Lambda \hat H)'\tilde\Lambda$ is $O_{\p}(C_{NT}^{-2})$. Hence $\mathrm{III} = O_{\p}(C_{NT}^{-2})$. Lemma~\ref{lemma:intermediate_1_factor_loadings} holds for Term III, and their bound on it is right.

Now Term II. The shares interact with the estimation error of the factor loadings,
\begin{equation*}
    \mathrm{II} = \frac{1}{NT}\sum_t S'(\hat\Lambda - \tilde\Lambda \hat H)\hat D^{-1}\hat H'\tilde\Lambda'\bar y_{\cdot t}\varepsilon_t
    = \big[S'(\hat\Lambda - \tilde\Lambda \hat H)\big]\,\hat D^{-1}\hat H'\Big[\frac1N\tilde\Lambda'\frac1T\sum_t\bar y_{\cdot t}\varepsilon_t\Big].
\end{equation*}
The two ends have traded roles. Now the data meets the true loadings, and the loading error meets the shares. Note that
\begin{equation*}
    \frac1N\tilde\Lambda'\frac1T\sum_t\bar y_{\cdot t}\varepsilon_t
    = \frac{\tilde\Lambda'\tilde\Lambda}{N}\cdot\frac1T\sum_t\tilde F_t\varepsilon_t
    + \frac1N\sum_j\tilde\lambda_j\frac1T\sum_t \bar u_{jt}\varepsilon_t
    = \Sigma_{\tilde\Lambda}\,\e[\tilde F_t\varepsilon_t] + O_{\p}(1/\sqrt{NT}).
\end{equation*}
On the share end the loading error is held out of the average,
\begin{equation*}
    S'(\hat\Lambda - \tilde\Lambda \hat H) = O_{\p}(1/\sqrt T),
\end{equation*}
by the loading expansion \eqref{eq:asymptotic_linear_lambda_incomplete}, whose leading part is $V_{NT}^{-1}(\tilde\Lambda'\hat\Lambda/N)\frac1T\sum_t\tilde F_t\bar u_{St}$ with the share weighted shock $\bar u_{St} = S'\bar u_t$. The granular shares do not average the loading error. They concentrate it. So
\begin{equation*}
    \mathrm{II} = O_{\p}\!\big(1/\sqrt T\big), \qquad \sqrt T\,\mathrm{II} = O_{\p}(1).
\end{equation*}

This is why the conclusions differ. Standardized by $\sqrt T$, Term III is $\sqrt T\,C_{NT}^{-2}=o_{\p}(1)$ under $\sqrt T/N\to0$, while Term II is $O_{\p}(1)$. This term is a consequence of estimation of the factor structure and adds to the variance, which my Theorems formalize.

\section{Edge Cases: $\mu = 1$ and $\mu = 2$} \label{app:edge_cases}

The proof of Proposition \ref{prop:weakness of known factor instrument} in Appendix \ref{app:behavior of herfindahl} gives the order of the Herfindahl $S'S$ for three ranges of $\mu$. These are $\mu \in (0,1)$, $\mu \in (1,2)$, and $\mu > 2$. It leaves out the two endpoints, $\mu = 1$ and $\mu = 2$. I cover them here.

Both endpoints are boundary cases of the stable law. At each one, exactly one of the two sums in $S'S = \frac{\sum_i \s_i^2}{[\sum_j \s_j]^2}$ has tail index one. That sum has an infinite moment. But the moment diverges only as a logarithm. So the norming picks up a $\ln N$ factor, and the order of $S'S$ differs from the neighboring rates by a power of $\ln N$.

As in Appendix \ref{app:behavior of herfindahl}, I set $c = 1$ and let $\s_i$ live on $[1, \infty)$, so that $\p(\s_i > s) = s^{-\mu}$ for $s \geq 1$.

\subsection{The case $\mu = 1$}

Start with the numerator. The squared size $\s_i^2$ has tail
\begin{equation*}
    \p(\s_i^2 > s) = \p(\s_i > \sqrt{s}) = s^{-\frac{\mu}{2}} = s^{-\frac{1}{2}}.
\end{equation*}
The tail index is $\frac{1}{2}$, which is below one. This is the same as the numerator in the $\mu \in (0,1)$ proof. Lévy's Theorem \ref{theorem:Levy's Theorem} applies with $\kappa_n = N^{2/\mu} = N^2$ and $b_n / \kappa_n = \frac{\mu}{2 - \mu} = 1$. The sum has the same order as $\kappa_n$,
\begin{equation*}
    \sum_{i=1}^N \s_i^2 = O_{\p}(N^2).
\end{equation*}

Now the denominator. The size $\s_i$ has tail index exactly $\mu = 1$. The mean is infinite,
\begin{equation*}
    \e[\s] = \int_1^{\infty} s \, \mu s^{-\mu - 1} \, ds = \int_1^{\infty} s^{-1} \, ds = \infty,
\end{equation*}
but it diverges only as a logarithm. Lévy's Theorem \ref{theorem:Levy's Theorem} applies with index $\alpha = 1$. The norming is $\kappa_n = \inf\{ x : \p(\s > x) \leq n^{-1} \} = N$. The centering is
\begin{equation*}
    b_n = N \, \e[\s \cdot 1_{(\s \leq \kappa_n)}] = N \int_1^N s \, \mu s^{-\mu - 1} \, ds = N \int_1^N s^{-1} \, ds = N \ln N.
\end{equation*}
Lévy's Theorem then gives $(\sum_j \s_j - b_n)/\kappa_n \xrightarrow{d} Y$ for a stable limit $Y$ with index one. Here $b_n / \kappa_n = \ln N \to \infty$, so the centering grows faster than the norming. The sum has the order of its centering,
\begin{equation*}
    \sum_{j=1}^N \s_j = N \ln N \, (1 + o_{\p}(1)) = O_{\p}(N \ln N).
\end{equation*}

Put the two together.
\begin{equation*}
    S'S = \frac{\sum_{i=1}^N \s_i^2}{\left[ \sum_{j=1}^N \s_j \right]^2} = \frac{O_{\p}(N^2)}{N^2 (\ln N)^2} = O_{\p}\left( \frac{1}{(\ln N)^2} \right).
\end{equation*}

Appendix \ref{app:behavior of herfindahl} shows $\var[z_t \mid S] = O_{\p}(S'S)$, so the instrument has standard deviation of order $\sqrt{S'S}$. For $\mu \in (0,1)$ we have $S'S = O_{\p}(1)$, so $z_t = O_{\p}(1)$ and the instrument does not dilute. For $\mu \in (1,2)$ we have $z_t = O_{\p}(N^{-\delta})$ with $\delta = 1 - 1/\mu$. At $\mu = 1$ the instrument dilutes as a logarithm instead, with $z_t = O_{\p}(1/\ln N)$. It dilutes at the slow rate $1/\ln N$, not at a power of $N$.

The dilution carries over to the GIV estimator. I formally state the result in Theorem \ref{theorem:estimated_log_weak} below, which is the analogue of Theorem~\ref{theorem:estimated_weak_consistency_aggregate} at $\mu = 1$. Theorem~\ref{theorem:estimated_log_weak_disaggregate} does the same for the supply elasticity.

But before I do that, I need to clarify the behavior of share-weighted loading $S'\tilde{\Lambda}$. The numerator $\sum_j \s_j \tilde{\Blambda}_j$ is the Breiman--Lévy object from the proof of Proposition~\ref{prop:behavior of S times lambda}. At $\mu = 1$ the product $\s_j X_j(g)$ has tail index one, so Lévy's Theorem~\ref{theorem:Levy's Theorem} applies with $\kappa_n = N$. The centering is $b_n = 0$ by the symmetry of $X_j(g)$ around zero. So the numerator carries no logarithm,
\begin{equation*}
    \sum_{j=1}^N \s_j \tilde{\Blambda}_j = O_{\p}(N).
\end{equation*}
The denominator is $\sum_j \s_j = O_{\p}(N \ln N)$ from above. Hence
\begin{equation*}
    S'\tilde{\Lambda} = \frac{O_{\p}(N)}{O_{\p}(N \ln N)} = O_{\p}\left( \frac{1}{\ln N} \right).
\end{equation*}

So the share-weighted loading and the instrument both scale as $1/\ln N$. Set $a_N = \ln N$. Then $a_N \Bar{u}_{St} = O_{\p}(1)$ and $a_N S'\tilde{\Lambda} = O_{\p}(1)$. These are the two inputs the estimated-factor proof needs.

\begin{Theorem}[GIV with estimated factors under logarithmic weakness] \label{theorem:estimated_log_weak}
    Suppose Assumptions \ref{ass:weak stationarity and idiosyncrasy} to \ref{ass:LN_moments and CLT} hold with $\mu = 1$, $(\ln N)^2/T \to 0$, and $\sqrt{T}/N \to 0$. Then, conditional on $S$, for almost every realization of the shares, the GIV estimator is consistent and asymptotically normal,
    \begin{equation*}
        \frac{\Gamma_{zp}}{\sqrt{V_{z \bar{\varepsilon}}(S)}} \cdot \sqrt{T}\, [\hat{\phi}_d - \phi_d] \xrightarrow{d} \n(0, 1),
    \end{equation*}
    where $\Gamma_{zp}$ and $V_{z \bar{\varepsilon}}(S)$ are as in Theorem~\ref{theorem:estimated_weak_consistency_aggregate}. The studentization satisfies $\Gamma_{zp}/\sqrt{V_{z\bar{\varepsilon}}(S)} = O_{\p}(1/\ln N)$, so the standardized statistic vanishes at the rate $\sqrt{T}/\ln N$.
\end{Theorem}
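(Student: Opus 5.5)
The plan is to rerun the unified-scaling argument of Appendix~\ref{subsec:proof of estimated_theorems} with $a_N = \ln N$, the normalization identified just above. Every step of that proof uses $a_N$ through a small set of inputs. I will re-verify each of them at $\mu = 1$:
\begin{enumerate}
    \item the strength bounds $a_N \sqrt{h_N} = O_{\p}(1)$ and $[a_N^2(h_N - 1/N)]^{-1} = O_{\p}(1)$;
    \item the loading bound $a_N S'\tilde\Lambda = O_{\p}(1)$;
    \item the permutation bound $\beta = O_{\p}(\sqrt{h_N})$ of Lemma~\ref{lemma:hajek bound}, which holds for every $\mu$;
    \item the central limit theorem for $\frac{a_N}{\sqrt T}\sum_t Z_t \bar u_{St}$;
    \item the rate conditions $a_N^2/T \to 0$ and $a_N/N \to 0$.
\end{enumerate}
The hypothesis $(\ln N)^2/T \to 0$ is exactly $a_N^2/T \to 0$. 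The condition $a_N/N \to 0$ is trivial.

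\textbf{Step 1: strength at the boundary.} From the computation above, $\sum_i \s_i^2/N^2$ converges in distribution to the a.s.\ positive stable limit $\sum_k \Gamma_k^{-2}$. This is the single-sequence form of Lemma~\ref{lemma:joint stable convergence} with tail index $1/2$. Also, $\sum_j \s_j/(N\ln N) \xrightarrow{p} 1$, because the centering dominates the norming. Hence
\[
(\ln N)^2 h_N \xrightarrow{d} \sum_k \Gamma_k^{-2} \in (0,\infty) \quad \text{a.s.}
\]
This gives both strength bounds, since $(\ln N)^2/N \to 0$.

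The loading bound $a_N S'\tilde\Lambda = O_{\p}(1)$ was shown just above. Lemma~\ref{lemma:hajek bound} then gives $\beta = O_{\p}(1/\ln N)$. With these inputs, Proposition~\ref{prop:denominator} carries over verbatim: $a_N^2 \Gamma_{zp}$ is bounded above and away from zero, and the sample denominator satisfies
\[
\frac{a_N^2}{T}\sum_t z_t p_t = M_p + O_{\p}(\ln N/\sqrt T).
\]

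\textbf{Step 2: first stage.} Proposition~\ref{prop:first stage effect} is stated for generic $a_N$. Its proof (Lemmas~\ref{lemma:numerator} and~\ref{lemma:denominator}) uses only three facts:
\begin{itemize}
    \item $a_N \bar u_{St} = O_{\p}(1)$;
    \item $a_N S'\tilde\Lambda = O_{\p}(1)$;
    \item $\frac{a_N}{\sqrt T}\sum_m \tilde F_m \bar u_{Sm} = O_{\p}(1)$.
\end{itemize}
The last follows from the CLT of Step 3. With $\sqrt T/N \to 0$, the remainders $O_{\p}(1/\sqrt N)$, $O_{\p}(\sqrt T/N)$, $O_{\p}(a_N/\sqrt T)$ and $O_{\p}(a_N/N)$ all vanish. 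The numerator with $Q_t = \varepsilon_t$ therefore becomes $\frac{a_N}{\sqrt T}\sum_t z_t\bar\varepsilon_t + o_{\p}(1)$, and the denominator perturbation is $o_{\p}(1)$.

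\textbf{Step 3: the CLT, which is the main obstacle.} Theorem~\ref{theorem:central limit theorem} excludes $\mu = 1$. Remark~\ref{remark:CLT boundary indices} says only the moment bound needs rechecking with $a_N = \ln N$, and I will carry that out carefully. The Rosenthal step requires
\[
(\ln N)^{4+\pi}\Big[(h_N)^{4+\pi} + \textstyle\sum_j S_j^{q}\Big]^{1/2} = O_{\p}(1), \qquad q = 8 + 2\pi.
\]
The first term contributes $(\ln N)^{4+\pi}\, O_{\p}((\ln N)^{-(4+\pi)}) = O_{\p}(1)$. For the second, $\s_j^q$ has tail index $1/q$, so Lévy's theorem gives $\sum_j \s_j^q = O_{\p}(N^q)$. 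Dividing by $(N\ln N)^q$ gives $\sum_j S_j^q = O_{\p}((\ln N)^{-q})$, and its square root is again $(\ln N)^{-(4+\pi)}$.

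The delicate point is that the demeaning correction $N^{1-q}$ and the denominator $\sum_j \s_j$ must be controlled jointly, not just in order. I will handle this by using $\sum_j \s_j \ge$ its $(1-\epsilon)N\ln N$ lower quantile on an event of probability approaching one. The remaining ingredients need no change, since they are free of $\mu$: $\psi$-dependence, Condition ND, and conditional orthogonality of $\bar\varepsilon_t$ (verified as in Step 4 of Appendix~\ref{subsec:proof of estimated_theorems}). Non-degeneracy of $a_N^2 V_{z\bar\varepsilon}(S)$ is assumed in Assumption~\ref{ass:mixing for stationary time series}, with $a_N$ read as $\ln N$.

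\textbf{Step 4: conclusion.} Slutsky applied to the ratio of the numerator and the denominator gives the studentized normal limit. Since $\Gamma_{zp}$ and $V_{z\bar\varepsilon}(S)$ are both of exact order $(\ln N)^{-2}$, the rate is $\sqrt T/\ln N$. Consistency follows because
\[
\hat\phi_d - \phi_d = O_{\p}(\ln N/\sqrt T) = o_{\p}(1).
\]
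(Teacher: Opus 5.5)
Your proposal is correct and follows the paper's own proof. Like the paper, you rerun the unified-$a_N$ argument of Theorem~\ref{theorem:estimated_weak_consistency_aggregate} with $a_N=\ln N$, take the first-stage terms from Lemmas~\ref{lemma:numerator} and~\ref{lemma:denominator}, and get the CLT from Remark~\ref{remark:CLT boundary indices}. Your Step~1 also supplies a piece the paper skips: the lower strength bound $[(\ln N)^2(h_N-1/N)]^{-1}=O_{\p}(1)$, obtained from the stable limit of $N^{-2}\sum_i \s_i^2$ and from $\sum_j \s_j/(N\ln N)\xrightarrow{p}1$. The paper asserts this bound at $\mu=1$ without verifying it, and Lemma~\ref{lemma:share inequalities}.2 excludes $\mu=1$. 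Finally, the ``delicate point'' in your Step~3 is not actually delicate. Lemma~\ref{lemma:share inequalities}.1 gives $\sum_j|\tilde S_j|^q\le h_N^{q/2}$ deterministically, so the Rosenthal bound collapses to $C(a_N^2h_N)^{(4+\pi)/2}=O_{\p}(1)$, and no joint control of the $N^{1-q}$ term and $\sum_j\s_j$ is needed.
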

\begin{proof}
    The argument is the proof of Theorem~\ref{theorem:estimated_weak_consistency_aggregate} with the rescaling $a_N = \ln N$. The two inputs above, $a_N \Bar{u}_{St} = O_{\p}(1)$ and $a_N S'\tilde{\Lambda} = O_{\p}(1)$, let Lemmas~\ref{lemma:numerator} and~\ref{lemma:denominator} go through with this $a_N$.

    Consider the denominator. By Lemma~\ref{lemma:denominator}, the first-stage term is $O_{\p}(a_N/\sqrt{T}) + O_{\p}(1/\sqrt{N}) = O_{\p}(\ln N/\sqrt{T}) + O_{\p}(1/\sqrt{N})$. This is $o_{\p}(1)$ under $(\ln N)^2/T \to 0$. The leading denominator $\frac{a_N^2}{T} \sum_t z_t p_t$ converges to $a_N^2 \Gamma_{zp}$, which is bounded away from zero. The first-stage term therefore drops out.

    Consider the numerator. By Lemma~\ref{lemma:numerator}.(ii), the first-stage term is the factor-space term plus residuals $O_{\p}(1/\sqrt{N}) + O_{\p}(\sqrt{T}/N) + O_{\p}(1/\sqrt{T})$. The factor-space term is $O_{\p}(1)$ and enters the variance. The residuals are $o_{\p}(1)$ under $\sqrt{T}/N \to 0$ and $T \to \infty$. The central limit theorem for the leading numerator is Theorem~\ref{theorem:central limit theorem}, which holds conditional on $(S, \Lambda)$ and extends to the boundary index $\mu = 1$ with the normalization $a_N = \ln N$ by Remark~\ref{remark:CLT boundary indices}.

    Combining the two and applying Slutsky's theorem gives the stated limit. The rate is $\sqrt{T}/a_N = \sqrt{T}/\ln N$.
\end{proof}

Only the consistency condition changes from Theorem~\ref{theorem:estimated_weak_consistency_aggregate}. The requirement $N/T \to 0$ relaxes to $(\ln N)^2/T \to 0$, because logarithmic dilution barely penalizes a growing cross section. The first-stage requirement $\sqrt{T}/N \to 0$ is unchanged. So at $\mu = 1$ the estimator behaves almost like the strong regime. The one cost is the $\ln N$ inflation of the standard error.

The supply elasticity extends the same way. The estimator is \eqref{eq:estimated disaggregated estimator}, with the estimated instrument and equal weights. It shares $\hat{z}_t$ and $p_t$ with the demand estimator, so the two inputs above serve it unchanged. The only new ingredient is the equal-weights aggregation error.

That error does not depend on the regime. Lemma~\ref{lemma:hajek bound} holds for every $\mu > 0$ and gives $\beta = O_{\p}(\sqrt{h_N})$. Here $h_N = S'S = O_{\p}(1/(\ln N)^2)$ from above, so $\beta = O_{\p}(1/\ln N) = O_{\p}(a_N^{-1})$. The displacement argument of Section~\ref{subsec:equal weights} then goes through verbatim with $a_N = \ln N$, and the aggregation error is $O_{\p}(\sqrt{T}/N) + O_{\p}(1/\sqrt{N})$ as in Appendix~\ref{subsec:proof of estimated_disaggregate}.

\begin{Theorem}[Supply elasticity under logarithmic weakness] \label{theorem:estimated_log_weak_disaggregate}
    Suppose Assumptions \ref{ass:weak stationarity and idiosyncrasy} to \ref{ass:LN_moments and CLT} hold with $\mu = 1$, $(\ln N)^2/T \to 0$, and $\sqrt{T}/N \to 0$. Then, conditional on $(S, \Lambda)$, for almost every realization of the shares and loadings, the estimated-factor GIV estimator \eqref{eq:estimated disaggregated estimator} for the supply elasticity is consistent and asymptotically normal,
    \begin{equation*}
        \frac{\Gamma_{zp}}{\sqrt{V_{z \bar{v}_H}(S, \Lambda)}} \cdot \sqrt{T}\, [\hat{\phi}_s - \phi_s] \xrightarrow{d} \n(0, 1),
    \end{equation*}
    where $\Gamma_{zp}$ and $V_{z \bar{v}_H}(S, \Lambda)$ are as in Theorem~\ref{theorem:estimated_weak_consistency_disaggregate}. The studentization satisfies $\Gamma_{zp}/\sqrt{V_{z \bar{v}_H}(S, \Lambda)} = O_{\p}(1/\ln N)$, so the standardized statistic vanishes at the rate $\sqrt{T}/\ln N$.
\end{Theorem}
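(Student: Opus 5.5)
The plan is to run the unified-$a_N$ proof of Appendix~\ref{subsec:proof of estimated_disaggregate} verbatim with $a_N = \ln N$, so that only the regime-dependent inputs need to be re-verified at $\mu = 1$. These inputs are established just above: $a_N \Bar{u}_{St} = O_{\p}(1)$ from $S'S = O_{\p}(1/(\ln N)^2)$, and $a_N S'\tilde{\Lambda} = O_{\p}(1)$ from the Breiman--Lévy numerator $O_{\p}(N)$ over the denominator $O_{\p}(N \ln N)$. We also have $\beta = O_{\p}(a_N^{-1})$ from Lemma~\ref{lemma:hajek bound}, which holds for every $\mu$. I would start from the same decomposition as Step 1, writing $a = e_N/N - H$ so that $v_{Et} - v_{Ht} = \tilde{F}_t'\lambda_a + u_t'a$. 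The numerator then splits into an instrument-error piece, the weight substitution $A_{NT}$, and the interaction $B_{NT}$.

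\textbf{Denominator.} As in the proof of Theorem~\ref{theorem:estimated_log_weak}, Lemma~\ref{lemma:denominator} makes the first-stage term $O_{\p}(\ln N/\sqrt{T}) + O_{\p}(1/\sqrt{N})$, which is $o_{\p}(1)$ under $(\ln N)^2/T \to 0$. For the leading term I need $M_p$ bounded above and away from zero with $a_N = \ln N$. The upper bound comes from $a_N^2 h_N = O_{\p}(1)$. The cross term is $a_N^2|\beta|/N = O_{\p}(\ln N/N)$. The lower bound $[a_N^2(h_N - 1/N)]^{-1} = O_{\p}(1)$ is not covered by Lemma~\ref{lemma:share inequalities}.2, which excludes $\mu = 1$, so I must supply it directly. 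For this I would write $(\ln N)^2 h_N = (N^{-2}\sum_i \s_i^2)/(\sum_j \s_j/(N\ln N))^2$. The numerator converges in distribution to the a.s.\ positive stable limit, by the single-sequence part of Lemma~\ref{lemma:joint stable convergence} with tail index $1/2$. The denominator converges in probability to $1$ by the centering computation for index one. This yields an a.s.\ positive limit. The Davydov fluctuation bound of Proposition~\ref{prop:denominator} gives $O_{\p}(a_N/\sqrt{T}) = o_{\p}(1)$.

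\textbf{Numerator.} Step 3 applies Proposition~\ref{prop:first stage effect} with $Q_t = v_{Ht}$, where Lemma~\ref{lemma:v_H multiplier} supplies the moment and orthogonality hypotheses (its proof is $\mu$-free). This reduces the instrument error to $\frac{a_N}{\sqrt{T}}\sum_t z_t \bar{v}_{Ht} + o_{\p}(1)$. Step 4 treats $A_{NT}$ through its drift $a_N\sqrt{T}b_N = (\sqrt{T}/N)a_N\beta = O_{\p}(\sqrt{T}/N)$ and its fluctuation $O_{\p}(N^{-1/2})$. The fluctuation bound uses only $\|z_t\|_{q_0\mid S} \le Ch_N^{1/2}$ and $a_N\sqrt{h_N} = O_{\p}(1)$, both of which hold here. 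Step 5 bounds $B_{NT}$ through Lemma~\ref{lemma:numerator}.(i), whose pieces are $O_{\p}(N^{-1/2})$, $O(\sqrt{T}/N)$ and $O_{\p}(1/\sqrt{T})$ independently of the regime. Finally, Step 6 applies Theorem~\ref{theorem:central limit theorem} conditionally on $(S,\Lambda)$, extended to $\mu = 1$ by Remark~\ref{remark:CLT boundary indices}, with multiplier $\bar{v}_{Ht}$. The non-degeneracy $[a_N^2 V_{z\bar{v}_H}]^{-1} = O_{\p}(1)$ comes from Assumption~\ref{ass:mixing for stationary time series}.

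\textbf{Combining.} Slutsky's theorem gives the studentized limit. The rate statement follows because $\Gamma_{zp}$ and $V_{z\bar{v}_H}$ are both of order $a_N^{-2} = (\ln N)^{-2}$, so the studentization is $O_{\p}(1/\ln N)$.

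I expect the main obstacle to be the lower bound on $a_N^2\tilde S'\Omega\tilde S$ at the boundary. Lemma~\ref{lemma:share inequalities}.2 excludes $\mu = 1$, and the boundary index-one stable law makes $\sum_j \s_j/(N\ln N) \to 1$ only in probability, so the joint behaviour of the two size sums has to be handled with care. A secondary check is that the moment bound in Remark~\ref{remark:CLT boundary indices} really covers the multiplier $\bar{v}_{Ht}$, since that multiplier depends on $\Lambda$. I would argue that the remark's verification concerns only the share factor and is therefore unaffected by conditioning on $\Lambda$.
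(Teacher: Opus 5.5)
Your proposal is correct and follows the paper's proof. Like the paper, you rerun the unified-$a_N$ argument of Appendix~\ref{subsec:proof of estimated_disaggregate} with $a_N=\ln N$, taking the instrument error from Proposition~\ref{prop:first stage effect} at $Q_t=v_{Ht}$, the aggregation error from Steps 4--5, and the denominator and central limit theorem from Theorem~\ref{theorem:estimated_log_weak} and Remark~\ref{remark:CLT boundary indices}.

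Your explicit lower bound on $(\ln N)^2(h_N-1/N)$, via the ratio $(N^{-2}\sum_i s_i^2)/\bigl(\sum_j s_j/(N\ln N)\bigr)^2$, supplies a step the paper only asserts, since Lemma~\ref{lemma:share inequalities}.2 excludes $\mu=1$. Your worry about the joint behaviour of the two sums is unnecessary: the denominator converges in probability to the constant $1$, so Slutsky's theorem suffices.
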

\begin{proof}
    The argument is the proof of Theorem~\ref{theorem:estimated_weak_consistency_disaggregate} with the rescaling $a_N = \ln N$. The instrument error is Proposition~\ref{prop:first stage effect} at $Q_t = v_{Ht}$, which needs $a_N^2/T = (\ln N)^2/T \to 0$. The aggregation error is Steps 4 and 5 of Appendix~\ref{subsec:proof of estimated_disaggregate}, whose bounds $O_{\p}(\sqrt{T}/N) + O_{\p}(1/\sqrt{N})$ do not involve the regime. The denominator and the central limit theorem are those of the proof of Theorem~\ref{theorem:estimated_log_weak}.
\end{proof}

Equal weights cost nothing at the boundary. The moment violation scales with $\sqrt{h_N}$, and at $\mu = 1$ that is exactly the strength $1/\ln N$ of the instrument, so the cancellation of Section~\ref{subsec:equal weights} is preserved.

\subsection{The case $\mu = 2$}

Start with the denominator. The mean is now finite,
\begin{equation*}
    \e[\s] = \frac{\mu}{\mu - 1} = 2.
\end{equation*}
The sizes $\s_i$ are independent with a finite mean. Kolmogorov's Law of Large Numbers gives
\begin{equation*}
    \frac{1}{N} \sum_{j=1}^N \s_j \xrightarrow{\text{a.s}} \e[\s],
\end{equation*}
so $\sum_j \s_j = N \e[\s] (1 + o_{\p}(1)) = O_{\p}(N)$.

Now the numerator. The squared size $\s_i^2$ has tail
\begin{equation*}
    \p(\s_i^2 > s) = \p(\s_i > \sqrt{s}) = s^{-\frac{\mu}{2}} = s^{-1}.
\end{equation*}
The tail index is exactly one. This is the boundary of the stable law, like the denominator in the $\mu = 1$ case. The second moment of $\s$ is infinite,
\begin{equation*}
    \e[\s^2] = \int_1^{\infty} s^2 \, \mu s^{-\mu - 1} \, ds = \int_1^{\infty} s^{-1} \, ds = \infty,
\end{equation*}
and again it diverges only as a logarithm. Lévy's Theorem \ref{theorem:Levy's Theorem} applies to $\{ \s_i^2 \}$ with index $\alpha = 1$ and $\kappa_n = N$. The cutoff $\s_i^2 \leq \kappa_n$ is the same as $\s_i \leq \sqrt{N}$, so the centering is
\begin{equation*}
    b_n = N \, \e[\s^2 \cdot 1_{(\s^2 \leq \kappa_n)}] = N \int_1^{\sqrt{N}} s^2 \, \mu s^{-\mu - 1} \, ds = N \int_1^{\sqrt{N}} 2 s^{-1} \, ds = N \ln N.
\end{equation*}
As in the $\mu = 1$ case, $b_n / \kappa_n = \ln N \to \infty$, so the centering dominates. The sum has the order of its centering,
\begin{equation*}
    \sum_{i=1}^N \s_i^2 = N \ln N \, (1 + o_{\p}(1)) = O_{\p}(N \ln N).
\end{equation*}

Put the two together.
\begin{equation*}
    S'S = \frac{\sum_{i=1}^N \s_i^2}{\left[ \sum_{j=1}^N \s_j \right]^2} = \frac{O_{\p}(N \ln N)}{N^2 \e[\s]^2 (1 + o_{\p}(1))} = O_{\p}\left( \frac{\ln N}{N} \right).
\end{equation*}

Thus, at $\mu =2$, the instrument, $z_t = O_{\p}(\sqrt{\ln N / N})$. It dilutes at a slightly slower rate than when $\mu > 2$.

This dilution carries over to the estimator. I extend both Theorem~\ref{theorem:estimated_weak_consistency_aggregate} and the Anderson--Rubin test of Proposition~\ref{prop:weakness robust test_aggregate} to $\mu = 2$. I do the same for their supply counterparts, Theorem~\ref{theorem:estimated_weak_consistency_disaggregate} and Proposition~\ref{prop:weakness robust test_disaggregate}.

First consider the share-weighted loading $S'\tilde{\Lambda}$. The numerator $\sum_j \s_j \tilde{\Blambda}_j$ is the Breiman--Lévy object from the proof of Proposition~\ref{prop:behavior of S times lambda}. At $\mu = 2$ the product $\s_j X_j(g)$ has tail index two and is symmetric around zero. Its second moment is infinite by a logarithm, so the sum lies in the domain of attraction of the normal law, with norming $\sqrt{N \ln N}$ rather than $\sqrt{N}$. So
\begin{equation*}
    \sum_{j=1}^N \s_j \tilde{\Blambda}_j = O_{\p}(\sqrt{N \ln N}).
\end{equation*}
The denominator is $\sum_j \s_j = O_{\p}(N)$ from above. Hence
\begin{equation*}
    S'\tilde{\Lambda} = \frac{O_{\p}(\sqrt{N \ln N})}{O_{\p}(N)} = O_{\p}\left( \sqrt{\frac{\ln N}{N}} \right).
\end{equation*}

So the share-weighted loading and the instrument both scale as $\sqrt{\ln N / N}$. Set $a_N = \sqrt{N/\ln N}$. Then $a_N \Bar{u}_{St} = O_{\p}(1)$ and $a_N S'\tilde{\Lambda} = O_{\p}(1)$.

\begin{Theorem}[GIV with estimated factors at the boundary $\mu = 2$] \label{theorem:estimated_mu2}
    Suppose Assumptions \ref{ass:weak stationarity and idiosyncrasy} to \ref{ass:LN_moments and CLT} hold with $\mu = 2$, $N/(T \ln N) \to 0$, and $\sqrt{T}/N \to 0$. Then, conditional on $S$, for almost every realization of the shares, the GIV estimator is consistent and asymptotically normal,
    \begin{equation*}
        \frac{\Gamma_{zp}}{\sqrt{V_{z \bar{\varepsilon}}(S)}} \cdot \sqrt{T}\, [\hat{\phi}_d - \phi_d] \xrightarrow{d} \n(0, 1),
    \end{equation*}
    where $\Gamma_{zp}$ and $V_{z \bar{\varepsilon}}(S)$ are as in Theorem~\ref{theorem:estimated_weak_consistency_aggregate}. The studentization satisfies $\Gamma_{zp}/\sqrt{V_{z\bar{\varepsilon}}(S)} = O_{\p}(\sqrt{\ln N / N})$, so the standardized statistic vanishes at the rate $\sqrt{T \ln N / N}$.
\end{Theorem}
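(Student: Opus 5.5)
The argument follows the proof of Theorem~\ref{theorem:estimated_weak_consistency_aggregate}, now with the log-corrected rescaling $a_N = \sqrt{N/\ln N}$, exactly as Theorem~\ref{theorem:estimated_log_weak} did at $\mu = 1$. Two inputs were established just above the statement: $a_N \Bar{u}_{St} = O_{\p}(1)$ and $a_N S'\tilde{\Lambda} = O_{\p}(1)$. These are the only places where the regime enters Lemmas~\ref{lemma:numerator} and~\ref{lemma:denominator}. Both lemmas, and therefore Proposition~\ref{prop:first stage effect}, can then be rerun verbatim with this $a_N$. That gives the decomposition
\[
\frac{a_N}{\sqrt T}\sum_t \hat z_t \varepsilon_t = \frac{a_N}{\sqrt T}\sum_t z_t\bar\varepsilon_t + o_{\p}(1),
\qquad
\frac{a_N^2}{T}\sum_t \hat z_t p_t = \frac{a_N^2}{T}\sum_t z_t p_t + o_{\p}(1).
\]
This needs $a_N^2/T = N/(T\ln N) \to 0$, which is the stated growth condition, together with $\sqrt T/N \to 0$ for the remainders $O_{\p}(\sqrt T/N)$ and $O_{\p}(1/\sqrt N)$.

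For the denominator I would invoke Proposition~\ref{prop:denominator}. Its proof uses only three facts, each of which has to hold at $\mu = 2$.
\begin{itemize}
\item[(a)] The Rosenthal bound $\|z_t\|_{q_0\mid S} \le C h_N^{1/2}$, which is deterministic in $S$.
\item[(b)] The Davydov fluctuation bound, which gives a remainder of order $a_N^2 h_N^{1/2}/\sqrt T = O_{\p}(a_N/\sqrt T) = o_{\p}(1)$.
\item[(c)] The two-sided bound on $a_N^2 \tilde S'\Omega\tilde S$.
\end{itemize}
The upper half of (c) follows from $h_N = O_{\p}(\ln N/N)$. The cross term $a_N^2|\beta|/N = O_{\p}(a_N/N)$ vanishes by Lemma~\ref{lemma:hajek bound}, which holds for every $\mu$.

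The lower bound $[a_N^2(h_N - 1/N)]^{-1} = O_{\p}(1)$ is the analogue of Lemma~\ref{lemma:share inequalities}.2 at the boundary. I would prove it directly rather than cite it. At $\mu = 2$, the sum $\sum_i \s_i^2$ has tail index one with $b_n/\kappa_n = \ln N \to \infty$, so $\sum_i \s_i^2/(N\ln N) \xrightarrow{p} 1$. The strong law gives $N^{-1}\sum_j \s_j \to \e\s = 2$. Hence $a_N^2 h_N \xrightarrow{p} 1/4 > 0$, while $a_N^2/N = 1/\ln N \to 0$. This yields $M_p^{-1} = O_{\p}(1)$ for $\phi_d \ne \phi_s$.

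For the numerator, the multiplier $\bar\varepsilon_t$ satisfies the conditions of Theorem~\ref{theorem:central limit theorem}. These are its moments via the $c_r$ inequality and its conditional orthogonality to $\Bar u_{jt}$, exactly as in Step 4 of Appendix~\ref{subsec:proof of estimated_theorems}. The CLT at the boundary index is supplied by Remark~\ref{remark:CLT boundary indices}. The only step there that depends on $\mu$ is the uniform $(4+\pi)$ moment bound, and the remark already verifies that
\[
a_N^{4+\pi}\sqrt{(\sum_j S_j^2)^{4+\pi} + \sum_j S_j^{q}} = O_{\p}(1)
\]
with $a_N = \sqrt{N/\ln N}$. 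Since the statistic is self-normalized, I would use the non-degeneracy $[a_N^2 V_{z\bar\varepsilon}(S)]^{-1} = O_{\p}(1)$ from Assumption~\ref{ass:mixing for stationary time series}. Although that assumption is stated with the earlier $a_N$, it is read here with the boundary rescaling, as in Theorem~\ref{theorem:estimated_log_weak}. This gives the conditional normal limit of the numerator divided by $a_N\sqrt{V_{z\bar\varepsilon}(S)}$.

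Slutsky's theorem applied to the ratio then gives the stated limit, and dominated convergence removes the conditioning. Finally, both $\Gamma_{zp}$ and $V_{z\bar\varepsilon}(S)$ are of exact order $a_N^{-2} = \ln N/N$. The studentization is therefore $O_{\p}(\sqrt{\ln N/N})$, and the rate is $\sqrt{T\ln N/N}$.

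I expect the main obstacle to be the denominator lower bound, equivalently the boundary version of Lemma~\ref{lemma:share inequalities}.2. It requires upgrading the $O_{\p}(N\ln N)$ statement for $\sum_i\s_i^2$ into a two-sided one. My plan is to show that the centered stable fluctuation, which is $O_{\p}(N)$, is $o_{\p}(b_n)$ because $b_n/\kappa_n = \ln N$. Everything else transfers mechanically from the $\mu > 2$ proof once $a_N$ is replaced.
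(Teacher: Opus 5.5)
Your proposal is correct and follows the paper's route. You rerun the proof of Theorem~\ref{theorem:estimated_weak_consistency_aggregate} with $a_N=\sqrt{N/\ln N}$, use Lemmas~\ref{lemma:numerator} and~\ref{lemma:denominator} (via Proposition~\ref{prop:first stage effect}) for the first-stage terms, and use Remark~\ref{remark:CLT boundary indices} for the numerator CLT. Your direct check that $a_N^2 h_N \xrightarrow{p} 1/4$ also fills in a boundary step that the paper only asserts when it calls the denominator "bounded away from zero", since Lemma~\ref{lemma:share inequalities}.2 and Proposition~\ref{prop:denominator} are stated only for $\mu\notin\{1,2\}$.
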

\begin{proof}
    The argument is the proof of Theorem~\ref{theorem:estimated_weak_consistency_aggregate} with the rescaling $a_N = \sqrt{N/\ln N}$. The two inputs above let Lemmas~\ref{lemma:numerator} and~\ref{lemma:denominator} go through with this $a_N$.

    Consider the denominator. By Lemma~\ref{lemma:denominator}, the first-stage term is $O_{\p}(a_N/\sqrt{T}) + O_{\p}(1/\sqrt{N}) = O_{\p}(\sqrt{N/(T \ln N)}) + O_{\p}(1/\sqrt{N})$. This is $o_{\p}(1)$ under $N/(T \ln N) \to 0$. The leading denominator $\frac{a_N^2}{T} \sum_t z_t p_t$ converges to $a_N^2 \Gamma_{zp}$, which is bounded away from zero. The first-stage term therefore drops out.

    Consider the numerator. By Lemma~\ref{lemma:numerator}.(ii), the first-stage term is the factor-space term plus residuals $O_{\p}(1/\sqrt{N}) + O_{\p}(\sqrt{T}/N) + O_{\p}(1/\sqrt{T})$. The factor-space term is $O_{\p}(1)$ and enters the variance. The residuals are $o_{\p}(1)$ under $\sqrt{T}/N \to 0$ and $T \to \infty$. The central limit theorem for the leading numerator is Theorem~\ref{theorem:central limit theorem}, which holds conditional on $(S, \Lambda)$ and extends to the boundary index $\mu = 2$ with the normalization $a_N = \sqrt{N/\ln N}$ by Remark~\ref{remark:CLT boundary indices}.

    Combining the two and applying Slutsky's theorem gives the stated limit. The rate is $\sqrt{T}/a_N = \sqrt{T \ln N / N}$.
\end{proof}

The supply elasticity extends in the same way as at $\mu = 1$. The two inputs $a_N \Bar{u}_{St} = O_{\p}(1)$ and $a_N S'\tilde{\Lambda} = O_{\p}(1)$ serve both parameters. For the aggregation error, Lemma~\ref{lemma:hajek bound} gives $\beta = O_{\p}(\sqrt{h_N}) = O_{\p}(\sqrt{\ln N / N}) = O_{\p}(a_N^{-1})$, so the bounds $O_{\p}(\sqrt{T}/N) + O_{\p}(1/\sqrt{N})$ of Appendix~\ref{subsec:proof of estimated_disaggregate} hold at the boundary as well.

\begin{Theorem}[Supply elasticity at the boundary $\mu = 2$] \label{theorem:estimated_mu2_disaggregate}
    Suppose Assumptions \ref{ass:weak stationarity and idiosyncrasy} to \ref{ass:LN_moments and CLT} hold with $\mu = 2$, $N/(T \ln N) \to 0$, and $\sqrt{T}/N \to 0$. Then, conditional on $(S, \Lambda)$, for almost every realization of the shares and loadings, the estimated-factor GIV estimator \eqref{eq:estimated disaggregated estimator} for the supply elasticity is consistent and asymptotically normal,
    \begin{equation*}
        \frac{\Gamma_{zp}}{\sqrt{V_{z \bar{v}_H}(S, \Lambda)}} \cdot \sqrt{T}\, [\hat{\phi}_s - \phi_s] \xrightarrow{d} \n(0, 1),
    \end{equation*}
    where $\Gamma_{zp}$ and $V_{z \bar{v}_H}(S, \Lambda)$ are as in Theorem~\ref{theorem:estimated_weak_consistency_disaggregate}. The studentization satisfies $\Gamma_{zp}/\sqrt{V_{z \bar{v}_H}(S, \Lambda)} = O_{\p}(\sqrt{\ln N / N})$, so the standardized statistic vanishes at the rate $\sqrt{T \ln N / N}$.
\end{Theorem}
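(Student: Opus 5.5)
The argument is the proof of Theorem~\ref{theorem:estimated_weak_consistency_disaggregate}, as laid out in Appendix~\ref{subsec:proof of estimated_disaggregate}, run with the boundary rescaling $a_N = \sqrt{N/\ln N}$. It parallels the demand-side proof of Theorem~\ref{theorem:estimated_mu2}. The instrument $\hat z_t$ and the price $p_t$ are shared with the demand estimator, so the two inputs derived above carry over unchanged: $a_N \Bar{u}_{St} = O_{\p}(1)$ and $a_N S'\tilde{\Lambda} = O_{\p}(1)$. With these in hand, Lemmas~\ref{lemma:numerator} and~\ref{lemma:denominator} hold verbatim at this $a_N$.

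I would carry out the steps in the following order.

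\textbf{Denominator.} Lemma~\ref{lemma:denominator} bounds the first-stage term by $O_{\p}(\sqrt{N/(T\ln N)}) + O_{\p}(N^{-1/2})$. This is $o_{\p}(1)$ under the assumption $N/(T\ln N)\to 0$, which is exactly $a_N^2/T \to 0$. The leading term is then handled by Proposition~\ref{prop:denominator}, which requires $[a_N^2\Gamma_{zp}]^{-1} = O_{\p}(1)$ at $\mu=2$. For that I would check directly that $a_N^2(h_N - 1/N) = (N/\ln N)\, h_N - 1/\ln N$ has a strictly positive limit. The centered Lévy result for $\sum_i \s_i^2$ at index one gives $\sum_i \s_i^2 = N\ln N\,(1+o_{\p}(1))$, and the strong law gives $\sum_j \s_j = 2N(1+o_{\p}(1))$. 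Together these yield $(N/\ln N)\,h_N \to 1/4$ in probability.

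\textbf{Numerator, instrument error.} Proposition~\ref{prop:first stage effect} applies at $Q_t = v_{Ht}$, with Lemma~\ref{lemma:v_H multiplier} supplying the moment and orthogonality conditions. It replaces $v_{Ht}$ by $\bar v_{Ht}$, with residuals $O_{\p}(N^{-1/2} + \sqrt T/N + T^{-1/2})$.

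\textbf{Numerator, aggregation error.} Steps 4 and 5 of that appendix go through unchanged. The drift is controlled by Lemma~\ref{lemma:hajek bound}, which is valid for every $\mu$: it gives $\beta = O_{\p}(\sqrt{h_N}) = O_{\p}(a_N^{-1})$, so $a_N\sqrt T\,b_N = O_{\p}(\sqrt T/N)$. The noise is $O_{\p}(N^{-1/2})$ by the same Davydov–Rosenthal bound, now using $a_N\sqrt{h_N} = O_{\p}(1)$. The interaction term $B_{NT}$ never used $\mu$ at all.

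\textbf{CLT.} Theorem~\ref{theorem:central limit theorem} is applied to $\bar v_{Ht}\,\Bar u_{St}$ conditional on $(S,\Lambda)$, extended to $\mu=2$ by Remark~\ref{remark:CLT boundary indices}. Because the statistic is self-normalized, $a_N$ cancels against $\sqrt{a_N^2 V_{z\bar v_H}}$. Slutsky then gives the stated limit. Since $\Gamma_{zp}$ and $V_{z\bar v_H}$ are both of order $a_N^{-2} = \ln N/N$, the studentization is $O_{\p}(\sqrt{\ln N/N})$ and the rate is $\sqrt{T\ln N/N}$.

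\textbf{Main obstacle.} I expect the only delicate step to be the moment bound in Remark~\ref{remark:CLT boundary indices}, applied with the multiplier $v_{Ht}$. There one needs
\[
a_N^{4+\pi}\Bigl(\bigl(\textstyle\sum_j S_j^2\bigr)^{4+\pi}+\sum_j S_j^{8+2\pi}\Bigr)^{1/2} = O_{\p}(1)
\]
together with uniform $(8+2\pi)$-moments of $v_{Ht}$. Both are already available: the first from the remark, the second from Lemma~\ref{lemma:v_H multiplier}, whose bounds do not depend on $\mu$. So I expect this step to reduce to citing those two results. The non-degeneracy of $a_N^2 V_{z\bar v_H}$ at this $a_N$ is taken from Assumption~\ref{ass:mixing for stationary time series}, read with the boundary normalization.
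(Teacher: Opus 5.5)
Your proposal is correct and is essentially the paper's own argument. You rerun the proof in Appendix~\ref{subsec:proof of estimated_disaggregate} with $a_N=\sqrt{N/\ln N}$, using Proposition~\ref{prop:first stage effect} at $Q_t=v_{Ht}$ for the instrument error, Lemma~\ref{lemma:hajek bound} (with $h_N=O_{\p}(\ln N/N)$) for the aggregation error, and Remark~\ref{remark:CLT boundary indices} for the central limit theorem. Your direct check that $a_N^2(h_N-1/N)\to 1/4$ in probability is a welcome addition: the paper only asserts that the denominator is bounded away from zero at $\mu=2$, even though Proposition~\ref{prop:denominator} and Lemma~\ref{lemma:share inequalities}.2 formally exclude that index.
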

\begin{proof}
    The argument is the proof of Theorem~\ref{theorem:estimated_weak_consistency_disaggregate} with the rescaling $a_N = \sqrt{N/\ln N}$. The instrument error is Proposition~\ref{prop:first stage effect} at $Q_t = v_{Ht}$, which needs $a_N^2/T = N/(T \ln N) \to 0$. The aggregation error is bounded above. The denominator and the central limit theorem are those of the proof of Theorem~\ref{theorem:estimated_mu2}.
\end{proof}

At $\mu = 2$ the concentration parameter has only the slowly-divergent floor $\ln N$. As at $\mu > 2$, I use the Anderson--Rubin test for inference, evaluated at the null and built from the null-imposed residual $\varepsilon_t(\phi_d^0) = d_t - \phi_d^0 p_t$. The statistic is \eqref{eq:AR statistic_aggregate} specialized to $a_N^2 = N/\ln N$,
\begin{equation*}
    \text{AR}(\phi_d^0) = \frac{N}{T \ln N} \left( \sum_t \hat{z}_t (y_{St} - \phi_d^0 p_{t}) \right)^2 \frac{1}{\hat{V}_{z \bar{\varepsilon}}(\phi_d^0)}.
\end{equation*}
The $a_N^2 = N/\ln N$ normalization cancels between the numerator and $\hat{V}_{z \bar{\varepsilon}}(\phi_d^0)$, so the statistic is the same self-normalized object as at $\mu > 2$.

\begin{Prop}[Weakness-robust test at the boundary $\mu = 2$] \label{prop:AR_mu2}
    Suppose Assumptions \ref{ass:weak stationarity and idiosyncrasy} to \ref{ass:LN_moments and CLT} hold with $\mu = 2$ and $\sqrt{T}/N \to 0$. Then, for any trajectory of $N/T$, under the null $H_0: \phi_d = \phi_d^0$,
    \begin{equation*}
        \text{AR}(\phi_d^0) \xrightarrow{d} \chi^2_1.
    \end{equation*}
\end{Prop}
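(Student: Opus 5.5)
The plan follows the proof of Proposition~\ref{prop:weakness robust test_aggregate} (Appendix~\ref{subsec:proof of AR_aggregate}). The only change is the rescaling $a_N = \sqrt{N/\ln N}$ of the boundary case, which replaces $a_N = \sqrt N$. I split the statistic into a numerator and a variance estimator. Under $H_0$ the residual $d_t - \phi_d^0 p_t = \varepsilon_t$ is exact, so $\hat\phi_d$ never enters. Hence the condition $N/(T\ln N)\to 0$ of Theorem~\ref{theorem:estimated_mu2} is never consumed.

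\textbf{Numerator.} The numerator is
\begin{equation*}
\sqrt{N/(T\ln N)}\,\sum_t \hat z_t\varepsilon_t = \frac{a_N}{\sqrt T}\sum_t \hat z_t \varepsilon_t.
\end{equation*}
I apply Lemma~\ref{lemma:numerator}.(ii) with $Q_t=\varepsilon_t$. Its inputs at $\mu=2$ are $a_N\bar u_{St}=O_{\p}(1)$ and $a_N S'\tilde\Lambda = O_{\p}(1)$, both established just before Theorem~\ref{theorem:estimated_mu2}. The lemma's residuals are $O_{\p}(N^{-1/2}) + O_{\p}(\sqrt T/N) + O_{\p}(T^{-1/2})$. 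They vanish under $\sqrt T/N\to 0$ alone.

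The factor-space term then converts the multiplier to $\bar\varepsilon_t$. This requires only the LLN for $T^{-1}\sum_t \tilde F_t'\varepsilon_t$, and that LLN does not involve $N/T$. It gives
\begin{equation*}
\frac{a_N}{\sqrt T}\sum_t z_t\bar\varepsilon_t + o_{\p}(1).
\end{equation*}
Note that I use only the numerator half of Proposition~\ref{prop:first stage effect}. The denominator half needs $a_N^2/T\to0$, which is not available here. This is the point where Anderson--Rubin is free of $N/T$.

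Theorem~\ref{theorem:central limit theorem}, extended to $\mu=2$ by Remark~\ref{remark:CLT boundary indices}, gives conditional normality with variance $a_N^2 V_{z\bar\varepsilon}(S)$. The multiplier conditions are checked as in Step 4 of Appendix~\ref{subsec:proof of estimated_theorems}. The non-degeneracy $[a_N^2V_{z\bar\varepsilon}(S)]^{-1}=O_{\p}(1)$ comes from Assumption~\ref{ass:mixing for stationary time series}. Squaring gives a limit of $a_N^2V_{z\bar\varepsilon}(S)\,\chi^2_1$.

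\textbf{Variance.} I retrace Appendix~\ref{subsec:proof or HAC_aggregate} with the null-imposed residual, which carries no estimation error. Three substitutions are needed:
\begin{itemize}
\item the factor reduction $\hat F_t=\hat H^{-1}\tilde F_t+O_{\p}(N^{-1/2})$;
\item the recentering $\tilde\varepsilon_t(\phi_d^0)=\bar\varepsilon_t+o_{\p}(1)$;
\item the instrument reduction $\hat z_t = z_t + o_{\p}(a_N^{-1})$.
\end{itemize}
The last one needs care at this scaling. From \eqref{eq:influence function of estimator}, $S'(\hat C_t-\tilde C_t)$ consists of three pieces:
\begin{itemize}
\item $\tilde F_t'O_{\p}(T^{-1/2}a_N^{-1})$, from the CLT for $\sum_m\tilde F_m\bar u_{Sm}$;
\item $S'\tilde\Lambda\,\zeta_t = O_{\p}(a_N^{-1}N^{-1/2})$;
\item a remainder of order $\delta_{NT}^{-2}$.
\end{itemize}
Multiplied by $a_N$, the first two are clearly $o_{\p}(1)$. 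The remainder becomes $a_N/\min(N,T)$. This is $o(1)$ because $a_N\le\sqrt N$ and $\sqrt T/N\to0$ forces $T\to\infty$ jointly with $N$, so both $a_N/N\to0$ and $a_N/T\to 0$ as $N,T\to\infty$.

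Once these substitutions are in place, the moment bound of Remark~\ref{remark:CLT boundary indices} at $\mu=2$ and Condition ND let Proposition 4.1 of \citet{Kojevnikov2021LimitVariables} apply conditionally on $S$. The bandwidth conditions are as before. This yields $\hat V_{z\bar\varepsilon}(\phi_d^0) - a_N^2V_{z\bar\varepsilon}(S)\xrightarrow{p}0$.

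\textbf{Combining and main obstacle.} The factors $a_N^2$ cancel in the ratio. Slutsky's theorem together with the lower bound on $a_N^2V_{z\bar\varepsilon}(S)$ gives $\chi^2_1$ conditionally. Dominated convergence then removes the conditioning, as in Theorem~\ref{theorem:central limit theorem}.

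I expect the main obstacle to be the HAC step at the boundary. The fourth-moment bound on the rescaled summand $a_Nz_t\bar\varepsilon_t$ must hold uniformly when $\sum_j\s_j^2$ carries the $N\ln N$ centering. I would first import the moment display of Remark~\ref{remark:CLT boundary indices} with the $\ln N$ factor absorbed into $a_N$. If that is not uniform enough for the HAC argument, I would instead use the weaker $(2+\pi)$-moment Davydov summation of Appendix~\ref{subsec:proof of HAC_disaggregate}.
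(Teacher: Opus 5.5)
Your architecture is the paper's. The numerator goes through Lemma~\ref{lemma:numerator}.(ii) and the boundary CLT of Remark~\ref{remark:CLT boundary indices}, the variance goes through the HAC argument with the exact null residual, and Slutsky finishes. Your observation that only the numerator half of Proposition~\ref{prop:first stage effect} is used, so that $a_N^2/T\to 0$ is never consumed, is correct. It is exactly why the test escapes the condition $N/(T\ln N)\to 0$.

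The step that fails is your verification of the instrument reduction $\hat z_t = z_t + o_{\p}(a_N^{-1})$ in the variance estimator. Let $R_t$ denote the $O_{\p}(\delta_{NT}^{-2})$ remainder of \eqref{eq:influence function of estimator}. You bound $S'R_t$ by $O_{\p}(\delta_{NT}^{-2})$, obtain $a_N/\min(N,T)$ after rescaling, and then assert $a_N/T\to 0$ because $N,T\to\infty$. That does not follow. The condition $a_N/T\to 0$ is the rate condition $N/(T^2\ln N)\to 0$, whereas $\sqrt T/N\to 0$ only bounds $T$ from above. With $N=T^3$, for instance, $\sqrt T/N\to 0$ but $a_N/T\asymp\sqrt{T/\ln T}\to\infty$. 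The proposition is claimed for every trajectory of $N/T$, yet your argument as written covers only $N=o(T^2\ln N)$.

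Your argument is also internally inconsistent. In the numerator you accept the remainder $O_{\p}(\sqrt T/N)+O_{\p}(T^{-1/2})$ of Lemma~\ref{lemma:numerator}. That bound is right only if the leading factor $a_N$ is absorbed, i.e.\ if $a_N S'R_t=O_{\p}(\delta_{NT}^{-2})$. With your cruder bound, the same numerator remainder would be $O_{\p}(a_N\sqrt T/\min(N,T))$, which does not vanish once $N/(T\ln N)\to\infty$.

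To close the gap, you can either add $N/(T^2\ln N)\to 0$ to the hypotheses, or reopen $R_t$ and show $a_N S'R_t=o_{\p}(1)$ with no restriction on $N/T$. The latter is plausible. Every piece of $R_t$, once premultiplied by $S'$, enters through $\bar u_{S\cdot}$, $S'\tilde\Lambda$, $D_N\Omega\tilde S$ or $S'(\hat\Lambda-\tilde\Lambda\hat H)$, and each of these carries the instrument's strength factor $a_N^{-1}$.

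The paper itself only asserts the reduction ``under $\sqrt T/N\to 0$'', so the missing check is inherited. The reason you supply for it, however, is false.
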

\begin{proof}
    The argument is the proof of Proposition~\ref{prop:weakness robust test_aggregate} with $a_N = \sqrt{N/\ln N}$. Under $H_0$ the numerator $\frac{a_N}{\sqrt{T}} \sum_t \hat{z}_t \varepsilon_t$ converges to $\n(0, a_N^2 V_{z\bar{\varepsilon}}(S))$ by the proof of Theorem~\ref{theorem:estimated_mu2}, needing only $\sqrt{T}/N \to 0$. The null-imposed residual is exact, so the variance estimator is consistent without the $N/(T \ln N) \to 0$ that Theorem~\ref{theorem:estimated_mu2} needs for its plug-in residual. The cancellation of $a_N$ and Slutsky's theorem give the limit.
\end{proof}

For the estimator, Theorem~\ref{theorem:estimated_mu2}, the requirement $N/T \to 0$ of Theorem~\ref{theorem:estimated_weak_consistency_aggregate} relaxes to $N/(T \ln N) \to 0$, since the instrument dilutes a factor $\sqrt{\ln N}$ more slowly than at $\mu > 2$. The first-stage requirement $\sqrt{T}/N \to 0$ is unchanged. The Anderson--Rubin test, Proposition~\ref{prop:AR_mu2}, places no condition on $N/T$ at all, exactly as at $\mu > 2$.

The supply elasticity admits the same test. Under $H_0: \phi_s = \phi_s^0$ the equal-weights residual $y_{Et} - \phi_s^0 p_t = v_{Et}$ is exact, so the null-imposed statistic again avoids the first stage. The statistic is \eqref{eq:AR statistic_disaggregate} specialized to $a_N^2 = N/\ln N$,
\begin{equation*}
    \text{AR}_H(\phi_s^0) = \frac{N}{T \ln N} \left( \sum_t \hat{z}_t (y_{Et} - \phi_s^0 p_{t}) \right)^2 \frac{1}{\hat{V}_{z v_H}(\phi_s^0)},
\end{equation*}
where $\hat{V}_{z v_H}(\phi_s^0)$ is the estimator \eqref{eq:HAC estimator_disaggregated_variance} built from $\tilde{v}_{Et}(\phi_s^0)$ with $a_N^2 = N/\ln N$. The rate cancels between the numerator and the variance estimator exactly as in \eqref{eq:AR statistic_aggregate}.

\begin{Prop}[Weakness-robust test for the supply elasticity at $\mu = 2$] \label{prop:AR_H_mu2}
    Suppose Assumptions \ref{ass:weak stationarity and idiosyncrasy} to \ref{ass:LN_moments and CLT} hold with $\mu = 2$ and $\sqrt{T}/N \to 0$. Then, for any trajectory of $N/T$, under the null $H_0: \phi_s = \phi_s^0$,
    \begin{equation*}
        \text{AR}_H(\phi_s^0) \xrightarrow{d} \chi^2_1.
    \end{equation*}
\end{Prop}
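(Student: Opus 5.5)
The argument is the proof of Proposition~\ref{prop:weakness robust test_disaggregate} (Appendix~\ref{subsec:proof of AR_disaggregate}) run with the boundary rescaling $a_N = \sqrt{N/\ln N}$. Under $H_0$, the null-imposed equal-weights residual is exact, $y_{Et} - \phi_s^0 p_t = v_{Et}$. The statistic is therefore a function of the data and $\phi_s^0$ alone, and $\hat\phi_s$ never enters. The factor $N/(T\ln N) = a_N^2/T$ in $\text{AR}_H$ cancels against the $a_N^2$ carried by $\hat V_{z v_H}(\phi_s^0)$. I would prove two things and combine them by Slutsky.

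\textbf{Numerator.} I would write $\frac{a_N}{\sqrt T}\sum_t \hat z_t v_{Et}$ and decompose it exactly as in Step 1 of Appendix~\ref{subsec:proof of estimated_disaggregate}: instrument error, weight substitution $A_{NT}$, and interaction $B_{NT}$.

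\begin{itemize}
\item The instrument error is handled by Proposition~\ref{prop:first stage effect} at $Q_t = v_{Ht}$. This replaces $v_{Ht}$ by $\bar v_{Ht}$. Its numerator statement uses only Lemma~\ref{lemma:numerator}.(ii) and the two inputs $a_N \bar u_{St} = O_{\p}(1)$ and $a_N S'\tilde\Lambda = O_{\p}(1)$ established above for $\mu=2$. That needs $\sqrt T/N \to 0$ and $T\to\infty$ only; the condition $a_N^2/T\to 0$ belongs to the denominator part, which the AR statistic does not use.
\item The aggregation error is handled by Steps 4 and 5. They give $A_{NT} = O_{\p}(\sqrt T/N) + O_{\p}(N^{-1/2})$, using Lemma~\ref{lemma:hajek bound} with $\beta = O_{\p}(\sqrt{h_N}) = O_{\p}(a_N^{-1})$ at $h_N = O_{\p}(\ln N/N)$. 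They also give $B_{NT} = o_{\p}(1)$. Neither bound uses $N/T$.
\item The leading term $\frac{a_N}{\sqrt T}\sum_t z_t \bar v_{Ht}$ is asymptotically $\n(0, a_N^2 V_{z\bar v_H}(S,\Lambda))$ conditional on $(S,\Lambda)$. This follows from Theorem~\ref{theorem:central limit theorem} with multiplier $\bar v_{Ht}$ (via Lemma~\ref{lemma:v_H multiplier} and the $c_r$ inequality), extended to $\mu=2$ by Remark~\ref{remark:CLT boundary indices}.
\end{itemize}

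\textbf{Variance.} I would follow Appendix~\ref{subsec:proof of HAC_disaggregate}. The only step there that consumed $N/T$ was the plug-in $\hat v_{Et} = v_{Et} + O_{\p}(a_N/\sqrt T)$. At the null this step holds with no error, so it disappears. The remaining substitutions carry over unchanged:
\begin{itemize}
\item the factor reduction $\hat F_t = \hat H^{-1}\tilde F_t + O_{\p}(N^{-1/2})$;
\item the instrument reduction $\hat z_t = z_t + o_{\p}(a_N^{-1})$ under $\sqrt T/N\to 0$;
\item the recentering identity $\bar v_{Et} - \bar v_{Ht} = u_t'a$, which perturbs the long-run variance by $O_{\p}(N^{-1/2})$.
\end{itemize}
This last perturbation bound uses the Davydov summation with $a_N\|z_t\|_{q_0\mid S} = O_{\p}(1)$, which still holds since $a_N\sqrt{h_N} = O_{\p}(1)$ at $\mu=2$. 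Proposition~4.1 of \citet{Kojevnikov2021LimitVariables}, conditional on $(S,\Lambda)$, then gives $\hat V_{z v_H}(\phi_s^0) - a_N^2 V_{z\bar v_H}(S,\Lambda) \xrightarrow{p} 0$.

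\textbf{Combining.} The non-degeneracy in Assumption~\ref{ass:mixing for stationary time series} gives $[a_N^2 V_{z\bar v_H}]^{-1} = O_{\p}(1)$. Slutsky then yields $\chi^2_1$ conditionally, and the limit is free of $(S,\Lambda)$, so dominated convergence gives the unconditional statement.

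\textbf{Main obstacle.} The delicate step is the moment bound behind Condition ND and the HAC consistency at $\mu=2$. There $\sum_j \s_j^2 = \Theta_{\p}(N\ln N)$, so the Rosenthal bound has to be checked with the log-corrected $a_N$. Remark~\ref{remark:CLT boundary indices} does this for the CLT. I would verify that the same display controls the $L^{2+\pi}$ norms used in the Davydov covariance bounds of the HAC argument and in $\|\xi_t\|$ of Step 4. The check is that each norm scales as $a_N\sqrt{h_N} = O_{\p}(1)$.
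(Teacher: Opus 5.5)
Your proposal is correct and follows the paper's own proof. It reruns the argument for Proposition~\ref{prop:weakness robust test_disaggregate} with $a_N=\sqrt{N/\ln N}$, uses $\beta = O_{\p}(\sqrt{h_N}) = O_{\p}(a_N^{-1})$ from Lemma~\ref{lemma:hajek bound} to keep the aggregation error at $O_{\p}(\sqrt{T}/N)+O_{\p}(N^{-1/2})$ for any trajectory of $N/T$, and closes with the null-exact HAC consistency, the cancellation of $a_N^2$, and Slutsky. The extra bookkeeping you add only makes explicit what the paper's one-paragraph proof leaves implicit: that only the numerator half of Proposition~\ref{prop:first stage effect} is used, so $a_N^2/T\to 0$ is never needed, and that the moment bounds behind the Davydov and HAC steps reduce to $a_N\sqrt{h_N}=O_{\p}(1)$ at $\mu=2$.
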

\begin{proof}
    The argument is the proof of Proposition~\ref{prop:weakness robust test_disaggregate} with $a_N = \sqrt{N/\ln N}$. The aggregation error is $O_{\p}(\sqrt{T}/N) + O_{\p}(1/\sqrt{N})$, which is $o_{\p}(1)$ for any trajectory of $N/T$ under $\sqrt{T}/N \to 0$. The cancellation of $a_N$ and Slutsky's theorem give the limit as in Proposition~\ref{prop:AR_mu2}.
\end{proof}

Proposition~\ref{prop:AR_H_mu2} places no condition on $N/T$, exactly as Proposition~\ref{prop:AR_mu2}.
\section{Controls} \label{appendix_sec_controls}

In the main text, I considered estimation without explicitly accounting for the presence of exogenous controls. In this section, I consider the presence of such controls. The modified structural equations are:
\begin{align*}
    d_t &= \phi_d p_{t} +  X_t^{d'}\beta_d  + \varepsilon_t \\
    y_{it} &= \phi_s p_t +  X_{it}^{y'} \beta_y + \lambda_i' F_t + u_{it}
\end{align*}
where $\beta_d$ is $k_d$ dimensional and $\beta_y$ is $k_y$ dimensional. We assume that the controls are exogenous, that is,
\begin{align*}
    \e[X_t^{d} \varepsilon_t] &= 0 \\
    \e[X_t^y F_t] &= \e[X_t^y u_t] = 0
\end{align*}

The treatment and the effect of controls on the inference of the GIV estimators depend critically on the nature of the controls, specifically whether the controls are time-series variables or panel variables. 

First, consider that the controls on the disaggregated side are just time-series variables, i.e., $X_{it}^{y} = X_{t}^{y}$. In this case, de-meaning the panel removes the controls as well. Hence, by the Frisch-Waugh-Lovell Theorem, we can apply our 2-stage sequential GIV estimation on the following equations.
\begin{align*}
    d_t &= \phi_d \tilde{p}_{t} + \varepsilon_t \\
    D_N y_t &= D_N \Lambda F_t + D_N u_t
\end{align*}

where $\tilde{p}_t = p_t - X_t^{d'} \beta_p$ and $\beta_p = \e[X_t^{d'} X_t^{d}]^{-1} \e[X_t^{d'} p_t]$. Thus, it is very clear that the presence of controls has no effect on the first-stage estimation of the factor structure and effects only the second-stage GIV estimation. 

But if the controls on the disaggregated side are panel variables, i.e., of the form $X_{it}^{y}$, then simple demeaning does not remove this part. Hence, we need to apply our 2-stage sequential GIV estimation on the following equations.
\begin{align*}
   d_t &= \phi_d \tilde{p}_{t} + \varepsilon_t \\
    D_N y_t - D_N X_t^{y} \beta_y &= D_N \Lambda F_t + D_N u_t
\end{align*}
where $\tilde{p}_t$ is defined above and $\beta_y = \e[X_t^{y'} D_N X_t^{y}]^{-1} \e[X_t^{y'} D_N y_{t}]$. As $\beta_y$ needs to be estimated, the presence of controls affect the first-stage estimation of the factor structure as well as the second-stage GIV estimation. 

Hence I consider these two cases separately. First, consider the case of only time-series controls. 

\subsection{Time-Series Controls}
Without loss of generality, assume the controls enter the aggregate and disaggregated sides identically, $X_{it}^{y} = X_{t}^{y} = X_{t}^{d} = X_t$. We impose only that they are exogenous to the demand shock. They may be correlated with the idiosyncratic supply shocks and with the common factors.

\begin{Ass} \label{ass:controls_general}
    The time-series controls satisfy $\e[X_t \varepsilon_t] = 0$. Their correlation with the supply-side shocks is unrestricted, so in general
    \begin{equation*}
        \e[X_t u_t] \neq 0 \quad \text{and} \quad \e[X_t \tilde F_t] \neq 0 .
    \end{equation*}
\end{Ass}

The controls are common across units, so demeaning the factor equation removes them,
\begin{equation} \label{eq:factor model}
    D_N y_t = D_N \Lambda F_t + D_N u_t ,
\end{equation}
and the first stage is unaffected. We estimate the common component $\hat C_t$ as in Section~\ref{section:estimated factors} and form the estimated-factor instrument $\hat z_t = S'[D_N y_t - \hat C_t] = z_t - S'(\hat C_t - \tilde C_t)$.

We estimate the control coefficients by OLS and run the control-augmented GIV as an instrumental-variables regression of $d_t$ on $(p_t, X_t)$ with instruments $(\hat z_t, X_t)$. By the Frisch--Waugh--Lovell theorem,
\begin{equation*}
    \hat\phi_d - \phi_d = \frac{\hat z' M_X \varepsilon}{\hat z' M_X p}, \qquad M_X = I_T - X(X'X)^{-1}X',
\end{equation*}
where $X$ stacks the $X_t'$ and $M_X$ partials the controls out of the time series.

The control coefficients are estimated, so $M_X \varepsilon \neq \varepsilon$ in finite samples. The estimation noise enters the numerator through this gap. Expanding,
\begin{equation*}
    \frac{1}{\sqrt T} \hat z' M_X \varepsilon = \frac{1}{\sqrt T} \sum_t \hat z_t \varepsilon_t - \left( \frac1T \sum_t \hat z_t X_t' \right) \left( \frac1T \sum_t X_t X_t' \right)^{-1} \frac{1}{\sqrt T} \sum_t X_t \varepsilon_t.
\end{equation*}
By the Law of Large Numbers in Theorem~3.1 of \citet{Kojevnikov2021LimitVariables}, applied element-wise to the process $\{z_t X_t'\}_t$ conditional on $\sigma(\s)$, $\frac1T \sum_t z_t X_t' \xrightarrow{\p} \Gamma_{zX} = \e[z_t X_t']$. Conditions 3.1 and 3.2 hold by the same argument as in the proof of Theorem~\ref{theorem:estimated_strong_consistency_aggregate}. Replacing $z_t$ by $\hat z_t$ changes this average at smaller order by Proposition~\ref{prop:first stage effect}. By the Law of Large Numbers in Corollary 3.48 of \citet{White2001AsymptoticEconometricians}, as applied to $\frac1T \sum_t \tilde F_t' \varepsilon_t$ in the proof of Theorem~\ref{theorem:estimated_strong_consistency_aggregate}, $\frac1T \sum_t X_t X_t' \xrightarrow{\p} \Sigma_{XX}$.

The first term, $\frac1{\sqrt T} \sum_t \hat z_t \varepsilon_t$, carries the error from estimating the factor structure. Proposition~\ref{prop:first stage effect} at $Q_t = \varepsilon_t$ and $a_N = 1$ replaces the structural error by its factor-residualised version,
\begin{equation*}
    \frac1{\sqrt T} \sum_t \hat z_t \varepsilon_t = \frac1{\sqrt T} \sum_t z_t \bar\varepsilon_t + o_{\p}(\cdot), \qquad \bar\varepsilon_t = \varepsilon_t - \Delta_{\tilde{F} \varepsilon} \tilde F_t .
\end{equation*}
Collecting the two terms,
\begin{equation*}
    \frac{1}{\sqrt T} \hat z' M_X \varepsilon = \frac{1}{\sqrt T} \sum_t \xi_t + o_{\p}(\cdot), \qquad \xi_t = z_t \bar\varepsilon_t - \Gamma_{zX} \Sigma_{XX}^{-1} X_t \varepsilon_t .
\end{equation*}

The influence $\xi_t$ has two channels. The factor estimation contributes $z_t \bar\varepsilon_t$. The control residualization contributes $\Gamma_{zX} \Sigma_{XX}^{-1} X_t \varepsilon_t$.

Partial the controls out of the instrument and write $z_t^{\perp} = z_t - \Gamma_{zX}\Sigma_{XX}^{-1}X_t$. The influence becomes
\begin{equation*}
    \xi_t = z_t^{\perp} \bar\varepsilon_t - g_t, \qquad g_t = \Gamma_{zX}\Sigma_{XX}^{-1} X_t\, \Delta_{\tilde{F} \varepsilon} \tilde F_t .
\end{equation*}
The term $g_t$ carries the product of the controls and the factors. Its mean is $\e[g_t] = \Gamma_{zX}\Sigma_{XX}^{-1} \e[X_t \tilde F_t'] \Delta_{\tilde{F} \varepsilon}'$, governed by the covariance between the controls and the factors. It is of the same order as $z_t^{\perp}\bar\varepsilon_t$, since $\Gamma_{zX} = O_{\p}(\|z\|)$, and it enters the asymptotic variance.

The denominator converges by the same laws of large numbers. The $S$-weighted average $\frac1T \sum_t \hat z_t p_t \xrightarrow{\p} \Gamma_{zp}$ is the granular denominator of Theorem~\ref{theorem:estimated_strong_consistency_aggregate}, and $\frac1T \sum_t X_t p_t \xrightarrow{\p} \Sigma_{Xp}$ by Corollary 3.48 of \citet{White2001AsymptoticEconometricians}. With $\Gamma_{zX}$ and $\Sigma_{XX}$ as above,
\begin{equation*}
    \frac1T \hat z' M_X p \xrightarrow{\p} \Gamma_{zp} - \Gamma_{zX}\Sigma_{XX}^{-1}\Sigma_{Xp} = \e[z_t \tilde p_t] \defeq \Gamma_{z\tilde p} .
\end{equation*}

\begin{Theorem} \label{theorem:estimated_controls_general}
    Suppose Assumptions~\ref{ass:weak stationarity and idiosyncrasy} to \ref{ass:LN_moments and CLT} and Assumption~\ref{ass:controls_general} hold with $\mu \in (0,1)$. Then, conditional on $S$, for almost every realization of the shares, the control-augmented GIV estimator is consistent and asymptotically normal,
    \begin{equation*}
        \frac{\Gamma_{z\tilde p}}{\sqrt{V_{\mathrm{ctrl}}(S)}}\, \sqrt T\, (\hat\phi_d - \phi_d) \xrightarrow{d} \n(0,1),
    \end{equation*}
    with influence function $\xi_t = z_t \bar\varepsilon_t - \Gamma_{zX}\Sigma_{XX}^{-1}X_t\varepsilon_t$ and asymptotic variance
    \begin{equation*}
        V_{\mathrm{ctrl}}(S) = \lim_{T\to\infty}\frac1T \sum_{s=1}^T \sum_{t=1}^T \e[\xi_t \xi_s \mid S] .
    \end{equation*}
    Decomposing the long-run variance,
    \begin{equation*}
        V_{\mathrm{ctrl}}(S) = V_{z\bar\varepsilon}(S) \;-\; 2\,\Gamma_{zX}\Sigma_{XX}^{-1} c(S) \;+\; \Gamma_{zX}\Sigma_{XX}^{-1}\Omega_{X\varepsilon}\Sigma_{XX}^{-1}\Gamma_{zX}' ,
    \end{equation*}
    where $\Omega_{X\varepsilon} = \lim_{T\to\infty}\frac1T \sum_{s}\sum_{t}\e[X_t\varepsilon_t \varepsilon_s X_s']$ is the long-run variance of the control score and $c(S) = \lim_{T\to\infty}\frac1T\sum_{s}\sum_{t}\e[X_t\varepsilon_t\, z_s\bar\varepsilon_s \mid S]$ is its long-run covariance with the GIV score.
\end{Theorem}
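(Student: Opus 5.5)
The argument follows the proof of Theorem~\ref{theorem:estimated_strong_consistency_aggregate} in Appendix~\ref{subsec:proof of estimated_theorems}, with $a_N = 1$, applied to the Frisch--Waugh--Lovell representation
\begin{equation*}
\sqrt T(\hat\phi_d - \phi_d) = \frac{T^{-1/2}\hat z' M_X \varepsilon}{T^{-1}\hat z' M_X p}.
\end{equation*}
I treat the numerator and the denominator separately and combine them with Slutsky's theorem, conditional on $S$.

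\textbf{Denominator.} Write $T^{-1}\hat z' M_X p = T^{-1}\sum_t \hat z_t p_t - (T^{-1}\sum_t \hat z_t X_t')(T^{-1}X'X)^{-1}(T^{-1}\sum_t X_t p_t)$. The first piece converges to $\Gamma_{zp}$ by Proposition~\ref{prop:first stage effect} and Proposition~\ref{prop:denominator}. The other averages converge by the Kojevnikov--Song and White laws of large numbers already invoked in the text. For $T^{-1}\sum_t \hat z_t X_t'$, the first-stage error is handled by Lemma~\ref{lemma:numerator}.(i) with $Q_t$ set to each component of $X_t$. I use only part (i) here, because $X_t$ may correlate with $u_t$. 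Dividing that expansion by $\sqrt T$ makes every term $o_{\p}(1)$.

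The limit is $\Gamma_{z\tilde p}$. One point needs a separate argument: $\Gamma_{z\tilde p} \neq 0$ is not delivered by Proposition~\ref{prop:denominator}. Partialling out $X_t$ can in principle destroy relevance, and I would add non-degeneracy of $\Gamma_{z\tilde p}$ as a maintained condition. This is the analogue of the lower bound in Assumption~\ref{ass:mixing for stationary time series}.

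\textbf{Numerator.} I use the expansion displayed before the theorem. The term $(T^{-1}X'X)^{-1}T^{-1/2}\sum_t X_t\varepsilon_t$ is $O_{\p}(1)$ by a mixing CLT, since $\e[X_t\varepsilon_t]=0$. Its prefactor converges to $\Gamma_{zX}$. The term $T^{-1/2}\sum_t \hat z_t\varepsilon_t$ is reduced by Proposition~\ref{prop:first stage effect} at $Q_t=\varepsilon_t$. Together these give $T^{-1/2}\hat z' M_X\varepsilon = T^{-1/2}\sum_t \xi_t + o_{\p}(1)$.

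\textbf{Main obstacle: the CLT for $\xi_t$.} The summand $\xi_t$ is not of the product form $Z_t\sum_j S_j\bar u_{jt}$ covered by Corollary~\ref{corollary:central limit theorem}. The second component $\Gamma_{zX}\Sigma_{XX}^{-1}X_t\varepsilon_t$ is an ordinary mixing score with $S$ entering only through the constant $\Gamma_{zX}$.

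My plan is to apply the Kojevnikov--Song CLT (their Theorem 3.2) directly to $\xi_t$ conditional on $\sigma(S)$, rechecking its conditions one by one.
\begin{itemize}
\item \emph{Conditional mean zero.} This holds because $\e[z_t\bar\varepsilon_t\mid S]=0$, as in the proof of Theorem~\ref{theorem:estimated_strong_consistency_aggregate}, and because $\e[X_t\varepsilon_t]=0$ by Assumption~\ref{ass:controls_general}.
\item \emph{$\psi$-dependence.} This is inherited once $(X_t', F_t', u_t', \varepsilon_t)$ is assumed jointly strong mixing. That requirement extends Assumption~\ref{ass:mixing for stationary time series} to include the controls.
\item \emph{$(4+\pi)$-moment bound.} I use Minkowski's inequality on the two parts. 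The first part is bounded by the Rosenthal argument of Corollary~\ref{corollary:central limit theorem}. The second needs $8+2\pi$ moments of $X_t$, together with $\Gamma_{zX}=O_{\p}(1)$, which follows from Cauchy--Schwarz and $\|z_t\|_{2\mid S}\le C h_N^{1/2}$.
\item \emph{Condition ND.} This is verbatim from Theorem~\ref{theorem:central limit theorem}.
\end{itemize}
The conditional limit is then $\n(0,V_{\mathrm{ctrl}}(S))$. The unconditional statement follows by dominated convergence, as in Theorem~\ref{theorem:central limit theorem}.

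\textbf{Variance decomposition.} The stated decomposition comes from expanding $\frac1T\sum_{s,t}\e[\xi_t\xi_s\mid S]$ into three pieces: the $z\bar\varepsilon$ piece, the cross terms, and the control score piece. These give $V_{z\bar\varepsilon}(S)$, $-2\Gamma_{zX}\Sigma_{XX}^{-1}c(S)$ (the two symmetric cross sums combined, using stationarity), and $\Gamma_{zX}\Sigma_{XX}^{-1}\Omega_{X\varepsilon}\Sigma_{XX}^{-1}\Gamma_{zX}'$. Each limit in $T$ exists by absolute summability of the Davydov covariance bounds under the mixing size.
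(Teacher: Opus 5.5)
Your proposal is correct and follows the paper's own route: the Frisch--Waugh--Lovell ratio, Proposition~\ref{prop:first stage effect} at $Q_t=\varepsilon_t$ for the GIV score, laws of large numbers for $\Gamma_{zX}$, $\Sigma_{XX}$ and $\Sigma_{Xp}$, a CLT for $\xi_t$, then Slutsky. Three of your deviations tighten steps the paper passes over: you use Lemma~\ref{lemma:numerator}.(i) rather than Proposition~\ref{prop:first stage effect} for $T^{-1}\sum_t \hat z_t X_t'$, because the proposition's orthogonality hypothesis fails when $X_t$ loads on $u_t$; you check the Kojevnikov--Song conditions directly for $\xi_t$, which is not of the product form in Corollary~\ref{corollary:central limit theorem}; and you maintain $\Gamma_{z\tilde p}\neq 0$ as an explicit condition.

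By the same standard you should also maintain $[V_{\mathrm{ctrl}}(S)]^{-1}=O_{\p}(1)$, since Assumption~\ref{ass:mixing for stationary time series} bounds only the pure GIV-score variances such as $V_{z\bar\varepsilon}(S)$ away from zero, and $\sqrt{T}/N\to 0$, which Proposition~\ref{prop:first stage effect} requires but the theorem statement omits.
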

\begin{proof}
    The expansions of the numerator and denominator are derived above. Consistency and the limiting distribution follow the proof of Theorem~\ref{theorem:estimated_strong_consistency_aggregate} with the GIV score $z_t\bar\varepsilon_t$ replaced by $\xi_t$. The control score is a measurable function of $(X_t, \varepsilon_t)$ and inherits the mixing of Assumption~\ref{ass:weak stationarity and idiosyncrasy}, so the Law of Large Numbers in Theorem~3.1 of \citet{Kojevnikov2021LimitVariables} applies to the denominator and the Central Limit Theorem~\ref{theorem:central limit theorem} applies to the numerator, as in that proof.
\end{proof}

The cross term $c(S)$ carries the covariance between the control score and the GIV score. The covariance between the controls and the factors enters the variance here, through the $\Delta_{\tilde{F} \varepsilon} \tilde F_t$ inside $\bar\varepsilon_t$.

\begin{remark}
    When the controls are external, uncorrelated with both the factors and the idiosyncratic shocks, $\e[X_t u_t] = \e[X_t \tilde F_t] = 0$, the instrument is orthogonal to the controls, $\Gamma_{zX} = 0$. The last two terms of $V_{\mathrm{ctrl}}(S)$ vanish, $z_t^{\perp} = z_t$, $\xi_t = z_t \bar\varepsilon_t$, and
\begin{equation*}
    V_{\mathrm{ctrl}}(S) = V_{z\bar\varepsilon}(S) .
\end{equation*}
The controls are asymptotically free and inference is that of Theorem~\ref{theorem:estimated_strong_consistency_aggregate}.
\end{remark}

The order of $\xi_t$ matches the known-factor numerator in every regime. The instrument satisfies $\Gamma_{zX} = O_{\p}(\|z\|) = O_{\p}(N^{-\delta})$ and the control score satisfies $\frac{1}{\sqrt T}\sum_t X_t\varepsilon_t = O_{\p}(1)$, so the control channel is $O_{\p}(1)$ in the strong regime and $O_{\p}(N^{-\delta})$ in the nearly weak regime, with $\delta = \min(1 - 1/\mu, 1/2)$. The rate of convergence is $\sqrt T$ in the strong regime and $\sqrt T / N^{\delta}$ in the nearly weak regime. The controls affect only the asymptotic variance.

Now consider the supply elasticity. The control-augmented estimator is the instrumental-variables regression of $y_{Et}$ on $(p_t, X_t)$ with instruments $(\hat z_t, X_t)$, where $y_{Et}$ is the equally weighted aggregate of Section~\ref{subsec:equal weights}. The equation it estimates is
\begin{equation*}
    y_{Et} = \phi_s p_t + X_t'\beta_y + v_{Et}, \qquad v_{Et} = \frac{e_N'(\Lambda F_t + u_t)}{N} .
\end{equation*}

The controls are included instruments in this equation, and its residual is built from the factors and the idiosyncratic shocks. Assumption~\ref{ass:controls_general} leaves the correlation of the controls with both unrestricted. A control correlated with either is correlated with $v_{Et}$. The score $\frac{1}{\sqrt T}\sum_t X_t v_{Et}$ then drifts at the rate $\sqrt T$ and the estimator is inconsistent.

The latitude of Assumption~\ref{ass:controls_general} is therefore available only for the demand elasticity, whose residual $\varepsilon_t$ is orthogonal to the controls by assumption. For the supply elasticity I require the controls to be external.

\begin{Ass} \label{ass:controls_supply}
    The time-series controls satisfy $\e[X_t u_t] = 0$ and $\e[X_t \tilde F_t] = 0$, in addition to $\e[X_t \varepsilon_t] = 0$.
\end{Ass}

Under Assumption~\ref{ass:controls_supply} the instrument is orthogonal to the controls, $\Gamma_{zX} = \e[z_t X_t'] = 0$. By the Frisch--Waugh--Lovell theorem,
\begin{equation*}
    \hat\phi_s - \phi_s = \frac{\hat z' M_X v_E}{\hat z' M_X p},
\end{equation*}
with $M_X$ as above. Expanding the numerator in the rescaling $a_N$,
\begin{equation*}
    \frac{a_N}{\sqrt T} \hat z' M_X v_E = \frac{a_N}{\sqrt T} \sum_t \hat z_t v_{Et} - \left( \frac{a_N}{T} \sum_t \hat z_t X_t' \right) \left( \frac1T \sum_t X_t X_t' \right)^{-1} \frac{1}{\sqrt T} \sum_t X_t v_{Et}.
\end{equation*}
The first term is the numerator of the no-controls supply estimator, treated in Appendix~\ref{subsec:proof of estimated_disaggregate}. In the second term, the summand $a_N z_t X_t'$ has mean $a_N \Gamma_{zX} = 0$, so $\frac{a_N}{T} \sum_t z_t X_t' = O_{\p}(1/\sqrt T)$, and replacing $z_t$ by $\hat z_t$ changes the average at smaller order by Proposition~\ref{prop:first stage effect}. The control score satisfies $\frac{1}{\sqrt T}\sum_t X_t v_{Et} = O_{\p}(1)$, since $\e[X_t v_{Et}] = 0$ under Assumption~\ref{ass:controls_supply}. The control channel is therefore $O_{\p}(1/\sqrt T)$ and drops out of the limit.

The denominator carries the same cross term. With $\frac1T \sum_t X_t p_t \xrightarrow{\p} \Sigma_{Xp}$ as before,
\begin{equation*}
    \frac{a_N^2}{T} \hat z' M_X p = \frac{a_N^2}{T} \sum_t \hat z_t p_t + a_N \cdot O_{\p}(1/\sqrt T) = a_N^2 \Gamma_{zp} + o_{\p}(1),
\end{equation*}
under $a_N^2/T \to 0$. Partialling the controls out of $p_t$ costs nothing because the instrument never loaded on them.

\begin{Theorem} \label{theorem:estimated_controls_supply}
    Suppose Assumptions~\ref{ass:weak stationarity and idiosyncrasy} to \ref{ass:LN_moments and CLT} and Assumption~\ref{ass:controls_supply} hold with $\mu \in (0,1)$ and $\sqrt{T}/N \to 0$. Then, conditional on $(S, \Lambda)$, for almost every realization of the shares and loadings, the control-augmented GIV estimator for the supply elasticity is consistent and asymptotically normal,
    \begin{equation*}
        \frac{\Gamma_{zp}}{\sqrt{V_{z \bar{v}_H}(S, \Lambda)}}\, \sqrt T\, (\hat\phi_s - \phi_s) \xrightarrow{d} \n(0,1),
    \end{equation*}
    with $\Gamma_{zp}$ and $V_{z \bar{v}_H}(S, \Lambda)$ as in Theorem~\ref{theorem:estimated_strong_consistency_disaggregate}. The controls do not enter the limiting distribution.
\end{Theorem}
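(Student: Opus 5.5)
The plan is to reduce the statement to Theorem~\ref{theorem:estimated_strong_consistency_disaggregate} through the Frisch--Waugh--Lovell representation $\hat\phi_s - \phi_s = \hat z' M_X v_E / \hat z' M_X p$. With $a_N = 1$ in the strong regime, I will show three things. First, the control-augmented numerator $T^{-1/2}\hat z' M_X v_E$ differs from the no-controls numerator $T^{-1/2}\sum_t \hat z_t v_{Et}$ by $o_{\p}(1)$. Second, the control-augmented denominator $T^{-1}\hat z' M_X p$ differs from $T^{-1}\sum_t \hat z_t p_t$ by $o_{\p}(1)$. Once both hold, every remaining step, namely the instrument error, aggregation error, interaction, central limit theorem and Slutsky, is taken verbatim from Appendix~\ref{subsec:proof of estimated_disaggregate}, and the limit carries the same $V_{z\bar v_H}(S,\Lambda)$ and $\Gamma_{zp}$.

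For the numerator I use the expansion displayed above the theorem. The cross term is a product of three factors, $\big(T^{-1}\sum_t \hat z_t X_t'\big)\big(T^{-1}\sum_t X_tX_t'\big)^{-1}\big(T^{-1/2}\sum_t X_t v_{Et}\big)$, and I bound each in turn.

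\emph{Gram matrix.} The middle factor converges to a nonsingular $\Sigma_{XX}$ by Corollary~3.48 of \citet{White2001AsymptoticEconometricians}.

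\emph{Control score.} Under Assumption~\ref{ass:controls_supply}, $X_t v_{Et}$ has mean zero, because $v_{Et} = e_N'(\Lambda F_t + u_t)/N$. One caveat: $F_{1t}$ is not covered by the assumption. I will either read the assumption as covering the full $F_t$, or note that $F_{1t}$ enters $v_{Et}$ with coefficient one and must be orthogonal to $X_t$ for the structural equation to be identified. Given mean zero, mixing and the moment bounds give $T^{-1/2}\sum_t X_t v_{Et} = O_{\p}(1)$ by Davydov's inequality and Chebyshev.

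\emph{Instrument--control average.} The first factor splits into a known-factor piece and a first-stage piece. The known-factor piece $T^{-1}\sum_t z_t X_t'$ has conditional mean $S'D_N\e[u_tX_t'] = 0$, and the Kojevnikov--Song law of large numbers together with the Rosenthal bound $\|z_t\|_{q_0\mid S}\le Ch_N^{1/2}$ makes its fluctuation $O_{\p}(T^{-1/2})$. The first-stage piece $T^{-1}\sum_t S'(\hat C_t-\tilde C_t)X_t'$ is handled by Lemma~\ref{lemma:numerator}.(i) with $Q_t$ equal to each component of $X_t$, after dividing through by $\sqrt T$. Its factor-space term is $T^{-1}\sum_t\tilde F_tX_t'$, which is $O_{\p}(T^{-1/2})$ because $\e[X_t\tilde F_t']=0$, multiplied by $O_{\p}(1)\cdot T^{-1/2}$. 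Its loading-space term is $O_{\p}(T^{-1/2})$, and the remainders are $O_{\p}(1/N)+O_{\p}(1/T)$.

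Multiplying the three factors, the cross term is $O_{\p}(T^{-1/2})+O_{\p}(1/N)$, which is $o_{\p}(1)$.

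For the denominator, the correction is $\big(T^{-1}\sum_t\hat z_tX_t'\big)\big(T^{-1}\sum_tX_tX_t'\big)^{-1}\big(T^{-1}\sum_tX_tp_t\big)$. Here $T^{-1}\sum_t X_tp_t\to\Sigma_{Xp}=O(1)$ by the same White corollary, and the first factor is $o_{\p}(1)$ from the previous paragraph. Proposition~\ref{prop:first stage effect} already gives $T^{-1}\sum_t\hat z_tp_t=T^{-1}\sum_tz_tp_t+o_{\p}(1)$, and Proposition~\ref{prop:denominator} keeps $\Gamma_{zp}$ bounded away from zero. Hence the denominator is $\Gamma_{zp}+o_{\p}(1)$.

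The main obstacle is the first-stage part of $T^{-1}\sum_t\hat z_tX_t'$. Lemma~\ref{lemma:numerator} is stated for multipliers measurable with respect to $\sigma(F_t,u_t,\varepsilon_t)$, and $X_t$ need not be. I would argue that the lemma's proof uses measurability only to get mixing and the moment bound $\e|Q_t|^{4+2\pi}<\infty$. The plan is therefore to enlarge the mixing sigma-algebra of Assumption~\ref{ass:mixing for stationary time series} to include $X_t$, and to invoke Lemma~\ref{lemma:order of a mixing sequence} with $g_t = X_t$. This is legitimate because $\e[u_{jt}X_t]=0$, provided the summability condition of Assumption~\ref{ass:LN_time and cross sectional dependence}.6 is imposed for $u_{jt}X_t$. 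I would state this extension explicitly as an added regularity condition on the controls.
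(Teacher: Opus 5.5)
Your proposal is correct and follows the paper's own argument. Both use the Frisch--Waugh--Lovell split, bound $T^{-1}\sum_t \hat z_t X_t'$ using $\Gamma_{zX}=0$ plus a first-stage correction, use $T^{-1/2}\sum_t X_t v_{Et}=O_{\p}(1)$, and reduce to the no-controls numerator and denominator of Appendix~\ref{subsec:proof of estimated_disaggregate}. Your two caveats are genuine gaps that the paper's sketch passes over, and your fixes close them. The paper invokes Proposition~\ref{prop:first stage effect} with $X_t$ as the multiplier without checking measurability. It also lists only $\tilde F_t$ in Assumption~\ref{ass:controls_supply}, although the opening exogeneity condition $\e[X_t^y F_t]=0$ of Appendix~\ref{appendix_sec_controls} does cover $F_{1t}$. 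Only your alternative justification for $\e[X_tF_{1t}]=0$ is off: with $\Gamma_{zX}=0$, a correlation between $X_t$ and $F_{1t}$ would leave $\hat\phi_s$ consistent and change only the limiting variance. The orthogonality is needed for the claim that the controls do not enter the limit, not for identification of $\phi_s$.
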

\begin{proof}
    The control channel in the numerator is $O_{\p}(1/\sqrt T)$ and the cross term in the denominator is $O_{\p}(a_N/\sqrt T)$, both derived above. What remains is the numerator and denominator of the no-controls estimator, treated in Appendix~\ref{subsec:proof of estimated_disaggregate}, so the limit is that of Theorem~\ref{theorem:estimated_strong_consistency_disaggregate}.
\end{proof}

The orthogonality that was a special case for the demand elasticity is the general case for the supply elasticity. In the nearly weak regime the same argument gives the rate $\sqrt T / N^{\delta}$ of Theorem~\ref{theorem:estimated_weak_consistency_disaggregate}, since the control channel is $O_{\p}(1/\sqrt T)$ in every regime.

\subsection{Disaggregated Controls}

Consider the case when the controls on the disaggregated side are panel variables. Demeaning no longer removes them, so we estimate their coefficient $\beta_y$ and recover the factor structure from the residualized panel
\begin{equation*}
    D_N (y_t - X_t^y \beta_y) = D_N \Lambda F_t + D_N u_t .
\end{equation*}
The estimate of $\beta_y$ enters the first stage, so it adds a new source of error on top of the factor estimation. This is distinct from the time-series case of the previous subsection, where the first stage was untouched. But we will see in this subsection that this error vanishes asymptotically. The coefficient $\beta_y$ is estimated from the full panel, so its error averages over both $N$ and $T$ and vanishes at the rate $1/\sqrt{NT}$. 

The aggregate controls $X_t^d$ enter the asymptotic variance through the same channel as in the previous subsection. Hence the limiting distribution is the one already derived in Theorem~\ref{theorem:estimated_controls_general}. 

\begin{Ass} \label{ass:controls_panel}
    The panel controls are exogenous to both the factors and the idiosyncratic shocks, conditionally on the shares and loadings,
    \begin{equation*}
        \e[X_{it}^y F_t \mid S, \Lambda] = 0, \qquad \e[X_{it}^y u_{jt} \mid S, \Lambda] = 0 \quad \text{for all } i,j,t,
    \end{equation*}
    and the aggregate controls satisfy $\e[X_t^d \varepsilon_t] = 0$. The within second moment $\frac1{NT}\sum_t X_t^{y\prime} D_N X_t^y$ converges to a positive definite limit $Q_{X^y}$. The granular average $\bar X_{St}^y = S' X_t^y$ inherits the weakness of the instrument, $a_N \bar X_{St}^y = O_{\p}(1)$ uniformly across regimes.
\end{Ass}

This assumption is stronger than Assumption~\ref{ass:controls_general}. The time-series controls were allowed to correlate with the supply shocks and the factors. The panel controls are not. They are partialled out of the factor equation by least squares, so they must be clean of the structure we are trying to recover.

The panel control survives demeaning, so we cannot remove it the way we removed the time-series control. We estimate its coefficient directly. Demeaning the disaggregated equation gives
\begin{equation*}
    D_N y_t = D_N X_t^y \beta_y + D_N \Lambda F_t + D_N u_t ,
\end{equation*}
and the within estimator is
\begin{equation*}
    \hat\beta_y = \Big(\sum_t X_t^{y\prime} D_N X_t^y\Big)^{-1}\sum_t X_t^{y\prime} D_N y_t .
\end{equation*}
Under Assumption~\ref{ass:controls_panel} the factors and shocks are orthogonal to the controls, so
\begin{equation*}
    \hat\beta_y - \beta_y = \Big(\tfrac1{NT}\sum_t X_t^{y\prime} D_N X_t^y\Big)^{-1}\tfrac1{NT}\sum_t X_t^{y\prime} D_N(\Lambda F_t + u_t) = O_{\p}\Big(\tfrac1{\sqrt{NT}}\Big).
\end{equation*}
The score averages over both $N$ and $T$, so the estimation error vanishes at the parametric panel rate. This rate is faster than any term that survives in the GIV moment. The panel control therefore leaves the limiting distribution untouched, as we now show.

The factor structure is estimated from the residualized panel $D_N(y_t - X_t^y \hat\beta_y)$. Let $\hat C_t^{(\beta)}$ denote the principal-component estimate of the common component obtained from this panel. It is the analogue of the estimate $\hat C_t$ of Section~\ref{section:estimated factors}, which runs on the raw $D_N y_t$. The superscript $(\beta)$ records that the panel was residualized by $\hat\beta_y$ before the principal components were taken. Writing the partialled model $D_N(y_t - X_t^y \beta_y) = \tilde C_t + D_N u_t$, the residualized data is
\begin{equation*}
    D_N(y_t - X_t^y \hat\beta_y) = \tilde C_t + D_N u_t - D_N X_t^y(\hat\beta_y - \beta_y) ,
\end{equation*}
and the estimated-factor instrument $\hat z_t = S'[D_N(y_t - X_t^y \hat\beta_y) - \hat C_t^{(\beta)}]$ decomposes as
\begin{equation*}
    \hat z_t = z_t - S'(\hat C_t^{(\beta)} - \tilde C_t) - \bar X_{St}^y(\hat\beta_y - \beta_y) .
\end{equation*}
The first correction is the first-stage factor error of Section~\ref{section:estimated factors}, now computed on residualized data. The second correction is the direct imprint of the control estimate on the instrument. Both are new relative to the time-series case.

\begin{Lemma} \label{lemma:panel_control_negligible}
    Suppose Assumptions~\ref{ass:weak stationarity and idiosyncrasy} to \ref{ass:LN_moments and CLT} and Assumption~\ref{ass:controls_panel} hold. Writing $\hat C_t^{(\beta)} - \tilde C_t = (\hat C_t - \tilde C_t) + (\hat C_t^{(\beta)} - \hat C_t)$,
    \begin{equation*}
        \frac{a_N}{\sqrt T}\sum_t \bar X_{St}^y(\hat\beta_y - \beta_y)\varepsilon_t = O_{\p}\Big(\tfrac1{\sqrt{NT}}\Big), \qquad
        \frac{a_N}{\sqrt T}\sum_t S'(\hat C_t^{(\beta)} - \hat C_t)\varepsilon_t = O_{\p}\Big(\tfrac1{\sqrt{NT}}\Big),
    \end{equation*}
    and the corresponding denominator terms are $o_{\p}(1)$.
\end{Lemma}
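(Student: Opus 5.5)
The plan is to treat the two numerator terms separately and then repeat the argument for the denominators. Throughout, the control estimate $\hat\beta_y - \beta_y$ is a single random $k_y$-vector of order $O_{\p}(1/\sqrt{NT})$ that can be pulled outside every time average.

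\emph{Direct term.} Since $\hat\beta_y - \beta_y$ does not depend on $t$, I will write
\begin{equation*}
    \frac{a_N}{\sqrt T}\sum_t \bar X_{St}^y(\hat\beta_y - \beta_y)\varepsilon_t = \Big[\frac{1}{\sqrt T}\sum_t a_N \bar X_{St}^{y} \varepsilon_t\Big](\hat\beta_y - \beta_y).
\end{equation*}
By Assumption~\ref{ass:controls_panel}, $a_N \bar X_{St}^y = O_{\p}(1)$. Exogeneity of the panel controls then gives $\e[a_N \bar X_{St}^y \varepsilon_t \mid S,\Lambda]=0$, where I am reading Assumption~\ref{ass:controls_panel} as also covering orthogonality to $\varepsilon_t$. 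If that reading fails, the bracket is still at worst $O_{\p}(\sqrt T)$, and I would need the extra condition $T/N\to0$ or an explicit orthogonality to close this term.

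Under orthogonality, the bracket is $O_{\p}(1)$ by the Davydov/Chebyshev argument used in Proposition~\ref{prop:denominator}, applied conditionally on $(S,\Lambda)$ with the mixing of Assumption~\ref{ass:mixing for stationary time series}. Multiplying by $O_{\p}(1/\sqrt{NT})$ yields the claim.

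\emph{PCA perturbation term.} The residualized panel is $D_N y_t - E_t$ with the perturbation $E_t \defeq D_N X_t^y(\hat\beta_y-\beta_y)$. This perturbation is linear in the panel and carries the scalar factor $\|\hat\beta_y-\beta_y\|$. The influence-function expansion \eqref{eq:influence function of estimator} is linear in the demeaned array, so $\hat C_t^{(\beta)}-\hat C_t$ equals that expansion with $\Bar u$ replaced by $-E_t$, plus higher order terms. I plan to invoke Remark~\ref{rem:shock substitution} with $\Bar W_{jm} = (D_N X_m^y)_j$, taken per unit of $\hat\beta_y-\beta_y$. Its hypotheses are demeaning, which holds by construction; $a_N \Bar W_{Sm} = O_{\p}(1)$, which is part of Assumption~\ref{ass:controls_panel} since $S'D_N X^y_m$ differs from $\bar X^y_{Sm}$ by a term of order $N^{-1}$ times the mean; $\e[\tilde F_m \Bar W_{jm}\mid S,\Lambda]=0$, which is the first exogeneity condition; and bounded second moments.

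Lemma~\ref{lemma:numerator}(i) then bounds $\frac{a_N}{\sqrt T}\sum_t S'(\hat C^{W}_t-\tilde C^W_t)\varepsilon_t$ by $O_{\p}(1)$ plus remainders. Multiplying by $\|\hat\beta_y-\beta_y\|=O_{\p}(1/\sqrt{NT})$ gives the stated order. The quadratic cross-terms between $E_t$ and $\Bar u_t$, which come from the nonlinear PCA rotation $\hat H$, are at least of order $\|\hat\beta_y-\beta_y\|\,\delta_{NT}^{-1}$ and are smaller still.

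\emph{Denominators.} I will repeat both computations with $p_t$ in place of $\varepsilon_t$ and the scaling $a_N^2/T$. For the direct term, $\frac{a_N^2}{T}\sum_t \bar X^y_{St}p_t = a_N\,O_{\p}(1)$. Since $a_N\le\sqrt N$, the term is $O_{\p}(\sqrt N/\sqrt{NT})=o_{\p}(1)$. For the PCA perturbation, I will follow Lemma~\ref{lemma:denominator} with the substituted array, which gives $O_{\p}(a_N/\sqrt T + N^{-1/2})$ times $O_{\p}(1/\sqrt{NT})$, also $o_{\p}(1)$.

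\emph{Main obstacle.} The hardest step will be justifying the substitution into the influence-function expansion. The PCA estimator is nonlinear, so the linearization has to hold uniformly for the perturbed panel. I would handle this by redoing the Bai (2003) decomposition of Appendix~\ref{app:estimation of common components} with $D_N u_t - E_t$ in place of $D_N u_t$. The cross-products of $E$ with itself are $O_{\p}(\|\hat\beta_y-\beta_y\|^2)$ after normalization, so they are negligible relative to the linear term.
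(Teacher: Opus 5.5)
Your route is the paper's. You pull the $t$-invariant vector $\hat\beta_y-\beta_y=O_{\p}(1/\sqrt{NT})$ out of every time average, and bound the direct term through $a_N\bar X^y_{St}=O_{\p}(1)$ and a centred time average. You handle $\hat C_t^{(\beta)}-\hat C_t$ through Remark~\ref{rem:shock substitution} with $\Bar W$ the demeaned controls, and let the denominators inherit the extra factor.

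The step that does not go through as written is the PCA term. You invoke Lemma~\ref{lemma:numerator}(i), but part (i) bounds only the prefactor $a_N S'\tilde\Lambda[\tilde\Lambda'\tilde\Lambda/N]^{-1}$ of the loading-space term. It does not bound the average $T^{-1/2}\sum_t\zeta^W_t\varepsilon_t$. From the pointwise order $\zeta^W_t=O_{\p}(N^{-1/2})$ alone you get only $O_{\p}(\sqrt{T/N})$, and this is attained when $\e[\zeta^W_t\varepsilon_t\mid\Lambda]\neq 0$. After multiplying by $\|\hat\beta_y-\beta_y\|$ the term is then $O_{\p}(1/N)$, and $1/N$ exceeds $1/\sqrt{NT}$ whenever $T>N$, in particular throughout the nearly weak regime. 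You need part (ii) instead. It requires the unit-level orthogonality $\e[\Bar X^y_{jt}\varepsilon_t\mid S,\Lambda]=0$, which is the condition you already used for the direct term. It also requires the summability condition of Assumption~\ref{ass:LN_time and cross sectional dependence}.6 for the products $\Bar X^y_{jt}\varepsilon_t$. Lemma~\ref{lemma:order of a mixing sequence} then gives $T^{-1/2}\sum_t\zeta^W_t\varepsilon_t=O_{\p}(N^{-1/2})$, so the loading-space contribution is $O_{\p}(1/(N\sqrt T))$. This is how the paper closes the term.

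Three smaller remarks.
\begin{itemize}
\item Your concern about orthogonality to $\varepsilon_t$ is well founded. Assumption~\ref{ass:controls_panel} does not state it, yet the paper's proof uses it, so your reading is the one the lemma needs. Your Davydov/Chebyshev bound also fits $\frac{a_N}{\sqrt T}\sum_t\bar X^y_{St}\varepsilon_t$ better than the paper's appeal to Theorem~\ref{theorem:central limit theorem}, whose summand has the form $Z_t\,S'\Bar u_t$.
\item $S'D_NX^y_m-S'X^y_m=-N^{-1}e_N'X^y_m$ is the full cross-sectional mean of the controls, not $N^{-1}$ times it, so $a_N$ times it need not be bounded. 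The rate condition in Assumption~\ref{ass:controls_panel} should therefore be read as applying to the demeaned average $S'D_NX^y_m$. That is what actually enters $\hat z_t$, and it is how the paper uses $\bar X^y_{St}$.
\item The paper treats the substitution into \eqref{eq:influence function of estimator} as exact, so your treatment of the linearization remainder goes beyond it. However, a pointwise order $\|\hat\beta_y-\beta_y\|\,\delta_{NT}^{-1}$ does not make those terms ``smaller still''. Loading-row errors enter this sum multiplied by $T^{-1/2}\sum_t\tilde F_t\varepsilon_t=O_{\p}(\sqrt T)$ when $\e[\tilde F_t\varepsilon_t]\neq 0$, so that claim needs its own argument, for instance that the remainder difference is centred.
\end{itemize}
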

\begin{proof}
    For the direct term, factor out the control estimate,
    \begin{equation*}
        \frac{a_N}{\sqrt T}\sum_t \bar X_{St}^y(\hat\beta_y - \beta_y)\varepsilon_t = (\hat\beta_y - \beta_y)\,\frac{a_N}{\sqrt T}\sum_t \bar X_{St}^y \varepsilon_t .
    \end{equation*}
    By Assumption~\ref{ass:controls_panel}, $a_N \bar X_{St}^y = O_{\p}(1)$, and the Central Limit Theorem~\ref{theorem:central limit theorem} gives $\frac{a_N}{\sqrt T}\sum_t \bar X_{St}^y \varepsilon_t = O_{\p}(1)$. With $\hat\beta_y - \beta_y = O_{\p}(1/\sqrt{NT})$ the product is $O_{\p}(1/\sqrt{NT})$.

    For the factor term, replacing the data $D_N y_t$ by $D_N(y_t - X_t^y\hat\beta_y)$ shifts the input to the principal-component step by $-D_N X_t^y(\hat\beta_y - \beta_y)$. The influence function \eqref{eq:influence function of estimator} is linear in the demeaned residual, so $\hat C_t^{(\beta)} - \hat C_t$ equals \eqref{eq:influence function of estimator} with $\bar u_{jt}$ replaced by $-\bar X_{jt}^y(\hat\beta_y - \beta_y)$. Factoring out $(\hat\beta_y - \beta_y)$, this is the substitution of Remark~\ref{rem:shock substitution} at $\bar W = -\bar X^y$ and $Q_t = \varepsilon_t$. Assumption~\ref{ass:controls_panel} supplies its hypotheses, with $\e[\bar X_{jt}^y \varepsilon_t \mid S, \Lambda] = 0$ the orthogonality already used for the direct term. Part (ii) then bounds the factor-space term by $O_{\p}(1)\cdot(\hat\beta_y - \beta_y) = O_{\p}(1/\sqrt{NT})$ and the loading-space term by $O_{\p}(1/\sqrt N)\cdot(\hat\beta_y - \beta_y) = O_{\p}(1/(N\sqrt T))$. The denominator terms carry the same extra $(\hat\beta_y - \beta_y)$ factor relative to Lemma~\ref{lemma:denominator} and are therefore $o_{\p}(1)$.
\end{proof}

Both panel-control corrections are of smaller order than the GIV score. The numerator therefore reduces to the time-series expansion. The aggregate controls $X_t^d$ are residualized by $M_{X^d} = I_T - X^d(X^{d\prime}X^d)^{-1}X^{d\prime}$ exactly as before, so the influence is
\begin{equation*}
    \xi_t = z_t \bar\varepsilon_t - \Gamma_{zX^d}\Sigma_{X^d X^d}^{-1} X_t^d \varepsilon_t ,
\end{equation*}
identical to the time-series case. The denominator converges to $\Gamma_{z\tilde p}$.

\begin{Theorem} \label{theorem:estimated_controls_panel}
    Suppose Assumptions~\ref{ass:weak stationarity and idiosyncrasy} to \ref{ass:LN_moments and CLT} and Assumption~\ref{ass:controls_panel} hold with $\mu \in (0,1)$. Then, conditional on $S$, for almost every realization of the shares, the GIV estimator with panel and aggregate controls is consistent and asymptotically normal,
    \begin{equation*}
        \frac{\Gamma_{z\tilde p}}{\sqrt{V_{\mathrm{ctrl}}(S)}}\, \sqrt T\, (\hat\phi_d - \phi_d) \xrightarrow{d} \n(0,1) ,
    \end{equation*}
    with the same influence function $\xi_t$ and the same asymptotic variance $V_{\mathrm{ctrl}}(S)$ as in Theorem~\ref{theorem:estimated_controls_general}. The panel controls do not enter the limiting variance.
\end{Theorem}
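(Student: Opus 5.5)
The proof reduces Theorem~\ref{theorem:estimated_controls_panel} to Theorem~\ref{theorem:estimated_controls_general} by showing that the panel controls change neither the numerator nor the denominator at the scale that matters. The starting point is the Frisch--Waugh--Lovell representation with the aggregate controls partialled out:
\begin{equation*}
    \hat\phi_d - \phi_d = \frac{\hat z' M_{X^d}\varepsilon}{\hat z' M_{X^d} p},
\end{equation*}
where the instrument is now built from the residualized panel, $\hat z_t = z_t - S'(\hat C_t - \tilde C_t) - S'(\hat C_t^{(\beta)} - \hat C_t) - \bar X^y_{St}(\hat\beta_y - \beta_y)$. I substitute this decomposition into both the numerator and the denominator.

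In the numerator there are four pieces, each times $\varepsilon_t$, minus the control projection.
\begin{enumerate}
    \item The $z_t$ piece is handled directly.
    \item The $\hat C_t - \tilde C_t$ piece is handled by Proposition~\ref{prop:first stage effect} at $Q_t=\varepsilon_t$ and $a_N=1$, which replaces $\varepsilon_t$ by $\bar\varepsilon_t$.
    \item The last two pieces are $O_{\p}(1/\sqrt{NT})$ by Lemma~\ref{lemma:panel_control_negligible}, and hence $o_{\p}(1)$.
\end{enumerate}

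For the control projection I need $\frac1T\sum_t \hat z_t X_t^{d\prime} \to \Gamma_{zX^d}$. This requires the two panel corrections to be negligible in that average. They carry the factor $(\hat\beta_y-\beta_y)=O_{\p}(1/\sqrt{NT})$, multiplied by averages that are $O_{\p}(1)$ by Assumption~\ref{ass:controls_panel} and Remark~\ref{rem:shock substitution}. The factor-estimation piece is controlled by the same argument the paper uses in the time-series case. Collecting terms, the numerator equals $T^{-1/2}\sum_t \xi_t + o_{\p}(1)$, with the same $\xi_t$ as in Theorem~\ref{theorem:estimated_controls_general}.

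For the denominator, the leading term is $\frac1T\sum_t z_t p_t\to\Gamma_{zp}$ by Proposition~\ref{prop:denominator}. The factor correction is negligible by Lemma~\ref{lemma:denominator}, and the panel-control terms are $o_{\p}(1)$ by the last clause of Lemma~\ref{lemma:panel_control_negligible}. Subtracting the $X^d$ projection then gives the limit $\Gamma_{z\tilde p}$.

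The CLT for $T^{-1/2}\sum_t\xi_t$ conditional on $S$, with long-run variance $V_{\mathrm{ctrl}}(S)$, is taken verbatim from the proof of Theorem~\ref{theorem:estimated_controls_general}. That proof applies Corollary~\ref{corollary:central limit theorem} to $z_t\bar\varepsilon_t$ and a standard mixing CLT to the control score, combined by Cramér--Wold. Slutsky's theorem then delivers the stated normal limit, and consistency follows.

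The main obstacle is checking that Lemma~\ref{lemma:panel_control_negligible} really covers the panel corrections inside the cross-moment $\frac1T\sum_t\hat z_tX_t^{d\prime}$. That lemma is stated only for the multipliers $\varepsilon_t$ and $p_t$, and here the multiplier is $X_t^d$. The control $X_t^d$ may correlate with $\bar X^y_{jt}$, so part (ii) of Lemma~\ref{lemma:numerator} does not apply. I would instead use only part (i) together with the crude $O_{\p}(1)$ bounds. This works because the prefactor $(\hat\beta_y-\beta_y)$ already vanishes, and $\frac1T$ rather than $\frac1{\sqrt T}$ scaling is all that is required in this term.
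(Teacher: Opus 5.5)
Your proposal is correct and follows essentially the paper's argument. You split $\hat z_t$ into the known-factor instrument, the first-stage factor error (handled by Proposition~\ref{prop:first stage effect} at $Q_t=\varepsilon_t$), and the two panel-control corrections, which Lemma~\ref{lemma:panel_control_negligible} makes $O_{\p}(1/\sqrt{NT})$; the influence $\xi_t$, the denominator limit $\Gamma_{z\tilde p}$ and the CLT are then inherited from Theorem~\ref{theorem:estimated_controls_general}.

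Your explicit treatment of the cross-moment $\frac1T\sum_t \hat z_t X_t^{d\prime}$, via part~(i) of Lemma~\ref{lemma:numerator} and the $1/T$ scaling, fills in a step the paper leaves implicit under ``exactly as before.'' The same crude $1/T$-scale bound also covers the pure factor-estimation piece of that cross-moment. That matters because Proposition~\ref{prop:first stage effect} would need $\e[\bar u_{jt}X_t^d]=0$ there, and Assumption~\ref{ass:controls_panel} does not impose it.
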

\begin{proof}
    The numerator and denominator expansions are derived above. The panel-control corrections to the instrument are $O_{\p}(1/\sqrt{NT})$ by Lemma~\ref{lemma:panel_control_negligible}, dominated by the GIV score $z_t\bar\varepsilon_t$. The influence and the denominator limit therefore coincide with those of Theorem~\ref{theorem:estimated_controls_general}, and the limiting distribution follows from that theorem.
\end{proof}

The panel control is estimated from the full panel of $NT$ observations, so its error is $O_{\p}(1/\sqrt{NT})$. This shrinks faster than the contribution of the GIV score in every regime, so it leaves no trace in the limiting distribution. The rate of convergence is $\sqrt T$ in the strong regime and $\sqrt T / N^{\delta}$ in the nearly weak regime, with $\delta = \min(1 - 1/\mu, 1/2)$, exactly as without panel controls. Only the aggregate controls $X_t^d$ affect the asymptotic variance, through the channel of the previous subsection.

The supply elasticity uses the same first stage and the same estimate $\hat\beta_y$. Assumption~\ref{ass:controls_panel} already imposes the exogeneity. The estimator residualizes the equal-weights outcome,
\begin{equation*}
    \hat\phi_s = \frac{\sum_t \hat z_t \big( y_{Et} - \bar X_{Et}^{y\prime} \hat\beta_y \big)}{\sum_t \hat z_t p_t}, \qquad \bar X_{Et}^y = \frac{X_t^{y\prime} e_N}{N} .
\end{equation*}
Writing $y_{Et} - \bar X_{Et}^{y\prime}\hat\beta_y = \phi_s p_t + v_{Et} - \bar X_{Et}^{y\prime}(\hat\beta_y - \beta_y)$, the estimator adds one correction to those of Lemma~\ref{lemma:panel_control_negligible}, the outcome term $\frac{a_N}{\sqrt T}\sum_t \hat z_t \bar X_{Et}^{y\prime}(\hat\beta_y - \beta_y)$.

\begin{Theorem} \label{theorem:estimated_controls_panel_supply}
    Suppose Assumptions~\ref{ass:weak stationarity and idiosyncrasy} to \ref{ass:LN_moments and CLT} and Assumption~\ref{ass:controls_panel} hold with $\mu \in (0,1)$ and $\sqrt{T}/N \to 0$, and suppose $\e\|\bar X_{Et}^y\|^{4 + 2\pi}$ is bounded uniformly in $N$. Then, conditional on $(S, \Lambda)$, for almost every realization of the shares and loadings, the GIV estimator for the supply elasticity with panel controls is consistent and asymptotically normal,
    \begin{equation*}
        \frac{\Gamma_{zp}}{\sqrt{V_{z \bar{v}_H}(S, \Lambda)}}\, \sqrt T\, (\hat\phi_s - \phi_s) \xrightarrow{d} \n(0,1),
    \end{equation*}
    with $\Gamma_{zp}$ and $V_{z \bar{v}_H}(S, \Lambda)$ as in Theorem~\ref{theorem:estimated_strong_consistency_disaggregate}. The panel controls do not enter the limiting distribution.
\end{Theorem}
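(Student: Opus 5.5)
The plan is to reduce Theorem~\ref{theorem:estimated_controls_panel_supply} to Theorem~\ref{theorem:estimated_strong_consistency_disaggregate} by showing that every term created by the panel controls is $o_{\p}(1)$ at the $a_N/\sqrt T$ scale, with $a_N = 1$ here. Substituting $y_{Et} - \bar X_{Et}^{y\prime}\hat\beta_y = \phi_s p_t + v_{Et} - \bar X_{Et}^{y\prime}(\hat\beta_y - \beta_y)$ gives
\begin{equation*}
    \sqrt T(\hat\phi_s - \phi_s) = \frac{\frac{1}{\sqrt T}\sum_t \hat z_t v_{Et} - \frac{1}{\sqrt T}\sum_t \hat z_t \bar X_{Et}^{y\prime}(\hat\beta_y - \beta_y)}{\frac1T \sum_t \hat z_t p_t}.
\end{equation*}
The instrument now carries the residualized first stage, so $\hat z_t = z_t - S'(\hat C_t - \tilde C_t) - S'(\hat C_t^{(\beta)} - \hat C_t) - \bar X_{St}^y(\hat\beta_y - \beta_y)$. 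The numerator therefore splits into four pieces. The first is the no-controls numerator $\frac{1}{\sqrt T}\sum_t [z_t - S'(\hat C_t - \tilde C_t)] v_{Et}$. The second and third are the two panel-control corrections to the instrument, multiplied by $v_{Et}$. The fourth is the new outcome term.

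The no-controls numerator is handled verbatim by Steps 3 to 6 of Appendix~\ref{subsec:proof of estimated_disaggregate}. That argument already covers the instrument error at $Q_t = v_{Ht}$ via Proposition~\ref{prop:first stage effect}, the aggregation error, and the interaction under $\sqrt T/N \to 0$, and it yields $\frac{1}{\sqrt T}\sum_t z_t \bar v_{Ht} + o_{\p}(1)$ together with the CLT. For the two instrument corrections, I will rerun Lemma~\ref{lemma:panel_control_negligible} with $Q_t = v_{Et}$ in place of $\varepsilon_t$. The direct term factors as $(\hat\beta_y - \beta_y)\frac{1}{\sqrt T}\sum_t \bar X^y_{St} v_{Et}$. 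Its sum is $O_{\p}(\sqrt T)$ at worst by a law of large numbers and Cauchy--Schwarz, so the product is $O_{\p}(1/\sqrt N) = o_{\p}(1)$. For the factor correction I will use Remark~\ref{rem:shock substitution} with $\bar W = -\bar X^y$, invoking only part (i). Part (ii) is unavailable, since $\e[\bar X^y_{jt} v_{Et}]$ need not vanish because $v_{Et}$ contains the factors. Part (i) still gives $O_{\p}(1)$ times $(\hat\beta_y - \beta_y)$, up to remainders of order $\sqrt T/N$ and $T^{-1/2}$, so the correction is $O_{\p}(1/\sqrt{NT}) + O_{\p}(\sqrt T/N)\cdot O_{\p}(1/\sqrt{NT})$.

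The outcome term is the step I expect to be the main obstacle. A naive bound gives $(\hat\beta_y - \beta_y)\cdot\frac{1}{\sqrt T}\sum_t \hat z_t \bar X_{Et}^{y}$, which needs the second factor to be $o(\sqrt{NT})$. Using $\e\|\bar X_{Et}^y\|^{4+2\pi} < \infty$, the Rosenthal bound $\|z_t\|_{q_0\mid S} \le C h_N^{1/2}$ with $h_N = O_{\p}(1)$, and Cauchy--Schwarz, I get $\frac1T\sum_t z_t \bar X_{Et}^{y} = O_{\p}(1)$. Hence the outcome term is $O_{\p}(\sqrt T)\cdot O_{\p}(1/\sqrt{NT}) = O_{\p}(N^{-1/2})$, and the $\hat z_t - z_t$ replacement only adds smaller terms by Lemma~\ref{lemma:numerator}(i). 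If I want the sharper bound, I will first try centering: Assumption~\ref{ass:controls_panel} gives $\e[z_t \bar X_{Et}^y \mid S,\Lambda] = 0$, so Davydov's inequality would make the sum $O_{\p}(1)$. The crude bound already suffices, though.

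For the denominator, Lemma~\ref{lemma:panel_control_negligible} shows that the two corrections carry an extra factor $\hat\beta_y - \beta_y$ relative to Lemma~\ref{lemma:denominator}, so they are $o_{\p}(1)$. Then $\frac1T\sum_t \hat z_t p_t = \Gamma_{zp} + o_{\p}(1)$ with $\Gamma_{zp}^{-1} = O_{\p}(1)$ by Proposition~\ref{prop:denominator}. Slutsky's theorem combines the numerator and denominator results and delivers the stated limit with variance $V_{z\bar v_H}(S,\Lambda)$, unaffected by the controls.
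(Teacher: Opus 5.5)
Your proposal is correct and has the same skeleton as the paper's proof. It uses the same splitting of $\hat z_t$ and of the residualized outcome, Steps 3--6 of Appendix~\ref{subsec:proof of estimated_disaggregate} for the no-controls part, Remark~\ref{rem:shock substitution} for the factor correction, and the extra factor $\hat\beta_y-\beta_y$ to dispose of the denominator corrections.

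The difference is in how the new terms are bounded. For the outcome term, the paper centres using $\e[z_t\bar X^y_{Et}\mid S,\Lambda]=0$ from Assumption~\ref{ass:controls_panel}. It then applies Theorem~\ref{theorem:central limit theorem} to get $\frac{1}{\sqrt T}\sum_t z_t\bar X^{y\prime}_{Et}=O_{\p}(1)$, so the term is $O_{\p}(1/\sqrt{NT})$. You use only the moment hypothesis on $\bar X^y_{Et}$ and Cauchy--Schwarz, and accept an $O_{\p}(\sqrt T)$ sum. The $1/\sqrt{NT}$ rate of $\hat\beta_y-\beta_y$ absorbs the lost $\sqrt T$, and you end at $O_{\p}(N^{-1/2})$.

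Your bound is slower but more robust. For this step it needs neither the orthogonality of $z_t$ to the controls nor a central limit theorem for $z_t\bar X^y_{Et}$. That theorem would require mixing of the controls, which Assumption~\ref{ass:mixing for stationary time series} does not cover. You also rerun Lemma~\ref{lemma:panel_control_negligible} with the multiplier $v_{Et}$, which is the right one here. The paper's proof simply cites that lemma, which is stated for $\varepsilon_t$.

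Two remarks in your sketch should be corrected, though neither changes the conclusion.

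\begin{enumerate}
\item Under Assumption~\ref{ass:controls_panel} the controls are orthogonal to both $F_t$ and $u_t$, and $v_{Et}=e_N'(\Lambda F_t+u_t)/N$. Hence $\e[\bar X^y_{jt}v_{Et}\mid S,\Lambda]=0$, and the factors in $v_{Et}$ are not what blocks part (ii) of Remark~\ref{rem:shock substitution}. What blocks it is that the weak-dependence condition of Assumption~\ref{ass:LN_time and cross sectional dependence}.6 is not assumed for $\bar X^y_{jt}v_{Et}$. Avoiding part (ii) is still the right call.
\item Part (i) leaves the loading-space sum $\frac{1}{\sqrt T}\sum_t\zeta^W_t v_{Et}$ unbounded, so it does not by itself give ``$O_{\p}(1)$ times $\hat\beta_y-\beta_y$''. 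Add one line: Cauchy--Schwarz with $\e\|\zeta^W_t\|^2=O(N^{-1})$ bounds that sum by $O_{\p}(\sqrt{T/N})$. After the factor $\hat\beta_y-\beta_y$ this is $O_{\p}(1/N)$, still negligible.
\end{enumerate}
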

\begin{proof}
    The instrument corrections are those of Lemma~\ref{lemma:panel_control_negligible} and are $O_{\p}(1/\sqrt{NT})$. For the outcome term, factor out the control estimate,
    \begin{equation*}
        \frac{a_N}{\sqrt T}\sum_t \hat z_t \bar X_{Et}^{y\prime}(\hat\beta_y - \beta_y) = \left( \frac{a_N}{\sqrt T}\sum_t \hat z_t \bar X_{Et}^{y\prime} \right) (\hat\beta_y - \beta_y).
    \end{equation*}
    By Assumption~\ref{ass:controls_panel}, $\e[z_t \bar X_{Et}^y \mid S, \Lambda] = 0$, so the Central Limit Theorem~\ref{theorem:central limit theorem} gives $\frac{a_N}{\sqrt T}\sum_t z_t \bar X_{Et}^{y\prime} = O_{\p}(1)$, and replacing $z_t$ by $\hat z_t$ changes the sum at smaller order by Proposition~\ref{prop:first stage effect}. With $\hat\beta_y - \beta_y = O_{\p}(1/\sqrt{NT})$ the outcome term is $O_{\p}(1/\sqrt{NT})$. Every correction is dominated by the GIV score, so the limit is that of Theorem~\ref{theorem:estimated_strong_consistency_disaggregate}.
\end{proof}

The aggregate controls $X_t^d$ enter only the demand equation and play no role here. If time-series controls enter the supply equation as well, Assumption~\ref{ass:controls_supply} and the argument of the previous subsection apply unchanged. The rate is $\sqrt T$ in the strong regime and $\sqrt T / N^{\delta}$ in the nearly weak regime, exactly as without controls. 
\section{Data Construction Notes} \label{appendix_data}

\subsection{Copper}

Monthly country-level data from Bloomberg cover January~2009 to December~2025
($T = 204$ months).  The supply panel for the GIV instrument consists of refined
copper supply across $N = 29$ countries, including a rest-of-world residual.

The price series is the LME spot copper price (monthly average), deflated by
U.S.\ CPI rebased to 2015~$= 100$.  The transformation
in~\eqref{eq:yoygrowth} is applied to all series, leaving $T = 192$ estimation
periods.  The covariate matrix~$X_t$ includes an intercept, one lag of aggregate
refined demand growth, and the trade-weighted U.S.\ dollar index.

\subsection{Crude Oil}

Monthly EIA International Energy Statistics data on crude oil production cover
January~1973 to November~2025 ($T = 635$ months).  The USSR and its successor
states are treated as a single continuous unit, the Former USSR series pre-1992
and the sum of 15 successor states post-1991.  After dropping countries with any
zero or missing observation, the panel has $N = 21$ units, 20 countries plus
rest of world.  The year 2020 is excluded from estimation to avoid contamination
from the COVID-19 production collapse.

The primary price series is a splice of the FRED OILPRICE series (January
1946--August~2024) and WTI (September 2024--December 2025), deflated by U.S.\
CPI rebased to 2015~$= 100$.  The transformation in~\eqref{eq:yoygrowth} is
applied to all series, leaving $T = 611$ estimation periods.  Aggregate demand
is constructed from aggregate production growth adjusted for inventory changes.
The covariate matrix~$X_t$ includes an intercept and two lags of aggregate
production growth.

\subsection{Natural Gas}

Monthly country-level natural gas production data are obtained from the JODI
Gas Database, covering January~2010 to November~2025 ($T = 191$ months).
Production is measured in million standard cubic meters.  The sample begins
in January~2010 because most major producing countries---including the United
States, Canada, and Norway---are not reported in the JODI database before
that date.  It ends in November~2025 because China, a major producer,
is missing for December~2025.

Two major producers, Iran and Qatar, have substantial gaps in JODI
reporting.  Iran accounts for approximately 7\% of global production but
ceased reporting after mid-2018.  Qatar accounts for approximately 6.6\% but
has intermittent gaps between 2010 and 2018.  To address this, we supplement
the JODI data with annual dry natural gas production from the U.S.\ Energy
Information Administration (EIA), reported in billion cubic feet.  For years
in which both sources have complete monthly coverage, the ratio of JODI gross
production to EIA dry gas production is stable, approximately 1.25 for Iran
and 1.23 for Qatar.  For each year with missing or zero JODI months, we
scale the EIA annual total by this ratio, subtract the sum of any valid JODI
months, and distribute the residual uniformly across the missing months.  For
2025, where EIA data are not yet available, we use the 2024 EIA value as a
proxy.  Iran's series remains approximately 85\% interpolated and is
therefore absorbed into Rest of World rather than retained as a separate
cross-sectional unit.  Qatar has better actual coverage and is kept as an
individual country.

After this interpolation, countries with any remaining zero or missing
production value over the sample period are absorbed into Rest of World.  The
final panel contains $N = 27$ countries.  The year 2020 is excluded from
estimation to avoid contamination from the COVID-19 demand collapse.

The price series is the Henry Hub Natural Gas Spot Price (dollars per million
British thermal units), deflated by U.S.\ CPI rebased to
2015~$= 100$.  The transformation in~\eqref{eq:yoygrowth} is applied to all
series.  Combined with the lag burn-in, the effective estimation sample has
$T = 178$ periods.  The covariate matrix~$X_t$ includes an intercept, one lag
of aggregate production growth, the VIX, the year-on-year midpoint growth rate
of the real WTI crude oil price, and the trade-weighted U.S.\ dollar index.
The real oil price is included because natural gas and oil are partial
substitutes in power generation and industrial use, making oil price variation
a relevant demand shifter.

Figure~\ref{fig:prices} shows the year-on-year real price growth rates for
all three commodities, illustrating the sample periods and key episodes of
price variation exploited for identification.

\begin{figure}[H]
  \centering
  \includegraphics[width=\linewidth]{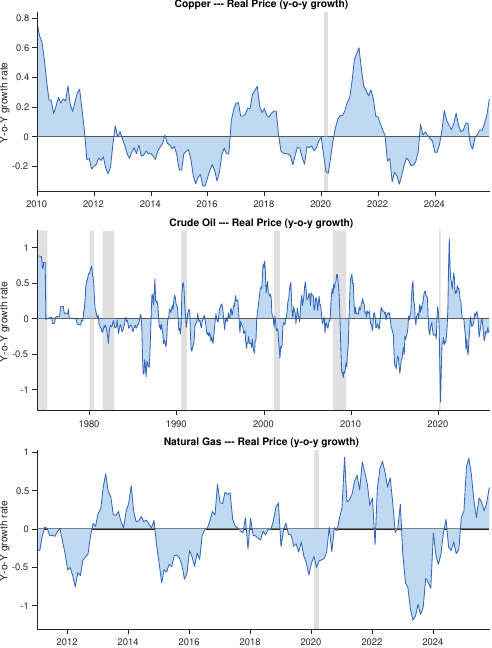}
  \caption{Year-on-year real price growth rates.
    Copper: January 2009--December 2025 (LME spot, CPI-deflated, 2015 base).
    Crude Oil: January 1973--November 2025 (spliced OILPRICE/WTI, CPI-deflated).
    Natural Gas: January 2011--November 2025 (Henry Hub spot, CPI-deflated, 2015 base).
    Gray bands indicate NBER recession dates.}
  \label{fig:prices}
\end{figure}
\section{Simulation Design}    \label{app:mc design}

\subsection{Data Generating Process}

The two structural equations of interest are the aggregate demand for a commodity,
\begin{equation}\label{eq:structural}
    d_t = y_{St} = \phi_d p_t + \varepsilon_t,
\end{equation}
The supply panel follows the factor model
\begin{equation}\label{eq:factor_model}
    y_t = \iota_N \phi_s p_t + \iota_N F_t^1 + \tilde{\Lambda} \tilde{F}_t + u_t,
\end{equation}
where $F_t^1$ is a common factor, $\tilde{\Lambda}$ is the $N \times r - 1$ matrix of demeaned factor loadings, $\tilde{F}_t$ is the $r-1 \times 1$ vector of factors, and $u_t$ is the $N \times 1$ vector of idiosyncratic shocks. Market clearing then gives the reduced form of the price,
\begin{equation}\label{eq:mc_price}
    p_t = \frac{F_t^1 + S'\tilde{\Lambda}\tilde{F}_t + S'u_t - \varepsilon_t}{\phi_d - \phi_s}.
\end{equation}

The supply elasticity enters through the loading $\iota_N$ on $p_t$, which demeaning annihilates. So $D_N y_t = \tilde{\Lambda}\tilde{F}_t + D_N u_t$ whatever the value of $\phi_s$, and the instrument block of the design does not depend on it. Setting $\phi_s = 0$ collapses \eqref{eq:mc_price} to $p_t = (y_{St} - \varepsilon_t)/\phi_d$ and recovers the design used for the aggregate results below. The disaggregated results set $\phi_s$ to its empirical estimate instead.

The demeaned factor loadings $\tilde{\Lambda}$ and the factors $\tilde{F}_t$ are extracted from each empirical estimation and held fixed across all Monte Carlo replications.

All remaining quantities are drawn independently each replication. The common factor and idiosyncratic shocks are drawn from normal distributions and scaled to be comparable to the empirical factor:
\begin{equation}\label{eq:draws}
    F_t^1 \sim \mathcal{N}(0, \sigma_F^2), \qquad
    u_{it} \sim \mathcal{N}(0, \sigma_u^2),
\end{equation}
where $\sigma_F = \text{std}(\hat{F})$ matches the standard deviation of the empirical factor series, and $\sigma_u = \text{std}(\hat{u}_t)$ matches the pooled standard deviation of the empirical PCA estimation. The structural error $\varepsilon_t$ is drawn with $\text{std}(\varepsilon_t) = \sigma_F$ and $\text{corr}(\varepsilon_t, \tilde{F}_t) = 0.8$.

\subsubsection{Calibration}

For the Copper DGP, the refined copper supply panel consists of year-on-year midpoint growth rates for $N = 29$ countries over $T = 192$ months. The ER criterion yields $r-1 = 1$ factor. The calibration scales are $\sigma_F = 0.109$ and $\sigma_u = 0.171$, and the true elasticity is set to $\phi_d = -0.148$, the control-robust GIV estimate of Table~\ref{tab:elasticity_controls}.

For the Crude Oil DGP, the crude oil production panel consists of year-on-year midpoint growth rates for $N = 21$ countries over $T = 611$ months (COVID year 2020 dropped). The ER criterion yields $r-1 = 2$ factors. The calibration scales are $\sigma_F = 0.090$ and $\sigma_u = 0.142$, and the true elasticity is set to $\phi_d = -0.104$, the control-robust GIV estimate of Table~\ref{tab:elasticity_controls}.

The supply elasticity is calibrated the same way, to the control-robust estimate of Table~\ref{tab:elasticity_controls} on the corresponding panel. This gives $\phi_s = 0.141$ for Copper and $\phi_s = 0.256$ for Crude Oil. The aggregate results of Section~\ref{subsec:mc results aggregate} are run at $\phi_s = 0$, which is the case \eqref{eq:factor_model} nests.

\subsubsection{Individual sizes and Granularity}

Individual sizes are drawn from the power distribution $\mathbb{P}(s_i > s) = c \, s^{-\mu}$ via the inverse CDF transform $s_i = (1 - U_i)^{-1/\mu}$ with $U_i \sim \mathcal{U}[0,1]$, and normalized so that $\sum_{i=1}^N S_i = 1$. Recall that the Pareto exponent $\mu$ determines the market concentration, and consequently the strength of the instrument. We consider $\mu \in \{0.3, 0.5, 0.8, 1.2, 1.4, 1.8, 2.5, 3.5, 6.0\}$, with $B = 5{,}000$ replications per value.

From this DGP, we construct the observational data using \eqref{eq:factor_model}, $y_{St} = S' y_t$, and the price \eqref{eq:mc_price}. The price is endogenous, and the sign of the resulting OLS bias differs across the two equations. For the demand equation, $\text{Cov}(p_t, \varepsilon_t) = -\sigma_\varepsilon^2 / \phi_d > 0$ since $\phi_d < 0$, so the bias is upward. For the supply equation, $\text{Cov}(p_t, v_{Et}) = [\sigma_F^2 + \sigma_u^2/N]/(\phi_d - \phi_s) < 0$ since $\phi_d - \phi_s < 0$, so the bias is downward. The two tables therefore misbehave in opposite directions once identification weakens, and neither is anomalous.

\section{Augmenting the Cross Section}    \label{app:augment N}

The Crude Oil panel has only $N = 21$ countries, which limits the scope for
studying the estimator's behavior as $N$ grows. To investigate the effect of a
larger cross-section while preserving the empirical factor structure, we augment
$N$ by resampling rows of $\tilde{\Lambda}$ with replacement. For a target
$N_{\text{sim}} > N$, we draw $N_{\text{sim}}$ rows from the $21 \times r$
empirical loading matrix uniformly with replacement. The factors $\tilde{F}_t$,
the calibration scales $\sigma_F$ and $\sigma_u$, and the true elasticities
remain unchanged. Each replication still draws fresh $u_{it}$, $F_t^1$,
$\varepsilon_t$, and $S$, so the augmented countries are distinguished by their
idiosyncratic shocks and shares even when they share a loading vector.
Everything else follows the design of Appendix~\ref{app:mc design}. I report
results for $N \in \{50, 100\}$, to be read against the $N = 21$ panels of
Tables~\ref{tab:mc_table1} and~\ref{tab:dis_table1}.

\subsection{Demand Elasticity}

\begin{table}[htbp]
\centering
\caption{Monte Carlo Simulation Results: Augmented Cross-Section (Crude Oil)}
\label{tab:mc_table2}
\begin{tabular}{ccccccc}
\toprule
$\mu$ & Median $\hat{\phi}_d$ & Bias & RMSE & Coverage & CI Length & Median $F$ \\
\midrule
\multicolumn{7}{l}{\textit{Panel A: Crude Oil ($N=50$, $T=611$, $\phi_d=-0.104$)}} \\
\midrule
0.3 & -0.1044 & 0.0000 & 0.0027 & 0.940 & 0.0082 & 476.7 \\
0.5 & -0.1045 & -0.0001 & 0.0035 & 0.949 & 0.0107 & 285.3 \\
0.8 & -0.1045 & -0.0003 & 0.0057 & 0.945 & 0.0169 & 114.1 \\
1.2 & -0.1044 & -0.0007 & 0.0118 & 0.950 & 0.0307 & 37.3 \\
1.4 & -0.1043 & -0.0019 & 0.0427 & 0.954 & 0.0380 & 24.7 \\
1.8 & -0.1042 & 0.0377 & 2.9387 & 0.945 & 0.0543 & 12.1 \\
2.5 & -0.1046 & -0.0243 & 0.7354 & 0.953 & 0.0877 & 5.0 \\
3.5 & -0.1008 & -0.0733 & 4.8830 & 0.961 & 0.1380 & 2.1 \\
6.0 & -0.0910 & 0.0209 & 1.0870 & 0.974 & 0.2642 & 0.7 \\
\midrule
\multicolumn{7}{l}{\textit{Panel B: Crude Oil ($N=100$, $T=611$, $\phi_d=-0.104$)}} \\
\midrule
0.3 & -0.1044 & 0.0000 & 0.0023 & 0.947 & 0.0079 & 512.0 \\
0.5 & -0.1045 & -0.0001 & 0.0031 & 0.952 & 0.0099 & 326.8 \\
0.8 & -0.1045 & -0.0003 & 0.0056 & 0.948 & 0.0165 & 121.3 \\
1.2 & -0.1041 & -0.0015 & 0.0323 & 0.958 & 0.0321 & 33.3 \\
1.4 & -0.1047 & -0.0017 & 0.0570 & 0.950 & 0.0423 & 19.5 \\
1.8 & -0.1045 & -0.0021 & 0.2398 & 0.955 & 0.0641 & 8.6 \\
2.5 & -0.1024 & -0.0148 & 0.7250 & 0.956 & 0.1095 & 3.1 \\
3.5 & -0.0969 & 0.0157 & 1.3967 & 0.970 & 0.1753 & 1.3 \\
6.0 & -0.0856 & 0.0290 & 1.6255 & 0.980 & 0.3036 & 0.6 \\
\bottomrule
\end{tabular}

\vspace{4pt}
\raggedright\footnotesize
\textit{Notes:} Shares drawn from power law $\mathbb{P}(s_i > s) = c\,s^{-\mu}$. Coverage is the fraction of replications where the true value lies in the 95\% confidence interval. Standard errors: White heteroskedasticity-robust. $F$ is the first-stage $F$-statistic.
\end{table}

The augmented Crude Oil experiments separate the role of $N$ from the role of
$T$ and confirm the rate-based predictions of the theory regime by regime.
In the strong regime, performance is essentially invariant to $N$. RMSE at
$\mu = 0.3$ moves only from $0.0033$ at $N = 21$ to $0.0027$ at $N = 50$
and $0.0023$ at $N = 100$. In the nearly weak regime with
$\mu \in (1,2)$, consistency requires $N/T \to 0$. Increasing $N$ at
fixed $T$ therefore degrades performance, and indeed RMSE at $\mu = 1.4$
rises from $0.034$ at $N = 21$ to $0.043$ at $N = 50$ and $0.057$ at
$N = 100$. For $\mu > 2$ the rate is $\sqrt{T/N}$, so larger $N$ at
fixed $T$ is doubly costly. Raising $N$ at fixed $T$ also lifts $N/T$ away
from zero, moving the design out of the nearly weak regime and toward the
weak regime $N/T \to c > 0$, where the estimator is inconsistent. Consistent
with this, CI length grows uniformly with
$N$ and the median $F$ falls. At $\mu = 6.0$ it declines from $1.0$ at
$N = 21$ to $0.6$ at $N = 100$. RMSE remains large and
tail-dominated throughout, reaching $1.6$ at $N = 100$.

\subsection{Supply Elasticity}

\begin{table}[htbp]
\centering
\caption{Monte Carlo Simulation Results, Disaggregated Supply Elasticity: Augmented Cross-Section (Crude Oil)}
\label{tab:dis_table2}
\begin{tabular}{ccccccc}
\toprule
$\mu$ & Median $\hat{\phi}_s$ & Bias & RMSE & Coverage & CI Length & Median $F$ \\
\midrule
\multicolumn{7}{l}{\textit{Panel A: Crude Oil ($N=50$, $T=611$, $\phi_s=0.256$)}} \\
\midrule
0.3 & 0.2562 & 0.0005 & 0.0146 & 0.946 & 0.0446 & 476.7 \\
0.5 & 0.2563 & 0.0009 & 0.0187 & 0.952 & 0.0577 & 285.3 \\
0.8 & 0.2563 & 0.0023 & 0.0315 & 0.945 & 0.0911 & 114.1 \\
1.2 & 0.2566 & 0.0077 & 0.0674 & 0.942 & 0.1620 & 37.3 \\
1.4 & 0.2561 & 0.0178 & 0.1943 & 0.943 & 0.1998 & 24.7 \\
1.8 & 0.2563 & 0.0402 & 2.5172 & 0.929 & 0.2856 & 12.1 \\
2.5 & 0.2491 & 0.1360 & 4.2782 & 0.914 & 0.4506 & 5.0 \\
3.5 & 0.2257 & -1.5478 & 120.7673 & 0.889 & 0.6849 & 2.1 \\
6.0 & 0.1393 & -0.0447 & 3.6127 & 0.859 & 1.1612 & 0.7 \\
\midrule
\multicolumn{7}{l}{\textit{Panel B: Crude Oil ($N=100$, $T=611$, $\phi_s=0.256$)}} \\
\midrule
0.3 & 0.2565 & 0.0005 & 0.0130 & 0.944 & 0.0442 & 512.0 \\
0.5 & 0.2564 & 0.0009 & 0.0176 & 0.949 & 0.0552 & 326.8 \\
0.8 & 0.2558 & 0.0018 & 0.0331 & 0.945 & 0.0908 & 121.3 \\
1.2 & 0.2554 & 0.0144 & 0.3000 & 0.947 & 0.1738 & 33.3 \\
1.4 & 0.2566 & 0.0149 & 0.5178 & 0.939 & 0.2269 & 19.5 \\
1.8 & 0.2553 & 0.0268 & 2.1006 & 0.925 & 0.3388 & 8.6 \\
2.5 & 0.2490 & 0.0809 & 4.0606 & 0.906 & 0.5770 & 3.1 \\
3.5 & 0.1925 & -0.0376 & 5.9266 & 0.867 & 0.8647 & 1.3 \\
6.0 & 0.0917 & -0.0699 & 8.4745 & 0.828 & 1.2075 & 0.6 \\
\bottomrule
\end{tabular}

\vspace{4pt}
\raggedright\footnotesize
\textit{Notes:} Estimates of the disaggregated supply elasticity $\phi_s$ in $y_t = e_N\phi_s p_t + \Lambda F_t + u_t$, aggregated with equal weights $\hat\phi_s(I_N)$ as recommended in Theorems~\ref{theorem:estimated_strong_consistency_disaggregate} and~\ref{theorem:estimated_weak_consistency_disaggregate}. Shares drawn from power law $\mathbb{P}(s_i > s) = c\,s^{-\mu}$. Structural error $\varepsilon_t$ correlated with the dominant common factor at $\rho=0.8$ by construction. Coverage is the fraction of replications where the true value lies in the 95\% confidence interval. Standard errors: White heteroskedasticity-robust, with the first-stage correction for estimation error in $\hat z$. $F$ is the first-stage $F$-statistic.
\end{table}

Table~\ref{tab:dis_table2} augments the cross-section at fixed $T$ for Crude Oil, which pushes
$N/T$ up and so moves further into the failure case rather than out of it. Behavior degrades in
the direction the $N/T \to c$ theory predicts. At $\mu = 6.0$ coverage falls from $0.866$ to
$0.859$ to $0.828$ and median $F$ falls from $1.0$ to $0.7$ to $0.6$ as $N$ goes from $21$ to $50$
to $100$. 

\section{Estimation of Factor Loadings} \label{app:estimation of factor loadings}
The demeaned disaggregated panel has a factor structure. That is,
\begin{equation*}
    \Bar{y}_t = \tilde{\Lambda} \tilde{F}_t + \Bar{u}_t
\end{equation*}
where $\tilde{\Lambda}$ is a $N \times r - 1$ matrix. The columns of the estimator, $\hat{\Lambda}$ are the eigenvectors corresponding to the $r-1$ largest eigenvalues of $\frac{\Bar{Y}' \Bar{Y}}{NT}$.
\begin{equation} \label{eq:expansion of the estimator matrix}
    \frac{\Bar{Y}' \Bar{Y}}{NT} = \frac{\tilde{\Lambda} \tilde{F}' \tilde{F} \tilde{\Lambda}'}{NT} + \frac{\tilde{\Lambda} \tilde{F}' \Bar{u}}{NT} + \frac{\Bar{u}'\tilde{F} \tilde{\Lambda}'}{NT} +  \frac{\Bar{u}'\Bar{u}}{NT}
\end{equation}

That is, the estimator satisfies
\begin{equation*}
    \left[ \frac{\Bar{Y}' \Bar{Y}}{NT} \right] \hat{\Lambda} = \hat{\Lambda}V_{NT}
\end{equation*}
where $V_{NT}$ is the diagonal matrix consisting of the $r-1$ largest eigenvalues of $\frac{\Bar{Y}' \Bar{Y}}{NT}$ in decreasing order. Using \eqref{eq:expansion of the estimator matrix}, we can write this as:
\begin{equation}\label{eq:estimator_expanded}
    \left[ {\frac{\tilde{\Lambda} \tilde{F}' \tilde{F} \tilde{\Lambda}'}{NT} \cdot \hat{\Lambda} } + {\frac{\tilde{\Lambda} \tilde{F}' \Bar{u}}{NT} \cdot \hat{\Lambda} } + {\frac{\Bar{u}'\tilde{F} \tilde{\Lambda}'}{NT} \cdot \hat{\Lambda} } + {\frac{\Bar{u}'\Bar{u}}{NT} \cdot \hat{\Lambda}}  \right] = \hat{\Lambda}V_{NT}
\end{equation}

Identification of factor models requires restrictions. We impose $\hat{\Lambda}' \hat{\Lambda} = I_{r-1}$. In this case, the estimated factor loadings will be a consistent estimator of a rotation of the true loadings. That is, the estimated factor loadings, $\hat{\Lambda}$ will be a consistent estimator of $\tilde{\Lambda} \hat{H}$. Let the arbitrary rotation matrix be
\begin{equation*}
    \hat{H} = \left[ \frac{\tilde{F}' \tilde{F}}{NT} \right]\cdot \tilde{\Lambda}' \hat{\Lambda} \cdot V_{NT}^{-1}
\end{equation*}
Using this definition, we write \eqref{eq:estimator_expanded} as
\begin{equation}
    [\hat{\Lambda} - \tilde{\Lambda} \hat{H} ] V_{NT} = {\frac{\tilde{\Lambda} \tilde{F}' \Bar{u}}{NT} \cdot \hat{\Lambda} } + {\frac{\Bar{u}'\tilde{F} \tilde{\Lambda}'}{NT} \cdot \hat{\Lambda} } + {\frac{\Bar{u}'\Bar{u}}{NT} \cdot \hat{\Lambda}}
\end{equation}

Consider the $i$'th row of the above matrix equation
\begin{align*}
    V_{NT} [ \hat{\Blambda}_i - \hat{H}' \tilde{\Blambda}_i ] &= \frac{1}{N} \sum_{j=1}^N \hat{\Blambda}_j \tilde{\Blambda}_i' \tilde{F}' \Bar{u}_j/T + \frac{1}{N} \sum_{j=1}^N \hat{\Blambda}_j \tilde{\Blambda}_j' \tilde{F}' \Bar{u}_i/T + \\
    & \frac{1}{N} \sum_{j=1}^N \hat{\Blambda}_j \left( \Bar{u}_i' \Bar{u}_j/T - \e \left[ \Bar{u}_i' \Bar{u}_j/T \right]  \right) + \frac{1}{N} \sum_{j=1}^N \hat{\Blambda}_j \e \left[ \Bar{u}_i' \Bar{u}_j/T \right]
\end{align*}
Now we will define some terms for convenience
\begin{align*}
    \xi_{ij} &= \tilde{\Blambda}_i' \tilde{F}' \Bar{u}_j/T \\
    \eta_{ij}  &=  \tilde{\Blambda}_j' \tilde{F}' \Bar{u}_i/T \\
    \Bar{\gamma}(i,j) &= \e \left[ \Bar{u}_i' \Bar{u}_j/T \right] = \e \left[ \frac{1}{T} \sum_{t=1}^T \Bar{u}_{it} \Bar{u}_{jt} \right] \\
    \zeta_{ij} &= \Bar{u}_i' \Bar{u}_j/T - \Bar{\gamma}(i,j)
\end{align*}

Using these, we can rewrite the expression for the $i$'th row of the estimator as
\begin{equation} \label{eq:estimator for lambda expanded row}
 V_{NT} \big(  \hat{\Blambda}_i - \hat{H}' \tilde{\Blambda}_i \big) =  \frac{1}{N} \sum_{j=1}^N \hat{\Blambda}_j \Bar{\gamma}(i,j) + \frac{1}{N} \sum_{j=1}^N \hat{\Blambda}_j \zeta_{ij} + \frac{1}{N} \sum_{j=1}^N \hat{\Blambda}_j \eta_{ij} + \frac{1}{N} \sum_{j=1}^N \hat{\Blambda}_j \xi_{ij} 
\end{equation}

We will analyze each of the terms in the summand on the right hand side. But first, Recall that $\hat{\Lambda}' \hat{\Lambda} = I_r$ implies
\begin{align*}
      \Vert \hat{\Lambda} \Vert &= (\text{tr}[\hat{\Lambda}'\hat{\Lambda}])^{1/2} = \sqrt{r} = O_{\p}(1) \\
      \frac{1}{N} \sum_{i=1}^N \Vert {\lambda}_i \Vert ^2 &= \frac{1}{N} \text{tr} \left( {\Lambda} {\Lambda}' \right) = \frac{1}{N} \text{tr} \left( {\Lambda}' {\Lambda} \right) \xrightarrow{p} \text{tr}(\Sigma_{\Lambda}) \\
      \frac{1}{N} \sum_{i=1}^N \Vert {\hat{\lambda}}_i \Vert ^2 &= \frac{1}{N} \text{tr} \left( {\hat{\Lambda}} {\hat{\Lambda}}' \right) = \text{tr} \left( {\hat{\Lambda}}' {\hat{\Lambda}}/N \right) = r
\end{align*}

Note that the expressions involving our estimator contains the de-meaned idiosyncratic error. But all our assumptions are on the original error. Hence, it is useful to fix the relationship between the two.
\begin{align*}
    \Bar{u}_{it} \Bar{u}_{jt} &= (u_{it} - \Bar{u}_t)(u_{jt} - \Bar{u}_t) \\
    &= (u_{it} - O_{\p} ( \frac{1}{\sqrt{N}} \ ) )(u_{jt} - O_{\p} ( \frac{1}{\sqrt{N}}  ) ) \\
    &= u_{it} u_{jt} + O_{\p}\left( \frac{1}{\sqrt{N}} \right)
\end{align*}
where we use the fact that $u_t$ is mean zero, finite second moments, and weak cross-sectional dependence under Assumption \ref{ass:LN_time and cross sectional dependence}. Hence, for considerations of stochastic order, we can replace $\Bar{u}_{it} \Bar{u}_{jt}$ by $u_{it} u_{jt}$.

We will proceed further by stating the following Theorems and Lemmas. 
\begin{Lemma} \label{lemma:gamma_bounded}
    Suppose $|\gamma(i,j)|\leq M$ for all $1\leq i,j \leq N$ and $\frac{1}{N}\sum_{i=1}^N\sum_{j=1}^N|\gamma(i,j)|\leq M$. Then,
    \begin{equation*}
        \frac{1}{N}\sum_{i=1}^N\sum_{j=1}^N | \bar{\gamma}(i,j)|^2\leq M^2.
    \end{equation*}
\end{Lemma}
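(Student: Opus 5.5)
Since $|\bar\gamma(i,j)|^2\le \max_{i,j}|\bar\gamma(i,j)|\cdot|\bar\gamma(i,j)|$, the claim follows once two things hold: the entries of $\bar\gamma$ are bounded, and $N^{-1}\sum_{i,j}|\bar\gamma(i,j)|$ is bounded, each in terms of $M$. This is the argument Bai (2003) uses for the undemeaned $\gamma$. The only new work is to move both hypotheses from $\gamma$ to $\bar\gamma$.

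Write $\bar u_{it}=u_{it}-N^{-1}\sum_k u_{kt}$ and expand the covariance:
\begin{equation*}
\bar\gamma(i,j)=\gamma(i,j)-\frac{1}{N}\sum_{k}\gamma(k,j)-\frac{1}{N}\sum_{k}\gamma(i,k)+\frac{1}{N^2}\sum_{k,l}\gamma(k,l).
\end{equation*}
I bound the four pieces separately.

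\textbf{Entrywise bound.} The first term is at most $M$ by hypothesis. Each single average satisfies $N^{-1}\sum_k|\gamma(k,j)|\le M$, again using the entrywise bound. The double average satisfies $N^{-2}\sum_{k,l}|\gamma(k,l)|\le M/N\le M$ by the summability hypothesis. Hence $|\bar\gamma(i,j)|\le 4M$.

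\textbf{Summability.} The first term contributes $N^{-1}\sum_{i,j}|\gamma(i,j)|\le M$. The second contributes $N^{-1}\sum_{i,j}N^{-1}\sum_k|\gamma(k,j)|=N^{-1}\sum_{k,j}|\gamma(k,j)|\le M$, and the third is symmetric to it. The fourth contributes $N\cdot N^{-2}\sum_{k,l}|\gamma(k,l)|\le M$. Hence $N^{-1}\sum_{i,j}|\bar\gamma(i,j)|\le 4M$.

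Multiplying the two bounds gives $N^{-1}\sum_{i,j}|\bar\gamma(i,j)|^2\le 16M^2$. This is a constant multiple of $M^2$, which is all the later $O_{\p}$ arguments use.

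The main obstacle is the constant. The lemma states $M^2$ exactly, and the crude triangle inequality loses a factor of $16$. To recover the sharp constant I would use that $\bar\gamma=D_N\Gamma D_N$ with $\Gamma=(\gamma(i,j))$, and that $D_N$ is an orthogonal projection. This gives $\sum_{i,j}\bar\gamma(i,j)^2=\|D_N\Gamma D_N\|_F^2\le\|\Gamma\|_F^2$. The undemeaned bound $\|\Gamma\|_F^2\le \max|\gamma|\sum|\gamma|\le M\cdot NM$ then yields exactly $N^{-1}\sum_{i,j}\bar\gamma(i,j)^2\le M^2$.

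I would present this projection argument as the proof. It is both shorter and sharp, and it needs only the Frobenius contraction $\|PAP\|_F\le\|A\|_F$ for an orthogonal projection $P$. That contraction holds because left and right multiplication by a projection cannot increase the Frobenius norm.
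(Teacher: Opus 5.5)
Your proof is correct, and the projection argument you settle on takes a different route from the paper's. The paper applies the Bai-style argument directly to $\bar\gamma$. Cauchy--Schwarz gives $|\bar\gamma(i,j)|\le\sqrt{\bar\gamma(i,i)\bar\gamma(j,j)}$. After tacitly bounding the diagonal $\bar\gamma(i,i)$ by $M$, this reduces $N^{-1}\sum_{i,j}\bar\gamma(i,j)^2$ to $M\cdot N^{-1}\sum_{i,j}|\bar\gamma(i,j)|$. The paper then returns to the hypotheses on $\gamma$ through an entrywise comparison $\bar\gamma(i,j)\le C\gamma(i,j)$, so it concludes only $CM^2$, not the stated $M^2$.

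That comparison is the weak step. Read in absolute value, as the last inequality needs, it already fails for $\Omega=\sigma^2 I_N$: there $\gamma(i,j)=0$ but $\bar\gamma(i,j)=-\sigma^2/N$ for $i\neq j$. Your four-term expansion is the correct way to carry the hypotheses from $\gamma$ to $\bar\gamma$, giving $|\bar\gamma(i,j)|\le 4M$ and $N^{-1}\sum_{i,j}|\bar\gamma(i,j)|\le 4M$. Your contraction $\|D_N\Gamma D_N\|_F\le\|\Gamma\|_F$, with $\Gamma=(\gamma(i,j))$, bypasses the transfer altogether. It yields the sharp constant from $\|\Gamma\|_F^2\le \max|\gamma|\sum|\gamma|\le NM^2$.

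The expansion still buys something the Frobenius bound does not: entrywise and row-sum control of $\bar\gamma$. Combined with the row-sum condition in Assumption~\ref{ass:LN_time and cross sectional dependence}.2a, it gives $\sum_j|\bar\gamma(i,j)|\le 4M$ for every $i$. The paper uses exactly this later, when it bounds $N^{-1}\sum_j \lambda_j\bar\gamma(i,j)$ by $N^{-1}\max_j\|\lambda_j\|\sum_j|\bar\gamma(i,j)|$. It is therefore worth keeping both arguments.
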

\begin{proof}
    Recall that
    \begin{equation*}
        \gamma(i,j) = \e \left[ \frac{1}{T} \sum_{t=1}^T \Bar{u}_{it} \Bar{u}_{jt} \right]
    \end{equation*}
    By the continuous application of the Cauchy-Scwartz inequality,
    \begin{align*}
        |\Bar{\gamma}(i,j) | & \leq \e \left[ \frac{1}{T} \sum_{t=1}^T |\Bar{u}_{it} \Bar{u}_{jt} | \right] \\
        & \leq \e \left[ \sqrt{\frac{1}{T} \sum_{t=1}^T |\Bar{u}_{it}|^2} \cdot \sqrt{\frac{1}{T} \sum_{t=1}^T |\Bar{u}_{jt}|^2} \right]  \\
        & \leq \sqrt{ \e \left[ \frac{1}{T} \sum_{t=1}^T |\Bar{u}_{it}^2| \right] \cdot  \e \left[ \frac{1}{T} \sum_{t=1}^T |\Bar{u}_{jt}^2| \right]} \\
        & = \sqrt{\Bar{\gamma}(i,i)} \cdot \sqrt{\Bar{\gamma}(j,j)}
    \end{align*}
Hence, $\frac{|\Bar{\gamma}(i,j) |}{\sqrt{\Bar{\gamma}(i,i) \Bar{\gamma}(j,j)}} \leq 1$ and
\begin{align*}
    \frac{1}{N}\sum_{i=1}^N \sum_{j=1}^N |\Bar{\gamma}(i,j)|^2 &= \frac{1}{N}\sum_{i=1}^N\sum_{j=1}^N \Bar{\gamma}(i,i) \Bar{\gamma}(j,j) \left( \frac{\Bar{\gamma}(i,j)}{\sqrt{\Bar{\gamma}(i,i) \Bar{\gamma}(j,j)}} \right)^2 \\
    & \leq M \frac{1}{N} \sum_{i=1}^N \sum_{j=1}^N  \sqrt{\Bar{\gamma}(i,i) \Bar{\gamma}(j,j)} \left( \frac{\Bar{\gamma}(i,j)}{\sqrt{\Bar{\gamma}(i,i) \Bar{\gamma}(j,j)}} \right)^2 \\
    & \leq M \frac{1}{N} \sum_{i=1}^N \sum_{j=1}^N \sqrt{\Bar{\gamma}(i,i) \Bar{\gamma}(j,j)} \frac{| \Bar{\gamma}(i,j) |}{| \sqrt{\Bar{\gamma}(i,i) \Bar{\gamma}(j,j)} |} \\
    & \leq M \frac{1}{N} \sum_{i=1}^N \sum_{j=1}^N |\Bar{\gamma}(i,j) | 
\end{align*}
Note that
\begin{align*}
    \Bar{\gamma}(i,j) &= \e \left[ \frac{1}{T} \sum_{t=1}^T \bar{u}_{it} \bar{u}_{jt}\right] \leq C \cdot \e \left[ \frac{1}{T} \sum_{t}  u_{it} u_{jt} \right] \\
    &= C \cdot \gamma(i,j)
\end{align*}
Thus
\begin{align*}
    \frac{1}{N}\sum_{i=1}^N \sum_{j=1}^N |\Bar{\gamma}(i,j)|^2 &\leq M \cdot C \cdot \frac{1}{N} \sum_{i=1}^N \sum_{j=1}^N |\gamma(i,j) | \leq M^2 \cdot C
\end{align*}
\end{proof}

\begin{Theorem} \label{th:norm squared of estimator}
    Under Assumptions \ref{ass:LN_strong factor structure} to \ref{ass:LN_time and cross sectional dependence},
    \begin{equation*}
        \delta_{NT}^2 \left( \frac{1}{N} \sum_{i=1}^N \Vert \hat{\Blambda}_i - \hat{H}' \Blambda_i \Vert^2 \right) = O_{\p}(1)
    \end{equation*}
    where $\delta_{NT} = \text{min}\{ \sqrt{N}, \sqrt{T} \}$.
\end{Theorem}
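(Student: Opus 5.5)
This is Bai's (2003) Theorem 1, transposed so that loadings are estimated from the $N\times N$ matrix $\Bar Y'\Bar Y/(NT)$. The plan starts from the row expansion \eqref{eq:estimator for lambda expanded row}. The first step is to show that $V_{NT}^{-1}=O_{\p}(1)$. This follows because the $r-1$ largest eigenvalues of $\Bar Y'\Bar Y/(NT)$ converge to those of $\Sigma_{\tilde\Lambda}\Sigma_{\tilde F}$, which are positive under Assumption \ref{ass:LN_strong factor structure}. The perturbation terms in \eqref{eq:expansion of the estimator matrix} vanish in operator norm by the bounds collected below. Given this, $\|\hat\Blambda_i-\hat H'\Blambda_i\|^2\le 4\|V_{NT}^{-1}\|^2(a_i+b_i+c_i+d_i)$, where $a_i,b_i,c_i,d_i$ are the squared norms of the four averages on the right-hand side. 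It then suffices to show that $\frac1N\sum_i$ of each is $O_{\p}(\delta_{NT}^{-2})$.

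Each term is handled with Cauchy--Schwarz across $j$, together with $\frac1N\sum_j\|\hat\Blambda_j\|^2=O(1)$ from the normalization.

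For the $\Bar\gamma$ term, $\frac1N\sum_i\|\frac1N\sum_j\hat\Blambda_j\Bar\gamma(i,j)\|^2\le \frac1N\sum_j\|\hat\Blambda_j\|^2\cdot\frac1{N^2}\sum_{i,j}\Bar\gamma(i,j)^2$. Lemma \ref{lemma:gamma_bounded} makes this $O(1/N)$. The passage from $\gamma$ to $\Bar\gamma$ uses the observation above that demeaning perturbs only at order $N^{-1/2}$.

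For the $\zeta$ term, the same split gives $\frac1{N^2}\sum_{i,j}\zeta_{ij}^2$. Its expectation is $\frac1T\cdot\frac1{N^2}\sum_{i,j}\e|T^{-1/2}\sum_t(u_{it}u_{jt}-\e u_{it}u_{jt})|^2=O(1/T)$ by Assumption \ref{ass:LN_time and cross sectional dependence}.4.

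For the $\eta$ term, $\eta_{ij}=\tilde\Blambda_j'(\tilde F'\Bar u_i/T)$, so the average factors as $(\frac1N\sum_j\hat\Blambda_j\tilde\Blambda_j')\tilde F'\Bar u_i/T$. The first factor is $O_{\p}(1)$ because the loadings are bounded. Then $\frac1N\sum_i\|\tilde F'\Bar u_i/T\|^2=O_{\p}(1/T)$ by Assumption \ref{ass:weak dep between agg shocks and idio errors}.

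For the $\xi$ term, the average is $\tilde\Blambda_i'\,\frac1N\sum_j\tilde F'\Bar u_j\hat\Blambda_j'/T$, up to transposition. I would replace $\hat\Blambda_j$ by $\hat H'\tilde\Blambda_j+(\hat\Blambda_j-\hat H'\tilde\Blambda_j)$. The leading piece is $\frac1{NT}\sum_{j,t}\tilde F_t\Bar u_{jt}\tilde\Blambda_j'$, which is $O_{\p}((NT)^{-1/2})$ by Assumption \ref{ass:LN_moments and CLT}.2. The remainder is bounded by $(\frac1N\sum_j\|\tilde F'\Bar u_j/T\|^2)^{1/2}(\frac1N\sum_j\|\hat\Blambda_j-\hat H'\tilde\Blambda_j\|^2)^{1/2}=O_{\p}(T^{-1/2})\cdot(\cdot)^{1/2}$. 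Since $\frac1N\sum_i\|\tilde\Blambda_i\|^2=O_{\p}(1)$, this term is $O_{\p}(1/(NT))+O_{\p}(T^{-1})\cdot\frac1N\sum_j\|\hat\Blambda_j-\hat H'\tilde\Blambda_j\|^2$. The second piece can be absorbed into the left-hand side for large $T$. Alternatively, Bai's cruder route bounds $\hat\Blambda_j$ directly and gives $O_{\p}(1/T)$ as well.

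Summing the four bounds yields $\frac1N\sum_i\|\hat\Blambda_i-\hat H'\tilde\Blambda_i\|^2=O_{\p}(1/N+1/T)=O_{\p}(\delta_{NT}^{-2})$.

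The main obstacle is the preliminary step: establishing that $V_{NT}$ has a nonsingular probability limit, and hence that $\hat H$ is $O_{\p}(1)$. The plan is to compute $\hat\Lambda'(\Bar Y'\Bar Y/NT)\hat\Lambda=V_{NT}$. The three noise blocks in \eqref{eq:expansion of the estimator matrix} are $O_{\p}(\delta_{NT}^{-1})$ in spectral norm: the $\Bar u'\Bar u/(NT)$ block via Assumption \ref{ass:LN_time and cross sectional dependence}.2 and the $\zeta$ bound, and the cross blocks via Assumption \ref{ass:weak dep between agg shocks and idio errors}. Weyl's inequality then gives $V_{NT}\to V$, the eigenvalue matrix of $\Sigma_{\tilde\Lambda}^{1/2}\Sigma_{\tilde F}\Sigma_{\tilde\Lambda}^{1/2}$, which is positive definite, with distinct eigenvalues by Assumption \ref{ass:LN_strong factor structure}.3. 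A secondary nuisance is scaling. As written, $\hat H$ carries $NT$ in its denominator while $\hat\Lambda'\hat\Lambda=I$, so I would first reconcile this to the normalization $\hat\Lambda'\hat\Lambda/N=I$ used in the main text before running the bounds.
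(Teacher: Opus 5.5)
Your proposal is correct, and its core is the paper's own proof. You use the same row expansion \eqref{eq:estimator for lambda expanded row}, the same $(a+b+c+d)^2\le 4(a^2+b^2+c^2+d^2)$ split, and the same Cauchy--Schwarz-in-$j$ bounds for the $\bar\gamma$, $\eta$ and $\xi$ terms. For $\xi$ the paper takes exactly the cruder route you mention. Your refined decomposition and absorption step are therefore unnecessary, and $\hat H=O_{\p}(1)$ is not needed at this point.

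Two ingredients differ.
\begin{itemize}
\item For the $\zeta$ term, the paper follows Bai's fourth-moment argument, bounding $\e|\sum_i\zeta_{ij}\zeta_{ik}|^2\le N^2\max_{i,j}\e\zeta_{ij}^4$. Your bound $\big(\frac1N\sum_j\|\hat\Blambda_j\|^2\big)\cdot\frac{1}{N^2}\sum_{i,j}\zeta_{ij}^2$ needs only second moments of $\zeta_{ij}$, so it is simpler. It is legitimate because $\frac1N\sum_j\|\hat\Blambda_j\|^2=r-1$ is fixed by the normalization.
\item For $\sigma_{\min}(V_{NT})^{-1}=O_{\p}(1)$, the paper makes a forward appeal to Theorem~\ref{th:convergence of lambda_hat times lambda}. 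That theorem is proved by the Stock--Watson variational lemmas of Appendix~\ref{app:estimation of factor loadings}, and it also yields the limit of $\hat\Lambda'\tilde\Lambda/N$ needed later for the linear expansion. Your Weyl argument bounds the noise blocks in Frobenius (hence spectral) norm through Lemma~\ref{lemma:gamma_bounded}, the $\zeta$ bound and Assumption~\ref{ass:weak dep between agg shocks and idio errors}. It delivers only the eigenvalue limit, but that is all this theorem uses. It is also self-contained, and it needs only positivity, not distinctness, of the eigenvalues of $\Sigma_{\tilde\Lambda}\Sigma_{\tilde F}$.
\end{itemize}

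Your reading of the normalization as $\hat\Lambda'\hat\Lambda/N=I_{r-1}$ is the one the proof actually uses, since it relies on $\frac1N\sum_j\|\hat\Blambda_j\|^2=r-1$. Under that normalization, $V_{NT}=N^{-1}\hat\Lambda'(\bar{Y}'\bar{Y}/NT)\hat\Lambda$, not $\hat\Lambda'(\bar{Y}'\bar{Y}/NT)\hat\Lambda$ as you wrote.
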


\begin{proof}
   \begin{equation*}
 V_{NT} [ \hat{\Blambda}_i - \hat{H}' \Blambda_i ] =  \frac{1}{N} \sum_{j=1}^N \hat{\Blambda}_j \Bar{\gamma}(i,j) + \frac{1}{N} \sum_{j=1}^N \hat{\Blambda}_j \zeta_{ij} + \frac{1}{N} \sum_{j=1}^N \hat{\Blambda}_j \eta_{ij} + \frac{1}{N} \sum_{j=1}^N \hat{\Blambda}_j \xi_{ij}
\end{equation*}

Using $(a+b+c+d)^2  \leq 4(a^2 + b^2 + c^2 + d^2)$ on the right hand side, and $\| V_{NT} x \|^2 \geq \sigma_{\min}(V_{NT})^2 \| x \|^2$ on the left, where $\sigma_{\min}(V_{NT})$ is the smallest diagonal entry of $V_{NT}$,
\begin{align} \label{eq:theorem norm_expansion}
    \frac{\sigma_{\min}(V_{NT})^2}{4} \frac{1}{N} \sum_{i=1}^N \|\hat{\Blambda}_i - \hat{H}' \Blambda_i \|^2 & \leq \frac{1}{N} \sum_{i=1}^N \left \| \frac{1}{N} \sum_{j=1}^N \hat{\Blambda}_j \Bar{\gamma}(i,j) \right \|^2 + \frac{1}{N} \sum_{i=1}^N \left \| \frac{1}{N} \sum_{j=1}^N \hat{\Blambda}_j \zeta_{ij} \right \|^2 + \nonumber \\
    & \frac{1}{N} \sum_{i=1}^N \left \| \frac{1}{N} \sum_{j=1}^N \hat{\Blambda}_j \eta_{ij} \right \|^2 + \frac{1}{N} \sum_{i=1}^N \left \| \frac{1}{N} \sum_{j=1}^N \hat{\Blambda}_j \xi_{ij} \right \|^2
\end{align}
We will analyze each of the terms on the right hand side one by one. Consider the \textbf{first term} of (\ref{eq:theorem norm_expansion}).

\begin{equation*}
    \left \| \sum_{j=1}^N \hat{\Blambda}_j \Bar{\gamma}(i,j) \right \|^2 \leq \left( \sum_{j=1}^N \| \hat{\Blambda}_j  \| \| \Bar{\gamma}(i,j)  \| \right)^2 \leq \sum_{j=1}^N \| \hat{\Blambda}_j \|^2 \sum_{j=1}^N \| \Bar{\gamma}(i,j) \|^2
\end{equation*}
where the first inequality comes from $\|a + b \|^2 \leq (\|a\| + \|b\| )^2$ and the second is the Cauchy-Schwartz inequality. Hence,
\begin{equation*}
    \sum_{i=1}^N \left \| \frac{1}{N} \sum_{j=1}^N \hat{\Blambda}_j \Bar{\gamma}(i,j) \right \|^2 \leq \frac{1}{N} \sum_{j=1}^N \| \hat{\Blambda}_j \|^2 \frac{1}{N} \sum_{i=1}^N \sum_{j=1}^N \| \Bar{\gamma}(i,j) \|^2 = O_{\p}(1)
\end{equation*}
as $\frac{1}{N} \sum_{j=1}^N \| \hat{\Blambda}_j \|^2 = r = O_{\p}(1)$ as already seen and $\frac{1}{N} \sum_{i=1}^N \sum_{j=1}^N \| \Bar{\gamma}(i,j) \|^2 = O_{\p}(1)$ as per Lemma \ref{lemma:gamma_bounded}. Hence
\begin{equation*}
    \textbf{First term} = O_{\p}(N^{-1})
\end{equation*}

Consider the \textbf{second term} of (\ref{eq:theorem norm_expansion}).
\begin{align*}
    \frac{1}{N} \sum_{i=1}^N \left \| \frac{1}{N} \sum_{j=1}^N \hat{\Blambda}_j \zeta_{ij} \right \|^2 &= \frac{1}{N^3} \sum_{i=1}^N \sum_{j=1}^N \sum_{k=1}^N \hat{\Blambda}_j' \hat{\Blambda}_k \zeta_{ij} \zeta_{ik} \\
    & \leq \frac{1}{N} \left[ \frac{1}{N^2} \sum_{j=1}^N \sum_{k=1}^N |\hat{\Blambda}_j' \hat{\Blambda}_k |^2 \right]^{\frac{1}{2}} \cdot \left[ \frac{1}{N^2} \sum_{j=1}^N \sum_{k=1}^N |\sum_{i=1}^N \zeta_{ij} \zeta_{ik} |^2 \right]^{\frac{1}{2}} 
\end{align*}

The first term in the bracket on the right hand side of the inequality is 
\begin{align*}
    \frac{1}{N^2} \sum_{j=1}^N \sum_{k=1}^N |\hat{\Blambda}_j' \hat{\Blambda}_k |^2 &= \frac{1}{N^2} \sum_{j=1}^N \sum_{k=1}^N \text{tr} \{ \hat{\Blambda}_k' \hat{\Blambda}_j \hat{\Blambda}_j' \hat{\Blambda}_k \} \\
    &= \frac{1}{N^2} \sum_{j=1}^N \sum_{k=1}^N \text{tr} \{ \hat{\Blambda}_k \hat{\Blambda}_k' \hat{\Blambda}_j \hat{\Blambda}_j' \} \\
    & \leq \frac{1}{N^2} \sum_{j=1}^N \sum_{k=1}^N \text{tr} \{\hat{\Blambda}_k \hat{\Blambda}_k'  \} \cdot \text{tr} \{\hat{\Blambda}_j \hat{\Blambda}_j'  \} \\
    &= \frac{1}{N^2} \cdot N^2 r^2 = r^2
\end{align*}

For the second term in the bracket on the right hand side of the inequality, to apply Markov inequality, consider
\begin{align*}
    \e | \sum_{i=1}^N \zeta_{ij} \zeta_{ik} |^2 &= \sum_{i=1}^N \sum_{l=1}^N \e[\zeta_{ij} \zeta_{ik}\zeta_{lj} \zeta_{lk}] \\
    & \leq \sum_{i=1}^N \sum_{l=1}^N \bigg[ \e[\zeta_{ij}^4] \e[\zeta_{ik}^4] \e[\zeta_{lj}^4] \e[\zeta_{lk}^4]  \bigg]^{\frac{1}{4}} \\
    & \leq N^2 \text{max} \enspace \e[\zeta_{ij}^4]
\end{align*}
where
\begin{align*}
    \e[\zeta_{ij}^4] &= \e \left[ \frac{1}{T} \sum_{t=1}^T \Bar{u}_{it} \Bar{u}_{jt} - \Bar{\gamma}(i,j)  \right]^4 \\
    &= \frac{1}{T^2} \cdot \e \left[ T^{-\frac{1}{2}} \sum_{t=1}^T \bigg( \Bar{u}_{it} \Bar{u}_{jt} - \e[\Bar{u}_{it} \Bar{u}_{jt}]  \bigg) \right]^4 \\
    & \leq \frac{C}{T^2} \cdot \e \left[ T^{-\frac{1}{2}} \sum_{t=1}^T \bigg( u_{it} u_{jt} - \e[u_{it} u_{jt}]  \bigg) \right]^4 \\
    & \leq \frac{CM^4}{T^2}
\end{align*}

where the last inequality comes from Assumption \ref{ass:LN_time and cross sectional dependence}.2. Thus the second term in the bracket on the right hand side of the inequality is $O_{\p}(\frac{N}{T})$. Putting all together the \textbf{second term} of (\ref{eq:theorem norm_expansion}):
\begin{equation*}
    \textbf{Second term} = \frac{1}{N} O_{\p}(1) O_{\p}(\frac{N}{T}) = O_{\p}(T^{-1})
\end{equation*}

Now consider the \textbf{third term} of (\ref{eq:theorem norm_expansion}):
\begin{align*}
    \frac{1}{N} \sum_{i=1}^N \left \| \frac{1}{N} \sum_{j=1}^N \hat{\Blambda}_j \eta_{ij} \right \|^2 &= \frac{1}{N} \sum_{i=1}^N \left \| \frac{1}{N} \sum_{j=1}^N \hat{\Blambda}_j {\Blambda}_j' \cdot \frac{1}{T} \sum_{t=1}^T \tilde{F}_t \Bar{u}_{it} \right \|^2 \\
    & \leq \frac{1}{N} \sum_{i=1}^N \left \| \frac{1}{N} \sum_{j=1}^N \hat{\Blambda}_j {\Blambda}_j'  \right \|^2 \cdot \left \| \frac{1}{T} \sum_{t=1}^T \tilde{F}_t \Bar{u}_{it} \right \|^2 \\
    & \leq \left[ \frac{1}{N} \sum_{j=1}^N \| \hat{\Blambda}_j {\Blambda}_j' \| \right]^2 \cdot \frac{1}{N} \sum_{i=1}^N \left \| \frac{1}{T} \sum_{t=1}^T \tilde{F}_t \Bar{u}_{it} \right \|^2
\end{align*}
Consider the first term on the right hand side of the inequality
\begin{align*}
    \left[ \frac{1}{N} \sum_{j=1}^N \| \hat{\Blambda}_j {\Blambda}_j' \| \right]^2 & \leq \left[ \frac{1}{N} \sum_{j=1}^N \| \hat{\Blambda}_j \| \| {\Blambda}_j' \| \right]^2 \\
    & \leq \frac{1}{N} \sum_{j=1}^N \| \hat{\Blambda}_j \|^2 \cdot \frac{1}{N} \sum_{j=1}^N \| {\Blambda}_j \|^2 = O_{\p}(1)
\end{align*}

Consider the second term on the right hand side of the inequality
\begin{align*}
    \frac{1}{N} \sum_{i=1}^N \left \| \frac{1}{T} \sum_{t=1}^T \tilde{F}_t \Bar{u}_{it} \right \|^2 &= \frac{1}{N} \sum_{i=1}^N \left \| \frac{1}{T} \sum_{t=1}^T \tilde{F}_t \left( u_{it} - \frac{1}{N} \sum_{k=1}^N u_{kt} \right) \right \|^2 \\
    &= \frac{1}{N} \sum_{i=1}^N \left \| \frac{1}{T} \sum_{t=1}^T \tilde{F}_t u_{it} - \frac{1}{NT} \sum_{t=1}^T \tilde{F}_t u_{1t} - \dots - \frac{1}{NT} \sum_{t=1}^T \tilde{F}_t u_{Nt}  \right \|^2 \\
    &\leq \frac{1}{N} \sum_{i=1}^N \left \| \frac{1}{T} \sum_{t=1}^T \tilde{F}_t u_{it} \right \|^2 + \frac{1}{N} \sum_{i=1}^N \frac{1}{N} \sum_{k=1}^N \left \| \frac{1}{T} \sum_{t=1}^T \tilde{F}_t u_{kt} \right \|^2 \\
    &= \frac{1}{N} \sum_{i=1}^N \left \| \frac{1}{T} \sum_{t=1}^T \tilde{F}_t u_{it} \right \|^2 +  \frac{1}{N} \sum_{k=1}^N \left \| \frac{1}{T} \sum_{t=1}^T \tilde{F}_t u_{kt} \right \|^2 \\
    &= 2 \cdot \frac{1}{T} \cdot \frac{1}{N} \sum_{i=1}^N \left \| \frac{1}{\sqrt{T}} \sum_{t=1}^T \tilde{F}_t u_{it} \right \|^2 \\
    &= 2 \cdot O_{\p}(\frac{1}{T}) O_{\p}(1) = O_{\p}(\frac{1}{T})
\end{align*}
where the first inequality comes from the repeated use of the Cauchy-Schwartz inequality, and the $O_{\p}(1)$ term in the last line comes from Assumption \ref{ass:weak dep between agg shocks and idio errors}. Thus
\begin{equation*}
    \textbf{Third term} = O_{\p}(T^{-1})
\end{equation*}

Now consider the \textbf{fourth term} of (\ref{eq:theorem norm_expansion}):
\begin{align*}
    \frac{1}{N} \sum_{i=1}^N \left \| \frac{1}{N} \sum_{j=1}^N \hat{\Blambda}_j \xi_{ij} \right \|^2 &= \frac{4}{N} \sum_{i=1}^N \left \| \frac{1}{N} \sum_{j=1}^N \hat{\Blambda}_j \cdot \frac{1}{T} \sum_{t=1}^T {\Blambda}_i' \tilde{F}_t \Bar{u}_{jt}  \right \|^2 \\
    &= \frac{4}{N} \sum_{i=1}^N \left \| \frac{1}{N} \sum_{j=1}^N \hat{\Blambda}_j \cdot \frac{1}{T} \sum_{t=1}^T  \tilde{F}_t \Bar{u}_{jt} {\Blambda}_i  \right \|^2 \\
    & \leq \frac{4}{N} \sum_{i=1}^N \left \| \frac{1}{N} \sum_{j=1}^N \hat{\Blambda}_j \cdot \frac{1}{T} \sum_{t=1}^T  \tilde{F}_t \Bar{u}_{jt} \right \|^2 \Bigg \| {\Blambda}_i  \Bigg \|^2 \\
    & \leq \frac{4}{N} \sum_{i=1}^N \| {\Blambda}_i \|^2 \cdot \frac{1}{N} \sum_{j=1}^N \| \hat{\Blambda}_j' \cdot \frac{1}{T} \sum_{t=1}^T \tilde{F}_t \Bar{u}_{jt}  \|^2 \\
    &\leq \frac{4}{N} \sum_{i=1}^N \| {\Blambda}_i \|^2 \cdot \frac{1}{N} \sum_{j=1}^N \| \hat{\Blambda}_j' \|^2 \cdot \frac{1}{N} \sum_{j=1}^N \| \frac{1}{T} \sum_{t=1}^T \tilde{F}_t \Bar{u}_{jt}  \|^2 \\
    &= O_{\p}(1) \cdot \frac{1}{N} \sum_{j=1}^N \left \| \frac{1}{T} \sum_{t=1}^T \tilde{F}_t \Bar{u}_{jt} \right \|^2 = O_{\p}(T^{-1})
\end{align*}
where the inequality in the second last line uses $\| \hat{\Blambda}_j \|^2 = 1$ and $O_{\p}(T^{-1})$ in the last line uses Assumption \ref{ass:weak dep between agg shocks and idio errors} applied to the term as in the case of the third term. Thus 
\begin{equation*}
    \sigma_{\min}(V_{NT})^2 \frac{1}{N} \sum_{i=1}^N \|\hat{\Blambda}_i - \hat{H}' \Blambda_i \|^2 = O_{\p}(N^{-1}) + O_{\p}(T^{-1}) + O_{\p}(T^{-1}) + O_{\p}(T^{-1})
\end{equation*}
Theorem \ref{th:convergence of lambda_hat times lambda} shows that $V_{NT}$ converges in probability to an invertible diagonal matrix with distinct non-zero entries, so $\sigma_{\min}(V_{NT})^{-1} = O_{\p}(1)$. Hence the result follows.
\end{proof}

\begin{Lemma}\label{lemma:intermediate lemmas}
    Under Assumptions \ref{ass:LN_strong factor structure} to \ref{ass:LN_moments and CLT}, for all $i$:
    \begin{enumerate}[label=(\alph*)]
        \item $\frac{1}{N} \sum_{j=1}^N \hat{\Blambda}_j \Bar{\gamma}(i,j) = O_{\p} \left(\frac{1}{\sqrt{N} \delta_{NT}} \right)$
        \item $\frac{1}{N} \sum_{j=1}^N \hat{\Blambda}_j \zeta_{ij} = O_{\p} \left(\frac{1}{\sqrt{T} \delta_{NT}} \right)$
        \item $\frac{1}{N} \sum_{j=1}^N \hat{\Blambda}_j \eta_{ij} = O_{\p} \left(\frac{1}{\sqrt{T}} \right)$
        \item $\frac{1}{N} \sum_{j=1}^N \hat{\Blambda}_j \xi_{ij} = O_{\p} \left(\frac{1}{\sqrt{T} \delta_{NT}} \right)$
    \end{enumerate}
\end{Lemma}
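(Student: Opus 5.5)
This is Lemma A.1 of Bai (2003), adapted to the demeaned panel. I would follow his argument term by term, with one systematic device throughout. In each sum I replace $\hat{\Blambda}_j$ by $\hat H'\tilde{\Blambda}_j + (\hat{\Blambda}_j - \hat H'\tilde{\Blambda}_j)$. The first piece gives the rate, using that $\hat H = O_{\p}(1)$ and that the true loadings are bounded and independent of the shocks. The second piece is handled by Cauchy--Schwarz together with Theorem~\ref{th:norm squared of estimator}, which gives $(\frac1N\sum_j\|\hat{\Blambda}_j - \hat H'\tilde{\Blambda}_j\|^2)^{1/2} = O_{\p}(\delta_{NT}^{-1})$. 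Demeaned products $\Bar u_{it}\Bar u_{jt}$ are replaced by $u_{it}u_{jt}$ up to $O_{\p}(N^{-1/2})$, as noted above.

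For (a), the true-loading piece is $\frac1N\sum_j \tilde{\Blambda}_j\Bar\gamma(i,j)$. Its norm is at most $\frac{\Bar{\Blambda}}{N}\sum_j|\Bar\gamma(i,j)| \le \Bar{\Blambda}M/N$ by Assumption~\ref{ass:LN_time and cross sectional dependence}.2a. The error piece is bounded by $\frac{1}{\sqrt N}(\frac1N\sum_j\|\hat{\Blambda}_j-\hat H'\tilde{\Blambda}_j\|^2)^{1/2}(\sum_j\Bar\gamma(i,j)^2)^{1/2}$. The last factor is $O(1)$ because $\sum_j\Bar\gamma(i,j)^2\le \max_j|\Bar\gamma(i,j)|\sum_j|\Bar\gamma(i,j)|\le M^2$. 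This gives $O_{\p}(N^{-1/2}\delta_{NT}^{-1}) + O(N^{-1})$, which is dominated by the stated rate.

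For (b), the true-loading piece is $\frac{1}{\sqrt T}\cdot\frac{1}{N}\sum_j\tilde{\Blambda}_j\sqrt T\zeta_{ij}$. By Assumption~\ref{ass:LN_moments and CLT}.1 it equals $\frac{1}{\sqrt{NT}}O_{\p}(1)$. The error piece is bounded by $O_{\p}(\delta_{NT}^{-1})(\frac1N\sum_j\zeta_{ij}^2)^{1/2}$, and $\e\zeta_{ij}^2 = O(T^{-1})$ by Assumption~\ref{ass:LN_time and cross sectional dependence}.4. The total is $O_{\p}(T^{-1/2}\delta_{NT}^{-1})$.

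For (c), write $\frac1N\sum_j\hat{\Blambda}_j\eta_{ij} = (\frac1N\sum_j\hat{\Blambda}_j\tilde{\Blambda}_j')\frac1T\sum_t\tilde F_t\Bar u_{it}$. The first factor is $O_{\p}(1)$ by Cauchy--Schwarz. The second factor is $O_{\p}(T^{-1/2})$ for each fixed $i$, by the mixing, the moment bounds of Assumption~\ref{ass:mixing for stationary time series}, and $\e[\tilde F_tu_{it}]=0$. The demeaning adds only the cross-sectional average, which is of smaller order by Assumption~\ref{ass:weak dep between agg shocks and idio errors}.

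For (d), write $\frac1N\sum_j\hat{\Blambda}_j\xi_{ij} = \frac1N\sum_j\hat{\Blambda}_j\Bar u_j'\tilde F/T\cdot\tilde{\Blambda}_i$. The true-loading piece is $\hat H'\frac{1}{\sqrt{NT}}\big[\frac{1}{\sqrt{NT}}\sum_j\sum_t\tilde{\Blambda}_j\tilde F_t'\Bar u_{jt}\big]\tilde{\Blambda}_i$, which is $O_{\p}((NT)^{-1/2})$ by Assumption~\ref{ass:LN_moments and CLT}.2; demeaning again costs only a mean-zero sum of loadings. The error piece is bounded by $O_{\p}(\delta_{NT}^{-1})\,\big(\frac1N\sum_j\|\frac1T\sum_t\tilde F_t\Bar u_{jt}\|^2\big)^{1/2}$. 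The latter factor is $O_{\p}(T^{-1/2})$ by Assumption~\ref{ass:weak dep between agg shocks and idio errors}, exactly as in the third term of Theorem~\ref{th:norm squared of estimator}.

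I expect the main obstacle to be the demeaning, specifically in (a) and in the error piece of (b). Transferring the summability and fourth-moment conditions from $u$ to $\Bar u$ needs care, because $\Bar\gamma(i,j)$ picks up terms of the form $-\frac1N\sum_k\gamma(i,k)$. Summed over $j$, these stay bounded only through Assumption~\ref{ass:LN_time and cross sectional dependence}.2b. I would expand $\Bar\gamma$ explicitly into its four pieces and bound each one, rather than rely on the heuristic substitution.
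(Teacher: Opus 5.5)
Your proposal is correct and follows the paper's proof essentially step by step. It uses the same split $\hat{\Blambda}_j = \hat H'\tilde{\Blambda}_j + (\hat{\Blambda}_j - \hat H'\tilde{\Blambda}_j)$, Cauchy--Schwarz with Theorem~\ref{th:norm squared of estimator} for the error piece, and the same moment and summability conditions for the true-loading pieces. The differences are minor and, if anything, in your favour: in (c) you factor out $\frac1N\sum_j\hat{\Blambda}_j\tilde{\Blambda}_j'$ directly instead of splitting; and for each fixed $i$ you bound $\sum_j\Bar\gamma(i,j)^2 \le \max_j|\Bar\gamma(i,j)|\sum_j|\Bar\gamma(i,j)|$, expanding $\Bar\gamma$ into its four pieces via part 2b. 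That is more careful than the paper's appeal to Lemma~\ref{lemma:gamma_bounded}, which only controls an average over $i$, and to the heuristic $\Bar\gamma(i,j) \le C\gamma(i,j)$.
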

\begin{proof}
    For \textbf{(a)},
    \begin{align*}
        \frac{1}{N} \sum_{j=1}^N \hat{\Blambda}_j \Bar{\gamma}(i,j) &= \frac{1}{N} \sum_{j=1}^N [ \hat{\Blambda}_j - \hat{H}' \Blambda_j + \hat{H}' \Blambda_j ]\Bar{\gamma}(i,j)  \\
        &= \frac{1}{N} \sum_{j=1}^N [ \hat{\Blambda}_j - \hat{H}' \Blambda_j ] \Bar{\gamma}(i,j) + \hat{H}' \frac{1}{N} \sum_{j=1}^N \Blambda_j \Bar{\gamma}(i,j)
    \end{align*}
Consider the first term on the right hand side on the last equality
\begin{align*}
    \left| \frac{1}{N} \sum_{j=1}^N [ \hat{\Blambda}_j - \hat{H}' \Blambda_j ] \Bar{\gamma}(i,j)  \right | &\leq \left[  \frac{1}{N} \sum_{j=1}^N \| \hat{\Blambda}_j - \hat{H}' \Blambda_j  \|^2 \right]^{\frac{1}{2}} \cdot \left[ \frac{1}{N} \sum_{j=1}^N | \Bar{\gamma}(i,j)  |^2\right]^{\frac{1}{2}} \\
    &= O_{\p}(\frac{1}{\delta_{NT}}) \frac{1}{\sqrt{N}} O_{\p}(1)
\end{align*}
by Theorem \ref{th:norm squared of estimator} and Lemma \ref{lemma:gamma_bounded}.

The second term is a vector. Consider one of its element.
\begin{align*}
    \left | \frac{1}{N} \sum_{j=1}^N \lambda_j^r \Bar{\gamma}(i,j)  \right | & \leq N^{-1} \left| \text{max}(\lambda_j^r) \sum_{j=1}^N \Bar{\gamma}(i,j) \right| \\
    & \leq N^{-1} \text{max}\| \lambda_j^r \| \cdot \sum_{j=1}^N | \Bar{\gamma}(i,j) | \\
    &= O_{\p}(N^{-1})
\end{align*}

Thus, the entire second term is $O_{\p}(N^{-1})$, and \textbf{(a)} $= O_{\p}\left( \frac{1}{\sqrt{N} \delta_{NT}} \right)$

For \textbf{(b)},
\begin{align*}
    \frac{1}{N} \sum_{j=1}^N \hat{\Blambda}_j \zeta_{ij} = \frac{1}{N} \sum_{j=1}^N [ \hat{\Blambda}_j - \hat{H}' \Blambda_j ] \zeta_{ij} + \hat{H}' \frac{1}{N} \sum_{j=1}^N \Blambda_j \zeta_{ij}
\end{align*}

For the first term on the right hand side of the above equality
\begin{align*}
    \left| \frac{1}{N} \sum_{j=1}^N [ \hat{\Blambda}_j - \hat{H}' \Blambda_j ] \zeta_{ij}  \right | &\leq \left[  \frac{1}{N} \sum_{j=1}^N \| \hat{\Blambda}_j - \hat{H}' \Blambda_j  \|^2 \right]^{\frac{1}{2}} \cdot \left[ \frac{1}{N} \sum_{j=1}^N | \zeta_{ij}  |^2\right]^{\frac{1}{2}}
\end{align*}
where
\begin{align*}
    \frac{1}{N} \sum_{j=1}^N  \zeta_{ij}^2  &= \frac{1}{N} \sum_{j=1}^N \left[ \frac{1}{T} \sum_{t=1}^T \Bar{u}_{it} \Bar{u}_{jt} - \Bar{\gamma}(i,j) \right]^2 \\
    &= \frac{1}{T} \cdot \frac{1}{N} \sum_{j=1}^N \left[ T^{-\frac{1}{2}} \sum_{t=1}^T \bigg( \Bar{u}_{it} \Bar{u}_{jt} - \e[\Bar{u}_{it} \Bar{u}_{jt}]  \bigg) \right]^2 \\
    & \leq \frac{C^2}{T} \cdot \frac{1}{N} \sum_{j=1}^N \left[ T^{-\frac{1}{2}} \sum_{t=1}^T \bigg( u_{it} u_{jt} - \e[u_{it} u_{jt}]  \bigg) \right]^2 \\
    & \leq \frac{C^2 \cdot M^2}{T}
\end{align*}
Hence the first term is $O_{\p} \left(\frac{1}{\sqrt{T} \delta_{NT}} \right)$. For the second term
\begin{align*}
    \frac{1}{N} \sum_{j=1}^N \Blambda_j \zeta_{ij} &= \frac{1}{N} \sum_{j=1}^N  \Blambda_j \left( \frac{1}{T} \sum_{t=1}^T \Bar{u}_{it} \Bar{u}_{jt} - \Bar{\gamma}(i,j) \right) \\
    & \leq \frac{ C}{N} \sum_{j=1}^N \Blambda_j \left( \frac{1}{T} \sum_{t=1}^T \bigg( u_{it} u_{jt} - \e[u_{it} u_{jt}] \bigg) \right) \\
    &= \frac{C}{NT} \sum_{j=1}^N \sum_{t=1}^T \Blambda_j  \bigg( u_{it} u_{jt} - \e[u_{it} u_{jt}] \bigg) \\
    &= O_{\p}\left( \frac{1}{\sqrt{NT}} \right)
\end{align*}
by Assumption \ref{ass:LN_moments and CLT}.1 and Markov's inequality. Thus \textbf{(b)} $= O_{\p}\left( \frac{1}{\sqrt{T} \delta_{NT}} \right)$

For \textbf{(c)},
\begin{align*}
    \frac{1}{N} \sum_{j=1}^N \hat{\Blambda}_j \eta_{ij} = \frac{1}{N} \sum_{j=1}^N [ \hat{\Blambda}_j - \hat{H}' \Blambda_j ] \eta_{ij} + \hat{H}' \frac{1}{N} \sum_{j=1}^N \Blambda_j \eta_{ij}
\end{align*}
For the first term on the right hand side of the above equality
\begin{align*}
    \left| \frac{1}{N} \sum_{j=1}^N [ \hat{\Blambda}_j - \hat{H}' \Blambda_j ] \eta_{ij}  \right | &\leq \left[  \frac{1}{N} \sum_{j=1}^N \| \hat{\Blambda}_j - \hat{H}' \Blambda_j  \|^2 \right]^{\frac{1}{2}} \cdot \left[ \frac{1}{N} \sum_{j=1}^N | \eta_{ij}  |^2\right]^{\frac{1}{2}}
\end{align*}
where
\begin{align*}
    \frac{1}{N} \sum_{j=1}^N  \eta_{ij}^2  &= \frac{1}{N} \frac{1}{T^2} \sum_{j=1}^N \left( \Blambda_j' \sum_{t=1}^T \tilde{F}_t \Bar{u}_{it}  \right)^2 \\
    & \leq \frac{1}{T^2} \cdot \frac{1}{N} \sum_{j=1}^N \| \Blambda_j \|^2 \cdot \| \sum_{t=1}^T \tilde{F}_t \Bar{u}_{it} \|^2 \\
    &= \frac{1}{T^2} \cdot \| \sum_{t=1}^T \tilde{F}_t \Bar{u}_{it} \|^2 \cdot O_{\p}(1) \\
    &\leq \frac{4}{T} \cdot \| \frac{1}{\sqrt{T}} \sum_{t=1}^T \tilde{F}_t u_{it} \|^2 \cdot O_{\p}(1) \\
    &= O_{\p} \left(\frac{1}{T} \right)
\end{align*}
where the replacement of $\Bar{u}_{it}$ by $u_{it}$ follows the same steps as in the proof of Lemma \ref{th:norm squared of estimator}. The last equality is from Assumption \ref{ass:weak dep between agg shocks and idio errors}. Thus the first term is $O_{\p} \left( \frac{1}{\delta_{NT} \sqrt{T}} \right)$. For the second term,
\begin{align*}
    \frac{1}{N} \sum_{j=1}^N \Blambda_j \eta_{ij} &= \frac{1}{NT} \sum_{j=1}^N \Blambda_j \Blambda_j' \sum_{t=1}^T \tilde{F}_t \Bar{u}_{it} \\
    &= \left( \frac{1}{N}  \sum_{j=1}^N \Blambda_j \Blambda_j' \right) \cdot \frac{1}{T} \sum_{t=1}^T \tilde{F}_t \Bar{u}_{it} \\
    &\leq  \left( \frac{2}{N}  \sum_{j=1}^N \Blambda_j \Blambda_j' \right) \cdot \frac{1}{T} \sum_{t=1}^T \tilde{F}_t u_{it} \\
    &= O_{\p}(1) O_{\p} \left(\frac{1}{\sqrt{T}} \right)
\end{align*}

Thus \textbf{(c)} $= O_{\p} \left( \frac{1}{\delta_{NT} \sqrt{T}} \right) +  O_{\p} \left(\frac{1}{\sqrt{T}} \right) = O_{\p} \left(\frac{1}{\sqrt{T}} \right)$

For \textbf{(d)},
\begin{align*}
    \frac{1}{N} \sum_{j=1}^N \hat{\Blambda}_j \xi_{ij} = \frac{1}{N} \sum_{j=1}^N [ \hat{\Blambda}_j - \hat{H}' \Blambda_j ] \xi_{ij} + \hat{H}' \frac{1}{N} \sum_{j=1}^N \Blambda_j \xi_{ij}
\end{align*}
For the first term on the right hand side of the above equality
\begin{align*}
    \left| \frac{1}{N} \sum_{j=1}^N [ \hat{\Blambda}_j - \hat{H}' \Blambda_j ] \xi_{ij}  \right | &\leq \left[  \frac{1}{N} \sum_{j=1}^N \| \hat{\Blambda}_j - \hat{H}' \Blambda_j  \|^2 \right]^{\frac{1}{2}} \cdot \left[ \frac{1}{N} \sum_{j=1}^N | \xi_{ij}  |^2\right]^{\frac{1}{2}}
\end{align*}
where
\begin{align*}
    \frac{1}{N} \sum_{j=1}^N  \xi_{ij} ^2  &= \frac{1}{N} \frac{1}{T^2} \sum_{j=1}^N \left( \Blambda_i' \sum_{t=1}^T \tilde{F}_t \Bar{u}_{jt}  \right)^2  \leq \frac{1}{T^2} \cdot \frac{1}{N} \sum_{j=1}^N \| \Blambda_i \|^2 \cdot \| \sum_{t=1}^T \tilde{F}_t \Bar{u}_{jt} \|^2 \\
    &= \| \Blambda_i \|^2 \cdot \frac{1}{T} \frac{1}{N} \sum_{j=1}^N  \left \| \frac{1}{\sqrt{T}} \sum_{t=1}^T \tilde{F}_t \Bar{u}_{it} \right \|^2  \leq \| \Blambda_i \|^2 \cdot \frac{4}{TN} \sum_{j=1}^N  \left \| \frac{1}{\sqrt{T}} \sum_{t=1}^T \tilde{F}_t u_{it} \right \|^2 \\
    &= O_{\p} \left(\frac{1}{T} \right)
\end{align*}
Thus the first term is $O_{\p} \left( \frac{1}{\delta_{NT} \sqrt{T}} \right)$. For the second term,
\begin{align*}
    \frac{1}{N} \| \sum_{j=1}^N \Blambda_j \xi_{ij} \| &= \frac{1}{NT} \left \| \sum_{j=1}^N \sum_{t=1}^T \Blambda_j \tilde{F}_t \Bar{u}_{jt} \Blambda_i \right \| \leq \frac{1}{NT} \left \| \sum_{j=1}^N  \sum_{t=1}^T \Blambda_j \tilde{F}_t \Bar{u}_{jt} \right \| \| \Blambda_i \| \\
    & \leq \frac{2}{NT} \left \| \sum_{j=1}^N  \sum_{t=1}^T \Blambda_j \tilde{F}_t u_{jt} \right \| \| \Blambda_i \| = O_{\p} \left(\frac{1}{\sqrt{NT}} \right)
\end{align*}
the last equality from Assumption \ref{ass:LN_moments and CLT}. Thus \textbf{(d)} $= O_{\p} \left( \frac{1}{\delta_{NT} \sqrt{T}} \right) +  O_{\p} \left(\frac{1}{\sqrt{NT}} \right) = O_{\p} \left( \frac{1}{\delta_{NT} \sqrt{T}} \right)$

\end{proof}

Thus, based on Lemma \ref{lemma:intermediate lemmas}, we have 
\begin{equation*} 
 V_{NT} \big(  \hat{\Blambda}_i - \hat{H}' \Blambda_i \big) =  \frac{1}{N} \sum_{j=1}^N \hat{\Blambda}_j \eta_{ij} + O_{\p} \left(\frac{1}{\sqrt{N} \delta_{NT}} \right) + O_{\p} \left(\frac{1}{\sqrt{T} \delta_{NT}} \right) + O_{\p} \left(\frac{1}{\sqrt{T} \delta_{NT}} \right)
\end{equation*}
Substituting the expression for $\eta_{ij}$, we have
\begin{align}
    V_{NT} \big(  \hat{\Blambda}_i - \hat{H}' \Blambda_i \big) &= \frac{1}{N} \sum_{j=1}^N \hat{\Blambda}_j \tilde{\Blambda}_j' \tilde{F}' \Bar{u}_i/T + O_{\p} \left(\frac{1}{\delta_{NT}^2} \right) \nonumber \\
    &= \frac{\tilde{\Lambda}' \hat{\Lambda}}{N} \cdot \frac{1}{T} \sum_{t=1}^T \tilde{F}_t \Bar{u}_{it} + O_{\p} \left(\frac{1}{\delta_{NT}^2} \right) \label{eq:asymptotic_linear_lambda_incomplete}
\end{align}

To fully characterize the asymptotic linear form of our estimator, we need to show that $V_{NT}$ is invertible and characterize the asymptotic behavior of $\frac{\tilde{\Lambda}' \hat{\Lambda}}{N}$. I do this in Theorem \ref{th:convergence of lambda_hat times lambda}. But the Theorem requires applying some results from \citet{Stock2002ForecastingPredictors} to our setting in the following Lemmas. First, we start by defining some terms

Let $a$ be a $N \times 1$ vector and
\begin{align*}
    A &= \{a | a'a/N = 1 \} \\
    R(a) &= N^{-1} a' \frac{\Bar{Y}' \Bar{Y}}{NT} a \\
    R^*(a) &= N^{-2} a' \tilde{\Lambda} \hat{\Sigma}_{\Tilde{F}} \tilde{\Lambda}' a
\end{align*}
where $\hat{\Sigma}_{\tilde{F}} = \frac{\tilde{F}' \tilde{F}}{T}$.

Our first result is about the uniform convergence of $R(a)$ and $R^*(a)$ in $A$.
\begin{Lemma} \label{lemma:uniform convergence of R and R^*}
    Under Assumptions \ref{ass:LN_strong factor structure} to \ref{ass:LN_moments and CLT},
    \begin{equation*}
        \sup_{a \in A} | R(a) - R^*(a) | \xrightarrow{p} 0
    \end{equation*}
\end{Lemma}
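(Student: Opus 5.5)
The plan is to expand $R(a)$ using the decomposition \eqref{eq:expansion of the estimator matrix} and to bound each non-leading piece uniformly over $A$. Writing $\bar Y = \tilde F\tilde\Lambda' + \bar u$, we have
\begin{equation*}
    R(a) - R^*(a) = \frac{2}{N^2 T}\, a'\tilde\Lambda \tilde F'\bar u\, a + \frac{1}{N^2 T}\, a'\bar u'\bar u\, a ,
\end{equation*}
since the leading term $N^{-2}a'\tilde\Lambda(\tilde F'\tilde F/T)\tilde\Lambda' a$ is exactly $R^*(a)$. Every bound below uses only $a'a = N$, so it is automatically uniform over $A$.

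For the cross term, Cauchy--Schwarz gives
\begin{equation*}
    \left|N^{-2}T^{-1}a'\tilde\Lambda\tilde F'\bar u a\right| \le N^{-2}\|a\|^2\,\|\tilde\Lambda\tilde F'\bar u/T\| = N^{-1}\|\tilde\Lambda\tilde F'\bar u/T\| .
\end{equation*}
Next, $\|\tilde\Lambda\tilde F'\bar u/T\| \le \|\tilde\Lambda\|\,\|\tilde F'\bar u/T\|$, with $\|\tilde\Lambda\| = O_{\p}(\sqrt N)$ by Assumption \ref{ass:LN_strong factor structure}. The second factor satisfies
\begin{equation*}
    \|\tilde F'\bar u/T\|^2 = \sum_i \Big\|T^{-1}\sum_t \tilde F_t\bar u_{it}\Big\|^2 = O_{\p}(N/T),
\end{equation*}
by Assumption \ref{ass:weak dep between agg shocks and idio errors}. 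Here the demeaning is handled exactly as in the third term of Theorem \ref{th:norm squared of estimator}, through the triangle inequality, at the cost of a factor of 2. Hence the cross term is $O_{\p}(T^{-1/2}) \to 0$.

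For the quadratic idiosyncratic term, $N^{-2}T^{-1}a'\bar u'\bar u a \le N^{-1}\gamma_{\max}(\bar u'\bar u/T)$. Since $\bar u = u D_N$ and $\gamma_{\max}(D_N)=1$, it suffices to bound $\gamma_{\max}(u'u/T)$. This is the main obstacle, because a crude Frobenius bound only gives $\gamma_{\max}\le \|u'u/T\|_F = O_{\p}(N/\sqrt T + \ldots)$, which is not enough by itself. I would follow \citet{Stock2002ForecastingPredictors} and decompose
\begin{equation*}
    u'u/T = \Gamma + (u'u/T - \Gamma), \qquad \Gamma = [\gamma(i,j)] .
\end{equation*}
For the first piece, $\gamma_{\max}(\Gamma) \le \max_j\sum_i|\gamma(i,j)| \le M$ by Assumption \ref{ass:LN_time and cross sectional dependence}.2a, which contributes $O(1/N)$. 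For the second piece, $\gamma_{\max}$ is bounded by the Frobenius norm, and
\begin{equation*}
    \e\|u'u/T-\Gamma\|_F^2 = \sum_{i,j}\e\,\zeta_{ij}^2 \le N^2M/T
\end{equation*}
by Assumption \ref{ass:LN_time and cross sectional dependence}.4. This piece is therefore $O_{\p}(N/\sqrt T)$, and after dividing by $N$ it is $O_{\p}(T^{-1/2})$.

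Putting the pieces together, $\sup_{a\in A}|R(a)-R^*(a)| = O_{\p}(N^{-1}) + O_{\p}(T^{-1/2}) \to 0$. Because none of the bounds depends on the particular $a$ beyond the normalisation $a'a=N$, the convergence is uniform.
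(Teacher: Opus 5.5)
Your proof is correct. It shares the paper's skeleton: the same two-term expansion of $R(a)-R^*(a)$, with each piece bounded in a way that depends on $a$ only through $a'a=N$. You differ mainly in the tools for each piece.

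\textbf{Idiosyncratic quadratic term.} The paper applies Cauchy--Schwarz entrywise and bounds this term by the Frobenius quantity $\big[N^{-2}\sum_{i,j}(\bar u_i'\bar u_j/T)^2\big]^{1/2}$, citing Lemma~\ref{lemma:gamma_bounded}. You pass to $N^{-1}\gamma_{\max}(\bar u'\bar u/T)$ and strip the demeaning through $\bar u'\bar u = D_N u'u D_N$. You then split $u'u/T$ into $\Gamma$ plus a fluctuation, with the row-sum bound of Assumption~\ref{ass:LN_time and cross sectional dependence}.2a controlling $\gamma_{\max}(\Gamma)$. This buys three things:
\begin{itemize}
\item You get the sharper $O(N^{-1})$ for the mean part, against the paper's $O(N^{-1/2})$.
\item The $O_{\p}(T^{-1/2})$ contribution of the fluctuation is explicit. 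The paper's citation of Lemma~\ref{lemma:gamma_bounded} for this term covers only $\bar\gamma$, not $\zeta_{ij}$.
\item You avoid the entrywise comparison $|\bar\gamma(i,j)|\le C|\gamma(i,j)|$ on which that lemma's proof leans. That comparison is not valid entrywise: with $\Omega=I_N$, $\bar\gamma(i,j)=-1/N$ for $i\neq j$ while $\gamma(i,j)=0$.
\end{itemize}

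\textbf{Cross term.} Factoring through $\|\tilde\Lambda\|\,\|\tilde F'\bar u/T\|$ lets you apply Assumption~\ref{ass:weak dep between agg shocks and idio errors} directly to the cross-sectional average. The paper instead passes through $\sup_i N^{-1}\sum_j\eta_{ij}^2$, which is only shown to be $O_{\p}(T^{-1})$ pointwise in $i$.

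\textbf{One inaccurate claim.} You say the crude Frobenius bound is not enough, but it is. Assumption~\ref{ass:LN_time and cross sectional dependence}.2 gives $\|\Gamma\|_F^2\le \max_{i,j}|\gamma(i,j)|\sum_{i,j}|\gamma(i,j)| = O(N)$. Hence $N^{-1}\|u'u/T\|_F = O_{\p}(N^{-1/2}+T^{-1/2})$, which already vanishes. That is essentially the paper's route, so your spectral step is a sharpening rather than a necessity.

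\textbf{Harmless constants.} Assumption~\ref{ass:LN_time and cross sectional dependence}.4 bounds a fourth moment, so $\e\,\zeta_{ij}^2\le M^{1/2}/T$. The triangle-inequality treatment of the demeaning costs a factor $4$ rather than $2$, or nothing if you write $\tilde F'\bar u=\tilde F'uD_N$. Neither affects the rates.
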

\begin{proof}
From the definition,
\begin{align*}
    R(a) - R^*(a) &= (N^2 T)^{-1} a' \Bar{u}' \Bar{u} a + 2(N^2 T)^{-1} a' \tilde{\Lambda} \tilde{F}' \Bar{u} a \\
    \sup_{a \in A} | R(a) - R^*(a) | &\leq \sup_{a \in A} (N^2 T)^{-1} a' \Bar{u}' \Bar{u} a + \sup_{a \in A} 2(N^2 T)^{-1} a' \tilde{\Lambda} \tilde{F}' \Bar{u} a 
\end{align*}
We will analyze each of the terms in the right hand side. The first term is
\begin{align*}
    I &= \frac{1}{N^2} \sum_{i=1}^N a_i \sum_{j=1}^N a_j \cdot \Bar{u}_i' \Bar{u}_j/T = \frac{1}{N^2} \sum_{i=1}^N \sum_{j=1}^N a_i  a_j \cdot \Bar{u}_i' \Bar{u}_j/T \\
    & \leq  \left[ \frac{1}{N^2} \sum_{i=1}^N \sum_{j=1}^N (a_i a_j)^2 \right]^{\frac{1}{2}} \left[ \frac{1}{N^2} \sum_{i=1}^N \sum_{j=1}^N (\Bar{u}_i' \Bar{u}_j/T)^2 \right]^{\frac{1}{2}} \\
    &= \left[ \frac{1}{N^2} \sum_{i=1}^N \sum_{j=1}^N (\Bar{u}_i' \Bar{u}_j/T)^2 \right]^{\frac{1}{2}} = O_{\p}(\frac{1}{\sqrt{N}})
\end{align*}
where the stochastic order comes from Lemma \ref{lemma:gamma_bounded}. As the expression above is independent of $a$, we have
\begin{equation*}
    \sup_{a \in A} (N^2 T)^{-1} a' \Bar{u}' \Bar{u} a = O_{\p}(\frac{1}{\sqrt{N}})
\end{equation*}

Consider the second term:
\begin{align*}
    II &= \frac{1}{N^2} \sum_{i=1}^N a_i \sum_{j=1}^N a_j \cdot \Blambda_j' \frac{2}{T} \sum_{t=1}^T \tilde{F}_t \Bar{u}_{it} \\
    &= \frac{1}{N^2} \sum_{i=1}^N a_i \sum_{j=1}^N a_j \cdot \eta_{ij} = \frac{1}{N^2} \sum_{i=1}^N \sum_{j=1}^N a_i  a_j \cdot \eta_{ij} \\
    & \leq  \left[ \frac{1}{N^2} \sum_{i=1}^N \sum_{j=1}^N (a_i a_j)^2 \right]^{\frac{1}{2}} \left[ \frac{1}{N^2} \sum_{i=1}^N \sum_{j=1}^N \eta_{ij}^2 \right]^{\frac{1}{2}} \\
    &= \left[ \frac{1}{N} \sum_{i=1}^N \frac{1}{N} \sum_{j=1}^N \eta_{ij}^2 \right]^{\frac{1}{2}} \leq \left[ \frac{1}{N} \sum_{i=1}^N \sup_{i} \frac{1}{N} \sum_{j=1}^N \eta_{ij}^2 \right]^{\frac{1}{2}} \\
    &= \left[ \sup_{i} \frac{1}{N} \sum_{j=1}^N \eta_{ij}^2 \right]^2 = O_{\p}(\frac{1}{\sqrt{T}})
\end{align*}
where the stochastic order comes from an intermediate step in Lemma \ref{lemma:intermediate lemmas}. Like the first term, as the expression above is independent of $a$, we have
\begin{equation*}
    \sup_{a \in A} 2(N^2 T)^{-1} a' \tilde{\Lambda} \tilde{F}' \Bar{u} a = O_{\p}(\frac{1}{\sqrt{T}})
\end{equation*}

Hence we have showed that
\begin{align*}
    \sup_{a \in A} | R(a) - R^*(a) | &\leq O_{\p}(\frac{1}{\sqrt{N}}) + O_{\p}(\frac{1}{\sqrt{T}}) \\
    \sup_{a \in A} | R(a) - R^*(a) | &= o_{\p}(1) \xrightarrow{p} 0
\end{align*}
\end{proof}

\begin{Lemma} \label{lemma:behavior of R_star}
Under Assumptions \ref{ass:LN_strong factor structure} to \ref{ass:LN_moments and CLT},
\begin{equation*}
    \sup_{a \in A} R^*(a) \xrightarrow{p} \delta_1
\end{equation*}
where $\delta_1$ is the largest eigenvalue of $\Sigma_{\tilde{\Lambda}}^{\frac{1}{2}} \Sigma_{\Tilde{F}} \Sigma_{\tilde{\Lambda}}^{\frac{1}{2}}$.
\end{Lemma}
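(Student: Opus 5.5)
The plan is to turn $\sup_{a \in A} R^*(a)$ into a Rayleigh quotient and then read off its limit from a matrix eigenvalue. For $a \in A$, set $b \defeq \tilde{\Lambda}' a / N$, an $(r-1)$-vector. Then
\begin{equation*}
    R^*(a) = b' \hat{\Sigma}_{\tilde{F}} b, \qquad b = \frac{\tilde{\Lambda}' a}{N}.
\end{equation*}
As $a$ ranges over $A = \{a : a'a/N = 1\}$, the vector $a/\sqrt{N}$ ranges over the unit sphere in $\mathbb{R}^N$. Hence $b = (\tilde{\Lambda}/\sqrt{N})'(a/\sqrt{N})$ ranges over the image of that sphere under $(\tilde{\Lambda}/\sqrt{N})'$. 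The supremum is therefore the largest eigenvalue of $N^{-2}\, \tilde{\Lambda} \hat{\Sigma}_{\tilde{F}} \tilde{\Lambda}' \cdot N$, which is the operator
\begin{equation*}
    \frac{\tilde{\Lambda}}{\sqrt{N}}\, \hat{\Sigma}_{\tilde{F}}\, \frac{\tilde{\Lambda}'}{\sqrt{N}}.
\end{equation*}
This is an $N \times N$ matrix of rank at most $r-1$.

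The key algebraic step is that $XY$ and $YX$ share their nonzero eigenvalues. Take $X = \tilde{\Lambda}/\sqrt{N}$ and $Y = \hat{\Sigma}_{\tilde{F}} \tilde{\Lambda}'/\sqrt{N}$. The nonzero eigenvalues of the $N \times N$ matrix above then coincide with those of the $(r-1)\times(r-1)$ matrix $\hat{\Sigma}_{\tilde{F}}\, (\tilde{\Lambda}'\tilde{\Lambda}/N)$. Writing $\hat{\Sigma}_{\tilde{\Lambda}} = \tilde{\Lambda}'\tilde{\Lambda}/N$, these are in turn the eigenvalues of the symmetric matrix $\hat{\Sigma}_{\tilde{\Lambda}}^{1/2} \hat{\Sigma}_{\tilde{F}} \hat{\Sigma}_{\tilde{\Lambda}}^{1/2}$. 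So, exactly for every $N$ and $T$,
\begin{equation*}
    \sup_{a \in A} R^*(a) = \gamma_{\max}\big(\hat{\Sigma}_{\tilde{\Lambda}}^{1/2} \hat{\Sigma}_{\tilde{F}} \hat{\Sigma}_{\tilde{\Lambda}}^{1/2}\big).
\end{equation*}
No approximation enters at this stage, and the supremum is attained.

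The final step passes to the limit. Assumption \ref{ass:LN_strong factor structure} gives $\hat{\Sigma}_{\tilde{F}} \xrightarrow{p} \Sigma_{\tilde{F}}$ and $\hat{\Sigma}_{\tilde{\Lambda}} \xrightarrow{p} \Sigma_{\tilde{\Lambda}}$, with both limits positive definite. The matrix square root is continuous on positive semidefinite matrices, and $\gamma_{\max}$ is continuous (indeed $1$-Lipschitz in operator norm by Weyl's inequality). The continuous mapping theorem then yields convergence to $\delta_1$, the largest eigenvalue of $\Sigma_{\tilde{\Lambda}}^{1/2}\Sigma_{\tilde{F}}\Sigma_{\tilde{\Lambda}}^{1/2}$.

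The main obstacle is mild. One must check that the image set of $b$ is handled correctly when $\tilde{\Lambda}$ has rank $r-1$, which holds with probability approaching one because $\hat{\Sigma}_{\tilde{\Lambda}}$ converges to a positive definite matrix. The Rayleigh-quotient identity itself only needs the variational characterization of $\gamma_{\max}$ for the positive semidefinite matrix $X\hat{\Sigma}_{\tilde{F}}X'$, so rank deficiency causes no real trouble. Distinctness of the eigenvalues in Assumption \ref{ass:LN_strong factor structure}.3 is not needed here; it matters only for the later eigenvector convergence.
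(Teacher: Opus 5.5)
Your proposal is correct and takes essentially the paper's route: both establish the exact finite-sample identity $\sup_{a\in A}R^*(a)=\gamma_{\max}\big(\hat\Sigma_{\tilde\Lambda}^{1/2}\hat\Sigma_{\tilde F}\hat\Sigma_{\tilde\Lambda}^{1/2}\big)$, with $\hat\Sigma_{\tilde\Lambda}=\tilde\Lambda'\tilde\Lambda/N$, and then pass to the limit by continuity of the top eigenvalue (Weyl). The only difference is in how that identity is obtained. The paper decomposes $a=\tilde\Lambda(\tilde\Lambda'\tilde\Lambda/N)^{-1/2}a_1+a_2$ with $a_2'\tilde\Lambda=0$ and $a_1'a_1\le 1$, which presumes $\tilde\Lambda'\tilde\Lambda/N$ is invertible; your $XY$/$YX$ spectral argument obtains the identity without that assumption.
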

\begin{proof}
    Decompose $\frac{\tilde{\Lambda}' \tilde{\Lambda}}{N} = \left( \frac{\tilde{\Lambda}' \tilde{\Lambda}}{N}  \right)^{\frac{1}{2}} \left( \frac{\tilde{\Lambda}' \tilde{\Lambda}}{N}  \right)^{\frac{1}{2}} $ where $\left( \frac{\tilde{\Lambda}' \tilde{\Lambda}}{N}  \right)^{\frac{1}{2}}$ is unique, symmetric, and positive definite by Theorem 7.2.6 of \citet{Horn2012MatrixAnalysis}. For some $r$ dimensional vector, $a_1$ and $N$ dimensional vector $a_2$, we can write
    \begin{equation*}
        a = \tilde{\Lambda} \left( \frac{\tilde{\Lambda}' \tilde{\Lambda}}{N} \right)^{- \frac{1}{2}} a_1 + a_2
    \end{equation*}
    where $a_2' \tilde{\Lambda} = 0$. $a_1$ and $a_2$ exists uniquely because we have $N+r$ unknowns and $N + r$ restrictions. 
    \begin{equation*}
        \frac{a'a}{N} = a_1' a_1 + \frac{a_2' a_2}{N} =1
    \end{equation*}
    This representation implies $a_1' a_1 \leq 1$
    \begin{align*}
        R^*(a) &= N^{-2} a' \Lambda \hat{\Sigma}_{\Tilde{F}} \Lambda' a \\
                &= a_1' \left( \frac{\tilde{\Lambda}' \tilde{\Lambda}}{N} \right)^{\frac{1}{2}} \hat{\Sigma}_{\Tilde{F}} \left( \frac{\tilde{\Lambda}' \tilde{\Lambda}}{N} \right)^{\frac{1}{2}} a_1 \\
                & \leq d_1 a_1' a_1 \\
        \therefore \sup_{a \in A} R^*(a) &= \sup_{a_1, a_1'a_1 \leq 1}  d_1 a_1' a_1 = d_1
    \end{align*}

    where $d_1$ is the largest eigenvalue of $\left( \frac{\tilde{\Lambda}' \tilde{\Lambda}}{N} \right)^{\frac{1}{2}} \hat{\Sigma}_{\Tilde{F}} \left( \frac{\tilde{\Lambda}' \tilde{\Lambda}}{N} \right)^{\frac{1}{2}}$ which $\xrightarrow{p} \Sigma_{\Lambda}^{\frac{1}{2}} \Sigma_{\Tilde{F}} \Sigma_{\Lambda}^{\frac{1}{2}} $. Therefore, by Weyl's Theorem (Theorem 2 of Sec 6.7 in \citet{Franklin2012MatrixTheory}),
    \begin{equation*}
        d_1 \xrightarrow{p} \delta_1
    \end{equation*}
\end{proof}

\begin{Lemma} \label{lemma:SW_intermediate results}
    Under Assumptions \ref{ass:LN_strong factor structure} to \ref{ass:LN_moments and CLT},
    \begin{enumerate}
        \item $\sup_{a \in A} R(a) \xrightarrow{p} \delta_1$, where $\delta_1$ is the largest eigenvalue of $\Sigma_{\tilde{\Lambda}}^{\frac{1}{2}} \Sigma_{\Tilde{F}} \Sigma_{\tilde{\Lambda}}^{\frac{1}{2}}$.
        \item Let $ \hat{\Blambda}_1$ be the eigenvector of $ \frac{\Bar{Y}' \Bar{Y}}{NT}$, corresponding to its largest eigenvalue and hence the first column of our estimator, $\hat{\Lambda}$, then
        \begin{align*}
            & \hat{\Blambda}_1 = \text{arg} \sup_{a \in A} R(a), \quad \text{and}\\
            & R^*(\hat{\Blambda}_1)  \xrightarrow{p} \delta_1
        \end{align*}
    \end{enumerate}
\end{Lemma}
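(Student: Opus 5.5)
The two parts follow from Lemmas~\ref{lemma:uniform convergence of R and R^*} and~\ref{lemma:behavior of R_star} by a sandwich argument, in the manner of \citet{Stock2002ForecastingPredictors}. Part 1 comes first. Write $\Delta_{NT} \defeq \sup_{a \in A} |R(a) - R^*(a)|$, which is $o_{\p}(1)$ by Lemma~\ref{lemma:uniform convergence of R and R^*}. For every $a \in A$ we have $R(a) \le R^*(a) + \Delta_{NT} \le \sup_{A} R^* + \Delta_{NT}$, and symmetrically $R^*(a) \le \sup_A R + \Delta_{NT}$. Taking suprema gives
\begin{equation*}
    \Big| \sup_{a \in A} R(a) - \sup_{a \in A} R^*(a) \Big| \le \Delta_{NT} = o_{\p}(1).
\end{equation*}
Lemma~\ref{lemma:behavior of R_star} gives $\sup_A R^* \xrightarrow{p} \delta_1$, so $\sup_A R \xrightarrow{p} \delta_1$.

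For part 2, the first claim is the Rayleigh-quotient characterisation of the leading eigenvector. Both $R(a)$ and the constraint $a'a/N = 1$ are quadratic forms, so maximising $R$ over $A$ is the same as maximising $a'(\Bar{Y}'\Bar{Y}/NT)a / a'a$. The maximiser is the top eigenvector of the symmetric positive semidefinite matrix $\Bar{Y}'\Bar{Y}/NT$, rescaled to lie in $A$. This rescaling is the one point of bookkeeping. The paper uses $\hat\Lambda'\hat\Lambda = I$ in places and $\hat\Lambda'\hat\Lambda/N = I$ in the footnote. I will take $\hat{\Blambda}_1$ to be normalised so that $\hat{\Blambda}_1'\hat{\Blambda}_1/N = 1$, which makes it an element of $A$. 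Since $R$ is homogeneous of degree two, this normalisation changes only the scale and not the argmax direction. The maximiser is unique up to sign once the top eigenvalue is simple. Simplicity holds with probability approaching one, because the eigenvalues of $\Sigma_{\tilde\Lambda}\Sigma_{\tilde F}$ are distinct (Assumption~\ref{ass:LN_strong factor structure}) and the sample eigenvalues converge by the argument of part 1.

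For the second claim, $\hat{\Blambda}_1 \in A$ gives
\begin{equation*}
    |R^*(\hat{\Blambda}_1) - R(\hat{\Blambda}_1)| \le \Delta_{NT}.
\end{equation*}
By the first claim, $R(\hat{\Blambda}_1) = \sup_A R$, and this converges to $\delta_1$ by part 1. Hence $R^*(\hat{\Blambda}_1) \xrightarrow{p} \delta_1$.

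I expect the only real obstacle to be the uniformity and normalisation of the sup bound. Lemma~\ref{lemma:uniform convergence of R and R^*} controls the perturbation uniformly over $A$ only because its bounds are free of $a$. I will check that this uniformity still holds with the normalisation $a'a/N = 1$. Specifically, Cauchy--Schwarz gives $N^{-2}\sum_{i,j}(a_ia_j)^2 = (a'a/N)^2 = 1$, so the bound is uniform in $a$. Everything else is deterministic inequalities plus Slutsky.
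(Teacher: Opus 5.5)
Your proof is correct and is essentially the paper's own argument. Part 1 uses the bound $|\sup_A R - \sup_A R^*| \le \sup_A |R - R^*|$ together with Lemmas~\ref{lemma:uniform convergence of R and R^*} and~\ref{lemma:behavior of R_star}. Part 2 uses the Rayleigh-quotient identity $R(\hat{\Blambda}_1) = c_1 = \sup_A R$ followed by the perturbation bound at $\hat{\Blambda}_1$, and there you are slightly more careful than the paper, since you apply the \emph{uniform} bound at this random point rather than pointwise convergence.

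Your uniqueness aside is not needed, because membership in the argsup set suffices. It also does not follow from part 1 alone: a simple top eigenvalue requires $c_2 \to \delta_2 < \delta_1$, which the paper only obtains later via the subspace argument in Theorem~\ref{th:convergence of lambda_hat times lambda}.
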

\begin{proof}
    For the \textbf{first} result, recall by Lemma \ref{lemma:uniform convergence of R and R^*}
    \begin{equation*}
        \sup_{a \in A} | R(a) - R^*(a) | \xrightarrow{p} 0
    \end{equation*}
    By the definition of supremum,
    \begin{equation*}
        | \sup_{a \in A} R(a) - \sup_{a \in A} R^*(a)  | \leq \sup_{a \in A} | R(a) - R^*(a) | \xrightarrow{p} 0
    \end{equation*}
    Thus $\sup_{a \in A} R(a) - \sup_{a \in A} R^*(a) \xrightarrow{p} 0$. By Lemma \ref{lemma:behavior of R_star}, $\sup_{a \in A} R^*(a) \xrightarrow{p} \delta_1$ and the result follows.

    Consider the \textbf{second} result. $\hat{\Blambda}_1$ is the eigenvector of $\frac{\Bar{Y}' \Bar{Y}}{NT}$, corresponding to its largest eigenvalue and hence the first column of our estimator, $\hat{\Lambda}$. Call the eigenvalues of $\frac{\Bar{Y}' \Bar{Y}}{NT}$ as $c_1 \geq c_2 \geq \dots \geq c_N$. Therefore,
    \begin{align*}
    R(\hat{\Blambda}_1) &= N^{-1} \hat{\Blambda}_1' \frac{\Bar{Y}' \Bar{Y}}{NT} \hat{\Blambda}_1 \\
                        &= N^{-1} \hat{\Blambda}_1' c_1 \hat{\Blambda}_1 \\
                        &= c_1
\end{align*}
Now note that
    \begin{align*}
        R(a) &= N^{-1} a' \frac{\Bar{Y}' \Bar{Y}}{NT} a \\
            & \leq c_1 \frac{a'a}{N} = c_1 \\
        \therefore \sup_{a \in A} R(a) &= c_1
    \end{align*}
Thus $\hat{\Blambda}_1 = \text{arg} \sup_{a \in A} R(a)$. By the first result of this Lemma, $R(\hat{\Blambda}_1) \xrightarrow{p} \delta_1$. In the proof of Lemma \ref{lemma:uniform convergence of R and R^*}, we saw that for all $ a \in A$,
\begin{align*}
    R(a) - R^*(a) \xrightarrow{p} 0
\end{align*}
Thus $R(\hat{\Blambda}_1) \xrightarrow{p} \delta_1 \implies R^*(\hat{\Blambda}_1) \xrightarrow{p} \delta_1$
\end{proof}

\begin{Theorem} \label{th:convergence of lambda_hat times lambda}
    Suppose Assumptions \ref{ass:LN_strong factor structure} to \ref{ass:LN_moments and CLT} hold. Let $\hat{\Lambda}$ be the $r-1$ normalized and ordered eigenvectors of $\frac{\Bar{Y}' \Bar{Y}}{NT}$, corresponding to its $r$ largest eigenvalue. Then 
    \begin{enumerate}
        \item The $r-1 \times r-1$ matrix $\frac{\hat{\Lambda}' \Lambda}{N}$ is invertible and
    \begin{equation*}
        \frac{\hat{\Lambda}' \Lambda}{N} \xrightarrow{p} \Upsilon' \Sigma_{\Lambda}^{\frac{1}{2}}
    \end{equation*}
    where $\Upsilon$ holds all the orthonormal eigenvectors of $\Sigma_{{\Lambda}}^{1/2} \Sigma_{\tilde{F}} \Sigma_{{\Lambda}}^{1/2}$

        \item $V_{NT}$ is invertible and $V_{NT} \xlongrightarrow{p} V$, where $V$ is a diagonal matrix containing the eigenvalues of $\Sigma_{\tilde{\Lambda}}\Sigma_{\Tilde{F}}$
    \end{enumerate}

\end{Theorem}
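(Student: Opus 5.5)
I would follow the Stock--Watson/Bai (2003) route, now that Lemmas \ref{lemma:uniform convergence of R and R^*}--\ref{lemma:SW_intermediate results} are available. The plan is to start from the eigen-equation $\frac{\Bar Y'\Bar Y}{NT}\hat\Lambda=\hat\Lambda V_{NT}$, premultiply it by $\frac{\tilde\Lambda'}{N}$, and substitute the decomposition \eqref{eq:expansion of the estimator matrix}.

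Using $\hat\Lambda'\hat\Lambda/N=I$ and the bounds already obtained in the proofs of Lemma \ref{lemma:uniform convergence of R and R^*} and Lemma \ref{lemma:intermediate lemmas}, the three noise terms contribute $O_{\p}(N^{-1/2})+O_{\p}(T^{-1/2})$. Among these, $\frac{\tilde\Lambda'}{N}\frac{\Bar u'\Bar u}{NT}\frac{\hat\Lambda}{N}$ is bounded through Lemma \ref{lemma:gamma_bounded}, and the cross terms are bounded through Assumption \ref{ass:weak dep between agg shocks and idio errors}. This leaves
\begin{equation*}
\Big(\tfrac{\tilde\Lambda'\tilde\Lambda}{N}\Big)\hat\Sigma_{\tilde F}\Big(\tfrac{\tilde\Lambda'\hat\Lambda}{N}\Big)=\Big(\tfrac{\tilde\Lambda'\hat\Lambda}{N}\Big)V_{NT}+o_{\p}(1).
\end{equation*}
Next define $R_{NT}=(\tilde\Lambda'\tilde\Lambda/N)^{1/2}(\tilde\Lambda'\hat\Lambda/N)$ and premultiply by $(\tilde\Lambda'\tilde\Lambda/N)^{1/2}$. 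Then $B_{NT}R_{NT}=R_{NT}V_{NT}+o_{\p}(1)$, where $B_{NT}=(\tilde\Lambda'\tilde\Lambda/N)^{1/2}\hat\Sigma_{\tilde F}(\tilde\Lambda'\tilde\Lambda/N)^{1/2}\xrightarrow{p}\Sigma_{\Lambda}^{1/2}\Sigma_{\tilde F}\Sigma_{\Lambda}^{1/2}$ by Assumption \ref{ass:LN_strong factor structure}. So the columns of $R_{NT}$ are approximate eigenvectors of $B_{NT}$, with approximate eigenvalues given by the diagonal of $V_{NT}$.

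The first key step is to show that $V_{NT}$ has a nondegenerate limit. Lemma \ref{lemma:SW_intermediate results} identifies the top eigenvalue $c_1\xrightarrow{p}\delta_1$. I would extend this to the $k$-th eigenvalue by the Courant--Fischer characterisation, taking $\sup$ over $a\in A$ orthogonal to the first $k-1$ estimated eigenvectors. The uniform approximation of Lemma \ref{lemma:uniform convergence of R and R^*} carries over to the restricted sets, so $c_k$ converges to the $k$-th eigenvalue $\delta_k$ of $\Sigma_\Lambda^{1/2}\Sigma_{\tilde F}\Sigma_\Lambda^{1/2}$. Since these are the eigenvalues of $\Sigma_{\tilde\Lambda}\Sigma_{\tilde F}$, they are positive and distinct by Assumption \ref{ass:LN_strong factor structure}.3. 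This gives part 2: $V_{NT}\xrightarrow{p}V$, and $V_{NT}$ is invertible with probability tending to one.

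The second key step is normalisation. The columns of $R_{NT}$ must have unit norm asymptotically and must be mutually orthogonal. For each column, $R^*(\hat{\Blambda}_k)=\|R_{NT,\cdot k}\|^2$-type identities combined with Lemma \ref{lemma:SW_intermediate results}.2 (and its $k$-th analogue) show that all the mass of $\hat{\Blambda}_k$ lies in the span of $\tilde\Lambda$. Hence $a_2'a_2/N\to0$ in the decomposition of Lemma \ref{lemma:behavior of R_star}, so $R_{NT}'R_{NT}\xrightarrow{p}I$. Distinctness of the eigenvalues then forces $R_{NT}\to\Upsilon$ up to column signs, which are fixed by convention. This yields $\tilde\Lambda'\hat\Lambda/N\to\Sigma_\Lambda^{-1/2}\Upsilon$, equivalently $\hat\Lambda'\tilde\Lambda/N\to\Upsilon'\Sigma_\Lambda^{1/2}$ up to the symmetric-root convention. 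Invertibility follows because $\Upsilon$ is orthogonal and $\Sigma_\Lambda$ is positive definite.

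I expect the main obstacle to be the passage from approximate eigen-equations to convergence of eigenvectors. It needs a perturbation argument, for instance Davis--Kahan or continuity of eigenprojections at simple eigenvalues, applied jointly with the orthonormality $R_{NT}'R_{NT}\to I$. Establishing that orthonormality, that is, ruling out the case where $\hat\Lambda$ leaks mass outside the loading space, is the delicate part. I would do it through the $R$ versus $R^*$ comparison above, column by column.
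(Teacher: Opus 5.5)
The symmetrisation step uses the wrong power of $L=\tilde\Lambda'\tilde\Lambda/N$, and this error produces the wrong limit at the end.

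Write $G=\tilde\Lambda'\hat\Lambda/N$ and $\hat\Sigma_{\tilde F}=\tilde F'\tilde F/T$. Your first step then gives $L\hat\Sigma_{\tilde F}G=GV_{NT}+o_{\p}(1)$. Setting $R_{NT}=L^{1/2}G$ and premultiplying by $L^{1/2}$ yields $L^{3/2}\hat\Sigma_{\tilde F}L^{-1/2}R_{NT}=R_{NT}V_{NT}+o_{\p}(1)$. Equivalently, $B_{NT}(L^{-1}R_{NT})=(L^{-1}R_{NT})V_{NT}+o_{\p}(1)$. So the approximate eigenvectors of the symmetric $B_{NT}$ are the columns of $L^{-1/2}G$, not $L^{1/2}G$. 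The two agree only if $L$ commutes with $B_{NT}$, for instance when $r-1=1$.

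This has three consequences:
\begin{itemize}
\item Your claimed equation $B_{NT}R_{NT}=R_{NT}V_{NT}+o_{\p}(1)$ does not hold for your $R_{NT}$ when $r-1\ge 2$.
\item $R_{NT}'R_{NT}\to I$ is false for your $R_{NT}$.
\item Your end point $\tilde\Lambda'\hat\Lambda/N\to\Sigma_\Lambda^{-1/2}\Upsilon$ transposes to $\Upsilon'\Sigma_\Lambda^{-1/2}$. This differs from the claimed $\Upsilon'\Sigma_\Lambda^{1/2}$ by the factor $\Sigma_\Lambda$, and no square-root convention makes them equivalent.
\end{itemize}
This looks like Bai's (2003) $R_{NT}$ carried over without swapping roles: there the eigenvectors are factors, and one symmetrises with the root of the \emph{other} side's Gram matrix.

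The fix is $R_{NT}=L^{-1/2}G$, premultiplying by $L^{-1/2}$. Then $B_{NT}R_{NT}=R_{NT}V_{NT}+o_{\p}(1)$ holds. Moreover $R_{NT}'R_{NT}=\hat\Lambda'\tilde\Lambda(\tilde\Lambda'\tilde\Lambda)^{-1}\tilde\Lambda'\hat\Lambda/N$ measures exactly the mass of $\hat\Lambda$ in the span of $\tilde\Lambda$; its columns are the paper's coefficient vectors $\hat a_1$. Finally $G=L^{1/2}R_{NT}\to\Sigma_\Lambda^{1/2}\Upsilon$, which is the statement.

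Two further steps need restating even after this fix.

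\emph{Eigenvalues (part 2).} Taking the supremum over the complement of the first $k-1$ \emph{estimated} eigenvectors only bounds $c_k$ from below, until you know those eigenvectors are aligned. In your ordering you do not yet know this. Use the min--max form instead, which here is Weyl's inequality. The reason is that $\sup_{a\in A}|R(a)-R^*(a)|$ is the operator norm of $\frac{\Bar{Y}'\Bar{Y}}{NT}-\frac{\tilde\Lambda\hat\Sigma_{\tilde F}\tilde\Lambda'}{N}$, and the nonzero eigenvalues of the second matrix are those of $B_{NT}$.

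\emph{Normalisation.} $R^*$ at the $k$-th column of $\hat\Lambda$ equals $R_{NT,\cdot k}'B_{NT}R_{NT,\cdot k}$. This is a $B_{NT}$-weighted norm, and on its own it pins the length only for $k=1$. The eigen-equation turns it into $c_k\|R_{NT,\cdot k}\|^2+o_{\p}(1)$. Meanwhile $R$ at that column equals $c_k$, so by the uniform approximation $R^*$ there is also $c_k+o_{\p}(1)$. Since $c_k\to\delta_k>0$, this gives $\|R_{NT,\cdot k}\|\to1$.

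\emph{Directions.} No Davis--Kahan argument is needed. The identity $\Upsilon'(B_{NT}R_{NT}-R_{NT}V_{NT})=o_{\p}(1)$ reads $(\delta_j-c_k)(\Upsilon'R_{NT})_{jk}=o_{\p}(1)$. Distinct eigenvalues then kill the off-diagonal entries, and unit length fixes the diagonal at $\pm1$.

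\emph{Comparison with the paper.} The paper never forms the premultiplied eigen-equation. It obtains the length and direction of $\hat a_1$ together from $\hat a_1'C_{NT}\hat a_1\to\delta_1$ with $\|\hat a_1\|\le1$. It then iterates over the orthogonal complements $A_1,A_2,\dots$, asserting without detail that $R^*$ at the next column tends to $\delta_2,\delta_3,\dots$. The corrected version of your argument treats all columns at once and needs no such induction. The price is bounding the three noise terms in the premultiplied equation.
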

\begin{proof}
    Consider the first column of $\hat{\Lambda}$, which we call $\hat{\Blambda}_1$. For some $r$ dimensional vector $\hat{a}_1$ and $N$ dimensional vector $\hat{a}_2$, we can write
    \begin{equation*}
        \hat{\Blambda}_1 = \tilde{\Lambda} \left( \frac{\tilde{\Lambda}' \tilde{\Lambda}}{N} \right)^{- \frac{1}{2}} \hat{a}_1 + \hat{a}_2
    \end{equation*}
    where $\hat{a}_2' \tilde{\Lambda} = 0$. $\hat{a}_1$ and $\hat{a}_2$ exists uniquely because we have $N + r$ unknowns and $N + r$ restrictions. 
    \begin{equation*}
        \frac{\hat{\Blambda}_1' \hat{\Blambda}_1 }{N} = \hat{a}_1' \hat{a}_1 + \frac{\hat{a}_2' \hat{a}_2}{N} =1
    \end{equation*}
    This representation implies $\hat{a}_1' \hat{a}_1 \leq 1$. Call $C_{NT} = \left( \frac{\tilde{\Lambda}' \tilde{\Lambda}}{N} \right)^{\frac{1}{2}} \hat{\Sigma}_{\Tilde{F}} \left( \frac{\tilde{\Lambda}' \tilde{\Lambda}}{N} \right)^{\frac{1}{2}} $
    \begin{align*}
        R^*(\hat{\Blambda}_1 ) &= N^{-2} \hat{\Blambda}_1 ' \tilde{\Lambda} \hat{\Sigma}_{\Tilde{F}} \tilde{\Lambda}' \hat{\Blambda}_1  \\
                &= \hat{a}_1' \left( \frac{\tilde{\Lambda}' \tilde{\Lambda}}{N} \right)^{\frac{1}{2}} \hat{\Sigma}_{\Tilde{F}} \left( \frac{\tilde{\Lambda}' \tilde{\Lambda}}{N} \right)^{\frac{1}{2}} \hat{a}_1 \\
                &=  \hat{a}_1' C_{NT} \hat{a}_1
    \end{align*}
    where, in one of the intermediate steps we use the decomposition of  $\frac{\tilde{\Lambda}' \tilde{\Lambda}}{N} = \left( \frac{\tilde{\Lambda}' \tilde{\Lambda}}{N}  \right)^{\frac{1}{2}} \left( \frac{\tilde{\Lambda}' \tilde{\Lambda}}{N}  \right)^{\frac{1}{2}} $ where $\left( \frac{\tilde{\Lambda}' \tilde{\Lambda}}{N}  \right)^{\frac{1}{2}}$ is unique, symmetric, and positive definite by Theorem 7.2.6 of \citet{Horn2012MatrixAnalysis}.

   Now observe that
    \begin{align*}
        R^*(\hat{\Blambda}_1 ) - \delta_1 &= \hat{a}_1' C_{NT} \hat{a}_1 - \delta_1 \\
        &= \hat{a}_1'[C_{NT} - \Sigma_{{\Lambda}}^{1/2} \Sigma_{\tilde{F}} \Sigma_{{\Lambda}}^{1/2}]\hat{a}_1 + \hat{a}_1'[\Sigma_{{\Lambda}}^{1/2} \Sigma_{\tilde{F}} \Sigma_{{\Lambda}}^{1/2}]\hat{a}_1 - \delta_1
    \end{align*}
     Note that $C_{NT} \xrightarrow{p} \Sigma_{{\Lambda}}^{1/2} \Sigma_{\tilde{F}} \Sigma_{{\Lambda}}^{1/2}$. By Lemma \ref{lemma:SW_intermediate results}, $R^*(\hat{\Blambda}_1 ) \xrightarrow{p} \delta_1$. Hence it has to be the case that
     \begin{equation*}
         \hat{a}_1'[\Sigma_{{\Lambda}}^{1/2} \Sigma_{\tilde{F}} \Sigma_{{\Lambda}}^{1/2}]\hat{a}_1 - \delta_1 \xrightarrow{p} 0
     \end{equation*}

     We can decompose the symmetric $\Sigma_{{\Lambda}}^{1/2} \Sigma_{\tilde{F}} \Sigma_{{\Lambda}}^{1/2} = \Upsilon \Delta \Upsilon'$, where $\Upsilon$ holds all the normalized eigenvectors such that $\Upsilon'\Upsilon = I$ and the diagonal $\Delta$ contains all the eigenvalues in order. The above display becomes
     \begin{align*}
          \hat{a}_1'\Upsilon \cdot \Delta \cdot  \Upsilon'\hat{a}_1 - \delta_1 & \xrightarrow{p} 0 \\
          [(\hat{a}_1' \Upsilon_1) - 1] \delta_1 + \sum_{j=2}^r (\hat{a}_1' \Upsilon_j)^2 \delta_j \xrightarrow{p} 0
     \end{align*}
     where $\Upsilon_j$ is a column of $\Upsilon$. As $\delta_i > 0$ for all $i$, the above display implies
     \begin{align*}
         \hat{a}_1' \Upsilon_1 &\xrightarrow{p} 1 \\
         \hat{a}_1' \Upsilon_j &\xrightarrow{p} 0 \enspace \forall \enspace j = 2,3,\dots,r
     \end{align*}
     In matrix form we can write the above display together as
     \begin{equation*}
         \Upsilon' \hat{a}_1 \xrightarrow{p} e_1
     \end{equation*}
     where the $r$ dimensional $e_1 = (1,0,\dots,0)'$. By continuous mapping theorem, left multiplying by $\Upsilon$ gives
     \begin{equation*}
         \hat{a}_1 \xrightarrow{p} \Upsilon e_1 = \Upsilon_1
     \end{equation*}
    Therefore, the first row of the matrix of interest
    \begin{align*}
        \frac{\hat{\Blambda}_1' \tilde{\Lambda}}{N} &= \hat{a}_1' \left( \frac{\tilde{\Lambda}' \tilde{\Lambda}}{N} \right)^{\frac{1}{2}} \\
            & \xrightarrow{p} \Upsilon_1' \Sigma_{\tilde{\Lambda}}^{\frac{1}{2}}
    \end{align*}

    For the second column of $\hat{\BLambda}$, define a new set
    \begin{equation*}
        A_1 = A \setminus \text{range}(\hat{\Blambda}_1). 
    \end{equation*}
    We can decompose the symmetric $\frac{\Bar{Y}' \Bar{Y}}{NT}$ as $\frac{\Bar{Y}' \Bar{Y}}{NT} = PCP'$. where $P$ is orthonormal and $C$ is a diagonal matrix that holds the eigenvalues of $\frac{\Bar{Y}' \Bar{Y}}{NT}$. Call the eigenvalues $c_1 \geq c_2 \geq \dots \geq c_N$. Using already defined terms, we can write $P$ as
    \begin{equation*}
        P = N^{-\frac{1}{2}} [\hat{\Lambda} \enspace \hat{\BB}]
    \end{equation*}
    where $\hat{\Lambda}$ is the $N \times r-1$ collection of eigenvectors such that $\frac{\hat{\Lambda}' \hat{\Lambda}}{N} = I_{r-1}$. $\hat{\BB}$ are the remaining eigenvectors, scaled such that $\frac{\hat{\BB}' \hat{\BB}}{N} = I_{N-r+1}$. The corresponding column vectors, $\hat{\Blambda}_1,\dots,\hat{\Blambda}_{r-1},\hat{\Bb}_1,\dots,\hat{\Bb}_{N-r+1}$ form a basis of A. 

    We can write any $a \in A_1$ as
    \begin{align*}
        &a = k_2 \hat{\Blambda}_2+ \dots + k_r \hat{\Blambda}_{r-1} + k_{r+1} \hat{\Bb}_1 +\dots + k_N \hat{\Bb}_{N-r+1} \\
        &\text{where} \enspace \frac{a'a}{N} = 1 \implies \sum_{j=2}^N k_j^2 = 1
    \end{align*} 
    Now note that
    \begin{align*}
        R(a) &= N^{-1} a' \frac{\Bar{Y}' \Bar{Y}}{NT} a \\
        &= N^{-2} a' [\hat{\Lambda} \enspace \hat{\BB}] C [\hat{\Lambda} \enspace \hat{\BB}]' a \\
        &= [0 \enspace k_2 \enspace k_3 \enspace \dots \enspace k_N] C [0 \enspace k_2 \enspace k_3 \enspace \dots \enspace k_N]' \\
        &= \sum_{j=2}^N c_j k_j^2 \leq \sum_{j=2}^N c_2 k_j^2 = c_2 \sum_{j=2}^N k_j^2 = c_2 \\
        \therefore \sup_{a \in A_1} R(a) &= c_2
    \end{align*}

    Applying the previous Lemmas for the subspace $A_1$, we can easily see that $\hat{\Blambda}_2 = \text{arg} \sup_{a \in A_1} R(a)$ and $ R^*(\hat{\Blambda}_2)  \xrightarrow{p} \delta_2$. Repeating the arguments in the proof above, we can conclude that
    \begin{equation*}
        \frac{\hat{\Blambda}_2' \tilde{\Lambda}}{N} \xrightarrow{p} \Upsilon_2' \Sigma_{\tilde{\Lambda}}^{\frac{1}{2}}
    \end{equation*}
    Doing this sequentially for $r-1$ orthonormal subspaces of $A$ and collecting all the results, we have
    \begin{equation*}
        \frac{\hat{\Lambda}' \tilde{\Lambda}}{N} \xrightarrow{p} \Upsilon' \Sigma_{\tilde{\Lambda}}^{\frac{1}{2}}
    \end{equation*}
    $\frac{\hat{\Lambda}' \tilde{\Lambda}}{N}$ is invertible in the limit as both the the $r-1 \times r-1$ matrices $\Upsilon$ and $\Sigma_{\tilde{\Lambda}}^{\frac{1}{2}}$ are invertible. Hence, for large values of $N$ and $T$,  $\frac{\hat{\Lambda}' \tilde{\Lambda}}{N}$ is invertible.

    For the second result, Lemma \ref{lemma:SW_intermediate results} showed that $c_1 \xrightarrow{p} \delta_1$. Applying the result to the sub-spaces $A_1,\dots,A_{r-1}$ as above shows that $c_i \xrightarrow{p} \delta_i$, for all $1 \leq i \leq r-1$. $V_{NT}$ is a diagonal matrix that holds these c's, and $V$ holds the corresponding $\delta$'s. Thus, we have
    \begin{equation}
        V_{NT} \xrightarrow{p} V
    \end{equation}
    The $\delta$'s which are the eigenvalues of $\Sigma_{\tilde{\Lambda}} \Sigma_{\tilde{F}}$, are distinct and non-zero by Assumption \ref{ass:LN_strong factor structure}.3. By Weyl's inequality, $V_{NT}$ has non-zero and distinct entries for $N$ and $T$ large. Hence $V_{NT}$ is invertible.
\end{proof}

\paragraph{Asymptotic Linear Form} Thus, we get the asymptotic linear for of our estimator of $\tilde{\Lambda}$. Each column of the estimator is such that

\begin{equation} \label{eq:asymptotic_linear_lambda}
\hat{\Blambda}_i - \hat{H}' \Blambda_i = V_{NT}^{-1} \cdot \frac{\tilde{\Lambda}' \hat{\Lambda}}{N} \cdot \frac{1}{T} \sum_{t=1}^T \tilde{F}_t \Bar{u}_{it} + O_{\p} \left(\frac{1}{\delta_{NT}^2} \right)
\end{equation}
\section{Estimation of Common Factors} \label{app:estimation of common factors}
The estimation of the common factors arises from the first order conditions. The $T \times r-1$ matrix of the estimated common factors, $\hat{F}$ is given by
\begin{equation*}
    \hat{F} = \frac{\Bar{Y} \hat{\Lambda}}{N}
\end{equation*}
Using $\Bar{Y} = \tilde{F} \tilde{\Lambda}' + \Bar{u}$, we can expand the expression for the estimator as
\begin{align*}
    \hat{F} = \tilde{F} \left[ \frac{\tilde{\Lambda}' \hat{\Lambda}}{N} \right] + \frac{\Bar{u} \hat{\Lambda}}{N}
\end{align*}
Each column of the estimator is such that
\begin{equation*}
    \hat{F}_t = \left[ \frac{\hat{\Lambda}' \tilde{\Lambda} }{N} \right] \tilde{F}_t + \frac{\hat{\Lambda}' \Bar{u}_t}{N}
\end{equation*}
We write $\tilde{\Lambda} = \tilde{\Lambda} - \hat{\Lambda} \hat{H}^{-1} + \hat{\Lambda} \hat{H}^{-1}$ and $\hat{\Lambda} = \hat{\Lambda} - \tilde{\Lambda}H + \tilde{\Lambda}H $. $\hat{H}^{-1}$ exists due to the following reasons. Recall the definition of $\hat{H}$
\begin{equation*}
    \hat{H} = \left[ \frac{\tilde{F}' \tilde{F}}{NT} \right]\cdot \tilde{\Lambda}' \hat{\Lambda} \cdot V_{NT}^{-1}
\end{equation*}
It is a product of three $r-1 \times r-1$ matrices, each of which are invertible under Assumption \ref{ass:LN_strong factor structure}.1 and Theorem \ref{th:convergence of lambda_hat times lambda}. The estimator becomes
\begin{align*}
    \hat{F}_t &= N^{-1} [\hat{\Lambda}' ( \tilde{\Lambda} - \hat{\Lambda} \hat{H}^{-1} + \hat{\Lambda} \hat{H}^{-1}) ] \tilde{F}_t + N^{-1} \hat{\Lambda}' \Bar{u}_t \\
    &= N^{-1} \hat{\Lambda}' (\tilde{\Lambda} - \hat{\Lambda} \hat{H}^{-1}) \tilde{F}_t + N^{-1} \hat{\Lambda}'\hat{\Lambda} \hat{H}^{-1} \tilde{F}_t + N^{-1} ( \hat{\Lambda} - \tilde{\Lambda}H + \tilde{\Lambda}H)' \Bar{u}_t \\
    \hat{F}_t - \hat{H}^{-1} \tilde{F}_t &= H' \cdot \frac{1}{N} \sum_{i = 1}^N \tilde{\Blambda}_i \Bar{u}_{it} + N^{-1} \hat{\Lambda}' (\tilde{\Lambda} - \hat{\Lambda} \hat{H}^{-1}) \tilde{F}_t + N^{-1} ( \hat{\Lambda} - \tilde{\Lambda}H)' \Bar{u}_t
\end{align*}
where for the third equality, we use $N^{-1} \hat{\Lambda}' \hat{\Lambda} = I_{r-1}$. For the expression above, only the first term will be dominant. To see that, we need the following Lemmas

\begin{Lemma} \label{lemma:intermediate_1_factor_loadings}
    Under Assumptions \ref{ass:LN_strong factor structure} to \ref{ass:LN_moments and CLT},
    \begin{equation*}
        N^{-1} ( \hat{\Lambda} - \tilde{\Lambda}H)' \Bar{u}_t = O_{\p}\left(\frac{1}{\delta_{NT}^2} \right)
    \end{equation*}
\end{Lemma}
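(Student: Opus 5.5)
We follow Bai's (2003) Lemma B.1, adapted to the demeaned panel. Substitute the row expansion \eqref{eq:estimator for lambda expanded row} for $\hat{\Blambda}_i - \hat H'\tilde{\Blambda}_i$ into $N^{-1}\sum_i(\hat{\Blambda}_i - \hat H'\tilde{\Blambda}_i)\Bar u_{it}$. Theorem \ref{th:convergence of lambda_hat times lambda} gives $V_{NT}^{-1} = O_{\p}(1)$, so it suffices to bound the four pieces
\begin{equation*}
    \frac{1}{N^2}\sum_{i,j}\hat{\Blambda}_j\Bar\gamma(i,j)\Bar u_{it},\quad \frac{1}{N^2}\sum_{i,j}\hat{\Blambda}_j\zeta_{ij}\Bar u_{it},\quad \frac{1}{N^2}\sum_{i,j}\hat{\Blambda}_j\eta_{ij}\Bar u_{it},\quad \frac{1}{N^2}\sum_{i,j}\hat{\Blambda}_j\xi_{ij}\Bar u_{it}
\end{equation*}
each by $O_{\p}(\delta_{NT}^{-2})$. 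Throughout we replace $\Bar u$ by $u$ at the cost of $O_{\p}(N^{-1/2})$ factors, as is done in Appendix \ref{app:estimation of factor loadings}.

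\textbf{First and second pieces.} In each piece we write $\hat{\Blambda}_j = (\hat{\Blambda}_j - \hat H'\tilde{\Blambda}_j) + \hat H'\tilde{\Blambda}_j$.

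For the first piece, the deviation part is controlled by Cauchy--Schwarz, using Theorem \ref{th:norm squared of estimator} and the bound $\frac1N\sum_j|\frac1N\sum_i\Bar\gamma(i,j)\Bar u_{it}|^2 = O_{\p}(N^{-1})$. The latter follows from Assumption \ref{ass:LN_time and cross sectional dependence}.2, since $\sum_i|\gamma(i,j)|\le M$. Together these give $O_{\p}(\delta_{NT}^{-1}N^{-1/2})$. The true-loading part is $N^{-2}\sum_{i,j}\tilde{\Blambda}_j\Bar\gamma(i,j)u_{it}$, whose second moment is bounded by $N^{-2}M$ through the same summability. The first piece is therefore $O_{\p}(\delta_{NT}^{-2})$.

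For the second piece, the deviation part uses $\frac1N\sum_j(\frac1N\sum_i\zeta_{ij}u_{it})^2 = O_{\p}(1/T)$, which comes from Assumption \ref{ass:LN_time and cross sectional dependence}.4 and the fourth-moment bound on $\zeta_{ij}$. This yields $O_{\p}(\delta_{NT}^{-1}T^{-1/2})$. The true-loading part, $\frac1N\sum_i u_{it}\cdot\frac{1}{\sqrt{NT}}\cdot\frac{1}{\sqrt{NT}}\sum_{j,s}\tilde{\Blambda}_j(u_{is}u_{js}-\e)$, is $O_{\p}((NT)^{-1/2})$ by Assumption \ref{ass:LN_moments and CLT}.1 and Cauchy--Schwarz over $i$.

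\textbf{Third and fourth pieces.} These factor cleanly. Since $\eta_{ij} = \tilde{\Blambda}_j'\tilde F'\Bar u_i/T$, the third piece equals
\begin{equation*}
    \Big(\frac1N\sum_j\hat{\Blambda}_j\tilde{\Blambda}_j'\Big)\cdot\frac1N\sum_i\frac1T\sum_s\tilde F_s u_{is}u_{it}.
\end{equation*}
The first factor is $O_{\p}(1)$. We split the second factor into its mean $\frac1N\sum_i\e[\tilde F_su_{is}u_{it}]$, which is small by mixing and idiosyncrasy, and a fluctuation of order $O_{\p}((NT)^{-1/2})$. The fourth piece, with $\xi_{ij} = \tilde{\Blambda}_i'\tilde F'\Bar u_j/T$, becomes
\begin{equation*}
    \Big(\frac1N\sum_j\hat{\Blambda}_j\frac1T\sum_s\tilde F_s'u_{js}\Big)\Big(\frac1N\sum_i\tilde{\Blambda}_iu_{it}\Big).
\end{equation*}
Its first factor is $O_{\p}(\delta_{NT}^{-1}T^{-1/2})$ by the argument of Lemma \ref{lemma:intermediate lemmas}(d), using Assumptions \ref{ass:weak dep between agg shocks and idio errors} and \ref{ass:LN_moments and CLT}.2. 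Its second factor is $O_{\p}(N^{-1/2})$ by the argument in Lemma \ref{lemma:numerator}. The product is $O_{\p}(\delta_{NT}^{-2})$.

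\textbf{Main obstacle.} We expect the hardest step to be the true-loading part of the second piece, together with the mean term of the third. Both involve triple products $u_{is}u_{js}u_{it}$ at a fixed $t$, and the assumptions only control pairwise and fourth-moment sums. We would first try Bai's route: bound $\e\|\frac{1}{\sqrt{NT}}\sum_{j,s}\tilde{\Blambda}_j(u_{is}u_{js}-\e)\|^2$ uniformly in $i$ via Assumption \ref{ass:LN_moments and CLT}.1, then apply Cauchy--Schwarz over $i$ with $\frac1N\sum_iu_{it}^2 = O_{\p}(1)$. This accepts a possibly loose $O_{\p}(T^{-1/2}N^{-1/2})\le O_{\p}(\delta_{NT}^{-2})$ rather than pursuing a sharper rate.
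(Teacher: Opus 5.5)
Your first two pieces follow the paper's proof nearly line for line. You split $\hat{\Blambda}_j$ into the deviation and $\hat H'\tilde{\Blambda}_j$. You use Cauchy--Schwarz against Theorem~\ref{th:norm squared of estimator} for the deviation and $\gamma$-summability for the first true-loading part. For the second true-loading part you use Assumption~\ref{ass:LN_moments and CLT}.1 with Cauchy--Schwarz over $i$. For the third and fourth pieces you factor the double sum exactly instead of splitting $\hat{\Blambda}_j$. That is valid, and for the fourth piece it yields a clean $O_{\p}(\delta_{NT}^{-1}T^{-1/2})\cdot O_{\p}(N^{-1/2})$, more transparent than the paper's treatment of that term.

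The gap is in the third piece. After your factorization, and equally in the paper's $III_2$, everything rests on $R_t = \frac{1}{NT}\sum_{i}\sum_{s}\tilde F_s u_{is}u_{it}$ at fixed $t$. You assert that its centred part is $O_{\p}((NT)^{-1/2})$, but nothing in the stated assumptions gives that. What they do give is
\begin{equation*}
\|R_t\| \le T^{-1/2}\Big(\frac1N\sum_i u_{it}^2\Big)^{1/2}\Big(\frac1N\sum_i\Big\|\frac{1}{\sqrt T}\sum_s\tilde F_s u_{is}\Big\|^2\Big)^{1/2} = O_{\p}(T^{-1/2})
\end{equation*}
by Assumption~\ref{ass:weak dep between agg shocks and idio errors}. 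Now $T^{-1/2} = O(\delta_{NT}^{-2})$ only if $N = O(\sqrt T)$, which is exactly what $\sqrt T/N \to 0$ excludes. The rate $(NT)^{-1/2}$ needs cancellation across $i$ of the products $\tilde F_s u_{is}u_{it}$, that is, a time-direction moment condition such as
\begin{equation*}
\e\Big\|\frac{1}{\sqrt{NT}}\sum_{i=1}^N\sum_{s=1}^T\tilde F_s\big(u_{is}u_{it} - \e[u_{is}u_{it}]\big)\Big\|^2 \le M \quad \text{for each } t.
\end{equation*}
This is the orientation in which \citet{Bai2003InferentialDimensions} states his Assumption F.1. Assumption~\ref{ass:LN_moments and CLT}.1 is its cross-sectional transpose (same-date products $u_{it}u_{jt}$ weighted by loadings) and controls only the $\zeta$-term. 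Assumption~\ref{ass:LN_time and cross sectional dependence}.6 concerns $u_{jt}F_t$, not products of two idiosyncratic errors. With the condition above, and $\sum_s|N^{-1}\sum_i\e[u_{is}u_{it}]| \le M$ (which mixing delivers), you get $R_t = O_{\p}((NT)^{-1/2} + T^{-1}) = O_{\p}(\delta_{NT}^{-2})$ and your argument closes.

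Your ``main obstacle'' paragraph points at the wrong place. The mean of $R_t$ contains no triple products and is $O(1/T)$ by the covariance inequality. The second piece is already settled by Assumption~\ref{ass:LN_moments and CLT}.1. You also cannot fall back on the paper's treatment of $III_2$, which bounds $\e\|R_t\|$ term by term using $\sum_s\tau_{st}^{1/2} = O(1)$. That bound does not follow from Assumption~\ref{ass:LN_time and cross sectional dependence}.3. It is also incompatible with $\e u_{it}^2 \ge \underline\lambda$ under mixing, since $\tau_{st} \to \e u_{it}^2\,\e u_{is}^2$ as $|t-s| \to \infty$.
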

\begin{proof}
    We know from the previous section,
    \begin{equation*}
         \hat{\Blambda}_i - \hat{H}' \tilde{\Blambda}_i  = V_{NT}^{-1} \left(  \frac{1}{N} \sum_{j=1}^N \hat{\Blambda}_j \Bar{\gamma}(i,j) + \frac{1}{N} \sum_{j=1}^N \hat{\Blambda}_j \zeta_{ij} + \frac{1}{N} \sum_{j=1}^N \hat{\Blambda}_j \eta_{ij} + \frac{1}{N} \sum_{j=1}^N \hat{\Blambda}_j \xi_{ij} \right )
    \end{equation*}
    Note that $N^{-1} ( \hat{\Lambda} - \tilde{\Lambda}H)' \Bar{u}_t = \frac{1}{N} \sum_{i=1}^N (\hat{\Blambda}_i - \hat{H}' \Blambda_i) \Bar{u}_{it}$. Substituting the expression above to this, we have
    \begin{align*}
        N^{-1} ( \hat{\Lambda} - \tilde{\Lambda}H)' \Bar{u}_t &= V_{NT}^{-1} \Bigg(  \frac{1}{N^2} \sum_{i=1}^N \sum_{j=1}^N \hat{\Blambda}_j \Bar{\gamma}(i,j) \Bar{u}_{it} + \frac{1}{N^2} \sum_{i=1}^N \sum_{j=1}^N \hat{\Blambda}_j \zeta_{ij} \Bar{u}_{it} \\
        &+ \frac{1}{N^2} \sum_{i=1}^N \sum_{j=1}^N \hat{\Blambda}_j \eta_{ij} \Bar{u}_{it} + \frac{1}{N^2} \sum_{i=1}^N \sum_{j=1}^N \hat{\Blambda}_j \xi_{ij} \Bar{u}_{it}  \Bigg) \\
        &= V_{NT}^{-1}[I + II + III + IV]
    \end{align*}
    By Theorem \ref{th:convergence of lambda_hat times lambda}, we know that $V_{NT} = O_{\p}(1)$. Thus, we will analyze each of the terms one by one.

    \paragraph{First term} Write the first term as
    \begin{equation*}
        I = N^{-2} \sum_{i=1}^N \sum_{j=1}^N (\hat{\Blambda}_j - \hat{H}' \tilde{\Blambda}_j) \Bar{\gamma}(i,j) \Bar{u}_{it} + N^{-2} \hat{H}'  \sum_{i=1}^N \sum_{j=1}^N \tilde{\Blambda}_j \Bar{\gamma}(i,j) \Bar{u}_{it}
    \end{equation*}
    Consider the norm of the first term above
    \begin{align*}
        I_1 &= \left \| N^{-2} \sum_{i=1}^N \sum_{j=1}^N (\hat{\Blambda}_j - \hat{H}' \tilde{\Blambda}_j) \Bar{\gamma}(i,j) \Bar{u}_{it}   \right \|  \leq  N^{-2} \sum_{i=1}^N \sum_{j=1}^N \left \| (\hat{\Blambda}_j - \hat{H}' \tilde{\Blambda}_j) \Bar{\gamma}(i,j) \Bar{u}_{it}   \right \| 
    \end{align*}
    Apply Cauchy-Schwartz on the inner-sum
    \begin{equation*}
        \sum_{j=1}^N \left \| (\hat{\Blambda}_j - \hat{H}' \tilde{\Blambda}_j) \Bar{\gamma}(i,j) \Bar{u}_{it}   \right \| \leq \left[ \sum_j \| \hat{\Blambda}_j - \hat{H}' \tilde{\Blambda}_j \|^2 \right]^{\frac{1}{2}} \cdot \left[ \sum_j \| \Bar{\gamma}(i,j) \Bar{u}_{it} \|^2 \right]^{\frac{1}{2}}
    \end{equation*}
    Inserting this into the display above,
    \begin{align*}
        I_1 & \leq N^{-1} \left[ \sum_j \| \hat{\Blambda}_j - \hat{H}' \tilde{\Blambda}_j \|^2 \right]^{\frac{1}{2}} \cdot N^{-1} \sum_{i=1}^N \left[ \sum_j \| \Bar{\gamma}(i,j) \Bar{u}_{it} \|^2 \right]^{\frac{1}{2}} \\
        &= N^{-1/2} \left[ \frac{1}{N} \sum_j \| \hat{\Blambda}_j - \hat{H}' \tilde{\Blambda}_j \|^2 \right]^{\frac{1}{2}} \cdot N^{-1} \sum_{i=1}^N \left[ \sum_j  | \Bar{\gamma}(i,j)|^2 \Bar{u}_{it}^2  \right]^{\frac{1}{2}} \\
        &= N^{-1/2} \cdot O_{\p}(\delta_{NT}^{-1}) \cdot N^{-1} \sum_{i=1}^N \left[ \sum_j  | \Bar{\gamma}(i,j)|^2 \Bar{u}_{it}^2  \right]^{\frac{1}{2}}
    \end{align*}
    where the stochastic order in the last equality comes from Theorem \ref{th:norm squared of estimator}. For the second term
    \begin{align*}
        N^{-1} \sum_{i=1}^N \left[ \sum_j  | \Bar{\gamma}(i,j)|^2 \Bar{u}_{it}^2  \right]^{\frac{1}{2}} &= N^{-1} \sum_{i=1}^N \Bar{u}_{it}  \left[ \sum_j  | \Bar{\gamma}(i,j)|^2  \right]^{\frac{1}{2}} \\
        & \leq  N^{-1} \left[ \sum_{i=1}^N \Bar{u}_{it}^2 \cdot \sum_i \sum_j | \Bar{\gamma}(i,j)|^2 \right]^{\frac{1}{2}} \\
        &= \left[ \frac{1}{N} \sum_{i=1}^N \Bar{u}_{it}^2 \cdot \frac{1}{N} \sum_i \sum_j | \Bar{\gamma}(i,j)|^2 \right]^{\frac{1}{2}} = O_{\p}(1)
    \end{align*}
    where the stochastic order comes from $\Bar{u}_{it}^2 \leq C u_{it}^2$ and Assumption \ref{ass:LN_time and cross sectional dependence}.1 and Lemma \ref{lemma:gamma_bounded}.

Thus we have
\begin{equation*}
    I_1 = O_{\p}\left( N^{-\frac{1}{2}} \delta_{NT}^{-1} \right)
\end{equation*}
Now consider the second part of the first term
\begin{equation*}
    I_2 = N^{-2} \hat{H}'  \sum_{i=1}^N \sum_{j=1}^N \tilde{\Blambda}_j \Bar{\gamma}(i,j) \Bar{u}_{it}
\end{equation*}
Recall that 
\begin{equation*}
     \hat{H} = \left[ \frac{\tilde{F}' \tilde{F}}{T} \right]\cdot \left[  \frac{\tilde{\Lambda}' \hat{\Lambda}}{N} \right] \cdot V_{NT}^{-1} = O_{\p}(1)
\end{equation*}
where the stochastic order comes from Assumption \ref{ass:LN_strong factor structure} and Theorem \ref{th:convergence of lambda_hat times lambda}. Hence we consider only $N^{-2} \sum_{i=1}^N \sum_{j=1}^N \tilde{\Blambda}_j \Bar{\gamma}(i,j) \Bar{u}_{it}$. 
As $\tilde{\Blambda}_j$ is independent of all $u_t$, the expectation of the above is zero. So we consider the variance of the term. Before that, write the term conveniently as
\begin{equation*}
    N^{-2} \sum_{i=1}^N \sum_{j=1}^N \tilde{\Blambda}_j \Bar{\gamma}(i,j) \Bar{u}_{it} = \frac{1}{N^2} \sum_{j=1}^N \tilde{\Blambda}_j \sum_{i=1}^N \Bar{\gamma}(i,j) \Bar{u}_{it} \defeq \frac{1}{N^2} \sum_{j=1}^N \tilde{\Blambda}_j U_{jt}
\end{equation*}
Consider the variance of the term above
\begin{align*}
    \var \left[ \frac{1}{N^2} \sum_{j=1}^N \tilde{\Blambda}_j U_{jt} \right] &= \frac{1}{N^4} \sum_{j=1}^N \sum_{k=1}^N \e[\tilde{\Blambda}_j \tilde{\Blambda}_k'] \e[U_{jt} U_{kt}] = \frac{1}{N^4} \sum_{j=1}^N \e[\tilde{\Blambda}_j \tilde{\Blambda}_j'] \e[U_{jt}^2] \\
    &= \frac{1}{N^4} \sum_{j=1}^N \e \left[ \left( \sum_{i=1}^N \Bar{\gamma}(i,j) \Bar{u}_{it} \right)^2 \right] \Sigma_{\tilde{\Lambda}} \\
    &= \frac{1}{N^4} \sum_{j=1}^N \sum_{i=1}^N \sum_{k=1}^N \Bar{\gamma}(i,j) \Bar{\gamma}(k,j) \e[\Bar{u}_{it} \Bar{u}_{kt}] \cdot O(1) \\
    & \leq \frac{1}{N^3} \sum_{j=1}^N \sum_{i=1}^N \Bar{\gamma}(i,j) \left[ \frac{1}{N} \sum_{k=1}^N |\Bar{\gamma}(k,j)|^2 \right]^{\frac{1}{2}} \left[ \frac{1}{N} \sum_{k=1}^N  |\e[\Bar{u}_{it} \Bar{u}_{kt}]|^2 \right]^{\frac{1}{2}} O(1)
\end{align*}
where the second equality uses independence of $\tilde{\Blambda}_j$ across $j$, and the last inequality is Cauchy-Schwartz over $k$. In the proof of Lemma \ref{lemma:gamma_bounded}, we have seen that
\begin{align*}
    |\Bar{\gamma}(k,j)|^2 &\leq |\Bar{\gamma}(j,j)| \cdot |\Bar{\gamma}(k,k)| \\
  \therefore  \frac{1}{N} \sum_{k=1}^N |\Bar{\gamma}(k,j)|^2 & \leq |\Bar{\gamma}(j,j)| \cdot \frac{1}{N} \sum_{k=1}^N |\Bar{\gamma}(k,k)| \leq M
\end{align*}
As the fourth moment of $u_{it}$ is bounded, $\frac{1}{N} \sum_{k=1}^N  |\e[\Bar{u}_{it} \Bar{u}_{kt}]|^2 \leq M$. Thus,
\begin{align*}
    \var \left[ \frac{1}{N^2} \sum_{j=1}^N \tilde{\Blambda}_j U_{jt} \right] &= \frac{1}{N^2} \frac{1}{N} \sum_{j=1}^N \sum_{i=1}^N \Bar{\gamma}(i,j) \cdot M \cdot O(1) = O \left( \frac{1}{N^2} \right) 
\end{align*}
where the last equality uses Assumption \ref{ass:LN_time and cross sectional dependence}.2.b. Hence
\begin{equation*}
    N^{-2} \sum_{i=1}^N \sum_{j=1}^N \tilde{\Blambda}_j \Bar{\gamma}(i,j) \Bar{u}_{it} = O_{\p}(N^{-1})
\end{equation*}

Thus, $I_2 = O_{\p}(N^{-1})$, and we can conclude
\begin{equation*}
    I = O_{\p}\left( N^{-\frac{1}{2}} \delta_{NT}^{-1} \right) + O_{\p}(N^{-1}) = O_{\p}\left( N^{-\frac{1}{2}} \delta_{NT}^{-1} \right) 
\end{equation*}

\paragraph{Second Term} Similarly expand the second term as
\begin{equation*}
    II = N^{-2} \sum_{i=1}^N \sum_{j=1}^N (\hat{\Blambda}_j - \hat{H}' \tilde{\Blambda}_j) \zeta_{ij} \Bar{u}_{it} + N^{-2} \hat{H}'  \sum_{i=1}^N \sum_{j=1}^N \tilde{\Blambda}_j \zeta_{ij} \Bar{u}_{it}
\end{equation*}
We will analyze the second term first. Recall that
\begin{align*}
    \zeta_{ij} &= \Bar{u}_i' \Bar{u}_j/T - \Bar{\gamma}(i,j) \\
        &= \frac{1}{T} \sum_{t=1}^T \Bar{u}_{it} \Bar{u}_{jt} - \frac{1}{T} \sum_{t=1}^T \e[\Bar{u}_{it} \Bar{u}_{jt}]
\end{align*}
Similar to the last time, we ignore $\hat{H}$
\begin{align*}
    II_2 &= N^{-2} \sum_i \sum_j \tilde{\Blambda}_j \left[ \frac{1}{T} \sum_{t=1}^T \Bar{u}_{it} \Bar{u}_{jt} - \frac{1}{T} \sum_{t=1}^T \e[\Bar{u}_{it} \Bar{u}_{jt}] \right] \Bar{u}_{it} \\
    & \leq C N^{-2} \sum_i \sum_j \tilde{\Blambda}_j \left[ \frac{1}{T} \sum_{t=1}^T u_{it} u_{jt} - \frac{1}{T} \sum_{t=1}^T \e[u_{it} u_{jt}] \right] \Bar{u}_{it} \\
    &= \frac{C}{\sqrt{NT}} \frac{1}{N} \sum_i \left[ \frac{1}{\sqrt{NT}} \sum_j \sum_t \tilde{\Blambda}_j \left( u_{it} u_{jt} - \e[u_{it} u_{jt}]  \right) \right] \Bar{u}_{it} \\
    & \defeq \frac{C}{\sqrt{NT}} \frac{1}{N} \sum_i L_i \Bar{u}_{it}
\end{align*}
Note that $\e \| L_i \Bar{u}_{it} \| \leq \big( \e \|L_i \|^2 \e \| \Bar{u}_{it} \|^2 \big)^{\frac{1}{2}} \leq M$ from Assumption \ref{ass:LN_moments and CLT}.1 and boundedness of $\e[u_{it} u_{jt}]$. Thus,
\begin{equation*}
    II_2 = O_{\p}\left( \frac{1}{\sqrt{NT}} \right)
\end{equation*}

Consider the norm of the first term
\begin{align*}
    \left \| II_1  \right \| &= N^{-2} \left \| \sum_{j=1}^N (\hat{\Blambda}_j - \hat{H}' \tilde{\Blambda}_j) \sum_{i=1}^N \zeta_{ij} \Bar{u}_{it} \right \| \leq N^{-2} \sum_{j=1}^N \left \|  (\hat{\Blambda}_j - \hat{H}'\tilde{\Blambda}_j) \sum_{i=1}^N \zeta_{ij} \Bar{u}_{it} \right \| \\
    & \leq N^{-2} \left[ \sum_j \| \hat{\Blambda}_j - \hat{H}'\tilde{\Blambda}_j \|^2 \right]^{\frac{1}{2}} \left[\sum_j \left( \sum_i \zeta_{ij} \Bar{u}_{it} \right)^2 \right]^{\frac{1}{2}} \\
    &= \left[ \frac{1}{N} \sum_j \| \hat{\Blambda}_j - \hat{H}'\tilde{\Blambda}_j \|^2 \right]^{\frac{1}{2}} \left[\frac{1}{N} \sum_j \left( \frac{1}{N} \sum_i \zeta_{ij} \Bar{u}_{it} \right)^2 \right]^{\frac{1}{2}} \\
    &= O_{\p}(\delta_{NT}^{-1}) \left[\frac{1}{N} \sum_j \left( \frac{1}{N} \sum_i \zeta_{ij} \Bar{u}_{it} \right)^2 \right]^{\frac{1}{2}}
\end{align*}
Now consider
\begin{align*}
    \frac{1}{N} \sum_i \zeta_{ij} \Bar{u}_{it} &= \frac{1}{N} \sum_i \left[ \frac{1}{T} \sum_{t=1}^T \Bar{u}_{it} \Bar{u}_{jt} - \frac{1}{T} \sum_{t=1}^T \e[\Bar{u}_{it} \Bar{u}_{jt}]  \right]\Bar{u}_{it} \\
    &= \frac{1}{\sqrt{T}} \cdot \frac{C}{N} \sum_i \left[ \frac{1}{\sqrt{T}} \sum_{t=1}^T u_{it} u_{jt} - \frac{1}{T} \sum_{t=1}^T \e[u_{it} u_{jt}]  \right]\Bar{u}_{it} \\ 
    & \defeq \frac{1}{\sqrt{T}} \cdot \frac{C}{N} \sum_i L_{ijt} \Bar{u}_{it}
\end{align*}
$\e \| L_{ijt} \Bar{u}_{it} \| \leq \left[\e \| L_{ijt} \|^2 \cdot \e \| \Bar{u}_{it} \|^2  \right]^{\frac{1}{2}} \leq M$ under Assumption \ref{ass:LN_time and cross sectional dependence}.1 and \ref{ass:LN_time and cross sectional dependence}.3. Thus $\frac{1}{N} \sum_i \zeta_{ij} \Bar{u}_{it} = O_{\p}(\frac{1}{\sqrt{T}})$. Thus, $II_1 = O_{\p} \left(\frac{1}{\delta_{NT} \sqrt{T}} \right)$. The second term is
\begin{equation*}
    II = O_{\p} \left(\frac{1}{\delta_{NT} \sqrt{T}} \right) + O_{\p}\left( \frac{1}{\sqrt{NT}} \right) = O_{\p} \left(\frac{1}{\delta_{NT} \sqrt{T}} \right)
\end{equation*}

\paragraph{Third Term} Expand the third term as
\begin{equation*}
    III = N^{-2} \sum_{i=1}^N \sum_{j=1}^N (\hat{\Blambda}_j - \hat{H}' \tilde{\Blambda}_j) \eta_{ij} \Bar{u}_{it} + N^{-2} \hat{H}'  \sum_{i=1}^N \sum_{j=1}^N \tilde{\Blambda}_j \eta_{ij} \Bar{u}_{it}
\end{equation*}
By the steps seen above, the first part of the third term is
\begin{align*}
    III_1 &\leq \left[ \frac{1}{N} \sum_j \| \hat{\Blambda}_j - \hat{H}'\tilde{\Blambda}_j \|^2 \right]^{\frac{1}{2}} \left[\frac{1}{N} \sum_j \left( \frac{1}{N} \sum_i \eta_{ij} \Bar{u}_{it} \right)^2 \right]^{\frac{1}{2}} \\
    &= O_{\p}(\delta_{NT}^{-1}) \left[\frac{1}{N} \sum_j \left( \frac{1}{N} \sum_i \eta_{ij} \Bar{u}_{it} \right)^2 \right]^{\frac{1}{2}}
\end{align*}
Now consider
\begin{align*}
    \frac{1}{N} \sum_i \eta_{ij} \Bar{u}_{it} &= \frac{1}{N} \tilde{\Blambda}_j'  \sum_i \left( \frac{1}{T} \sum_{s=1}^T \tilde{F}_s \Bar{u}_{is} \right) \Bar{u}_{it}  \\
    & \leq \frac{1}{\sqrt{T}} \tilde{\Blambda}_j' \left[ \frac{1}{N} \sum_i \left \| \frac{1}{\sqrt{T}} \sum_{s=1}^T \tilde{F}_s \Bar{u}_{is} \right \|^2 \right]^{\frac{1}{2}} \left[ \frac{1}{N} \sum_i \Bar{u}_{it}^2 \right]^{\frac{1}{2}} \\
    &= O_{\p}\left( \frac{1}{\sqrt{T}} \right) O_{\p}(1) O_{\p}(1) = O_{\p}\left( \frac{1}{\sqrt{T}} \right)
\end{align*}
Thus,
\begin{equation*}
    III_1 = O_{\p}(\delta_{NT}^{-1}) O_{\p}\left( \frac{1}{\sqrt{T}} \right) = O_{\p}\left( \frac{1}{\delta_{NT} \sqrt{T}} \right)
\end{equation*}

The second term, disregarding $\hat{H}$
\begin{align*}
    III_2 &= N^{-2}  \sum_{i=1}^N \sum_{j=1}^N \tilde{\Blambda}_j \tilde{\Blambda}_j' \left( \frac{1}{T} \sum_{s=1}^T \tilde{F}_s \Bar{u}_{is} \right) \Bar{u}_{it} \\
    &= \left[ \frac{1}{N} \sum_{j=1}^N \tilde{\Blambda}_j \tilde{\Blambda}_j' \right] \left[ \frac{1}{NT} \sum_{i=1}^N \sum_{s=1}^T \tilde{F}_s \Bar{u}_{is} \Bar{u}_{it}  \right] \\
    &= O_{\p}(1) \frac{1}{NT} \sum_{i=1}^N \sum_{s=1}^T \tilde{F}_s \Bar{u}_{is} \Bar{u}_{it}
\end{align*}
Consider the expectation of the remaining term
\begin{align*}
    \e\left[ \frac{1}{NT} \sum_{i=1}^N \sum_{s=1}^T \tilde{F}_s \Bar{u}_{is} \Bar{u}_{it} \right] & \leq \frac{1}{NT} \sum_i \sum_s \big( \e[\| \tilde{F}_s \|^2] \big)^{\frac{1}{2}} \big( \e[\|\Bar{u}_{is} \Bar{u}_{it} \|^2] \big)^{\frac{1}{2}} \\
    & \leq \frac{C^2}{NT} \sum_i \sum_s \big( \e[\| \tilde{F}_s \|^2] \big)^{\frac{1}{2}} \big( \e[u_{is}^2 u_{it}^2] \big)^{\frac{1}{2}} = O_{\p}(\frac{1}{T})
\end{align*}
where the stochastic order in the final step comes from Assumption \ref{ass:LN_time and cross sectional dependence}.2 as $\sum_s \big( \e[u_{is}^2 u_{it}^2] \big)^{\frac{1}{2}} = \sum_s \tau_{st}^{\frac{1}{2}} = O_{\p}(1)$. Thus we have
\begin{equation*}
    III = O_{\p}\left( \frac{1}{\delta_{NT} \sqrt{T}} \right) + O_{\p}\left( \frac{1}{T} \right) = O_{\p}\left( \frac{1}{\delta_{NT} \sqrt{T}} \right)
\end{equation*}

\paragraph{Fourth Term} Expand the fourth term as
\begin{equation*}
    IV = N^{-2} \sum_{i=1}^N \sum_{j=1}^N (\hat{\Blambda}_j - \hat{H}' \tilde{\Blambda}_j) \xi_{ij} \Bar{u}_{it} + N^{-2} \hat{H}'  \sum_{i=1}^N \sum_{j=1}^N \tilde{\Blambda}_j \xi_{ij} \Bar{u}_{it}
\end{equation*}
By the steps seen above, the first part of the fourth term is
\begin{align*}
    IV_1 &\leq \left[ \frac{1}{N} \sum_j \| \hat{\Blambda}_j - \hat{H}'\tilde{\Blambda}_j \|^2 \right]^{\frac{1}{2}} \left[\frac{1}{N} \sum_j \left( \frac{1}{N} \sum_i \xi_{ij} \Bar{u}_{it} \right)^2 \right]^{\frac{1}{2}} \\
    &= O_{\p}(\delta_{NT}^{-1}) \left[\frac{1}{N} \sum_j \left( \frac{1}{N} \sum_i \xi_{ij} \Bar{u}_{it} \right)^2 \right]^{\frac{1}{2}}
\end{align*}
Now consider
\begin{align*}
    \frac{1}{N} \sum_i \xi_{ij} \Bar{u}_{it} &= \frac{1}{N}   \sum_i \tilde{\Blambda}_i' \left( \frac{1}{T} \sum_{s=1}^T \tilde{F}_s \Bar{u}_{js} \right) \Bar{u}_{it}  \\
    &\leq \left[ \frac{1}{N} \sum_i \|\tilde{\Blambda}_i \|^2 \right]^{\frac{1}{2}} \left[ \frac{1}{N} \sum_i \left \| \frac{1}{T} \sum_{s=1}^T \tilde{F}_s \Bar{u}_{js}  \right \|^2 \Bar{u}_{it}^2 \right]^{\frac{1}{2}} = O_{\p}\left( \frac{1}{\sqrt{T}} \right)
\end{align*}
Thus,
\begin{equation*}
    IV_1 = O_{\p}\left( \frac{1}{\delta_{NT} \sqrt{T}} \right)
\end{equation*}

The second term, disregarding $\hat{H}$
\begin{align*}
    IV_2 &= N^{-2}  \sum_{i=1}^N \sum_{j=1}^N \tilde{\Blambda}_j \tilde{\Blambda}_i' \left( \frac{1}{T} \sum_{s=1}^T \tilde{F}_s \Bar{u}_{js} \right) \Bar{u}_{it} \\
    & \leq \frac{1}{N} \sum_j \left[ \frac{1}{N} \sum_i \| \tilde{\Blambda}_j \tilde{\Blambda}_i' \|^2 \right]^{\frac{1}{2}} \left[ \frac{1}{NT} \sum_i \sum_{s=1}^T \left \| \tilde{F}_s \Bar{u}_{js} \Bar{u}_{it} \right \|^2  \right]^{\frac{1}{2}} = O_{\p} \left( \frac{1}{T} \right)
\end{align*}
where $\frac{1}{N} \sum_i \| \tilde{\Blambda}_j \tilde{\Blambda}_i' \|^2$ is $O_{\p}(1)$ as $\| \tilde{\Blambda}_j \|^4 < \infty$ for every $j$ under Assumption \ref{ass:LN_strong factor structure}.2. The stochastic order of the second term comes from the same steps as the ones for $III_2$. Thus
\begin{equation*}
    IV = O_{\p}\left( \frac{1}{\delta_{NT} \sqrt{T}} \right) + O_{\p}\left( \frac{1}{T} \right) = O_{\p}\left( \frac{1}{\delta_{NT} \sqrt{T}} \right)
\end{equation*}

Combining all the four terms
\begin{align*}
    N^{-1} ( \hat{\Lambda} - \tilde{\Lambda}H)' \Bar{u}_t &= O_{\p}\left( \frac{1}{\delta_{NT} \sqrt{N}} \right) + O_{\p} \left(\frac{1}{\delta_{NT} \sqrt{T}} \right) + O_{\p}\left( \frac{1}{\delta_{NT} \sqrt{T}} \right) + O_{\p}\left( \frac{1}{\delta_{NT} \sqrt{T}} \right) \\
    & = O_{\p}\left( \frac{1}{\delta_{NT}^2 } \right) 
\end{align*}
\end{proof}

To proceed further, we need to analyze $ N^{-1} \hat{\Lambda}' (\tilde{\Lambda} - \hat{\Lambda} \hat{H}^{-1}) \tilde{F}_t = N^{-1} \hat{\Lambda}'(\hat{\Lambda} - \tilde{\Lambda} \hat{H}) (-\hat{H}^{-1}) \tilde{F}_t$. We will start by analyzing $ N^{-1} (\hat{\Lambda} - \tilde{\Lambda} \hat{H})' \tilde{\Lambda}$. 

\begin{Lemma} \label{lemma:intermediate_2_factor_loadings}
    Under Assumptions \ref{ass:LN_strong factor structure} to \ref{ass:LN_moments and CLT},
    \begin{equation*}
       N^{-1} (\hat{\Lambda} - \tilde{\Lambda} \hat{H})' \tilde{\Lambda} = O_{\p}\left(\frac{1}{\delta_{NT}^2} \right)
    \end{equation*}
\end{Lemma}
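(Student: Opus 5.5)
This is the analogue of Bai (2003, Lemma B.3) for the loading side, and the plan follows the row decomposition \eqref{eq:estimator for lambda expanded row}. Write
\begin{equation*}
N^{-1}(\hat\Lambda - \tilde\Lambda\hat H)'\tilde\Lambda = \frac1N\sum_{i=1}^N (\hat{\Blambda}_i - \hat H'\tilde{\Blambda}_i)\tilde{\Blambda}_i' = V_{NT}^{-1}\,[I + II + III + IV],
\end{equation*}
where the four pieces are obtained by post-multiplying the $\bar\gamma$, $\zeta$, $\eta$ and $\xi$ terms by $\tilde{\Blambda}_i'$ and averaging over $i$. Theorem~\ref{th:convergence of lambda_hat times lambda} gives $V_{NT}^{-1} = O_{\p}(1)$. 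Each piece is then split as in Lemma~\ref{lemma:intermediate_1_factor_loadings}, with $\hat{\Blambda}_j = (\hat{\Blambda}_j - \hat H'\tilde{\Blambda}_j) + \hat H'\tilde{\Blambda}_j$. The first part of each split is bounded by Cauchy--Schwarz together with Theorem~\ref{th:norm squared of estimator}. The second part is bounded directly, using $\hat H = O_{\p}(1)$.

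\textbf{Terms $I$ and $II$.} The estimation-error parts are at most $O_{\p}(\delta_{NT}^{-1})$ times
\begin{equation*}
\Big[N^{-1}\sum_j \big\| N^{-1}\textstyle\sum_i \bar\gamma(i,j)\tilde{\Blambda}_i\big\|^2\Big]^{1/2} = O(N^{-1/2}),
\qquad \text{respectively} \qquad O_{\p}(T^{-1/2}).
\end{equation*}
The first bound uses boundedness of the loadings and Assumption~\ref{ass:LN_time and cross sectional dependence}.2a. The second uses Assumption~\ref{ass:LN_moments and CLT}.1 after replacing $\bar u$ by $u$ up to a constant. The true-loading parts are handled as follows. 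For $I$, the quantity $N^{-2}\sum_{i,j}\tilde{\Blambda}_j\bar\gamma(i,j)\tilde{\Blambda}_i'$ is $O(N^{-1})$ by 2b. For $II$, the quantity $N^{-2}\sum_{i,j}\tilde{\Blambda}_j\zeta_{ij}\tilde{\Blambda}_i'$ is $O_{\p}((NT)^{-1/2})$ by Assumption~\ref{ass:LN_moments and CLT}.1 and a second-moment bound.

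\textbf{Term $IV$.} Here $\xi_{ij} = \tilde{\Blambda}_i'\tilde F'\bar u_j/T$. The true-loading part factors as
\begin{equation*}
\Big[N^{-1}\textstyle\sum_j \tilde{\Blambda}_j \tfrac1T\sum_t \tilde F_t'\bar u_{jt}\Big]\Big[N^{-1}\textstyle\sum_i \tilde{\Blambda}_i\tilde{\Blambda}_i'\Big] = O_{\p}((NT)^{-1/2})
\end{equation*}
by Assumption~\ref{ass:LN_moments and CLT}.2. The error part is $O_{\p}(\delta_{NT}^{-1}T^{-1/2})$ by Assumption~\ref{ass:weak dep between agg shocks and idio errors}.

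\textbf{Term $III$ (main obstacle).} Here $\eta_{ij} = \tilde{\Blambda}_j'\tilde F'\bar u_i/T$. The true-loading part is
\begin{equation*}
\hat H'\,\frac{\tilde\Lambda'\tilde\Lambda}{N}\,\frac{1}{NT}\sum_i\sum_t \tilde F_t\bar u_{it}\tilde{\Blambda}_i',
\end{equation*}
which is again $O_{\p}((NT)^{-1/2})$ by Assumption~\ref{ass:LN_moments and CLT}.2. The difficulty is the error part,
\begin{equation*}
N^{-1}\sum_i\Big[N^{-1}\textstyle\sum_j(\hat{\Blambda}_j - \hat H'\tilde{\Blambda}_j)\tilde{\Blambda}_j'\Big]\tfrac1T\sum_t\tilde F_t\bar u_{it}\tilde{\Blambda}_i'.
\end{equation*}
Its bracket is exactly the target quantity. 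The remaining factor $N^{-1}\sum_i(\cdot)$ is $O_{\p}((NT)^{-1/2}) = o_{\p}(1)$. So this piece is $o_{\p}(1)$ times the left-hand side and can be absorbed, since $(I - o_{\p}(1))$ is invertible. If the absorption feels fragile, the fallback is to bound the piece crudely by $O_{\p}(\delta_{NT}^{-1})\,O_{\p}((NT)^{-1/2})$ using Theorem~\ref{th:norm squared of estimator}, which already lies within $O_{\p}(\delta_{NT}^{-2})$.

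\textbf{Collecting terms.} Every piece is $O_{\p}(\delta_{NT}^{-2})$ or smaller. Multiplying by $V_{NT}^{-1} = O_{\p}(1)$ yields the claim.
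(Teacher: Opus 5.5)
Your proposal is correct and takes essentially the same approach as the paper. Both use the same four-term row decomposition post-multiplied by $\tilde{\Blambda}_i'$ and the same split $\hat{\Blambda}_j = (\hat{\Blambda}_j - \hat H'\tilde{\Blambda}_j) + \hat H'\tilde{\Blambda}_j$, with Cauchy--Schwarz and Theorem~\ref{th:norm squared of estimator} for the error parts and Assumptions~\ref{ass:LN_moments and CLT} and~\ref{ass:weak dep between agg shocks and idio errors} for the rest. The paper bounds the estimation-error piece of $III$ directly by your fallback $O_{\p}(\delta_{NT}^{-1})\,O_{\p}((NT)^{-1/2})$, so your absorption argument, though valid, is not needed.
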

\begin{proof}
    \begin{align*}
        N^{-1} ( \hat{\Lambda} - \tilde{\Lambda}H)' \tilde{\Lambda} &= V_{NT}^{-1} \Bigg(  \frac{1}{N^2} \sum_{i=1}^N \sum_{j=1}^N \hat{\Blambda}_j \tilde{\Blambda}_i' \Bar{\gamma}(i,j) + \frac{1}{N^2} \sum_{i=1}^N \sum_{j=1}^N \hat{\Blambda}_j \tilde{\Blambda}_i' \zeta_{ij} \\
        &+ \frac{1}{N^2} \sum_{i=1}^N \sum_{j=1}^N \hat{\Blambda}_j \tilde{\Blambda}_i' \eta_{ij} + \frac{1}{N^2} \sum_{i=1}^N \sum_{j=1}^N \hat{\Blambda}_j \tilde{\Blambda}_i' \xi_{ij}  \Bigg) \\
        &= V_{NT}^{-1}[I + II + III + IV]
    \end{align*}
The proof for $I$ and $II$ follows the proof of the similar terms in Lemma \ref{lemma:intermediate_2_factor_loadings} with $\Bar{u}_{it}$ replaced with $\tilde{\Blambda}_i$. As both $\frac{1}{N} \sum_{i=1}^N \Bar{u}_{it}^2$ and $\frac{1}{N} \sum_{i=1}^N \|\tilde{\Blambda}_i \|^2$ are $O_{\p}(1)$, the proofs go through identically and we have
\begin{align*}
    I = O_{\p}\left( \frac{1}{\delta_{NT} \sqrt{N}} \right) \quad II = O_{\p} \left(\frac{1}{\delta_{NT} \sqrt{T}} \right)
\end{align*}
The third and fourth terms uses limited dependence between $\Bar{u}_{it}^2$ and $\Bar{u}_{jt}^2$ which is not available to us in this context. Hence we consider those terms in detail.

\paragraph{Third Term} Expand the third term as
\begin{equation*}
    III = N^{-2} \sum_{i=1}^N \sum_{j=1}^N (\hat{\Blambda}_j - \hat{H}' \tilde{\Blambda}_j) \tilde{\Blambda}_i' \eta_{ij} + N^{-2} \hat{H}'  \sum_{i=1}^N \sum_{j=1}^N \tilde{\Blambda}_j \tilde{\Blambda}_i' \eta_{ij} 
\end{equation*}
By the steps in the previous proof, the first part of the third term is
\begin{align*}
    III_1 &\leq \left[ \frac{1}{N} \sum_j \| \hat{\Blambda}_j - \hat{H}'\tilde{\Blambda}_j \|^2 \right]^{\frac{1}{2}} \left[\frac{1}{N} \sum_j \left \| \frac{1}{N} \sum_i \tilde{\Blambda}_i'  \eta_{ij}  \right \|^2 \right]^{\frac{1}{2}} \\
    &= O_{\p}(\delta_{NT}^{-1}) \left[\frac{1}{N} \sum_j \left \| \frac{1}{N} \sum_i \tilde{\Blambda}_i'  \eta_{ij}  \right \|^2 \right]^{\frac{1}{2}}
\end{align*}
The remaining term
\begin{align*}
    \left[\frac{1}{N} \sum_j \left \| \frac{1}{N} \sum_i \tilde{\Blambda}_i'  \eta_{ij}  \right \|^2 \right]^{\frac{1}{2}} &= \left[\frac{1}{N} \sum_j \left \| \frac{1}{N} \sum_i \tilde{\Blambda}_i' \tilde{\Blambda}_j'  \frac{1}{T} \sum_{s=1}^T \tilde{F}_s \Bar{u}_{is}  \right \|^2 \right]^{\frac{1}{2}} \\
    &= \frac{1}{\sqrt{NT}} \left[\frac{1}{N} \sum_j \left \| \tilde{\Blambda}_j' \frac{1}{\sqrt{NT}}  \sum_i \sum_{s=1}^T \tilde{F}_s \tilde{\Blambda}_i'  \Bar{u}_{is}  \right \|^2 \right]^{\frac{1}{2}} = O_{\p} \left( \frac{1}{\sqrt{NT}} \right)
\end{align*}
under Assumption \ref{ass:LN_moments and CLT}.3. We use the fact that $\tilde{\Blambda}_j' \tilde{F}_s$ is a scalar. Thus,
\begin{equation*}
    III_1 = O_{\p} \left( \frac{1}{\delta_{NT} \sqrt{NT}} \right)
\end{equation*}
Now consider the second term, ignoring $\hat{H}$,
\begin{align*}
    N^{-2} \sum_{i=1}^N \sum_{j=1}^N \tilde{\Blambda}_j \tilde{\Blambda}_i' \tilde{\Blambda}_j'  \frac{1}{T} \sum_{s=1}^T \tilde{F}_s \Bar{u}_{is} &= N^{-2} \sum_{i=1}^N \sum_{j=1}^N  \frac{1}{T} \sum_{s=1}^T \tilde{\Blambda}_j \tilde{\Blambda}_j' \tilde{F}_s  \tilde{\Blambda}_i'   \Bar{u}_{is} \\
    &= \frac{1}{N} \sum_{j=1}^N \tilde{\Blambda}_j \tilde{\Blambda}_j' \cdot \frac{1}{NT} \sum_{i=1}^N \sum_{s=1}^T \tilde{F}_s  \tilde{\Blambda}_i'   \Bar{u}_{is} = O_{\p} \left( \frac{1}{\sqrt{NT}} \right)
\end{align*}
under Assumption \ref{ass:LN_moments and CLT}.3. Thus, we conclude
\begin{equation*}
    III = O_{\p} \left( \frac{1}{\delta_{NT} \sqrt{NT}} \right) + O_{\p} \left( \frac{1}{\sqrt{NT}} \right) = O_{\p} \left( \frac{1}{\sqrt{NT}} \right)
\end{equation*}

\paragraph{Fourth Term} Expand the fourth term as
\begin{equation*}
    IV = N^{-2} \sum_{i=1}^N \sum_{j=1}^N (\hat{\Blambda}_j - \hat{H}' \tilde{\Blambda}_j) \tilde{\Blambda}_i' \xi_{ij}  + N^{-2} \hat{H}'  \sum_{i=1}^N \sum_{j=1}^N \tilde{\Blambda}_j \tilde{\Blambda}_i' \xi_{ij} 
\end{equation*}
By the steps in the previous proof, the first part of the third term is
\begin{align*}
    IV_1 &\leq \left[ \frac{1}{N} \sum_j \| \hat{\Blambda}_j - \hat{H}'\tilde{\Blambda}_j \|^2 \right]^{\frac{1}{2}} \left[\frac{1}{N} \sum_j \left \| \frac{1}{N} \sum_i \tilde{\Blambda}_i'  \xi_{ij}  \right \|^2 \right]^{\frac{1}{2}} \\
    &= O_{\p}(\delta_{NT}^{-1}) \left[\frac{1}{N} \sum_j \left \| \frac{1}{N} \sum_i \tilde{\Blambda}_i'  \xi_{ij}  \right \|^2 \right]^{\frac{1}{2}}
\end{align*}
The remaining term
\begin{align*}
   \frac{1}{N} \sum_j \left \| \frac{1}{N} \sum_i \tilde{\Blambda}_i'  \xi_{ij}  \right \|^2  &= \frac{1}{N} \sum_j \left \| \frac{1}{N} \sum_i \tilde{\Blambda}_i' \tilde{\Blambda}_i'  \frac{1}{T} \sum_{s=1}^T \tilde{F}_s \Bar{u}_{js}  \right \|^2 \\
   & \leq \frac{1}{N} \sum_j \left( \frac{1}{N} \sum_i \| \tilde{\Blambda}_i \|^2 \cdot \frac{1}{N} \sum_i \| \frac{1}{T} \sum_{s=1}^T \tilde{\Blambda}_i' \tilde{F}_s \Bar{u}_{js} \|^2 \right)
\end{align*}
For every $i$, $\frac{1}{\sqrt{T}} \sum_{s=1}^T  \tilde{F}_s \Bar{u}_{js}$ is fixed and is $O_{\p}(1)$. As $r$ is fixed and $\tilde{\Blambda}_i$ is finite, $\frac{1}{\sqrt{T}} \sum_{s=1}^T  \tilde{\Blambda}_i'  \tilde{F}_s \Bar{u}_{js}$ is a finite linear combination of $O_{\p}(1)$ terms and hence, is also $O_{\p}(1)$. Thus
\begin{equation*}
    IV_1 = O_{\p} \left( \frac{1}{\delta_{NT} \sqrt{T}} \right)
\end{equation*}
Consider the second term, ignoring $\hat{H}$,
\begin{align*}
    N^{-2} \sum_{i=1}^N \sum_{j=1}^N \tilde{\Blambda}_j \tilde{\Blambda}_i' \tilde{\Blambda}_i'  \frac{1}{T} \sum_{s=1}^T \tilde{F}_s \Bar{u}_{js} &= \frac{1}{N} \sum_{i=1}^N  \left( \frac{1}{NT} \sum_{j=1}^N \sum_{s=1}^T \tilde{\Blambda}_i' \tilde{F}_s    \tilde{\Blambda}_j   \Bar{u}_{js}  \right) \tilde{\Blambda}_i' 
\end{align*}
The $r \times r$ matrix, $\frac{1}{\sqrt{NT}} \sum_{j=1}^N \sum_{s=1}^T  \tilde{\Blambda}_j \tilde{F}_s '  \Bar{u}_{js} $ is bounded. Which means every element of the matrix is bounded. For each value of $i$, $\tilde{\Blambda}_i$ is bounded. Thus, $\frac{1}{\sqrt{NT}} \sum_{j=1}^N \sum_{s=1}^T \tilde{\Blambda}_i' \tilde{F}_s \tilde{\Blambda}_j   \Bar{u}_{js}$ is just a vector, from the bounded matrix, which is scaled by a bounded constant. Hence $IV_2 = O_{\p} \left ( \frac{1}{\sqrt{NT}} \right)$, and
\begin{equation*}
    IV = O_{\p} \left( \frac{1}{\delta_{NT} \sqrt{T}} \right) + O_{\p} \left ( \frac{1}{\sqrt{NT}} \right) = O_{\p} \left( \frac{1}{\delta_{NT} \sqrt{T}} \right)
\end{equation*}
Combining all the four terms
\begin{align*}
    N^{-1} ( \hat{\Lambda} - \tilde{\Lambda}H)' \tilde{\Lambda} &= O_{\p}\left( \frac{1}{\delta_{NT} \sqrt{N}} \right) + O_{\p} \left(\frac{1}{\delta_{NT} \sqrt{T}} \right) + O_{\p}\left( \frac{1}{ \sqrt{NT}} \right) + O_{\p}\left( \frac{1}{\delta_{NT} \sqrt{T}} \right) \\
    & = O_{\p}\left( \frac{1}{\delta_{NT}^2 } \right) 
\end{align*}
\end{proof}

Now we have all the components to analyze, $ N^{-1} (\hat{\Lambda} - \tilde{\Lambda} \hat{H})' \hat{\Lambda}$
\begin{Lemma} \label{lemma:intermediate_3_factor_loadings}
    Under Assumptions \ref{ass:LN_strong factor structure} to \ref{ass:LN_moments and CLT},
    \begin{equation*}
       N^{-1} (\hat{\Lambda} - \tilde{\Lambda} \hat{H})' \hat{\Lambda} = O_{\p}\left(\frac{1}{\delta_{NT}^2} \right)
    \end{equation*}
\end{Lemma}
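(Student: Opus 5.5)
The plan is to reduce the claim to the two preceding lemmas with the add-and-subtract decomposition $\hat{\Lambda} = (\hat{\Lambda} - \tilde{\Lambda}\hat{H}) + \tilde{\Lambda}\hat{H}$. Substituting it into the right-hand factor gives
\begin{equation*}
    N^{-1} (\hat{\Lambda} - \tilde{\Lambda} \hat{H})' \hat{\Lambda} = N^{-1} (\hat{\Lambda} - \tilde{\Lambda} \hat{H})' (\hat{\Lambda} - \tilde{\Lambda} \hat{H}) + N^{-1} (\hat{\Lambda} - \tilde{\Lambda} \hat{H})' \tilde{\Lambda}\, \hat{H}.
\end{equation*}
I would then bound the two pieces separately.

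For the second piece, Lemma \ref{lemma:intermediate_2_factor_loadings} gives $N^{-1}(\hat{\Lambda} - \tilde{\Lambda}\hat{H})'\tilde{\Lambda} = O_{\p}(\delta_{NT}^{-2})$. We also have $\hat{H} = [\tilde{F}'\tilde{F}/T][\tilde{\Lambda}'\hat{\Lambda}/N]V_{NT}^{-1} = O_{\p}(1)$. That bound follows from Assumption \ref{ass:LN_strong factor structure}.1 for the first factor and from Theorem \ref{th:convergence of lambda_hat times lambda} for the other two, namely the convergence of $\tilde{\Lambda}'\hat{\Lambda}/N$ and the convergence of $V_{NT}$ to an invertible $V$. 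The product is therefore $O_{\p}(\delta_{NT}^{-2})$, since $r$ is fixed and matrix norms are submultiplicative.

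For the first piece, I bound the norm by the trace. Write
\begin{equation*}
    \big\| N^{-1} (\hat{\Lambda} - \tilde{\Lambda} \hat{H})' (\hat{\Lambda} - \tilde{\Lambda} \hat{H}) \big\| \leq \frac{1}{N}\sum_{i=1}^N \|\hat{\Blambda}_i - \hat{H}'\tilde{\Blambda}_i\|^2 .
\end{equation*}
By Theorem \ref{th:norm squared of estimator} the right-hand side is $O_{\p}(\delta_{NT}^{-2})$. Adding the two pieces gives the claim.

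The only delicate point is whether the bound from Theorem \ref{th:norm squared of estimator} is sharp enough. That theorem delivers exactly $\delta_{NT}^{-2}$ for the averaged squared error, so the quadratic term is of the same order as the claim, not smaller. This is fine for the statement as written. If a sharper rate were ever needed, I would instead substitute the row expansion \eqref{eq:estimator for lambda expanded row} into one factor of the quadratic term. The four resulting terms would then be handled exactly as in Lemma \ref{lemma:intermediate_2_factor_loadings}, with $\tilde{\Blambda}_i$ replaced by $\hat{\Blambda}_i - \hat{H}'\tilde{\Blambda}_i$.
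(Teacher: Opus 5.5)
Your proof is correct and is essentially the paper's own argument. You use the same split $\hat{\Lambda} = (\hat{\Lambda} - \tilde{\Lambda}\hat{H}) + \tilde{\Lambda}\hat{H}$, control the quadratic piece by Theorem~\ref{th:norm squared of estimator}, and control the cross piece by Lemma~\ref{lemma:intermediate_2_factor_loadings} together with $\hat{H} = O_{\p}(1)$. Your trace inequality is the careful version of the paper's step. The paper writes that $(r-1)\times(r-1)$ matrix as equal to the scalar $\frac{1}{N}\sum_j \|\hat{\Blambda}_j - \hat{H}'\tilde{\Blambda}_j\|^2$ and misprints both orders as $\delta_{NT}^{-1}$.

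Drop the closing aside, though. The quadratic piece is positive semidefinite, and its trace is driven by $\frac{1}{N}\sum_i \|\frac{1}{T}\sum_t \tilde{F}_t \Bar{u}_{it}\|^2$. That quantity is genuinely of order $1/T$, so it cannot be sharpened in general. Re-expanding one factor also would not go through ``exactly as in'' Lemma~\ref{lemma:intermediate_2_factor_loadings}: unlike $\tilde{\Blambda}_i$, the error $\hat{\Blambda}_i - \hat{H}'\tilde{\Blambda}_i$ is driven by unit $i$'s own shocks.
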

\begin{proof}
    \begin{align*}
        N^{-1} (\hat{\Lambda} - \tilde{\Lambda} \hat{H})' \hat{\Lambda} &= N^{-1} (\hat{\Lambda} - \tilde{\Lambda} \hat{H})' ( \hat{\Lambda} - \tilde{\Lambda} \hat{H}) + N^{-1} (\hat{\Lambda} - \tilde{\Lambda} \hat{H})' \tilde{\Lambda} \hat{H} \\
        &= \frac{1}{N} \sum_j \| \hat{\Blambda}_j - \hat{H}'\tilde{\Blambda}_j \|^2 + N^{-1} (\hat{\Lambda} - \tilde{\Lambda} \hat{H})' \tilde{\Lambda} \hat{H}
    \end{align*}
where the first term in the display is $O_{\p}(\delta_{NT}^{-1})$ by Theorem \ref{th:norm squared of estimator} and the second term is $O_{\p}(\delta_{NT}^{-1})$ by Lemma \ref{lemma:intermediate_2_factor_loadings}.
\end{proof}

Now, we can fully characterize the asymptotic linear for the estimator of the common factors. 
\begin{align*}
    \hat{F}_t - \hat{H}^{-1} \tilde{F}_t &= \hat{H}' \cdot \frac{1}{N} \sum_{i = 1}^N \tilde{\Blambda}_i \Bar{u}_{it} + N^{-1} \hat{\Lambda}' (\tilde{\Lambda} - \hat{\Lambda} \hat{H}^{-1}) \tilde{F}_t + N^{-1} ( \hat{\Lambda} - \tilde{\Lambda} \hat{H})' \Bar{u}_t  \\
    &= \hat{H}' \cdot \frac{1}{N} \sum_{i = 1}^N \tilde{\Blambda}_i \Bar{u}_{it} + O_{\p} \left( \frac{1}{\delta_{NT}^2} \right)
\end{align*}

\section{Estimation of the Common Component} \label{app:estimation of common components}
In this section, I will bring the estimation of the factor loadings and common factors together to characterize the asymptotic linear form of the common component,$\hat{C}_{it}$, where
\begin{equation*}
    \hat{C}_{it} = \hat{\Blambda}_i' \hat{F}_t
\end{equation*}
The difference between the estimated component and the true component is
\begin{align*}
    \hat{C}_{it} - \tilde{C}_{it} &= \hat{\Blambda}_i' \hat{F}_t - \tilde{\Blambda}_i' \tilde{F}_t \\
    &= \hat{\Blambda}_i' \hat{F}_t - \tilde{\Blambda}_i' \hat{H} \hat{H}^{-1} \tilde{F}_t + \hat{\Blambda}_i' \hat{H}^{-1} \tilde{F}_t - \hat{\Blambda}_i' \hat{H}^{-1} \tilde{F}_t \\
    &= \big[ \hat{\Blambda}_i - \hat{H}' \tilde{\Blambda}_i \big]' \hat{H}^{-1} \tilde{F}_t + \hat{\Blambda}_i' \big[ \hat{F}_t - \hat{H}^{-1} \tilde{F}_t \big]
\end{align*}
That is, the difference is the sum of two terms. The first term in the sum is easier to analyze based on what we have derived in Appendix \ref{app:estimation of factor loadings}. But the second term is slightly more involved due to multiplication of the asymptotic linear term with $\hat{\Blambda}_i$. So we will consider this term first.
\begin{align*}
    \hat{\Blambda}_i' \big[ \hat{F}_t - \hat{H}^{-1} \tilde{F}_t \big] &= \tilde{\Blambda}_i' \hat{H} \big[ \hat{F}_t - \hat{H}^{-1} \tilde{F}_t \big] + \big[ \hat{\Blambda}_i' -  \tilde{\Blambda}_i' \hat{H} \big] \big[ \hat{F}_t - \hat{H}^{-1} \tilde{F}_t \big] \\
    &= \tilde{\Blambda}_i' \hat{H} \big[ \hat{F}_t - \hat{H}^{-1} \tilde{F}_t \big] + O_{\p} \left( \frac{1}{\sqrt{T}} \right)O_{\p} \left( \frac{1}{\sqrt{N}} \right)
\end{align*}
Thus, the difference between the estimated component and the true component simplifies to
\begin{align*}
    \hat{C}_{it} - \tilde{C}_{it} &= \big[ \hat{\Blambda}_i - \hat{H}' \tilde{\Blambda}_i \big]' \hat{H}^{-1} \tilde{F}_t + \tilde{\Blambda}_i' \hat{H} \big[ \hat{F}_t - \hat{H}^{-1} \tilde{F}_t \big] + O_{\p} \left( \frac{1}{\sqrt{NT}} \right)
\end{align*}
Using the asymptotic linear forms that we derived in the previous two sections, we can write the above display as
\begin{align*}
    \hat{C}_{it} - \tilde{C}_{it} &= \tilde{F}_t' \hat{H'}^{-1} V_{NT}^{-1} \cdot \frac{\tilde{\Lambda}' \hat{\Lambda}}{N} \cdot \frac{1}{T} \sum_{s=1}^T \tilde{F}_s \Bar{u}_{is} +  \tilde{\Blambda}_i' \hat{H} \hat{H}' \cdot \frac{1}{N} \sum_{j = 1}^N \tilde{\Blambda}_j \Bar{u}_{jt} + O_{\p} \left( \frac{1}{\delta_{NT}^2} \right)
\end{align*}
Recall the definition of $\hat{H}$
\begin{equation*}
    \hat{H} = \left[ \frac{\tilde{F}' \tilde{F}}{NT} \right]\cdot \tilde{\Lambda}' \hat{\Lambda} \cdot V_{NT}^{-1}
\end{equation*}
Thus,
\begin{equation*}
    \hat{H'}^{-1} V_{NT}^{-1} \frac{\tilde{\Lambda}' \hat{\Lambda}}{N} = \left[ \frac{\tilde{F}' \tilde{F}}{T} \right]^{-1}
\end{equation*}
To analyze $\hat{H} \hat{H}'$, we have the following Lemma
\begin{Lemma} \label{lemma:H_prime_H}
    Under Assumptions \ref{ass:LN_strong factor structure} to \ref{ass:LN_moments and CLT},
    \begin{equation*}
       \hat{H} \hat{H}' = \left[ \frac{\tilde{\Lambda}' \tilde{\Lambda}}{N} \right]^{-1} + O_{\p}(\delta_{NT}^{-2})
    \end{equation*}
\end{Lemma}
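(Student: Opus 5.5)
The plan is the standard Bai (2003) rotation-matrix argument, adapted to the normalisation $\hat\Lambda'\hat\Lambda/N = I_{r-1}$ used here. It has two steps: first show that $\hat H$ is asymptotically invertible and nearly orthogonal in the right metric, then invert.

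\textbf{Step 1: reduce $\hat H'(\tilde\Lambda'\tilde\Lambda/N)\hat H$ to the identity.} Write $\hat\Lambda = \tilde\Lambda \hat H + (\hat\Lambda - \tilde\Lambda\hat H)$ and expand the normalisation:
\begin{equation*}
    I_{r-1} = \frac{\hat\Lambda'\hat\Lambda}{N} = \hat H' \frac{\tilde\Lambda'\tilde\Lambda}{N}\hat H + \hat H' \frac{\tilde\Lambda'(\hat\Lambda - \tilde\Lambda\hat H)}{N} + \frac{(\hat\Lambda - \tilde\Lambda\hat H)'\hat\Lambda}{N}.
\end{equation*}
Each piece is controlled by an earlier result.
\begin{itemize}
    \item The second term is $\hat H'$ times the transpose of Lemma~\ref{lemma:intermediate_2_factor_loadings}. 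It is therefore $O_{\p}(\delta_{NT}^{-2})$, because $\hat H = O_{\p}(1)$ by Assumption~\ref{ass:LN_strong factor structure} and Theorem~\ref{th:convergence of lambda_hat times lambda}.
    \item The third term is Lemma~\ref{lemma:intermediate_3_factor_loadings}, again $O_{\p}(\delta_{NT}^{-2})$.
\end{itemize}
The proof of Lemma~\ref{lemma:intermediate_3_factor_loadings} writes its $\frac1N\sum_j\|\hat{\Blambda}_j - \hat H'\tilde{\Blambda}_j\|^2$ piece loosely. I will note that Theorem~\ref{th:norm squared of estimator} gives it as $O_{\p}(\delta_{NT}^{-2})$, so that lemma does hold at the stated rate. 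Step 1 then yields
\begin{equation*}
    \hat H' \Big[\frac{\tilde\Lambda'\tilde\Lambda}{N}\Big]\hat H = I_{r-1} + O_{\p}(\delta_{NT}^{-2}).
\end{equation*}

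\textbf{Step 2: invert.} By Theorem~\ref{th:convergence of lambda_hat times lambda}, $\hat H$ converges to an invertible matrix. Equivalently, $\hat H^{-1} = V_{NT}(\tilde\Lambda'\hat\Lambda/N)^{-1}(\tilde F'\tilde F/T)^{-1} = O_{\p}(1)$, with $\hat H$ read with the $T$ normalisation as in Appendix~\ref{app:estimation of common components}. Left-multiplying the Step 1 display by $(\hat H')^{-1}$ and right-multiplying by $\hat H^{-1}$ gives
\begin{equation*}
    \frac{\tilde\Lambda'\tilde\Lambda}{N} = (\hat H\hat H')^{-1} + O_{\p}(\delta_{NT}^{-2}),
\end{equation*}
where the remainder keeps its order because it is pre- and post-multiplied by $O_{\p}(1)$ matrices. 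Matrix inversion is locally Lipschitz around a positive definite limit: $A^{-1} - B^{-1} = A^{-1}(B - A)B^{-1}$. The limit here is $\Sigma_{\tilde\Lambda}$, which is positive definite by Assumption~\ref{ass:LN_strong factor structure}.2, so $(\tilde\Lambda'\tilde\Lambda/N)^{-1} = O_{\p}(1)$. Applying the identity delivers
\begin{equation*}
    \hat H\hat H' = \Big[\frac{\tilde\Lambda'\tilde\Lambda}{N}\Big]^{-1} + O_{\p}(\delta_{NT}^{-2}).
\end{equation*}

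\textbf{Main obstacle.} Steps 1 and 2 are algebra once the inputs are in place. The step needing care is confirming that the cross terms really are $O_{\p}(\delta_{NT}^{-2})$ and not merely $O_{\p}(\delta_{NT}^{-1})$; this rests entirely on Lemmas~\ref{lemma:intermediate_2_factor_loadings} and~\ref{lemma:intermediate_3_factor_loadings}. I will check the one genuinely delicate input, term III of Lemma~\ref{lemma:intermediate_2_factor_loadings}. That term relies on Assumption~\ref{ass:LN_moments and CLT}.2, the $\frac{1}{\sqrt{NT}}\sum\tilde{\Blambda}_j\tilde F_t'u_{jt}$ bound, after replacing $\bar u$ by $u$. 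The replacement is valid because $\frac1N\sum_j\tilde{\Blambda}_j = O_{\p}(N^{-1/2})$, so the demeaning correction is of smaller order. If that term were only $O_{\p}(\delta_{NT}^{-1})$, the lemma would still hold at that weaker rate. That would suffice for the use in Appendix~\ref{app:estimation of common components}, since there $\hat H\hat H'$ multiplies an $O_{\p}(N^{-1/2})$ term.
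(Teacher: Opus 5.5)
Your proof is correct and is essentially the paper's argument. The paper feeds the same two inputs, Lemmas~\ref{lemma:intermediate_2_factor_loadings} and~\ref{lemma:intermediate_3_factor_loadings}, into $\hat H \approx [\tilde\Lambda'\tilde\Lambda/N]^{-1}(\tilde\Lambda'\hat\Lambda/N)$ and $\tilde\Lambda'\hat\Lambda/N \approx (\hat H')^{-1}$, then right-multiplies by $\hat H'$, which spares your final inversion step; you package the same information as $\hat H'(\tilde\Lambda'\tilde\Lambda/N)\hat H = I_{r-1} + O_{\p}(\delta_{NT}^{-2})$ before inverting. You are also right that the proof of Lemma~\ref{lemma:intermediate_3_factor_loadings} should read $O_{\p}(\delta_{NT}^{-2})$ rather than $O_{\p}(\delta_{NT}^{-1})$, and that rate is what both routes need.
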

\begin{proof}
As per Lemma \ref{lemma:intermediate_2_factor_loadings},
\begin{align*}
    N^{-1} (\hat{\Lambda} - \tilde{\Lambda} \hat{H})' \tilde{\Lambda} &= O_{\p}\left(\frac{1}{\delta_{NT}^2} \right) \\
    \left[ \frac{\hat{\Lambda}' \tilde{\Lambda}}{N} \right] - \hat{H}' \left[ \frac{\tilde{\Lambda}' \tilde{\Lambda}}{N} \right] &= O_{\p}(\delta_{NT}^{-2})\\
    \hat{H}' \left[ \frac{\tilde{\Lambda}' \tilde{\Lambda}}{N} \right] &= \left[ \frac{\hat{\Lambda}' \tilde{\Lambda}}{N} \right] + O_{\p}(\delta_{NT}^{-2}) \\
    \hat{H} &= \left[ \frac{\tilde{\Lambda}' \tilde{\Lambda}}{N} \right]^{-1} \left[ \frac{\tilde{\Lambda}' \hat{\Lambda}}{N} \right] + O_{\p}(\delta_{NT}^{-2})
\end{align*}
As per Lemma \ref{lemma:intermediate_3_factor_loadings}
\begin{align*}
    N^{-1} (\hat{\Lambda} - \tilde{\Lambda} \hat{H})' \hat{\Lambda} &= O_{\p}\left(\frac{1}{\delta_{NT}^2} \right) \\
    \left[ \frac{\hat{\Lambda}' \hat{\Lambda}}{N} \right] - \hat{H}' \left[ \frac{\tilde{\Lambda}' \hat{\Lambda}}{N} \right] &= O_{\p}(\delta_{NT}^{-2})\\
    \hat{H}' \left[ \frac{\tilde{\Lambda}' \hat{\Lambda}}{N} \right] &= I_{r-1} + O_{\p}(\delta_{NT}^{-2}) \\
    \left[ \frac{\tilde{\Lambda}' \hat{\Lambda}}{N} \right] &= \hat{H'}^{-1} + \hat{H'}^{-1} O_{\p}(\delta_{NT}^{-2}) 
\end{align*}
Plugging this expression for $\left[ \frac{\tilde{\Lambda}' \hat{\Lambda}}{N} \right]$ in the equation for $\hat{H}$, we have
\begin{align*}
    \hat{H} &= \left[ \frac{\tilde{\Lambda}' \tilde{\Lambda}}{N} \right]^{-1} \left[ \hat{H'}^{-1} + \hat{H'}^{-1} O_{\p}(\delta_{NT}^{-2})  \right] + O_{\p}(\delta_{NT}^{-2})
\end{align*}

Multiplying on the right by $\hat{H'}^{-1}$
\begin{align*}
    \hat{H} \hat{H'}^{-1} &= \left[ \frac{\tilde{\Lambda}' \tilde{\Lambda}}{N} \right]^{-1} + \left[ \frac{\tilde{\Lambda}' \tilde{\Lambda}}{N} \right]^{-1} \hat{H'}^{-1} O_{\p}(\delta_{NT}^{-2}) \hat{H'}^{-1} + O_{\p}(\delta_{NT}^{-2}) \hat{H'}^{-1} \\
    &= \left[ \frac{\tilde{\Lambda}' \tilde{\Lambda}}{N} \right]^{-1} + O_{\p}(\delta_{NT}^{-2})
\end{align*}
where the stochastic order in the last equality comes from Assumption \ref{ass:LN_strong factor structure}.2 and the boundedness of $\hat{H}$, which we have already seem in the proof of Lemma \ref{lemma:intermediate_1_factor_loadings}.

\end{proof}

Thus, the difference between the estimated component and the true component simplifies to
\begin{align*}
    \hat{C}_{it} - \tilde{C}_{it} &= \tilde{F}_t' \left[ \frac{\tilde{F}' \tilde{F}}{T} \right]^{-1} \frac{1}{T} \sum_{s=1}^T \tilde{F}_s \Bar{u}_{is} +  \tilde{\Blambda}_i' \left[ \frac{\tilde{\Lambda}' \tilde{\Lambda}}{N} \right]^{-1} \frac{1}{N} \sum_{j = 1}^N \tilde{\Blambda}_j \Bar{u}_{jt} + O_{\p} \left( \frac{1}{\delta_{NT}^2 \sqrt{N}} \right) + O_{\p} \left( \frac{1}{\delta_{NT}^2} \right)
\end{align*}
In vector form,
\begin{equation} \label{eq:app_influence function of estimator}
    \hat{C}_t - \tilde{C}_t = \tilde{F}_t' \left[ \frac{\tilde{F}' \tilde{F}}{T} \right]^{-1} \frac{1}{T} \sum_{s=1}^T \tilde{F}_s \Bar{u}_{s} + \tilde{\Lambda} \left[ \frac{\tilde{\Lambda}' \tilde{\Lambda}}{N} \right]^{-1} \frac{1}{N} \sum_{j = 1}^N \tilde{\Blambda}_j \Bar{u}_{jt} + O_{\p} \left( \frac{1}{\delta_{NT}^2} \right)
\end{equation}

\end{document}